\def\nn{\nonumber}
\def\locHam{{\sc local Hamiltonian}}
\def\gappedlocHam{{\sc Gapped local Hamiltonian}}
\def\partition{{\sc Partition Function}}
\def\spec{{\sc Spectral Density}}
\def\BQP{{\sf{BQP}}}
\def\COLORING{{\sf{COLORING}}}
\def\FACTORING{{\sf{FACTORING}}}
\def\BPP{{\sf{BPP}}}
\def\QCMA{{\sf{QCMA}}}
\def\QMA{{\sf{QMA}}}
\def\UQCMA{{\sf{UQCMA}}}
\def\BellQMA{{\sf{BellQMA}}}
\def\LOCCQMA{{\sf{LOCCQMA}}}
\def\DQC{{\sf{DQC1}}}
\def\NP{{\sf{NP}}}
\def\UNP{{\sf{UNP}}}
\def\UcoNP{{\sf{Uco-NP}}}
\def\MA{{\sf{MA}}}
\def\A{{\sf{A}}}
\def\PSPACE{{\sf{PSPACE}}}
\def\P{{\sf{P}}}
\def\PP{{\sf{PP}}}
\def\poly{{\sf{poly}}}
\def\SAT{{\sf{SAT}}}
\def\YES{{\sf{YES}}}
\def\NO{{\sf{NO}}}
\def\eff{{\sf{eff}}}
\newcommand{\bra}[1]{\langle #1|}
\newcommand{\ket}[1]{|#1\rangle}
\newcommand{\braket}[2]{\langle #1|#2\rangle}
\newcommand{\tr}{\text{tr}}
\def\01{\{0,1\}}
\newcommand{\id}{\mathbb{I}}
\newenvironment{proof}
{\noindent {\bf Proof. }}
{{\hfill $\Box$}\\
 \smallskip}
\newcommand{\rank}{\mbox{rank}}
\newcounter{protoCount}
\newcounter{protoList}
\newsavebox{\tmpbox}
\newlength{\protobox}
\begin{document}

\cleardoublepage
\pagestyle{headings}
\pagenumbering{arabic}



\pagestyle{empty}

\begin{center}
  {\sc{University of London}}\\[1.0cm]
  \centerline{\includegraphics[height=5cm]{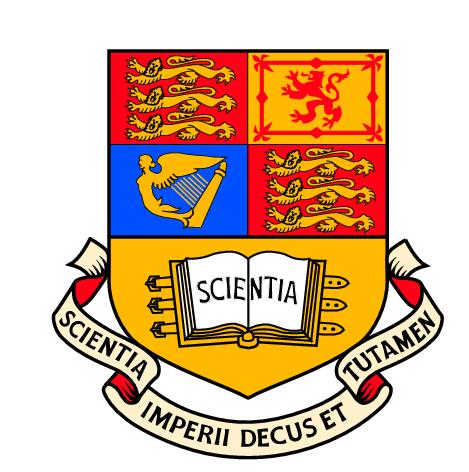}
}
  \vspace{1.0cm}
  Imperial College of
  Science, Technology and Medicine\\ The
  Blackett Laboratory\\ Quantum Optics \& Laser Science Group\\[1.5cm]
  
  {\Huge Entanglement Theory and the Quantum}\\[0.5cm]
  {\Huge Simulation of Many-Body Physics}\\[0.5cm]
 
  \vspace{2 cm}

  {\Large{Fernando Guadalupe dos Santos Lins Brand\~ao}}\\[1.5cm]

  Thesis submitted in partial fulfillment of the \\
  requirements for the degree of\\
  Doctor of Philosophy\\
  of the University of London\\
  and the Diploma of Membership of Imperial College.\\[.8cm]
  \vfill
August 2008
\end{center}



\clearpage \pagestyle{empty}


\begin{center}
  {\Huge Abstract}\\[1cm]
\end{center}

Quantum mechanics led us to reconsider the scope of physics and its building principles, such as the notions of realism and locality. More recently, quantum theory has changed in an equally dramatic manner our understanding of information processing and computation. On one hand, the fundamental properties of quantum systems can be harnessed to transmit, store, and manipulate information in a more efficient and secure way than possible in the realm of classical physics. On the other hand, the development of systematic procedures to manipulate systems of a large number of particles in the quantum regime, crucial to the implementation of quantum-based information processing, has triggered new possibilities in the exploration of quantum many-body physics and related areas. 

In this thesis, we present new results relevant to two important problems in quantum information science: the development of a theory of entanglement, intrinsically quantum correlations, and the exploration of the use of controlled quantum systems to the computation and simulation of quantum many-body phenomena. 

In the first part we introduce a new approach to the study of entanglement by considering its manipulation under operations not capable of generating entanglement. In this setting we show how the landscape of entanglement conversion is reduced to the simplest situation possible: one unique measure completely specifying which transformations are achievable. This framework has intriguing connections to the foundations of the second law of thermodynamics, which we present and explore. On the way to establish our main result, we develop new techniques that are of interest in their own. First, we extend quantum Stein's Lemma, characterizing optimal rates in state discrimination, to the case where the alternative hypothesis might vary over particular sets of possibly correlated states. Second, we employ recent advances in quantum de Finetti type theorems to decide the distillability of the entanglement contained in correlated sequences of states, and to find a new indication that bound entangled states with a non-positive partial transpose exist. 

In the second part we study the usefulness of a quantum computer for calculating properties of many-body systems. Our first result is an application of the phase estimation algorithm to calculate additive approximations to partition functions and spectral densities of quantum local Hamiltonians. We give convincing evidence that quantum computation is superior to classical in solving both problems by showing that they are hard for the class of problems efficiently solved by the one-clean-qubit model of quantum computation, which is believed to contain classically intractable problems. We then present a negative result on the usefulness of quantum computers in determining the ground-state energy of local Hamiltonians: Even under the promise that the spectral gap of the Hamiltonian is larger than an inverse polynomial in the number of sites, already for one-dimensional Hamiltonians the problem is hard for the class Quantum-Classical-Merlin-Arthur, which is believed to contain intractable problems for quantum computation. We also present an application of ideas from entanglement theory into the analysis of quantum verification procedures.

In the third and last part, we approach the problem of quantum simulating many-body systems from a more pragmatic point of view. Based on recent experimental developments on cavity quantum electrodynamics, we propose and analyze the realization of paradigmatic condensed matter Hamiltonians, such as the Bose-Hubbard and the anisotropic Heisenberg models, in arrays of coupled microcavities. We outline distinctive properties of such systems as simulators of quantum many-body physics, such as the full addressability of individual sites and the access to inhomogeneous models, and discuss the feasibility of an experimental realization with state-of-the-art current technology.


\tableofcontents





\clearpage

\pagestyle{plain}

\vspace{4 cm}

\noindent
Parts of this thesis is based on material published in the following papers:
\begin{itemize}

\vspace{1 cm}

\item 
\cite{BP08b} \textbf{Entanglement Theory and the Second Law of Thermodynamics}\\
F.G.S.L. Brand\~ao and M.B. Plenio\\
Nature Physics, in press\\
(Chapter~\ref{reversible})

\item 
\cite{BP08c} \textbf{A Generalization of Quantum Stein's Lemma}\\
F.G.S.L. Brand\~ao and M.B. Plenio\\
In preparation \\
(Chapter~\ref{QHT})

\item 
\cite{HBP08} \textbf{Quantum Many-Body Phenomena in Coupled Cavity Arrays }\\
M.J. Hartmann, F.G.S.L. Brand\~ao, M.B. Plenio \\
Submitted - arXiv:0808.2557 \\
(Chapters~\ref{AMO}, \ref{BH}, \ref{kerrnonlinearity}, \ref{heisenberg})

\item 
\cite{BP08} \textbf{A Reversible Theory of Entanglement and its Connection to the Second Law}\\
F.G.S.L. Brand\~ao and M.B. Plenio\\
Submitted - arXiv:0710.5827 \\
(Chapter~\ref{reversible})

\item
\cite{HBP07b} \textbf{A polaritonic two-component Bose-Hubbard model }\\
M.J. Hartmann, F.G.S.L. Brand\~ao, M.B. Plenio \\
New J. Phys. \textbf{10}, 033011 (2008)\\
(Chapter~\ref{BH})

\item
\cite{BE08} \textbf{Correlated entanglement distillation and the structure of the set of undistillable states}\\
F.G.S.L. Brand\~ao and J. Eisert \\
J. Math. Phys. \textbf{49}, 042102 (2008)  \\
(Chapter~\ref{NPPT})

\item
\cite{BHP07} \textbf{Light-shift-induced photonic nonlinearities}\\
F.G.S.L. Brand\~ao, M.J. Hartmann, M.B. Plenio\\
New J. Phys. \textbf{10}, 043010 (2008) \\
(Chapters~\ref{kerrnonlinearity}) 

\item
\cite{HBP07a} \textbf{Effective spin systems in coupled micro-cavities}\\
M.J. Hartmann, F.G.S.L. Brand\~ao, M.B. Plenio\\
Phys. Rev. Lett. \textbf{99}, 160501 (2007)\\
(Chapter~\ref{heisenberg})

\item
\cite{HBP06} \textbf{Strongly Interacting Polaritons in Coupled Arrays of Cavities}\\
M.J. Hartmann, F.G.S.L. Brand\~ao, M.B. Plenio \\
Nature Physics \textbf{2}, 849 (2006) \\
(Chapter~\ref{BH})

\end{itemize}

\vspace{1 cm}

\noindent
Other papers not covered in this thesis to which the author contributed:

\vspace{1 cm}

\begin{itemize}

\item 
\cite{CSC+08} \textbf{Geometrically induced singular behavior of entanglement}\\
D. Cavalcanti, P.L. Saldanha, O. Cosme, F.G.S.L. Brand\~ao, C.H. Monken, S. Padua, M. Franca Santos, M.O. Terra Cunha \\
Phys. Rev. A \textbf{78}, 012318 (2008) 

\item
\cite{BHPV07} \textbf{Remarks on the equivalence of full additivity and monotonicity for the entanglement cost}\\
F.G.S.L. Brand\~ao, M. Horodecki, M.B. Plenio, S. Virmani \\
Open Sys. Inf. Dyn. \textbf{14}, 333 (2007) 

\item
\cite{Bra07} \textbf{Entanglement activation and the robustness of quantum correlations }\\
F.G.S.L. Brand\~ao\\
Phys. Rev. A \textbf{76}, 030301(R) (2007) 

\item
\cite{EBA07} \textbf{Quantitative entanglement witnesses}\\
J. Eisert, F.G.S.L. Brand\~ao, K.M.R. Audenaert \\
New J. Phys. \textbf{9}, 46 (2007) 

\item
\cite{CBC06} \textbf{Entanglement quantifiers and phase transitions}\\
D. Cavalcanti, F.G.S.L. Brand\~ao, M.O. Terra Cunha \\
New J. Phys. \textbf{8}, 260 (2006) 

\item
\cite{Bra05b} \textbf{Entanglement as a quantum order parameter}\\
F.G.S.L. Brand\~ao\\
New J. Phys. \textbf{7}, 254 (2005) 

\item
\cite{CBC05} \textbf{Are all maximally entangled states pure?}\\
D. Cavalcanti, F.G.S.L. Brand\~ao, M.O. Terra Cunha \\
Phys. Rev. A \textbf{72}, 040303(R) (2005)

\item
\cite{Bra05} \textbf{Quantifying Entanglement with Witness Operators}\\
F.G.S.L. Brand\~ao \\
Phys. Rev. A \textbf{72}, 022310 (2005)

\item
\cite{BV06} \textbf{Witnessed Entanglement}\\
F.G.S.L. Brand\~ao and R.O. Vianna \\
Int. J. Quant. Inf. \textbf{4}, 331 (2006) 

\item
\cite{BV04b} \textbf{Separable Multipartite Mixed States - Operational Asymptotically Necessary and Sufficient Conditions}\\
F.G.S.L. Brand\~ao and R.O. Vianna \\
Phys. Rev. Lett. \textbf{93}, 220503 (2004)

\item
\cite{BV04a} \textbf{A Robust Semidefinite Programming Approach to the Separability Problem}\\
F.G.S.L. Brand\~ao and R.O. Vianna \\
Phys. Rev. A \textbf{70}, 062309 (2004)

\end{itemize}

\cleardoublepage

\renewcommand{\chaptername}{Chapter} 

\acknowledgments

This thesis would have never been written without the help, incentive, and collaboration of countless people. 

First I would like to thank my supervisor Martin Plenio for his support, guidance, and friendship during the last three years. I have learned from him not only a lot about science, but also - and perhaps more importantly - about how to do research in an efficient, yet relaxed and enjoyable way. I also thank Michael Hartmann for our very fruitful collaboration and for teaching me a lot of new and interesting physics.  

The work I have done throughout the past five years - first as a master student and then as a PhD student - benefited from collaborations with several colleagues, who have unmistakably contributed to my formation. For this I thank Koenraad Audenaert, Daniel Cavalcanti, Olavo Cosme, Nilanjana Datta, Jens Eisert, Marcelo Franca, Michael Hartmann, Michal Horodecki, Carlos Henrique Monken, Martin Plenio, Sebasti\~ao Padua, Pablo Saldanha,  Marcelo Terra Cunha, Reinaldo Vianna, and Shash Virmani.  

The stimulating atmosphere of the Institute of Mathematical Sciences and the Quantum Information Group had an important impact in my research during the past three years. Many thanks to Koenraad Audenaert, Etienne Brion, Adolfo del Campo, Marcus Cramer, Aminash Datta, Chris Dawson, Jens Eisert, Michael Hartmann, Miguel Navascues, Masaki Owari, Martin Plenio, Kenneth Pregnell, Alex Retzker, Terry Rudolph, Alessio Serafini, Shashank Virmani, Harald Wunderlich, my floormates Oscar Dahlsten, Alvaro Feito, David Gross, Konrad Kieling, Doug Plato, and Moritz Reuter, and the affiliated members from QOLS. For making my life easier and my time at work more alive, I would like to thank Eileen Boyce.           

My studies on quantum information started already in my undergraduate and master studies at Universidade Federal of Minas Gerais. I am grateful to Reinaldo Vianna for getting me started in this exciting field and several other people in the department of physics for contributing to my formation. 

I also thank Peter Knight and Andreas Winter for being the examiners of the thesis and for their useful comments. 

I had the privilege to visit several research institutes during my time as a PhD student. For the hospitality and many discussions, I thank the Horodecki family and the quantum information group in Gdansk; Reinhard Werner and the group of Braunschweig; Andreas Winter and the group of Bristol;  Nilanjana Datta, Berry Groisman, Jonathan Oppenheim and the Cambridge group; and Giovanna Morigi and the group of Universitat Autonoma de Barcelona. I have benefited from discussions, correspondence, and guidance from many other colleagues, too many to individually thank each one of them without unwarranted omissions. 

I thank the Brazilian agency Conselho Nacional de Desenvolvimento Cient\'ifico e Tecnol\'ogico for financial support. 

Finally, I would like to warmly thank everyone of my family, in particular my brothers Bernardo and Pedro and my parents Magda and Jacyntho, who teached me from an early age the great fun of learning new things and always supported my studies and everything else I did in my life. The last seven years and specially my time in London would have not been the same without my girlfriend - and soon to be wife - Roberta. Her love and constant support have been crucial to the development of this thesis, which is dedicated to her.

\chapter{Introduction} \label{introduction}

Quantum mechanics has a profound impact on the way we perceive Nature and its fundamental principles \cite{Per93}. In particular, two or more quantum systems can be correlated in a way that defies any explanation in terms of classical shared randomness - correlations that can be created solely by the communication of bits. This type of quantum correlations, termed \textit{entanglement}, is responsible for the impossibility of giving a local realistic interpretation to quantum theory \cite{Bel87} and has been analysed on a foundational level since the beginning of the theory. In the last twenty years, it has emerged that entanglement also has a distinguished role in information processing. It turns out that entangled quantum systems can be harnessed to transmit, store, and manipulate information in a more efficient and secure way than possible in the realm of classical physics (see e.g. \cite{NC00}). In some sense, entanglement can be seen as a \textit{resource} to process information stored in quantum systems in ways that are different, and sometimes superior, to classical information processing. Part I of the thesis deals with entanglement theory and consists of chapters \ref{entanglement}, \ref{NPPT}, \ref{QHT}, and \ref{reversible}. 

\begin{description}

\item \textbf{Chapter \ref{entanglement}} reviews some results of entanglement theory that we use in the subsequent chapters. Although the discussion is not intended to be a complete survey of the subject, it covers some of the key aspects of quantum correlations. These include entanglement detection methods, with emphasis on entanglement witnesses and the Peres-Horodecki (positive partial transposition) criterion; entanglement distillation, bound (undistillable) entanglement, and entanglement activation; and entanglement measures, including the entanglement cost, the distillable entanglement, the relative entropy of entanglement, and the robustness of entanglement. Finally, we review recent extensions of the seminal de Finetti theorem for probability distributions to quantum states, a tool that will prove extremely useful in chapters \ref{NPPT} and \ref{QHT}. 

\item \textbf{Chapter \ref{NPPT}} is concerned with entanglement distillation of correlated states. The view of entanglement as a resource naturally leads to the idea of concentrating - or distilling - entanglement from a noisy form to a pure one, by employing quantum local operations and classical communication only. Determining the circumstances under which entanglement can be distilled is an important problem, one which to date has not been completely solved, even for independent and identically distributed (i.i.d.) sequences of bipartite states. It is known that the positiveness of the partial transpose implies the undistillability of the state. However, is that the complete history? One of the most notorious open problems in entanglement theory asks if the positiveness of the partial transpose is also necessary for undistillability. The conjecture is that in fact bound entanglement with a non-positive partial transpose (NPPT) exists, although the question still remains open. In order to make progress in this problem we propose to study entanglement distillation in a more general context. Instead of i.i.d. sequences, we allow for arbitrary correlations in the state, as long as all single copy reduced density matrices remain the same. The main result of this chapter is that the distillability of such sequences is intricately related to the distillability properties of the individual state, had we an i.i.d. sequence of copies of it. Although the generalization of a problem that we already cannot solve into a more complicated one might appear as a bad strategy to make progress in the former, we show that we do gain further insight into the conjecture about NPPT bound entanglement. The idea is that in this more general setting, some activation results - which show that in some cases bound entanglement can be pumped into another state and made useful - can be strengthened from the single copy case to an asymptotic scenario, which gives a new possible avenue to solve the conjecture. On a technical side, we employ the quantum de Finetti theorems to develop tools for studying entanglement in non-i.i.d. sequences of states, which turns out to be useful also in chapters \ref{QHT} and \ref{reversible}.                                                                      
                                                                                       
\item \textbf{Chapter \ref{QHT}} presents developments on an important subfield of quantum information theory: quantum hypothesis testing. Given several copies of a quantum system which is known to be described either by the state $\rho$ or $\sigma$, which we call the null and alternative hypotheses, what is the best measurement we can perform to learn the identity of the state at hand? This fundamental problem appears in various contexts and has been studied for the past twenty years under different further assumptions. A possible setting - called asymmetric hypothesis testing - considers the case in which we want to minimize to the extreme the probability of mistakenly identifying $\rho$ when $\sigma$ is the state of the system, while only requiring that the probability that $\sigma$ is identified in the place of $\rho$ goes to zero, at any rate, when the number of copies goes to infinity. It turns out that the optimal sequence of measurements gives rise to an exponentially decreasing probability of error with a rate given by the relative entropy of entanglement of $\rho$ and $\sigma$, $S(\rho || \sigma)$. This is the content of quantum Stein's Lemma, an extension of the same result for two probability distributions. There are some natural generalizations of the setting described. One is to consider arbitrary sequences of states $\{ \rho_n \}_{k \in \mathbb{N}}$ and $\{ \sigma_n \}_{k \in \mathbb{N}}$ instead of i.i.d. ones. The second is to let the two hypotheses to be composed not only of a single state each, but actually of a family of states. Generalizations of Stein's Lemma to the case in which the null hypothesis is an ergodic state and in which it is a family of i.i.d. states have been found. Extending the range of possibilities of the alternative hypothesis, however, remained as an open problem. The difficulty is that already in the case of probability distributions, there are non-i.i.d. distributions with very nice mixing properties for which the rate of decay is not even defined. Therefore one seems to need a different set of assumptions, beyond ergodicity and related concepts. The main result of this chapter is one possible generalization to the situation in which the alternative hypothesis is composed of a family of states, which can moreover be non-i.i.d.. We consider sets of states which satisfy some natural properties, the most important being the closedness under permutations of the copies. Then, employing once more the recently established quantum de Finetti theorems, we determine the exponent of the exponential decay of the error in a very similar fashion to quantum Stein's Lemma, in terms of the relative entropy. Although this result is not directly concerned with entanglement theory, it has interesting applications to it, two of which are discussed in the chapter. The results presented are also the key technical element for establishing the theory explored in chapter \ref{reversible}.  
                                                                                     
\item \textbf{Chapter \ref{reversible}} focuses on the analysis of a new paradigm for entanglement theory. The standard way to define entanglement is to start with local operations and classical communication (LOCC) and consider the class of states that cannot be created by such operations. More generally, this is a route commonly taken in resource theories: if only a restricted set of operations is available, there is usually a distinguished set of states which cannot be created by those and which can be used to lift the constraints on the operations available. These states are then seen as a resource in the theory. For the case of entanglement, one can construct a beautiful, but rather complex, 
theory of state transformations by LOCC, which is reviewed in chapter \ref{entanglement}. We might ask whether there are different, yet still meaningful, settings for which a simpler entanglement conversion theory emerges. This question was raised several years ago and is the motivation of this chapter. We propose a new paradigm for entanglement theory, and actually for general resource theories, which goes the opposite way from the standard situation: Given a certain resource, say entanglement, we consider its manipulation under \textit{any} operation not capable of generating it. Hence, instead of going from a restricted set of operations to a resource, we define the restricted set of operations which we might employ \textit{from} the resource under consideration. Although we might loose the operational motivation for the resource theory in question, the point is that a much simpler theory can be derived in this setting, which at the same time still gives relevant information about the original one. We prove that multipartite entanglement manipulations are \textit{reversible} in the asymptotic limit under non-entangling operations, being fully determined by a single quantity. The structure of the proof - largely based on the findings of chapter \ref{QHT} - is rather general and can be applied to other resource theories, such as secret-correlations in tripartite probability distributions, non-gaussianity, non-classicality, and non-locality of quantum states, and even superselection rules violations\footnote{Although in the thesis we present everything in terms of entanglement, the extension to some of those is completely elementary and will be discussed in a future work.}. From a foundational perspective, the paradigm we propose has remarkable connections to axiomatic approaches to the second law of thermodynamics, such as the one of Giles and Lieb and Yngvason. Indeed, although the two theories deals with completely different resources (entanglement and order) and have also a distinct range of applicability, we find that the structural form of the two theories, in the case of the second law explored most recently by Lieb and Yngvason, is actually the same. This gives new insight into previously considered analogies of entanglement theory and thermodynamics.  
\end{description}

The existence of entanglement in quantum theory also has dramatic consequences to our ability to simulate quantum systems in a classical computer: It is widely believed that it is impossible to simulate efficiently on a classical computer - with resources scaling polynomially in the number of particles - the dynamics of quantum many-body systems. While this is a major problem when studying such systems, it naturally leads to the idea of a quantum computer, in which controlled quantum systems are employed to perform computation in a more efficient way than possible by classical means. Such type of computer can efficiently simulate the dynamics of any local quantum many-body system. But it can actually do much more. There are several computational problems, some of them with no connections to physics, for which quantum computation appears to offer an exponential speed-up over classical computation (see e.g. \cite{NC00}), the most well known example being Shor's polynomial quantum algorithm for factoring \cite{Sho97}. Part II of the thesis is concerned with the use and limitations of a quantum computer in determining properties of many-body physics and consists of chapters \ref{complexity}, \ref{partition}, and \ref{QCMA}.

\begin{description}

\item \textbf{Chapter \ref{complexity}} has two strands. The first is an overview of some of the definitions and results from quantum complexity theory that we employ in the subsequent discussion. These include the definitions of the classes of problems efficiently solved, with high probability, by classical (\BPP) and quantum computation (\BQP), as well as by the more restricted model of quantum computation with only one clean qubit (\DQC); the quantum algorithm for phase estimation; the definition and basic properties of the complexity classes $\NP$ and $\MA$, and their quantum analogues $\QMA$ and $\QCMA$; and an overview of the seminal result by Kitaev that the determination, to polynomial accuracy, of the ground state energy of local quantum Hamiltonians is $\QMA$-complete, together with its most recent extension to one dimensional Hamiltonians. The second strand is an application of entanglement theory to quantum complexity theory. In the same way that we believe we can solve problems more efficiently with a quantum computer than with a classical one, we also believe that we can efficiently check the solution of a larger class of problems if we allow for quantum states as proofs and quantum computation to verify their correctness. This idea can be taken further to define the class $\QMA$. An interesting question is whether we can check the solutions of an even larger class of problems if instead of one quantum state as a proof, we are given several with the promise that they are \textit{not} entangled. Note that this is a question that only makes sense in the quantum setting, as classical proofs are obviously never entangled. We now know arguments that indicate that indeed in some cases many unentangled proofs are more powerful than a single one. But what if we have several unentangled proofs, but are only allowed to perform separate measurements on them? This question, recently posed by Aaronson \textit{et al}, is the focus of the second half of this chapter. Based on the understanding of entanglement shareability developed in recent years, we prove that, in fact, any fixed number of unentangled witnesses are not more useful than a single one, for the case that only separate measurements can be realized. The argument reveals a curious feature of entanglement regarding its shareability properties, which seems to depend strongly on whether we consider global entanglement or only entanglement that is \textit{locally} accessible.

\item \textbf{Chapter \ref{partition}} presents a result concerning the usefulness of quantum computation for calculating properties of many-body systems. It is well known that quantum computers are capable of simulating efficiently the evolution of general local Hamiltonians. There are, however, other properties of many-body systems that are also of interest, and whose calculation do not follow from the ability to simulate the dynamics of the system. Two important examples are the ground-state energy and the partition function of local Hamiltonians. There is very little hope that a quantum computer can calculate either of them for general models. However, can it provide meaningful approximations? Of course what we mean by meaningful needs further clarification. Do we consider an approximation meaningful if it tells us something useful about the physics of the problem, or, if we can give evidence that to achieve such a degree of approximation in a classical computer is an intractable problem? Although the first definition is the most adequate for a possible practical use of the approximation, the latter - which we call computationally non-trivial - is more relevant as an indication of the power of quantum computation. For certain classical Hamiltonians, recent work have found quantum algorithms that give computationally non-trivial approximations for their partition functions. They do so by showing that the approximations derived are $\BQP$-hard (can solve any problem which can be solved in a quantum computer). However, to this aim, non-physical complex valued temperatures have to be employed. A natural question is whether there are computationally non-trivial approximations of partition functions for physical instances of the problem. Based on the phase estimation algorithm, we present an elementary quantum algorithm for approximating partition functions which allows us to answer this question in the affirmative. There are crucial aspects of this approach that differs from previous ones. First, for classical partition functions the approximations can be obtained in polynomial time in a classical computer, so the non-trivial instances must be quantum. Second, we prove that the approximation is meaningful only in a weaker sense: The problem is hard for $\DQC$, and not for $\BQP$; moreover, we must employ $O(\log(n))$-local Hamiltonians in the hardness result, instead of strictly local Hamiltonians. For establishing the result we adapt Kitaev's construction from $\QMA$ to $\DQC$. We believe that such an approach might lead to further insight into the interesting class $\DQC$. Using the same techniques, we also find a $\DQC$-hard quantum algorithm for estimating, to polynomial accuracy, the spectral density of local Hamiltonians. Our result can then be seen as an example of the usefulness of the class $\DQC$ as a tool to attest the intractability of certain problems in a classical computer. 

\item \textbf{Chapter \ref{QCMA}} presents a further result on the limitations of a quantum computer for the calculation of ground-state energies of local Hamiltonians. As mentioned before, this problem is already $\QMA$-complete for one dimensional quantum Hamiltonians. Therefore, unless $\BQP = \QMA$, which is extremely unlikely, even a quantum computer cannot solve it in polynomial time. An interesting question in this respect is to determine what types of Hamiltonians are actually hard. A property that seems to have a direct impact is the \textit{spectral gap}, given by the difference of the first excited state energy to the ground energy. While the influence of the gap on the physical properties of many-body models has already been studied for a long time, only recently it has become clear that the gap also has an important role in the complexity of calculating local properties of many-body Hamiltonians. For example, Hastings recently established that one dimensional Hamiltonians with a constant gap can be efficiently stored and processed classically, which makes it very unlikely, baring a $\QMA = \NP$ surprise, that it is $\QMA$-complete to calculate their ground-state energy. In this chapter we consider the intermediate regime of Hamiltonians with an inverse polynomial (in the number of sites of the model) spectral gap. Our main result is that even for poly-gapped Hamiltonians, the estimation of the ground-state energy of local Hamiltonians is an intractable problem for quantum computation. More concretely, we show that under probabilistic reductions, the problem is $\QCMA$-hard ($\QCMA$ being the class of problems which have polynomial \textit{classical} proofs that can be checked on a \textit{quantum} computer in polynomial time). Crucial in our approach is the celebrated Valiant-Vazirani Theorem on the hardness of $\NP$ with unique witnesses, which we review and extend to probabilistic complexity classes. The result established also has implications to methods for storing and efficiently manipulating ground-states of general one dimensional local Hamiltonians.   
\end{description}

The new possibilities for information processing offered by quantum systems motivates the development of experimental techniques for the manipulation of individual quantum systems with a high degree of control and in a large number. Although several physical systems are being actively explored for the realization of quantum computation in a large scale, it is clear that with current and near future technology, it is out of reach to build a full-working quantum computer, operating below the fault-tolerance threshold for which error correction and hence scalability is possible \cite{NC00}. A natural approach is then to consider quantum systems that can be controlled with a good accuracy, but much above the fault-tolerance threshold, to simulate the dynamics of particular quantum many-body systems of interest, employing the former quantum system as a \textit{quantum simulator} to the latter. Part III of the thesis concerns the application of arrays of coupled microcavities for the quantum simulation of many-body physics and consists of chapters \ref{AMO}, \ref{BH}, \ref{kerrnonlinearity}, and \ref{heisenberg}.

\begin{description}

\item \textbf{Chapter \ref{AMO}} contains an overview of the quantum regime in arrays of coupled microcavities. The goal of this and the following chapters is to describe a new physical platform for the quantum simulation of many-body physics, consisting of arrays of coupled microcavities, interacting in the quantum regime with each other and with atomic-like structures. We show that this system is rich enough to allow for the realization of many interesting many-body models, and we propose schemes to create a few particular Hamiltonians. These proposals, if realized, would put photons or combined photonic-atomic excitations in new states of matter, which do not normally appear in Nature. The set-up we consider offer advantages over other physical platforms for the realization of strongly interacting many-body models, most notably the addressability to individual sites of the model, while it also presents new experimental challenges towards a realization of such strong interacting many-body regime. After a summary of some of the main achievements in the construction of quantum simulators in atomic and optical systems, we present a few of the basic aspects of cavity quantum electrodynamics (cQED), which we employ in the following three chapters. Then we turn to an outline of three promising physical realizations of microcavities for cQED, which are used as a testbed for the feasibility of the proposals we discuss in the following chapters. Finally, we derive the quantum interaction of an array of microcavities, which are coupled due to the hopping of photons between neighboring cavities, and discuss its applicability in real cQED realizations. 

\item \textbf{Chapter \ref{BH}} discusses the realization of Bose-Hubbard models in arrays of coupled cavities. The Hubbard model has an important role in solid state physics in the description of conducting-to-insulating transitions. Its extension to bosonic particles also has key importance as a paradigmatic example of interacting bosons on a lattice. The system exhibits a quantum phase transition from a superfluid to a Mott insulator phase, and describes a wide range of physical systems, such as Josephson junctions arrays and cold atoms in an optical lattice. In this chapter we propose the realization of one- and two-component Bose-Hubbard models of polaritons - joint atomic-photonic excitations - in arrays of interacting microcavities. We use the strong light-matter interaction in cavity QED, operating in the strong coupling regime, and techniques for producing large nonlinearites, based on electromagnetic-induced-transparency (EIT), to create an effective repulsion term for polaritons in the same cavity. This is then combined with the tunneling of photons from neighboring cavities to form the Bose-Hubbard model. We also comment on the possibility of forming a Mott insulator phase for photons in such a set-up. Finally, we discuss detection methods for local properties of the system and argue that a realization of such a model is within reach with present state-of-the-art technology. 

\item \textbf{Chapter \ref{kerrnonlinearity}} presents a proposal of a new scheme for generating large nonlinearities in cavity QED. Large optical Kerr nonlinearities are useful in several contexts, ranging from the realization of nonlinear optics in the few quanta regime to the implementation of quantum information protocols, such as quantum distillation schemes and quantum computation with photons. The generation of strong nonlinearities is also crucial for the realization of strongly interacting many-body models in arrays of coupled cavities, which is our main motivation here. In this chapter we propose a new method for producing Kerr nonlinearities in cavity QED, which is experimentally less demanding than previous proposals and produces comparable nonlinear interactions to the state-of-art EIT scheme. Furthermore, we show that by applying suitable laser pulses at the beginning and end of the evolution of the proposed set-up, we can obtain nonlinear interactions whose strength increases with increasing number of atoms interacting with the cavity mode, leading to effective nonlinear interactions at least two orders of magnitude larger than previously considered possible. 

\item \textbf{Chapter \ref{heisenberg}} presents an analysis of a different application of coupled cavity arrays in the creation of strongly interacting many-body models. While in the previous chapters we used the atoms to boost the interaction of photons in the same cavity, either actively, as in the polaritonic case, or in an undirect manner, as in the generation of large Kerr nonlinearities, in this chapter we focus on a complementary regime, in which the tunneling of photons is used to mediate interactions for atoms in neighboring cavities. We propose the realization of effective spin lattices, more concretely of the anisotropic Heisenberg model (XYZ), with individual atoms in microcavities that are coupled to each other via the exchange of virtual photons. Such a model has a rich phase diagram and is commonly used in the study of quantum magnetism. It is also a limiting case of the fermionic Hubbard model, which is believed to contain the main features of high $T_c$ superconductors. From a quantum information perspective, it can be used to create cluster states, universal resources for quantum computation by individual measurements. The presentation is ended with a discussion on the feasibility of creating the model with current technology in a few promising cavity QED realizations.  
\end{description}

The necessary background for this thesis is familiarity with the basics of quantum mechanics, quantum optics, and quantum information theory. Good references are \cite{Bal98, Per93} for an introduction to quantum theory, \cite{KA83, CDG89, MW95, Lou00} for quantum optics and atomic physics, and \cite{NC00, Pre98} for quantum information theory and quantum computation. In appendix \ref{not} we present the notation used throughout the thesis. 

\part{Entanglement Theory}\label{part1}

\chapter{Entanglement Theory} \label{entanglement}                                        


\section{Introduction}

Part I of the thesis is focused on the study of different aspects of quantum entanglement. In this chapter we present a collection of definitions and results 
of entanglement theory that will prove useful in the subsequent analysis. Most of the presentation is review oriented, with the exception of a few fragments of original research when indicated. 

We start giving the mathematical definition of entangled states and present two methods to characterize entanglement, the Peres-Horodecki (PPT) criterion and entanglement witnesses. We then present the main ideas behind entanglement distillation, including bound entanglement and entanglement activation. In the sequel, we introduce and outline some properties of entanglement measures that will be considered throughout this thesis: the distillable entanglement, the entanglement cost, the relative entropy of entanglement, and the robustnesses of entanglement. We end up discussing recently established quantum de Finetti theorems, which play an important role in the next chapters.  

\section{Entangled States}

In chapter \ref{introduction} we mentioned the importance of entanglement both for the foundations of quantum theory and for quantum information processing, where entanglement is viewed as a resource. To define what is entanglement it is instructive to start with the definition of \textit{non-entangled} states (also called classically correlated and separable). For finite dimensional Hilbert spaces ${\cal H}_A$ and ${\cal H}_B$ we say that a bipartite quantum state $\rho_{AB} \in {\cal D}({\cal H}_A \otimes {\cal H}_B)$ is non-entangled if there exists a probability distribution $\{ p_i \}$ and two sets of states $\{ \rho_i^A \}$ and $\{ \rho_i^B \}$ acting on ${\cal H}_A$ and ${\cal H}_B$, respectively, such that \cite{Wer89}
\begin{equation} \label{sepstate}
\rho_{AB} = \sum_{i} p_i \rho_i^A \otimes \rho_i^B. 
\end{equation}
An entangled state, in turn, is a state which is not separable\footnote{In this thesis, for simplicity, we will restrict most of the discussion to finite dimensional Hilbert spaces. The interest reader if referred to Refs. \cite{HHHH07, PV07} for an exposition of entanglement in infinite dimensional systems.}$^{,}$\footnote{For the sake of simplicity, once more, we have given the definition of entangled states only for bipartite states. The most general definition to multipartite systems is completely analogous and can be found e.g. in \cite{HHHH07, PV07}.}. 
                                                                                  
The meaning of this definition becomes clear considering the \textsl{distant laboratories paradigm} of entanglement theory. We assume that two parties, e.g. Alice and Bob, are in separated locations and do not have access to any joint quantum interaction. They however can perform any operation allowed by quantum mechanics locally, i.e. any trace preserving completely positive map \cite{NC00}, and can moreover send classical bits to each other. The parties can thus cooperatively implement any physical map consisting of quantum \textit{local operations and classical communication} (LOCC). With this setting in mind, we can rephrase the definition of separable states in an operational way and say that they are the quantum states which can be created by LOCC. Although such states might be correlated, it is clear that all such correlations are due to the classical communication between Alice and Bob and can, therefore, be replaced by shared classical randomness. The correlations contained in entangled states, on the contrary, are inequivalent to classical correlations, as they cannot be created solely by communicating classical information.

\section{Entanglement Characterization} \label{entcharact}

Considering the above definition of entangled states, one immediate question arises. Given the description of a quantum state as a density operator, how can we decide if the state is entangled or not? The very definition of entanglement already gives an algorithmic procedure to determine the separability of a given state $\rho$. Loosely speaking, one could search over the separable states set for a good approximation to $\rho$ (in trace norm for example) and refine successively the quality of approximation. This is however highly inefficient, taking an exponential number of steps in the dimension of the state, and is clearly not suitable for analytical calculations.


A large amount of work have been devoted to the development of entanglement characterization criteria, both analytical and numerical (see e.g. Refs. \cite{HHHH07, Iou06} for a review of the main methods). An important result in this direction, which has reshaped research efforts in the study of entanglement detection methods, was Gurvits proof that it is $\NP$-hard to decide if a state is entangled\footnote{Taking the the dimension of the state as the input size.} \cite{Gur03}, showing that there is very little hope of an efficient algorithm for the problem in the general case. It is out of the scope of this thesis to give a complete overview of the \textit{separability problem}. Instead in the sequel we discuss two particular methods for entanglement characterization, which have shown to be extremely useful not only for the problem \textit{per se}, but also as analytical tools in entanglement theory. It is actually in this second use that we are most interested in this thesis.      

\subsection{Peres-Horodecki Criterion}

The first and one of the most useful methods for detecting entanglement is the positive partial transpose, or Peres-Horodecki, criterion \cite{HHH96, Per96}. Given bases $\ket{i}_A$ and $\ket{j}_B$ for ${\cal H}_A$ and ${\cal H}_B$, respectively, we can write any operator $X \in {\cal B}({\cal H}_A \otimes {\cal H}_B)$ as 
\begin{equation*} 
X = \sum_{i,j,k,l} x_{i,j,k,l} \ket{i}_A\bra{j} \otimes \ket{k}_B\bra{l}. 
\end{equation*}
We define the partial transpose of $X$ with respect to subsystem $A$ as 
\begin{equation*} 
X^{\Gamma_A} = \sum_{i,j,k,l} x_{i,j,k,l} (\ket{i}_A\bra{j})^T \otimes \ket{k}_B\bra{l},
\end{equation*}
where $T$ is the usual transposition map. Likewise, we define $\Gamma_B$ as the partial transpose map with respect to $B$\footnote{In most cases it will not matter which system we take the map with respect to. To make the notation lighter we will then omit the $A$ index in $\Gamma_A$ and denote $\Gamma_A$ by $\Gamma$.}. As the transposition itself, the partial transposition is basis dependent. The eigenvalues of $X^{\Gamma_{A/B}}$, however, are independent as long as we use a local basis.
 
If we apply the partial transpose map to a generic separable state, given by Eq. (\ref{sepstate}), we find
\begin{equation*}
\rho_{AB}^{\Gamma} = \sum_{i} p_i (\rho_i^A)^T \otimes \rho_i^B, 
\end{equation*}
from which we can easily see that $\rho_{AB}^{\Gamma}$ is a positive semidefinite operator. In Ref. \cite{Per96} it was noted by Peres that in general this is not the case anymore when $\rho_{AB}$ is entangled: there exist states $\rho_{AB}$ for which $\rho_{AB}^{\Gamma}$ has negative eigenvalues. We can therefore use the non-positivity of the partial transpose (PPT) of a state as a sufficient condition for entanglement. Soon after Peres work, the Horodecki family proved that while for systems consisting of two qubits, or a qubit and a qutrit, the non-PPT condition is also necessary for entanglement, this ceases to be the case for any higher dimensions \cite{HHH96}. 


\subsection{Entanglement Witnesses} \label{entwit}

Let us denote the set of separable states acting on ${\cal H} = {\cal H}_A \otimes {\cal H}_B$ by ${\cal S}({\cal H})$, or by ${\cal S}$ when ${\cal H}$ is clear from the context. It is a direct consequence of the definition of separable states that they form a closed convex set. Given an entangled state $\rho$, we know that $\rho \notin {\cal S}$ and hence there must exist a hyperplane, i.e. a linear functional on $B({\cal H})$, that separates $\rho$ from ${\cal S}$ \cite{RS72}. An \textit{entanglement witness} for a state $\rho$ is exactly such a hyperplane separating it from the set of separable states \cite{HHH96, Ter02}. 

We define the set of entanglement witnesses ${\cal W}({\cal H})$ as the set of all Hermitian operators $W$ such that  
\begin{equation*}
\tr(W \sigma) \geq 0, \hspace{0.1 cm} \forall \hspace{0.1 cm} \sigma \in {\cal S}.
\end{equation*}
Note that any positive semidefinite operator is contained in ${\cal W}$. These operators are trivial witnesses, since they do not detect the entanglement of any state. It is nonetheless useful to define the set of witnesses containing every operator which is positive on separable states, as in this case ${\cal W}$ is the dual cone\footnote{A subset $C$ of a real vector space is a cone if, and only if, $\lambda x$ belongs to $C$ for every $x \in C$ and every positive real number $\lambda$.}\footnote{Given a cone ${\cal M} \in \mathbb{R}^n$, its dual cone is defined as ${\cal M}^* = \{ y \in \mathbb{R}^n : y^Tx \geq 0 \hspace{0.1 cm} \forall \hspace{0.05 cm} x \in {\cal M} \}$ \cite{BV01}.} of ${\cal S}$ and this allow us to use tools from convex optimization into variational problems involving separable states or entanglement witnesses, as it will be discussed later in this section. 

An interesting aspect of the definition of entanglement witnesses stems from the fact that they are Hermitian operators and can therefore be experimentally measured. Whenever the expectation value of some witness operator takes a value smaller than zero, then we can draw the conclusion that the measured state has been entangled. In several cases the number of measurement rounds needed to measure an entanglement witness is much smaller than the number needed to perform full tomography of the state at hand, showing the usefulness of witnesses operators in the experimental characterization of entanglement. Indeed, multipartite entanglement of up to 8 parties has been experimentally detected by measuring entanglement witnesses in several physical set-ups, including linear optical networks \cite{BEK+04, KST+07} and cold trapped ions \cite{HHR+05}.

By their very definition, entanglement witnesses can characterize the entanglement of every quantum state. The price for this generality is the high complexity of determining a witness for a given entangled state. Based on the $\NP$-hardness of the separability problem, it follows directly that finding witnesses for general entangled states is also $\NP$-hard. Although it is thus very unlikely that an efficient method for calculating entanglement witnesses for every state exists, a great deal of work has been devoted to the construction of particular bipartite and multipartite witness operators (see e.g. \cite{LKCH00, Ter02, CW03, TG05, Tot04, HHHH07} and references therein).

\subsubsection{Convex Optimization with Separable states and Entanglement Witnesses}

In quantum information theory we often encounter convex optimization problems involving separable states and entanglement witnesses. We now show that such problems are encompassed in the framework of convex optimization problems with generalized inequalities (see e.g. Ref. \cite{BV01} for a detailed discussion on convex optimization and generalized inequalities). We can define a partial order over the states acting on ${\cal H}$ as follows:
\begin{equation} \label{genineq}
A \geq_{S} B \hspace{0.4 cm} \text{if} \hspace{0.4 cm} A - B \in \text{cone}({\cal S}),
\end{equation}
where $\text{cone}({\cal S})$ is the convex cone formed by all (not necessarily normalized) separable states. As discussed in Ref. \cite{BV01} (see section 5.9), the Lagrange multiplier associated to such an inequality is an element of the dual of ${\cal S}$, i.e. an operator $X$ such that $X \geq_{S^*} 0$. As ${\cal S}^* = {\cal W}$, we see that the Lagrange multiplier associated to the inequality given by Eq. (\ref{genineq}) is an entanglement witness. This connection allow us to rewrite optimization expressions over separable states as optimization problems over entanglement witnesses and vice-versa, by using e.g. strong duality conditions from convex optimization theory. We will see later on in this thesis two examples of such a procedure, in sections \ref{proofmain} and \ref{distent}.

\section{Entanglement Distillation} \label{entanglementdistillation}

Up to now we have considered entanglement in terms of preparation procedures: to create entanglement the parties need to exchange quantum information. This however does not say anything about the usefulness of such entanglement to information processing (or to violate a Bell's inequality \cite{Bel64}). It turns out that the definition of useful entanglement varies from what application we have in mind. The sets of (entangled) states which are resources for e.g. teleportation \cite{BBC+93}, secure key distribution \cite{BB84, Eke91, HHHO05b}, Bell's inequality violation \cite{Bel64}, or universal quantum computation by single-qubit measurements \cite{RB01}, are in fact different one from another\footnote{With the exception of the set of states violating a Bell's inequality, where it is not known how it compares to the sets of states which are useful for teleportation or key distribution.}.                   
                                                                              
Despite this ambiguity in defining useful entanglement, there is one family of states which stands out as the purest form of entanglement in several applications. These are the \textit{maximally entangled states}, defined by 
\begin{equation}                                                                    
\Phi(K) := \frac{1}{K} \sum_{i=1}^K \sum_{j=1}^K \ket{i, i}\bra{j, j},         
\end{equation}                                                                    
up to local unitaries in $A$ and $B$. Indeed, if Alice and Bob share a maximally entangled state of rank $K$, they can e.g. teleport a $K$-dimensional state with perfect fidelity by communicating $\log(K)$ bits \cite{BW92}, communicate 2$\log(K)$-bits from Alice to Bob (or vice versa) by classically communicating $\log(K)$ bits \cite{BBC+93}, or extract $\log(K)$ secret-bits completely uncorrelated from an eavesdropper by classically communicating over a public, but authenticated, channel \cite{Eke91}. Furthermore, by teleportation, any entangled state acting on $\mathbb{C}^K \otimes \mathbb{C}^K$ can be obtained deterministically from $\Phi(K)$ by LOCC.

                                                                                    
Considering the central role of the maximally entangled states as noiseless resources for quantum communication, we are led to the question: under what circumstances can we obtain maximal entanglement from a source of noisy entanglement? This question is not only of theoretical interest, but central for the experimental implementation of quantum information processing. As in practice it is impossible to send quantum information without error, entanglement will inevitably become mixed during distribution, which might not be directly useful. One possible strategy is to bring this entanglement into the desired form by local operations and classical communication. Such a procedure of transforming by LOCC noisy entanglement into maximally entangled states is known as \textit{entanglement distillation}.

Note the similarity of this setting with the one encountered in the theory of (quantum) information transmission through noisy channels. The ability to send error-free information by the use of error correction techniques can be reinterpreted as the possibility of simulating a noiseless channel by a noisy one. Entanglement distillation has a similar flavor. Indeed, after entanglement distillation the parties could use their entanglement to perfectly teleport a quantum state by LOCC, simulating the effect of a noiseless quantum channel. In contrast to standard error correction, however, all the quantum error correction needed is performed in the distillation part of the protocol. Curiously, all the correction procedure could be implemented even before Alice chooses the states to be sent. For this reason, entanglement distillation can be considered as a counter-factual error correction protocol.

Based on classical error correction codes, the first entanglement distillation protocols were proposed in Refs. \cite{Gis96, BBP+96, BDS+96, DEJ+96}, where it was shown that it is indeed possible to distill maximally entangled states from an i.i.d source of states $\rho_{AB}^{\otimes n}$, for some particular choices of $\rho_{AB}$, with a non-zero rate. Although considerable effort has been put in the study of distillation procedures (see Ref. \cite{HHHH07} and references therein), it is still a challenging task to come up with distillation protocols with good rates, specially in the two-way classical communication setting. In section \ref{costdistillable} we revisit this question in our discussion of optimal distillation procedures in the context of entanglement measures. 

\subsection{Bound Entanglement} \label{boundentanglement}

As mentioned in the previous paragraph, entanglement distillation is a difficult concept to grasp in full. In this section we show that the situation is more complex than we might have anticipated even on a qualitative level. Instead of looking at distillation rates, let us consider the binary problem of determining when a quantum state can be distilled. For states of two qubits the answer turns out to be simple: any entangled state can be distilled \cite{HHH97}. For higher dimensions, however, the situation is different and there is not anymore an one-to-one relation between entanglement and distillability \cite{HHH98}.   

To further explore this point, we consider the class of stochastic LOCC (SLOCC) quantum maps. These are the quantum operations which can be implemented by LOCC with non-zero probability. As shown in Ref. \cite{CDKL01}, a direct application of the Choi-Jamiolkowski isomorphism between quantum channels and quantum states \cite{Cho75, Jam72} gives a beautiful characterization of this class: A completely positive map $\Omega$ can be generated with non-zero probability by stochastic LOCC if, and only if, it can be written as
\begin{equation} \label{SEPmap}
\Omega(.) = \sum_k A_k \otimes B_k (.) A_k^{\cal y} \otimes B_k^{\cal y},
\end{equation}
with local operators $A_k$, $B_k$ such that $\sum_{k}A_k^{\cal y}A_k \otimes B_k^{\cal y}B_k \leq \id$. 

We can now reformulate the distillability definition in terms of SLOCC maps. From the equivalence of distillability and entanglement for two qubits states it follows that a state $\rho_{AB}$ is distillable if, and only if, there is an integer $n$ and a stochastic LOCC operation mapping $\rho_{AB}^{\otimes n}$ into a two qubit entangled state \cite{HHH98}\footnote{It will be the main goal of chapter \ref{NPPT} to extend this definition to the non-i.i.d. case. There, instead of an i.i.d. source of the form $\{ \rho^{\otimes n} \}$, we will be interested in figuring out when sequence of states $\{ \omega_n \}$, with correlations among the several copies, can be distilled.}.  

From the characterization of SLOCC maps given by Eq. (\ref{SEPmap}) we find the strongest known necessary condition for distillability: the non-PPTness of the partial transpose. Indeed, taking the partial transposition of $\Omega(\rho_{AB}^{\otimes n})$ and using the form of $\Omega$ given by Eq. (\ref{SEPmap}), we find
\begin{equation*}
\Omega(\rho_{AB}^{\otimes n})^{\Gamma}  = \sum_k (A_k)^T \otimes B_k (\rho_{AB}^{\otimes n})^{\Gamma} A_k^{\cal y} \otimes B_k^{*}.
\end{equation*}
If $\rho_{AB}$ is distillable, $\Omega(\rho_{AB}^{\otimes n})^{\Gamma}$ must be entangled and thus have a non-positive partial transpose. The R.H.S. of the equation above then tells us that this is only possible if $\rho_{AB}^{\Gamma}$ is not positive semidefinite. As there are entangled states with a positive partial transpose \cite{Hor97}, there exist states which, although entangled, cannot be distilled. These are called \textit{bound entangled} states and were first introduced by the Horodecki family in Ref. \cite{HHH98}.  

The phenomenon of bound entanglement shows that there is an inherent irreversibility in the manipulation of entanglement by LOCC. While any entangled state can be created from a maximally entangled state of sufficiently large dimension by LOCC, the reverse process is not always possible. This fact, or more precisely its quantitative version discussed in section \ref{costdistillable}, will be the main motivation for formulating the theory that will be developed in chapter \ref{reversible}. 

Although any distillable state must have a non-positive partial transpose, the converse is not known to be true. In fact, based on several partial indications \cite{DSS+00, DCLB00, Cla06, VD06, PPHH07}, it has been conjectured that bound entangled states with a non-positive partial transpose exist. Unfortunately, despite considerable effort, the conjecture is still open. In chapter \ref{NPPT} we present a new approach to tackle this problem, which albeit giving more evidence on the existence of NPPT bound entanglement, does seem to be strong enough to resolve it.    

\subsection{Entanglement Activation} \label{activation}

The conjectured existence of NPPT bound entangled states raises an intriguing possibility: there might exist two undistillable states $\rho$ and $\sigma$ such that $\rho \otimes \sigma$ is distillable. The distillable entanglement of $\sigma$ would then be activated by $\rho$ (or vice-versa)\footnote{An analogous effect has also been conjectured for quantum channels. It is believed that there are two quantum channels $\Lambda_1 \otimes \Lambda_2$ with zero two-way quantum capacity such that $\Lambda_1 \otimes \Lambda_2$ has a non-zero capacity. Remarkably, this has been recently shown to be the case for the one-way (or zero-way) channel capacity \cite{SY08}.}. This possibility was first linked to NPPT bound entanglement in Ref. \cite{SST01}, where it was shown that if a particular NPPT Werner state\footnote{Werner states are one dimensional families of states formed by the convex combination of normalized projectors onto the symmetric and antisymmetric subspaces of $\mathbb{C}^d \otimes \mathbb{C}^d$ \cite{Wer89}.} is undistillable, then its entanglement can be activated by a PPT, and therefore undistillable, entangled state. In Ref. \cite{EVWW01}, in turn, this result was strengthen by proving that the entanglement of any NPPT state can be activated by a PPT state. It is clear that the converse holds: the existence of such an activation process would imply that NPPT bound entanglement exists, as the set of PPT states is closed under tensoring. 

Although such a strong form of entanglement activation remains as a conjecture, weaker but still meaningful activation phenomena have been proven to exist \cite{HHH99c, EVWW01, VW02, Ish04, IP05, Mas06}. These processes were found in the context of single copy entanglement \textit{quasi-distillation} \cite{HHH99a}. Here we are interested in increasing the fidelity of a given state with the maximally entangled state probabilistically by LOCC. A good figure of merit for this process is the \textit{singlet fraction} \cite{HHH99a}, defined as   
\begin{equation} \label{singletfraction}
F_K(\rho) :=  \sup_{\Omega \in SLOCC} \frac{\tr(\Omega(\rho)\Phi(K))}{\tr(\Omega(\rho))}.
\end{equation}
The idea of entanglement activation and the first result in this direction was given by the Horodecki family in Ref. \cite{HHH99c}. There they presented an entangled state $\rho$ with $F_2(\rho) < 1$ and a PPT entangled state $\sigma$ such that 
\begin{equation*}
\lim_{n \rightarrow \infty} F_2(\rho\otimes \sigma^{\otimes n}) = 1,
\end{equation*}
showing that the entanglement of $\sigma$ can be pumped into $\rho$ to increase its singlet-fraction. 

In Refs. \cite{EVWW01, VW02} other examples of entanglement activation were found. In particular, Vollbrecht and Wolf found a PPT bound entangled state $\sigma$ which can be used to activate any NPPT state, i.e. such that
\begin{equation*}
F_2(\rho\otimes \sigma) > \frac{1}{2},
\end{equation*}
for every NPPT state $\rho$ \cite{VW02}. This result shows that for every integer $n$ there is a state $\rho$ such that $F_2(\rho^{\otimes n})=1/2$ and $F_2(\rho\otimes \sigma) > \frac{1}{2}$\footnote{This follows from the fact that for every $n$ there are NPPT states which are $n$-copy undistillable \cite{DCLB00, DSS+00}.}. 

Finally, the strongest activation result to date has been proven by Masanes in Ref. \cite{Mas06}. There he showed that every entangled state can be activated\footnote{This result is of fundamental significance because it establishes that, in the context of quasi-distillation, \textit{every} entangled state is useful for quantum information processing.}: For every entangled state $\rho$ and every real number $K^{-1} \leq \lambda < 1$ (with $K$ a natural number), there exists a state $\sigma$ with $F_{K}(\sigma) \leq \lambda$ such that $F_{K}(\rho \otimes \sigma) > \lambda$. 

Other examples of entanglement activation in the pure and multipartite cases were considered in Refs. \cite{Ish04, IP05}. We will revisit entanglement activation in chapter \ref{NPPT}, where Masanes theorem is generalized and the relation of such a generalization to the existence of NPPT bound entanglement is explored.

\section{Entanglement Measures} \label{entmeasures}

In the previous sections we have discussed key qualitative features of entanglement. Here we will show that it is also possible to quantify entanglement in a meaningful manner. One such way was already implicit in our discussion on entanglement distillation: We can quantify the amount of entanglement of a state by how much two-dimensional maximally entangled states can be extracted from it by LOCC. As we show in the sequel, this reasoning indeed leads to one of the most meaningful entanglement measures. However this is by no means the only one. In the past years dozens of entanglement measures have been proposed and studied \cite{PV07, HHHH07}. This section gives a short introduction to the basic ideas underlying the construction of entanglement measures and, in particular, to the four measures that will be considered later in this work: the distillable entanglement, the entanglement cost, the relative entropy of entanglement, and the robustness of entanglement. A more in-depth discussion on entanglement measures can be found in Refs. \cite{PV07, HHHH07}. 

A basic principle for quantifying entanglement can be derived from its relation to local operations and classical communication. We might not know exactly what entanglement is, but we do know that we should not be able to create it by LOCC. Thus the first property that we expect from an entanglement measure $E$ is that
\begin{list} {\textbf{1.}}
	\item $E$ should be monotonic decreasing under LOCC operations. 
	
	\vspace{0.4 cm}
	
\end{list}
Different forms of monotonicity have been considered in the literature. In the simplest, $E$ should be monotonic under trace preserving LOCC maps, i.e.
\begin{equation*}
E(\Lambda(\rho)) \leq E(\rho)
\end{equation*}
should hold for every state $\rho$ and every deterministic LOCC operation $\Lambda$. This is arguably the most important requirement for an entanglement measure. A direct consequence of this inequality is the invariance of $E$ under local unitaries. There is another form of monotonicity which, although useful, is not as fundamental as the first (see e.g. Ref. \cite{HHHH07}). Suppose that Alice and Bob implement a LOCC protocol to the state $\rho$ and obtain the states $\{ \rho_i \}$ with probabilities $\{ p_i \}$. Then, \textit{strong} or \textit{full monotonicity} under LOCC requires that the amount of entanglement cannot increase on average, i.e.
\begin{equation}  \label{strongmonotonicity}
E(\rho) \geq \sum_i p_i E(\rho_i).
\end{equation}

Is it an immediate consequence of monotonicity under LOCC that an entanglement measure should be constant on the separable states set. The next property is a refinement of this and postulates the intuitive fact that separable states should not contain any entanglement.
\begin{list} {\textbf{2.}}
	\item $E(\rho) \geq 0$ for every state and $E(\sigma) = 0$ for every separable state $\sigma$. 
   \vspace{0.4 cm}
\end{list}
In the spirit of information theory, we are often interested in the manipulation of entanglement in the limit of a very large number of copies. It is then helpful to have an entanglement measure that does not vary too much for small variations in the state, even if such a state has a very high dimension. This idea is summarized in the following property.
\begin{list} {\textbf{3.}}
	\item For every two states $\rho_n$ and $\sigma_n$ acting on ${\cal H}^{\otimes}$, 
\begin{equation}
\left \vert \frac{E(\rho_n) - E(\sigma_n)}{n} \right \vert \leq f( || \rho_n - \sigma_n ||_1),
\end{equation}
where $f:\mathbb{R} \rightarrow \mathbb{R}$ is a function independent of $n$ such that $\lim_{x \rightarrow 0} f(x) = 0$. This special type of continuity is known as \textit{asymptotic continuity}.
\vspace{0.4 cm}
\end{list}
Another convenient property is normalization.  
\begin{list} {\textbf{4.}}
	\item For every maximally entangled state $\Phi(K)$, $E(\Phi(K)) = \log(K)$.
	\vspace{0.4 cm}
\end{list}
The idea behind normalization is to count the entanglement of maximally entangled states in terms of how many qubits can be teleported by LOCC using the state as a resource. With this convention, $\Phi(2)$ has entanglement equal to one and is hence referred to as an \textit{ebit} (entanglement-bit). 

When analyzing asymptotic entanglement transformations, we usually need to deal with the entanglement of $\rho^{\otimes n}$, for very large $n$. It is therefore a helpful property if the entanglement of several copies $E(\rho^{\otimes n})$ is the same as $n$ times the entanglement of one copy $E(\rho)$. 
\begin{list} {\textbf{5.}}
	\item $E$ should be weakly-additive. For every state $\rho$, $E(\rho \otimes \rho) = 2 E(\rho)$. 
	\vspace{0.4 cm}
\end{list}
Other properties that are sometimes considered and which are very useful in some applications are convexity\footnote{A measure is convex if $E\left( \sum_i p_i \rho_i \right) \leq \sum_i p_i E(\rho_i)$.}, full additivity\footnote{Full additivity means that for every two states $\rho$ and $\sigma$, $E(\rho \otimes \sigma) = E(\rho) + E(\sigma)$.}, strong full additivity\footnote{Strong full additivity means that for every state $\rho_{A_1A_2:B_1B_2}$, $E(\rho_{A_1A_2:B_1B_2}) = E(\rho_{A_1:B_1}) + E(\rho_{A_2:B_2})$.}, faithfulness\footnote{An entanglement measure is faithful if $E(\rho) > 0$ for every entangled state $\rho$.}, monogamy\footnote{The monogamy inequality reads $E(\rho_{AA'B}) \geq E(\rho_{AB}) + E(\rho_{A'B})$, where $AA'$ are Alice's systems and $B$ Bob's.}, and lockability\footnote{An entanglement measure is lockable if it can decrease by a very large amount when one qubit is traced out or dephased \cite{HHHO05}.}. 

It must be stressed that, apart from monotonicity under LOCC, all the properties outlined above are not mandatory for an entanglement measure to be meaningful. Indeed, in this thesis it will become clear that the robustness of entanglement is an example of this fact. 

The properties discussed in this section have been proposed and refined in a number of works, including most notably \cite{VPRK97, VP98, HHH00, Vid00, DHR02, Chr06, PV07, HHHH07}. 

\subsection{Entanglement Cost and Distillable Entanglement} \label{costdistillable}

The \textit{distillable entanglement} of a bipartite state $\rho_{AB}$ is defined as the optimal rate of maximally entangled states that can be distilled from $\rho_{AB}$ by LOCC in the asymptotic limit \cite{BBP+96}. It can be concisely expressed as 
\begin{equation} \label{distillableent}
E_{D}(\rho) := \sup_{\{ K_n \}} \left \{ \liminf_{n \rightarrow \infty} \frac{\log K_n}{n}  : \lim_{n \rightarrow \infty} \left( \inf_{\Lambda \in LOCC} || \Lambda(\rho^{\otimes n}) - \Phi( K_n )||_1 \right) = 0 \right \},
\end{equation}  
where the supremum if taken over all sequences of integers $\{ K_n \}$. It follows directly from its definition that $E_D$ satisfies properties 1, 2, 4 and 5. Furthermore the distillable entanglement is not faithful due to the existence of bound entanglement. All the other properties are not known to hold and some of them are in fact conjectured not to hold. For instance, we have already seen in section \ref{activation} that the existence of NPPT bound entanglement would imply the non-additivity of the distillable entanglement. Using a simple reasoning it was shown \cite{SST01} that the existence of NPPT bound entanglement would also imply that the distillable entanglement is non-convex. 

Despite its very strong operational meaning, the distillable entanglement has been of limited use in practical applications, owing to the lack of methods to calculate it, or even to derive good bounds. This drawback has its roots on our very limited knowledge of efficient distillation protocols involving two-way classical communication. 

One notable exception to this situation is for bipartite pure states. In this case, it was shown by Bennett, Bernstein, Popescu, and Schumacher \cite{BBPS96} that the distillable entanglement is given by the von Neumann entropy of either reduced density matrices, which has since been referred to as the \textit{entropy of entanglement}. In the distillation procedure Alice and Bob locally project their states onto their typical subspaces \cite{NC00}, which with high probability gives them a maximally entangled state of dimension roughly equal to the entropy of entanglement. Note that in this protocol there is no need of classical communication between the parties. 

Another setting where substantial progress has been achieved is in the analysis of the \textit{one-way distillable entanglement}, where only classical communication from Alice to Bob is allowed. Devetak and Winter \cite{DW04, DW05} have derived a closed formula for this quantity (which however still involves both maximization and regularization) and proven the \textit{hashing inequality} \cite{HHH00b}, stating that the coherent information\footnote{The coherent information $I_c(A \rangle B)$ of a bipartite state $\rho_{AB}$ is given by the negative conditional von Neumann entropy $-S(A|B) = S(\rho_B) - S(\rho_{AB})$ and plays a central role in the theory of quantum information transmission.} is an achievable rate in entanglement distillation with one-way classical communication. 

The \textit{entanglement cost} is defined from the reverse process of distillation. It gives the optimal rate of maximally entangled states which needs to be invested in order to form the state $\rho_{AB}$ in the asymptotic limit by LOCC. Analogously to Eq. (\ref{distillableent}), it can be expressed as
\begin{equation} \label{entcost}
E_{C}(\rho) := \inf_{ \{ K_n \} } \left \{ \limsup_{n \rightarrow \infty} \frac{\log K_n}{n} : \lim_{n \rightarrow \infty} \left( \inf_{\Lambda \in LOCC} || \rho^{\otimes n} - \Lambda(\Phi( K_n ))||_1 \right) = 0 \right \}.
\end{equation}
A lot more is known about the entanglement cost than the distillable entanglement. Regarding its properties, $E_C$ is known to satisfy 1, 2, 4, 5 and and to be a faithful, lockable \cite{HHHO05}, non-monogamic \cite{KW04}, and convex measure. It is believed that $E_C$ also satisfy 3 and is full additive. It has also been shown that full additivity and strong monotonicity in the sense of Eq. (\ref{strongmonotonicity}) are equivalent properties for the entanglement cost \cite{BHPV07}.

The entanglement cost has also been determined for pure states and shown to be equal to the entropy of entanglement \cite{BBPS96}. An easy, but wasteful\footnote{In what concerns the classical communication cost of the protocol.}, protocol is the following: Alice locally prepares a compressed version of the state (by applying Schumacher's compression \cite{Schu95} to $\ket{\psi}_{AB}^{\otimes n}$) and teleport the $B$ part of it to Bob. This consumes an amount of classical communication from Alice to Bob proportional to the ebit cost of the protocol. Using a cleverer scheme, it has been shown that a sublinear amount of classical communication, scaling as $\sqrt{n}$ with the number of copies of the state, is both sufficient \cite{LP99} and necessary \cite{HW03, HL04}. 

A remarkable feature of pure state entanglement manipulation is the equality of distillable entanglement and entanglement cost. This shows that pure state bipartite entanglement is a fungible resource: any two entangled states can be reversibly interconverted, as long as the ratio of copies of each one of them matches the ratio of the respective entropies of entanglement. This setting also has analogies with thermodynamics and the second law, which we explore in chapter \ref{reversible}. 

The situation is very different for mixed states. While the entanglement cost is always non-zero for every entangled state \cite{YHHS05}, the distillable entanglement is zero for bound entangled states. There exists therefore an inherent irreversibility in the manipulation of entanglement under LOCC. As anticipated in section \ref{boundentanglement}, this observation will be the main motivation for the new paradigm in entanglement theory that will be explored in chapter \ref{reversible}. 

The entanglement cost is intimately related to another entanglement measure known as the \textit{entanglement of formation} \cite{BDS+96}, defined as the \textit{convex-roof} of the entropy of entanglement, i.e.
\begin{equation}
E_F(\rho) := \min_{\{ p_i, \psi_i\}} \sum_i p_i E(\psi_i), 
\end{equation}
where the minimization is taken over all convex decompositions of $\rho$ into pure states $\rho = \sum_i p_i \ket{\psi_i}\bra{\psi_i}$ and $E(\psi)$ is the entropy of entanglement of $\psi$. It has been proven by Hayden, Horodecki, and Terhal that \cite{HHT01}
\begin{equation*}
E_C(\rho) = \lim_{n \rightarrow \infty} \frac{E_F(\rho^{\otimes n})}{n}.
\end{equation*}
Probably the most important open question in quantum information theory asks if the limit in the equation above is really necessary. It has been conjectured that in fact $E_F(\rho^{\otimes n}) = n E_F(\rho)$ and, hence, $E_F(\rho) = E_C(\rho)$. Based on previous works \cite{MTW04, AB04}, this conjectured additivity of the entanglement of formation has been shown by Shor \cite{Sho04} to be equivalent to other important open problems in quantum information theory, including the additivity of the Holevo quantity \cite{Hol73}, whose regularization gives the capacity of classical information transmission of a quantum channel \cite{Hol98, SW97}\footnote{In Ref. \cite{Sho04} it is actually shown that the full additivity of the entanglement of formation is equivalent to the full additivity of the Holevo capacity (and of the minimum output von Neumann entropy). In Refs. \cite{BHPV07, FW07}, in turn, the weak additivity of the entanglement of formation was shown to be equivalent to its full additivity.}.

For a more detailed discussion on the entanglement of formation, entanglement cost, and distillable entanglement the reader is referred to Refs. \cite{HHHH07, PV07}.

\subsection{Relative Entropy of Entanglement} \label{relentint}

In this section we discuss another measure that has found many applications in entanglement theory. The \textit{relative entropy of entanglement}, introduced by Vedral, Plenio, Rippin, and Knight \cite{VPRK97}, is defined as
\begin{equation} \label{relent}
E_R(\rho) = \min_{\sigma \in {\cal S}} S(\rho || \sigma),
\end{equation}
where $S(\rho || \sigma) := \tr(\rho \log(\rho)) - \tr(\rho \log(\sigma))$ is the quantum relative entropy, or quantum Kullback-Leibler divergence \cite{OP93}. This measure was first proposed as an example of a distance-based entanglement measure in \cite{VPRK97} and then further explored in Ref. \cite{VP98}. It satisfies properties 1, 2, 3, 4 and is convex, non-lockable \cite{HHHO05}\footnote{The relative entropies of entanglement and its regularization are actually the only known non-lockable entanglement measures.}, faithful\footnote{This follows easily form the fact that $S(\rho || \sigma) = 0$ if and only if $\rho = \sigma$. The relative entropy of entanglement satisfies an even stronger form of faithfulness. It is direct consequence of Pinsker's inequality \cite{Pet86} that if $E_R(\rho) \leq \epsilon$, then $\rho$ is $\sqrt{2\epsilon}$ close from a separable state in trace norm.} and non-monogamic\footnote{An example is the anti-symmetric state of three qubits \cite{Ple08}.}. For pure states it is equal to the entropy of entanglement. In Ref. \cite{VW01} it has been proven that the relative entropy of entanglement is not weakly-additive, the antisymmetric Werner state of dimension three being a counterexample. Owing to this fact, in many cases the most interesting quantity to consider is actually the \textit{regularized relative entropy of entanglement}, defined as
\begin{equation}
E_{R}^{\infty}(\rho) = \lim_{n \rightarrow \infty} \frac{E_R(\rho^{\otimes n})}{n}.
\end{equation}  
It is known that $E_R^{\infty}$ satisfy properties 1, 2, 3, 4 and 5 and that it is convex and non-lockable\footnote{All these properties follow easily from its definition and the properties of $E_R$, with the exception of asymptotic continuity, which has been proven for $E_R^{\infty}$ in Ref. \cite{Chr06}}. It is an open problem if this measure satisfies the monogamy inequality. It has been shown in Ref. \cite{BHPV07} that full additivity and strong monotonicity are also equivalent properties for the regularized relative entropy of entanglement, although both still remain unproven. This measure is an upper bound to the distillable secret key rate (and hence also to the distillable entanglement) \cite{HHHO05c} (see section \ref{othermasures}) and an upper bound to the entanglement cost. In chapter \ref{QHT} we prove that $E_{R}^{\infty}$ is faithful. 

It is clear that we can form different measures by changing the set over which the minimization is performed in Equation (\ref{relent}). A popular choice has been the set PPT states. In this case, the regularization of the obtained measured is a sharper bound on the distillable entanglement than $E_R^{\infty}$ and can be calculated for some families of states \cite{AEJ+01, AMVW02}. 


The relative entropy has a central role in quantum hypothesis testing (see e.g. \cite{ANSV07}). As explained in chapter \ref{QHT}, it gives the optimal rate of the exponential decay of the probability of error in quantum Stein's Lemma \cite{HP91, ON00}. Based on this operational meaning of the relative entropy, we can think that the relative entropy of entanglement could have an analogous interpretation as the optimal decay rate when one tries to discriminate a given entangled state $\rho$ from an unknown separable state $\sigma$. If we are given $\rho$ and a particular separable state $\sigma$, then it follows directly from quantum Stein's Lemma that we can indeed discriminate them with a rate at least as large as the relative entropy of entanglement. However, if we do not know which separable state we have, then it is not clear how to construct a POVM sequence achieving the relative entropy of entanglement. 

In chapter \ref{QHT} we will show that $E_R^{\infty}$ is always an achievable rate in this task. Moreover, it will be proven that the regularized relative entropy of entanglement is actually the optimal rate when the unknown sequence of separable states can also contain correlations between different copies. This is a new result of independent importance, as it provides for the first time an operational interpretation for this quantity. 

The main operational interpretation for $E_R^{\infty}$, however, will come from the new paradigm for entanglement that will be explored in chapter \ref{reversible}.  There it will be shown that $E_R^{\infty}$ is both the cost and distillation functions under asymptotically non-entangling operations, which makes it the \textit{unique} entanglement measure in such a setting. 

\subsection{Robustness of Entanglement} \label{robustnesses}

The robustness of entanglement, introduced in Ref. \cite{VT99} by Vidal and Tarrach, is given by
\begin{equation}
R(\rho) = \min_{\sigma \in {\cal S}, s \in \mathbb{R}} s : \frac{\rho + s \sigma}{1 + s} \in {\cal S}.
\end{equation}
It is thus defined as the minimal amount of separable noise that must be mixed with $\rho$ in order to turn it into a separable state. One can also define an analogous quantity, sometimes called the generalized robustness of entanglement, given by \cite{HN03} 
\begin{equation}
R_G(\rho) = \min_{\sigma \in {\cal D}, s \in \mathbb{R}} s : \frac{\rho + s \sigma}{1 + s} \in {\cal S}.
\end{equation}
In the definition above, instead of mixing the state $\rho$ with separable states, any state can be used. Both measures have been shown to satisfy properties 1 and 2 and to be convex and faithful \cite{VT99}. For a maximally entangled state of dimension $K$ it holds that $R(\Phi(K)) = R_G(\Phi(K)) = K -1$; therefore they do not satisfy property 4. This fact motivates the definition of the their logarithm versions\footnote{In analogy to the definition of the logarithm negativity \cite{VW02b} (see section \ref{othermasures} for the definition of the negativity.).} \cite{Bra05}
\begin{equation}
LR(\rho) = \log(1 + R(\rho)), \hspace{0.5 cm} LR_G(\rho) = \log(1 + R_G(\rho)).
\end{equation}
Both measures are full entanglement monotones\footnote{The proof is a simple combination of the full monotonicity of $R$ and $R_G$ with the concavity of the $\log$, following the main idea of the proof that the logarithm negativity is also a full LOCC monotone \cite{Ple05}}, but are not convex. It can moreover be shown that they fail to be asymptotically continuous\footnote{This can be seen by noticing that the regularization of both $LR$ and $LR_G$ are not equal to the von Neumann entropy for bipartite pure states (both measures are additive on bipartite pure states). As they are LOCC monotones and normalized they cannot be asymptotically continuous, as the regularization of any LOCC monotonic, normalized, and asymptotically continuous measure reduces to the entropy of entanglement on pure states \cite{HHH00}, and this is known not to be the case for the lo robustnesses \cite{HN03}.}. A simple application of the operator monotonicity of the $\log$ function shows that $LR_G \geq E_R$ \cite{HMM06}, from which follows that $LR$ and $LR_G$ are upper bounds to the distillable secret key rate and distillable entanglement. 

The robustness measures are intimately connected to entanglement witnesses. As shown in Ref. \cite{Bra05}, for every entangled state $\rho$ the global robustness can be expressed as follows\footnote{A similar expression can be obtained for the robustness of entanglement \cite{Bra05}.}
\begin{equation*}
R_G(\rho) = - \min_{W \in {\cal W}, W \leq \id} \tr(W \rho).
\end{equation*}
The equation above shows that the global robustness is nothing but the expectation value of the optimal entanglement witness for $\rho$, when we restrict ourselves to witnesses with eigenvalues smaller than one. This variational formula for $R_G$ in terms of entanglement witnesses can be extended to several other quantities \cite{Bra05} and was further analysed in Refs. \cite{CT06, EBA07}. Using this connection to entanglement witnesses, the robustness of entanglement has been linked to Masanes result on the activation of every entangled state explained in section \ref{activation} \cite{Bra07}. 

The relation of the robustnesses of entanglement to entanglement witnesses will be used twice in this thesis. In chapter \ref{QHT} we show that the generalized robustness is intrinsically connected to quantum hypothesis testing and to the regularized relative entropy of entanglement. In chapter \ref{reversible}, in turn, we show that $LR$ is the single-copy entanglement cost under non-entangling operations and that the regularization of $LR_G$ is the exact preparation cost under asymptotically non-entangling maps. Moreover, the connection of $LR_G$ and $E_R^{\infty}$ which is established in chapter \ref{QHT} will be crucial in the proof of the main result of chapter \ref{reversible}, which relates both the cost and the distillation functions under asymptotically non-entangling operations to the regularized relative entropy of entanglement. 

\subsection{Other Measures} \label{othermasures}

Before we end our brief exposion on entanglement measures, it should be pointed out that many interesting measures have not being included here. In particular, there are four measures that merit mention. The first is the negativity (together with its associated log version), defined as the sum of the negative eigenvalues of the partial transpose of $\rho$ \cite{ZHSL98, VW02b, Ple05}. Its importance stems from the fact that it is easily computable. Moreover, for states with a positive bi-negativity\footnote{The bi-negativity of a state is defined as $|\rho^{\Gamma}|^{\Gamma}$ \cite{APE03}.}, it has been shown that the log-negativity is equal to the exact-preparation entanglement cost under PPT operations \cite{APE03}\footnote{There are operations which map any PPT state into another PPT state, including the use of ancilla particles \cite{Rai01}.}.

The second is the squashed entanglement, introduced by Christandl and Winter in Ref. \cite{CW04} and given by
\begin{equation} \label{squasdhedent}
E_{sq}(\rho_{AB}) = \inf \{ \frac{1}{2} I(A:B|E)_{\rho_{ABE}} : \tr_{E}(\rho_{ABE}) = \rho_{AB} \},
\end{equation}
where the infimum is taken over all extensions of $\rho_{AB}$ and $I(A:B|E)_{\rho_{ABE}}$ is the conditional mutual information of $\rho_{ABE}$, given by $(A:B|E) = S(AE) + S(BE) - S(ABE) - S(E)$. The squashed entanglement was first introduced as an example of an entanglement measure for which full additivity can be easily established and is presently the measure that satisfies the largest number of desirable properties. Indeed, from the properties outlined in section \ref{entmeasures}, only faithfulness is not know to hold (while the measure is known to be lockable \cite{CW05}). Recently Oppenheim found a beautiful operational interpretation for the squashed entanglement as the fastest rate at which a quantum state can be sent between two parties who share arbitrary side-information \cite{Opp08}. In section \ref{QMA2} of chapter \ref{complexity} we revisit the squashed entanglement and solve an open question concerning its properties, raised in Refs. \cite{Tuc02, YHW07, HHHH07}: we show that the extension $E$ cannot always be taken to be classical.  
 
The third measure is the distillable secret key. In analogy to the distillable entanglement, it is defined as the maximum number of \textit{secret bits} per copy that can be extracted in the asymptotic limit when the parties share several copies of a trusted bipartite state $\rho_{AB}$ and can communicate classical information over a public but authenticated channel. In Ref. \cite{HHHO05b} a remarkable fact about this measure was found: some bound entangled states have a positive distillable secret key. This shows that distillable privacy and distillable entanglement are not analogous resources and leads to several interesting open problems about this measure. Most notably, it is not known whether secret key can be extracted from every entangled state. This question is intimately related to the open problem concerning the existence of bound information in classical information theory \cite{GW00}. 

The final quantity is not an entanglement measure, as it is not monotonic under LOCC, but has shown to be an important tool in entanglement theory and will be used in chapter \ref{complexity}. To define it, we need to introduce the Henderson-Vedral measure of total correlations contained in a bipartite quantum state \cite{HV01, DW03}, 
\begin{equation}
C^{\leftarrow}(\rho_{AB}) = \max_{\{ M_k\}}  S(\rho_A) - \sum_k p_k S(\rho_k),
\end{equation}
where $\{ M_k \}$ is a POVM on Bob's Hilbert space, $p_k := \tr(\id \otimes M_k \rho_{AB})$, and $\rho_k := \tr_{B}( \id\otimes M_k \rho_{AB})/p_k$. Intuitively, the measure gives the maximum Holevo's accessible information \cite{Hol73} of an ensemble held by Alice that Bob can induce by measuring his part of the state. We can also define a quantity $C^{\rightarrow}(\rho_{AB})$ in a similar manner with the roles of Alice and Bob interchanged. From this measure of correlations we can define an entanglement measure under one-way LOCC as the convex roof of $C^{\leftarrow}(\rho_{AB})$ \cite{YHHS05, Yan06},
\begin{equation} \label{GHV}
G^{\leftarrow}(\rho_{AB}) = \min_{\{ p_i, \rho_i \}} \sum_i p_iC^{\leftarrow}(\rho_i),
\end{equation}
where the minimum is taken over mixed ensembles of $\rho$. The first motivation for defining this quantity is its relation to the entanglement cost. From a remarkable equality relating the entanglement of formation, the Henderson-Vedral measure, and the entropy \cite{KW04}, it can be shown that the entanglement cost is lower bounded by $G^{\leftarrow}$ \cite{YHHS05}. Since the latter can be shown to be is strictly positive for every entangled state, it follows, as mentioned in section \ref{costdistillable}, that every entangled state has a non-zero entanglement cost \cite{YHHS05}. A second application of the measure is in the study of the non-shareability of quantum correlations. Using again the equality proved in Ref. \cite{KW04}, Yang derived the following monogamy inequality \cite{Yan06}
\begin{equation}  \label{Yangmonog}
E_{F}(\rho_{A:B_1,...,B_N}) \geq \sum_{k=1}^N G^{\leftarrow}(\rho_{A:B_k}).
\end{equation}
This inequality shows, in particular, that only separable states $\rho_{AB}$ can, for every $n$, be extended to a state $\rho_{ABB_2...B_n}$ symmetric under the exchange of the $B$ systems. We use inequality \ref{Yangmonog} in chapter \ref{complexity} in our study of proof systems with unentangled proofs.

\section{Quantum de Finetti Theorems} \label{definittisection}

In this section we review recent developments on quantum versions \cite{HM76, KR05, Ren05, CKMR07, Ren07} of the seminal result by Bruno de Finetti on the characterization of exchangeable probability distributions \cite{dFin37}. Although these results are not directly connected to entanglement theory, they will turn out to be indispensable tools in the next three chapters in the study of entanglement in the non-i.i.d. (independent and identically- distributed) regime (for further results on entanglement theory in the non-i.i.d. regime the reader is referred to \cite{BD06, BD07, Mat07} and references therein). Such a regime might show up as a desire of a general theory of entanglement, as in the analysis of entanglement distillation of correlated states in chapter \ref{NPPT}, but also even when one considers the i.i.d. case: In chapter \ref{reversible}, for example, the necessity of dealing with non-i.i.d. states will come from the non-additivity of the relative entropy of entanglement. More concretely, in our study of asymptotic entanglement manipulation under non-entangling operations, we will need to know optimal distinguishability rates of an i.i.d. entangled state from arbitrary, possibly correlated, separable states. The quantum de Finetti theorems, in particular the exponential variant proved by Renner \cite{Ren05, Ren07}, will be crucial in reducing such a problem to an almost i.i.d. setting, where it can be tackled with standard techniques.

Consider a symmetric probability distribution of $k$ variables $Q_k$ that can be extended to a $n$-partite symmetric probability distribution for every $n \geq k$\footnote{A probability distribution that can be extended in such a way is called infinitely exchangeable.}. Then the de Finetti theorem \cite{dFin37} says that there is a measure $\mu$ on the set of probability distributions over one variable such that
\begin{equation*}
Q_k = \int \mu( dq_x) (q_x)^k.
\end{equation*}
In words this theorem tells us that infinitely exchangeable probability distributions are exactly those which can be written as a convex combination of product distributions.  

A quantum generalization of this result was first obtained in Ref. \cite{HM76}. To explain it we define the permutation-symmetric states. These are states acting on ${\cal H}^{\otimes k}$ which are left unchanged under the permutation of the $k$ copies. Employing the standard representation of the symmetric group $S_k$ over ${\cal H}^{\otimes k}$, we say that $\rho$ is permutation-invariant (or permutation-symmetric) if 
\begin{equation*}
	\rho = P_\pi \rho  P_\pi,
\end{equation*}
where $P_\pi$ is the representation in ${\cal H}^{\otimes k}$ of an arbitrary element $\pi$ of $S_k$. The set of permutation-symmetric states is denoted by ${\cal S}_k({\cal H}^{\otimes k})$.

We say that a family of density operators $\{ \omega_k \}_{k \in \mathbb{N}}$, with $\omega_k \in {\cal D}({\cal H}^{\otimes k})$, is infinitely exchangeable if $\tr_{k}(\omega_n) = \omega_{k-1}$\footnote{$\tr_k$ denotes the partial trace of the $k$th Hilbert space.} and $\omega_k \in {\cal S}_k({\cal H}^{\otimes k})$. Hudson and Moody \cite{HM76} proved that a sequence of states is infinitely exchangeable if, and only if, there is a measure $\mu$ over ${\cal D}({\cal H})$ such that for every $k$,
\begin{equation*}
\omega_k = \int \mu( d\rho) \rho^{\otimes k}.
\end{equation*}
This result has found applications in quantum statistical mechanics \cite{FV06} and quantum information \cite{Wer90, DPS02, DPS04, BA06}. In particular, Werner \cite{Wer90} used it to present an early proof that only separable bipartite states $\rho_{AB}$ can be extended to a state $\rho_{ABB_2...B_n}$ symmetric under the exchange of the $B$ systems for every $n$. This shows that quantum correlations are not shareable, i.e. entanglement is monogamic\footnote{As we saw in section \ref{othermasures}, Yang's monogamy inequality gives a different and somewhat simpler proof of this fact.}. 

The assumption of infinitely exchangeability, although weaker than the i.i.d. assumption, might still be too strong for some applications and in particular cannot be a priori checked in experiments. This lead us to the more realistic case of finitely exchangeable states. We say that a state $\rho \in {\cal D}({\cal H}^{\otimes k})$ is $n$-exchangeable if there is a symmetric state $\omega_{n} \in {\cal D}({\cal H}^{\otimes k})$ such that $\tr_{\backslash 1,...,k}(\omega_k) = \rho$\footnote{$\tr_{\backslash 1,...,k}$ stands for the partial trace over the $n - k$ last Hilbert spaces.}. From the result of Moody and Hudson we might intuitively expect that if $n \gg k$, then $\rho$ should be close to a convex combination of i.i.d. sates. This is indeed the case and is the content of the finite de Finetti theorem, proved in the classical setting by Diaconis and Freedman \cite{DF80} and in the quantum case first by K\"onig and Renner \cite{KR05} and then by Christandl, K\"onig, Mitchison, and Renner with an improved error bound \cite{CKMR07}. 

\begin{theorem} \label{definetti1}
\cite{CKMR07} Given a permutation invariant state $\rho_n$ over ${\cal H}^{\otimes n}$ and an integer $k \leq n$, there is a probability measure $\mu$ over ${\cal D}({\cal H})$ such that
\begin{equation} 
\left \Vert \tr_{\backslash 1,...,k}(\rho_n) - \int \mu( d\rho) \rho^{\otimes k} \right \Vert_1 \leq 2 \frac{\dim({\cal H})^2k}{n}.
\end{equation}
\end{theorem}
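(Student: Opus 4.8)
The plan is to reduce the statement for a general permutation-invariant mixed state to the special case of a \emph{pure} state lying in the symmetric subspace of an enlarged Hilbert space, and there to exploit the coherent-state resolution of the symmetric-subspace projector.

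First I would purify symmetrically. Given the permutation-invariant $\rho_n$ on $\mathcal{H}^{\otimes n}$, I claim there is a purification $\ket{\Psi}$ lying in the symmetric subspace $\mathrm{Sym}^n(\mathcal{K})$ of $\mathcal{K}^{\otimes n}$, where $\mathcal{K}=\mathcal{H}\otimes\mathcal{H}_R$ with $\mathcal{H}_R\cong\mathcal{H}$ a reference; note $\dim\mathcal{K}=d^2$ with $d=\dim(\mathcal{H})$, which is the source of the $\dim(\mathcal{H})^2$ in the bound. This is a short lemma: taking $\ket{\Omega}=\sum_j\ket{j}_A\ket{j}_R$ relative to a product basis, the vector $\ket{\Psi}=(\rho_n^{1/2}\otimes\id_R)\ket{\Omega}$ satisfies $\tr_R\proj{\Psi}=\rho_n$, and since $\rho_n^{1/2}$ inherits permutation invariance from $\rho_n$, one checks $\ket{\Psi}$ is invariant under the diagonal permutation action on $\mathcal{K}^{\otimes n}$, hence lies in $\mathrm{Sym}^n(\mathcal{K})$.

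Then I would prove a de Finetti estimate for the pure symmetric state $\ket{\Psi}$. The key tool is the identity
\[
\int(\proj{\phi})^{\otimes n}\,d\phi=\binom{n+D-1}{n}^{-1}P^{(n)}_{\mathrm{sym}}, \qquad D=\dim\mathcal{K},
\]
where $d\phi$ is the unitarily invariant measure on unit vectors of $\mathcal{K}$ and $P^{(n)}_{\mathrm{sym}}$ projects onto $\mathrm{Sym}^n(\mathcal{K})$. Because $\ket{\Psi}$ is supported on the symmetric subspace, the assignment $\nu(d\phi)=\binom{n+D-1}{n}\,|\langle\phi^{\otimes n}|\Psi\rangle|^2\,d\phi$ is a bona fide probability measure on pure states of $\mathcal{K}$. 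I would then compare the $k$-body reduced state $\tr_{\backslash 1,\dots,k}\proj{\Psi}$ with the ``measure-and-prepare'' state $\int\nu(d\phi)\,(\proj{\phi})^{\otimes k}$. Both arise from $\proj{\Psi}$ by acting only on the last $n-k$ subsystems --- by discarding them, respectively by performing the coherent-state measurement and re-preparing --- so their difference is controlled by how closely that measurement on $n-k$ systems approximates doing nothing. A direct computation with the resolution of identity above reduces this to a ratio of dimensions of symmetric subspaces of $\mathcal{K}^{\otimes n}$ and $\mathcal{K}^{\otimes(n-k)}$, giving a trace-distance bound one estimates by $2D k/n$.

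Finally I would descend to mixed states. Tracing out the $k$ reference factors $\mathcal{H}_R$ turns each $(\proj{\phi})^{\otimes k}$ on $\mathcal{K}^{\otimes k}$ into $\rho^{\otimes k}$ with $\rho=\tr_{\mathcal{H}_R}\proj{\phi}\in\mathcal{D}(\mathcal{H})$, and pushes $\nu$ forward to the desired measure $\mu$ on $\mathcal{D}(\mathcal{H})$; since this partial trace is a CPTP map and $\tr_{\backslash 1,\dots,k}\proj{\Psi}$ thereby reduces to $\tr_{\backslash 1,\dots,k}\rho_n$, monotonicity of the trace norm under CPTP maps preserves the estimate, now with $D=d^2$, yielding the claimed $2\dim(\mathcal{H})^2 k/n$. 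I expect the main obstacle to be the core combinatorial estimate of the second step: extracting exactly the $\dim(\mathcal{H})^2 k/n$ scaling from the comparison of the reduced state with the measure-and-prepare state, uniformly in $\ket{\Psi}$. The purification lemma and the closing partial-trace argument are routine by comparison.
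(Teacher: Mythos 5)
Your proposal is correct, and it is essentially the proof of Christandl, K\"onig, Mitchison and Renner \cite{CKMR07}, which the thesis cites for Theorem \ref{definetti1} rather than reproving: symmetric purification of $\rho_n$ into $\mathrm{Sym}^n(\mathcal{H}\otimes\mathcal{H}_R)$ (cf.\ Lemma 4.2.2 of \cite{Ren05}), the coherent-state resolution $\int (\proj{\phi})^{\otimes n}\,d\phi = \binom{n+D-1}{n}^{-1}P^{(n)}_{\mathrm{sym}}$ to define the measure and compare it with the $k$-body reduction, and the push-forward of the measure under $\tr_{\mathcal{H}_R}$ combined with contractivity of the trace norm. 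The combinatorial step you flag closes exactly as you anticipate: the comparison yields a trace-distance bound of $2\bigl(1-\dim\mathrm{Sym}^{n-k}(\mathcal{K})/\dim\mathrm{Sym}^{n}(\mathcal{K})\bigr)$, and since $\dim\mathrm{Sym}^{n-k}(\mathcal{K})/\dim\mathrm{Sym}^{n}(\mathcal{K}) = \prod_{j=1}^{D-1}\tfrac{n-k+j}{n+j} \geq \bigl(\tfrac{n-k}{n}\bigr)^{D-1} \geq 1-(D-1)\tfrac{k}{n}$, one obtains the claimed $2\dim(\mathcal{H})^{2}k/n$ with $D=\dim(\mathcal{H})^{2}$.
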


Examples with error of $k/n$ show that the error bound of the equation above cannot be improved \cite{CKMR07}. This is however not good enough for some application, in particular when one is interested in rates. Indeed, if we require that the error in the approximation goes to zero asymptotically with $n$, then we must take $k$ to be sublinear in $n$. This is major drawback e.g. in quantum key distribution, where the finite quantum de Finetti theorem can be used to generalize security proofs from \textit{collective attacks}\footnote{In a collective attack Eve interacts in the same manner with each of the particles sent from Alice and Bob and in the end performs a collective measurement on her quantum systems.} to the most general form of attacks possible, \textit{coherent attacks}\footnote{In a coherent attack Eve might realize any operation allowed by quantum mechanics. For further discussion the reader is invited to Ref. \cite{Ren05}.}. Unfortunately, an application of Theorem \ref{definetti1} to this problem only shows that secret bits can be obtained with a zero rate. 

\subsection{Exponential Quantum de Finetti Theorem} \label{expdf}

A solution to this difficulty is given by the exponential de Finetti theorem proved by Renner \cite{Ren05, Ren07}. The key point is that we can get a much better error bound, exponentially decreasing in $n$, if we are willing to relax the states containing in the convex combination from i.i.d. to states which behave like i.i.d. in what regards their \textit{bulk properties}. As this theorem will be important for establishing the main result of chapter \ref{QHT}, in the rest of this section we recall the necessary definitions to state it precisely. A much more complete analysis of this important result, including its proof, can be found in Refs. \cite{Ren05, Ren07, KM07}. Some of the notation that we use is taken from \cite{HHH+08a, HHH+08b}.  

Let $\text{Sym}({\cal H}^{\otimes n})$ denote the symmetric subspace of ${\cal H}^{\otimes n}$ and $0 \leq r \leq n$. Then, for a $\ket{\theta} \in {\cal H}$, we define the set of $\binom{n}{r}$-i.i.d states in $\ket{\theta}$ as 
\begin{equation*}
{\cal V}({\cal H}^{\otimes n}, \ket{\theta}^{\otimes n - r}) := \{ P_{\pi}( \ket{\phi}^{\otimes n - r}\otimes \ket{\psi_r}) : \pi \in S_n, \ket{\psi_r} \in {\cal H}^{\otimes r}   \}.
\end{equation*}
Thus for every state in ${\cal V}({\cal H}^{\otimes n}, \ket{\theta}^{\otimes n - r})$ we have the state $\ket{\theta}$ in at least $n - r$ of the copies. The set of almost power states in $\ket{\theta}$ is defined as \cite{HHH+08a, HHH+08b}
\begin{equation}
\ket{\theta}^{[\otimes, n, r]} := \text{Sym}({\cal H}^{\otimes n}) \cap \text{span} ({\cal V}({\cal H}^{\otimes n}, \ket{\theta}^{\otimes n - r})). 
\end{equation}

\begin{theorem} \label{expdefinetti}
\cite{Ren05,Ren07, KM07} For any state $\ket{\psi_{n + k}} \in \text{Sym}({\cal H}^{\otimes n + k})$ there exists a measure $\mu$ over ${\cal H}$ and for each pure state $\ket{\theta} \in {\cal H}$ another pure state $\ket{\psi^{\theta}_n} \in \ket{\theta}^{[\otimes, n, r]}$ such that
\begin{equation}
\left \Vert \tr_{1, ..., k}(\ket{\psi_{n}}\bra{\psi_{n}}) - \int \mu(d\ket{\theta}) \ket{\psi^{\theta}_{n}}\bra{\psi^{\theta}_{n}} \right \Vert_1 \leq n^{\dim({\cal H})} 2^{- \frac{k(r + 1)}{n + k}}.
\end{equation}
\end{theorem}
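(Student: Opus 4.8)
The plan is to realise the partial trace over the first $k$ systems as a coherent-state measurement whose outcome is then discarded, so that the de Finetti measure and the conditional states appear for free. Recall the resolution of the identity on the symmetric subspace, $\mathbb{I}_{\text{Sym}({\cal H}^{\otimes k})} = g_k \int d\theta\, \proj{\theta}^{\otimes k}$, where $d\theta$ is the unitarily invariant probability measure on pure states $\ket\theta \in {\cal H}$ and $g_k = \dim\,\text{Sym}({\cal H}^{\otimes k}) = \binom{k + \dim{\cal H} - 1}{\dim{\cal H} - 1}$. Since $\ket{\psi_{n+k}}$ is permutation invariant, its reduction to the first $k$ factors is supported on $\text{Sym}({\cal H}^{\otimes k})$, so $\{ g_k \proj{\theta}^{\otimes k}\, d\theta \}$ is a legitimate POVM there; writing $\ket{v_\theta} := (\bra{\theta}^{\otimes k} \otimes \mathbb{I}_n)\ket{\psi_{n+k}}$ and $q(\theta) := \langle v_\theta | v_\theta\rangle$, one obtains the exact identity
\begin{equation*}
\tr_{1,\dots,k}\big(\proj{\psi_{n+k}}\big) = \int d\theta\, g_k\, q(\theta)\, \proj{\tilde\psi^\theta_n}, \qquad \ket{\tilde\psi^\theta_n} := \ket{v_\theta}/\sqrt{q(\theta)}.
\end{equation*}
The measure $\mu(d\ket\theta) := g_k\, q(\theta)\, d\theta$ is a probability measure (its total mass is the trace of the left-hand side), and every conditional state $\ket{\tilde\psi^\theta_n}$ is pure and again permutation symmetric. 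This already fixes the candidate $\mu$.

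The states $\ket{\tilde\psi^\theta_n}$ are symmetric but need not lie in the almost-power space $\ket\theta^{[\otimes, n, r]}$, so I would set $\ket{\psi^\theta_n} := \Pi^{(r)}_\theta \ket{\tilde\psi^\theta_n} / \| \Pi^{(r)}_\theta \ket{\tilde\psi^\theta_n} \|$, where $\Pi^{(r)}_\theta$ is the orthogonal projector onto $\ket\theta^{[\otimes,n,r]}$. By the Fuchs--van de Graaf bound for pure states, $\| \proj{\tilde\psi^\theta_n} - \proj{\psi^\theta_n} \|_1 \le 2\| (\mathbb{I} - \Pi^{(r)}_\theta)\ket{\tilde\psi^\theta_n}\|$, so by convexity of the trace norm and Jensen's inequality the quantity to control is
\begin{equation*}
\left\| \tr_{1,\dots,k}(\proj{\psi_{n+k}}) - \int \mu(d\ket\theta)\,\proj{\psi^\theta_n} \right\|_1 \le 2\sqrt{\int \mu(d\ket\theta)\, \langle \tilde\psi^\theta_n | (\mathbb{I} - \Pi^{(r)}_\theta) | \tilde\psi^\theta_n\rangle}.
\end{equation*}
Using $q(\theta)\langle\tilde\psi^\theta_n|(\mathbb{I}-\Pi^{(r)}_\theta)|\tilde\psi^\theta_n\rangle = \langle\psi_{n+k}| \proj{\theta}^{\otimes k} \otimes (\mathbb{I} - \Pi^{(r)}_\theta) |\psi_{n+k}\rangle$, the averaged leakage becomes $\langle\psi_{n+k}| M |\psi_{n+k}\rangle$ with $M := g_k \int d\theta\, \proj{\theta}^{\otimes k} \otimes (\mathbb{I} - \Pi^{(r)}_\theta)$, an operator independent of the (arbitrary symmetric) input. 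It therefore suffices to bound the norm of $M$ restricted to $\text{Sym}({\cal H}^{\otimes (n+k)})$.

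The heart of the argument -- and the step I expect to be hardest -- is this operator-norm bound, which has to be obtained uniformly over all $\ket\theta$. I would fix $\ket\theta$, extend it to an orthonormal basis, and decompose both blocks into \emph{occupation-number} sectors counting the factors orthogonal to $\ket\theta$. In this picture $\proj{\theta}^{\otimes k}$ is the projector onto zero excitations among the first $k$ systems, while $\mathbb{I} - \Pi^{(r)}_\theta$ projects onto strictly more than $r$ excitations among the last $n$; membership in the total symmetric subspace forces the excitations to be spread uniformly over all $n+k$ sites. Consequently the relevant amplitude -- having $j > r$ excitations, all of them avoiding the first $k$ sites -- is governed by a hypergeometric factor $\binom{n}{j}/\binom{n+k}{j} \le (1 - k/(n+k))^{j}$, whose dominant term at $j = r+1$ produces the exponential suppression $2^{-k(r+1)/(n+k)}$, while the polynomial prefactor $n^{\dim{\cal H}}$ comes from the number of occupation sectors (at most $(n+1)^{\dim{\cal H}}$). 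Carrying out this tail estimate rigorously -- controlling all sectors simultaneously, keeping the dependence on $\ket\theta$ uniform, and recovering the stated exponent rather than the halved version introduced by the Fuchs--van de Graaf square root -- is the genuinely delicate part, and it is precisely where the sharper analysis of almost power states, as opposed to the naive i.i.d.\ decomposition underlying Theorem~\ref{definetti1}, is what upgrades the polynomial error into an exponential one.
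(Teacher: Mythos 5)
You should first note that the thesis itself contains no proof of this theorem: it is imported verbatim from \cite{Ren05, Ren07, KM07}, so the comparison is against the literature proofs. Your skeleton is exactly the K\"onig--Mitchison route, and its first half is correct and complete: the coherent-state resolution of identity on $\text{Sym}({\cal H}^{\otimes k})$ does give the exact decomposition $\tr_{1,\dots,k}(\proj{\psi_{n+k}}) = \int d\theta\, g_k\, q(\theta)\, \proj{\tilde\psi^\theta_n}$ with $\mu(d\ket{\theta}) = g_k q(\theta) d\theta$ a probability measure and each conditional state pure and symmetric (you also silently fix the typo in the statement, whose left-hand side should read $\tr_{1,\dots,k}(\proj{\psi_{n+k}})$). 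Moreover, the step you single out as hardest --- uniformity of the tail estimate over $\ket{\theta}$ --- is in fact the easy part and needs no operator-norm or covariance argument: expanding $\ket{\psi_{n+k}}$ in the occupation-number basis adapted to $\ket{\theta}$, distinct occupation vectors remain orthogonal after applying $\bra{\theta}^{\otimes k} \otimes \id$, and the surviving weight of a configuration with $j$ excitations is exactly $\binom{n+k-j}{k}/\binom{n+k}{k} = \binom{n}{j}/\binom{n+k}{j} \le 2^{-kj/(n+k)}$, so \emph{pointwise for every} $\ket{\theta}$ one gets $\langle v_\theta | (\id - \Pi^{(r)}_\theta) | v_\theta \rangle \le 2^{-k(r+1)/(n+k)}$, and hence the averaged leakage is at most $g_k\, 2^{-k(r+1)/(n+k)}$ with the polynomial prefactor $g_k = \binom{k + \dim{\cal H} - 1}{\dim{\cal H}-1}$.

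The genuine gap is precisely the one you flag and do not close, and it is structural rather than technical: converting a bound on the expected leakage \emph{probability} into a trace-norm bound between pure states costs a square root, since $\Vert \proj{\tilde\psi^\theta_n} - \proj{\psi^\theta_n} \Vert_1 = 2\Vert (\id - \Pi^{(r)}_\theta)\ket{\tilde\psi^\theta_n}\Vert$ is linear in the \emph{amplitude}. Any assembly of your shape --- your Jensen step, or the variant that keeps the subnormalized projected vectors $\Pi^{(r)}_\theta \ket{v_\theta}$ and bounds the cross terms by Cauchy--Schwarz --- therefore yields at best $O(\sqrt{g_k})\, 2^{-k(r+1)/(2(n+k))}$: the exponent is halved, and no triangle-inequality patch of the Fuchs--van de Graaf step can restore it, because the mechanism only ever controls second moments of the bad component. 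What you have proved is thus the theorem with $r$ replaced by roughly $2r+1$, which is harmless for every use made of it in this thesis (in all applications $r$ carries a free multiplicative constant, e.g.\ $r = 11 d^2 \log(n)/\alpha$ in the proof of Proposition \ref{maincompact}), but it is not the stated inequality; recovering the unhalved exponent is exactly where the finer bookkeeping of \cite{Ren05, KM07} resides and must either be imported or redone, not obtained by tightening your final step.
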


The generalization of Theorem \ref{expdefinetti} to permutation-symmetric mixed states goes as follows. First, we use the fact that every permutation-symmetric mixed state $\rho_{n + k}^S$ acting on ${\cal H}_S^{\otimes n + k}$ has a symmetric purification $\ket{\psi}^{SE}_{n + k} \in ({\cal H}_S \otimes {\cal H}_E)^{\otimes n + k}$, with $\dim({\cal H}_E) = \dim({\cal H}_S)$ (see e.g. Lemma 4.2.2 of Ref. \cite{Ren05}). Then we apply Theorem \ref{expdefinetti} to $\ket{\psi}^{SE}_{n + k}$ and use the contractiveness of the trace norm under the partial trace \cite{NC00} to find 
\begin{equation} \label{df1}
\left \Vert \tr_{1,...,k}(\rho_{n + k}) - \int \mu(d\sigma)\rho_{\sigma} \right \Vert_1 \leq n^{\dim({\cal H})^2} 2^{- \frac{k(r + 1)}{n + k}}
\end{equation}
where
\begin{equation} \label{df2}
\rho_{\sigma} := \tr_{E}(\ket{\psi_{n}^{\ket{\theta}}}\bra{\psi_{n}^{\ket{\theta}}}),
\end{equation}
with $\sigma := \tr_E(\ket{\theta}\bra{\theta})$ and
\begin{equation} \label{df3}
\mu(d\sigma) := \int_{\ket{\theta} \supset \sigma} \mu(d\ket{\theta}).
\end{equation}
In the equation above $\ket{\theta} \supset \sigma$ means that the integration is taken with respect to the purifying system $E$ and runs over all purifications of $\sigma$.


There is a simple, but useful, characterization of states $\ket{\Psi} \in \ket{\theta}^{[\otimes, n, r]}$, derived in Ref. \cite{Ren05}, which we now recall. Let $\{ \ket{i} \}_{i=1}^{d}$ be an orthonormal basis for ${\cal H}$, with $\ket{1} = \ket{\theta}$. Some thought reveal that any $\ket{\Psi} \in \ket{\theta}^{[\otimes, n, r]}$ can be written as
\begin{equation}
\ket{\Psi} = \sum_{k=0}^{r} \sum_{l=1}^{\binom{n}{k}} \binom{n}{k}^{-1/2} \beta_k  P_{\pi_{l, k}} \ket{\Psi_k} \otimes \ket{\theta}^{\otimes n - k}, 
\end{equation}
where $\ket{\Psi_k}$ are permutation-symmetric states in $({\cal H}/\{\ket{\theta}\})^{\otimes n - k}$, $\pi_{l, k}$ are permutations such that
\begin{equation*}
\sum_{l=1}^{\binom{n}{k}} \binom{n}{k}^{-1/2} P_{\pi_{l, k}} \ket{\Psi_k} \otimes \ket{\theta}^{\otimes n - k}
\end{equation*}
is permutation-invariant, and $\beta_k$ complex coefficients satisfying
\begin{equation*}
\sum_{k=0}^r |\beta_k|^2 = 1.
\end{equation*}

For $r < n/2$, the total number of distinct terms can be bounded as follows
\begin{equation}
\sum_{k=0}^r \binom{n}{k} \leq \sum_{k=0}^r 2^{n h(k/n)} \leq \sum_{k=0}^r 2^{n h(r/n)} = (r + 1) 2^{n h(r/n)}, 
\end{equation}
where $h(x) := -x\log x - (1 - x)\log(1 - x)$ is the binary Shannon entropy. The first inequality follows from Formula 12.40 of \cite{CT91} and the second from the fact that $h(x)$ is monotonic increasing in the interval $[0, 1/2]$.

\subsubsection{Chernoff-Hoeffding Bound for Almost Power States}

The states $\tr_{E}(\ket{\psi^{\theta}_n}\bra{\psi^{\theta}_n})$ behave like $\tr_E(\ket{\theta}\bra{\theta})^{\otimes n}$ in many respects. One example is the case where the same POVM is measured on all the $n$ copies. Let us first recall a variant of the Chernoff-Hoeffding bound for product probability distributions. 

\begin{lemma} \label{classicalCH}
\cite{Ren05} Let $P_X$ be a probability distribution on ${\cal X}$ and let $x$ be chosen according to the $n$-fold product distribution $(P_X)^n$. Then, for any $\delta > 0$, 
\begin{equation*} 
\text{Pr}_{x} [|| \lambda_x - P_X  ||_1 > \delta] \leq 2^{- n (\frac{\delta^2}{2 \ln 2} - |{\cal X}|\frac{\log(n + 1)}{n})}.
\end{equation*}
were $||.||_1$ is the trace distance of two probability distributions and $|{\cal X}|$ is the cardinality of ${\cal X}$.
\end{lemma}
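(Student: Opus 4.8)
The plan is to prove this via the method of types combined with Pinsker's inequality. Recall that $\lambda_x$ denotes the empirical frequency distribution (type) of the sample $x = (x_1, \dots, x_n)$, i.e. $\lambda_x(a) = \frac{1}{n}|\{ i : x_i = a \}|$ for each $a \in {\cal X}$. Since the event $\{ \| \lambda_x - P_X \|_1 > \delta \}$ depends on $x$ only through its type, the first move is to partition this event according to which type $\lambda$ the sample realises and to apply a union bound over all types $\lambda$ satisfying $\| \lambda - P_X \|_1 > \delta$. The whole estimate then reduces to (a) counting the number of types and (b) bounding the probability of each individual type class.

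For this I would invoke the two standard combinatorial facts of the method of types (see e.g. \cite{CT91}). First, the number of distinct types of length-$n$ sequences over ${\cal X}$ is at most $(n+1)^{|{\cal X}|}$, since each of the $|{\cal X}|$ coordinates of a type takes a value in $\{ 0, 1/n, \dots, n/n \}$. Second, for any fixed type $\lambda$, the probability under $(P_X)^n$ that the sample has type exactly $\lambda$ is bounded by $2^{-n D(\lambda \| P_X)}$, where $D(\cdot \| \cdot)$ denotes the base-two relative entropy. Combining these two facts yields
\[
\text{Pr}_{x}[\| \lambda_x - P_X \|_1 > \delta] \;\leq\; (n+1)^{|{\cal X}|} \max_{\lambda \,:\, \| \lambda - P_X \|_1 > \delta} 2^{-n D(\lambda \| P_X)}.
\]

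The final step is to lower-bound the exponent $D(\lambda \| P_X)$ uniformly over the types appearing in the maximum, which is exactly Pinsker's inequality in the form $D(\lambda \| P_X) \geq \frac{1}{2 \ln 2} \| \lambda - P_X \|_1^2$. Since every type in the maximisation satisfies $\| \lambda - P_X \|_1 > \delta$, this gives $D(\lambda \| P_X) > \frac{\delta^2}{2 \ln 2}$, and substituting produces
\[
\text{Pr}_{x}[\| \lambda_x - P_X \|_1 > \delta] \;\leq\; (n+1)^{|{\cal X}|} 2^{-n \frac{\delta^2}{2 \ln 2}} \;=\; 2^{-n \left( \frac{\delta^2}{2 \ln 2} - |{\cal X}| \frac{\log(n+1)}{n} \right)},
\]
which is precisely the claimed bound.

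There is no genuine conceptual obstacle here; the point requiring care is purely bookkeeping of conventions. One must fix the normalisation of $\| \cdot \|_1$ (the $L^1$ distance versus the total-variation distance, which differ by a factor of two) consistently so that the constant in Pinsker's inequality matches the $1/(2\ln 2)$ of the statement, and keep the relative entropy in base two throughout so that the type-counting factor $2^{|{\cal X}|\log(n+1)}$ combines cleanly with the exponential decay $2^{-n\delta^2/(2\ln 2)}$ to give the stated single exponent.
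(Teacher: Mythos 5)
Your proof is correct, and there is nothing to compare it against within the paper itself: Lemma \ref{classicalCH} is stated without proof, cited from Renner's thesis \cite{Ren05}, and your method-of-types argument --- union bound over the at most $(n+1)^{|{\cal X}|}$ types, the type-class bound $2^{-nD(\lambda\|P_X)}$, and Pinsker's inequality --- is the standard derivation and essentially the one in the cited reference. Your cautionary note on conventions is also the right one to resolve explicitly: the constant $\frac{1}{2\ln 2}$ matches Pinsker's inequality in the Cover--Thomas form $D(\lambda\|P_X)\geq \frac{1}{2\ln 2}\|\lambda-P_X\|_1^2$ (relative entropy in bits, unnormalized $\ell_1$ distance), and under the trace-distance (half-$\ell_1$) reading of $\|\cdot\|_1$ the exponent would only improve to $\frac{2}{\ln 2}\delta^2$, so the stated bound holds a fortiori under either convention.
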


Let $\{ M_{\omega} \}_{\omega \in {\cal W}}$ be a POVM on ${\cal H}$ and define its induced probability distribution on $\ket{\theta}$ by $P_{M}(\ket{\theta}\bra{\theta}) = \{ \bra{\theta}M_{\omega}\ket{\theta} \}_{\omega \in {\cal W}}$. Theorems 4.5.2 of Ref. \cite{Ren05} and its reformulation as Lemma 2 of Ref. \cite{HHH+08a} show the following.
\begin{lemma} \label{hof}
\cite{Ren05, HHH+08a} Let $\ket{\Psi_n}$ be a vector from $\ket{\theta}^{[\otimes, n, r]}$ with $0 \leq r \leq \frac{n}{2}$ and $\{ M_{\omega} \}_{\omega \in {\cal W}}$ be a POVM on ${\cal H}$.  
\begin{equation} \label{testtest}
Pr \left(  \Vert P_M(\ket{\theta}\bra{\theta}) - P_M(\ket{\Psi_n}\bra{\Psi_n}) \Vert_1 > \delta \right) \leq 2^{- n \left(\frac{\delta^2}{4} - h\left(\frac{r}{n}  \right)  \right) + | {\cal W} |\log(\frac{n}{2} + 1)} 
\end{equation}
where $P_M(\ket{\Psi_n}\bra{\Psi_n})$ is the frequency distribution of outcomes of $M^{\otimes n}$ applied to $\ket{\Psi_n}\bra{\Psi_n}$, and the probability is taken over those outcomes. 
\end{lemma}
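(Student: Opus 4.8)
The plan is to show that an almost power state, when each copy is measured with the same POVM, produces a frequency distribution of outcomes that concentrates around $P_M(\ket{\theta}\bra{\theta})$ exactly as a genuine product state $\ket{\theta}^{\otimes n}$ would, up to a combinatorial penalty accounting for the at most $r$ copies that are allowed to differ from $\ket{\theta}$. First I would dispose of the POVM in favour of a projective measurement: by Naimark's dilation there is an isometry $V : {\cal H} \to {\cal H}\otimes{\cal H}'$ and an orthonormal family $\{\ket{\omega}\}_{\omega\in{\cal W}}$ with $M_{\omega} = V^{\cal y}(\id\otimes\ketbra{\omega}{\omega})V$, so measuring $M^{\otimes n}$ on $\ket{\Psi_n}$ has the same output statistics as measuring $\{\ketbra{\omega}{\omega}\}^{\otimes n}$ on $V^{\otimes n}\ket{\Psi_n}$. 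Since $V^{\otimes n}$ is a product isometry it carries $\ket{\theta}^{[\otimes,n,r]}$ into the set of almost power states in $V\ket{\theta}$ and reproduces the induced distribution, $\bra{\theta}M_{\omega}\ket{\theta} = |\bra{\omega}V\ket{\theta}|^2$, so it suffices to prove the bound for a von Neumann measurement in a fixed orthonormal basis.

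Next, for each empirical frequency (type) $\lambda$ on ${\cal W}$ I would introduce the type-class projector $\Pi_\lambda = \sum_{\omega^n :\, \lambda_{\omega^n} = \lambda}\ketbra{\omega^n}{\omega^n}$, so that the left-hand probability equals $\sum_{\lambda\,:\,\Vert\lambda - P_M(\ket{\theta}\bra{\theta})\Vert_1 > \delta}\Vert\Pi_\lambda\ket{\Psi_n}\Vert^2$. The core task is then to control $\Vert\Pi_\lambda\ket{\Psi_n}\Vert$ for a single bad type. Here I would plug in the explicit decomposition of almost power states recalled above, writing $\ket{\Psi_n} = \sum_{k=0}^r\beta_k\ket{\Phi_k}$, where $\ket{\Phi_k}$ is the normalised symmetric state carrying exactly $k$ excited copies (orthogonal to $\ket{\theta}$) and $n-k$ copies equal to $\ket{\theta}$. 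For $k=0$ one has $\ket{\Phi_0}=\ket{\theta}^{\otimes n}$ and $\Vert\Pi_\lambda\ket{\theta}^{\otimes n}\Vert^2 = \Pr_{(P_M(\theta))^n}[T_\lambda]$, which is precisely the object controlled by the classical Chernoff--Hoeffding bound of Lemma \ref{classicalCH}. The remaining terms are handled by observing that altering at most $r$ of the $n$ factors shifts the empirical frequency by at most $2r/n$ in $\ell_1$, so a bad type for $\ket{\Phi_k}$ forces the underlying $n-k$ product copies to already realise a type that is $(\delta - O(r/n))$-far from $P_M(\theta)$; the number of admissible placements of the excited copies contributes $\sum_{k\le r}\binom{n}{k}\le (r+1)2^{nh(r/n)}$ from the counting bound recalled above, which is exactly the source of the $-h(r/n)$ term in the exponent.

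Collecting the pieces, a union bound over the at most $(n/2+1)^{|{\cal W}|}$ bad types supplies the additive $|{\cal W}|\log(n/2+1)$ in the exponent, the classical bound supplies the Gaussian $\delta^2$-decay (the weakening from $1/(2\ln 2)$ to $1/4$ absorbing the $O(r/n)$ frequency shift and the passage from amplitudes to probabilities via Cauchy--Schwarz), and the combinatorial penalty supplies the $2^{nh(r/n)}$ factor. The step I expect to be the main obstacle is the middle one: controlling the cross terms between the $\ket{\Phi_k}$ and making the heuristic ``perturbing $r$ copies barely moves the type'' rigorous uniformly over all bad types, so that a single application of the classical Chernoff bound can be factored out cleanly rather than re-derived for each $k$. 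This is exactly where the detailed structure of $\ket{\theta}^{[\otimes,n,r]}$ from Renner's characterisation, together with the permutation symmetry of the $\ket{\Phi_k}$, does the essential work, and it is the reason the result is quoted from Refs. \cite{Ren05, HHH+08a} rather than reproved from scratch.
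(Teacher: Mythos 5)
You should first note that the paper contains no proof of this lemma to compare against: it is imported verbatim from Theorem 4.5.2 of \cite{Ren05} and Lemma 2 of \cite{HHH+08a}. Judged against the original argument, your reconstruction has the right architecture and matches it in outline: reduce to bad empirical types, split the almost power state as $\ket{\Psi_n}=\sum_{k=0}^{r}\beta_k\ket{\Phi_k}$ into branches carrying exactly $k$ copies deviating from $\ket{\theta}$, apply the classical bound of Lemma \ref{classicalCH} to the $n-k$ i.i.d.\ coordinates after a type shift, and pay the counting factor $\sum_{k\le r}\binom{n}{k}\le (r+1)2^{nh(r/n)}$ (the very bound the paper recalls just before the lemma), which is indeed the source of the $h(r/n)$ loss in the exponent.

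Two substantive corrections, however. First, you misjudge where the difficulty lies: the cross terms you flag as ``the main obstacle'' close elementarily and need not be deferred to the references. No Naimark dilation is required (and your reduction to a rank-one von Neumann measurement is imprecise anyway, since the dilated measurement $\{\id\otimes\ketbra{\omega}{\omega}\}$ is projective but not rank one). Work directly with the positive semidefinite bad-event operator $A=\sum_{\lambda\ \text{bad}}\sum_{\omega^n:\,\lambda_{\omega^n}=\lambda}M_{\omega_1}\otimes\cdots\otimes M_{\omega_n}$ and bound $\bra{\Psi_n}A\ket{\Psi_n}=\Vert A^{1/2}\ket{\Psi_n}\Vert^2$: since every factor of $\ket{\Psi_k}$ is orthogonal to $\ket{\theta}$, branches with different $k$ (and different placements within a fixed $k$) are mutually orthogonal, and two applications of Cauchy--Schwarz cost exactly the factors $(r+1)$ and $\binom{n}{k}$, with permutation invariance of $A$ making all placements contribute equally; for a single product branch, a $\delta$-bad full type forces the type of the i.i.d.\ part to be bad, and Lemma \ref{classicalCH} applies directly. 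Second, the step you wave through --- that weakening $1/(2\ln 2)$ to $1/4$ ``absorbs'' the $O(r/n)$ shift --- is the genuinely unverified one, and with your crude shift $\delta-2r/n$ it actually fails on part of the allowed range: the needed inequality $(1-r/n)(\delta-2r/n)^2/(2\ln 2)\ge\delta^2/4$ is violated near the boundary of the nontrivial regime (e.g.\ $\delta$ close to $2$ with $h(r/n)$ close to $\delta^2/4$). You must use the sharper convexity estimate $\Vert\lambda'-P_M(\ketbra{\theta}{\theta})\Vert_1\ge (n\delta-2k)/(n-k)$ for the type $\lambda'$ of the i.i.d.\ coordinates, which does restore the stated exponential rate for all $0\le r\le n/2$; even then the polynomial prefactor comes out as $(r+1)(n+1)^{|{\cal W}|}$ rather than the quoted $(n/2+1)^{|{\cal W}|}$, so reproducing the exact constants of Eq.\ (\ref{testtest}) does require the bookkeeping of \cite{Ren05} --- but for the constants, not for the cross terms, where your deferral was unnecessary.
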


This Lemma shows that apart from the factor $h(r/n)$, which in an usual application of Lemma \ref{hof} is taken to be vanishing small, the statistics of the frequency distribution obtained by measuring an almost power state along $\ket{\theta}$ is the same as if we had $\ket{\theta}^{\otimes n}$.

\chapter{Entanglement Distillation from Correlated Sources} \label{NPPT}

\section{Introduction}

In the previous chapter we have studied the concept of entanglement distillation and the classification of density operators into distillable and non-distillable. There an important underlying assumption was that Alice and Bob had several identical and independent distributed (i.i.d.) copies of a given bipartite state. In this chapter we revisit this problem now considering non-i.i.d. sequences. We focus on sequences of states for which all single-copy reduced density matrices are the same and prove that whether this sequence is distillable or not is only a property of this reduction. 

This generalization is interesting for two reasons. The first is that in many situations the i.i.d. assumption is unjustified. For example, in a common scenario in key-distribution based on entanglement distillation, Alice sends quantum bits to Bob with the aim of establishing a shared state of the form $\rho_{AB}^{\otimes n}$, which can then be distilled into a maximally entangled state from which a secure key can be obtained. A well-known difficulty with this approach is that an eavesdropper might perform coherent attacks over the particles sent to Bob in such a way that the final state shared by the parties is not i.i.d. (see e.g. \cite{Ren05}). In this context it would thus be helpful to have a criterion of when a sequence of correlated states is distillable.  

The second reason comes from the desire to better understand the set of undistillable states. As will be shown later in this chapter, by extending the definition of undistillability to states with correlations among its copies, we can obtain a simpler characterization of the \textit{convex hull} of the undistillable states set, in terms of the intersection of a nested sequence of convex sets. As examples of the usefulness of this characterization, we employ it to prove stronger versions of Masanes result on the activation of every entangled state \cite{Mas06} and to give new supporting evidence on the existence of NPPT bound entanglement (see section \ref{boundentanglement}).

The organization of this chapter is the following. In section \ref{defmain} we present some definitions and state the main results. Section \ref{proof1} will be devoted to the proofs of Theorem \ref{T=coC} and Corollary \ref{posdist} while in section \ref{proof2} we prove Theorem \ref{actcor}. Finally in section \ref{NPPT2} we discuss the connection of these results to the conjecture on the existence of NPPT bound entanglement. 

\section{Definitions and Main Results} \label{defmain}

As discussed in section \ref{entanglementdistillation}, we say that a bipartite state $\rho$ is undistillable if for every $k \in \mathbb{N}$, $F_2(\rho^{\otimes k})= 1/2$, where $F_2$ is the singlet-fraction, defined by Eq. (\ref{singletfraction}). The set of $k$-undistillable states over ${\cal H} = {\cal H}_A \otimes {\cal H}_B$, where ${\cal H}_{A/B}$ are finite dimensional Hilbert spaces, is formed by all the states $\rho \in {\cal D}({\cal H})$ such that $F_2(\rho^{\otimes k}) = 1/2$ and is denoted by ${\cal C}_k({\cal H})$. The set of undistillable states ${\cal C}$ is the intersection of the sets ${\cal C}_k$. 

Our first definition is the generalization of such sets to the case where correlations among the several copies of the state might be present. We consider the worst case scenario and say that a state $\rho$ is \textit{copy-correlated $k$-undistillable} if there is a $1$-undistillable state $\omega_k \in {\cal D}({\cal H}^{\otimes k})$ such that $\tr_{\backslash m}(\omega_k) = \rho$ for every $1 \leq m \leq k$. In other words, if we can add correlations to $\rho^{\otimes k}$, forming the state $\omega_k$, such that no two qubit entanglement can be extracted from $\omega_k$, we say that $\rho$ is copy-correlated $k$-undistillable.

Let us define the symmetrization operation $\hat{S}_k$ by 
\begin{equation}
\hat{S}_k(\rho) := \frac{1}{k!} \sum_{\pi \in S_k} P_\pi \rho  P_\pi.
\end{equation}
As in chapter \ref{entanglement}, we denote the set of permutation-symmetric states over ${\cal H}^{\otimes n}$ by ${\cal S}_k({\cal H}^{\otimes k})$. It is clear that if there is 1-undistillable extension $\omega_k$ of $\rho$, then $\hat{S}_k(\omega_k)$ is also a valid extension. It hence follows that w.l.o.g. we can define the set of copy-correlated $k$-undistillable states as follows. 

\begin{definition}
A bipartite state $\rho \in {\cal D}({\cal H})$ is copy-correlated $k$-undistillable if it has a permutation-symmetric extension $\omega_k \in {\cal D}({\cal H}^{\otimes k})$ which is single-copy undistillable. We denote the set of all such states by ${\cal T}_k$, i.e.
\begin{eqnarray}
	{\cal T}_k &:=& \{ \rho \in {\cal D}({\cal H}) 
	: \exists \hspace{0.1 cm} \omega_k \in 
	{\cal S}_k({\cal H}^{\otimes k}) 
	\cap {\cal C}_1({\cal H}^{\otimes k}) \hspace{0.3 cm} 
	\text{s.t.} \hspace{0.3 cm} 
	\rho = \tr_{\backslash 1}(\omega_k) \}.
\end{eqnarray}

\end{definition}

In the same way as one defines undistillability as $k$-undistillability for all $k$, one can introduce an analogous definition in the 
copy-correlated case. 

\begin{definition}
A state $\rho \in {\cal D}({\cal H})$ is copy-correlated undistillable if it is copy-correlated $k$-undistillable for every $k \in \mathbb{N}$. We denote the set of all such states by ${\cal T}$, i.e.
\begin{equation}
{\cal T} := \bigcap_{k \in \mathbb{N}} {\cal T}_k.
\end{equation}
\end{definition}

In words, a state $\rho$ belongs to ${\cal T}$ if for every number of copies of 
the state one can add correlations among them so that no useful entanglement 
can be establish at all. 

Equipped with these definitions, we can now
state the main results of this chapter. The first concerns the relationship
between copy-correlated undistillable states and the
undistillable states in the ordinary i.i.d. sense. The set of copy-correlated undistillable states turns out to be equal to the convex hull of 
the set of undistillable states. Since the latter 
set is possibly non-convex
(a property related to the existence of NPPT bound 
entanglement), the convex hull of this set might, however, 
be different from the set itself.

\begin{theorem} \label{T=coC}
The set of copy-correlated undistillable states is equal to the convex-hull 
of the set of undistillable states.
\begin{equation} 
{\cal T} = co({\cal C}).
\end{equation}
\end{theorem}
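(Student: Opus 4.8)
The plan is to prove the two inclusions $co({\cal C}) \subseteq {\cal T}$ and ${\cal T} \subseteq co({\cal C})$ separately, the first being elementary and the second resting on the finite de Finetti theorem (Theorem \ref{definetti1}).

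For $co({\cal C}) \subseteq {\cal T}$, I would first record that the set of single-copy undistillable states ${\cal C}_1({\cal H}^{\otimes k})$ is convex. This follows because $F_2(\tau) \le 1/2$ is equivalent to the family of inequalities $\tr\left(\Omega(\tau)\left(\Phi(2) - \tfrac{1}{2}\id\right)\right) \le 0$ ranging over all SLOCC maps $\Omega$ of the form (\ref{SEPmap}), each of which is a linear constraint on $\tau$; hence ${\cal C}_1$ is an intersection of half-spaces. Now given $\rho = \sum_i p_i \sigma_i \in co({\cal C})$ with each $\sigma_i$ undistillable, set $\omega_k := \sum_i p_i \sigma_i^{\otimes k}$. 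Each $\sigma_i^{\otimes k}$ lies in ${\cal C}_1({\cal H}^{\otimes k})$ by the very definition of undistillability, so by convexity $\omega_k \in {\cal C}_1({\cal H}^{\otimes k})$; it is manifestly permutation-symmetric with $\tr_{\backslash 1}(\omega_k) = \rho$. Thus $\rho \in {\cal T}_k$ for every $k$, i.e. $\rho \in {\cal T}$.

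For the reverse inclusion I would reduce to showing $\rho \in co({\cal C}_m)$ for every fixed $m$. Since the sets ${\cal C}_m$ are compact and nested decreasing (because $F_2(\tau^{\otimes m}) \le F_2(\tau^{\otimes (m+1)})$ forces ${\cal C}_{m+1} \subseteq {\cal C}_m$), a short Carath\'eodory-plus-compactness argument gives $\bigcap_m co({\cal C}_m) = co\left(\bigcap_m {\cal C}_m\right) = co({\cal C})$, so it is enough to place $\rho$ in each $co({\cal C}_m)$. Fix $m$ and $\rho \in {\cal T}$, and pick a symmetric single-copy undistillable extension $\omega_n$ of $\rho$ to $n \gg k \ge m$ copies. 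Applying Theorem \ref{definetti1} to $\omega_n$ produces a measure $\mu$ on ${\cal D}({\cal H})$ with $\left\| \tr_{\backslash 1,...,k}(\omega_n) - \int \mu(d\sigma)\, \sigma^{\otimes k} \right\|_1 \le 2 d^2 k / n$, where $d = \dim({\cal H})$. The reduced state $\tr_{\backslash 1,...,k}(\omega_n)$ is itself single-copy undistillable over the cut ${\cal H}_A^{\otimes k}:{\cal H}_B^{\otimes k}$, since discarding copies is a local map and $F_2$ is monotone under it.

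The heart of the proof, and the step I expect to be the main obstacle, is a concentration lemma: the de Finetti measure $\mu$ must assign vanishing weight (as $n,k \to \infty$) to the set of robustly $m$-distillable states $\{\sigma : F_2(\sigma^{\otimes m}) \ge 1/2 + \epsilon\}$. I would prove this by contradiction, exhibiting an explicit \emph{estimate-then-distill} SLOCC protocol on the $k$ copies: use one block of copies together with a fixed informationally-complete local measurement to identify the i.i.d. component $\sigma$ up to small error (reliable precisely because $\int \mu(d\sigma)\sigma^{\otimes k}$ behaves like an i.i.d. state on each component), then run the $m$-copy distillation map for the estimated $\sigma$ on the remaining copies, conditioned on the estimate being distillable. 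Taking this map to be trace-non-increasing makes it a contraction in trace norm, which is exactly what is needed to transfer the singlet fraction it achieves on $\int \mu(d\sigma)\sigma^{\otimes k}$ across to $\tr_{\backslash 1,...,k}(\omega_n)$ despite $F_2$ not being continuous in general; if $\mu$ weighted the distillable states appreciably, this protocol would push the singlet fraction of the reduced state above $1/2$, contradicting its undistillability. The delicate points are controlling the estimation error and the conditional success probability so that the advantage survives the $2d^2k/n$ de Finetti error. Granting the lemma, $\rho = \int \mu(d\sigma)\sigma$ is $O(\epsilon)$-close to $co({\cal C}_m^\epsilon)$ with ${\cal C}_m^\epsilon = \{\sigma : F_2(\sigma^{\otimes m}) < 1/2 + \epsilon\}$; letting $n \to \infty$ and then $\epsilon \to 0$, and using closedness of $co({\cal C}_m)$, yields $\rho \in co({\cal C}_m)$, completing the argument.
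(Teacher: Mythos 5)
Your proposal is correct, and its engine is the same as the paper's: the easy inclusion via convexity of ${\cal C}_1$ (which the paper uses implicitly and you justify explicitly through the linear witness constraints $\tr[\Omega(\tau)(\phi_2 - \id/2)] \leq 0$), and, for ${\cal T} \subseteq co({\cal C})$, the finite de Finetti theorem combined with an estimate-then-distill SLOCC protocol — tomography with a product informationally complete POVM on one block, a conditional optimal distillation map on the rest with a discard branch that keeps the map trace-non-increasing, and the transfer of the witness $\id/2 - \phi_2$ via trace-norm contraction, exactly the workaround the paper uses for the discontinuity of $F_2$. Where you genuinely diverge is in the organization of the limits. The paper first proves Lemma \ref{lem1}: applying Theorem \ref{definetti1} to $\omega_{k^2}$ and invoking weak convergence of the resulting measures (Lemma \ref{stormer}, via compactness of the state space), it extracts a \emph{single} $n$-independent measure $\mu$ with $\int \mu(d\rho)\,\rho^{\otimes k} \in {\cal C}_1$ \emph{exactly}, for all $k$ simultaneously. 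The protocol is then analyzed softly: the law of large numbers gives pointwise convergence of $f_m(\rho)$ to a function that is strictly negative on $n$-distillable states, and dominated convergence forces $\mu$ to vanish there for every $n$, placing $\rho = \int\mu(d\rho)\rho$ in $co({\cal C})$ with no error bookkeeping at all. You instead work with the $n$-dependent de Finetti measures, prove a quantitative concentration lemma against the robustly $m$-distillable set, and recover the statement through two nested-Carath\'eodory exchanges, $\bigcap_\epsilon co(\overline{{\cal C}_m^\epsilon}) = co({\cal C}_m)$ and $\bigcap_m co({\cal C}_m) = co({\cal C})$ (both valid for your nested compact families, as you note). Your route buys a finitary, in-principle-quantitative argument free of measure-theoretic compactness, at the price of tracking the $2d^2k/n$ de Finetti error against the estimation failure probabilities — which requires choosing $k = k(n) \to \infty$ with $k/n \to 0$, a constraint worth stating explicitly — and of the $\epsilon$-boundary analysis (where your observation that misclassifying a near-boundary $\sigma$ as distillable still yields a negative witness contribution is the right fix). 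One small imprecision: with an $n$-dependent $\mu$ you only have $\|\rho - \int\mu(d\sigma)\,\sigma\|_1 \leq 2d^2/n$ rather than equality, so write the final step as closeness to $co(\overline{{\cal C}_m^\epsilon})$ before taking limits; your closedness argument absorbs this without change.
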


Up to now our definitions have focused on the negative of entanglement distillation. If $\rho \in {\cal T}$, then 
for every number of copies of $\rho$ we can add correlations to them in a way that makes impossible to extract any two qubit 
entangled state by SLOCC. However, in the i.i.d. case we know that if a state is distillable then not only a two qubit 
entangled state can be extracted from it, but actually maximally entangled states can be distilled at a non-zero rate. Although 
it is an open question if we can extract a non-zero rate of EPR pairs from every correlated sequence associated to every $\rho \notin {\cal T}$, it is indeed possible to extract
maximally entangled states from every copy-correlated distillable state. This is the content of the next corollary of Theorem \ref{T=coC}.

\begin{corollary}\label{posdist}
Let $\rho\not\in {\cal T}({\cal H})$. Then, for any sequence of states $\{ \omega_n \}_{n \in \mathbb{N}}$ with 
reductions equal to $\rho$, every integer $D$, and every $\lambda \in [1/D, 1)$, there 
is an integer $n$ such that   
\begin{equation}
F_D(\omega_n) > \lambda.
\end{equation}
\end{corollary}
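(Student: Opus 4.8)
The plan is to reduce the assertion to a statement about a single symmetric extension and then to distil from it by combining the quantum de Finetti theorem with Theorem~\ref{T=coC}. Throughout I use that both the partial trace and the symmetrisation map $\hat{S}_n$ are trace-preserving LOCC operations (the latter being a shared-randomness mixture of the synchronised local permutations $P_\pi = P_\pi^A \otimes P_\pi^B$), together with the elementary fact that the singlet fraction $F_D$ is monotone non-increasing under trace-preserving LOCC: if $\Lambda$ is such a map then $\Omega \circ \Lambda$ is again SLOCC, whence $F_D(\Lambda(\xi)) \le F_D(\xi)$. Given the sequence $\{\omega_n\}$ with all single-copy reductions equal to $\rho$, I first pass to $\Omega_n := \hat{S}_n(\omega_n)$. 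This state is permutation-symmetric, still has every single-copy reduction equal to $\rho$ (an average of copies of $\rho$), and satisfies $F_D(\omega_n) \ge F_D(\Omega_n)$. Hence it suffices to produce an $n$ with $F_D(\Omega_n) > \lambda$.

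Next I would locate a distillable component in the de Finetti decomposition of $\Omega_n$. Applying Theorem~\ref{definetti1} at a level $m=m(n)$ growing slowly enough that $2\dim(\mathcal{H})^2 m(n)/n \to 0$ produces measures $\mu_n$ on $\mathcal{D}(\mathcal{H})$ with $\Vert \tr_{\backslash 1,\dots,m}(\Omega_n) - \int \mu_n(d\sigma)\, \sigma^{\otimes m} \Vert_1 \to 0$; after passing to a subsequence I may assume $\mu_n \to \mu$ weak-$*$, and the common single-copy marginal forces the barycentre $\int \mu(d\sigma)\,\sigma = \rho$. Here Theorem~\ref{T=coC} enters: were $\mu$ supported on the closed set $\mathcal{C}$ of undistillable states, then $\rho$ would be the barycentre of a probability measure on $\mathcal{C}$ and hence lie in $co(\mathcal{C}) = \mathcal{T}$, contradicting $\rho \notin \mathcal{T}$. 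Therefore $\mu$ assigns positive mass to distillable states. Since the condition $F_2(\sigma^{\otimes t}) > 1/2$ is open (as $F_2$ is lower semicontinuous), I can fix a finite $t$ and an open set $U$ on which $F_2(\sigma^{\otimes t}) \ge 1/2 + c$ for some $c>0$, with $q := \mu(U) > 0$; the portmanteau theorem then gives $\mu_n(U) \ge q/2$ for all large $n$.

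The heart of the argument is a measure-then-distil protocol on $\Omega_n$. For large $n$ its $m$-party reduction is trace-norm close to $\int \mu_n(d\sigma)\, \sigma^{\otimes m}$, a mixture in which a fraction at least $q/2$ of the weight sits on i.i.d. strings of a state in $U$. I would run two LOCC stages: (i) spend a sublinear number of the copies to estimate the single-copy statistics of the unknown i.i.d. component and accept only when these are consistent with membership in $U$, which by the Chernoff--Hoeffding bounds of Lemmas~\ref{classicalCH} and~\ref{hof} passes essentially only on the $U$-component while discarding the rest; (ii) on acceptance, group the surviving copies into blocks of $t$ and distil, from the resulting approximately i.i.d. copies of a state with $F_2(\cdot^{\otimes t}) \ge 1/2 + c$, a $D$-dimensional maximally entangled state whose fidelity with $\Phi(D)$ tends to one. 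This yields $F_D(\Omega_n) > \lambda$ for large $n$, and therefore $F_D(\omega_n) > \lambda$, which is the claim.

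I expect stage~(i)--(ii) to be the main obstacle. The delicate points are controlling the denominator of the ratio defining $F_D$ --- one must ensure that components outside $U$ whose statistics are close to those in $U$ do not pass the test often enough to dilute the overlap with $\Phi(D)$ --- and matching the three sources of error (the de Finetti approximation, the false-acceptance probability of the state-identification step, and the imperfect fidelity of the finite-level distillation) so that their total stays below $1-\lambda$. The remaining ingredients, namely the monotonicity and symmetrisation reductions, the use of Theorem~\ref{T=coC} to force a distillable component, and the openness of finite-level distillability, are comparatively routine.
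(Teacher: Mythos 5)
Your strategy is sound and would prove the corollary, but it takes a genuinely different route from the paper, and the difference lands exactly on the steps you flag as your ``main obstacle.'' The shared skeleton is the reduction: symmetrisation by $\hat{S}_n$, the finite de Finetti theorem, a weak-$*$ limit measure $\mu$ with barycentre $\rho$, and the observation that $\mu$ supported on ${\cal C}$ would put $\rho$ in ${\cal T}$ (for which, note, you only need the easy inclusion $co({\cal C}) \subseteq {\cal T}$, i.e.\ the ``converse direction'' of Lemma \ref{lem1}, not the hard half of Theorem \ref{T=coC}). The paper's endgame, however, is a proof by contradiction: assuming $F_D(\omega_n) \leq \lambda$ for all $n$, it reruns the proof of Theorem \ref{T=coC} verbatim with the witness $\id/2 - \phi_2$ replaced by $\lambda \id - \phi_D$ --- the SLOCC map still performs tomography on $m$ copies and applies the \emph{optimal} map for the estimated state or discards --- and then lets the law of large numbers plus Lebesgue dominated convergence force $\mu$ to vanish on $\{\sigma : F_D(\sigma^{\otimes n}) > \lambda\}$ for every $n$; since $\lambda < 1$ and distillable states satisfy $F_D(\sigma^{\otimes m}) \rightarrow 1$, the support lies in ${\cal C}$, so $\rho \in {\cal T}$, a contradiction. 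This buys precisely what your direct construction still owes: no uniform distillation bounds, no false-acceptance analysis, and no matching of error budgets are ever needed, because the contradiction lives at the level of an integral being simultaneously $\geq 0$ and $< 0$. To complete your route you would need, concretely: (i) inner regularity to replace $U$ by a compact $K \subset U$ with $\mu(K) > 0$ and an acceptance region given by a $\delta$-neighbourhood of $K$ with $\delta < d(K, U^c)/3$, so that Lemma \ref{classicalCH} makes false acceptance exponentially small while every state passing the test still satisfies $F_2(\sigma^{\otimes t}) > 1/2 + c$; (ii) a finite-subcover argument over $\overline{K_{2\delta}}$ supplying, on each cell, one fixed SLOCC filter with two-qubit singlet fraction above $1/2 + c/2$ \emph{and} success probability bounded below --- this is essential, since the ratio defining $F_2$ is not continuous where the SLOCC success probability vanishes, so ``apply the protocol for the estimated state'' is not automatically robust to the $\delta$ tomography error --- followed by recurrence-plus-hashing and conversion of ebits into $\Phi(D)$; and (iii) division of the vanishing de Finetti error by the acceptance probability (bounded below by roughly $q/2$) when passing to conditional states. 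All of this is standard and your approach can be closed, and it would even yield something the paper's argument only asserts indirectly: the paper remarks after its proofs that exactly your trace-out/tomography/distil protocol is the one implicit in its reasoning (with zero rate), whereas the dominated-convergence route certifies existence of a good $n$ without ever carrying out these estimates.
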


We now turn to the second main result. It can be seen as a generalization of Masanes activation result \cite{Mas06} and indicates the power of copy-correlated $k$-undistillable states to serve as activators in single-copy quasi-distillation. 

\begin{theorem} \label{actcor}
For every entangled state $\rho \in {\cal D}({\cal H})$ and every $k \in \mathbb{N}$ there is a copy-correlated $k$-undistillable state $\sigma$ such that the joint state $\rho \otimes \sigma$ is single-copy distillable, i.e.
\begin{equation}  
F_{2}(\rho \otimes \sigma) > \frac{1}{2}.
\end{equation}
\end{theorem}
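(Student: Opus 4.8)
The plan is to take the single-copy activator guaranteed by Masanes' theorem and \emph{upgrade} it to a copy-correlated $k$-undistillable state, exploiting two facts: that activation is an open condition (so the activator may be perturbed freely), and that a state lies in ${\cal T}_k$ as soon as it admits \emph{any} permutation-symmetric single-copy-undistillable $k$-extension. In particular I would first record the easy inclusion ${\cal C}_k \subseteq {\cal T}_k$: if $\tau$ is $k$-undistillable in the i.i.d.\ sense, i.e.\ $F_2(\tau^{\otimes k}) = 1/2$, then $\omega_k = \tau^{\otimes k}$ is itself a permutation-symmetric single-copy-undistillable extension of $\tau$, so $\tau \in {\cal T}_k$. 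Thus it suffices to produce, for each fixed $k$, an activator $\sigma$ for $\rho$ that carries \emph{some} symmetric, single-copy-undistillable $k$-extension.

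First I would establish the robustness of activation under perturbations. Applying Masanes' theorem with $K = 2$ and $\lambda = 1/2$ yields a state $\sigma_0$ with $F_2(\sigma_0) \le 1/2$ --- hence, since $F_2 \ge 1/2$ for every state, a single-copy-undistillable state --- together with an SLOCC map $\Omega$ witnessing $\tr(\Omega(\rho \otimes \sigma_0)\Phi(2)) > \tfrac12 \tr(\Omega(\rho \otimes \sigma_0))$. Writing $a = \tr(\Omega(\rho \otimes \sigma_0)\Phi(2))$, $b = \tr \Omega(\rho\otimes\sigma_0)$, $c = \tr(\Omega(\rho \otimes \eta)\Phi(2))$, $d = \tr \Omega(\rho\otimes\eta)$, the ratio attained by the same map $\Omega$ on $\sigma = (1-p)\sigma_0 + p\,\eta$ equals $\frac{(1-p)a + pc}{(1-p)b + pd}$, which exceeds $\tfrac12$ precisely when $(1-p)(2a-b) + p(2c-d) > 0$; since $2a - b > 0$, this holds for all small $p$ and \emph{arbitrary} $\eta$. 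Hence $F_2(\rho \otimes \sigma) > 1/2$ persists for every $\sigma$ sufficiently close to $\sigma_0$, which is the freedom that lets me trade the raw Masanes activator for a nearby one possessing the desired extension.

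The construction of the extension then splits according to the partial transpose of $\rho$. If $\rho$ is NPPT, Vollbrecht and Wolf's result supplies a \emph{PPT} activator $\sigma_{VW}$ with $F_2(\rho \otimes \sigma_{VW}) > 1/2$; taking $\omega_k = \sigma_{VW}^{\otimes k}$ gives a permutation-symmetric PPT --- hence undistillable, hence single-copy-undistillable --- extension with one-copy reduction $\sigma_{VW}$, so $\sigma_{VW} \in {\cal C} \subseteq {\cal T}_k$ and activates $\rho$, settling this case without even needing the perturbation step. The remaining, genuinely hard case is $\rho$ PPT (bound entangled): since $\rho \otimes \sigma$ PPT would be undistillable, the activator is \emph{forced} to be NPPT, and its symmetric $k$-extension can therefore be neither separable nor PPT-on-every-copy.

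\textbf{The main obstacle} is thus to equip an NPPT activator with a permutation-symmetric, single-copy-undistillable $k$-extension $\omega_k$ whose one-copy reduction still activates $\rho$. The naive certificates fail: an i.i.d.\ extension $\sigma^{\otimes k}$ would require $\sigma$ itself to be an NPPT $k$-undistillable state --- dangerously close to the open NPPT-bound-entanglement question --- while pushing $\omega_k$ into the PPT ball about the maximally mixed state drags its marginal toward the maximally mixed state and destroys activation for PPT $\rho$. The resolution I would pursue is to use \emph{correlations} across the $k$ copies to confine the negativity of the partial transpose to the single-copy marginal while keeping the global state undistillable: build a symmetric $\omega_k$ that is PPT on the full $A^{\otimes k}\!:\!B^{\otimes k}$ cut --- which, because partial transposition does not commute with tracing out the other copies, does \emph{not} force its one-copy reduction to be PPT --- with reduction close to the Masanes activator $\sigma_0$, and then invoke the openness of activation from the second step to restore $F_2(\rho \otimes \sigma) > 1/2$ after any dilution introduced by the symmetrization $\hat{S}_k$. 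Verifying single-copy-undistillability of this correlated $\omega_k$ is the technical heart of the argument.
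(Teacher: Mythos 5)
Your preparatory steps are all correct: the inclusion ${\cal C}_k \subseteq {\cal T}_k$ via the i.i.d.\ extension, the openness of the activation condition under mixing (the affine-ratio computation is fine, and $2c-d \geq -1$ makes it uniform in $\eta$), and the NPPT branch via the Vollbrecht--Wolf PPT activator. But the theorem's entire content sits in the case you yourself flag as hard ($\rho$ PPT entangled, activator forced to be NPPT), and there your proposed resolution rests on a false claim: partial transposition \emph{does} commute with tracing out the other copies. Concretely, since $\tr_{B_j}(X^{\Gamma_{B_j}}) = \tr_{B_j}(X)$ for each copy $j$, one has $\left(\tr_{\backslash 1}(\omega_k)\right)^{\Gamma_{B_1}} = \tr_{\backslash 1}\left(\omega_k^{\Gamma_{B_1 \cdots B_k}}\right)$, so if $\omega_k$ is PPT across the full $A^{\otimes k}\!:\!B^{\otimes k}$ cut, its one-copy reduction is automatically PPT across $A_1\!:\!B_1$. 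The object you want to build --- a globally PPT permutation-symmetric extension whose marginal is NPPT --- simply does not exist; its marginal $\sigma$ would be PPT, hence $\rho \otimes \sigma$ PPT and undistillable, and activation would fail. More generally, any certificate of single-copy undistillability of $\omega_k$ that is inherited by marginals under partial trace (PPT, separability) hits the same wall, which is precisely why a constructive route looks as hard as exhibiting NPPT bound entanglement itself.

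The paper escapes this by never constructing the extension: the argument is dual and nonconstructive. Using $({\cal S}_k \cap {\cal C}_1)^* = {\cal S}_k^* + {\cal C}_1^*$, Lemma \ref{dual} shows that any $X \in {\cal T}_k^*$ satisfies $X \otimes \id^{\otimes (k-1)} = \Lambda(\id/2 - \phi_2) + Q$ with $\Lambda$ an SLOCC map and $Q \in ({\cal S}_k)^*$; applying the symmetrization $\hat{S}_k$, sandwiching with $\id \otimes (\id_{d^2} \otimes \ket{0,0}\bra{0,0})^{\otimes (k-1)}$ and tracing out all but the first slot produces a single-copy identity $\sigma \otimes (\id_4/2 - \phi_2) = \Upsilon(\id_4/2 - \phi_2) + P$ with $P \geq 0$ (Lemma \ref{S(Q)=0}) and $\Upsilon$ SLOCC, which by Masanes' separability criterion forces $\sigma$ to be separable (Lemma \ref{crucial}). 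Contrapositively, for entangled $\sigma$ the witness $\sigma^T \otimes (\id/2 - \phi_2)$ cannot lie in $({\cal T}_k)^*$, so some $\rho \in {\cal T}_k$ satisfies $\tr[\rho\, \sigma^T \otimes (\id/2 - \phi_2)] < 0$, and the Jamiolkowski/teleportation SLOCC you already invoke in your second step converts this into $F_2 > 1/2$. To repair your proof you should replace the constructive third step by this Hahn--Banach-type argument; your openness lemma, while correct, then becomes unnecessary.
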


As ${\cal C}_1 = {\cal T}_1$, the main result of Ref. \cite{Mas06} is a particular case of Theorem \ref{actcor}. There is an immediate 
corollary of the previous result which we can state as follows.

\begin{corollary}
For every entangled state $\rho \in {\cal D}({\cal H})$ and any 
$\varepsilon > 0$ there is a single-copy undistillable state $\sigma$ 
such that
\begin{enumerate}
	\item We can find a probability distribution 
	$\{ p_i \}$ and a set of undistillable states $\{ \rho_i \}$ satisfying
\begin{equation}	
	\bigl \Vert \sigma - \sum_i p_i \rho_i	\bigr \Vert_1 \leq \varepsilon,
\end{equation}	
  \item The joint state $\rho \otimes \sigma$ is single-copy distillable.
\end{enumerate}
\end{corollary}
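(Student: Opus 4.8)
The plan is to obtain the corollary by combining Theorem \ref{actcor} with Theorem \ref{T=coC}, using a soft compactness argument to transfer the finite-$k$ statement into the limiting convex-hull description. The three requirements on $\sigma$ correspond to three separate facts, two of which are essentially immediate. For the single-copy distillability of $\rho \otimes \sigma$ (item 2), I would simply invoke Theorem \ref{actcor}: for any fixed $k$ it produces a copy-correlated $k$-undistillable state $\sigma \in {\cal T}_k$ with $F_2(\rho \otimes \sigma) > 1/2$. That this same $\sigma$ is itself single-copy undistillable follows from the definition of ${\cal T}_k$: we have $\sigma = \tr_{\backslash 1}(\omega_k)$ for some permutation-symmetric $\omega_k \in {\cal S}_k({\cal H}^{\otimes k}) \cap {\cal C}_1({\cal H}^{\otimes k})$, and since the partial trace is a deterministic LOCC map, any SLOCC protocol extracting a two-qubit entangled state from $\sigma$ would, precomposed with that partial trace, extract one from $\omega_k$, contradicting $\omega_k \in {\cal C}_1$. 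Hence $F_2(\sigma) = 1/2$, which both shows that $\sigma$ is single-copy undistillable and records the inclusion ${\cal T}_k \subseteq {\cal C}_1({\cal H})$.

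The genuine content is item 1, the approximation of $\sigma$ by a convex combination of truly undistillable states. Here I would exploit the nested structure ${\cal T}_1 \supseteq {\cal T}_2 \supseteq \cdots$ with $\bigcap_k {\cal T}_k = {\cal T}$, together with Theorem \ref{T=coC}, which identifies ${\cal T} = co({\cal C})$. The key step is to verify that each ${\cal T}_k$ is compact: it is the image under the continuous partial-trace map of the set ${\cal S}_k({\cal H}^{\otimes k}) \cap {\cal C}_1({\cal H}^{\otimes k})$, which is closed because ${\cal S}_k$ is cut out by the linear symmetry constraints and ${\cal C}_1 = \{ F_2 \leq 1/2 \}$ is closed, the singlet fraction being lower semicontinuous as a supremum over SLOCC maps of the continuous functionals $\rho \mapsto \tr(\Omega(\rho)\Phi(2))/\tr(\Omega(\rho))$. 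Granting compactness, let $U_\varepsilon$ be the open $\varepsilon$-neighborhood of $co({\cal C})$; the sets $K_k := {\cal T}_k \setminus U_\varepsilon$ are then nested, compact, and satisfy $\bigcap_k K_k = {\cal T} \setminus U_\varepsilon = co({\cal C}) \setminus U_\varepsilon = \emptyset$, so by the finite intersection property some $K_k$ is already empty, i.e. ${\cal T}_k \subseteq U_\varepsilon$ for all sufficiently large $k$.

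Assembling the pieces, I would first fix $k$ large enough that ${\cal T}_k \subseteq U_\varepsilon$, and then apply Theorem \ref{actcor} at this particular $k$ to obtain $\sigma \in {\cal T}_k$ with $F_2(\rho \otimes \sigma) > 1/2$. By the first paragraph $\sigma$ is single-copy undistillable, and by the second it lies within trace distance $\varepsilon$ of $co({\cal C})$; a nearest point of $co({\cal C})$ is, by Carath\'eodory's theorem, a finite convex combination $\sum_i p_i \rho_i$ of undistillable states $\rho_i \in {\cal C}$ with $\left\Vert \sigma - \sum_i p_i \rho_i \right\Vert_1 \leq \varepsilon$, which is precisely item 1.

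I expect the main obstacle to be the topological input, namely the compactness (equivalently closedness) of the sets ${\cal T}_k$, which rests on the lower semicontinuity of the singlet fraction $F_2$ — so that $\{F_2 = 1/2\}$ is closed — and on the continuity of the partial trace. If one wished to avoid this soft argument, the alternative would be to reprove the relevant inclusion of Theorem \ref{T=coC} \emph{quantitatively} via the finite de Finetti theorem (Theorem \ref{definetti1}), applying it to $\omega_k$ and balancing the error $2\dim({\cal H})^2 m / k$ against the number of copies $m$ at which the distillable part of the de Finetti measure becomes detectable through the singlet fraction; this quantitative balancing is exactly the harder route that the compactness argument lets us bypass.
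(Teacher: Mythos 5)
Your proof is correct and takes essentially the same route as the paper: the paper deduces the corollary from Theorem \ref{actcor} together with the standard convex-analysis fact (cited from Kuratowski) that nested closed convex sets converge in Hausdorff distance to their intersection ${\cal T} = co({\cal C})$ of Theorem \ref{T=coC}, and your finite-intersection-property argument with $K_k := {\cal T}_k \setminus U_\varepsilon$ is precisely a self-contained proof of the relevant half of that fact. The additional details you supply — compactness of the sets ${\cal T}_k$, single-copy undistillability of $\sigma$ via the partial trace being LOCC, and Carath\'eodory's theorem for item 1 — are all sound and simply make explicit what the paper leaves to the cited reference.
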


This corollary is a direct consequence of Theorem \ref{actcor} 
and a standard result of convex analysis stating that a family of 
closed convex sets $\{ A_i \}$ such that $A_{i + 1} \subseteq A_i$ 
converge to their intersection with respect to the Hausdorff 
distance \cite{Kur58}\footnote{The Hausdorff distance between two sets $A, B \in \mathbb{R}^n$ is given by 
\begin{equation*}
H(A, B) = \max_{a \in A} \min_{b \in B} || a - b ||_1.
\end{equation*}
}.

Motivated by these findings, we are led to the following conjecture. 

\begin{conjecture} \label{conj1}
For every entangled state $\rho \in {\cal D}({\cal H})$ there is an undistillable 
state $\sigma$ such that the joint state $\rho \otimes \sigma$ is 
single-copy distillable, i.e.
\begin{equation}  
F_{2}(\rho \otimes \sigma) > \frac{1}{2}.
\end{equation}
\end{conjecture}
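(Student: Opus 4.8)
The plan is to push the two results already in hand to their limit and then use a convexity argument to descend from the convex hull $co({\cal C})$ back to ${\cal C}$ itself. Theorem~\ref{actcor} furnishes, for each $k$, a copy-correlated $k$-undistillable activator $\sigma_k \in {\cal T}_k$ with $F_2(\rho \otimes \sigma_k) > \frac{1}{2}$, and the corollary following it already packages the limit $k \to \infty$: for every $\varepsilon > 0$ it produces a single-copy undistillable $\sigma_\varepsilon$ lying within $\varepsilon$ in trace norm of a finite convex combination $\sum_i p_i \rho_i$ of genuinely undistillable states $\rho_i \in {\cal C}$, while still activating $\rho$. My first step would be to send $\varepsilon \to 0$ and extract, by compactness of ${\cal D}({\cal H})$, a limit point $\sigma_*$ of the $\sigma_\varepsilon$. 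Since ${\cal C}$ is closed and ${\cal D}({\cal H})$ compact, $co({\cal C})$ is closed, so $\sigma_*$ is a limit of points of $co({\cal C})$ and therefore $\sigma_* \in co({\cal C}) = {\cal T}$; by Carath\'eodory it is a finite convex combination $\sigma_* = \sum_i p_i \rho_i$ with $\rho_i \in {\cal C}$.

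The descent from $co({\cal C})$ to ${\cal C}$ is the clean part, and I would carry it out through quasi-convexity of the singlet fraction. For a fixed SLOCC map $\Omega$ the ratio $\tr(\Omega(\tau)\Phi(2))/\tr(\Omega(\tau))$ is linear-fractional, hence quasi-linear, in $\tau$; as $F_2$ is a supremum of such functionals, its sublevel sets are intersections of convex sets and $F_2$ is therefore quasi-convex. Applying this to $\rho \otimes \sigma_* = \sum_i p_i (\rho \otimes \rho_i)$ gives $F_2(\rho \otimes \sigma_*) \leq \max_i F_2(\rho \otimes \rho_i)$. Thus if $\sigma_*$ activates $\rho$, at least one extreme component $\rho_i \in {\cal C}$ already satisfies $F_2(\rho \otimes \rho_i) > \frac{1}{2}$, which is exactly the conjecture. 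In this way the problem reduces to the single claim that some state of ${\cal T} = co({\cal C})$, rather than merely of each finite level ${\cal T}_k$, activates $\rho$.

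The hard part will be justifying that $\sigma_*$ still activates $\rho$ in the limit. The bound $F_2(\rho \otimes \sigma_\varepsilon) > \frac{1}{2}$ is strict, but $F_2$, being a supremum of continuous functions, is only lower semicontinuous; lower semicontinuity controls the defining inequality of undistillability (it keeps the limit inside the closed set ${\cal T} = co({\cal C})$) yet runs the wrong way for transporting a strict activation bound through the limit. To repair this one would need a margin $F_2(\rho \otimes \sigma_k) \geq \frac{1}{2} + \delta$ uniform in $k$ together with upper semicontinuity of $F_2$ along the relevant sequence, and neither is supplied by the construction of Theorem~\ref{actcor}, whose activator plausibly has a margin decaying with $k$.

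Finally, I expect this obstacle to be genuine rather than merely technical, because a proof of the conjecture would settle a notorious open problem. Apply the conjecture to a PPT entangled state $\rho$ (such states exist in dimension three and above): it yields an undistillable $\sigma$ with $F_2(\rho \otimes \sigma) > \frac{1}{2}$. Were $\sigma$ itself PPT, then $\rho \otimes \sigma$ would be PPT, hence undistillable with $F_2(\rho \otimes \sigma) = \frac{1}{2}$, a contradiction; so $\sigma$ must have a non-positive partial transpose while remaining undistillable, i.e. $\sigma$ is an NPPT bound entangled state. The conjecture is therefore at least as hard as deciding the existence of NPPT bound entanglement, and this is precisely the barrier to closing the gap between ${\cal C}$ and its convex hull ${\cal T} = co({\cal C})$; I would expect the limiting argument above to stall at exactly this point.
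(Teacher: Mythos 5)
This statement is a conjecture, and the paper does not prove it --- nor do you, and your honest conclusion is the correct one: as both you and the paper (in the remark immediately following Conjecture~\ref{conj1}) observe, a proof would establish the existence of NPPT bound entanglement. What can be assessed is your account of where the argument stalls, and it matches the paper's own discussion in section~\ref{NPPT2}, merely phrased in the primal (state-space) picture rather than the paper's dual-cone picture. Your reduction is sound: compactness of ${\cal D}({\cal H})$ and closedness of $\text{co}({\cal C})$ place the limit activator $\sigma_*$ in ${\cal T} = \text{co}({\cal C})$ by Theorem~\ref{T=coC}, and your quasi-convexity descent $F_2(\rho \otimes \sum_i p_i \rho_i) \leq \max_i F_2(\rho \otimes \rho_i)$ is exactly the paper's remark that, by the linearity of Eq.~(\ref{sat}) in the activator, activation by a convex combination of undistillable states forces activation by at least one component. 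So, as you say, the conjecture is equivalent to showing that some element of ${\cal T}$, not merely of each ${\cal T}_k$, activates $\rho$; equivalently, to extending Lemma~\ref{crucial} from ${\cal T}_k$ to ${\cal T}$.

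Your diagnosis of the obstruction --- that the strict bound $F_2(\rho \otimes \sigma_k) > \tfrac{1}{2}$ need not survive the limit because $F_2$, as a supremum of linear-fractional functionals, is only lower semicontinuous, which transports lower bounds from the limit point to the sequence rather than the reverse, so that a uniform margin $F_2(\rho \otimes \sigma_k) \geq \tfrac{1}{2} + \delta$ would be needed --- is precisely the paper's obstruction in dual form. The paper writes ${\cal T}^* = \overline{\bigcup_{k} {\cal T}_k^*}$, notes that Lemma~\ref{crucial} would apply immediately if the closure could be dropped, and shows that with the closure one only obtains $\sigma \otimes (\id_4/2 - \phi_2) + (n_{\varepsilon} - 1)\varepsilon\, \id = \Omega_{\varepsilon}(\id/2 - \phi_2) + P_{\varepsilon}$, so that the conjecture would follow if one could choose the approximating sequence with $(n_{\varepsilon} - 1)\varepsilon \rightarrow 0$. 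That uncontrolled perturbation is the witness-side avatar of your decaying margin: it quantifies exactly how much the activation inequality may degrade as $k_{\varepsilon}$ grows, and the paper states it could neither prove nor disprove the required rate, just as you anticipate. Your only inessential looseness is the aside that lower semicontinuity keeps the limit in ${\cal T}$ (closedness of ${\cal T}$ follows already from the characterization in Theorem~\ref{T=coC} and the $\varepsilon$-closeness in the corollary); the substantive content of your analysis is correct and is, in effect, the primal translation of section~\ref{NPPT2}.
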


This statement would imply that the distillable entanglement is not additive and would prove
the existence of NPPT bound entanglement, as the tensor product of two PPT states cannot be distillable. In section \ref{NPPT2} we discuss the limitations of our methods for solving conjecture \ref{conj1}.

\section{Proof of Theorem \ref{T=coC} and Corollary \ref{posdist}} \label{proof1}

We will now proceed to prove Theorem \ref{T=coC}.
To make the discussion simpler, before we proof it in full we first derive in the next Lemma a weaker, but already illustrative, characterization of the elements of ${\cal T}$. An important element in its proof is the finite quantum de Finetti theorem discussed in chapter \ref{entanglement} (Theorem \ref{definetti1}). 

\begin{lemma} \label{lem1}
A  state $\sigma\in {\cal D}({\cal H})$ 
belongs to ${\cal T}$ if, and only if, there 
exists a probability measure $\mu$
over the state space ${\cal D}({\cal H})$  such that
\begin{equation} \label{pin}
\sigma = \int \mu(d\rho)\rho \hspace{0.3 cm} \text{and} \hspace{0.2 cm} \pi_{k} := \int \mu(d\rho) \rho^{\otimes k} \hspace{0.1 cm} \in \hspace{0.2 cm} {\cal C}_1({\cal H}^{\otimes k}) 
\end{equation}
for every $k \in \mathbb{N}$.
\end{lemma}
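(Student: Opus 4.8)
The plan is to prove the equivalence by two implications, the reverse one being essentially immediate and the forward one carrying all the weight through the finite quantum de Finetti theorem.

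For the ``if'' direction, assume a measure $\mu$ as in (\ref{pin}) is given. For each $k$ I would take $\omega_k := \pi_k = \int \mu(d\rho)\,\rho^{\otimes k}$ as the extension. It is permutation-symmetric, being an average of the symmetric states $\rho^{\otimes k}$; it lies in ${\cal C}_1({\cal H}^{\otimes k})$ by hypothesis; and its single-copy reduction is $\tr_{\backslash 1}(\pi_k) = \int \mu(d\rho)\,\rho = \sigma$. Hence $\omega_k$ certifies $\sigma \in {\cal T}_k$ for every $k$, so $\sigma \in {\cal T}$.

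The ``only if'' direction is the substantial one. Fix $\sigma \in {\cal T}$, so that for each $n$ there is a permutation-symmetric, single-copy-undistillable $\omega_n \in {\cal S}_n({\cal H}^{\otimes n}) \cap {\cal C}_1({\cal H}^{\otimes n})$ with $\tr_{\backslash 1}(\omega_n) = \sigma$. The first step I would isolate is that single-copy undistillability passes to reductions: were $F_2(\tr_{\backslash 1,\ldots,k}(\omega_n)) > 1/2$ for some $k \le n$, then composing the witnessing SLOCC map with the (local, hence LOCC) operation of discarding the last $n-k$ systems would yield $F_2(\omega_n) > 1/2$, a contradiction; thus $\tr_{\backslash 1,\ldots,k}(\omega_n) \in {\cal C}_1({\cal H}^{\otimes k})$. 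Next I would apply the finite de Finetti theorem (Theorem \ref{definetti1}) to $\omega_n$ with the slowly growing reduction size $k_n := \ceil{\sqrt{n}}$, obtaining a measure $\mu_n$ on ${\cal D}({\cal H})$ with
\[
\left\Vert \tr_{\backslash 1,\ldots,k_n}(\omega_n) - \int \mu_n(d\rho)\,\rho^{\otimes k_n}\right\Vert_1 \le 2\,\frac{\dim({\cal H})^2 k_n}{n} \longrightarrow 0 .
\]
Because the partial trace is contractive in trace norm, tracing both terms down to any fixed $k \le k_n$ copies shows that $\pi_k^{(n)} := \int \mu_n(d\rho)\,\rho^{\otimes k}$ sits within $2\dim({\cal H})^2 k_n/n$ of the undistillable state $\tr_{\backslash 1,\ldots,k}(\omega_n)$.

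To manufacture one measure serving all $k$ at once, I would invoke weak-$*$ compactness of the set of probability measures on the compact space ${\cal D}({\cal H})$ and pass to a subsequence $\mu_{n_j} \to \mu$. Since $\rho \mapsto \rho^{\otimes k}$ is continuous for each fixed $k$, this gives $\pi_k^{(n_j)} \to \int \mu(d\rho)\,\rho^{\otimes k} =: \pi_k$; the case $k=1$, together with $\tr_{\backslash 1}(\omega_{n_j}) = \sigma$, pins down $\int \mu(d\rho)\,\rho = \sigma$. Finally, each $\pi_k^{(n_j)}$ lies within a vanishing distance of the closed set ${\cal C}_1({\cal H}^{\otimes k})$ --- closedness following from lower semicontinuity of the singlet fraction $F_2$ (a supremum of maps continuous in $\rho$) together with the universal bound $F_2 \ge 1/2$, which makes ${\cal C}_1$ the closed sublevel set $\{F_2 \le 1/2\}$ --- so the limit $\pi_k$ lies in ${\cal C}_1({\cal H}^{\otimes k})$ as well. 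As this holds for every $k$, the measure $\mu$ satisfies (\ref{pin}). The crux of the whole argument, and its only genuine difficulty, is exactly this extraction of a single limiting $\mu$ controlling all reductions simultaneously: the de Finetti theorem converts the permutation symmetry of the $\omega_n$ into an approximately i.i.d. decomposition, and the sublinear choice $k_n = \ceil{\sqrt n}$ combined with contractivity of the partial trace lets one $\mu_n$ approximate every fixed-size reduction, so that weak-$*$ compactness and the closedness of ${\cal C}_1$ can close the argument.
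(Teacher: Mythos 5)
Your proof is correct and follows essentially the same route as the paper's: the finite quantum de Finetti theorem (Theorem \ref{definetti1}) applied with a sublinear number of retained copies, contractivity of the partial trace, closedness of ${\cal C}_1$ under the LOCC of discarding subsystems, and weak-$*$ compactness of measures on ${\cal D}({\cal H})$ together with the fact that weak convergence of measures forces trace-norm convergence of the moments $\int \mu_n(d\rho)\,\rho^{\otimes k}$ (the paper's Lemma \ref{stormer}). The only difference is organizational: you extract a single weak-$*$ limit of the de Finetti measures directly, whereas the paper first builds, for each fixed $j$, a measure $\nu_j$ with $\int \nu_j(d\rho)\,\rho^{\otimes j} \in {\cal C}_1$ exactly and then takes a second limit over $j$ --- your one-step version is a mild streamlining of the same argument.
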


\begin{proof}
Let $\sigma \in {\cal T}$. Then, for each $k \in \mathbb{N}$, 
there exists a permutation-symmetric state
$\omega_k \in {\cal C}_1({\cal H}^{\otimes k})$ such that $\tr_{\backslash 1}(\omega_k) = \sigma$. From Theorem \ref{definetti1} it follows that for each $k \geq 1$ there exists a probability measure $\mu_k$ such that
\begin{equation*}
\Vert \tr_{k + 1, \dots, k^2}(\omega_{k^2}) - \int \mu_k(d\rho) \rho^{\otimes k}  \Vert_1 \leq \frac{4 d^2}{k}.
\end{equation*} 
where $d := \dim({\cal H})$. Let us define $\pi^k_{j} := \tr_{j + 1, \dots, k^2} (\omega_{k^2})$. From the contractiveness of the trace norm under partial tracing, we have that for each $j \leq k$,
\begin{equation} \label{EEE}
\Vert \pi^k_j - \int \mu_k(d\rho) \rho^{\otimes j}  \Vert_1 \leq 
\frac{4 d^2}{k}.
\end{equation} 
Moreover, for every $0 \leq j \leq k < \infty$: $\tr_{\backslash 1}(\pi^k_j) = \sigma$ and $\pi^k_j\in{\cal C}_1({\cal H}^{\otimes j})$, as the partial trace can be done locally. As Eq. (\ref{EEE}) is true for every $k$ and the set ${\cal C}_1$ is closed, we find that for each $j$ there is a probability measure $\nu_j$ such that 
\begin{equation} \label{ex1}
\int \nu_j(d\rho) \rho^{\otimes j} \in {\cal C}_1, \hspace{0.2 cm}\int \nu_j(d\rho) \rho = \sigma.
\end{equation}
Consider the sequence of probability measures $\{ \nu_j \}$. The state space over ${\cal H}$ can be regarded as a compact subset of $\mathbb{R}^n$ and it thus follows that 
there exists a subsequence $\{ \nu_{j'} \}$ converging weakly to a measure $\nu$ (see e.g. Theorem 9.3.3 of Ref. \cite{Dud02}). One may always trace out $j-n$ copies in Eq. (\ref{ex1}) to find
\begin{equation*} 
\int \nu_j(d\rho) \rho^{\otimes n} \in {\cal C}_1, \hspace{0.2 cm}\int \nu_j(d\rho) \rho = \sigma.
\end{equation*} 
Then, from Lemma \ref{stormer}, which is stated and proved in the sequel, we find that for all $n$, $\int \nu_j(d\rho) \rho^{\otimes n} $ converges to $\int \nu(d\rho) \rho^{\otimes n}$ in trace norm. Using again that the set of single copy undistillable states is closed, we find that for every $n$,
\begin{equation*} 
\int \nu(d\rho) \rho^{\otimes n} \in {\cal C}_1
\end{equation*} 
and 
\begin{equation*} 
\int \nu(d\rho) \rho = \sigma.
\end{equation*} 
The converse direction follows straightforwardly from the definition of ${\cal T}$.
\end{proof}
\vspace{0.3 cm}

Let $(S, d)$ be a metric space. For a real valued function of $S$, the Lipschitz seminorm is defined by $||f||_L := \sup_{x \neq y} |f(x) - f(y)|/d(x, y)$. Call the supremum norm $||f||_{\infty} := \sup_{x}|f(x)|$. Then, the bounded Lipschitz seminorm is defined by
\begin{equation}
||f||_{BL} := ||f||_L + ||f||_{\infty}.
\end{equation}
Given two probabilities measures $\mu$ and $\nu$ over $S$, the L\'evy-Prohorov metric is defined by
\begin{equation*}
\beta(\mu, \nu) = \sup \left \{ \left | \int f \mu - \int f \nu \right | : ||f ||_{BL} \leq 1 \right\}. 
\end{equation*}


\begin{lemma} \label{stormer}
Given a sequence of probability measures $\mu_k$ over $D({\cal H})$ converging weakly to $\mu$, it holds true that for every $n \in \mathbb{N}$,
\begin{equation} \label{final}
\lim_{k \rightarrow \infty}\left \Vert \int \mu_{k}(d\rho) \rho^{\otimes n}  - \int \mu(d\rho) \rho^{\otimes n} \right \Vert_1 = 0,
\end{equation}
\end{lemma}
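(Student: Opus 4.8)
The plan is to reduce the operator-valued convergence in trace norm to scalar convergence of matrix entries, each of which is controlled directly by the weak-convergence hypothesis. First I would observe that $D(\mathcal{H})$ is a compact metric space under the trace distance, being a closed and bounded subset of the finite-dimensional real vector space of Hermitian operators on $\mathcal{H}$. In particular it is separable, and every continuous function on it is automatically bounded, so that the weak convergence $\mu_k \to \mu$ is equivalent to $\beta(\mu_k, \mu) \to 0$ in the L\'evy-Prohorov metric introduced above.

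Next, I would fix an orthonormal basis $\{ \ket{i} \}$ of the finite-dimensional space $\mathcal{H}^{\otimes n}$ and, for each ordered pair of basis vectors, consider the scalar function
\begin{equation*}
f_{ij}(\rho) := \bra{i} \rho^{\otimes n} \ket{j}.
\end{equation*}
Each $f_{ij}$ is a polynomial in the entries of $\rho$, hence continuous; splitting it into real and imaginary parts, both are continuous on the compact set $D(\mathcal{H})$, therefore bounded and Lipschitz, so each has finite bounded Lipschitz seminorm. Since the matrix entries of the barycenters are exactly $\bra{i} \int \mu_k(d\rho)\rho^{\otimes n}\ket{j} = \int f_{ij}\, d\mu_k$, and likewise for $\mu$, the defining inequality of $\beta$ yields
\begin{equation*}
\left| \bra{i} \int \mu_k(d\rho)\rho^{\otimes n}\ket{j} - \bra{i} \int \mu(d\rho)\rho^{\otimes n}\ket{j} \right| \longrightarrow 0
\end{equation*}
for every pair $i,j$.

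Finally I would invoke the equivalence of all norms on the finite-dimensional operator space on $\mathcal{H}^{\otimes n}$: the number of matrix entries is fixed and each converges to zero, so the trace norm of the difference $\int \mu_k(d\rho)\rho^{\otimes n} - \int \mu(d\rho)\rho^{\otimes n}$ is bounded by a constant depending only on $\dim(\mathcal{H}^{\otimes n})$ times the largest entrywise discrepancy, and hence also tends to zero. This gives (\ref{final}).

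The argument is essentially routine and I do not expect a genuine obstacle; the only point needing mild care is the passage from the scalar statement of weak convergence to convergence of the operator-valued barycenters. This is handled by choosing the right family of test functions, namely the matrix coefficients $f_{ij}$, and by exploiting finite-dimensionality twice: once to guarantee that these coefficients are bounded continuous (indeed polynomial) functions on a compact domain, and once to upgrade entrywise convergence to trace-norm convergence.
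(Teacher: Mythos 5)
Your proof is correct and takes essentially the same route as the paper's: both reduce the claim to scalar weak convergence via the bounded-Lipschitz metric (Dudley's Theorem 11.3.3) applied to test functions of the form $\rho \mapsto \tr(\rho^{\otimes n} X)$, whose $BL$-seminorm is bounded for fixed $n$. The only cosmetic difference lies in the final step: you specialize to $X = \ket{j}\bra{i}$ for a fixed basis and invoke equivalence of norms in finite dimension, whereas the paper lets $X$ range over the unit ball $\Vert X \Vert_{\infty} \leq 1$ and concludes directly from the variational characterization of the trace norm.
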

\begin{proof}
The Lemma is a simple application of Theorem 11.3.3 of Ref. \cite{Dud02}, stating that the sequence of probability measures $\mu_k$ converge weakly to $\mu$ iff $\beta(\mu_k, \mu) \rightarrow 0$. As, by assumption, $\mu_k \rightarrow \mu$ weakly, we have
\begin{equation} \label{prob22}
\beta(P_k, P) \rightarrow 0. 
\end{equation}
Consider the functions:
\begin{equation*}
f_{X, n}(\rho) := \tr(\rho^{\otimes n} X), 
\end{equation*}
where $|| X ||_{\infty} \leq 1$. It is easy to see that $|| f_{X, n} ||_{BL} \leq K(n)$, for some bounded function of $n$ only. The Lemma then follows from Eq. (\ref{prob22}) and the variational characterization of the trace norm. 
\end{proof}
\vspace{0.3 cm}

Considering the characterization ${\cal T}$ just obtained, it should now be clear how we are going to prove Theorem \ref{T=coC}. The main idea is that we can perform tomography by LOCC in some of the copies of the state $\int \mu(d \rho) \rho^{\otimes n}$ and, conditioned on the estimated state, filter a particular one. Then, as this resulting state should be single copy undistillable and we can obtain an arbitrary large number of copies of any of the states appearing in the convex combination as an outcome of such a filtering, we find that $\mu$ can only be supported on undistillable states. 
In order to make the above handwaving argument rigorous, we now introduce the concept of an informationally complete POVM.

\subsection{Informationally Complete POVM} \label{infcomp}

An informationally complete POVM in ${\cal B}(\mathbb{C}^{m})$ is defined as a set of positive semi-definite operators $A_i$ forming a resolution of the identity and such that $\{ A_i \}$ forms a basis for ${\cal B}(\mathbb{C}^{m})$. Informationally complete POVMs can be explicitly constructed in every dimension (see e.g. \cite{KR05}).

We say that a family $\{ A_i \}$  of elements from 
${\cal B}(\mathbb{C}^m)$ is a dual of the a 
family $\{ A_i^* \}$ if for all $X \in {\cal B}(\mathbb{C}^m)$,
\begin{equation} \label{dualbasis}
X = \sum_{i} \tr[A_i X] A_i^*.
\end{equation}
The above equation implies in particular that the operator 
$X$ is fully determined by the expectations values $\tr[A_i X]$. 
Another useful property is that for every informationally complete POVM in ${\cal B}(\mathbb{C}^{m})$ there is a real number $K_m$\footnote{For example, in the family of informationally complete POVM constructed in Ref. \cite{KR05}, $K_m \leq m^4$.} such that for every two states $\rho$ and $\sigma$,
\begin{equation} \label{eq3.1.5}
|| \rho - \sigma ||_1 \leq K_m || p_{\rho} - p_{\sigma}||_1,
\end{equation}
with $p_{\rho} = \tr(A_i \rho)$ and $p_{\sigma} = \tr(A_i \sigma)$. We will use this relation in chapter \ref{QHT}.

For proving Theorem \ref{T=coC} we make use of the simple observation that if $\{ A_i \}$ and $\{ B_j \}$ are informationally complete POVMs on ${\cal B}(\mathbb{C}^m)$ and 
${\cal B}(\mathbb{C}^l)$, then $\{ M_{i,j} \}$, defined by 
\begin{equation}\label{Prod}
	M_{i,j} := A_i \otimes B_j, 
\end{equation}	
is an informationally complete POVM on ${\cal B}(\mathbb{C}^m\otimes \mathbb{C}^l)$.

 

\vspace{0.5 cm}

\begin{proof} (Theorem \ref{T=coC})

We proceed by showing that both $\text{co}({\cal C}) \subseteq {\cal T}$ and ${\cal T} \subseteq \text{co}({\cal C})$ hold true. If $\rho \in \text{co}({\cal C})$, then there are undistillable states $\rho_i$ and a probability distribution $p_i$ such that $\rho = \sum_i p_i \rho_i$. As $\sum_i p_i \rho_i^{\otimes k} \in {\cal T}_k$ for every $k$, we find that indeed $\text{co}({\cal C}) \subseteq {\cal T}$.

Let us then focus on the converse inclusion. To this aim, let $\pi \in {\cal T}$. Then for all $n \in \mathbb{N}$ there exists
a $\pi_n$ given by Eq. (\ref{pin}) such that $\tr_{\backslash 1}[\pi_n] = \pi$. 

We now show that the probability measure $\mu$ obtained from Eq. (\ref{pin}) is up to a set of measure zero 
supported only on undistillable states. We do this proving that for every 
$n \in \mathbb{N}$, the probability measure $\mu(\rho)$ 
vanishes for all  $n$-distillable states, except again in a set of measure zero. 
The main idea goes as follows. We consider
$\pi_{n+m}$ and construct a SLOCC that performs
measurements of an informationally complete
POVM in the last $m$ systems. Based on this information,
we perform a further operation on the first $n$ systems  
with the aim of distilling the entanglement of the state at hand when it is 
distillable or filtering it out when it is not.

More specifically,  
for each $n,m\in\mathbb{N} $ we define
the SLOCC map 
$\Lambda_{m, n} : 
{\cal B}({\cal H}^{\otimes (m + n)}) \rightarrow 
{\cal B}(\mathbb{C}^2 \otimes \mathbb{C}^2)$ as follows: 

\begin{itemize}

\item We first measure the 
informationally-complete POVM $\{ M_{i,j} \}=:
\{ M_{k} \}$ of Eq.\ (\ref{Prod})
individually on each of 
the last $m$ bipartite systems, 
where $k$ is the joint index labeling the outcomes. 
This is clearly a LOCC operation as the parties can 
implement their measurements individually and 
communicate the outcomes obtained to each other. 
In this way, one can estimate
an empirical probability distribution $P_m(k)$ 
from the relative frequency of the outcomes 
$k$ of the POVM. 
\item 
Then, using Eq.\ (\ref{dualbasis}), we form the operator
\begin{equation*}
X_m = \sum_k P_m(k) M_k^* \in {\cal B}({\cal H}).
\end{equation*}
As this might not be a valid density operator, we define $\sigma_m\in {\cal D}({\cal H})$ 
as a state which is closest in trace norm to $X_m$.
This is done based on the measurement outcomes
obtained above. If $\sigma_m$ defined in this way is 
not unique, we select one from the respective set of 
solutions. The state
$\sigma_m$ can now either be $n$-distillable or
$n$-undistillable. 

\item In the first case, so if
$\sigma_m\in {\cal D}({\cal H})$ is $n$-distillable,
we apply the trace preserving LOCC map 
$\Omega: {\cal B}({\cal H}^{\otimes n})
\rightarrow {\cal B}(\mathbb{C}^2\otimes \mathbb{C}^2)$ on the remaining $n$ systems which 
minimizes the following linear function: $\tr[\Omega(\sigma_m^{\otimes n})(\id/2 - \phi_2)]$. 
The map $\Omega$ can be identified with the optimal trace preserving quasi-distillation map
for $\sigma_m^{\otimes n}$.

In the second case, so if $\sigma_m\in {\cal D}({\cal H})$ is $n$-undistillable, we discard 
the state and replace it by the zero operator on ${\cal H}^{\otimes n}$.

\end{itemize}
This procedure defines our family of SLOCC
operations $\Lambda_{m, n} : 
{\cal B}({\cal H}^{\otimes (m + n)}) \rightarrow 
{\cal B}(\mathbb{C}^2 \otimes \mathbb{C}^2)$. Note that as $\pi_{n + m} \in {\cal C}_1({\cal H}^{\otimes{(n + m)}})$,
\begin{equation} \label{eq52}
\tr[\Lambda_{n, m}(\pi_{n + m})(\id/2 - \phi_2)] \geq 0,
\end{equation}
for all $m, n$.

From the law of large numbers \cite{Dud02} we can infer that the probability that the trace norm difference
of the estimated state with the real state is larger than $\varepsilon$, for any $\varepsilon > 0$, 
goes to zero when $m \rightarrow \infty$. So we find 
that the family of functions, defined  
for states $\rho\in {\cal D}({\cal H})$ as
\begin{equation*}
f_{m}(\rho) :=  \tr[\Lambda_{m, n}(\rho^{\otimes{(n + m)}})(\id/2 - \phi_2)],
\end{equation*}
converge pointwise to 
\begin{equation*}
f(\rho) := 
\begin{cases}
\tr[\Xi_{\rho}(\rho^{\otimes n})(\id/2 - \phi_2)], & \text{if} \hspace{0.2 cm} \rho \hspace{0.2 cm} \text{is $n$-distillable} \\
0 & \text{otherwise},
\end{cases}
\end{equation*}
where $\Xi_{\rho}:{\cal B}({\cal H}^{\otimes n}) \rightarrow 
{\cal B}(\mathbb{C}^2 \otimes \mathbb{C}^2)$ is an optimal LOCC map for $\rho^{\otimes n}$, i.e. a trace preserving LOCC operation that 
minimizes $\tr[\Xi(\rho^{\otimes n})(\id/2 - \phi_2)]$.

To proceed, we first note the upper bound $|f_{m}(\rho)| = | \tr[\Lambda_{m, n}(\rho^{\otimes{(n + m)}})(\id/2 - \phi_2)]| \leq 1$, for every $\rho\in{\cal D}({\cal H})$. As the functions $\{ f_m \}$ are measurable and dominated by the unit constant function, which is integrable on the state space, we can apply Lebesgue dominated convergence theorem \cite{RS72} to get from Eq.\ (\ref{eq52}) 
\begin{eqnarray} \label{limlebegue}
0 & \leq & \lim_{m \rightarrow \infty} \int \mu(d\rho) \tr[\Lambda_{m, n}(\rho^{\otimes{(n + m)}})(\id/2 - \phi_2)]\\ &=&  \lim_{m \rightarrow \infty} \int \mu(d\rho) f_m(\rho)  = \int \mu(d\rho) \lim_{m \rightarrow \infty} f_m(\rho) \nonumber \\ 
&=& \int \mu(d\rho) f(\rho) \nonumber\\
&=&
\int_{{\cal D}({\cal H})\backslash{\cal C}_{n}({\cal H})} 
\mu(d\rho) \tr[\Xi_{\rho}(\rho^{\otimes n})(\id/2 - \phi_2)],\nonumber
\end{eqnarray}
where ${\cal D}({\cal H})\backslash{\cal C}_{n}({\cal H})$ is the set of
the $n$-distillable states.
By definition, we have that for each 
$n$-distillable state $\rho$, $\tr[\Omega(\rho^{\otimes n})(\id/2 - \phi_2)] < 0$\footnote{Actually, what we have by definition is that there exists a SLOCC map $\Lambda$, not necessarily trace preserving, such that $\tr[\Lambda(\rho^{\otimes n})(\id/2 - \phi_2)] < 0$. However, given a SLOCC $\Lambda$ satisfying the previous equation, we can easily construct a deterministic LOCC operations that also satisfies it as follows. We try to implement $\Lambda$ in some input state. This works out with some finite probability. If we fail to implement $\Lambda$, then we throw away the state at hand and prepare the product state $\ket{0, 0}\bra{0, 0}$.}. We hence find from Eq.\ (\ref{limlebegue}) that 
$\mu$ can be non-zero only in a zero measure subset of the set 
of $n$-distillable states. As this is true for an arbitrary $n$, 
we find that $\mu$ must be supported on the set of 
undistillable states.
\end{proof}
\vspace{0.3 cm}
With the proof of Theorem \ref{T=coC} fresh in mind we can now easily prove Corollary \ref{posdist}.

\begin{proof} (Corollary \ref{posdist})
We prove the result by contradiction. Suppose conversely that 
for every $n \in \mathbb{C}{N}$ and every 
SLOCC $\Omega$, $\tr[\Omega(\omega_n)(\lambda\id - \phi_D)] \geq 0$. 
Then we can follow the proof of Theorem \ref{T=coC} to show that $\rho \in {\cal T}$, in contradiction with the assumption that it is not. 

The key point is to notice that Theorem \ref{T=coC} also holds if we replace the single copy undistillability condition $F_2(\omega_n) = 1/2$  by 
$F_D(\omega_n) \leq \lambda$, for any integer $D$ and $\lambda \in [1/D, 1)$. We only have to modify the third step of the SLOCC map we defined as follows: we now discard the state if the estimated state $\sigma_m$ is such that $\tr[\Omega(\sigma_m^{\otimes n})] \leq \lambda$ for every SLOCC map $\Omega$, or apply the optimal SLOCC map $\Omega$ minimizing $\tr[\Omega(\sigma_m^{\otimes n})
(\lambda \id - \phi_D)]$ otherwise. The proof then proceeds in a completely analogous way.
\end{proof}
\vspace{0.3 cm}

It should be pointed out that we cannot say anything about the rate of distillation from copy-correlated states. Although Corollary \ref{posdist} shows that we can distill an arbitrary good approximation of a maximally entanglement state of arbitrary dimension, it does not say anything about the distillable rate. Indeed, from the proof of Theorem \ref{T=coC} and Corollary \ref{posdist} we can find out an explicit distillation protocol for this task: The parties trace out several copies, perform state tomography in some of the remaining, and apply the optimal distillation map of the estimated state in the others. It turns out that this protocol has a zero rate, as in the first step we need to trace out a much larger number of copies than the ones we keep in, in order to apply the finite quantum de Finetti Theorem. It seems possible that this difficulty can be overcome by employing the exponential quantum de Finetti theorem, discussed in section \ref{expdf}. This is however left as an open problem for future research.

\section{Proof of Theorem \ref{actcor}} \label{proof2}

We now turn to the proof of Theorem \ref{actcor}. We start by proving two auxiliary Lemmata, 
which give a characterization for the elements of the dual cones of the sets ${\cal S}_k({\cal H}^{\otimes k})$ 
and ${\cal T}_k({\cal H}^{\otimes k})$, which will again sometimes
be abbreviated as ${\cal S}_k$ and ${\cal T}_k$. 

\begin{lemma}\label{S(Q)=0}                  
If $Q \in ({\cal S}_k)^*$, then $\hat{S}_{k}(Q) \geq 0$.                                    
\end{lemma}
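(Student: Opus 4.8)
The plan is to establish positivity of $\hat{S}_k(Q)$ directly, by verifying $\bra{\psi}\hat{S}_k(Q)\ket{\psi} \geq 0$ for an arbitrary vector $\ket{\psi} \in {\cal H}^{\otimes k}$, and to recast this expectation value as a trace of $Q$ against a genuine permutation-symmetric state, so that the hypothesis $Q \in ({\cal S}_k)^*$ can be brought to bear.

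First I would record two elementary properties of the symmetrization map $\hat{S}_k$. It is self-adjoint with respect to the Hilbert--Schmidt inner product: since each $P_\pi$ is unitary, for any operators $A, B$ one has $\tr[\hat{S}_k(A) B] = \frac{1}{k!}\sum_\pi \tr[P_\pi A P_\pi^\dagger B] = \frac{1}{k!}\sum_\pi \tr[A\, P_\pi^\dagger B P_\pi] = \tr[A\, \hat{S}_k(B)]$, where the final step reindexes the sum by $\pi \mapsto \pi^{-1}$. Second, $\hat{S}_k$ sends any density operator to a permutation-symmetric one: $\hat{S}_k(\rho)$ is a convex combination of the operators $P_\pi \rho P_\pi^\dagger$, each of which is positive and of unit trace, and it is manifestly invariant under conjugation by every $P_\sigma$, so $\hat{S}_k(\rho) \in {\cal S}_k({\cal H}^{\otimes k})$.

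With these facts in hand the argument is immediate. Fix $\ket{\psi}$ and compute, using self-adjointness,
\begin{equation*}
\bra{\psi}\hat{S}_k(Q)\ket{\psi} = \tr[\hat{S}_k(Q)\,\proj{\psi}] = \tr[Q\,\hat{S}_k(\proj{\psi})].
\end{equation*}
The operator $\hat{S}_k(\proj{\psi})$ is a permutation-symmetric state, hence an element of ${\cal S}_k$ and in particular of $\text{cone}({\cal S}_k)$. By the defining property of the dual cone, the assumption $Q \in ({\cal S}_k)^*$ then yields $\tr[Q\,\hat{S}_k(\proj{\psi})] \geq 0$. As $\ket{\psi}$ was arbitrary, this shows $\hat{S}_k(Q) \geq 0$.

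There is no serious obstacle here; the statement follows from a short duality argument. The only point demanding care is the recognition that the symmetrization must be moved off $Q$ and onto the test vector: an arbitrary rank-one projector $\proj{\psi}$ need not be permutation-symmetric and so cannot be inserted directly into the dual-cone condition, whereas its twirl $\hat{S}_k(\proj{\psi})$ is a legitimate symmetric state to which the hypothesis applies. The self-adjointness of $\hat{S}_k$ under the Hilbert--Schmidt inner product is precisely what licenses this transfer.
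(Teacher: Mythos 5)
Your proof is correct and follows essentially the same route as the paper's: the paper likewise moves the symmetrization off $Q$ via the Hilbert--Schmidt self-adjointness of $\hat{S}_k$, writing $\tr[X\,\hat{S}_k(Q)] = \tr[\hat{S}_k(X)\,Q] \geq 0$ for every positive semidefinite $X$, which is exactly your argument with $X = \proj{\psi}$. Your version merely spells out the self-adjointness computation and the membership $\hat{S}_k(\proj{\psi}) \in {\cal S}_k$ that the paper leaves implicit.
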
 
\begin{proof}                        
As $Q \in ({\cal S}_{k})^*$, we have that for every positive semi-definite operator $X \geq 0$ acting on ${\cal H}^{\otimes k}$, $\tr[ X \hat{S}_{k}(Q)] = \tr[\hat{S}_{k}(X)Q] \geq 0$. This can only be true if $\hat{S}_{k}(Q) \geq 0$.
\end{proof}

\begin{lemma} \label{dual}     
For each $k \in \nn$ and for every element                       
$X$ of $\hspace{0.1 cm}{\cal T}_{k}^{*}$, there exist an 
SLOCC map $\Lambda$ and an operator $Q \in ({\cal S}_{k})^*$ such that 
\begin{eqnarray} \label{NF1}                                     
	&& X\otimes \id^{\otimes{( k - 1)}} = \Lambda(\id/2 - \phi_2) + Q.                                       
\end{eqnarray}
\end{lemma}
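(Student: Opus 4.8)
The plan is to read Lemma~\ref{dual} as a statement of convex duality and prove it by computing a dual cone and applying the bipolar theorem. First I would translate membership in ${\cal T}_k^*$ into a positivity condition on the inflated operator $\tilde X := X \otimes \id^{\otimes(k-1)}$. By definition every $\rho \in {\cal T}_k$ equals $\tr_{\backslash 1}(\omega)$ for some $\omega \in {\cal S}_k({\cal H}^{\otimes k}) \cap {\cal C}_1({\cal H}^{\otimes k})$, and $\tr[X\rho] = \tr[\tilde X \omega]$. Hence $X \in {\cal T}_k^*$ if and only if $\tr[\tilde X \omega] \geq 0$ for all $\omega \in {\cal S}_k \cap {\cal C}_1$, i.e. $\tilde X \in ({\cal S}_k \cap {\cal C}_1)^*$. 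So the whole lemma reduces to identifying this dual cone with the convex cone
\[
\mathcal{K} := \{ \Lambda(\id/2 - \phi_2) + Q \ : \ \Lambda \text{ an SLOCC map},\ Q \in ({\cal S}_k)^* \},
\]
which is indeed a convex cone, since the SLOCC images $\{\Lambda(\id/2-\phi_2)\}$ are closed under addition (concatenating Kraus operators) and nonnegative scaling, and $({\cal S}_k)^*$ is a closed convex cone.

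Next I would compute $\mathcal{K}^*$. A Hermitian $\rho$ lies in $\mathcal{K}^*$ precisely when (a) $\tr[\rho\,\Lambda(\id/2-\phi_2)] \geq 0$ for every SLOCC $\Lambda$, and (b) $\tr[\rho\, Q] \geq 0$ for every $Q \in ({\cal S}_k)^*$. Condition (b) says $\rho \in ({\cal S}_k)^{**} = \overline{\text{cone}}({\cal S}_k)$, so $\rho$ is a nonnegative multiple of a permutation-symmetric state and in particular $\rho \geq 0$; here Lemma~\ref{S(Q)=0} is the natural tool for pinning down $({\cal S}_k)^*$. For (a) I would pass to adjoints: $\tr[\rho\,\Lambda(\id/2-\phi_2)] = \tr[\Lambda^{\dagger}(\rho)(\id/2-\phi_2)]$, and as $\Lambda$ ranges over SLOCC maps $\mathcal{B}(\mathbb{C}^2\otimes\mathbb{C}^2)\to\mathcal{B}({\cal H}^{\otimes k})$ its adjoint $\Lambda^{\dagger}$ ranges over the distillation-direction SLOCC maps $\mathcal{B}({\cal H}^{\otimes k})\to\mathcal{B}(\mathbb{C}^2\otimes\mathbb{C}^2)$. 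Because the singlet fraction $F_2$ is scale invariant, condition (a) for a positive $\rho$ is exactly $F_2(\rho/\tr[\rho]) \leq 1/2$, i.e. $\rho \in \text{cone}({\cal C}_1)$. Combining (a) and (b) gives $\mathcal{K}^* = \text{cone}({\cal S}_k \cap {\cal C}_1)$.

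A point worth flagging in this step is that ${\cal C}_1({\cal H}^{\otimes k})$ is genuinely convex, even though the full undistillable set ${\cal C}$ need not be: the defining inequalities $\tr[\Omega(\rho)(\id/2-\phi_2)] \geq 0$ are \emph{linear} in $\rho$ for each fixed $\Omega$, so ${\cal C}_1$ is an intersection of half-spaces. Thus ${\cal S}_k \cap {\cal C}_1$ is a closed bounded convex set of states, its generated cone is closed, and the bipolar theorem yields $({\cal S}_k \cap {\cal C}_1)^* = (\text{cone}({\cal S}_k \cap {\cal C}_1))^* = (\mathcal{K}^*)^* = \overline{\mathcal{K}}$. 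Together with the reduction in the first paragraph this already gives $\tilde X \in \overline{\mathcal{K}}$.

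The hard part will be upgrading $\tilde X \in \overline{\mathcal{K}}$ to $\tilde X \in \mathcal{K}$, i.e. establishing that $\mathcal{K}$ is closed so that the decomposition holds exactly rather than in a limit. My approach would be a compactness argument: bound the number of Kraus operators of $\Lambda$ by $\dim(\mathcal{B}({\cal H}^{\otimes k}))^2$ using Carath\'eodory, restrict to trace-nonincreasing SLOCC maps (legitimate, since scaling is absorbed by the cone and by $\phi_2$'s scale-invariance), so that the constraint $\sum_j A_j^{\dagger}A_j \otimes B_j^{\dagger}B_j \leq \id$ makes the normalized maps a compact set and $\{\Lambda(\id/2-\phi_2)\}$ a compact base; one then argues that the cone generated by this compact base, summed with the closed cone $({\cal S}_k)^*$, is closed. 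Verifying closedness of this Minkowski sum of cones is the only genuinely delicate step, and is where I would spend the most care; everything else is a routine unwinding of the adjoint/SLOCC correspondence and the bipolar theorem.
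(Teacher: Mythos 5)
Your reduction and your dual computation track the paper's own proof almost exactly: the paper likewise inflates $X$ to $X \otimes \id^{\otimes (k-1)} \in ({\cal S}_k \cap {\cal C}_1)^*$ and then decomposes this dual cone as ${\cal S}_k^* + {\cal C}_1^*$, identifying ${\cal C}_1^*$ with the SLOCC-image cone $\{\Lambda(\id/2 - \phi_2)\}$ via exactly the adjoint/scale-invariance unwinding you describe (your computation of $\mathcal{K}^*$ is the easy inclusion of that decomposition run in reverse through the bipolar theorem). The single step on which the lemma actually turns --- that the sum is \emph{closed}, so that $\tilde X$ lies in $\mathcal{K}$ itself and not merely in $\overline{\mathcal{K}}$ --- is the step the paper disposes of by citing the identity $(A \cap B)^* = A^* + B^*$ for closed convex cones \cite{HL97}, applied to $A = \text{cone}({\cal S}_k)$ and $B = \text{cone}({\cal C}_1)$.

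It is precisely there that your proposal has a genuine gap, and the compactness mechanism you sketch does not close it. The compact convex set $\{\Lambda(\id/2 - \phi_2) : \Lambda \ \text{trace-nonincreasing SLOCC}\}$ contains the origin --- take $\Lambda = 0$, or Kraus operators $A \otimes B = \ket{v_A v_B}\bra{ab}$ with a product vector satisfying $\bra{ab}\phi_2\ket{ab} = 1/2$, so that $\bra{ab}(\id/2 - \phi_2)\ket{ab} = 0$ --- hence it is not a base, and a cone generated by a compact convex set with the origin on its boundary need not be closed (consider the cone over a disc tangent to the origin). Likewise, a Minkowski sum of two closed convex cones is not closed in general, so "compact base plus closed cone" does not deliver closedness by itself. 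What rescues the argument is a regularity condition your sketch never invokes: the maximally mixed state is permutation-symmetric and is surrounded by a ball of separable, hence $1$-undistillable, states, so $\id$ lies in $\mathrm{ri}\,\text{cone}({\cal S}_k) \cap \mathrm{ri}\,\text{cone}({\cal C}_1)$; under this constraint qualification the exact identity $(\text{cone}({\cal S}_k) \cap \text{cone}({\cal C}_1))^* = {\cal S}_k^* + {\cal C}_1^*$, with the sum closed, is a standard convex-analysis theorem, and is in effect what the paper's citation supplies. Equivalently, you can verify the recession condition ${\cal S}_k^* \cap (-{\cal C}_1^*) = \{0\}$ directly: any $Q$ in this intersection has $\tr[Q\rho] \leq 0$ on a ball of states around $\id/d^k$ contained in ${\cal C}_1$, with equality at the (symmetric) center, which forces the affine functional $\tr[Q\,\cdot\,]$ to vanish identically and hence $Q = 0$; closedness of the sum then follows. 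Without one of these inputs your argument proves only $\tilde X \in \overline{\mathcal{K}}$, strictly weaker than the lemma. (One residual point: replacing $\overline{W}$ by $W = \{\Lambda(\id/2-\phi_2)\}$ itself, i.e.\ exact representability by a genuine SLOCC map rather than a limit, is asserted without proof in the paper as well, so on that last normalization issue you and the paper are equally informal.)
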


\begin{proof} 
For any two closed convex cones $A$ and $B$ defined on a finite dimensional Hilbert space we have $(A \cap B)^{*} = A^{*} + B^{*}$ \cite{HL97}. It is easily seen that $\text{cone}({\cal S}_k \cap {\cal C}_1) = \text{cone}({\cal S}_k) \cap \text{cone}({\cal C}_1)$,
where
the {\it conic hull} is defined for a set $C$ as
\begin{equation*}
	\text{cone}(C) := 
	\biggl\{ \sum_j \lambda_j W_j: 
	\hspace{0.2 cm} \lambda_j \geq 0, \hspace{0.1 cm} 
	W_j \in C \biggr\}.
\end{equation*}
Therefore,                                    
\begin{eqnarray*}                        
	({\cal S}_{k} \cap {\cal C}_{1})^{*} &=& 
	[\text{cone}({\cal S}_{k} \cap {\cal C}_{1})]^{*} 
	\nonumber\\
	&=& 
	[\text{cone}({\cal S}_{k})\cap \text{cone}({\cal C}_1)]^*
	 = {\cal S}_{k}^* + {\cal C}_{1}^*.
\end{eqnarray*}                           
This in turn implies that every element $Y$ of $({\cal S}_{k} \cap {\cal C}_{1})^{*}$ can be written as the right hand side of Eq.\ 
(\ref{NF1}).
We find that if $X \in {\cal T}_{k}^{*}$, then $X \otimes {\id}^{\otimes ( k - 1) }$ is an element of $({\cal S}_{k} \cap {\cal C}_{1})^{*}$. Indeed,
\begin{eqnarray*}                                
	\tr[X \rho] \geq 0 \hspace{0.15 cm}                             
	\forall \rho \in {\cal T}_{k} &\Rightarrow &
	\tr[X \tr_{\backslash 1}(\pi)] \geq 0 \hspace{0.15 cm} \forall \pi \in 
	{\cal S}_{k} \cap {\cal C}_{1}  \nonumber\\
	&\Rightarrow& \tr[(X \otimes {\id}^{\otimes{( k - 1)}}) \pi] \geq 0 
	\hspace{0.15 cm} \forall \pi \in {\cal S}_{k} \cap {\cal C}_{1}.
	\nonumber\\
\end{eqnarray*}                                                
Hence, any element of the dual cone of ${\cal T}_k$ can be written as a sum of an
element of the dual cone of ${\cal S}_k$ and an element
of the dual cone of ${\cal C}_1$, which can always be expressed as $\Lambda(\id/2 - \phi_2)$ for some SLOCC map $\Lambda$. 
\end{proof}          
\vspace{0.3 cm}

The next Lemma is the key element to prove Theorem \ref{actcor}. It makes a connection between separability and the structure of the dual sets $({\cal T}_k)^*$. Before we turn to its formulation and proof, let us introduce some notation which will render the discussion more transparent. In this Lemma, we will
set ${\cal H} := \mathbb{C}^{2d} \otimes \mathbb{C}^{2d}$. If we have a tensor product between a $d\times d$-system and a $2\times 2$ system, the latter is thought to be embedded in a $d\times d$-dimensional system. We reserve $\id$ for the identity operator acting on ${\cal H}$. 
The identity operator acting on $\mathbb{C}^m \otimes \mathbb{C}^m$, for every other $m$ different from $2d$, will be denoted by $\id_{m^2}$.

\begin{lemma}
\label{crucial}
Let $\sigma \in {\cal D}(\mathbb{C}^d \otimes \mathbb{C}^d)$ and 
$k \in \mathbb{C}{N}$. If 
\begin{equation}
	\sigma \otimes (\id_{4}/2 - \phi_2) \in ({\cal T}_k)^{*},
\end{equation}	
then $\sigma$ is separable. 
\end{lemma}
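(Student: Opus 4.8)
The plan is to work entirely inside the dual–cone picture set up in the two preceding lemmas, and to read off separability of $\sigma$ from the normal form of the elements of $({\cal T}_k)^*$. The first thing I would record is the structural feature that makes the particular operator $\id_4/2-\phi_2$ special: it is the partial transpose of the two–qubit singlet projector, $\id_4/2-\phi_2=(\ket{\psi^-}\bra{\psi^-})^{\Gamma}$, so that $(\id_4/2-\phi_2)^{\Gamma}=\ket{\psi^-}\bra{\psi^-}\ge 0$. In other words $\id_4/2-\phi_2$ is a \emph{decomposable} witness. This fact is what I expect to convert the otherwise opaque SLOCC term coming from Lemma~\ref{dual} into something manifestly positive.

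Concretely, applying Lemma~\ref{dual} to the element $X:=\sigma\otimes(\id_4/2-\phi_2)\in({\cal T}_k)^*$ yields an SLOCC map $\Lambda$ and an operator $Q\in({\cal S}_k)^*$ with $\sigma\otimes(\id_4/2-\phi_2)\otimes\id^{\otimes(k-1)}=\Lambda(\id/2-\phi_2)+Q$. I would then take the partial transpose with respect to all $B$ subsystems. On the left this replaces $\id_4/2-\phi_2$ by $\ket{\psi^-}\bra{\psi^-}$ and $\sigma$ by $\sigma^{\Gamma}$; on the SLOCC term, partial transposition merely conjugates the local Kraus operators of $\Lambda$, turning $\Lambda(\id/2-\phi_2)$ into $\tilde\Lambda(\ket{\psi^-}\bra{\psi^-})$ for a completely positive (in fact separable) map $\tilde\Lambda$, which is positive semidefinite. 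The outcome is the clean identity
\begin{equation*}
\sigma^{\Gamma}\otimes\ket{\psi^-}\bra{\psi^-}\otimes\id^{\otimes(k-1)}=\tilde\Lambda\!\left(\ket{\psi^-}\bra{\psi^-}\right)+Q^{\Gamma},
\end{equation*}
in which the first summand is positive semidefinite and, because $\ket{\psi^-}$ has Schmidt rank two, is a sum of projectors onto vectors of Schmidt rank at most two across the $A{:}B$ cut of ${\cal H}^{\otimes k}$. Lemma~\ref{S(Q)=0} controls the remaining term through $\hat S_k(Q)\ge 0$, so that testing against any symmetric positive semidefinite $N=\hat S_k(M)$ gives $\tr[(\sigma\otimes(\id_4/2-\phi_2))\,\tr_{\backslash 1}N]\ge\tr[(\id/2-\phi_2)\,\Omega(N)]$, with $\Omega$ the forward SLOCC map dual to $\Lambda$.

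The goal of the final step is to turn this into the statement $\tr[W\sigma]\ge 0$ for every entanglement witness $W$ on $\mathbb{C}^d\otimes\mathbb{C}^d$, which by $\mathcal S=\mathcal W^{*}$ is exactly separability. Equivalently, I want to show that the effective operators $\rho\mapsto \bra{\psi^-}\rho^{\Gamma}\ket{\psi^-}$, obtained by feeding states $\rho$ of ${\cal T}_k$ through the qubit witness, already generate the whole witness cone. Here I would use that ${\cal T}_k$ is rich: it contains every PPT state (via $\omega_k=\rho^{\otimes k}$) and, more importantly, the $k$–undistillable NPT states, which are known to exist for each fixed $k$. The PPT members only yield decomposable witnesses and, through the identity above, at best force $\sigma$ to be PPT; upgrading PPT to genuine separability — so as to exclude even bound–entangled $\sigma$ — must come from the non-PPT members of ${\cal T}_k$.

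The main obstacle is precisely this last extraction. The SLOCC map $\Omega$ produced by Lemma~\ref{dual} is not under our control, and every naive choice of test state that forces its contribution to have a definite sign (separable, or PPT, or single–copy undistillable $N$) lands $\tr_{\backslash 1}N$ back inside ${\cal T}_k$, making the inequality tautological. Breaking this circularity is the heart of the argument: one must exploit the decomposable structure $\id_4/2-\phi_2=(\ket{\psi^-}\bra{\psi^-})^{\Gamma}$ together with the $k$–undistillable NPT elements of ${\cal T}_k$ to realise the non-decomposable witnesses needed to certify entanglement of an arbitrary $\sigma$. Dually, this is the same as constructing, for each entangled $\sigma$, a copy-correlated $k$–undistillable activator $\rho$ with $\tr[(\sigma\otimes(\id_4/2-\phi_2))\rho]<0$ — which is exactly the activation content that Theorem~\ref{actcor} is built to deliver, and is where I expect the real work to lie.
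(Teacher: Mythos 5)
Your proposal has a genuine gap, and the route you sketch cannot be completed as stated. The paper's proof of Lemma~\ref{crucial} never takes a partial transpose; its crucial move is a \emph{reduction to the single-copy case}. Starting from the normal form of Lemma~\ref{dual}, one applies the symmetrization $\hat{S}_k$ to both sides, multiplies by $\id \otimes (\id_{d^2} \otimes \ket{0,0}\bra{0,0})^{\otimes (k-1)}$, and traces out everything but the first $\mathbb{C}^{2d}\otimes\mathbb{C}^{2d}$ system. This works because $\bra{0,0}(\id_4/2 - \phi_2)\ket{0,0} = 0$, so the projection annihilates the witness part in all ancillary slots; because $\hat{S}_k(Q) \geq 0$ by Lemma~\ref{S(Q)=0}, so the residue $P$ stays positive semidefinite; and because symmetrization followed by projection onto a product state is itself SLOCC, so the first term remains of the form $\Upsilon(\id_4/2-\phi_2)$ with $\Upsilon$ SLOCC. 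One lands exactly on the single-copy identity $\sigma \otimes (\id_4/2 - \phi_2) = \Upsilon(\id_4/2-\phi_2) + P$, and the heavy lifting is then done by citing Masanes' result \cite{Mas06}, which states that precisely this equation implies separability of $\sigma$. You identified correctly that "the real work" lies in converting the normal form into separability, but you missed that this work is already available as an external lemma; your proposal instead tries to rebuild it and, by your own admission, stalls.

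Two further points. First, your partial-transpose step is actively harmful rather than merely tangential: Lemma~\ref{S(Q)=0} controls $\hat{S}_k(Q)$, not $Q^{\Gamma}$, and since the partial transpose of a positive operator need not be positive, the term $Q^{\Gamma}$ in your identity carries no sign information whatsoever — so not even PPT-ness of $\sigma$ follows from it. Second, your proposed completion is circular: you say the missing extraction "is exactly the activation content that Theorem~\ref{actcor} is built to deliver," but in the paper the logical order is the reverse — Theorem~\ref{actcor} is \emph{derived from} Lemma~\ref{crucial} (together with the activation protocol of Refs.~\cite{Mas06, VW02}), so it cannot be invoked to prove the lemma. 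What breaks the circularity you worried about is not any cleverness with non-decomposable witnesses or NPT $k$-undistillable states, but the fact that Masanes' single-copy theorem certifies full separability (not just PPT-ness) directly from the equation $\sigma \otimes (\id_4/2-\phi_2) = \Upsilon(\id_4/2-\phi_2) + P$ with $\Upsilon$ SLOCC and $P \geq 0$.
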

\begin{proof}
By Lemma \ref{dual} we can write
\begin{equation}
\sigma \otimes (\id_{4}/2 - \phi_2) \otimes \id^{\otimes{( k - 1)}} = \Lambda(\id_{4}/2 - \phi_2) + Q,
\end{equation}
for some SLOCC map $\Lambda$ and an operator $Q \in {\cal S}_k^*$. Applying the symmetrization 
operator $\hat{S}_k$ to both sides of the previous equation, multiplying both sides from the left 
with $\id \otimes (\id_{d^2} \otimes \ket{0, 0}\bra{0, 0})^{\otimes{( k - 1)}}$ and taking the 
partial trace with respect to all systems except the first $\mathbb{C}^{2d}\otimes \mathbb{C}^{2d}$-dimensional subsystem, we find
\begin{equation} \label{mas}
	\sigma \otimes (\id_{4}/2 - \phi_2) = 
	\Upsilon(\id_{4}/2 - \phi_2) + P,
\end{equation}
where
\begin{eqnarray*}
	P &:= &\tr_{\backslash 1}[\id \otimes 
	(\id_{d^2} \otimes \ket{0, 0}
	\bra{0, 0})^{\otimes{( k - 1)}} \hat{S}(Q)], \\
	\Upsilon(.) &:=& 
	\tr_{\backslash 1}
	[\id \otimes (\id_{d^2} \otimes 
	\ket{0, 0}\bra{0, 0})^{\otimes{( k - 1)}} 
	(\hat{S}\circ \Lambda)(.)], 
\end{eqnarray*}	
By Lemma \ref{S(Q)=0}, it follows that $P \geq 0$. As the symmetrization and the projection onto a product state can be done by SLOCC, we find that $\Gamma$ is a SLOCC map. The statement of the Lemma then follows from Ref. \cite{Mas06}, where it was shown that Eq.\ (\ref{mas}) implies the separability of $\sigma$. 
\end{proof}

Theorem \ref{actcor} can now be established from Lemma \ref{crucial}, together with the activation protocol used in 
Refs.\ \cite{Mas06, VW02}.

\begin{proof} (Theorem \ref{actcor})

Following Refs.\cite{Mas06, VW02}, let us consider states $\sigma\in {\cal D}(({\cal H}_{A_{2}}\otimes {\cal H}_{A_{3}}) \otimes ({\cal H}_{B_{2}}\otimes {\cal H}_{B_{3}}))$, where ${\cal H}_{A_{2}} = {\cal H}_{B_{2}} =  \mathbb{C}^d,\,\,\, {\cal H}_{A_{3}} = {\cal H}_{B_{3}} = \mathbb{C}^2$.

To prove the Theorem we have to show that for all $k \in \mathbb{N}$ there exists a $\rho \in {\cal T}_{k}\subset {\cal D}(\mathbb{C}^d\otimes \mathbb{C}^d)$ 
and a SLOCC operation $\Lambda$ such that 
\begin{equation*}\label{Hold}
	\tr[\Lambda(\sigma \otimes \rho)\phi_2] > 	
	\tr[\Lambda(\sigma \otimes \rho)]/2. 
\end{equation*}
This condition can also be expressed as
\begin{equation*} \label{Lambda1}
\tr[\Lambda(\rho \otimes \sigma)(\id / 2 - \phi_2)] < 0.
\end{equation*}

We choose the SLOCC map $\Lambda$ as follows: The parties perform a local measurement -- on subsystems $A_1A_2$, $B_1B_2$ -- in a basis of maximally entangled states, post-selecting when both systems are projected onto the projectors associated with $\Phi_{A_{1}A_{2}}$ and $\Phi_{B_{1}B_{2}}$, respectively. Using the Jamilkowski isomorphism \cite{Jam72} and performing a little algebra we find that, for this particular choice of $\Lambda$, Eq. (\ref{Lambda1}) reads \cite{Mas06}
\begin{equation} \label{sat}
	\tr[ \rho \sigma^T \otimes (\id/2 - \phi_2)] < 0.
\end{equation}
To complete the proof it suffices to note that by Lemma \ref{crucial}, if $\sigma$ is entangled then there must exists a state $\rho \in {\cal T}_k$ satisfying Eq.\ (\ref{sat}). Indeed, if this were not true, then $\sigma^{T}\otimes (\id/2 - \phi_2)$ would have to belong to the dual cone of ${\cal T}_k$, which was shown in Lemma \ref{crucial} to imply the separability of $\sigma$. 
\end{proof}

\section{On the Existence of NPPT Bound Entanglement} \label{NPPT2}

Before we conclude this chapter, we would like to comment on the applicability of this 
approach to the conjecture on the existence of bound entangled states with a non-positive partial transpose, and in 
particular to Conjecture \ref{conj1}. It is clear that if we could prove the validity of 
Lemma \ref{crucial} for the full set ${\cal T}$, then Conjecture \ref{conj1} 
would be true. Indeed, if the activation procedure outlined in the proof of Theorem \ref{actcor} works for a convex combination of undistillable states, then it has to work at least for one of the states appearing in the convex combination, as it is made explicit by the linearity of Eq.\ (\ref{sat}).

However, although the presented methods seem applicable to this question, a significant further step will be necessary,
and a direct extension of  Lemma \ref{crucial} to ${\cal T}$ does not seem to work. From Theorem \ref{T=coC} it is a simple exercise in convex analysis to show that 
\begin{equation*}
{\cal C}^* = {\cal T}^{*} = \overline{\bigcup_{k \in \mathbb{C}{N}} {\cal T}_k^*}.
\end{equation*}
If we now assume that $\sigma \otimes (\id_4/2 - \phi_2) \in {\cal T}^*$, then for every $\varepsilon > 0$, 
there exists an integer $n_{\varepsilon}$ such that
\begin{equation*}
\sigma \otimes (\id_4/2 - \phi_2) + \varepsilon \id \in ({\cal T}_{k_{\varepsilon}})^*.
\end{equation*}
If we followed the steps taken in the proof of Lemma \ref{crucial}, we would find, instead of Eq.\ (\ref{mas}), the following:
\begin{equation*}
\sigma \otimes (\id_4/2 - \phi_2) + (n_{\varepsilon} - 1) \varepsilon \id = \Omega_{\varepsilon}(\id/2 - \phi_2) + P_{\varepsilon},
\end{equation*}
where $P_{\varepsilon} \geq 0$ and $\Omega_{\varepsilon} $ is a SLOCC operation for every $\varepsilon > 0$. Hence, in order to be able to carry over with the approach similar to one outlined in Ref.\ \cite{Mas06}, we would have to be able to show that we can choose the sequence 
$\{ n_{\varepsilon} \}$ to be such that
\begin{equation*} 
\lim_{\varepsilon \rightarrow 0} (n_{\varepsilon} - 1) \varepsilon = 0.
\end{equation*}
Although it could well be the case that such relation hold, we could not find a way either to prove it nor to disprove it, despite considerable
effort.

From a different perspective, it seems that the rate of convergence of an arbitrary element of ${\cal T}^*$ by elements of the inner approximations given by ${\cal T}_k^*$ matters in our problem. Note that it is exactly the closure in
\begin{equation}\label{CH}
{\cal T}^* = \overline{\bigcup_{k \in \mathbb{N}} {\cal T}_k^*}
\end{equation}
the responsible for this behavior. Indeed, Lemma \ref{crucial} can straightforwardly be applied if we require only that
\begin{equation*}
\sigma \otimes (\id_4/2 - \phi_2) \in \bigcup_{k \in \mathbb{N}} {\cal T}_k^*.
\end{equation*}
So the question of the existence of NPPT bound entanglement can be related to the question of the necessity of the closure in Eq.\ (\ref{CH}).

\chapter{A Generalization of Quantum Stein's Lemma} \label{QHT}

\section{Introduction}

Hypothesis testing refers to a general set of tools in statistics and probability theory for making decisions based on experimental data from random variables. In a typical scenario, an experimentalist is faced with two possible hypotheses and must decide based on experimental observation which of them was actually realized. There are two types of errors in this process, corresponding to mistakenly identifying one of the two options when the other should have been chosen. A central task in hypothesis testing is the development of optimal strategies for minimizing such errors and the determination of compact formulae for the minimum error probabilities. 

Substantial progress has been achieved both in the classical and quantum settings for i.i.d processes (see e.g. \cite{Che52, CL71, Bla74, HP91, ON00, Hay02, OH04, Nag06, ANSV07, Hay07} and references therein). We review some of the most important developments in the sequel. The non-i.i.d. case, however, has proven harder and much less is known. The main result of this chapter is a particular instance of quantum hypothesis testing of non-i.i.d. sources for which the optimal decaying rates of the error probabilities can be fully determined. To the best of the author's knowledge, the solution of such a problem was not known even in the classical setting. 

This result, in spite of not referring directly to entanglement, has important consequences to entanglement theory. In this chapter we present two applications: the derivation of an operational interpretation of the regularized relative entropy of entanglement and a proof that this measure is faithful (see section \ref{entmeasures}). This latter consequence provides in particular a new proof of the fact that every entangled state has a strictly positive entanglement cost \cite{YHHS05}. The main application of the findings of this chapter will be presented in chapter \ref{reversible}, where they will be the key technical element for proving reversibility of entanglement manipulation under asymptotically non-entangling operations. 

The structure of this chapter is as follows. In subsection \ref{iid} we review a few results of hypothesis testing of i.i.d. sources. In section \ref{mainr} we present some definitions and state the main results of this chapter. Sections \ref{proofmain} and \ref{coroll} are devoted to prove Theorem \ref{maintheorem} and Corollary \ref{faithful}, respectively. Finally, in section \ref{constraint} we discuss some partial progress in generalizing the findings of this chapter to situations where one encounters constraints on the available projective operator value measurements (POVMs), such as hypothesis testing using separable or PPT POVMs. 

\subsection{Hypothesis Testing for i.i.d. Sources} \label{iid}

Suppose we have access to a source of i.i.d. random variables chosen from one of two possible probability distributions. Our aim is to decide which probability distribution actually corresponds to the random variable. In the quantum generalization of such a problem, we are faced with a source that emits several i.i.d. copies of one of two quantum states $\rho$ and $\sigma$, and we should decide which of them is being produced. Since the quantum setting also encompasses the classical, we will focus on the former, noting the differences between the two when necessary. 

In order to learn the identity of the state the observer measures a two outcome POVM $\{ M_n, \id - M_n \}$ given $n$ realizations of the unknown state. If he obtains the outcome associated to $M_n$ ($\id - M_n$) then he concludes that the state was $\rho$ ($\sigma$). The state $\rho$ is seen as the null hypothesis, while $\sigma$ is the alternative hypothesis. There are two types of errors here: 
\begin{itemize}
	\item Type I: The observer finds that the state was $\sigma$, when in reality it was $\rho$. This happens with probability $\alpha_n(M_n) := \tr(\rho^{\otimes n}(\id - M_n))$.
	\item Type II: The observer finds that the state was $\rho$, when it actually was $\sigma$. This happens with probability $\beta_n(M_n) := \tr(\sigma^{\otimes n}M_n)$.
\end{itemize}
There are a few distinct settings that might be considered, depending on the importance we attribute to the two types of errors. 

In symmetric hypothesis testing, we wish to minimize the average probability of error. Suppose we have $\rho^{\otimes n}$ with probability $p$ and $\sigma^{\otimes n}$ with probability $1 - p$. The optimal average probability of error is thus
\begin{equation*}
P_e(n) = \min_{0 \leq M_n \leq \id} p\tr(\sigma^{\otimes n}M_n) + (1 - p)\tr(\rho^{\otimes n}(\id - M_n)).
\end{equation*}
As $n$ grows, $P_e(n)$ goes to zero as the states become increasingly more distinguishable. Indeed, for sufficiently large $n$ the decay of the average error is exponential in $n$, and the relevant information concerning the distinguishability of the two states is in its exponent. Such an exponent was calculated for probability distributions in the fifties by Chernoff \cite{Che52} and is given by an expression that now carries his name. The generalization of Chernoff's distance to the quantum case remained open for a long time, until the works of Nussbaum and Szkola \cite{NS06} and Audenaert et al \cite{ACM+07}. Together they proved \cite{ANSV07}
\begin{equation}
\lim_{n \rightarrow \infty} - \frac{\log(P_e(n))}{n} = - \log \left( \inf_{s \in [0, 1]}\tr(\rho^{1-s}\sigma^s) \right), 
\end{equation}
which is exactly the formula originally obtained by Chernoff, after changing density operators by probability distributions. Note that the formula is independent of the a priori probability $p$. 

In some situations the costs associated to the two types of errors can be very different. In one such setting, the probability of type II error should be minimized to the extreme, while only requiring that the probability of type I error is bounded by a small parameter $\epsilon$. The relevant error quantity in this case can be written as
\begin{equation*}  
\beta_n(\epsilon) := \min_{0 \leq M_n \leq \id} \{ \beta_n(M_n) : \alpha_n(M_n) \leq \epsilon \}.
\end{equation*}
Quantum Stein's Lemma \cite{HP91, ON00} tell us that for every $0 \leq \epsilon \leq 1$,
\begin{equation} \label{Stein}
\lim_{n \rightarrow \infty} - \frac{\log(\beta_n(\epsilon))}{n} = S(\rho || \sigma). 
\end{equation}
This fundamental result gives a rigorous operational interpretation for the relative entropy and was proven in the quantum case by Hiai and Petz \cite{HP91} and Ogawa and Nagaoka \cite{ON00}. Different proofs have since be given in Refs. \cite{Hay02, OH04, ANSV07}. The relative entropy is also the asymptotic optimal exponent for the decay of $\beta_n$ when we require that $\alpha_n \stackrel{n \rightarrow \infty}{\longrightarrow} 0$ \cite{OH04}. 

Stein's Lemma is a particular case of the more powerful Hoeffding-Blahut-Csisz\'ar-Longo bound \cite{CL71, Bla74}. Its quantum version has been established recently by Hayashi \cite{Hay07}, Nagaoka \cite{Nag06}, and Audenaert, Nussbaum, Szkola, and Verstraete \cite{ANSV07}. Let us define the type I and type II rate limits as
\begin{equation*}
\alpha_R(\{M_n\}_{n \in \mathbb{N}}) := \lim_{n \rightarrow \infty} \left( - \frac{\log \alpha_n(M_n)}{n} \right), \hspace{0.3 cm} \beta_R(\{M_n\}_{n \in \mathbb{N}}) := \lim_{n \rightarrow \infty} \left( - \frac{\log \beta_n(M_n)}{n} \right).
\end{equation*}
Given a constraint of the form $\alpha_R \geq r$, the quantum HBSL bound gives the maximum achievable value of $\beta_R$: 
\begin{equation}
e_Q(r) := \sup_{s \in [0, 1)} \frac{-rs - \log \tr \sigma^s \rho^{1- s}}{1 - s}.
\end{equation}

Before we finish our short discussion on previous results, we comment on three extensions of quantum Stein's Lemma. The first, which was actually already proved in the seminal paper by Hiai and Petz on the subject \cite{HP91}, states that quantum Stein's Lemma is also true when instead of $\rho^{\otimes n}$, we have reduced states $\rho_n$ of an ergodic state\footnote{See Ref. \cite{HP91} for the definition of a ergodic state.}. The second, which is sometimes referred to as quantum Sanov's Theorem \cite{Hay02, BDK+05}, concerns the situation in which instead of a single state as the null hypothesis, we have a family of states ${\cal K} \subseteq {\cal D}({\cal H})$. In this case it has be shown that that rate limit of type II error is given by $\inf_{\rho \in {\cal K}}S(\rho || \sigma)$ \cite{Hay02, BDK+05}. The third and most general is the so called information spectrum approach to hypothesis testing. As discussed in Ref. \cite{NH07}, this method delivers the achievability and strong converse\footnote{The strong converse rate is defined as the minimum rate of the exponential decay of type II error for which the probability of type I error goes to one asymptotically.} optimal rate limits in terms of divergence spectrum rates for arbitrary sequence of states. Despite its generality, this method has the drawback that no direct connection to the relative entropy is established and that in general the achievability and strong converse rates are different. 

The main result of this chapter has a similar flavor to the above-mentioned generalizations. We will be interested in the case where the \textit{alternative hypothesis} is not only composed of a single i.i.d. state, but is actually formed by a family of non-i.i.d. states satisfying certain conditions to be specified in the next section. We will then show that a quantity similar to the regularized relative entropy of entanglement, in which the minimization is taken over the elements of the alternative hypothesis set, is the optimal rate limit for type II error. In the next section we provide the necessary definitions and state the result in full. 

\section{Definitions and Main Results} \label{mainr}

Given a closed set of states ${\cal M} \subseteq {\cal D}({\cal H})$, we define 
\begin{equation} \label{relent1}
E_{{\cal M}}(\rho) := \min_{\sigma \in {\cal M}} S(\rho || \sigma),
\end{equation}
and
\begin{equation} \label{LRM}
LR_{\cal M}(\rho) := \min_{\sigma \in {\cal M}} S_{\max}(\rho || \sigma),
\end{equation}
where
\begin{equation}
S_{\max}(\rho || \sigma) := \inf \{ s : \rho \leq 2^s \sigma   \} 
\end{equation}
is the maximum relative entropy introduced by Datta \cite{Dat08a, Dat08b}. Note that if we take ${\cal M}$ to be the set of
separable states, then $E_{\cal M}$ and $LR_{\cal M}$ reduce to the relative entropy of entanglement and the logarithm 
global robustness of entanglement (see sections \ref{relentint} and \ref{robustnesses}). This connection is the reason for the nomenclature in Eqs. (\ref{relent1}) and (\ref{LRM}). We will also need a \textit{smooth version} of $LR_{\cal M}$, defined as
\begin{equation} \label{smooth}
LR_{{\cal M}}^{\epsilon}(\rho) := \min_{\tilde{\rho} \in B_{\epsilon}(\rho)} LR_{\cal M}(\tilde{\rho}), 
\end{equation}
where $B_{\epsilon}(\rho) := \{ \tilde{\rho} \in {\cal D}({\cal H}) : || \rho- \tilde{\rho} ||_1 \leq \epsilon \}$\footnote{We note that smooth versions of other non asymptotically continuous measures, such as the min- and max-entropies \cite{RW04, Ren05}, have been proposed and shown to be useful in non-asymptotic and non-i.i.d. information theory. See e.g. Refs. \cite{RW04, Ren05}.}. 

Let us specify the set of states over which the alternative hypothesis vary. We will consider any family of sets $\{ {\cal M}_n \}_{n \in \mathbb{N}}$, with ${\cal M}_n \in {\cal D}({\cal H}^{\otimes n})$, satisfying the following properties
\begin{enumerate}
	\item \label{cond1} Each ${\cal M}_n$ is convex and closed. 
	\item \label{cond2} Each ${\cal M}_n$ contains the maximally mixed state $\id^{\otimes n}/\dim({\cal H})^n$.
	\item \label{cond3} If $\rho \in {\cal M}_{n + 1}$, then $\tr_{n + 1}(\rho) \in {\cal M}_{n}$.
	\item \label{cond4} If $\rho \in {\cal M}_{n}$ and $\sigma \in {\cal M}_m$, then $\rho \otimes \sigma \in {\cal M}_{n + m}$.
	\item \label{cond5} If $\rho \in {\cal M}_n$, then $P_{\pi}\rho P_{\pi} \in {\cal M}_n$, for every $\pi \in S_n$\footnote{As in section 2.6, $P_{\pi}$ is the standard representation in ${\cal H}^{\otimes n}$ of an element $\pi$ of the symmetric group $S_n$.}. 
\end{enumerate}
We define the regularized version of the quantity given by Eq. (\ref{relent1}) as\footnote{To show that the limit exists in Eq. (\ref{regu1}) we use the fact that if a sequence $(a_n)$ satisfies $a_n \leq cn$ for some constant $c$ and $a_{n + m} \leq a_n + a_m$, then $a_n/n$ is convergent \cite{HHT01, BNS98}. Using properties \ref{cond2} and \ref{cond4} it is easy to see that our sequence satisfy the two conditions.}
\begin{equation}  \label{regu1}
E_{\cal M}^{\infty}(\rho) := \lim_{n \rightarrow \infty} \frac{1}{n} E_{{\cal M}_n}(\rho^{\otimes n}).
\end{equation}
We now turn to the main result of this chapter. Suppose we have one of the following two hypothesis: 
\begin{enumerate}
	\item For every $n \in \mathbb{N}$ we have $\rho^{\otimes n}$.
	\item For every $n \in \mathbb{N}$ we have an unknown state $\omega_n \in {\cal M}_n$, where $\{ {\cal M}_n \}_{n \in \mathbb{N}}$ is a family of sets satisfying properties \ref{cond1}-\ref{cond5}.
\end{enumerate}
The next theorem gives the optimal rate limit for the type II error when one requires that type I error vanishes asymptotically.

\begin{theorem} \label{maintheorem}
Given a family of sets  $\{ {\cal M}_n \}_{n \in \mathbb{N}}$ satisfying properties \ref{cond1}-\ref{cond5} and a state $\rho \in {\cal D}({\cal H})$, for every $\epsilon > 0$ there exists a sequence of POVMs $\{ A_n, \id - A_n \}$ such that
\begin{equation*}
\lim_{n \rightarrow \infty} \tr((\id - A_n) \rho^{\otimes n}) = 0 
\end{equation*}
and for all sequences of states $\{ \omega_n \in {\cal M}_n \}_{n \in \mathbb{N}}$,
\begin{equation*}
- \frac{\log \tr(A_n \omega_n)}{n} + \epsilon \geq E_{\cal M}^{\infty}(\rho) = \lim_{\epsilon \rightarrow 0} \limsup_{n \rightarrow \infty} \frac{1}{n} LR_{{\cal M}_n}^{\epsilon}(\rho^{\otimes n}).
\end{equation*}
Conversely, if there is a $\epsilon > 0$ and sequence of POVMs $\{ A_n, \id - A_n \}$ satisfying 
\begin{equation*}
 - \frac{\log( \tr(A_n \omega_n))}{n} - \epsilon \geq E_{\cal M}^{\infty}(\rho)
\end{equation*}
for all sequences $\{ \omega_n \in {\cal M}_n \}_{n \in \mathbb{N}}$, then
\begin{equation*}
\lim_{n \rightarrow \infty} \tr((\id - A_n) \rho^{\otimes n}) = 1. 
\end{equation*}
\end{theorem}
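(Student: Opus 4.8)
The unifying tool is the characterization of the target rate as a regularized \emph{smooth max-relative entropy}, and I would establish the equality $E_{\mathcal M}^{\infty}(\rho)=\lim_{\epsilon\to0}\limsup_{n}\tfrac1n LR_{\mathcal M_n}^{\epsilon}(\rho^{\otimes n})$ first, since it drives the converse and pins down the rate. The lower bound uses only $S_{\max}\ge S$, giving $LR_{\mathcal M_n}\ge E_{\mathcal M_n}$, together with the asymptotic continuity of $E_{\mathcal M_n}$ --- which is available because property~\ref{cond2} forces $\mathcal M_n$ to contain the maximally mixed state, bounding relative entropies uniformly. The upper bound is where blocking enters: for fixed $N$ I would take the optimal $\sigma_N\in\mathcal M_N$ with $S(\rho^{\otimes N}\|\sigma_N)=E_{\mathcal M_N}(\rho^{\otimes N})$, use property~\ref{cond4} to place $\sigma_N^{\otimes m}$ in $\mathcal M_{Nm}$, and invoke the i.i.d. asymptotic-equipartition property for the smooth max-relative entropy to get $\tfrac1{Nm}LR^{\epsilon}_{\mathcal M_{Nm}}(\rho^{\otimes Nm})\le\tfrac1N S(\rho^{\otimes N}\|\sigma_N)$; letting $m\to\infty$ and then $N\to\infty$ gives $\le E_{\mathcal M}^\infty(\rho)$. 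Notably this equality needs only properties~\ref{cond2}--\ref{cond4} and single-letter theory, no de Finetti.

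With the equality in hand the strong converse is short. Assume a sequence $\{A_n\}$ with $-\tfrac1n\log\tr(A_n\omega_n)-\epsilon\ge E_{\mathcal M}^\infty(\rho)$ for every alternative sequence. The upper-bound half produces $\tilde\rho_n$ with $\|\tilde\rho_n-\rho^{\otimes n}\|_1\le\delta$ and $\sigma_n\in\mathcal M_n$ satisfying $\tilde\rho_n\le 2^{n(E_{\mathcal M}^\infty(\rho)+\epsilon/2)}\sigma_n$. Choosing the particular sequence $\omega_n=\sigma_n$ in the hypothesis yields $\tr(A_n\sigma_n)\le 2^{-n(E_{\mathcal M}^\infty(\rho)+\epsilon)}$, hence $\tr(A_n\tilde\rho_n)\le 2^{n(E_{\mathcal M}^\infty(\rho)+\epsilon/2)}\tr(A_n\sigma_n)\to0$ and $\tr(A_n\rho^{\otimes n})\le\tr(A_n\tilde\rho_n)+\delta$. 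Refining the smoothing so that $\delta\to0$ then forces $\tr(A_n\rho^{\otimes n})\to0$, i.e. the type~I error tends to one.

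Achievability is the substantial half, and it is where the de Finetti machinery is indispensable. Fixing $\delta>0$ and a block length $N$ with $\tfrac1N E_{\mathcal M_N}(\rho^{\otimes N})\le E_{\mathcal M}^\infty(\rho)+\delta$, I would treat each block of $N$ sites as a super-letter and build the test from a \emph{product} block measurement $M_N^{\otimes m}$ followed by a symmetric method-of-types decision on the empirical frequencies; against genuine block-i.i.d. alternatives $\tau^{\otimes m}$ this rejects at the per-block rate $\inf_{\tau\in\mathcal M_N} D\big(P_{M_N}(\rho^{\otimes N})\,\|\,P_{M_N}(\tau)\big)$, which (choosing $M_N$ near-optimal and regularizing as $N\to\infty$) approaches $\inf_{\tau}S(\rho^{\otimes N}\|\tau)/N=E_{\mathcal M_N}(\rho^{\otimes N})/N\to E_{\mathcal M}^\infty(\rho)$. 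To cover an arbitrary $\omega_{Nm}\in\mathcal M_{Nm}$ I would symmetrize over the $m$ blocks --- legitimate since the test is block-permutation invariant and, by properties~\ref{cond1} and~\ref{cond5}, the symmetrization stays in $\mathcal M_{Nm}$ --- and then apply the exponential de Finetti theorem (Theorem~\ref{expdefinetti}) in the super-system to write the symmetrized state as a mixture of almost-power states $\rho_\tau$ whose block marginals $\tau$ lie in $\mathcal M_N$ by property~\ref{cond3}. Because the test depends on $\omega$ only through the frequencies of the product POVM $M_N^{\otimes m}$, the Chernoff--Hoeffding bound for almost-power states (Lemma~\ref{hof}) shows $\tr(A_{Nm}\rho_\tau)$ is exponentially close to $\tr(A_{Nm}\tau^{\otimes m})$, so the type~II bound survives the passage to $\omega_{Nm}$ up to the de Finetti error.

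The crux --- and what I expect to consume most of the work --- is the simultaneous management of the two approximations in the direct part. On one side the exponential de Finetti theorem now carries a prefactor depending on $\dim(\mathcal H)^{2N}$, because the letters are $N$-fold blocks; on the other the Chernoff--Hoeffding transfer loses a correction $h(r/m)$ set by the impurity parameter $r$ of the almost-power states. The parameter $r$ must be chosen large enough that the de Finetti error vanishes as $m\to\infty$, yet small enough that $h(r/m)$ does not erode the type~II exponent, and this balance has to be maintained through the outer limit $N\to\infty$ in which the block dimension explodes. Entangled with this is the claim that block-wise measured divergences regularize to the quantum relative entropy uniformly over the growing family $\mathcal M_N$, so that the product-measurement test actually reaches $E_{\mathcal M}^\infty(\rho)$ and not merely a measured (hence smaller) rate; verifying this interchange of the $\inf_\tau$ with the $N\to\infty$ limit, while keeping a single universal test valid against the whole continuum $\{\omega_n\in\mathcal M_n\}$, is the technical heart of the argument. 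Everything else reduces to single-letter i.i.d. hypothesis testing together with the structural properties~\ref{cond1}--\ref{cond5}.
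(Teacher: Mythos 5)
Your first two steps are sound and essentially coincide with the paper's: the identity $E_{\mathcal M}^{\infty}(\rho)=\lim_{\epsilon\to0}\limsup_n \frac1n LR^{\epsilon}_{{\mathcal M}_n}(\rho^{\otimes n})$ is proved in Proposition \ref{relenteqrob} exactly by your route (operator monotonicity of the $\log$ for $LR\geq E$, asymptotic continuity from properties \ref{cond1}--\ref{cond2}, and for the converse a blocking argument combining Lemma \ref{ON} with the smoothing Lemma \ref{DR} and property \ref{cond4} --- your ``AEP for smooth max-relative entropy''), and your strong-converse argument, feeding the dominating state $\sigma_n$ from the smoothed operator inequality back in as the adversary's choice $\omega_n=\sigma_n$, is a legitimate repackaging of the paper's dual computation of $\lambda_n(\rho^{\otimes n},2^{ny})$.

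The achievability half, however, has a genuine gap, and at one point an outright error. First, the claim that after symmetrizing and applying Theorem \ref{expdefinetti} to $\omega_{Nm}$ the block marginals $\tau$ of the almost-power components ``lie in ${\mathcal M}_N$ by property \ref{cond3}'' is false: property \ref{cond3} constrains reductions of states \emph{in} the family, whereas the de Finetti decomposition only guarantees that the \emph{mixture} reproduces a reduction of $\omega_{Nm}$; the individual base states $\tau$ range over all of ${\mathcal D}({\mathcal H}^{\otimes N})$ and may even have measurement statistics identical to $\rho^{\otimes N}$, so your type II exponent is not $\inf_{\tau\in{\mathcal M}_N}D(P_{M_N}(\rho^{\otimes N})\Vert P_{M_N}(\tau))$ without a separate argument controlling the de Finetti weight near $\rho$. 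The paper faces exactly this issue in the proof of Corollary \ref{faithful}, where your tomography-style test does appear --- and there it only yields \emph{some} positive rate, and only because the set of separable states is closed under the (stochastic LOCC) measurement maps, a property beyond \ref{cond1}--\ref{cond5}; the $k$-extendible counterexample discussed after that proof shows a constructive product-measurement test cannot be generic for the theorem's hypotheses. Second, the step you defer as ``the technical heart'' --- a single per-block POVM achieving the relative-entropy rate uniformly over the composite family ${\mathcal M}_N$, with the $\inf_\tau$ passed through the regularization --- is essentially the theorem itself at block level and does not follow from single-letter i.i.d.\ theory (Hiai--Petz regularization holds per fixed $\sigma$, not uniformly with one measurement). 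The paper's Proposition \ref{maincompact} avoids constructing any such test: it argues by contradiction, assuming $\limsup_n\min_{\sigma_n\in{\mathcal M}_n}\tr(\rho^{\otimes n}-2^{yn}\sigma_n)_+<1$ for $y<E_{\mathcal M}^{\infty}(\rho)$, applies Lemma \ref{DR} to get a permutation-symmetric smoothed \emph{null} state $\rho_n\leq\lambda^{-1}2^{yn}\sigma_n$, applies the exponential de Finetti theorem to $\rho_n$ (not to the alternative), uses informationally complete POVMs and Lemma \ref{hof} to locate de Finetti weight near $\rho$, extracts $\tilde\rho^{\otimes(1-\alpha)n-r}$ from an almost-power component at the cost of a factor $n^2 2^{nh(r/n)}$, and then derives $E_{\mathcal M}^{\infty}(\rho)\leq y$ via operator monotonicity of the $\log$, property \ref{cond3}, and asymptotic continuity of $E_{{\mathcal M}_k}$ and $E_{\mathcal M}^{\infty}$ --- the desired contradiction, with the global optimal test then supplied implicitly by the minimax/duality structure rather than by an explicit measurement scheme.
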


Theorem \ref{maintheorem} gives the promised operational interpretation for the regularized relative entropy of entanglement. Taking $\{ {\cal M}_n \}_{n \in \mathbb{N}}$ to be the sets of separable states over ${\cal H}^{\otimes n}$, it is a simple exercise to check that they satisfy conditions \ref{cond1}-\ref{cond5}. Therefore, we conclude that $E_R^{\infty}(\rho)$ gives the rate limit of the type II error when we try to decide if we have several realizations of $\rho$ or a sequence of \textit{arbitrary} separable states. This rigorously justify the use of the regularized relative entropy of entanglement as a measure of distinguishability of quantum correlations from classical correlations. 

Another noteworthy point of Theorem \ref{maintheorem} is the formula
\begin{equation} \label{globrobrelent}
\lim_{n \rightarrow \infty} \frac{1}{n} E_{{\cal M}_n}(\rho^{\otimes n}) = \lim_{\epsilon \rightarrow 0} \limsup_{n \rightarrow \infty} \frac{1}{n} LR_{{\cal M}_n}^{\epsilon}(\rho^{\otimes n}).
\end{equation}
Taking once more $\{ {\cal M}_n \}$ as the sets of separable states over ${\cal H}^{\otimes n}$, this equation shows that the regularized relative entropy of entanglement is a smooth asymptotic version of the log global robustness of entanglement. We hence have a connection between the robustness of quantum correlations under mixing and their distinguishability to classical correlations.

An interesting Corollary of Theorem \ref{maintheorem} is the following.
\begin{corollary} \label{faithful}
The regularized relative entropy of entanglement is faithful. For every entangled state $\rho \in {\cal D}({\cal H}_1 \otimes ... \otimes {\cal H}_n)$,
\begin{equation}
E_R^{\infty}(\rho) > 0.
\end{equation}
\end{corollary}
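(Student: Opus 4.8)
The plan is to deduce faithfulness directly from the operational content of Theorem \ref{maintheorem}, rather than trying to lower bound the regularized relative entropy of entanglement by hand. Take $\{\mathcal{M}_n\}$ to be the family of separable states, which satisfies properties \ref{cond1}--\ref{cond5}. The converse part of the theorem says that if a test sequence achieves a type~II exponent strictly above $E_R^\infty(\rho)$ uniformly over all separable sequences, then its type~I error tends to $1$. Contrapositively, any test whose type~I error does \emph{not} tend to $1$ can have uniform type~II exponent at most $E_R^\infty(\rho)$. Hence it suffices to exhibit a single sequence of POVMs $\{A_n,\id-A_n\}$ with $\tr((\id-A_n)\rho^{\otimes n})\to 0$ and $\sup_{\omega_n\in\mathcal{M}_n}\tr(A_n\omega_n)\le 2^{-cn}$ for some fixed $c>0$: positivity of this uniform exponent forces $E_R^\infty(\rho)\ge c>0$.

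To build the test I would use the witness characterization of entanglement. Since $\rho$ is entangled, its global robustness is strictly positive, so by the formula $R_G(\rho)=-\min_{W\in\mathcal{W},\,W\le\id}\tr(W\rho)$ (section \ref{robustnesses}) there is a Hermitian operator separating $\rho$ from the compact convex set $\mathcal{S}$. Rescaling it into a valid two-outcome measurement yields a POVM element $0\le M\le\id$ and a gap $\mu>0$ with
\begin{equation*}
a:=\tr(M\rho)\ \ge\ \sup_{\sigma\in\mathcal{S}}\tr(M\sigma)+\mu\ =:\ b+\mu.
\end{equation*}
Measure $\{M,\id-M\}$ independently on each of the $n$ copies and let $A_n$ be the event that the empirical frequency of the $M$-outcome is at least the threshold $t:=(a+b)/2$. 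Under $\rho^{\otimes n}$ the outcomes are i.i.d.\ Bernoulli$(a)$, so by the Chernoff--Hoeffding bound (Lemma \ref{classicalCH}) the type~I error $\tr((\id-A_n)\rho^{\otimes n})$ is exponentially small and in particular vanishes.

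The remaining, and genuinely delicate, step is the uniform type~II bound. By property \ref{cond5} the worst case $\omega_n\in\mathcal{M}_n$ may be taken permutation-symmetric, since $A_n$ depends only on the permutation-invariant frequency of $M^{\otimes n}$. I would then invoke the exponential quantum de Finetti theorem in the mixed-state form of Eq.\ (\ref{df1}) to approximate $\omega_n$, up to exponentially small trace norm error, by a mixture $\int\mu(d\sigma)\rho_\sigma$ of almost power states built on \emph{separable} cores $\sigma$. For each such $\rho_\sigma$ the Chernoff--Hoeffding bound for almost power states (Lemma \ref{hof}) shows that the measured frequency of $M^{\otimes n}$ concentrates around $\tr(M\sigma)\le b$; choosing $\delta$ with $b+\delta<t$ and the de Finetti parameter $r/n$ small enough that $h(r/n)$ is negligible makes the probability of the event $A_n$ at most $2^{-cn}$, uniformly in $\sigma$, and averaging over $\mu$ and adding the exponentially small de Finetti error preserves this.

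The main obstacle is precisely this control of the alternative hypothesis: one must ensure that the de Finetti decomposition of a symmetric separable state is supported on separable cores, and must reconcile the number of spectator copies required by Eq.\ (\ref{df1}) with the $n$ copies actually measured, so that the approximation error stays exponentially below the concentration rate. This is exactly the non-i.i.d.\ difficulty that the machinery assembled for Theorem \ref{maintheorem} is designed to handle, which is why the corollary follows cleanly once that theorem is in place.
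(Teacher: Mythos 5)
Your top-level strategy is the same as the paper's: exhibit a POVM sequence $\{A_n,\id-A_n\}$ with type~I error tending to zero and a \emph{uniform} exponential type~II bound over all (possibly correlated) separable sequences, and then read off $E_R^\infty(\rho)\geq c>0$ from Theorem \ref{maintheorem}. The type~I side of your test is also fine, as is the reduction to permutation-symmetric $\omega_n$ (your acceptance operator is permutation invariant and $\hat S_n$ preserves separability). But the step you defer --- that the exponential de Finetti decomposition of a symmetric \emph{separable} state can be taken with \emph{separable} cores $\sigma$ --- is a genuine gap, not a technicality that ``the machinery assembled for Theorem \ref{maintheorem}'' handles. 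Theorem \ref{expdefinetti} yields a measure over \emph{arbitrary} pure states $\ket{\theta}$, with no control on the separability of $\sigma=\tr_E\ket{\theta}\bra{\theta}$, and nothing in the proof of Theorem \ref{maintheorem} supplies such control: that proof only uses properties \ref{cond1}--\ref{cond5} of the sets $\{{\cal M}_n\}$. If your argument went through using only those properties plus a single-copy witness gap, it would apply verbatim to the family ${\cal E}_k$ of $k$-extendible states (which satisfies \ref{cond1}--\ref{cond5}, and from which any fixed entangled $\rho\notin{\cal E}_k$ is separated by a witness with a gap); yet the paper shows $E_{{\cal E}_k}^\infty(\rho)=0$ for \emph{every} state, so by the converse part of Theorem \ref{maintheorem} no test of your form can have a uniformly positive exponent against correlated ${\cal E}_k$-sequences. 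Hence correlated alternatives can in general defeat a fixed per-copy witness-frequency test, and the separable-core claim is exactly where separability-specific structure must enter; without it your type~II bound collapses.

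The paper closes this gap by a different mechanism, and it is worth seeing how. Its test is tomographic rather than witness-based: measure an \emph{LOCC} informationally complete POVM on $(1-\alpha)n$ copies (spectator copies traced out, which also resolves your bookkeeping concern) and accept iff the empirical estimate $L_n$ is within $\epsilon:=\min_{\sigma\in{\cal S}}\|\rho-\sigma\|_1$ of $\rho$. For type~II it uses that $\pi\mapsto \tr_{\backslash 1}\bigl(\hat S_n(\pi)\,\id^{\otimes\alpha n}\otimes\tilde A_n\bigr)$ is a stochastic LOCC map, so applied to the separable $\omega_n$ the (unnormalized) post-selected state lies in $\text{cone}({\cal S})$; it then combines the exponential de Finetti theorem (to discard cores far from $\rho$, via Lemma \ref{hof}) with the estimation theorem of K\"onig--Renner (Eq.\ (\ref{KRen})) to show that, conditioned on acceptance, the post-selected single-copy state is $\delta$-close to $\rho$ --- which is $\epsilon$-far from ${\cal S}$ --- forcing the acceptance weight to be exponentially small. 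Note the paper's explicit remark after the proof: beyond properties \ref{cond1}--\ref{cond5}, the \emph{only} extra ingredient is closedness of ${\cal S}$ under SLOCC. Your proposal never invokes this property in a load-bearing way, and no argument is offered (or available in the cited results) for separability of the de Finetti cores; so as written the proof does not go through.
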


As mentioned in section \ref{relentint}, the entanglement cost is lower bounded by the regularized relative entropy of entanglement. Corollary \ref{faithful} then gives a new proof of the fact that the entanglement cost is strictly larger than zero for every entangled state, a result first proved by Yang, Horodecki, Horodecki, and Synak-Radtke in Ref. \cite{YHHS05}. 

In the next two sections we provide the proofs of Theorem \ref{maintheorem} and Corollary \ref{faithful}. In section \ref{constraint} we then discuss on possible generalizations of Theorem \ref{maintheorem} to situations where we have constraints on the POVMs available.
\section{Proof of Theorem \ref{maintheorem}} \label{proofmain}

We now prove Theorem \ref{maintheorem}. In order to make the presentation clearer, we break it into several shorter pieces. We start by establishing the following Proposition.

\begin{proposition} \label{relenteqrob}
For every family of sets  $\{ {\cal M}_n \}_{n \in \mathbb{N}}$ satisfying properties \ref{cond1}-\ref{cond5} and every state $\rho \in {\cal D}({\cal H})$,
\begin{equation} \label{relenteqrobeq}
E_{\cal M}^{\infty}(\rho) = \lim_{\epsilon \rightarrow 0} \limsup_{n \rightarrow \infty} \frac{1}{n} LR_{{\cal M}_n}^{\epsilon}(\rho^{\otimes n}).
\end{equation}
\end{proposition}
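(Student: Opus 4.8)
The plan is to prove the two inequalities separately, writing $E^\infty := E_{\cal M}^\infty(\rho)$ and abbreviating $E_n := E_{{\cal M}_n}(\rho^{\otimes n})$ and $L_n^\epsilon := LR_{{\cal M}_n}^\epsilon(\rho^{\otimes n})$, where $E_n/n \to E^\infty$ by the subadditivity argument already noted for Eq.~(\ref{regu1}). It is convenient to introduce, for a fixed reference state $\tau$, the smoothed max-relative entropy $S_{\max}^\epsilon(\omega||\tau) := \min_{\tilde\omega \in B_\epsilon(\omega)} S_{\max}(\tilde\omega||\tau)$, so that $L_n^\epsilon = \min_{\sigma \in {\cal M}_n} S_{\max}^\epsilon(\rho^{\otimes n}||\sigma)$. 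Two analytic inputs do the work: (i) a Stein-type upper bound on the smoothed max-relative entropy of i.i.d.\ states, and (ii) a Donald--Horodecki-style asymptotic-continuity estimate for $E_{{\cal M}_n}$, whose validity hinges on property~\ref{cond2} supplying a full-rank reference state inside ${\cal M}_n$.

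For the upper bound $\limsup_n \tfrac1n L_n^\epsilon \le E^\infty$, I would fix $m$, let $\sigma_m \in {\cal M}_m$ attain $E_m = S(\rho^{\otimes m}||\sigma_m)$, and exploit that $\sigma_m^{\otimes k} \in {\cal M}_{mk}$ by property~\ref{cond4}. Applying quantum Stein's Lemma, Eq.~(\ref{Stein}), to the i.i.d.\ pair $(\rho^{\otimes m}, \sigma_m)$ yields a sequence of tests with vanishing type-I error and type-II error decaying at rate $S(\rho^{\otimes m}||\sigma_m)=E_m$; a gentle-measurement conversion of such a test produces a state $\tilde\rho_k \in B_\epsilon(\rho^{\otimes mk})$ with $S_{\max}(\tilde\rho_k || \sigma_m^{\otimes k}) \le k E_m + o(k)$, i.e. $\tfrac1{mk} S_{\max}^\epsilon(\rho^{\otimes mk}||\sigma_m^{\otimes k}) \le \tfrac1m E_m + o(1)$. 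Since $\sigma_m^{\otimes k} \in {\cal M}_{mk}$, this bounds $\tfrac1{mk}L_{mk}^\epsilon$, and padding non-multiples of $m$ by a maximally mixed factor (properties~\ref{cond2} and~\ref{cond4}) costs only $o(n)$. Hence $\limsup_n \tfrac1n L_n^\epsilon \le \tfrac1m E_m$ for every $m$, and letting $m \to \infty$ gives $\le E^\infty$ for each fixed $\epsilon$, hence also after $\lim_{\epsilon \to 0}$.

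For the lower bound I would use the elementary inequality $S(\omega||\tau) \le S_{\max}(\omega||\tau)$. Let $\tilde\rho^* \in B_\epsilon(\rho^{\otimes n})$ and $\sigma^* \in {\cal M}_n$ attain $L_n^\epsilon = S_{\max}(\tilde\rho^*||\sigma^*)$. Then
\begin{equation*}
L_n^\epsilon = S_{\max}(\tilde\rho^*||\sigma^*) \ge S(\tilde\rho^*||\sigma^*) \ge E_{{\cal M}_n}(\tilde\rho^*) \ge E_n - c(\epsilon, n),
\end{equation*}
where the last step is the asymptotic continuity of $E_{{\cal M}_n}$: since $|| \tilde\rho^* - \rho^{\otimes n}||_1 \le \epsilon$ and ${\cal M}_n$ contains the maximally mixed state (property~\ref{cond2}), a Donald--Horodecki bound gives $c(\epsilon,n) \le \epsilon\, n \log\dim({\cal H}) + g(\epsilon)$ with $g(\epsilon)\to 0$. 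Dividing by $n$ and letting $n\to\infty$ yields $\liminf_n \tfrac1n L_n^\epsilon \ge E^\infty - \epsilon \log\dim({\cal H})$, and then $\lim_{\epsilon\to0}\limsup_n \tfrac1n L_n^\epsilon \ge E^\infty$. Combining with the upper bound gives Eq.~(\ref{relenteqrobeq}).

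The main obstacle is the honest justification of the two analytic inputs with the smoothing radius $\epsilon$ held fixed while $n\to\infty$. For (i), converting a Neyman--Pearson/Stein test into a genuine bound on $S_{\max}$ (rather than merely on $S$) of a nearby state requires a careful gentle-measurement estimate together with an operator inequality of the form $\sqrt{P}\,\rho^{\otimes mk}\sqrt P \le 2^{k E_m + o(k)}\,\sigma_m^{\otimes k}$; this is the quantum asymptotic-equipartition content, and is where Datta's max-relative entropy machinery \cite{Dat08a, Dat08b} enters. For (ii), one must check that the relative-entropy-to-a-set $E_{{\cal M}_n}$ really obeys a dimension-dependent continuity estimate; this is precisely the role of properties~\ref{cond1} and~\ref{cond2}, which supply convexity and a full-rank element so that the standard relative-entropy-of-entanglement continuity argument carries over.
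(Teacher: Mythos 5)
Your proposal is correct and follows essentially the same route as the paper: your lower bound is the paper's argument almost verbatim ($S \le S_{\max}$ via operator monotonicity of the $\log$, then asymptotic continuity of $E_{{\cal M}_n}$, which the paper gets from \cite{SH06} using properties \ref{cond1} and \ref{cond2}), and your upper bound has the paper's block-i.i.d.\ structure with $\sigma_m^{\otimes k} \in {\cal M}_{mk}$ from property \ref{cond4}. The step you flag as the main obstacle is exactly what the paper discharges with Lemma \ref{ON} (Ogawa--Nagaoka) combined with Lemma \ref{DR} (Datta--Renner): writing $\rho^{\otimes km} \leq 2^{y_m k}\sigma_m^{\otimes k} + \left(\rho^{\otimes km} - 2^{y_m k}\sigma_m^{\otimes k}\right)_+$ with $y_m = E_{{\cal M}_m}(\rho^{\otimes m}) + \varepsilon$, bounding the trace of the positive part by Lemma \ref{ON}, and invoking Lemma \ref{DR} to obtain a state close to $\rho^{\otimes km}$ sitting operator-below $g(k)\,2^{y_m k}\sigma_m^{\otimes k}$ --- so no test operator or gentle-measurement estimate is needed at all --- while non-multiple block lengths are handled not by maximally-mixed padding but by the monotonicity $LR_{{\cal M}_{n}}^{\delta}(\pi) \geq LR_{{\cal M}_{n-l}}^{\delta}(\tr_{1,\ldots,l}(\pi))$ from property \ref{cond3}.
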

We then employ it to prove
\begin{proposition} \label{maincompact} 
For every family of sets  $\{ {\cal M}_n \}_{n \in \mathbb{N}}$ satisfying properties \ref{cond1}-\ref{cond5} and every state $\rho \in {\cal D}({\cal H})$,
\begin{equation}
\lim_{n \rightarrow \infty} \min_{\omega_n \in {\cal M}_n} \tr( \rho^{\otimes n} - 2^{yn}\omega_n)_+ = 
\begin{cases}
0, & y > E_{\cal M}^{\infty}(\rho),\\
1, & y < E_{\cal M}^{\infty}(\rho).
\end{cases}
\end{equation}
\end{proposition}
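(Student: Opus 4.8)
The plan is to treat the two cases separately, leaning on Proposition~\ref{relenteqrob} for the regime $y > E_{\cal M}^{\infty}(\rho)$ and on the exponential quantum de Finetti theorem for the harder regime $y < E_{\cal M}^{\infty}(\rho)$. Throughout I would use the variational identity $\tr(X)_+ = \max_{0 \leq P \leq \id}\tr(PX)$, so that $\tr(\rho^{\otimes n} - 2^{yn}\omega_n)_+$ is exactly the optimal value of a Neyman--Pearson test discriminating $\rho^{\otimes n}$ from $\omega_n$ at threshold $2^{yn}$; the two facts I repeatedly need are $\tr(X)_+ = \tfrac12(||X||_1 + \tr X)$ and the subadditivity $\tr(A+B)_+ \leq \tr(A)_+ + \tr(B)_+$ (both immediate from the variational formula).

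For $y > E_{\cal M}^{\infty}(\rho)$ I would bound the minimum from above by evaluating it at a single well-chosen $\omega_n$. By Proposition~\ref{relenteqrob}, for every $\epsilon > 0$ one has $\limsup_n \tfrac1n LR_{{\cal M}_n}^{\epsilon}(\rho^{\otimes n}) \leq E_{\cal M}^{\infty}(\rho) < y$, so for large $n$ there exist $\tilde\rho_n \in B_{\epsilon}(\rho^{\otimes n})$ and $\omega_n \in {\cal M}_n$ with $\tilde\rho_n \leq 2^{yn}\omega_n$. Splitting $\rho^{\otimes n} - 2^{yn}\omega_n = (\rho^{\otimes n} - \tilde\rho_n) + (\tilde\rho_n - 2^{yn}\omega_n)$, where the second summand is negative semidefinite, subadditivity of the positive part yields $\min_{\omega_n}\tr(\rho^{\otimes n} - 2^{yn}\omega_n)_+ \leq \tr(\rho^{\otimes n} - \tilde\rho_n)_+ = \tfrac12||\rho^{\otimes n} - \tilde\rho_n||_1 \leq \epsilon/2$. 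Letting $\epsilon \to 0$ gives the limit $0$.

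For $y < E_{\cal M}^{\infty}(\rho)$ I would instead lower bound the minimum by exhibiting one universal test $P_n$: for any fixed $0 \leq P_n \leq \id$ one has $\min_{\omega_n \in {\cal M}_n}\tr(\rho^{\otimes n} - 2^{yn}\omega_n)_+ \geq \tr(P_n\rho^{\otimes n}) - 2^{yn}\sup_{\omega_n \in {\cal M}_n}\tr(P_n\omega_n)$, so no minimax exchange is needed. It then suffices to build, for some $r$ with $y < r < E_{\cal M}^{\infty}(\rho)$, tests with $\tr(P_n\rho^{\otimes n}) \to 1$ and $\sup_{\omega_n \in {\cal M}_n}\tr(P_n\omega_n) \leq 2^{-rn}$, since the bound then reads $(1-o(1)) - 2^{-(r-y)n} \to 1$. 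To construct such a universal Stein test against the whole (possibly correlated) family, I would symmetrize $P_n$ using property~\ref{cond5} and reduce to symmetric $\omega_n$; take the test of the form $P_m \otimes \id^{\otimes(n-m)}$ on a sub-block of size $m=(1-\gamma)n$, so that by property~\ref{cond3} only the $m$-copy reduction $\tr_{\backslash 1,\dots,m}(\omega_n)$ enters, which — being a reduction of the symmetric state $\omega_n$ living on $n>m$ copies — is directly approximable via the exponential de Finetti theorem, Eq.~(\ref{df1}), by a mixture $\int \mu(d\sigma)\rho_\sigma$ of almost-power states with $\sigma \in {\cal M}_1$ (again property~\ref{cond3}) and exponentially small error; finally I would control $\tr(P_m\rho_\sigma)$ by the Chernoff--Hoeffding bound for almost-power states (Lemma~\ref{hof}), which shows $\rho_\sigma$ reproduces the i.i.d. measurement statistics of $\sigma^{\otimes m}$, reducing the problem to a Sanov/Stein test of $\rho^{\otimes m}$ against $\{\sigma^{\otimes m} : \sigma \in {\cal M}_1\}$.

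The main obstacle is exactly this last direction, and specifically producing an exponent equal to the regularized $E_{\cal M}^{\infty}(\rho)$ rather than a single-letter or merely classical quantity, since a test built from a fixed informationally-complete POVM through Lemma~\ref{hof} only achieves the classical measurement exponent. I expect to resolve this by an outer blocking argument: run the whole construction with $\rho^{\otimes \ell}$ as the letter and $\{{\cal M}_{\ell m}\}_m$ as the alternative family — which still obeys properties~\ref{cond1}--\ref{cond5} by property~\ref{cond4} — and let $\ell \to \infty$, so that the per-letter rate $\tfrac1\ell E_{{\cal M}_\ell}(\rho^{\otimes \ell}) \to E_{\cal M}^{\infty}(\rho)$ while the gap between the classical exponent and the quantum relative entropy simultaneously closes. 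The delicate bookkeeping is to tune the de Finetti parameters — the traced fraction $\gamma$ and the almost-power slack $h(r/n)$ of Lemma~\ref{hof} — so that every error term, including the $n^{\dim({\cal H})^2}$ prefactor of Eq.~(\ref{df1}), stays exponentially negligible against $2^{-rn}$ while $h(r/n) \to 0$; this is what forces the two-scale (first the letter size $\ell$, then the copy number $n$) structure of the proof.
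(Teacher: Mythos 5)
Your first case ($y > E_{\cal M}^{\infty}(\rho)$) is correct and is essentially the paper's argument: pick the optimal $\tilde\rho_n \in B_{\epsilon}(\rho^{\otimes n})$ from Proposition \ref{relenteqrob}, note $\tilde\rho_n \leq 2^{yn}\omega_n$ for some $\omega_n \in {\cal M}_n$, and absorb the difference into a trace-norm term of size $\epsilon$. The second case, however, has a genuine gap, and it sits exactly where you apply the de Finetti theorem. You apply it to the \emph{alternative} $\omega_n$, and you assert the letters of the resulting mixture lie in ${\cal M}_1$ ``again by property \ref{cond3}''. That is false: property \ref{cond3} constrains the \emph{reductions} of $\omega_n$, but the de Finetti measure $\mu$ in Eq. (\ref{df1}) ranges over arbitrary letter states on ${\cal H}$; only the barycenter $\int \mu(d\sigma)\,\sigma$ is (approximately) in ${\cal M}_1$, by convexity. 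In particular $\mu$ may place weight on $\sigma$ in a neighborhood of $\rho$ itself, and there Lemma \ref{hof} makes $\tr(P_n\rho_{\sigma})$ close to one rather than exponentially small, so your type-II bound degrades to $\mu(B_{\epsilon}(\rho)) + 2^{-\Omega(n)}$. Showing $\mu(B_{\epsilon}(\rho)) \leq 2^{-rn}$ from properties \ref{cond1}--\ref{cond5} alone is precisely the hard content of the proposition: it amounts to converting an operator inequality of the form (almost-power state along $\rho$) $\leq \poly(n)\,2^{(y+c)n}\,\omega'_n$ into the bound $E_{\cal M}^{\infty}(\rho) \leq y + c$, which requires the almost-power-state decomposition with its $2^{nh(r/n)}$ counting, the tensor-power step via Lemma \ref{ON}, and the asymptotic continuity of $E_{{\cal M}_k}$ and $E_{\cal M}^{\infty}$ \cite{Chr06}. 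The paper avoids your problem entirely by never building a test: it argues by contradiction and applies the exponential de Finetti theorem to the \emph{null} side --- to the state $\rho_n$ produced by Lemma \ref{DR} from the assumed inequality $\rho_n \leq \lambda^{-1}2^{yn}\sigma_n$ --- where the measure is automatically concentrated near $\rho$ because $F(\rho_n, \rho^{\otimes n}) \geq \lambda$; the informationally complete POVM and Lemma \ref{hof} serve only to localize that measure, not to discriminate.

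Your proposed repair --- blocking with letter $\rho^{\otimes\ell}$ to close the classical-versus-quantum exponent gap --- also does not go through as stated. Hiai--Petz-type blocking closes the gap for each \emph{fixed} alternative, but you need a single measurement simultaneously good against every $\sigma$, i.e.\ an exchange of $\sup_M$ with $\inf_{\sigma\in{\cal M}_{\ell}}$, and no minimax theorem applies ($D(P_{M,\rho}\|P_{M,\sigma})$ is not concave in $M$). A universal test achieving the rate $E_{\cal M}^{\infty}(\rho)$ against the composite, correlated alternative is exactly the achievability half of Theorem \ref{maintheorem}, which the paper \emph{derives from} Proposition \ref{maincompact} via the SDP duality for $\lambda_n(\pi,K)$ --- so assuming it here is circular. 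It is telling that where the paper does construct an explicit test (the proof of Corollary \ref{faithful}), it needs closedness of the sets under stochastic LOCC, a property beyond \ref{cond1}--\ref{cond5}, and the extendible-states example at the end of section \ref{coroll} shows such extra structure is doing real work.
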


Before proving these two results, let us show how Proposition \ref{maincompact} implies Theorem \ref{maintheorem}.

\vspace{1 cm}

\begin{proof} (Theorem \ref{maintheorem})
Consider the following family of convex optimization problems
\begin{equation*}
\lambda_n(\pi, K) := \max_{A, \sigma} \tr(A \pi) : 0 \leq A \leq \id, \hspace{0.2 cm} \tr(A \sigma) \leq \frac{1}{K} \hspace{0.2 cm} \forall \hspace{0.1 cm} \sigma \in {\cal M}_n.
\end{equation*}
The statement of the theorem can be rewritten as
\begin{equation} \label{refor}
\lim_{n \rightarrow \infty} \lambda_n(\rho^{\otimes n}, 2^{n y}) = 
\begin{cases}
0, & y > E_{\cal M}^{\infty}(\rho), \\
1, & y < E_{\cal M}^{\infty}(\rho).
\end{cases}
\end{equation}
In order to see that Eq. (\ref{refor}) holds true, we go to the dual formulation of $\lambda_n(\pi, K)$. We first rewrite it as 
\begin{equation*}
\lambda_n(\pi, K) := \max_{A, \sigma} \tr(A \pi) : 0 \leq A \leq \id, \hspace{0.2 cm} \tr\left((\id/K - A) \sigma\right) \geq 0 \hspace{0.2 cm} \forall \hspace{0.1 cm} \sigma \in \text{cone}({\cal M}_n).
\end{equation*}
Then, we note that the second constraint is a generalized inequality (since the set $\text{cone}({\cal M}_n)$ is a convex proper cone\footnote{See section \ref{entwit} for more details.}) and write the problem as
\begin{equation} \label{primal}
\lambda_n(\pi, K) := \max_{A, \sigma} \tr(A \pi) : 0 \leq A \leq \id, \hspace{0.2 cm} (\id/K - A)  \in ({\cal M}_n)^{*}.
\end{equation}
The Lagrangian of $\lambda_n(\pi, K)$ is given by
\begin{equation*}
L(\pi, K, A, X, Y, \mu) = \tr(A \pi) + \tr(XA) + tr(Y(\id - A)) + \tr((\id/K - A)\mu), 
\end{equation*}
where $X, Y \geq 0$ and $\mu \in \text{cone}({\cal M}_n)$ are Lagrange multipliers. It is easy to find a strictly feasible solution for the primal optimization problem given by Eq. (\ref{primal}). Therefore, by Slater's condition \cite{BV01} $\lambda_n(\pi, K)$ is equal to its dual formulation, which reads
\begin{equation*}
\lambda_n(\pi, K) = \min_{Y, \mu} \tr(Y) + \tr(\mu)/K : \pi \leq Y + \mu, \hspace{0.2 cm} Y \geq 0, \hspace{0.2 cm} \mu \in \text{cone}({\cal M}_n). 
\end{equation*}
Using that $\tr(A)_+ = \min_{Y} \tr(Y) : Y \geq 0, Y \geq A$, we find
\begin{equation*}
\lambda_n(\pi, K) = \min_{\mu} \tr(\pi - \mu)_+ + \tr(\mu)/K : \mu \in \text{cone}({\cal M}_n), 
\end{equation*}
which can finally be rewritten as
\begin{equation*}
\lambda_n(\pi, K) = \min_{\mu, b} \tr(\pi - b\mu)_+ + b/K : \mu \in {\cal M}_n, \hspace{0.2 cm} b \geq 0. 
\end{equation*}
Let us consider the asymptotic behavior of $\lambda_n(\rho^{\otimes n}, 2^{n y})$. Take $y = E_{\cal M}^{\infty}(\rho) + \epsilon$, for any $\epsilon > 0$. Then we can choose, for each $n$, $b = 2^{n(E_{\cal M}^{\infty}(\rho) + \frac{\epsilon}{2})}$, giving
\begin{equation*} 
\lambda_n(\rho^{\otimes n}, 2^{ny}) \leq \min_{\mu \in {\cal M}_n} \tr(\rho^{\otimes n} - 2^{n(E_{\cal M}^{\infty}(\rho) + \frac{\epsilon}{2})}\mu)_+ + 2^{-n\frac{\epsilon}{2}}.
\end{equation*}
From Proposition \ref{maincompact} we then find that $\lambda_n(\rho^{\otimes n}, 2^{ny}) \rightarrow 0$. 

We now take $y = E_{\cal M}^{\infty}(\rho) - \epsilon$, for any $\epsilon > 0$. The optimal $b$ for each $n$ has to satisfy $b_n \leq 2^{y n}$, otherwise $\lambda_n(\rho^{\otimes n}, 2^{ny})$ would be larger than one, which is not true. Therefore,
\begin{equation*} 
\lambda_n(\rho^{\otimes n}, 2^{ny}) \geq \min_{\mu \in {\cal M}_n} \tr(\rho^{\otimes n} - 2^{n(E_{\cal M}^{\infty}(\rho) - \epsilon)}\mu)_+,
\end{equation*}
which goes to one again by Proposition \ref{maincompact}.
\end{proof}

\subsection{Proof of Proposition \ref{relenteqrob}}

The next two Lemmata will play an important role in the proof of Proposition \ref{relenteqrob}. The first, due to Ogawa and Nagaoka, appeared in Ref. \cite{ON00} as Theorem 1 and was the key element for establishing the strong converse of quantum Stein's Lemma.

\begin{lemma} \label{ON}
\cite{ON00} Given two quantum states $\rho, \sigma \in {\cal D}({\cal H})$ and a real number $\lambda$, 
\begin{equation}
\tr(\rho^{\otimes n} - 2^{\lambda n}\sigma^{\otimes n})_+ \leq 2^{- n (\lambda s - \psi(s))},
\end{equation}
for every $s \in [0, 1]$. The function $\psi(s)$ is defined as 
\begin{equation}
\psi(s) := \tr( \log( \rho^{1 + s}\sigma^{-s})).
\end{equation}
\end{lemma}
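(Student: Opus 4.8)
The plan is to reduce the $n$-copy bound to a single-letter trace inequality and then tensorise. First I would read the exponent function as $\psi(s) = \log \tr(\rho^{1+s}\sigma^{-s})$, with the logarithm outside the trace: this is what makes $\psi(0)=\log\tr\rho = 0$ and renders the exponent $\lambda s - \psi(s)$ non-trivial, whereas with the log inside the trace the bound degenerates. The reduction rests on writing the positive part variationally, $\tr(P-Q)_+ = \tr[(P-Q)\Pi]$ with $\Pi = \{P-Q\ge 0\}$ the spectral projector onto the non-negative part, which equals $\max_{0\le T\le \id}\tr[T(P-Q)]$. Hence it suffices to prove
\[
\tr(P - c Q_0)_+ \le c^{-s}\,\tr(P^{1+s}Q_0^{-s}), \qquad 0\le s\le 1,
\]
for positive operators $P,Q_0$ and $c>0$, and then apply it with $c=2^{\lambda n}$, $P=\rho^{\otimes n}$, $Q_0 = \sigma^{\otimes n}$. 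Multiplicativity across tensor factors, $(\rho^{1+s})^{\otimes n}(\sigma^{-s})^{\otimes n} = (\rho^{1+s}\sigma^{-s})^{\otimes n}$, collapses the right-hand side to $2^{-\lambda s n}\,(\tr\rho^{1+s}\sigma^{-s})^n = 2^{-n(\lambda s - \psi(s))}$, which is exactly the claimed estimate.

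I would dispatch the \emph{commuting} case first, since it motivates the strategy. Diagonalising $P$ and $Q_0$ simultaneously gives $\tr(P-cQ_0)_+ = \sum_{i:\,p_i\ge cq_i}(p_i - cq_i) \le \sum_{i:\,p_i\ge cq_i} p_i$. On the index set $\{p_i\ge cq_i\}$ one has $(p_i/(cq_i))^s \ge 1$ for every $s\ge 0$, whence $p_i \le p_i(p_i/(cq_i))^s = c^{-s}p_i^{1+s}q_i^{-s}$; summing over all $i$ yields $\tr(P-cQ_0)_+ \le c^{-s}\sum_i p_i^{1+s}q_i^{-s} = c^{-s}\tr(P^{1+s}Q_0^{-s})$.

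The main obstacle is removing commutativity: $\rho$ and $\sigma$ do not commute, and the per-eigenvalue bound has no literal operator analogue (indeed the naive guess $\Pi \le P^{s/2}Q_0^{-s}P^{s/2}$ already fails as an operator inequality). The route I would take is to pinch with respect to $Q_0 = \sigma^{\otimes n}$. Let $\mathcal{E}(X)=\sum_v P_v X P_v$ be the pinching onto the eigenspaces of $\sigma^{\otimes n}$; since $\sigma^{\otimes n}$ has at most $N:=(n+1)^{\dim(\mathcal{H})-1}$ distinct eigenvalues, Hayashi's pinching inequality gives $P \le N\,\mathcal{E}(P)$. As $A\mapsto \tr(A)_+$ is monotone, this yields $\tr(P-cQ_0)_+ \le \tr(N\mathcal{E}(P)-cQ_0)_+ = N\,\tr(\mathcal{E}(P)-(c/N)Q_0)_+$. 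Now $\mathcal{E}(P)$ commutes with $Q_0$, so the commuting bound applies with threshold $c/N$ and gives $N^{1+s}c^{-s}\,\tr[\mathcal{E}(P)^{1+s}Q_0^{-s}]$. Finally, since $x\mapsto x^{1+s}$ is operator convex for $1+s\in[1,2]$ and $\mathcal{E}$ is unital and self-adjoint with $\mathcal{E}(Q_0^{-s})=Q_0^{-s}$, operator Jensen gives $\mathcal{E}(P)^{1+s}\le \mathcal{E}(P^{1+s})$, hence $\tr[\mathcal{E}(P)^{1+s}Q_0^{-s}]\le \tr[P^{1+s}Q_0^{-s}]$. Combining these produces the bound up to the prefactor $N^{1+s}\le (n+1)^{2(\dim(\mathcal{H})-1)}$, which is $2^{o(n)}$ and therefore invisible at the level of the exponential rate; it is completely harmless for every use of the Lemma in Propositions \ref{relenteqrob} and \ref{maincompact}. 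To obtain the prefactor-free statement exactly as quoted from \cite{ON00}, one instead argues directly with the projector $\Pi$, bounding $\tr[P\Pi]$ via a Hölder/interpolation estimate against $\tr[P^{1+s}Q_0^{-s}]$; establishing that genuinely non-commutative operator estimate is the one real technical point, and it is precisely what Ogawa and Nagaoka supply.
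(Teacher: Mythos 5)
The paper does not actually prove this lemma --- it imports it verbatim from \cite{ON00} --- so the comparison is with Ogawa and Nagaoka's argument, which establishes the prefactor-free trace inequality $\tr(A-B)_+ \leq \tr[A^{1+s}B^{-s}]$ directly, by a non-commutative H\"older/interpolation estimate against the projector onto the positive part, and then tensorizes. Your route is genuinely different and is essentially Hayashi's pinching method (the same circle of ideas as \cite{Hay02}, which the thesis cites for alternative proofs of Stein's Lemma): reduce to the commuting case by pinching with respect to the eigenspaces of $\sigma^{\otimes n}$, pay the pinching constant $N \leq (n+1)^{\dim({\cal H})-1}$, and undo the pinching on the right-hand side via operator Jensen ($x^{1+s}$ is operator convex for $s \in [0,1]$; ${\cal E}$ is unital, self-adjoint, and fixes $(\sigma^{\otimes n})^{-s}$). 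Each step checks out: the variational formula $\tr(X)_+ = \max_{0 \leq T \leq \id}\tr[TX]$ gives the monotonicity you invoke, positive homogeneity justifies extracting $N$, the commuting computation is correct, and multiplicativity of $\tr[\rho^{1+s}\sigma^{-s}]$ under tensor powers collapses the bound to $2^{-n(\lambda s - \psi(s))}$. Your correction of the misprint in $\psi$ is also right: with the paper's literal definition $\tr(\log(\rho^{1+s}\sigma^{-s}))$ one has $\psi(0) = \log\det\rho \neq 0$, and the remark following the lemma ($\psi(0)=0$, $\psi'(0)=S(\rho||\sigma)$) only holds for $\psi(s) = \log\tr(\rho^{1+s}\sigma^{-s})$, which is what \cite{ON00} use. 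The pinching proof buys self-containedness (only Hayashi's pinching inequality and operator Jensen as inputs); the citation buys the clean constant-free statement.

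The one residual point, which you flag honestly: what you prove is the lemma up to the factor $N^{1+s} \leq (n+1)^{2(\dim({\cal H})-1)}$, not the literal statement, and the exact version is precisely the non-commutative estimate you defer back to \cite{ON00}. Your claim that the prefactor is harmless for this thesis is correct, but deserves one more sentence of care than you give it: in the proof of Proposition \ref{relenteqrob} only the vanishing of $\tr(\rho^{\otimes kn} - 2^{y_k n}\sigma_k^{\otimes n})_+$ is used, where a $2^{o(n)}$ prefactor is invisible; in the proof of Proposition \ref{maincompact}, however, the lemma is applied to the pair $(\tilde{\tau}_n, \omega_n)$ living on a Hilbert space whose dimension is itself exponential in $n$, so your prefactor there is $(m+1)^{c_n}$ with $c_n$ exponential in $n$, not $2^{o(m)}$ with a constant exponent. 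The application still survives, because $m \rightarrow \infty$ is taken at fixed $n$ and the prefactor is then beaten both by $2^{-\epsilon n m/2}$ and by the $\frac{1}{m}\log$ normalization of the regularized entropies --- but this depends on the order of limits in that proof and is worth stating explicitly rather than subsuming under ``$2^{o(n)}$''. With that caveat made precise, your proposal is a correct proof of a polynomially weakened version that suffices for every use of the lemma in the thesis, by a legitimately different route from the cited one.
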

Note that $\psi(0) = 0$ and $\psi'(0) = S(\rho || \sigma)$. Hence, if $\lambda > S(\rho || \sigma)$, $\tr(\rho^{\otimes n} - 2^{\lambda n}\sigma^{\otimes n})_+$ goes to zero exponentially fast in $n$.

The next Lemma, due to Datta and Renner \cite{DR08}, appeared in Ref. \cite{DR08} as Lemma 5 and will be used in the proof of Proposition \ref{relenteqrob} and several times in the proof of Proposition \ref{maincompact}.

\begin{lemma} \label{DR}
\cite{DR08} Let $\rho, Y, \Delta$ be positive semidefinite operators such that $\rho \leq Y + \Delta$. Then there exists a positive semidefinite operator $\tilde{\rho}$, with $\tr(\tilde{\rho}) \leq \tr(\rho)$, such that 
\begin{equation}
|| \tilde{\rho} - \rho ||_1 \leq 4\sqrt{\tr(\Delta)},
\end{equation}
\begin{equation}
F(\tilde{\rho}, \rho) \geq 1 - \tr(\Delta), \footnote{\normalfont $F(A, B)$ is the fidelity of two positive semidefinite operators, given by $F(A, B) := (\tr(\sqrt{A^{1/2}BA^{1/2}}))^{2}$.}
\end{equation}
and
\begin{equation}
\tilde{\rho} \leq Y.
\end{equation}
\end{lemma}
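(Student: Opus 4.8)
The plan is to prove the lemma constructively, by exhibiting an explicit $\tilde{\rho}$ obtained by ``squashing'' $\rho$ through the positive part $Y$ of its upper bound $Y + \Delta$. Writing $M := Y + \Delta$ and letting $M^{-1/2}$ denote the pseudo-inverse taken on the support of $M$, I would set
\[
\tilde{\rho} := Y^{1/2} M^{-1/2} \rho\, M^{-1/2} Y^{1/2} = B^{\dagger} \rho B, \qquad B := M^{-1/2} Y^{1/2}.
\]
Since $\rho \leq M$ we have $M^{-1/2}\rho M^{-1/2} \leq \Pi_M$ (the projector onto the support of $M$), and since $Y \leq M$ the support of $Y$ lies in that of $M$, so that $BB^{\dagger} = M^{-1/2} Y M^{-1/2} \leq \Pi_M$. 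This gives at once $\tilde{\rho} \geq 0$ and $\tilde{\rho} = Y^{1/2}(M^{-1/2}\rho M^{-1/2})Y^{1/2} \leq Y^{1/2}\Pi_M Y^{1/2} = Y$, which is the third assertion. For the trace I would compute $\tr(\rho) - \tr(\tilde{\rho}) = \tr\!\big(M^{-1/2}\rho M^{-1/2}(M - Y)\big) = \tr\!\big(M^{-1/2}\rho M^{-1/2}\Delta\big) \leq \tr(\Pi_M \Delta) \leq \tr(\Delta)$, again using $M^{-1/2}\rho M^{-1/2}\leq \Pi_M$; in particular $\tr(\tilde{\rho}) \leq \tr(\rho)$.

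The deficit $\tr(\rho) - \tr(\tilde{\rho}) \leq \tr(\Delta)$ is the single quantity I expect to control both remaining bounds. To obtain $F(\tilde{\rho},\rho) \geq 1 - \tr(\Delta)$ I would apply Uhlmann's theorem: fixing a purification $\ket{\psi}$ of $\rho$, the vector $(B^{\dagger}\otimes \id)\ket{\psi}$ is a (subnormalized) purification of $\tilde{\rho}$, since its reduction is $B^{\dagger}\rho B = \tilde{\rho}$. Hence $\sqrt{F(\tilde{\rho},\rho)} \geq \mathrm{Re}\,\bra{\psi}(B^{\dagger}\otimes \id)\ket{\psi} = \mathrm{Re}\,\tr(\rho B^{\dagger}) = \mathrm{Re}\,\tr\!\big(\rho\, Y^{1/2} M^{-1/2}\big)$. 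It then remains to bound this last quantity from below by $\tr(\rho) - \tr(\Delta)$; writing $B = \Pi_M - (\Pi_M - B)$ and estimating the correction term by operator-monotonicity of the square root (expressing the deficit once more through $\Delta$) yields $F(\tilde{\rho},\rho) \geq 1 - \tr(\Delta)$.

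With the fidelity in hand, the trace-distance bound follows from the Fuchs--van de Graaf inequality in its form for subnormalized operators, namely $\tfrac{1}{2}\lVert \rho - \tilde{\rho} \rVert_1 \leq \sqrt{\tr(\rho)\tr(\tilde{\rho}) - F(\rho,\tilde{\rho})} + \tfrac{1}{2}\lvert \tr(\rho) - \tr(\tilde{\rho})\rvert$. Substituting $F \geq 1 - \tr(\Delta)$ and $0 \leq \tr(\rho) - \tr(\tilde{\rho}) \leq \tr(\Delta)$ and using $\tr(\rho),\tr(\tilde{\rho}) \leq 1$, this collapses to $\lVert \rho - \tilde{\rho}\rVert_1 \leq 4\sqrt{\tr(\Delta)}$, where the stated constant arises precisely from the subnormalization correction term (the case $\tr(\Delta)\geq 1$ being trivial, as the left side never exceeds $2$).

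The main obstacle I anticipate is the fidelity lower bound with its clean linear-in-$\tr(\Delta)$ dependence, i.e. the estimate $\mathrm{Re}\,\tr\!\big(\rho\, Y^{1/2} M^{-1/2}\big) \geq \tr(\rho) - \tr(\Delta)$. The difficulty is that $Y^{1/2} M^{-1/2}$ is not a projector and $Y^{1/2}$, $M^{1/2}$ need not commute, so the non-commutativity must be handled carefully; this is exactly where the operator-monotonicity arguments and the factor-of-$4$ slack enter. By contrast, the algebraic facts ($\tilde{\rho}\leq Y$, the trace deficit) are routine once the squashing construction is fixed.
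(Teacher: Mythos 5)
First, a point of reference: the thesis contains no proof of Lemma \ref{DR} to compare against --- it is imported verbatim from Datta and Renner \cite{DR08} (Lemma 5 there), so the relevant benchmark is the proof in that reference. Your squashing operator $\tilde{\rho} = Y^{1/2}M^{-1/2}\rho\, M^{-1/2}Y^{1/2}$ with $M = Y+\Delta$ is exactly the Datta--Renner construction, and your treatment of the routine parts is correct and complete: $\tilde{\rho}\geq 0$ and $\tilde{\rho}\leq Y$ follow, as you say, from $M^{-1/2}\rho M^{-1/2}\leq \Pi_M$ and $\text{supp}(Y)\subseteq\text{supp}(M)$, and the identity $\tr(\rho)-\tr(\tilde{\rho}) = \tr\big(M^{-1/2}\rho M^{-1/2}\Delta\big)\in[0,\tr(\Delta)]$ is right.

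The gap sits exactly where you flagged it, and it is twofold. (a) The estimate $\mathrm{Re}\,\tr\big(\rho\, Y^{1/2}M^{-1/2}\big)\geq \tr(\rho)-\tr(\Delta)$ is the entire content of the lemma, and your sketch does not produce it: carrying out ``write $B = \Pi_M - (\Pi_M-B)$ and use operator monotonicity of the square root'' in the natural way --- $Y^{1/2}\leq M^{1/2}$ combined with Cauchy--Schwarz or Powers--St{\o}rmer to control $|\tr(\rho(\Pi_M - B^{\dagger}))| = |\tr(X M^{1/2}(M^{1/2}-Y^{1/2}))|$ with $X = M^{-1/2}\rho M^{-1/2}$ --- yields a deficit only of order $\sqrt{\tr(M)\,\tr(\Delta)}$, i.e.\ square-root rather than linear in $\tr(\Delta)$. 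The linear bound is the genuinely nontrivial step of \cite{DR08} and must be reproduced, not gestured at. (b) Worse, the target you set yourself is unattainable in the convention you adopt: since Uhlmann bounds $\sqrt{F}$, even a perfect overlap estimate gives only $F\geq(1-\tr\Delta)^2$, and no construction can do better. Take $\rho = \frac{1}{2}(\ketbra{1}{1}+\ketbra{2}{2})$, $Y = \frac{1}{2}\ketbra{2}{2}$, $\Delta = \frac{1}{2}\ketbra{1}{1}$: every $\tilde{\rho}\leq Y$ has squared fidelity with $\rho$ at most $1/4$, whereas $1-\tr(\Delta)=1/2$. Hence the lemma is true only with the unsquared fidelity $\tr\sqrt{\rho^{1/2}\tilde{\rho}\,\rho^{1/2}}$, which is the convention of \cite{DR08}; the squared convention in the thesis footnote is a slip, and the correct statement, $F_{\mathrm{unsq}}(\tilde{\rho},\rho)\geq\tr(\rho)-\tr(\Delta)$, is precisely what your Uhlmann step proves once the linear overlap bound of (a) is supplied.

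Reassuringly, your factor-of-four instinct survives this correction: with $F_{\mathrm{unsq}}\geq 1-\tr(\Delta)$, $|\tr\rho-\tr\tilde{\rho}|\leq\tr(\Delta)$, and the tacit assumption $\tr(\rho)\leq 1$ (not in the statement, but how the lemma is used), a Fuchs--van de Graaf-type bound for subnormalized operators gives $\Vert\rho-\tilde{\rho}\Vert_1\leq 2\sqrt{2\tr\Delta}+\tr\Delta\leq 4\sqrt{\tr\Delta}$ for $\tr(\Delta)\leq 1$, the remaining case being trivial since $\Vert\rho-\tilde{\rho}\Vert_1\leq 2$. But note that the subnormalized Fuchs--van de Graaf variant you invoke is itself nonstandard and would need to be stated and proved rather than cited.
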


\vspace{1 cm}

\begin{proof} (Proposition \ref{relenteqrob})

We start showing that 
\begin{equation*}
E_{\cal M}^{\infty}(\rho) \leq \lim_{\epsilon \rightarrow 0} \limsup_{n \rightarrow \infty} \frac{1}{n} LR_{{\cal M}_n}^{\epsilon}(\rho^{\otimes n}).
\end{equation*} 
Let $\rho_n^{\epsilon} \in B_{\epsilon}(\rho^{\otimes n})$ be an optimal state for $\rho^{\otimes n}$ in Eq. (\ref{smooth}). For every $n$ there is a state  $\sigma_n \in {\cal M}_n$ such that $\rho_n^{\epsilon} \leq s_n \sigma_n$, with $LR_{{\cal M}_n}^{\epsilon}(\rho^{\otimes n}) = LR_{{\cal M}_n}(\rho_n^{\epsilon}) = \log(s_n)$. It follows easily from the operator monotonicity of the 
$\log$ function \cite{Bat96} that if $\rho \leq 2^k \sigma$ (where 
$\rho$ and $\sigma$ are states), then $S(\rho || \sigma) \leq k$. Hence, 
\begin{equation*}
        \frac{1}{n}E_{{\cal M}_n}(\rho_n^{\epsilon}) \le \frac{1}{n}S(\rho_n^{\epsilon} || \sigma_n) \leq \frac{1}{n}LR_{{\cal M}_n}(\rho_n^{\epsilon}) = \frac{1}{n}LR_{{\cal M}_n}^{\epsilon}(\rho^{\otimes n}).
        \label{1456}
\end{equation*}
In Ref. \cite{SH06} it was shown that the minimum of the relative entropy with respect to any convex set containing the maximally mixed state is asymptotically continuous. Due to properties \ref{cond1} and \ref{cond2} of the sets ${\cal M}_n$ we thus find that the measures $E_{{\cal M}_n}$ are asymptotically continuous. Then, as $\rho_n^{\epsilon} \in B_{\epsilon}(\rho^{\otimes n})$, 
\begin{equation*}
        \frac{1}{n}E_{{\cal M}_n}(\rho^{\otimes n}) \leq \frac{1}{n}LR_{{\cal M}_n}^{\epsilon}(\rho^{\otimes n}) + f(\epsilon),
        \label{1456}
\end{equation*}
where $f: \mathbb{R} \rightarrow \mathbb{R}$ is such that $\lim_{\epsilon \rightarrow 0} f(\epsilon) = 0$. Taking the limits $n \rightarrow \infty$ and $\epsilon \rightarrow 0$ in both sides of the equation above,
\begin{equation*}
E^{\infty}_{{\cal M}}(\rho) = \lim_{\epsilon \rightarrow 0} \limsup_{n \rightarrow \infty}  \frac{1}{n}E_{{\cal M}_n}(\rho^{\otimes n})\leq \lim_{\epsilon \rightarrow 0} \limsup_{n \rightarrow \infty} \frac{1}{n}LR_{{\cal M}_n}^{\epsilon}(\rho^{\otimes n}).
\end{equation*}

To show the converse inequality, let $y_k := E_{{\cal M}_k}(\rho^{\otimes k})
+ \varepsilon = S(\rho^{\otimes k} || \sigma_k) + \varepsilon$ ($\sigma_k$ 
is an optimal state for $\rho^{\otimes k}$ in $E_{{\cal M}_k}(\rho^{\otimes k})$) with 
$\varepsilon > 0$. We can write
\begin{equation}\label{inehj}
\rho^{\otimes kn} \leq 2^{y_k n} \sigma_k^{\otimes n} + (\rho^{\otimes kn} - 2^{y_k n} \sigma_k^{\otimes n})_+.
\end{equation}
From Lemma \ref{ON} we have
\begin{equation*}
\lim_{n \rightarrow \infty}(\rho^{\otimes kn} - 2^{y_k n} \sigma_k^{\otimes n})_+ = 0.
\end{equation*}
Applying Lemma \ref{DR} to Eq. (\ref{inehj}) we then find that there is a sequence of states $\rho_{n, k}$ such that 
\begin{equation*}
\lim_{n \rightarrow \infty} || \rho^{\otimes kn} - \rho_{n, k} ||_1 = 0
\end{equation*}
and
\begin{equation*}
\rho_{n, k} \leq g(n)2^{y_k n} \sigma_k^{\otimes n},
\end{equation*}
where $g: \mathbb{R} \rightarrow \mathbb{R}$ is such that $\lim_{n \rightarrow \infty} g(n) = 1$. It follows that for every $\delta > 0$ there is a sufficiently large $n_0$ such that for all $n \geq n_0$, $\rho_{n, k} \in B_{\delta}(\rho^{\otimes kn})$. Moreover, from property \ref{cond4} of the sets we find $\sigma_k^{\otimes n} \in {\cal M}_{kn}$. Hence, for every $\delta > 0$,
\begin{equation*}
\limsup_{n \rightarrow \infty} \frac{1}{n}LR_{{\cal M}_{nk}}^{\delta}(\rho^{\otimes nk}) \leq \limsup_{n \rightarrow \infty} \frac{LR_{{\cal M}_{kn}}(\rho_{n, k})}{n} \leq y_k = E_{{\cal M}_k}(\rho^{\otimes k}) + \varepsilon.
\end{equation*}
As this is true for every $\varepsilon, \delta > 0$, it follows that\footnote{Here we use that for every $k \in \mathbb{N}$,
\begin{equation} \label{foot1}
\limsup_{n \rightarrow \infty} \frac{1}{nk}LR_{{\cal M}_{nk}}^{\delta}(\rho^{\otimes nk}) = \limsup_{n \rightarrow \infty} \frac{1}{n}LR_{{\cal M}_{n}}^{\delta}(\rho^{\otimes n}).
\end{equation} 
The $\leq$ inequality follows straighforwardly. For the $\geq$ inequality, let $\{ n' \}$ be a subsequence such that 
\begin{equation*}
M := \lim_{n' \rightarrow \infty} \frac{1}{n'}LR_{{\cal M}_{n'}}^{\delta}(\rho^{\otimes n'})
\end{equation*}
is equal to the R.H.S. of Eq. (\ref{foot1}). Let $n_k'$ be the first multiple of $k$ larger than of $n'$. Then, 
\begin{eqnarray*}
\limsup_{n \rightarrow \infty} \frac{1}{nk}LR_{{\cal M}_{nk}}^{\delta}(\rho^{\otimes nk}) &\geq& \limsup_{n_k' \rightarrow \infty} \frac{1}{n_k'} LR_{{\cal M}_{n_k'}}^{\delta}(\rho^{\otimes n_k'}) \nonumber \\ &\geq& \limsup_{n_k' \rightarrow \infty} \frac{1}{n_k'} LR_{{\cal M}_{n'}}^{\delta}(\rho^{\otimes n'}) \nonumber \\ &=& M.  
\end{eqnarray*}
The last inequality follows from $LR_{{\cal M}_{n}}^{\delta}(\pi) \geq LR_{{\cal M}_{n-l}}^{\delta}(\tr_{1,..l}(\pi))$, which is a consequence of property \ref{cond3} of the sets.} 
\begin{equation*}
\lim_{\delta \rightarrow 0} \limsup_{n \rightarrow \infty} \frac{1}{n} LR_{{\cal M}_{n}}^{\delta}(\rho^{\otimes n}) \leq \frac{1}{k}E_{{\cal M}_k}(\rho^{\otimes k}).
\end{equation*}
Finally, since the above equation is true for every $k \in \mathbb{N}$, we find the announced result.
\end{proof}
\vspace{0.3 cm}

There is another related quantity that we might consider in this context, in which $\epsilon$ and $n$ are not independent. Define
\begin{equation}
LG_{{\cal M}}(\rho) := \inf_{\{ \epsilon_n\}} \left \{ \limsup_{n \rightarrow \infty} \frac{1}{n} LR_{{\cal M}_n}^{\epsilon_n}(\rho^{\otimes n}) : \lim_{n \rightarrow \infty} \epsilon_n = 0 \right \}. 
\end{equation}
The proof of Proposition \ref{relenteqrob} can be straightforwardly adapted to show
\begin{corollary} \label{LGmeasure}
For every family of sets  $\{ {\cal M}_n \}_{n \in \mathbb{N}}$ satisfying properties \ref{cond1}-\ref{cond5} and every quantum state $\rho \in {\cal D}({\cal H})$,
\begin{equation}
LG_{{\cal M}}(\rho) = E_{\cal M}^{\infty}(\rho).
\end{equation}
\end{corollary}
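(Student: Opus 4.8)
The plan is to establish the two inequalities $E_{\cal M}^{\infty}(\rho) \leq LG_{\cal M}(\rho)$ and $LG_{\cal M}(\rho) \leq E_{\cal M}^{\infty}(\rho)$ separately, in both cases recycling the estimates already obtained in the proof of Proposition \ref{relenteqrob}. The only way in which $LG_{\cal M}$ differs from the quantity $\lim_{\epsilon \rightarrow 0} \limsup_{n \rightarrow \infty} \frac{1}{n} LR_{{\cal M}_n}^{\epsilon}(\rho^{\otimes n})$ appearing there is that the smoothing parameter is now allowed to depend on $n$, so what really has to be checked is that neither bound in Proposition \ref{relenteqrob} relies on $\epsilon$ being held constant.

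For $E_{\cal M}^{\infty}(\rho) \leq LG_{\cal M}(\rho)$, I would take an arbitrary sequence $\{ \epsilon_n \}$ with $\epsilon_n \rightarrow 0$ and run the first half of the proof of Proposition \ref{relenteqrob} with $\epsilon$ replaced by $\epsilon_n$ at scale $n$. The chain $\frac{1}{n} E_{{\cal M}_n}(\rho^{\otimes n}) \le \frac{1}{n} S(\rho_n^{\epsilon_n} || \sigma_n) \le \frac{1}{n} LR_{{\cal M}_n}^{\epsilon_n}(\rho^{\otimes n})$ (from operator monotonicity of $\log$) is unaffected, and combining it with the asymptotic continuity of the $E_{{\cal M}_n}$, which holds by properties \ref{cond1}, \ref{cond2} and the result of \cite{SH06}, gives
\[
\frac{1}{n} E_{{\cal M}_n}(\rho^{\otimes n}) \leq \frac{1}{n} LR_{{\cal M}_n}^{\epsilon_n}(\rho^{\otimes n}) + f(\epsilon_n),
\]
where $f$ is the continuity function, crucially independent of $n$. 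Since $f(\epsilon_n) \rightarrow 0$, taking $\limsup_{n \rightarrow \infty}$ and using that the left-hand side converges to $E_{\cal M}^{\infty}(\rho)$ by Eq. (\ref{regu1}) yields $E_{\cal M}^{\infty}(\rho) \leq \limsup_{n \rightarrow \infty} \frac{1}{n} LR_{{\cal M}_n}^{\epsilon_n}(\rho^{\otimes n})$. As $\{ \epsilon_n \}$ was arbitrary, taking the infimum over all such sequences produces the claimed bound.

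For the reverse inequality I would argue by a diagonal construction. Write $L(\epsilon) := \limsup_{n \rightarrow \infty} \frac{1}{n} LR_{{\cal M}_n}^{\epsilon}(\rho^{\otimes n})$; this is monotone non-increasing in $\epsilon$, since a larger ball $B_{\epsilon}(\rho^{\otimes n})$ can only lower the minimum in Eq. (\ref{smooth}), and Proposition \ref{relenteqrob} gives $\lim_{\epsilon \rightarrow 0} L(\epsilon) = E_{\cal M}^{\infty}(\rho)$, whence $L(\epsilon) \leq E_{\cal M}^{\infty}(\rho)$ for every $\epsilon > 0$. For each $j \in \mathbb{N}$ I then pick, from the definition of the $\limsup$ defining $L(1/j)$, an integer $N_j$, which may be taken strictly increasing, such that $\frac{1}{n} LR_{{\cal M}_n}^{1/j}(\rho^{\otimes n}) \leq L(1/j) + 1/j \leq E_{\cal M}^{\infty}(\rho) + 1/j$ for all $n \geq N_j$. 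Defining the staircase sequence $\epsilon_n := 1/j$ for $N_j \leq n < N_{j+1}$, one has $\epsilon_n \rightarrow 0$ and $\limsup_{n \rightarrow \infty} \frac{1}{n} LR_{{\cal M}_n}^{\epsilon_n}(\rho^{\otimes n}) \leq E_{\cal M}^{\infty}(\rho)$, so that $LG_{\cal M}(\rho) \leq E_{\cal M}^{\infty}(\rho)$ by the definition of $LG_{\cal M}$ as an infimum.

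I expect no serious obstacle: the substantive analytic work — the operator-monotonicity estimate, the asymptotic continuity of $E_{{\cal M}_n}$, and the combination of Lemmata \ref{ON} and \ref{DR} — is entirely contained in Proposition \ref{relenteqrob}. The only points requiring care are the uniformity in $n$ of the continuity function $f$, which is what lets $f(\epsilon_n) \rightarrow 0$ hold for a genuinely $n$-dependent sequence, and the monotonicity-plus-diagonalization bookkeeping that converts the iterated limit $\lim_{\epsilon \rightarrow 0} \limsup_{n \rightarrow \infty}$ into a single $\limsup$ along a slowly vanishing sequence $\{ \epsilon_n \}$. This is precisely the sense in which the proof of Proposition \ref{relenteqrob} \emph{can be straightforwardly adapted}.
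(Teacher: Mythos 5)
Your proposal is correct and matches the paper's intent exactly: the paper offers no separate argument for Corollary \ref{LGmeasure} beyond the remark that the proof of Proposition \ref{relenteqrob} "can be straightforwardly adapted," and your two directions — re-running the first half of that proof with an $n$-dependent $\epsilon_n$ (valid precisely because the continuity function $f$ from \cite{SH06} is uniform in $n$), and converting the iterated limit into a single $\limsup$ via monotonicity of $LR_{{\cal M}_n}^{\epsilon}$ in $\epsilon$ plus a staircase diagonalization — constitute that adaptation. The only genuinely new ingredient you supply is the diagonal bookkeeping, which is standard and correct.
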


This result will be used in the next chapter.

\subsection{Proof of Proposition \ref{maincompact}}

We now turn to the proof of Proposition \ref{maincompact}, which is the main technical contribution of this chapter. Before we start with the proof in earnest, we provide an outline of the main steps which will be taken, in order to make the presentation more transparent. 

The first step is to note that when $y > E_{\cal M}^{\infty}(\rho)$, it follows directly from Proposition \ref{relenteqrob} that 
\begin{equation} \label{ouline}
\lim_{n \rightarrow \infty} \min_{\omega_n \in {\cal M}_n} \tr( \rho^{\otimes n} - 2^{yn}\omega_n)_+ = 0,
\end{equation}
while for any $y < E_{\cal M}^{\infty}(\rho)$, this limit is strictly larger than zero. This shows that that $E_{\cal M}^{\infty}(\rho)$ is the strong converse rate in the hypothesis testing problem we are analysing.

It is more involved to show that $E_{\cal M}^{\infty}(\rho)$ is also an achievable rate, i.e. that the limit is one for every $y < E_{\cal M}^{\infty}(\rho)$. The difficulty is precisely that the alternative hypothesis is non-i.i.d. in general. Indeed, if $\omega_n$ were i.i.d., then the result would follows directly from the achievability part of quantum Stein's Lemma \cite{HP91}. Most of the proof is devoted to circumvent this problem. 

The main ingredient will be the exponential de Finetti theorem due to Renner \cite{Ren05, Ren07}. We proceed by means of a contradiction. Assuming conversely that the limit in Eq. (\ref{ouline}) is $0 < \mu < 1$ and using Lemma \ref{DR}, we can find a state $\rho_n$ with non-negligible fidelity with $\rho^{\otimes n}$ such that
\begin{equation*}
\rho_n \leq 2^{yn}\omega_n,
\end{equation*}
for every $n$, where $\omega_n \in {\cal M}_n$ is the optimal state in the minimization of Eq. (\ref{ouline}). Due to property \ref{cond5} of the sets, we can take $\omega_n$ and thus also $\rho_n$ to be permutation-symmetric. Tracing a sublinear number of copies $m$ and using the exponential de Finetti theorem we then find, 
\begin{equation*}
\int \mu(d\sigma) \pi_{\sigma} \leq 2^{yn}\tr_{1,...,m}(\omega_n),
\end{equation*}
where each $\pi_{\sigma}$ is close to an almost power state along $\sigma$. Since the state appearing in the L.H.S. of this Equation has a non-negligible fidelity with $\rho^{\otimes n - m}$ and the states $\pi_{\sigma}$ behave like $\sigma^{\otimes n - m}$ when we measure the same POVM on all $n - m$ copies, the integral must have a non-negligible support on a neighborhood of $\rho$. This then allows us to write 
\begin{equation*}
\pi_{\tilde{\rho}} \leq n^k 2^{yn}\tr_{1,...,m}(\omega_n),
\end{equation*}
for some constant $k$ and an approximation $\tilde{\rho}$ of $\rho$. Then, using the operator monotonicity of the $\log$, the properties of almost power states, which will be shown to be similar to the properties of power states in what concerns the measures $E_{{\cal M}_k}$, and the asymptotic continuity of both $E_{{\cal M}_k}$ and $E_{\cal M}^{\infty}$, we find from this equation that
\begin{equation*}
E_{\cal M}^{\infty}(\rho) \leq y.
\end{equation*}
As we assume $y < E_{\cal M}^{\infty}(\rho)$, we will arrive in a contradiction, showing that the limit in Eq. (\ref{ouline}) must indeed be one. 

\vspace{0.5 cm}

\begin{proof} (Proposition \ref{maincompact})

Let us start showing that if $y = E_{\cal M}^{\infty}(\rho) + \epsilon$, then
\begin{equation}\label{onepart}
\lim_{n \rightarrow \infty} \min_{\omega_n \in {\cal M}_n} \tr( \rho^{\otimes n} - 2^{yn}\omega_n)_+ = 0.
\end{equation}
By Proposition \ref{relenteqrob} there is a $\delta_0 > 0$ such that
\begin{equation} \label{dif1}
\left | E_{\cal M}^{\infty}(\rho)  - \limsup_{n \rightarrow \infty} \frac{1}{n} LR_{{\cal M}_n}^{\delta}(\rho^{\otimes n}) \right | \leq \epsilon/2,
\end{equation}
for every $\delta \leq \delta_0$. Let $\rho_{n, \delta} \in B_{\delta}(\rho^{\otimes n})$ be an optimal state in Eq. (\ref{smooth}) for $\rho^{\otimes n}$. Then there must exists a $\sigma_n \in {\cal M}_n$ such that 
\begin{equation*}
\rho_{n, \delta} \leq 2^{LR_{{\cal M}_n}^{\delta}(\rho^{\otimes n})} \sigma_n,
\end{equation*}
from which follows that for every $\lambda \geq LR_{{\cal M}_n}^{\delta}(\rho^{\otimes n})/n$,
\begin{equation*}
\min_{\omega_n \in {\cal M}_n} \tr( \rho^{\otimes n} - 2^{\lambda n}\omega_n)_+ \leq \min_{\omega_n \in {\cal M}_n} \tr( \rho_{n, \delta} - 2^{{\lambda} n}\omega_n)_+ + \delta \leq \delta.
\end{equation*}
 
From Eq. (\ref{dif1}) and our choice of $y$ we then find that for every $\delta > 0$ there is a sufficiently large $n_0$ such that for all $n \geq n_0$,
\begin{equation*}
\min_{\omega_n \in {\cal M}_n} \tr( \rho^{\otimes n} - 2^{y n}\omega_n)_+ \leq \delta,
\end{equation*}
from which Eq. (\ref{onepart}) follows.

Let us now prove that if $y = E_{\cal M}^{\infty}(\rho) - \epsilon$, then 
\begin{equation}\label{otherpart}
\lim_{n \rightarrow \infty} \min_{\omega_n \in {\cal M}_n} \tr( \rho^{\otimes n} - 2^{yn}\omega_n)_+ = 1.
\end{equation}
We start first proving the weaker statement that the limit in the L.H.S. of Eq. (\ref{otherpart}) goes to $1 - \lambda$, with $0 \leq \lambda < 1$. We assuming conversely that this is not the case and that the limit is zero. For each $n$ we have
\begin{equation} \label{ine11}
\rho^{\otimes n} \leq 2^{yn}\omega_n + (\rho^{\otimes n} - 2^{yn}\omega_n)_+
\end{equation}
Applying Lemma \ref{DR} to Eq. (\ref{ine11}) we find that there are states $\tilde{\rho}_n$ such that $||\rho^{\otimes n} - \tilde{\rho}_n  ||_1 \rightarrow 0$ and $\tilde{\rho}_n \leq g(n)2^{yn}\omega_n$, for a function $g$ satisfying $\lim_{n\rightarrow \infty}g(n) = 1$. It follows that 
\begin{equation*}
\frac{1}{n}LR_{{\cal M}_n}(\tilde{\rho}_n) \leq y
\end{equation*}
and that for every $\delta > 0$ and sufficiently large $n$, $\tilde{\rho}_n \in B_{\delta}(\rho^{\otimes n})$. Therefore, for very $\delta > 0$,
\begin{equation*}
\limsup_{n \rightarrow \infty} \frac{1}{n}LR_{{\cal M}_n}^{\delta}(\rho^{\otimes n}) \leq \limsup_{n \rightarrow \infty} \frac{1}{n}LR_{{\cal M}_n}(\tilde{\rho}_n) \leq y = E_{{\cal M}}^{\infty}(\rho) - \epsilon,
\end{equation*}
in contradiction to Eq. (\ref{relenteqrobeq}) of Proposition \ref{relenteqrob}.

In the rest of the proof we show that if $0 < \lambda < 1$, we also find a contradiction, which will lead us to conclude that $\lambda = 0$, as desired. 

Let $\{ \sigma_n \in {\cal M}_n \}_{n \in \mathbb{N}}$ be a sequence of optimal solutions in the minimization of Eq. (\ref{otherpart}). We assume conversely that
\begin{equation*}
\limsup_{n \rightarrow \infty} \tr( \rho^{\otimes n} - 2^{yn}\sigma_n)_+ = 1 - \lambda < 1.
\end{equation*}
Note that from the monotonicity of the trace of the positive part under trace preserving CP maps and property \ref{cond5} of the sets $\{ {\cal M}_n \}_{n \in \mathbb{N}}$, we can assume w.l.o.g. that the states $\sigma_n$ are permutation-symmetric.
 
For each $n \in \mathbb{N}$ we have $\rho^{\otimes n} \leq 2^{yn}\sigma_n + ( \rho^{\otimes n} - 2^{yn}\sigma_n)_+$. Applying Lemma \ref{DR} once more we see that for every $n \in \mathbb{N}$ there is a state $\rho_n$ such that
\begin{equation} \label{fidelity10}
F(\rho_n, \rho^{\otimes n}) \geq \lambda 
\end{equation}
and
\begin{equation}\label{fidelitynext}
\rho_n \leq  \lambda^{-1}2^{yn}\sigma_n.
\end{equation}
From the monotonicity of the fidelity under trace preserving CP maps and the permutation-invariance of $\sigma_n$ and $\rho^{\otimes n}$ we can also take $\rho_n$ to be permutation-symmetric. 

In the next paragraphs we employ the exponential de Finetti theorem \cite{Ren05, Ren07} to reduce the problem to the case in which the state appearing in the L.H.S. of Eq. (\ref{fidelitynext}) is close to an almost power state. Then in the sequel we reduce the problem to the i.i.d. case, which we can more easily handle. 

From Theorem \ref{expdefinetti} and Eqs. (\ref{df1}), (\ref{df2}), and (\ref{df3}) of chapter \ref{entanglement} we find, with $k = \alpha n$, $0 < \alpha < 1$, and $r := 11 d^2 \log(n) / \alpha$, where $d := \dim({\cal H})$,
\begin{equation} \label{defin}
\rho_{(1-\alpha) n} := \tr_{1,...,\lfloor \alpha n \rfloor}(\rho_n) = \int_{\sigma \in D({\cal H})} \int_{\ket{\theta} \supset \sigma} \mu(d\ket{\theta}) \tr_{E}\left( \ket{\psi^{\ket{\theta}}_{(1-\alpha)n}} \bra{\psi^{\ket{\theta}}_{(1-\alpha)n}} \right) + X_n, 
\end{equation}
where $\ket{\psi^{\ket{\theta}}_{(1-\alpha)n}} \in \ket{\theta}^{[\otimes, (1-\alpha)n, 11d^2\alpha^{-1} \log(n)]}$ and $|| X_n||_1 \leq n^{d^2} 2^{- \frac{\alpha n 11 d^2 \log(n)}{\alpha n}} = n^{- 10d^2}$.  As in section \ref{expdf} of chapter \ref{entanglement}, the notation $\ket{\theta} \supset \sigma$ means that the integration is taken with respect to the environment Hilbert space $E$ and over all the purifications of $\sigma$. For simplicity of notation we define 
\begin{equation*}
\pi_{n}^{\ket{\theta}} := \tr_E(\ket{\psi^{\ket{\theta}}_{(1-\alpha)n}}\bra{\psi^{\ket{\theta}}_{(1-\alpha)n}}).
\end{equation*}

Let us show that due of Eq. (\ref{fidelity10}), $\mu$ must have positive measure in a set of purifications of states in a neighborhood of $\rho$. We make use of the fact that $\pi_{n}^{\ket{\theta}}$, with $\ket{\theta} \supset \sigma$, behaves almost in the same manner as $\sigma^{\otimes (1 - \alpha)n}$ if we measure a POVM on each of the $(1 - \alpha)n$ copies\footnote{See section \ref{expdf} for further discussion on this point.}. First we note that for every POVM element $0 \leq A \leq \id$\footnote{The first inequality and the following equality follow directly from the definition of the trace of the positive part. The second inequality is a well-known relation between the trace norm difference and the fidelity of two quantum states. The third inequality follows from the monotonicity of the fidelity under trace preserving CP maps and, finally, the last inequality is a consequence of Eq. (\ref{fidelity10}).},
\begin{eqnarray} \label{chaintnf}
|\tr(A\rho_{(1 - \alpha)n}) - \tr(A \rho^{\otimes (1 - \alpha)n})| &\leq& \tr(\rho_{(1 - \alpha)n} - \rho^{\otimes (1 - \alpha)n})_+ \nonumber \\ &=& \frac{1}{2}|| \rho_{(1 - \alpha)n} - \rho^{\otimes (1 - \alpha)n} ||_1 \nonumber \\ &\leq& \sqrt{1 - F(\rho_{(1 - \alpha)n}, \rho^{\otimes (1 - \alpha)n})^2} \nonumber \\ &\leq& \sqrt{1 - F(\rho_n, \rho^{\otimes n})^2} \nonumber \\ &\leq& \sqrt{1 - \lambda^2}.
\end{eqnarray}
We now consider a two outcome POVM $\{A_n, \id - A_n \}$ formed as follows. We measure an informationally complete\footnote{See section \ref{infcomp} of chapter \ref{NPPT} for the definition of an informationally complete POVM.} POVM $\{ M_k \}_{k=1}^{L}$ in each of the $(1 - \alpha)n$ systems, obtaining an empirical frequency distribution $p_{k, n}$ of the possible outcomes $\{ k \}_{k=1}^L$. Using this probability distribution, we form the operator
\begin{equation*}
L_n := \sum_{k=1}^L p_{k, n} M_k^{*},
\end{equation*}
where $\{ M_k^* \}$ is the dual of the family $\{ M_k \}$\footnote{See section \ref{infcomp} for the definition of the dual set of operators of an informationally complete POVM. Note in particular that there is a number $K_d \leq d^4$ such that $|| L_n - \rho ||_1 \leq K_d || p_{k, n} - \tr(\rho M_k) ||_1$}. If $|| L_n - \rho ||_1 \leq n^{-\frac{1}{6}}$\footnote{Note that here, as in a few other parts in the rest of the proof, the $1/6$ exponent is not fundamental. We however choose an explicitly value for it to make the presentation simpler.} we accept, otherwise we reject. We denote the POVM element associated with the event that we accept by $A_n$. From Lemma \ref{classicalCH} and Eq. (\ref{eq3.1.5}) of chapter \ref{NPPT} we find 
\begin{equation*}
\Pr \left( || L_n -\rho ||_1 \leq n^{-1/6}\right) \geq \Pr \left( || p_{k, n} - \tr(\rho M_k)||_1 \leq d^{-4} n^{-1/6}\right) \geq 1 - O(2^{ - d^{-9}(1 - \alpha) n^{2/3}}).
\end{equation*}
Thus
\begin{equation*} 
|1 - \tr(A_n \rho^{\otimes(1 - \alpha)n})| \leq O(2^{ -d^{-9}(1 - \alpha)n^{2/3}}).
\end{equation*}
By a similar argument, we find from Lemma \ref{hof} of chapter \ref{entanglement}, in turn, that for every $\ket{\theta}$ such that $\tr_E(\ket{\theta}\bra{\theta}) \notin B_{n^{-1/8}}(\rho)$,
\begin{equation*}
\tr(A_n \pi_{n}^{\ket{\theta}}) \leq O(2^{-d^{-8}(1 - \alpha)n^{2/3}}),
\end{equation*}
From Eq. (\ref{chaintnf}) we can write
\begin{equation*}
|1 - \int_{\sigma \in B_{n^{-1/8}}(\rho)} \int_{\ket{\theta} \supset \sigma} \mu(d\ket{\theta}) \tr(A_n \pi_{n}^{\ket{\theta}}) | \leq \sqrt{1 - \lambda^2} + O(2^{-d^{-8}(1 - \alpha)n^{2/3}}),
\end{equation*}
and therefore
\begin{eqnarray} \label{boundint}
\int_{\sigma \in B_{n^{-1/8}}(\rho)} \int_{\ket{\theta} \supset \sigma} \mu(d\ket{\theta}) &\geq& \int_{\sigma \in B_{n^{-1/8}}(\rho)} \int_{\ket{\theta} \supset \sigma} \mu(d\ket{\theta}) \tr(A_n \pi_{n}^{\ket{\theta}}) \nonumber \\ &\geq& 1 - \sqrt{1 - \lambda^2} + O(2^{-d^{-8}(1 - \alpha)n^{2/3}}) \nonumber \\ &\geq& (1 - \sqrt{1 - \lambda^2})/2,
\end{eqnarray}
for sufficiently large $n$. 

Define the set
\begin{equation*}
{\cal L} := \{ \ket{\theta} \in {\cal H}_S\otimes {\cal H}_E : \tr_E(\ket{\theta}\bra{\theta}) \in B_{n^{-1/8}}(\rho)\}.
\end{equation*}
On one hand, it follows from Eq. (\ref{boundint}) that there must exist a $\ket{\tilde{\theta}} \in {\cal L}$ such that\footnote{This can be seen by contradiction. Suppose that for every state in ${\cal L}$ the L.H.S. of Eq. (\ref{eqnet}) is smaller than the R.H.S. Let us take the maximal set  $\{ \ket{\phi_i} \}_{i=1}^N$ of pure states in ${\cal L}$ satisfying $|| \ket{\phi_i} - \ket{\phi_j} ||_1 \geq n^{-4}$ for every $i, j$. This set has cardinality at most $(1 + n^{-4}/4)^{2d^2}/(n^{-4}/4)^{2d^2} < n^{8d^2}/2$ (see e.g. Lemma II.4 of Ref. \cite{HLSW04}). Then 
\begin{equation*}
\int_{\ket{\theta} \in {\cal L}}\mu(d\ket{\theta}) \leq \sum_{i=1}^N \int_{\ket{\theta} \in B_{n^{-4}}(\ket{\phi_i})}\mu(d\ket{\theta}) \leq N n^{-8d^2} (1 - \sqrt{1 - \lambda^2}) < (1 - \sqrt{1 - \lambda^2})/2,
\end{equation*}
in contradiction to Eq. (\ref{boundint}).}
\begin{equation} \label{eqnet}
V_{n} := \int_{\ket{\theta} \in B_{n^{-4}}(\ket{\tilde{\theta}})} \mu(d\ket{\theta}) \geq (1 - \sqrt{1 - \lambda^2})n^{-8d^2}.
\end{equation}
On the other, from Eqs. (\ref{fidelitynext}) and (\ref{defin}),
\begin{equation*}
\int_{\ket{\theta} \in B_{n^{-4}}(\ket{\tilde{\theta}})} \mu(d\ket{\theta}) \pi_{n}^{\ket{\theta}} + X_n \leq \lambda^{-1}2^{yn}\tr_{1,...,\lfloor \alpha n \rfloor}(\sigma_n).
\end{equation*}
Dividing both sides of this equation by $V_{n}$,
\begin{equation}\label{eq200}
\varsigma_n := \int_{\ket{\theta} \in B_{n^{-4}}(\tilde{\theta})} \nu(d\ket{\theta}) \pi_{n}^{\ket{\theta}} \leq \lambda^{-1}(1 - \sqrt{1 - \lambda^2})^{-1} n^{8d^2} 2^{yn}\tr_{1,...,\lfloor \alpha n \rfloor}(\sigma_n) + (X'_n)_+,
\end{equation}
where $X'_n := X_n / V_{n}$, which satisfies $\tr(X'_n)_+ \leq || X'_n  ||_1 \leq n^{8d^2}n^{- 10 d^2} \leq n^{- 2d^2}$, and $\nu := \mu/V_{n}$ is a probability density. 

Applying Lemma \ref{DR} to Eq. (\ref{eq200}), we find there is a state $\varsigma'_n$ such that $|| \varsigma_n - \varsigma'_n ||_1 \leq 2n^{- 2d^2}$ and \begin{equation*}
\varsigma'_n \leq \lambda^{-1}(1 - \sqrt{1 - \lambda^2})^{-1} n^{8d^2} 2^{yn}\tr_{1,...,\lfloor \alpha n \rfloor}(\sigma_n).
\end{equation*} 

From property \ref{cond3} of the sets we see that $\tr_{1,...,\lfloor \alpha n \rfloor}(\sigma_n) \in {\cal M}_{(1 - \alpha)n}$. Using the operator monotonicity of the $\log$ \cite{Bat96},
\begin{equation*}
\frac{1}{n}E_{{\cal M}_{(1 - \alpha)n}}(\varsigma'_n) \leq y +9d^2\frac{\log(n)}{n},
\end{equation*}
for sufficiently large $n$. From the asymptotic continuity of $E_{{\cal M}_k}$, in turn,
\begin{equation} \label{cont}
\frac{1}{n}E_{{\cal M}_{(1 - \alpha)n}}(\varsigma_n) \leq y + 9d^2\frac{\log(n)}{n} + f(2n^{- 2d^2}),
\end{equation}
where $f : \mathbb{R} \rightarrow \mathbb{R}$ is a function such that $\lim_{x \rightarrow 0}f(x) = 0$. 

We should now already have an idea of how the contradiction will appear. Indeed, $\varsigma_n$ is a state formed by almost power states along purifications of states close to $\rho$. We can then expect that for very large $n$, $\frac{1}{n}E_{{\cal M}_{(1 - \alpha)n}}(\varsigma_n)$ should be close to $\frac{1}{n}E_{{\cal M}_{(1 - \alpha)n}}(\rho^{\otimes n})$, from which we could readily find a contradiction from Eq. (\ref{cont}). In the next paragraphs we prove that this intuition is indeed correct.

As pointed out in section \ref{expdf}, every $\ket{\psi^{\ket{\theta}}_{(1 - \alpha)n}}$ in the convex combination forming $\varsigma_n$ can be written as 
\begin{equation} \label{rep}
\ket{\psi^{\ket{\theta}}_{(1 - \alpha)n}} = \sum_{k = 0}^{r} \sum_{l=1}^{\binom{(1 - \alpha)n}{k}} \binom{(1 - \alpha)n}{k}^{-1/2} \beta_k^{\ket{\theta}}  P_{\pi_{l, k}} \ket{\Psi_k^{\ket{\theta}}} \otimes \ket{\theta}^{\otimes (1 - \alpha)n - k}, 
\end{equation}
where the each of the terms satisfies the conditions presented in section \ref{expdf} and, as before, $r = 11 d^2 \log(n) / \alpha$. Note in particular that each $\ket{\Psi_k^{\ket{\theta}}}$ is such that
\begin{equation}
\ket{\Psi_k^{\ket{\theta}}} = \sum_{m_1,...,m_k} c_{m_1,...,m_k}^{\ket{\theta}} \ket{m_1} \otimes ... \otimes \ket{m_k},
\end{equation}
with $|\braket{m_l}{\theta}| = 0$ for every $l$. 

Define the states $\ket{\phi^{\ket{\theta}}_{(1 - \alpha)n}} := \ket{\overline{\psi}^{\ket{\theta}}_{(1 - \alpha)n}}/ ||\ket{\overline{\psi}^{\ket{\theta}}_{(1 - \alpha)n}}||$, with
\begin{equation} \label{approx1}
\ket{\overline{\psi}^{\ket{\theta}}_{(1 - \alpha)n}} := \sum_{k : |\beta_k^{\ket{\theta}}| \geq n^{-1}} \sum_{l=1}^{\binom{(1 - \alpha)n}{k}} \binom{(1 - \alpha)n}{k}^{-1/2} \beta_k^{\ket{\theta}}  P_{\pi_{l, k}} \ket{\tilde{\Psi}_k^{\ket{\theta}}} \otimes \ket{\tilde{\theta}}^{\otimes (1 - \alpha)n - k}, 
\end{equation}
where 
\begin{equation}
\ket{\tilde{\Psi}_k^{\ket{\theta}}} := \sum_{m_1,...,m_k} c_{m_1,...,m_k}^{\ket{\theta}} (\ket{m_1} - \braket{\tilde{\theta}}{m_1}\ket{\tilde{\theta}}) \otimes ... \otimes (\ket{m_k} - \braket{\tilde{\theta}}{m_k}\ket{\tilde{\theta}}).
\end{equation}

A simple calculation shows that for any $\ket{\theta} \in B_{n^{-4}}(\ket{\tilde{\theta}})$\footnote{Define the auxiliary non-normalized state
\begin{equation*} 
\ket{\zeta^{\ket{\theta}}_{(1 - \alpha)n}} := \sum_{k : |\beta_k^{\ket{\theta}}| \geq n^{-1}} \sum_{l=1}^{\binom{(1 - \alpha)n}{k}} \binom{(1 - \alpha)n}{k}^{-1/2} \beta_k^{\ket{\theta}}  P_{\pi_{l, k}} \ket{\Psi_k^{\ket{\theta}}} \otimes \ket{\tilde{\theta}}^{\otimes (1 - \alpha)n - k}. 
\end{equation*}
Then 
\begin{equation*}
|\braket{\zeta^{\ket{\theta}}_{(1 - \alpha)n}}{\psi^{\ket{\theta}}_{(1 - \alpha)n}}|^2 = \sum_{k : |\beta_k^{\ket{\theta}}| \geq n^{-1}} |\beta_k^{\ket{\theta}}|^2 |\braket{\theta}{\tilde{\theta}}|^{2(1 - \alpha)n - 2k} \geq (1 - n^{-2})(1 - n^{-4})^{(1 - \alpha)n} \geq 1 - n^{-3}, 
\end{equation*}
where we used that $\ket{\Psi_k} \in ({\cal H} / \{ \ket{\theta} \})^{\otimes k}$. Similarly, 
\begin{equation*}
|\braket{\zeta^{\ket{\theta}}_{(1 - \alpha)n}}{\overline{\psi}^{\ket{\theta}}_{(1 - \alpha)n}}|^2 = \sum_{k : |\beta_k^{\ket{\theta}}| \geq n^{-1}} |\beta_k^{\ket{\theta}}|^2 |\braket{\Psi_k}{\tilde{\Psi}_k}|^2 \geq (1 - n^{-2}) (1 - d^2\log(n)/n^3) \geq 1 - n^{-2}, 
\end{equation*}
for sufficiently large $n$.},   
\begin{equation*}
|| \ket{\psi^{\ket{\theta}}_{(1 - \alpha)n}}\bra{\psi^{\ket{\theta}}_{(1 - \alpha)n}} -   \ket{\phi^{\ket{\theta}}_{(1 - \alpha)n}}\bra{\phi^{\ket{\theta}}_{(1 - \alpha)n}} ||_1 \leq n^{-1}.
\end{equation*}
Defining 
\begin{equation} \label{tau}
\tau_n := \int_{\ket{\theta} \in B_{n^{-4}}(\rho)} \nu(d\ket{\theta}) \tr_{E}(\ket{\phi^{\ket{\theta}}_{(1 - \alpha)n}} \bra{\phi^{\ket{\theta}}_{(1 - \alpha)n}}),
\end{equation}
we see that $|| \tau_n - \varsigma_n || \leq n^{-1}$, for sufficiently large $n$. From Eq. (\ref{cont}) and the asymptotic continuity of $E_{{\cal M}_k}$,
\begin{equation} \label{correct5} 
\frac{1}{n}E_{{\cal M}_{(1 - \alpha)n}}(\tau_n) \leq y + 9d^2\frac{\log(n)}{n} + 2f(n^{-1}).
\end{equation}

Let us denote the maximum $k$ appearing in Equation (\ref{approx1}) by $k_{\max}^{\ket{\theta}}$. Consider the term $\ket{\tilde{\Psi}_{k_{\max}}^{\ket{\theta}}} \otimes \ket{\tilde{\theta}}^{\otimes (1 - \alpha)n - k_{\max}^{\ket{\theta}}}$. It is clear that in all the other terms of the superposition the state $\ket{\tilde{\theta}}$ will appear at least in one of the first $k_{\max} \leq r$ registers. As each $\ket{\tilde{\Psi}_k^{\ket{\theta}}}$ lives in $({\cal H}/ \{ \ket{\tilde{\theta}}\})^{\otimes k}$, 
\begin{eqnarray*}
(\ket{\tilde{\theta}}\bra{\tilde{\theta}})^{\otimes (1 - \alpha)n - r} &\leq& \binom{(1 - \alpha)n}{r} |\beta_{k_{\max}}^{\ket{\theta}}|^{-2} \tr_{1,...,r}(\ket{\phi^{\ket{\theta}}_{(1 - \alpha)n}}\bra{\phi^{\ket{\theta}}_{(1 - \alpha)n}}) \nonumber \\ &\leq&  2^{nh(r/n)}n^2 \tr_{1,...,r}(\ket{\phi^{\ket{\theta}}_{(1 - \alpha)n}}\bra{\phi^{\ket{\theta}}_{(1 - \alpha)n}}),
\end{eqnarray*}
where the last inequality follows from the fact that $|\beta_{k_{\max}}^{\ket{\theta}}|^{-2} \leq n^{2}$ and the bound $\binom{n}{k} \leq 2^{nh(k/n)}$ \cite{CT91}. From Eq. (\ref{tau}) we then find
\begin{equation*}
\tilde{\rho}^{\otimes (1 - \alpha)n - r} \leq n^22^{nh(r/n)} \tr_{1,...,r}(\tau_n),
\end{equation*}
where $\tilde{\rho} := \tr_{E}(\ket{\tilde{\theta}}\bra{\tilde{\theta}}) \in B_{n^{-1/8}}(\rho)$. 

Define $\tilde{\tau}_n := \tr_{1,...,r}(\tau_n)$ and let $\omega_n \in {\cal M}_{(1 - \alpha)n - r}$ be such that
\begin{equation*} 
E_{{\cal M}_{(1 - \alpha)n - r}}(\tilde{\tau}_n) = S(\tilde{\tau}_n || \omega_n).
\end{equation*}
Let $\lambda = E_{{\cal M}_{(1 - \alpha)n - r}}(\tilde{\tau}_n) + n\epsilon/2$. For every integer $m$ we can write
\begin{eqnarray} \label{beforelast}
\tilde{\rho}^{\otimes ((1 - \alpha)n - r)m} &\leq& n^{2m}2^{nh(r/n)m}\tilde{\tau}_n^{\otimes m} \nonumber \\ &\leq& n^{2m}2^{nh(r/n)m}2^{\lambda m} \omega_n^{\otimes m} + n^{2m}2^{nh(r/n)m}(\tilde{\tau}_n^{\otimes m} - 2^{\lambda m} \omega_n^{\otimes m})_+.
\end{eqnarray}
From Lemma \ref{ON} we have 
\begin{equation*}
\tr(\tilde{\tau}_n^{\otimes m} - 2^{\lambda m} \omega_n^{\otimes m})_+ \leq 2^{- \epsilon n m/2}.
\end{equation*}
Then, noting that $2^{- \epsilon/2 n}  n^22^{nh(r/n)} \leq 2^{- \epsilon/4 n}$ for sufficiently large $n$, we can apply Lemma \ref{DR} to Eq. (\ref{beforelast}) to find that there is a state $\tilde{\rho}_{m n}$ such that $|| \tilde{\rho}_{m, n} -  \tilde{\rho}^{\otimes ((1 - \alpha)n - r)m} ||_1 \leq 2^{- \epsilon/8 n m}$ and
\begin{equation*}
\tilde{\rho}_{m, n} \leq g(n)(n^22^{nh(r/n)})^m 2^{\lambda m} \omega_n^{\otimes m}, 
\end{equation*}
for a function $g(n)$ such that $\lim_{n \rightarrow \infty}g(n) = 1$. From the operator monotonicity of the $\log$ \cite{Bat96} and the asymptotic continuity of the quantities $E_{{\cal M}_k}$, once more, for sufficiently large $n$ it holds
\begin{eqnarray*}
\frac{1}{m} E_{{\cal M}_{((1 - \alpha) - r)m}}(\tilde{\rho}^{\otimes ((1 - \alpha)n - r)m}) &\leq&  2 n h(r/n) + E_{{\cal M}_{(1 - \alpha)n - r}}(\tilde{\tau}_n) + n\epsilon \nonumber \\ &\leq & 2 n h(r/n) + E_{{\cal M}_{(1 - \alpha)n}}(\tau_n) + n\epsilon,
\end{eqnarray*}
where the last inequality follows from property \ref{cond3} of the sets ${\cal M}_k$. From property \ref{cond4}, on the other hand, 
\begin{equation*}
E_{\cal M}^{\infty}(\tilde{\rho}) \leq \frac{1}{((1 - \alpha)n - r)m} E_{{\cal M}_{((1 - \alpha) - r)m}}(\tilde{\rho}^{\otimes ((1 - \alpha)n - r)m}),
\end{equation*}
and hence
\begin{equation*}
E_{\cal M}^{\infty}(\tilde{\rho}) \leq 2 \frac{n}{(1 - \alpha)n - r} h(r/n) + \frac{1}{(1 - \alpha)n - r}E_{{\cal M}_{(1 - \alpha)n}}(\tau_n) + \frac{n}{(1 - \alpha)n - r}\epsilon.
\end{equation*}
In Ref. \cite{Chr06} it was proven that the regularization of the minimum relative entropy over any family of sets satisfying properties \ref{cond1}, \ref{cond2}, \ref{cond3} and \ref{cond4} is asymptotically continuous. It then follows that 
\begin{equation*}
E_{\cal M}^{\infty}(\rho) \leq 2 \frac{n}{(1 - \alpha)n - r} h(r/n) + \frac{1}{(1 - \alpha)n - r}E_{{\cal M}_{(1 - \alpha)n}}(\tau_n) + \frac{n}{(1 - \alpha)n - r}\epsilon + f(n^{-1/8}).
\end{equation*}
Using the bound for $E_{{\cal M}_{(1 - \alpha)n}}(\tau_n)$ given by Eq. (\ref{correct5}) and taking the limit $n \rightarrow \infty$ in the equation  above we finally find
\begin{equation*}
E_{\cal M}^{\infty}(\rho) \leq \frac{1}{1 - \alpha}E_{\cal M}^{\infty}(\rho) - \frac{\epsilon}{2(1 - \alpha)}.
\end{equation*}
As $\alpha$ can be taken to be arbitrarily small, we find a contradiction since $\epsilon > 0$. 
\end{proof}

\section{Proof of Corollary \ref{faithful}} \label{coroll}

In this section we prove that the regularized relative entropy of entanglement is faithful. The idea 
is to combine Theorem \ref{maintheorem} with the exponential de Finetti theorem. 

\vspace{0.5 cm}

\begin{proof} (Corollary \ref{faithful})
For simplicity we present the proof for the bipartite case. Its extension to the multipartite setting is completely analogous. In the following paragraphs we prove that for every entangled state $\rho \in {\cal D}({\cal H}_A \otimes {\cal H}_B)$, there is a $\mu(\rho) > 0$ and a sequence of POVM elements $0 \leq A_n \le \id$ such that
\begin{equation*}
\lim_{n \rightarrow \infty} \tr(A_n \rho^{\otimes n}) = 1, 
\end{equation*}
and for all sequences of separable states $\{ \omega_n \}_{n \in \mathbb{N}}$,
\begin{equation*}
- \frac{\log \tr(A_n \omega_n)}{n} \geq \mu(\rho),
\end{equation*}
From Theorem \ref{maintheorem} it will then follows that $E_R^{\infty}(\rho) \geq \mu(\rho) > 0$.

To form the sequence of POVM elements we use a recurrent construction in this thesis, employed before in the proofs of Theorems \ref{maincompact} and \ref{T=coC}. We apply the symmetrization operation $\hat{S}_n$ to the $n$ individual Hilbert spaces, trace out the first $\alpha n$ systems ($0 < \alpha < 1$), and then measure a LOCC informationally complete POVM $\{ M_k \}_{k=1}^{L}$ in each of the remaining $(1 - \alpha)n$ systems, obtaining an empirical frequency distribution $p_{k, n}$ of the possible outcomes $\{ k \}_{k=1}^L$. Using this probability distribution, we form the operator
\begin{equation*}
L_n := \sum_{k=1}^L p_{k, n} M_k^{*},
\end{equation*}
where $\{ M_k^* \}$ is the dual set of the family $\{ M_k \}$. If 
\begin{equation*}
|| L_n - \rho ||_1 \leq \epsilon,
\end{equation*}
where
\begin{equation} \label{epsfar}
\epsilon := \min_{\sigma \in {\cal S}} || \rho - \sigma ||_1,\footnote{Note that as $\rho$ is entangled, $\epsilon > 0$.}
\end{equation}
we accept, otherwise we reject. Let $A_n := \hat{S}_n(\id^{\otimes \alpha n} \otimes \tilde{A}_n)$ be the POVM element associated to the event that we accept, where $\tilde{A}_n$ is the POVM element associated to measuring $\{ M_k \}_{k=1}^{L}$ on each of the $(1 - \alpha)n$ copies and accepting. 

First, by the law of large numbers \cite{Dud02} and the definition of informationally complete POVMs, it is clear that $\lim_{n \rightarrow \infty} \tr(A_n \rho^{\otimes n}) = 1$. It thus remains to show that $\tr(A_n \omega_n) = \tr(\id^{\otimes \alpha n} \otimes \tilde{A}_n) \hat{S}_n(\omega_n))  \leq 2^{- \mu n}$, for a positive number $\mu$ and every sequence of separable states $\{ \omega_n \}_{n \in \mathbb{N}}$.  

Applying Theorem \ref{expdefinetti} with $k = \alpha n$ and $r = \beta n$ to $\tr_{1,...,\alpha n}(\hat{S}_n(\omega_n))$, we find that there is a probability measure $\nu$ such that
\begin{equation} \label{neweqcorr} 
\tr_{1,...,\alpha n}(\hat{S}_n(\omega_n)) = \int_{\sigma \in D({\cal H})} \int_{\ket{\theta} \supset \sigma} \nu(d\ket{\theta}) \pi_{n}^{\ket{\theta}} + X_n, 
\end{equation}
where $|| X_n ||_1 \leq 2^{\alpha \beta n}$,
\begin{equation*}
\pi_n^{\ket{\theta}} := \tr_{E}\left( \ket{\psi^{\ket{\theta}}_{(1-\alpha)n}} \bra{\psi^{\ket{\theta}}_{(1-\alpha)n}} \right),
\end{equation*}
and $\ket{\psi^{\ket{\theta}}_{(1-\alpha)n}} \in \ket{\theta}^{[\otimes, (1-\alpha)n, \beta n]}$. 

In the next paragraphs we show that only an exponentially small portion of the volume of $\nu$ is on a neighborhood of purifications of $\rho$. To this aim we employ Theorem A.1 of Ref. \cite{KR05}, which for our proposes can be stated as follows. Consider the random process given by the probability distribution
\begin{equation*}
p(i_1,...,i_{(1 - \alpha)n}) := \tr(\id^{\otimes \alpha n} \otimes M_{k_{1}}\otimes M_{k_{2}} \otimes ... \otimes M_{k_{(1 - \alpha)n}}\hat{S}_n(\omega_n)),
\end{equation*}
i.e. the random process associated to the measurement of the POVM $\{ M_k \}$ in each of the $(1 - \alpha)n$ parties of $\hat{S}_n(\omega_n)$. Let us define the post-selected states
\begin{equation} \label{definet}
\pi_{i_1,...,i_{(1 - \alpha)n}} := \frac{\tr_{\backslash 1}(\id^{\otimes \alpha n} \otimes  M_{k_{1}}\otimes M_{k_{2}} \otimes ... \otimes M_{k_{(1 - \alpha)n}}\hat{S}_n(\omega_n))}{\tr(\id^{\otimes \alpha n} \otimes M_{k_{1}}\otimes M_{k_{2}} \otimes ... \otimes M_{k_{(1 - \alpha)n}}\hat{S}_n(\omega_n))}
\end{equation}
and let $L_n^{i_1,...,i_{(1 - \alpha)n}}$ be the estimated state when the sequence of outcome $\{i_1,...,i_{(1 - \alpha)n}\}$ is obtained. Define ${\cal R}$ as the set of all outcome sequences such that 
\begin{equation*}
|| L_n^{i_1,...,i_{(1 - \alpha)n}} - \pi_{i_1,...,i_{(1 - \alpha)n}} ||_1 \geq \delta.
\end{equation*}
Then there is a $M > 0$ (only depending on the dimension of ${\cal H}$) such that \cite{KR05}
\begin{equation} \label{KRen}
\sum_{(i_1,...,i_{(1 - \alpha)n - 1}) \in {\cal R}} p(i_1,...,i_{(1 - \alpha)n - 1}) \leq  2^{-M(1 - \alpha)n\delta^2}.
\end{equation}

Since we are measuring local POVMs, the operation $\pi \mapsto \tr_{\backslash 1}(\hat{S}_n(\pi) \id^{\otimes \alpha n} \otimes \tilde{A}_n)$ is a sthocastic LOCC map. It hence follows from Eq. (\ref{neweqcorr}) that 
\begin{eqnarray} \label{eqcor1}
\tr_{\backslash 1}(\hat{S}_n(\omega_n) \id \otimes \tilde{A}_n) &=& \int_{\sigma \in B_{2\epsilon}(\rho)} \int_{\ket{\theta} \supset \sigma} \nu(d\ket{\theta}) \tr_{\backslash 1}(\pi_{n}^{\ket{\theta}}\id \otimes \tilde{A}_n) \nonumber \\ &+& \int_{\sigma \in \notin B_{2\epsilon}(\rho)} \int_{\ket{\theta} \supset \sigma} \nu(d\ket{\theta}) \tr_{\backslash 1}(\pi_{n}^{\ket{\theta}}\id \otimes \tilde{A}_n) \nonumber \\ &+& \tr_{\backslash 1}(X_n\id \otimes \tilde{A}_n) \in \text{cone}({\cal S}).
\end{eqnarray}
As $|| X_n || \leq 2^{-\alpha \beta n}$, we find $|| \tr_{\backslash 1}(X_n\id \otimes \tilde{A}_n) ||_1 \leq 2^{-\alpha \beta n}$. Furthermore, from Lemma \ref{hof} of chapter \ref{entanglement}, 
\begin{equation*}
|| \tr_{\backslash 1}(\pi_{n}^{\ket{\theta}}\id \otimes \tilde{A}_n) ||_1 = \tr(\pi_{n}^{\ket{\theta}}\id \otimes \tilde{A}_n) \leq n^{d^2}2^{-(\epsilon - h(\beta))(1 - \alpha)n}.
\end{equation*}
if $\tr_E(\ket{\theta}\bra{\theta}) \notin B_{2\epsilon}(\rho)$. Thus
\begin{eqnarray*}
\tr_{\backslash 1}(\hat{S}_n(\omega_n) \id \otimes \tilde{A}_n) = \int_{\sigma \in B_{2\epsilon}(\rho)} \int_{\ket{\theta} \supset \sigma} \nu(d\ket{\theta}) \tr_{\backslash 1}(\pi_{n}^{\ket{\theta}}\id \otimes \tilde{A}_n) + \tilde{X}_n \in \text{cone}({\cal S}).
\end{eqnarray*}
with $\tilde{X_n}$ given by the sum of the two last terms in Eq. (\ref{eqcor1}), which satisfies $|| \tilde{X}_n ||_1 \leq 2^{-\alpha \beta n} + n^{d^2}2^{-(\epsilon - h(\beta))(1 - \alpha)n}$.

For each $\tr_{\backslash 1}(\pi_{n}^{\ket{\theta}}\id \otimes \tilde{A}_n)$, with $\tr_E(\ket{\theta}\bra{\theta}) \in B_{2\epsilon}(\rho)$, we can write 
\begin{equation*}
\tr_{\backslash 1}(\pi_{n}^{\ket{\theta}}\id \otimes \tilde{A}_n) = \tr_{\backslash 1}(\pi_{n}^{\ket{\theta}}\id \otimes B_n) + \tr_{\backslash 1}(\pi_{n}^{\ket{\theta}}\id \otimes (\tilde{A}_n - B_n)),
\end{equation*}
where $B_n$ is the sum of the POVM elements products for which the post-selected state is $\delta$-close from the empirical state. From Eq. (\ref{KRen}) we find $\tr(\pi_{n}^{\ket{\theta}} \id \otimes (\tilde{A_n} - B_n)) \leq 2^{-M (1 - \alpha)\delta^2 n}$. Therefore, 
\begin{eqnarray} \label{hatx}
\tr_{\backslash 1}(\hat{S}_n(\omega_n) \id \otimes \tilde{A}_n) &=& \int_{\sigma \in D({\cal H})} \int_{\ket{\theta} \supset \sigma \in B_{2\epsilon}(\rho)} \nu(d\ket{\theta}) \tr(\pi_{n}^{\ket{\theta}}\id \otimes B_n)\rho^{\ket{\theta}} \nonumber \\ &+& \hat{X}_n \in \text{cone}({\cal S}).
\end{eqnarray}
where $\hat{X}_n$ is such that $|| \hat{X}_n ||_1 \leq 2^{-\alpha \beta n} + n^{d^2}2^{-(\epsilon - h(\beta))(1 - \alpha)n} + 2^{-M (1 - \alpha)\delta^2 n}$ and
\begin{equation*}
\rho^{\ket{\theta}} := \frac{\tr_{\backslash 1}(\pi_{n}^{\ket{\theta}} \id \otimes B_n)}{\tr(\pi_{n}^{\ket{\theta}}\id \otimes B_n)}.
\end{equation*}
Note that we have $|| \rho^{\ket{\theta}} - \rho|| \leq \delta$ for every $\rho^{\ket{\theta}}$ appearing in the integral of Eq. (\ref{hatx}). Define 
\begin{equation*}
\Lambda := \int_{\sigma \in D({\cal H})} \int_{\ket{\theta} \supset \sigma \in B_{2\epsilon}(\rho)} \nu(d\ket{\theta}) \tr(\pi_{n}^{\ket{\theta}}\id \otimes B_n). 
\end{equation*}
Then, 
\begin{equation} \label{wehereweused}
\left \Vert  \Lambda^{-1}\int_{\sigma \in D({\cal H})} \int_{\ket{\theta} \supset \sigma \in B_{2\epsilon}(\rho)} \nu(d\ket{\theta}) \tr(\pi_{n}^{\ket{\theta}}\id \otimes B_n)\rho^{\ket{\theta}} - \rho \right \Vert \leq \delta, 
\end{equation}
From Eqs. (\ref{epsfar}) and (\ref{wehereweused}) it follows that $\Lambda^{-1}\int_{\sigma \in D({\cal H})} \int_{\ket{\theta} \supset \sigma \in B_{2\epsilon}(\rho)} \nu(d\ket{\theta}) \tr(\pi_{n}^{\ket{\theta}}\id \otimes B_n)\rho^{\ket{\theta}}$ is at least $\epsilon - \delta$ far away from the separable states set. Using Eq. (\ref{hatx}) we thus find that 
\begin{equation*}
\Lambda \leq (\epsilon - \delta)^{-1}(2^{-\alpha \beta n} + n^{d^2}2^{-(\epsilon - h(\beta))n} + n2^{-((1 - \alpha)n - 1)\delta^2M^{-2}}),
\end{equation*}
With this bound we finally see that
\begin{eqnarray*}
\tr(\omega_n A_n) &=& \tr(\hat{S}_n(\omega_n) \id \otimes \tilde{A}_n) \nonumber \\ &=& \Lambda + \tr(\hat{X}) \nonumber \\ &\leq& (1 + (\epsilon - \delta)^{-1})(2^{-\alpha \beta n} + n^{d^2}2^{-(\epsilon - h(\beta))n} + n2^{-((1 - \alpha)n - 1)\delta^2M^{-2}}) \nonumber \\ &\leq& 2^{- \mu n},
\end{eqnarray*}
for appropriately chosen $\alpha, \beta \in [0, 1]$ and $\mu > 0$.
\end{proof}
\vspace{0.3 cm}

In the proof above the only property of the set of separable states that we used, apart from the five properties required for Theorem \ref{maintheorem} to hold, was its closedness under SLOCC. It is an interesting question if such a property is really needed, or if actually the positiveness of the rate function is a generic property of any $\rho \notin {\cal M}$ for every family of sets satisfying Theorem \ref{maintheorem}. The following example shows that for some choices of sets $\{ {\cal M}_k \}$, the rate function can be zero for a state $\rho \notin {\cal M}$. In fact, in our example the rate function is zero for every state.   

A bipartite state $\sigma_{AB}$ is called $n$-extendible if there is a state $\tilde{\sigma}_{AB_1...B_n}$ symmetric under the permutation of the $B$ systems and such that $\tr_{B_2,...,B_n}(\tilde{\sigma}) = \sigma$. Let us denote the set of $n$-extendible states acting on ${\cal H} = {\cal H}_A \otimes {\cal H}_B$ by ${\cal E}_k({\cal H})$. It is clear that the sets $\{ {\cal E}_k({\cal H}^{\otimes n}) \}_{n \in \mathbb{N}}$ satisfy conditions \ref{cond1}-\ref{cond5} and therefore we can apply Theorem \ref{maintheorem} to them. Corollary \ref{faithful} however does not hold in this case, as the sets are not closed under two-way LOCC, even though they are closed under one-way LOCC. In fact, the statement of the corollary fails dramatically in this case as it turns out that the measures $E_{{\cal E}_k}^{\infty}$ are zero for every state. This can be seen as follows: Given a state $\rho$, let us form the $k$-extendible state
\begin{equation*}
\tilde{\rho}_{AB_1,...,B_k} := \id_A \otimes \hat{S}_{B_1,...,B_k}\left(\rho_{AB} \otimes \left(\frac{\id}{d^2}\right)^{\otimes k - 1} \right)
\end{equation*}
We have $\tilde{\rho}_{AB_1,...,B_k} \geq \rho_{AB} \otimes \frac{\id}{d^2}^{\otimes k - 1}/k$. Then, from the operator monotonicity of the $\log$,
\begin{equation*}
E_{{\cal E}_k}(\rho) \leq S(\rho || \tr_{B_2,...,B_n}(\tilde{\rho})) \leq k. 
\end{equation*} 
As the upper bound above is independent of $n$, we then find
\begin{equation*} 
E_{{\cal E}_k}^{\infty}(\rho) = \lim_{n \rightarrow \infty} \frac{1}{n} E_{{\cal E}_k}^{\infty}(\rho^{\otimes n}) \leq \lim_{n \rightarrow \infty} \frac{k}{n} = 0. 
\end{equation*}

Note that as ${\cal E}_1$ is contained in the set of one-way undistillable states ${\cal C}_{\text{one-way}}$, the same is true for $E_{{\cal C}_{\text{one-way}}}^{\infty}$, i.e. it is identically zero. It is interesting that an one-way distillable state cannot be distinguished with an exponential decreasing probability of error from one-way undistillable states if we allow these to be correlated among several copies, while any entangled state can be distinguished from arbitrary sequences of separable states with exponentially good accuracy. Moreover, as the set of PPT states satisfy conditions \ref{cond1}-\ref{cond5} and is closed under SLOCC, every NPPT state can be exponentially well distinguished from a sequence of PPT states. It is an intriguing open question if the same holds for distinguishing a two-way distillable state from a sequence of two-way undistillable states. Due to the conjecture existence of NPPT bound entanglement, property \ref{cond4} might fail and therefore we do not know what happens in this case.

\section{Constrained Hypothesis Testing} \label{constraint}

Up to now we have assumed that the observer has access to the most general type of measurements allowed by quantum mechanics. However, in many situations there are further constraints that reduces the class of available measurements. This might happen due to experimental limitations or even due to basic physical constraints such as super-selection rules and locality. Indeed, LOCC discrimination procedures play a fundamental role in the distant lab paradigm of quantum information theory (see e.g. \cite{BDF+99, WSHV00, Rai01, VSPM01, VP03, HOSS04, Win05, HMM06, HMT06, MW07} and references therein).  

It turns out that the characterization of LOCC POVMs is a formidable task, and only little is presently known about it. In this respect it is helpful to consider larger classes of POVMs, which although cannot always be implemented by LOCC, approximates the class of LOCC and hence put locality constraints on the type of measurements available. Two such classes are the separable POVMs, composed of non-normalized separable states as POVM elements, and the PPT POVMs, in which each POVM element is PPT \cite{BDF+99, Rai01, VP03}. 

Here we would like to comment on a possible generalization of Theorem \ref{maintheorem} to the case of constrained POVMs. Instead of considering the condition $0 \leq A_n \leq \id$ for the POVM elements, we will consider a constraint of the form $0 \leq_{{\cal P}} A_n \leq \id$, where $\leq_{\cal P}$ is a generalized inequality characterizing a particular type of constraint. For example, choosing ${\cal P}$ to be the set of PPT positive operators and non-normalized separable states we find that $A_n$ is a PPT and separable POVM element, respectively. Note that we could demand more and set the POVM elements to be such that $0 \leq_{{\cal P}} A_n \leq_{{\cal P}} \id$. Operationally this means that we require that the two POVM elements $A_n$ and $\id - A_n$ satisfy the constraint given by ${\cal P}$. We will not consider this more demanding setting, however, as the technique we suggest in the sequel does not seem to be strong enough to deal with it. 

We conjecture that the following is true.

\begin{conjecture}  \label{conj}
Given a family of sets  $\{ {\cal M}_n \}_{n \in \mathbb{N}}$ satisfying properties \ref{cond1}-\ref{cond5}, a proper convex cone ${\cal P}$, and a state $\rho \in {\cal D}({\cal H})$, for every $\epsilon > 0$ there exists a sequence of ${\cal P}$-constrained POVM elements $0 \leq_{{\cal P}} A_n \leq \id$ such that
\begin{equation*}
\lim_{n \rightarrow \infty} \tr((\id - A_n) \rho^{\otimes n}) = 0 
\end{equation*}
and for all sequences of states $\{ \omega_n \in {\cal M}_n \}_{n \in \mathbb{N}}$,
\begin{equation*}
- \frac{\log \tr(A_n \omega_n)}{n} + \epsilon \geq \lim_{\epsilon \rightarrow 0} \limsup_{n \rightarrow \infty} \frac{1}{n} LR_{{\cal M}_n, {\cal P}}^{\epsilon}(\rho^{\otimes n}),
\end{equation*}
where
\begin{equation*}
LR_{{\cal M}_n, {\cal P}}^{\epsilon}(\rho^{\otimes n}) := \min_{\rho_{n, \epsilon} \in B_{\epsilon}(\rho^{\otimes n})}\min_{\sigma \in {\cal M}_n} S_{\max}^{\cal P}(\rho_{n, \epsilon} || \sigma),
\end{equation*}
with
\begin{equation*}
S_{\max}^{\cal P}(\rho || \sigma) := \min \{ s : \rho \leq_{{\cal P}^*} 2^s \sigma   \}. 
\end{equation*}
Conversely, if there is a $\epsilon > 0$ and sequence of ${\cal P}$-constrained POVM elements $0 \leq_{{\cal P}} A_n \leq \id$ satisfying 
\begin{equation*}
 - \frac{\log( \tr(A_n \omega_n))}{n} - \epsilon \geq \lim_{\epsilon \rightarrow 0} \limsup_{n \rightarrow \infty} \frac{1}{n} LR_{{\cal M}_n, {\cal P}}^{\epsilon}(\rho^{\otimes n})
\end{equation*}
for all sequences $\{ \omega_n \in {\cal M}_n \}_{n \in \mathbb{N}}$, then
\begin{equation*}
\lim_{n \rightarrow \infty} \tr((\id - A_n) \rho^{\otimes n}) = 1. 
\end{equation*}
\end{conjecture}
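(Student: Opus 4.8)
The plan is to mirror the convex-duality proof of Theorem \ref{maintheorem}, replacing throughout the ordinary operator inequality $0 \le A \le \id$ by the generalized inequality $0 \le_{{\cal P}} A \le \id$ and tracking how the cone ${\cal P}$ propagates into the dual program. Concretely, I would study the conic programs
$$\lambda_n^{{\cal P}}(\pi, K) := \max_{A} \tr(A\pi) : A \in {\cal P}, \ \id - A \ge 0, \ (\id/K - A) \in ({\cal M}_n)^{*},$$
whose optimum equals the best type-I acceptance probability achievable by a ${\cal P}$-constrained test whose type-II error is at most $1/K$ uniformly over ${\cal M}_n$. As in the unconstrained case, the claim reduces to showing that $\lambda_n^{{\cal P}}(\rho^{\otimes n}, 2^{ny})$ tends to $0$ for $y$ above, and to $1$ for $y$ below, the rate $r := \lim_{\eps \to 0}\limsup_n \frac{1}{n} LR_{{\cal M}_n, {\cal P}}^{\eps}(\rho^{\otimes n})$.

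Since ${\cal P}$ is a proper cone with nonempty interior, a strictly feasible $A$ exists and Slater's condition yields strong duality. Introducing a multiplier $X \in {\cal P}^{*}$ for $A \in {\cal P}$, an operator $Y \ge 0$ for $A \le \id$, and $\mu \in \text{cone}({\cal M}_n)$ for the third constraint, stationarity in $A$ gives $X = Y + \mu - \pi$, so that the dual reads
$$\lambda_n^{{\cal P}}(\pi, K) = \min_{Y, \mu} \tr(Y) + \tr(\mu)/K : Y \ge 0, \ \mu \in \text{cone}({\cal M}_n), \ \pi \le_{{\cal P}^{*}} Y + \mu.$$
Defining the ${\cal P}$-constrained positive part $\tr(Z)_{+}^{{\cal P}} := \min\{\tr(Y) : Y \ge 0, \ Z \le_{{\cal P}^{*}} Y\}$ and writing $\mu = b\nu$, this collapses to
$$\lambda_n^{{\cal P}}(\pi, K) = \min_{\nu \in {\cal M}_n, \ b \ge 0} \tr(\pi - b\nu)_{+}^{{\cal P}} + b/K,$$
the faithful analogue of the formula obtained inside the proof of Theorem \ref{maintheorem}, now with $S_{\max}^{{\cal P}}$ governing the optimal choice of $b$.

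Granting this dual formula, the strong-converse direction ($y > r$) is the easy one: choosing $b = 2^{n(r+\eps/2)}$ together with a near-optimal smoothed witness $\nu = \sigma_n$ makes $\tr(\rho^{\otimes n} - b\sigma_n)_{+}^{{\cal P}}$ negligible, while $b/K = 2^{-n\eps/2} \to 0$; here one uses that, because ${\cal P}$ is contained in the positive semidefinite cone (as it is for the separable and PPT cones) and hence $\mathrm{PSD} \subseteq {\cal P}^{*}$, the functional $\tr(\cdot)_{+}^{{\cal P}}$ obeys $\tr(\cdot)_{+}^{{\cal P}} \le \tr(\cdot)_{+}$ and is $1$-Lipschitz in trace norm. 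The substantive content is the achievability direction $y < r \Rightarrow \lambda_n^{{\cal P}} \to 1$, equivalent to $\min_{\nu \in {\cal M}_n}\tr(\rho^{\otimes n} - 2^{yn}\nu)_{+}^{{\cal P}} \to 1$. I would attack this by contradiction in the style of Proposition \ref{maincompact}: assume the limit is $1-\lambda$ with $\lambda > 0$, apply a cone version of the Datta--Renner lemma to produce a state $\rho_n$ of fidelity at least $\lambda$ with $\rho^{\otimes n}$ and $\rho_n \le_{{\cal P}^{*}} \lambda^{-1}2^{yn}\sigma_n$, symmetrize, invoke the exponential de Finetti theorem (Theorem \ref{expdefinetti}) to reduce to almost-power states, and localize the de Finetti measure near $\rho$ using an informationally-complete POVM together with Lemmas \ref{classicalCH} and \ref{hof}. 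This should deliver $\tilde\rho^{\otimes m} \le_{{\cal P}^{*}} \mathrm{poly}(n)\,2^{yn}\,\omega$ with $\tilde\rho$ close to $\rho$, after which a ${\cal P}$-monotonicity estimate would force the regularization of $LR_{{\cal M}_n,{\cal P}}^{\eps}$ to be at most $y$, contradicting $y < r$.

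The decisive difficulty, and the reason the statement remains a conjecture, is that two analytic engines of the unconstrained proof do not obviously survive the passage to a general proper cone. First, the Ogawa--Nagaoka bound (Lemma \ref{ON}), which supplies the exponential decay of $\tr(\rho^{\otimes n} - 2^{\lambda n}\sigma^{\otimes n})_{+}$ and underpins both the strong converse and the de Finetti reduction, is proved via operator inequalities tied to the positive semidefinite order; I know of no analogue controlling $\tr(\cdot)_{+}^{{\cal P}}$ with respect to $\le_{{\cal P}^{*}}$. Second, the Datta--Renner gluing lemma (Lemma \ref{DR}) would have to be re-established with $\le$ replaced by $\le_{{\cal P}^{*}}$, namely that $\rho \le_{{\cal P}^{*}} Y + \Delta$ yields a nearby $\tilde\rho \le_{{\cal P}^{*}} Y$, and its proof likewise rests on spectral manipulations adapted to the PSD cone. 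Compounding this, there is no relative-entropy surrogate for $S_{\max}^{{\cal P}}$, so the operator-monotonicity-of-$\log$ step and the asymptotic continuity inputs behind Proposition \ref{relenteqrob} (from \cite{Chr06, SH06}) must be replaced by direct sub-additivity and continuity estimates for the regularization of $LR_{{\cal M}_n,{\cal P}}^{\eps}$, which are presently unavailable. Securing a cone analogue of the Ogawa--Nagaoka bound is, in my view, the crux on which the whole argument turns.
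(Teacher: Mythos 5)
Your proposal coincides with the paper's own treatment of this statement, which is left as a conjecture there as well: the paper establishes exactly the strong-converse half via the same quantity $\tr(X)_{+, {\cal P}} := \max_{0 \leq_{{\cal P}} A \leq \id} \tr(A X)$ (the primal form of your dual functional $\tr(\cdot)_{+}^{{\cal P}}$) together with an adaptation of the first part of the proof of Theorem \ref{maintheorem}, and it likewise declares the achievability direction open on the grounds that the argument behind Proposition \ref{maincompact} relies on properties particular to the positive semidefinite cone. Your diagnosis of the obstruction is consistent with, and somewhat sharper than, the paper's remark, which attributes the failure generically to the self-duality of the PSD cone, whereas you pinpoint the two concrete missing ingredients, namely cone analogues of the Ogawa--Nagaoka bound (Lemma \ref{ON}) and of the Datta--Renner lemma (Lemma \ref{DR}).
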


Note that in the conjecture ${\cal P}^*$ is the dual cone of ${\cal P}$. For example, if we take ${\cal P}$ to be the set of separable states ${\cal S}$, ${\cal S}^*$ is the set of entanglement witnesses, while if ${\cal P}$ is taken to be the set of PPT operators, then the dual cone is the set of operators of the form $P + Q^{\Gamma}$, for any two positive semidefinite operators $P, Q$ \cite{LKCH00}.
                        
Although we cannot prove the conjecture in full, based on the findings of this chapter we believe it should be true at least for some particular choices of ${\cal P}$. Indeed, it is possible to establish that
\begin{equation*}
LR_{{\cal M}, {\cal P}}^{\infty}(\rho) := \lim_{\epsilon \rightarrow 0} \limsup_{n \rightarrow \infty} \frac{1}{n} LR_{{\cal M}_n, {\cal P}}^{\epsilon}(\rho^{\otimes n})
\end{equation*}
gives the strong converse rate, i.e. that the second half of the conjecture holds true.

To see this let us define the quantity
\begin{equation}
\tr(X)_{+, {\cal P}} := \max_{0 \leq_{\cal P} A \leq \id} \tr(A X).
\end{equation}
Then following the first part of the proof of Theorem \ref{maintheorem}, we find that if $y = LR_{{\cal M}, {\cal P}}^{\infty}(\rho) + \epsilon$, 
\begin{equation*}
\lim_{n \rightarrow \infty} \min_{\sigma_n \in {\cal M}_n} \tr(\rho^{\otimes n} -2^{yn}\sigma_n)_{+, {\cal P}} = 0,
\end{equation*}
for every $\epsilon > 0$. This equation then readily implies the second part of the conjecture, i.e. the impossibility of tests with a better rate than $LR_{{\cal M}, {\cal P}}^{\infty}(\rho)$. 

The direct part cannot be easily obtained by adapting the proof of Theorem \ref{maintheorem} from the cone of positive semidefinite operators to ${\cal P}$, as many properties that are particular to the former, such as the fact that it is self-dual\footnote{A set ${\cal M}$ is self-dual if ${\cal M}^{*} = {\cal M}$.}, are employed. It is an interesting open question to find out if conjecture \ref{conj} holds true in general, or at least for some particular choices of ${\cal P}$, such as the set of PPT and separable POVM elements.

\chapter{A Reversible Theory of Entanglement} \label{reversible}

\section{Introduction}

A basic feature of many physical settings is the existence of constraints on physical operations and processes that are 
available. These restrictions generally imply the existence of resources that can be consumed to overcome the constraints. 
Examples include an auxiliary heat bath in order to decrease the entropy of an isolated thermodynamical system \cite{Cal85} 
or prior secret correlations for the establishment of secret key between two parties who can only operate locally and communicate by a public
channel \cite{Mau98}. As discussed in chapter \ref{entanglement}, in quantum information theory one often considers the scenario in which two or more distant parties want to exchange quantum information, but are restricted to act locally on their quantum systems and 
communicate classical bits. We found in this context that a resource of intrinsic quantum character, entanglement, allows the parties to completely overcome the limitations caused by the locality requirement on the quantum operations available. 

Resource theories are considered in order to determine when a physical system, or a state thereof, contains a given resource; to characterize the possible conversions from a state to another when one has access only to a restricted class of operations which cannot create the resource for free; and to quantify the amount of such a resource contained in a given system. 

One may try to answer the above questions at the level of 
individual systems, which is usually the situation encountered in experiments. However it is natural to expect that a 
simplified theory will emerge when instead one looks at the bulk properties of a large number of 
systems. The most successful example of such a theory is arguably 
thermodynamics. This theory was initially envisioned to describe the physics of large systems 
in equilibrium, determining their bulk properties by a very simple 
set of rules of universal character. This was reflected in 
the formulation of the defining axiom of thermodynamics, the 
second law, by Clausius \cite{Cla1850}, Kelvin \cite{Kel1849} and Planck \cite{Pla1900} in terms of quasi-static 
processes and heat exchange. However, the apparently universal 
applicability of thermodynamics suggested a deeper mathematical 
and structural foundation. Indeed, there is a long history of
examinations of the foundations underlying the second law, starting with Carath\'eodory work 
in the beginning of last century \cite{Car1909}. Of particular interest in the present context is the work
of Giles \cite{Gil64} and notably Lieb and Yngvason \cite{LY99} 
stating that there exists a total ordering of equilibrium 
thermodynamical states that determines which state transformations 
are possible by means of an adiabatic process\footnote{The very concept of adiabatic processes 
already deserves a more precise definition. Usually we refer to a process as adiabatic when no 
exchange of heat from the system to its environment is involved. This is not completely satisfactory as we need the concept of heat here, which again lacks an unambiguous definition. Following Lieb and Yngvason \cite{LY99}, we will call a process adiabatic if after its execution the only change in the environment is that a weight has been lifted or lowered}. In this sense, 
thermodynamics can be seen as a resource theory of \textit{order}, dictating 
which transformations are possible between systems with different 
amounts of (dis)order by operations which cannot order out systems 
(adiabatic processes).

A remarkable aspect of these foundational 
works \cite{Gil64, LY99} is that from simple, abstract, axioms they were able to show the existence of an \textit{entropy} function 
$S$ fully determining the achievable transformations by adiabatic processes: given two equilibrium states $A$ and $B$, $A$ can be 
converted by an adiabatic process into $B$ if, and only if, 
$S(A) \leq S(B)$. As pointed out by Lieb and Yngvason\cite{LY02},
it is a strength of this abstract approach that it allows 
to uncover a thermodynamical structure in settings that 
may at first appear unrelated. 

Early studies in quantum information indicated that entanglement theory could be one such setting. 
Possible connections between entanglement theory and 
thermodynamics were first noted earlier on when it was found that for 
bipartite pure states a very similar situation to the 
second law holds in the asymptotic limit of an arbitrarily 
large number of identical copies of the state. As discussed 
in section \ref{costdistillable}, given two bipartite pure states $\ket{\psi_{AB}}$ and 
$\ket{\phi_{AB}}$, the former can be converted into the 
latter by LOCC if, and only if, $E(\ket{\psi_{AB}}) \geq E(\ket{\phi_{AB}})$, 
where $E$ is the entropy of entanglement \cite{BBPS96}.

However, as we have already seen, for mixed entangled states there are bound
entangled states that require a non-zero 
rate of pure state entanglement for their creation by LOCC, 
but from which no pure state entanglement can be extracted 
at all \cite{HHH98}. As a consequence 
no unique measure of entanglement exists in the general case 
and no unambiguous and rigorous direct connection to thermodynamics 
appeared possible, despite various interesting 
attempts which we review in the next section.

In this chapter we introduce a class of maps that can be identified as the counterpart of adiabatic processes. This, together 
with the technical tools developed in chapter \ref{QHT}, will allow
us to establish a theorem completely analogous to the 
Lieb and Yngvason formulation of the second law of thermodynamics for entanglement manipulation.
Similar considerations may also be applied to
resource theories in general, including e.g. theories 
quantifying the non-classicality, the non-Gaussian 
character, and the non-locality of quantum states. Indeed, 
the main conceptual message of this chapter is an unified 
approach to deal with resources manipulation, by abstracting 
from theories where a resource appears due to the existence 
of some constraint on the type of operations available, to 
theories where the class of operations is derived
from the resource considered, and chosen to be
such that the latter cannot be freely generated. As 
it will be shown, such a shift of focus leads to a 
much simpler and elegant theory, which at the same
time still gives relevant information about the 
original setting. 

The structure of this chapter is the following. In section \ref{qnslet} we review qualitative analogies of the second law in entanglement theory and define the class of non-entangling maps, while in subsection \ref{related} we comment on previous related work. In section \ref{defmain5}, in turn, we present some definitions and the main results of this chapter. Section \ref{proofmaint5} is devoted to the Proof of Theorem \ref{maintheorem5} and Corollary \ref{corref}. We revisit the choice of the operations employed in section \ref{mustcan}, and derive the single-copy cost under non-entangling maps. Finally, in section \ref{cafslt} we discuss the connection of our framework to works on the foundations of thermodynamics, more specifically to the axiomatic approach of Lieb and Yngvason to the second law.

\section{A Qualitative Analogue of the Second Law for Entanglement Theory} \label{qnslet}

Studies on the connections of entanglement theory and 
thermodynamics date back to the earlier foundational works on the subject \cite{PR97, HH98, HHH98b, PV98}. 
There it was noted that the basic postulates of quantum mechanics and the 
definition of entangled states imply that 
\begin{center}
\textbf{(a)}  \textit{entanglement cannot be created by local operations 
and classical communication}.
\end{center}
It was argued that this is a basic law of quantum information 
processing and can be seen as a weak qualitative analogue of 
the second law of thermodynamics, once we make the identification of entanglement with 
order and of LOCC maps with adiabatic processes. 

Local operations and classical communication are the fundamental 
class of operations to be considered in the distant lab paradigm, 
for which the definition of entanglement emerges most naturally. 
However, in view of principle (a) it is important to note that LOCC 
is not the largest class that cannot generate entanglement out of separable states. Consider, for instance, the 
class of separable operations, introduced in section \ref{boundentanglement}. While it is clear that a separable map cannot generate entanglement, 
it turns out that there are separable operations which cannot be implemented by LOCC \cite{BDF+99}.  

We might now wonder if separable maps 
are the largest class of quantum operations which cannot create entanglement. As proven 
in Ref. \cite{CDKL01}, this is indeed the case if we allow the 
use of ancillas. That is, if we require that $\Omega \otimes \id_d$, 
where $\id_d$ is a identity map which is applied to a $d$-dimensional ancilla state, 
does not generate entanglement for an arbitrary $d$, then $\Omega$ must be a separable
superoperator. However, for the following it will be important to note that there is yet 
a larger class of operations for which no entanglement can be generated
if we do not require that our class of quantum maps is closed 
under tensoring with the identity as above. 
 
\begin{definition}
Let $\Omega: {\cal D}(\mathbb{C}^{d_1} \otimes ... \otimes \mathbb{C}^{d_m}) \rightarrow 
{\cal D}(\mathbb{C}^{d'_1} \otimes ... \otimes \mathbb{C}^{d'_m})$ be a quantum operation. 
We say that $\Omega$ is a separability-preserving or a non-entangling 
map if for every fully separable state $\sigma \in {\cal D}(\mathbb{C}^{d_1} 
\otimes ... \otimes \mathbb{C}^{d_m})$\footnote{ \normalfont A multipartite state $\sigma$ is fully separable if it can be written as 
\begin{equation*}
\sigma = \sum_i p_i \sigma_i^1 \otimes ... \otimes  \sigma_i^n,
\end{equation*}
for local states $\{ \sigma_i^k \}$ and a probability distribution $\{ p_i \}$.}, $\Omega(\sigma)$ is a fully separable state. We 
denote the class of such maps by $SEPP$.
\end{definition}

It is clear from its very definition that $SEPP$ is the largest 
class of operations which cannot create entanglement. An example 
of a completely positive map which is separability-preserving, yet is 
not a separable operation is the swap operator. In fact the class $SEPP$
is even strictly larger than the convex hull of separable operations and the composition of 
separable operations with the swap operator \cite{VHP05}. 

There is a quantitative version of law (a), which roughly speaking, states that 
\begin{center}
\textbf{(b)}  \textit{entanglement cannot be increased by local operations 
and classical communication}.
\end{center}
Although (b) is clearly stronger than the first version discussed, 
it is not as fundamental as (a), since we must assume there is a
way to quantify entanglement, something that cannot be done in a 
completely unambiguous manner. Here we will focus on three specific 
entanglement quantifiers as the underlying quantitative notion of 
entanglement needed for (b): the robustnesses of entanglement and
the relative entropy of entanglement, discussed in sections \ref{robustnesses} and \ref{relentint}, 
respectively. There it was pointed out that all these three measures satisfy (b). 
 
The choices for these measures here comes from the fact that, using them 
to quantify entanglement, LOCC is again not the largest class of operations 
for which (b) is true. Indeed, for both the robustness, global robustness, 
and relative entropy of entanglement, non-entangling maps are once more the largest 
such class. 

At this point it is instructive to briefly look at the 
role of adiabatic operations in the formulation of the 
second law of thermodynamics. As it was said before, in 
its more recent formulation, the second law is formulated 
concisely as the characterization of state transformations 
that are possible by adiabatic processes. It follows both 
from the approach of Giles \cite{Gil64} and Lieb and 
Yngvason \cite{LY99} that 
\begin{itemize}
	\item the class formed by all 
adiabatic processes is the largest class of operations 
which cannot decrease the entropy of an isolated equilibrium 
system 
\end{itemize}

From this we can thus speculate that the counterpart in entanglement 
theory of an adiabatic process is not a LOCC map, but rather 
a non-entangling operation. In a sense this is indeed the case and will
be the driving motivation for the framework we develop in the sequel. However, there is still a last ingredient we must take into account
before defining the class of operations we will be working with. 

Once more, we can find the motivation from thermodynamics. There the need of the thermodynamical limit is fundamental. Indeed, for finite size 
systems there are fluctuations which may look like violating the second law (see 
e.g. \cite{WSM+02}), which shows that the concept of an adiabatic process is only meaningful the thermodynamical limit.

We can define an analogous definition in entanglement
theory and consider the class of asymptotically non-entangling maps. We 
define it precisely in section \ref{defmain5}, but here we would like to anticipate 
that this class is formed by sequences of maps $\{ \Lambda_n \}_{n \in \mathbb{N}}$, 
where each $\Lambda_n$ generates at most an $\epsilon_n$ amount of entanglement, and
such that $\epsilon_n$ goes to zero when $n$ grows. Note that this class is
the largest class that \textit{asymptotically} satisfies the basic law of the non-increase of entanglement. 

We argue that this type of 
operations should be regarded as the fully counterpart of adiabatic processes in
entanglement theory. The main result of this chapter 
is a theorem complete analogous to the formulation by Lieb and Yngvason \cite{LY99} of 
the second law of thermodynamics for entanglement transformations 
under asymptotically non-entangling maps, quantitatively supporting this claim. 

In turn, we also gain new insight into the irreversibility 
found under LOCC manipulations. It is because the class of 
non-entangling operations is strictly larger than LOCC that 
there is no total order for bipartite entanglement under 
local quantum processing. Finally, we identify the key role of the regularized relative entropy of 
entanglement as the unique entanglement measure in this setting. Before we present the main result of 
this chapter, in the next subsection we first review previous 
works related to the theme of this chapter.  

\subsection{Previous Work and Related Approaches} \label{related}

The phenomenon of bound entanglement has attracted 
considerable interest in recent years, and different possible 
explanations for it have been considered. It is tempting to 
think that it is the loss of classical information when going 
to mixed states that causes irreversibility (see e.g. 
\cite{HV00, EFP+00}). Nonetheless, arguments to the contrary 
were given in Refs. \cite{HHH+03, HHO03}, where it 
was shown that when one is concerned in concentrating pure 
states from operations in which no pure state ancilla can be added - a problem complementary to the distillation of 
entanglement - reversibility holds even for mixed states. 
Hence, it does not seem plausible to attribute irreversibility 
to the loss of classical information.  In fact, the key
is not the loss of information but the fact that the recovery
of this information via measurements may lead to irreversible
losses. This depends on the set of operations that is available
to recover the information. Indeed, the results of the
present chapter provide strong arguments supporting this viewpoint
as it shows that for entanglement itself, loss of information 
does not necessarily imply irreversibility.

In Ref. \cite{VK02}, the applicability of Giles axiomatic 
approach \cite{Gil64} to entanglement theory was studied. It 
was shown that for pure state bipartite entanglement the 
same axioms used in the derivation of the second law of 
thermodynamics hold true. Therefore, one can derive the 
uniqueness of the entropy of entanglement following the 
steps taken by Giles in the derivation of the entropy in 
the context of the second law \cite{Gil64}. One of Giles 
postulates is that if two states $A$ and $B$ are both 
adiabatic accessible from another state $C$, either $A$ 
is adiabatic accessible to $B$ or vice-versa (if not both) 
\cite{Gil64}. In Ref. \cite{MFV04}, it was pointed out 
that this property does not hold in asymptotic mixed state entanglement 
transformations under LOCC, showing the inapplicability of 
Giles approach in the mixed state scenario. 

Various approaches have been considered to enlarge the class of
operations in a way that could lead to reversibility of entanglement
manipulation under such a set of operations. Two closely related
but different routes have been taken here. 

A first approach was considered in \cite{Rai01, EVWW01, APE03}. 
There, entanglement manipulation was studied under the 
class of operations that maps every PPT state into another 
PPT state (including the use of ancillas)\footnote{This class of PPT operations was 
introduced by Rains in the seminal paper \cite{Rai01}.}. It was realized
in \cite{EVWW01} that every state with a non-positive partial 
transpose becomes distillable under PPT preserving operations (see 
section \ref{activation} for more on this point). This eliminates the phenomenon of bound entanglement in a qualitative
level thereby suggesting the possibility of reversibility in this setting.
This was taken as a motivation for further studies, e.g.
\cite{APE03}, where it was shown that under PPT
maps the antisymmetric states of arbitrary dimension can 
be reversibly interconverted into pure state entanglement, clearly 
showing a nontrivial example of mixed state reversibility. 
Unfortunately, no other example have been found so far 
and, hence, the reversibility under the class of PPT 
operations remains as an open question. Moreover, this 
approach suffers from the fundamental drawback of 
considering as a resource just a subset of the set of 
entangled states.

In a second approach one considers every PPT state as a free resource
in an LOCC protocol. Then again, every state with a 
non-positive partial transpose becomes distillable 
\cite{EVWW01}. However, in Ref. \cite{HOH02} 
it was shown, under some unproven but reasonable assumptions, that 
in this scenario one still has irreversibility.
   
The possibility of having reversible entanglement transformations under enlarged classes of operations was also 
analysed in Ref. \cite{HOH02}. In this work the authors considered the analogy \textit{entanglement-energy} in the relation of entanglement theory to thermodynamics, first raised in Refs. \cite{HH98, HHH98b}, complementary to the \textit{entanglement-entropy} analogy \cite{PR97, PV98} adopted here, 
to argue that a fully thermodynamical theory of entanglement could in principle be established even considering the existence of bound 
entanglement. However, as already mentioned, under some assumptions on the properties of an entanglement 
measure there defined, it was shown that one probably does not encounter exactly the
setting envisioned. Interestingly, it was proven that if one has reversibility under a class of operations that includes mixing, then the unique measure of entanglement governing state transformations is the regularized relative entropy from the 
set of states which are closed under the class of operations allowed. This result nicely fits into the findings of the paper: as the class 
of non-entangling maps includes mixing, we will indeed find the regularized relative entropy of entanglement as the unique entanglement quantifier.

\subsubsection{Nice Resources}

There is yet another line of research into which our framework is connected: the quest for identifying the \textit{nice resources}\footnote{The term nice resources is taken from Ref. \cite{Smi07}.} of quantum information theory, which allow 
for a simpler theory over the unassisted case. The idea here is not to consider what resources are useful from the point of view of information processing, but actually the ones that are nice in the sense of leading to a marked simplification in the theory under consideration.   

The first example of such a nice resource is unlimited entanglement between sender and receiver for communication over a noisy quantum channel. In has been proven in Refs. \cite{BSS+99, BSS+02} that this leads to a remarkably simple formula for the quantum and classical capacities (which are actually related by a factor of two), which in this case is single-letterized, meaning that no regularization is needed, and a direct generalization of Shannon's capacity formula for classical noisy channels.   

A more recent example is the use of symmetric side channels for sending quantum information. By the no cloning theorem \cite{Die82, WZ82} we know that it is not possible to reliably send quantum information through a channel which distributes the information symmetrically between the receiver and the environment. It has been shown in Refs. \cite{SSW06, Smi07} that nonetheless such channels are nice resources, as it is possible to derive a single-letter and convex expressions for the symmetric-side-channel-assisted quantum and private channel capacities. Such an approach has recently lead to an important breakthrough in quantum information theory, as it was used by Smith and Yard to show that the quantum channel capacity is not additive \cite{SY08}. 

A third example is of course the use of PPT operations and PPT states in entanglement theory, as discussed in the previous section and in section \ref{activation} of chapter \ref{entanglement}.

\section{Definitions and Main Results} \label{defmain5}

As mentioned before, in this chapter we will be interested in entanglement manipulation under operations that asymptotically cannot 
generate entanglement. The next definition determines precisely the class of maps employed. 

\begin{definition}\label{epsilonSEPPreserving}
Let $\Omega: {\cal D}(\mathbb{C}^{d_1} \otimes ... \otimes \mathbb{C}^{d_m}) 
\rightarrow {\cal D}(\mathbb{C}^{d'_1} \otimes ... \otimes \mathbb{C}^{d'_m})$ 
be a quantum operation. We say that $\Omega$ is an 
$\epsilon$-non-entangling (or $\epsilon$-separability-preserving) map if for every 
separable state $\sigma \in {\cal D}(\mathbb{C}^{d_1} \otimes ... \otimes \mathbb{C}^{d_m})$, 
\begin{equation} \label{3456}
R_{G}(\Omega(\sigma)) \leq \epsilon,
\end{equation}
where $R_G$ is the global robustness of entanglement with respect to fully separable states (see section \ref{robustnesses}). We denote the set of $\epsilon$-non-entangling 
maps by $SEPP(\epsilon)$. 
\end{definition}

With this definition an asymptotically non-entangling operation is given by a sequence of CP trace preserving maps 
$\{ \Lambda_n \}_{n \in \mathbb{N}}$, $\Lambda_n : {\cal D}((\mathbb{C}^{d_1} \otimes ... \otimes \mathbb{C}^{d_m})^{\otimes n}) \rightarrow {\cal D}((\mathbb{C}^{d'_1} \otimes ... \otimes \mathbb{C}^{d'_m})^{\otimes n})$, 
such that each $\Lambda_n$ is $\epsilon_n$-non-entangling and $\lim_{n \rightarrow \infty}\epsilon_n = 0$. 

The use of the global robustness to measure of the amount of entanglement generated is not arbitrary. The reason for this choice will be explained in section \ref{mustcan}.

Having defined the class of maps we are going to use to manipulate entanglement, we can define the cost and distillation functions, in analogy to the definitions of section \ref{costdistillable} for the LOCC case. For simplicity of notation we set $\phi_2 := \Phi(2)$.

\begin{definition}
We define the entanglement cost under asymptotically non-entangling maps of a state $\rho \in {\cal D}(\mathbb{C}^{d_1}\otimes ... \otimes \mathbb{C}^{d_m})$ as
\begin{eqnarray} \label{costane}
E_{C}^{ane}(\rho) := \inf_{ \{ k_n, \epsilon_n \} } \left \{ \limsup_{n \rightarrow \infty} \frac{k_n}{n}  :  \lim_{n \rightarrow \infty} \left( \min_{\Lambda \in SEPP(\epsilon_n)} || \rho^{\otimes n} - \Lambda(\phi_2^{\otimes k_n})||_1 \right) = 0, \hspace{0.1 cm}  \lim_{n \rightarrow \infty} \epsilon_n = 0  \right \}, \nonumber
\end{eqnarray}
where the infimum is taken over all sequences of integers $\{ k_n \}$ and real numbers $\{ \epsilon_n \}$. In the formula above $\phi_2^{\otimes k_n}$ stands for $k_n$ copies of a two-dimensional maximally entangled state shared by the first two parties and the maps $\Lambda_n : {\cal D}((\mathbb{C}^{2}\otimes \mathbb{C}^{2})^{\otimes k_n}) \rightarrow {\cal D}((\mathbb{C}^{d_1}\otimes ... \otimes \mathbb{C}^{d_m})^{\otimes n})$ are $\epsilon_n$-non-entangling operations. 
\end{definition}

\begin{definition} \label{distane}
We define the distillable entanglement under asymptotically non-entangling maps of a state $\rho \in {\cal D}(\mathbb{C}^{d_1}\otimes ... \otimes \mathbb{C}^{d_m})$ as 
\begin{eqnarray}
E_{D}^{ane}(\rho) := \sup_{\{ k_n, \epsilon_n \} } \left \{ \liminf_{n \rightarrow \infty} \frac{k_n}{n} : \lim_{n \rightarrow \infty} \left( \min_{\Lambda \in SEPP(\epsilon_n} || \Lambda(\rho^{\otimes n}) - \phi_2^{\otimes k_n}||_1 \right) = 0, \hspace{0.1 cm} \lim_{n \rightarrow \infty} \epsilon_n = 0  \right \}, \nonumber
\end{eqnarray}
where the infimum is taken over all sequences of integers $\{ k_n \}$ and real numbers $\{ \epsilon_n \}$. In the formula above $\phi_2^{\otimes k_n}$ stands for $k_n$ copies of a two-dimensional maximally entangled state shared by the first two parties and the maps $\Lambda_n : {\cal D}((\mathbb{C}^{d_1}\otimes ... \otimes \mathbb{C}^{d_m})^{\otimes n}) \rightarrow {\cal D}((\mathbb{C}^{2}\otimes \mathbb{C}^{2})^{\otimes k_n})$ are $\epsilon_n$-non-entangling operations. 
\end{definition}

Note that when we do not specify the state of the other parties we mean that their state is trivial (one-dimensional). Note furthermore that the fact that initially only two parties share entanglement is not a problem as the class of operations we employ include the swap operation. We are now in position to state the main result of this chapter.

\begin{theorem} \label{maintheorem5}
For every multipartite state $\rho \in {\cal D}(\mathbb{C}^{d_1} \otimes ... \otimes \mathbb{C}^{d_m})$,
\begin{equation}
E_C^{ane}(\rho) = E_D^{ane}(\rho) = E_R^{\infty}(\rho).
\end{equation}
\end{theorem}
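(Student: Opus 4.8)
The plan is to prove the two inequalities $E_D^{ane}(\rho) \geq E_R^{\infty}(\rho)$ and $E_C^{ane}(\rho) \leq E_R^{\infty}(\rho)$ together with the general bound $E_D^{ane}(\rho) \leq E_C^{ane}(\rho)$; chaining these yields
$$E_R^{\infty}(\rho) \leq E_D^{ane}(\rho) \leq E_C^{ane}(\rho) \leq E_R^{\infty}(\rho),$$
forcing equality throughout. The whole argument rests on the characterization of $E_R^{\infty}$ supplied by Chapter \ref{QHT}: by Proposition \ref{relenteqrob}, $E_R^{\infty}(\rho)$ equals the regularized smooth log-global-robustness $\lim_{\epsilon \to 0}\limsup_n \frac{1}{n} LR^{\epsilon}_{{\cal S}_n}(\rho^{\otimes n})$, and by Theorem \ref{maintheorem} it is also the optimal type~II error exponent for discriminating $\rho^{\otimes n}$ from \emph{arbitrary}, possibly correlated, separable sequences.

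For the cost bound $E_C^{ane}(\rho) \le E_R^{\infty}(\rho)$ I would build the formation map explicitly. Using Proposition \ref{relenteqrob}, for every $\delta > 0$ and large $n$ there is a state $\tilde\rho_n$ with $\|\tilde\rho_n - \rho^{\otimes n}\|_1$ small and a separable $\sigma_n$ with $\tilde\rho_n \le 2^{n(E_R^{\infty}(\rho)+\delta)}\sigma_n$. Setting $k_n = n(E_R^{\infty}(\rho)+2\delta)$, I would take the measure-and-prepare map $\Lambda_n(X) = \tr(\phi_2^{\otimes k_n}X)\,\tilde\rho_n + \tr((\id - \phi_2^{\otimes k_n})X)\,\sigma_n$. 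It produces $\tilde\rho_n \approx \rho^{\otimes n}$ on the maximally entangled input, while on any separable input $\tau$ the overlap $\tr(\phi_2^{\otimes k_n}\tau) \le 2^{-k_n}$ (the maximal singlet fraction of a separable state) forces the output to be dominated by $(2^{-k_n}2^{n(E_R^{\infty}(\rho)+\delta)}+1)\sigma_n$; since $\mu \le c\,\sigma$ with $\sigma$ separable gives $R_G(\mu) \le c-1$, the map is $\epsilon_n$-non-entangling with $\epsilon_n \le 2^{-n\delta} \to 0$. Letting $\delta \to 0$ gives the bound.

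For the distillation bound $E_D^{ane}(\rho) \ge E_R^{\infty}(\rho)$ --- the heart of the proof --- I would feed the optimal test of Theorem \ref{maintheorem} into a dual measure-and-prepare map. That theorem supplies POVMs $\{A_n, \id - A_n\}$ with $\tr((\id - A_n)\rho^{\otimes n}) \to 0$ and $\tr(A_n \omega_n) \le 2^{-n(E_R^{\infty}(\rho)-\delta)}$ for every separable sequence $\{\omega_n\}$. With $k_n = n(E_R^{\infty}(\rho)-2\delta)$ I would set $\Lambda_n(X) = \tr(A_n X)\,\phi_2^{\otimes k_n} + \tr((\id - A_n)X)\,\sigma_{\mathrm{sep}}$. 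On $\rho^{\otimes n}$ this converges to $\phi_2^{\otimes k_n}$; on separable $\tau$ the output is $q\,\phi_2^{\otimes k_n} + (1-q)\sigma_{\mathrm{sep}}$ with $q = \tr(A_n \tau) \le 2^{-n(E_R^{\infty}(\rho)-\delta)}$, so convexity of $R_G$ and $R_G(\phi_2^{\otimes k_n}) = 2^{k_n}-1$ give $R_G \le q\,2^{k_n} \le 2^{-n\delta} \to 0$. This is where the generality of Chapter \ref{QHT} is indispensable: the non-entangling constraint demands a type~II bound uniform over all correlated separable states, precisely the non-i.i.d. alternative hypothesis that ordinary quantum Stein's Lemma cannot handle. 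Finally $E_D^{ane}(\rho) \le E_C^{ane}(\rho)$ follows by composing an optimal formation protocol with an optimal distillation protocol into a single asymptotically non-entangling map sending $\phi_2^{\otimes k^c_n}$ to $\approx \phi_2^{\otimes k^d_n}$, and invoking the asymptotic monotonicity of $LR_G$ (which equals $k$ on $\phi_2^{\otimes k}$) to conclude $k^d_n \le k^c_n + o(n)$.

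The main obstacle is the distillation direction, and specifically verifying that the maps built from the Stein-type test are genuinely asymptotically non-entangling: this needs the full uniform exponential type~II bound of Theorem \ref{maintheorem} over non-i.i.d. separable alternatives, together with care in interleaving the smoothing parameter, the approximation error, and the rate $k_n/n$ so that all vanish or converge simultaneously. A secondary technical point is controlling how $R_G$, and hence $LR_G$, behaves under composition of two $\epsilon$-non-entangling maps, which is what makes the bound $E_D^{ane}(\rho) \le E_C^{ane}(\rho)$ rigorous.
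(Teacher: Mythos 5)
Your chain $E_R^{\infty} \le E_D^{ane} \le E_C^{ane} \le E_R^{\infty}$ is sound and rests on the same engine as the paper --- the hypothesis-testing/robustness results of chapter \ref{QHT} --- but you take a genuinely different route in two places. For distillation achievability you build an explicit measure-and-prepare map directly from the Stein-type POVM of Theorem \ref{maintheorem}, and your convexity estimate $R_G(\Lambda_n(\tau)) \le \tr(A_n \tau)\,(2^{k_n}-1) \le 2^{-n\delta}$ is correct; the paper instead characterizes the optimal singlet fraction under strictly non-entangling maps by convex duality (Lemma \ref{singfractione}) and evaluates its asymptotics with Proposition \ref{maincompact}, which buys the slightly stronger statement $E_D^{ne}(\rho) = E_R^{\infty}(\rho)$, i.e.\ no entanglement generation at all is needed on the distillation side, whereas your map is only asymptotically non-entangling (which does suffice for the theorem). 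For the converse you route through $E_D^{ane} \le E_C^{ane}$ by composing protocols, whereas the paper bounds $E_D^{ane} \le E_R^{\infty}$ directly from Lemma \ref{almmonrelent} (quasi-monotonicity of $E_R$ under $\epsilon$-non-entangling maps) together with the asymptotic continuity of $E_R$; the paper's route is shorter and does not need the cost part as an input. Your cost construction is essentially the formation map of Proposition \ref{propcost}, with the non-entangling check done via the separable singlet-fraction bound $\tr(\phi_2^{\otimes k_n}\tau) \le 2^{-k_n}$ rather than the paper's isotropic-twirl argument --- both are fine.

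Two steps need patching, though neither is fatal. First, your justification of $k_n^d \le k_n^c + o(n)$ by ``asymptotic monotonicity of $LR_G$'' cannot proceed via continuity: as the paper notes explicitly, $LR_G$ is \emph{not} asymptotically continuous, so from $\| \Lambda(\phi_2^{\otimes k_n^c}) - \phi_2^{\otimes k_n^d} \|_1 \to 0$ you may not infer $LR_G(\Lambda(\phi_2^{\otimes k_n^c})) \ge k_n^d - o(n)$. The fix is the overlap bound: if $R_G(\mu) = r$ then $\tr(\Phi(K)\mu) \le (1+r)/K$, so a singlet fraction at least $1-\delta_n$ at dimension $2^{k_n^d}$, combined with Lemma \ref{epsilonincrease} applied to the composed map, gives $2^{k_n^d}(1-\delta_n) \le 2^{k_n^c}(1+\tilde\epsilon_n)$ and hence $k_n^d \le k_n^c + O(1)$. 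Second, in the cost direction, for a \emph{fixed} smoothing parameter $\epsilon$ the error $\|\tilde\rho_n - \rho^{\otimes n}\|_1 \le \epsilon$ does not vanish with $n$, while the definition of $E_C^{ane}$ demands vanishing error; you need the version with $\epsilon_n \to 0$ jointly with $n$, which is precisely Corollary \ref{LGmeasure} ($LG(\rho) = E_R^{\infty}(\rho)$), the form the paper actually invokes.
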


We thus find that under asymptotically 
non-entangling operations entanglement can be interconverted 
reversibly. The situation is analogous to the Giles-Lieb-Yngvason 
formulation of the second law of thermodynamics. Indeed, Theorem \ref{maintheorem5} 
readily implies

\begin{corollary} \label{corref}
For two multipartite states $\rho \in {\cal D}(\mathbb{C}^{d_1} \otimes ... \otimes \mathbb{C}^{d_m})$ and  $\sigma \in {\cal D}(\mathbb{C}^{d'_1} \otimes ... \otimes \mathbb{C}^{d'_{m'}})$, there is a sequence of quantum operations $\Lambda_n$ such that
\begin{eqnarray}
	\Lambda_n \in SEPP(\epsilon_n),  \hspace{1.5 cm} \lim_{n \rightarrow \infty}\epsilon_n = 0, 
\end{eqnarray}
and\footnote{ \normalfont In Eq. (\ref{sublinmain})  $o(n)$ stands for a sublinear term in $n$.} 
\begin{eqnarray}\label{sublinmain}	
\lim_{n \rightarrow \infty} || \Lambda_n(\rho^{\otimes n}) - \sigma^{\otimes n - o(n)} ||_1 = 0
\end{eqnarray}
if, and only if, 
\begin{equation}
E_R^{\infty}(\rho) \geq E_R^{\infty}(\sigma).
\end{equation}
\end{corollary}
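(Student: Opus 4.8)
The plan is to derive the corollary directly from the reversibility established in Theorem \ref{maintheorem5}, using two-dimensional maximally entangled states as a universal intermediate currency. The single technical ingredient I would isolate first is a \emph{composition lemma}: if $\Lambda_1 \in SEPP(\epsilon)$ and $\Lambda_2 \in SEPP(\delta)$, then $\Lambda_2 \circ \Lambda_1 \in SEPP(\epsilon + \delta + \epsilon\delta)$. To see this, take any separable $\sigma$; since $R_G(\Lambda_1(\sigma)) \leq \epsilon$ there is a state $\pi_1$ and a separable $\omega_1$ with $\Lambda_1(\sigma) = (1+\epsilon)\omega_1 - \epsilon\pi_1$, and applying $\Lambda_2$ and expanding $\Lambda_2(\omega_1) = (1+\delta)\omega_2 - \delta\pi_2$ (with $\omega_2$ separable, using $R_G(\Lambda_2(\omega_1)) \leq \delta$) exhibits a separable state of the form $(\Lambda_2 \circ \Lambda_1(\sigma) + s\tau)/(1+s)$ with $s = \epsilon + \delta + \epsilon\delta$ and $\tau$ the normalized negative part. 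In particular the composition of two asymptotically non-entangling sequences is again asymptotically non-entangling, which is what legitimizes ebits as an intermediary.

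For the direction $E_R^{\infty}(\rho) \geq E_R^{\infty}(\sigma) \Rightarrow$ convertibility, I would first dispose of the degenerate case $E_R^{\infty}(\sigma) = 0$: by faithfulness (Corollary \ref{faithful}) $\sigma$ is then separable, and $\sigma^{\otimes n}$ can be produced exactly by a $0$-non-entangling preparation map after discarding $\rho^{\otimes n}$. Assume then $E_R^{\infty}(\sigma) > 0$ (hence also $E_R^{\infty}(\rho) > 0$). Using $E_D^{ane}(\rho) = E_R^{\infty}(\rho)$ I distill from $\rho^{\otimes n}$ about $k_n = n E_R^{\infty}(\rho) - o(n)$ copies of $\phi_2$ with vanishing error by an $\epsilon_n$-non-entangling map; using $E_C^{ane}(\sigma) = E_R^{\infty}(\sigma)$ I then form $\sigma^{\otimes m}$ from about $m E_R^{\infty}(\sigma) + o(m)$ copies of $\phi_2$ by a $\delta_m$-non-entangling map. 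Choosing $m = m_n$ as large as the ebit budget $k_n$ allows gives $m_n E_R^{\infty}(\sigma) + o(m_n) \leq k_n$, i.e. $m_n \geq n - o(n)$ precisely because $E_R^{\infty}(\rho) \geq E_R^{\infty}(\sigma)$. Discarding the surplus ebits (a non-entangling operation) and composing the two protocols, the composition lemma guarantees the overall sequence lies in $SEPP(\epsilon_n + \delta_{m_n} + \epsilon_n\delta_{m_n})$ with error controlled by the triangle inequality and contractivity of the trace norm under CPTP maps, yielding the conversion $\rho^{\otimes n} \to \sigma^{\otimes n - o(n)}$.

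For the converse, suppose asymptotically non-entangling maps achieve Eq. (\ref{sublinmain}). Composing this conversion with an optimal distillation protocol for $\sigma$ (rate $E_R^{\infty}(\sigma)$ by Theorem \ref{maintheorem5}) produces, via the composition lemma, an asymptotically non-entangling protocol that distills from $\rho^{\otimes n}$ about $\frac{n - o(n)}{n} \cdot n E_R^{\infty}(\sigma) \to n E_R^{\infty}(\sigma)$ copies of $\phi_2$; contractivity again keeps the error vanishing after feeding the approximate output $\tilde{\sigma}_n \approx \sigma^{\otimes n - o(n)}$ into the distillation map. Hence $E_R^{\infty}(\sigma)$ is an achievable distillation rate for $\rho$, so $E_D^{ane}(\rho) \geq E_R^{\infty}(\sigma)$, and $E_D^{ane}(\rho) = E_R^{\infty}(\rho)$ gives $E_R^{\infty}(\rho) \geq E_R^{\infty}(\sigma)$.

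The substantive content is entirely in Theorem \ref{maintheorem5}, so the corollary is mostly bookkeeping; the only place demanding genuine care is the composition lemma together with the rate-and-error accounting that converts the non-strict hypothesis $E_R^{\infty}(\rho) \geq E_R^{\infty}(\sigma)$ into the sublinear loss $o(n)$ in the number of output copies. The main pitfall to watch is that the two asymptotic protocols are indexed by different block sizes ($n$ for $\rho$, $m_n$ for $\sigma$), so I must verify that $m_n \to \infty$ (ensuring $\delta_{m_n} \to 0$) and that the accumulated $o(n)$ slacks from both the distillation and formation stages can be absorbed simultaneously into a single sublinear correction.
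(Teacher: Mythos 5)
Your proposal is correct, and its ``if'' direction coincides with the paper's: distill ebits from $\rho^{\otimes n}$ at rate $E_D^{ane}(\rho)=E_R^{\infty}(\rho)$, discard the surplus by partial trace, form $\sigma$ at cost $E_C^{ane}(\sigma)=E_R^{\infty}(\sigma)$, match the block sizes so that $k_n = k'_{n-o(n)} + r_n$ with $r_n\geq 0$, and control the accumulated generated entanglement by exactly your composition lemma --- which is the multiplicative form of Lemma~\ref{epsilonincrease} and is the same statement the paper records in its thermodynamics section, namely that $\Lambda\in SEPP(\epsilon)$ and $\Omega\in SEPP(\delta)$ give $\Omega\circ\Lambda\in SEPP(\epsilon+\delta+\epsilon\delta)$. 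Your explicit treatment of the degenerate case $E_R^{\infty}(\sigma)=0$ via Corollary~\ref{faithful} is a careful touch the paper leaves implicit inside Proposition~\ref{propcost}. Where you genuinely diverge is the ``only if'' direction. The paper argues entropically: it applies the near-monotonicity $E_R(\Lambda_n(\rho^{\otimes n}))\leq E_R(\rho^{\otimes n})+\log(1+\epsilon_n)$ from Lemma~\ref{almmonrelent} together with the asymptotic continuity of $E_R$ to pass from $\Lambda_n(\rho^{\otimes n})\approx\sigma^{\otimes n-o(n)}$ directly to $E_R^{\infty}(\sigma)\leq E_R^{\infty}(\rho)$, without ever mentioning distillation. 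You instead argue operationally: compose the hypothesized conversion with an optimal $SEPP$-distillation protocol for $\sigma$, use contractivity of the trace norm to keep the error vanishing, and conclude $E_D^{ane}(\rho)\geq E_R^{\infty}(\sigma)$, whence Theorem~\ref{maintheorem5} gives the inequality. Both routes are sound; yours has the appeal of treating Theorem~\ref{maintheorem5} as a black box and needing only the composition lemma on top, while the paper's is shorter and makes transparent \emph{why} the converse holds (an almost-monotone, asymptotically continuous quantity cannot increase per copy). Note, though, that your route does not actually economize on hypotheses: the converse half of Theorem~\ref{maintheorem5} that you invoke, $E_D^{ane}(\rho)\leq E_R^{\infty}(\rho)$, is itself proved in the paper by precisely the Lemma~\ref{almmonrelent}-plus-continuity argument, so the two proofs rest on the same foundations, merely routed differently.
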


In addition we have also identified the regularized relative entropy of 
entanglement as the natural counterpart in entanglement theory to the entropy function 
in thermodynamics. As shown in chapter \ref{QHT}, the regularized relative
entropy measures how distinguishable an entangled state is 
from an arbitrary sequence of separable states. Therefore, 
we see that under asymptotically non-entangling operations, 
the amount of entanglement of any multipartite state is completely 
determined by how distinguishable the latter is from a state 
that only contains classical correlations. Furthermore, in 
Corollary \ref{LGmeasure} of chapter \ref{QHT} we saw that 
\begin{equation} \label{LG5}
LG(\rho) := \inf_{\{ \epsilon_n\}} \left \{ \limsup_{n \rightarrow \infty} \frac{1}{n} LR_{G}^{\epsilon_n}(\rho^{\otimes n}) : \lim_{n \rightarrow \infty} \epsilon_n = 0 \right \} = E_R^{\infty}(\rho). 
\end{equation}
Hence we also find that the amount of entanglement is equivalently uniquely 
defined in terms of the robustness of quantum correlations 
to noise in the form of mixing.   

The approach of considering the manipulation of a given resource 
under the largest class of operations that cannot create it is appealing and can 
also be applied to other resources \cite{BP08b}. In fact, although for 
concreteness we present the proof specifically to entanglement as resource, much of it does not concern only entanglement theory, but can be applied 
to rather general resource theories. 

An interesting question is if the second law of thermodynamics can also be derived from such considerations. Although to derive the second law in its full generality one of course needs further assumptions, a simplified version of it can indeed be obtained. Consider the von Neumann entropy and the class of operations that do not decrease it, which are the well studied doubly sthocastic maps. We can study the manipulation of non-maximally mixed states, which are the resource states that cannot be created for free, by doubly sthocastic operations and asks whether one has reversibility of asymptotic transformations in this setting. It turns out that, as shown in Ref. \cite{HHO03}, we do indeed have reversible transformations. The proof we present in the sequel can actually also be applied to this case, although we would only be able to establish reversibility under operations that asymptotically do not decrease the entropy. 

\section{Proof of Theorem \ref{maintheorem5}} \label{proofmaint5}

\subsection{The Entanglement Cost under Asymptotically non-Entangling Maps}

We start showing that the entanglement quantified by the global log-robustness 
cannot increase by more than a factor proportional to $\log(1 + \epsilon)$ under 
$\epsilon$-non-entangling maps.

\begin{lemma}\label{epsilonincrease}
If $\Lambda \in SEPP(\epsilon)$, then
\begin{equation} \label{143222}
LR_G(\Lambda(\rho)) \leq \log(1 + \epsilon) + LR_G(\rho).
\end{equation}
\end{lemma}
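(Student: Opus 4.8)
The plan is to exploit the fact that the global robustness $R_G$ admits a pseudo-mixture decomposition that behaves well under linear maps, and to chain the decomposition coming from $LR_G(\rho)$ with the one guaranteed by the $\epsilon$-non-entangling property. First I would write down the optimal decomposition realizing $R := R_G(\rho)$: by definition there is a state $\omega \in {\cal D}$ and a separable state $\sigma \in {\cal S}$ such that $(\rho + R\omega)/(1+R) = \sigma$, equivalently
\begin{equation*}
\rho = (1+R)\sigma - R\omega.
\end{equation*}
Since $\Lambda$ is linear and trace preserving, applying it gives $\Lambda(\rho) = (1+R)\Lambda(\sigma) - R\Lambda(\omega)$, where $\Lambda(\omega)$ is again a legitimate state.

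The key step is to feed in the $\epsilon$-non-entangling hypothesis at the separable state $\sigma$. Because $\Lambda \in SEPP(\epsilon)$, we have $R_G(\Lambda(\sigma)) \leq \epsilon$, so there exist $t \leq \epsilon$, a state $\tau \in {\cal D}$, and a separable state $\sigma' \in {\cal S}$ with $\Lambda(\sigma) = (1+t)\sigma' - t\tau$. Substituting this into the expression for $\Lambda(\rho)$ and moving all negative contributions to the left-hand side, I would obtain
\begin{equation*}
\Lambda(\rho) + \bigl[(1+R)t\,\tau + R\,\Lambda(\omega)\bigr] = (1+R)(1+t)\,\sigma'.
\end{equation*}
The bracketed operator is positive with trace $(1+R)t + R$; normalizing it to a state $\mu$ and using the identity $1 + (1+R)t + R = (1+R)(1+t)$, this rearranges precisely into the statement that $\bigl(\Lambda(\rho) + [(1+R)t+R]\mu\bigr)/\bigl(1 + (1+R)t + R\bigr) = \sigma'$ is separable.

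By the variational definition of the global robustness this yields $R_G(\Lambda(\rho)) \leq (1+R)t + R$, hence $1 + R_G(\Lambda(\rho)) \leq (1+R)(1+t) \leq (1+R)(1+\epsilon)$. Taking logarithms and recalling $LR_G(\cdot) = \log(1 + R_G(\cdot))$ gives exactly
\begin{equation*}
LR_G(\Lambda(\rho)) \leq \log(1+\epsilon) + LR_G(\rho),
\end{equation*}
as claimed. I do not anticipate a genuine obstacle here: the argument is essentially a one-line algebraic composition of two robustness decompositions, and the only points requiring care are (i) confirming that all coefficients remain nonnegative so that $\mu$ is a bona fide state, and (ii) verifying the normalization factors collapse correctly — which they do, since the multiplicative structure of $1 + R_G$ is exactly what makes the logarithm turn the product bound $(1+R)(1+t)$ into the additive bound.
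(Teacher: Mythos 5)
Your proof is correct and takes essentially the same route as the paper's: apply $\Lambda$ to the optimal robustness decomposition of $\rho$, use $R_G(\Lambda(\sigma)) \leq \epsilon$ to absorb $\Lambda(\sigma)$ into a separable pseudo-mixture, and exploit the multiplicative identity $1 + R + (1+R)t = (1+R)(1+t)$ before taking logarithms. The only cosmetic difference is that the paper mixes in the noise state with parameter exactly $\epsilon$ rather than the optimal $t \leq \epsilon$, which changes nothing.
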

\begin{proof}
Let $\pi$ be an optimal state for $\rho$ in the sense that
\begin{equation*}
\rho + R_G(\rho) \pi = (1 + R_G(\rho))\sigma,
\end{equation*}
where $\sigma$ is a separable state. We have that
\begin{equation*}
\Lambda(\rho) + R_G(\rho) \Lambda(\pi) = (1 + R_G(\rho))\Lambda(\sigma),
\end{equation*}
with $R_G(\Lambda(\sigma)) \leq \epsilon$. Setting $Z$ to be a state for which $\Lambda(\sigma) + \epsilon Z$ is separable, we find 
\begin{equation*}
\Lambda(\rho) + R_G(\rho) \Lambda(\pi) + \epsilon(1 + R_G(\rho)) Z = (1 + R_G(\rho))\Lambda(\sigma) + \epsilon(1 + R_G(\rho)) Z \hspace{0.1 cm}\in \text{cone}({\cal S}),
\end{equation*}
from which Eq. (\ref{143222}) follows.
\end{proof}
\vspace{0.2 cm}

\begin{proposition} \label{propcost}
For every multipartite state $\rho \in {\cal D}(\mathbb{C}^{d_1}\otimes ... \otimes \mathbb{C}^{d_2})$,
\begin{equation}
E_{C}^{ane}(\rho) =  E_{R}^{\infty}(\rho).
\end{equation}
\end{proposition}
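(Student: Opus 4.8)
The plan is to prove the two inequalities $E_C^{ane}(\rho) \ge E_R^{\infty}(\rho)$ and $E_C^{ane}(\rho) \le E_R^{\infty}(\rho)$ separately, in both cases reducing everything to the smooth asymptotic log-robustness via the identity $LG(\rho) = E_R^{\infty}(\rho)$ already established in Corollary \ref{LGmeasure} (Eq. (\ref{LG5})). For the converse direction $E_C^{ane}(\rho) \ge E_R^{\infty}(\rho)$, I would take any achievable protocol: a sequence $\Lambda_n \in SEPP(\epsilon_n)$ with $\epsilon_n \to 0$ and integers $k_n$ such that $\delta_n := \| \rho^{\otimes n} - \Lambda_n(\phi_2^{\otimes k_n})\|_1 \to 0$. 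The key monotonicity is Lemma \ref{epsilonincrease}, which gives $LR_G(\Lambda_n(\phi_2^{\otimes k_n})) \le \log(1 + \epsilon_n) + LR_G(\phi_2^{\otimes k_n})$. Since $\phi_2^{\otimes k_n} = \Phi(2^{k_n})$ and $R_G(\Phi(K)) = K-1$, one has $LR_G(\phi_2^{\otimes k_n}) = k_n$. Because $\Lambda_n(\phi_2^{\otimes k_n}) \in B_{\delta_n}(\rho^{\otimes n})$, the smoothing gives $LR_{G}^{\delta_n}(\rho^{\otimes n}) \le LR_G(\Lambda_n(\phi_2^{\otimes k_n})) \le \log(1+\epsilon_n) + k_n$. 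Dividing by $n$, taking $\limsup$, and using $\log(1+\epsilon_n)/n \to 0$ together with the definition of $LG$ as an infimum over vanishing smoothing sequences yields $E_R^{\infty}(\rho) = LG(\rho) \le \limsup_n k_n/n$; infimizing over protocols gives the converse.

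For achievability $E_C^{ane}(\rho) \le E_R^{\infty}(\rho)$ the plan is to exhibit an explicit entanglement-breaking (measure-and-prepare) map realizing the cost $E_R^{\infty}$. Using $LG(\rho) = E_R^{\infty}(\rho)$ I would pick $\delta_n \to 0$ realizing the infimum, set $s_n := LR_{G}^{\delta_n}(\rho^{\otimes n})$ with $s_n/n \to E_R^{\infty}(\rho)$, choose an optimal smoothed state $\tilde\rho_n \in B_{\delta_n}(\rho^{\otimes n})$ and a fully separable $\sigma_n$ with $\tilde\rho_n \le 2^{s_n}\sigma_n$. Putting $K := 2^{\lceil s_n \rceil}$ and $\Delta := K\sigma_n - \tilde\rho_n \ge 0$ (note $\tr\Delta = K-1$), define the channel acting on the bipartite $K\times K$ register carrying $\phi_2^{\otimes \lceil s_n \rceil} = \Phi(K)$ by
\begin{equation*}
\Lambda(X) := \tr(\Phi(K) X)\,\tilde\rho_n + \tr\big((\id - \Phi(K))X\big)\frac{\Delta}{\tr\Delta}.
\end{equation*}
This is manifestly CP and trace preserving, and $\Lambda(\Phi(K)) = \tilde\rho_n$, so $\|\Lambda(\phi_2^{\otimes \lceil s_n\rceil}) - \rho^{\otimes n}\|_1 \le \delta_n \to 0$ at rate $\lceil s_n\rceil/n \to E_R^{\infty}(\rho)$.

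It remains to check that $\Lambda \in SEPP(\epsilon_n)$ with $\epsilon_n \to 0$, and this verification is the main obstacle. For a fully separable input $\tau$ on the EPR register, the maximal overlap bound $p := \tr(\Phi(K)\tau) \le 1/K$ gives $\Lambda(\tau) = p\tilde\rho_n + (1-p)\Delta/(K-1)$, and substituting $\Delta = K\sigma_n - \tilde\rho_n$ collects this into $\Lambda(\tau) = \tfrac{(1-p)K}{K-1}\sigma_n - \tfrac{1-pK}{K-1}\tilde\rho_n$, whose $\tilde\rho_n$-coefficient is nonpositive precisely because $pK \le 1$. Hence adding $\tfrac{1-pK}{K-1}\tilde\rho_n$ produces a positive multiple of the fully separable $\sigma_n$, so $R_G(\Lambda(\tau)) \le \tfrac{1-pK}{K-1} \le \tfrac{1}{K-1} = \tfrac{1}{2^{\lceil s_n\rceil}-1} =: \epsilon_n$, which tends to $0$ since $s_n \to \infty$ (the separable case $E_R^{\infty}(\rho)=0$ being handled trivially, creating $\rho$ for free by a genuinely non-entangling map). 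The remaining care is bookkeeping around subnormalization and integrality of $\lceil s_n\rceil$, which do not affect the rate. Combining both inequalities yields $E_C^{ane}(\rho) = E_R^{\infty}(\rho)$; the entire argument rests on the chapter \ref{QHT} identity $LG = E_R^{\infty}$, which is where the real work has been done.
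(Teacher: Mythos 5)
Your proof is correct and takes essentially the same route as the paper's: the converse direction via Lemma \ref{epsilonincrease}, $LR_G(\phi_2^{\otimes k_n}) = k_n$ and the identity $LG(\rho) = E_R^{\infty}(\rho)$, and achievability via the same measure-and-prepare formation map with the same mixing state $\pi_n = (K\sigma_n - \tilde{\rho}_n)/(K-1)$, the error $1/(K_n - 1)$ vanishing because $E_R^{\infty}(\rho) > 0$ for entangled states (Corollary \ref{faithful}). The only cosmetic difference is in verifying the $SEPP(1/(K-1))$ property: you compute $R_G(\Lambda(\tau)) \leq (1 - pK)/(K-1)$ directly for an arbitrary separable input, whereas the paper restricts w.l.o.g. to isotropic inputs and invokes convexity of $R_G$ — your direct algebra reaches the same bound and arguably does so more transparently.
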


\begin{proof}
Let $\Lambda_n \in SEPP(\epsilon_n)$ be an optimal sequence of maps for the 
entanglement cost under asymptotically non-entangling 
maps, i.e.
\begin{equation}
\lim_{n \rightarrow \infty} || \Lambda_n(\phi_2^{\otimes k_n}) - \rho^{\otimes n}||_1 = 0,\;\;\;\;
\;\;\;\; \lim_{n\rightarrow\infty} \epsilon_n = 0,
\end{equation}
and
\begin{equation}
\limsup_{n \rightarrow \infty} \frac{k_n}{n} = E_{C}^{ane}(\rho).
\end{equation}
Then, from Lemma \ref{epsilonincrease},
\begin{eqnarray}
\frac{1}{n} LR_G(\Lambda_n(\phi_2^{\otimes k_n}))  &\leq&  \frac{1}{n} LR_G(\phi_2^{\otimes k_n}) + \frac{1}{n} \log(1 + \epsilon_n) \nonumber \\ &=& \frac{k_n}{n} + \frac{1}{n} \log(1 + \epsilon_n),
\end{eqnarray}
where the last equality follows from $R_G(\phi_2^{\otimes k_n}) = 2^{k_n} - 1$. Hence, as $\lim_{n \rightarrow \infty}\epsilon_n = 0$,
\begin{eqnarray}
LG(\rho) &\leq& \limsup_{n \rightarrow \infty} \frac{1}{n}LR_G(\Lambda_n(\phi_2^{\otimes k_n})) \nonumber \\ &\leq& \limsup_{n \rightarrow \infty} \left( \frac{k_n}{n} + \frac{1}{n} \log(1 + \epsilon_n) \right) \nonumber \\ &=& E_C^{ane}(\rho).
\end{eqnarray}
To show the converse inequality, we consider maps of the form:
\begin{equation}
\Lambda_n(A) = \tr(A \Phi(K_n)) \rho_n + \tr(A (\id - \Phi(K_n)))\pi_n,
\end{equation}
where (i) $\{ \rho_n \}$ is an optimal sequence of approximations for $\rho^{\otimes n}$ 
achieving the infimum in $LG(\rho)$ \footnote{Note that the infimum might not be achievable by any sequence $\{ \rho_n \}$. In this case, for every $\mu > 0$ we can find a sequence $\{ \rho_n^{\mu} \}$ such that $\lim_{n \rightarrow \infty} \frac{LR_G(\rho_n^{\mu})}{n} = LG(\rho) + \mu$, proceed as in the case where the infimum can be achieved and let $\mu \rightarrow 0$ in the end, obtaining the same results.}, (ii) $\log(K_n) = \lfloor \log(1 + R_{G}(\rho_n)) \rfloor$, and (iii) $\pi_n$ is a state such that
\begin{equation} \label{redone1}
\frac{\rho_n + (K_n-1) \pi_n}{K_n} \in {\cal S},
\end{equation}
which always exists as $K_n \geq 2^{\log(1 + R_G(\rho_n))} = R_G(\rho_n) + 1$. As $\pi_n$ and $\rho_n$ are states, each $\Lambda_n$ is completely positive and trace-preserving. 

The next step is to show that each $\Lambda_n$ is a $1/(K_n-1)$-separability-preserving map. 
From Eq. (\ref{redone1}) we find
\begin{equation} 
\frac{\pi_n + (K_n - 1)^{-1} \rho_n}{1 + (K_n - 1)^{-1}}  \in {\cal S},
\end{equation}
and, thus,
\begin{equation}
R_G(\pi_n) \leq \frac{1}{K_n - 1}.
\end{equation}

From Eq. (\ref{redone1}) we have that
\begin{equation}
\Lambda_n(\ket{0, 0}\bra{0, 0}) = \frac{\rho_n + (K_n - 1)\pi_n }{K_n} \in {\cal S}
\end{equation}
and
\begin{equation} \label{ppp}
R_G \left(\Lambda_n\left(\frac{\id - \Phi(K_n)}{K_n^2 - 1}\right)\right)
 = R_G(\pi_n) \leq \frac{1}{K_n - 1}.
\end{equation}
From the form of $\Lambda_n$ we can w.l.o.g. restrict our attention to isotropic input states. Any such state $I(q)$ can be written as
\begin{equation}
        I(q) = q I_b + (1 - q) \frac{\id - \Phi(K)}{K^2 - 1},  
\end{equation}
where $I_b$ is the separable isotropic state at the 
boundary of the separable states set and $0 \leq q \leq 1$. From the convexity of $R_G$,
\begin{equation}
R_G(\Lambda_n(I(q))) \leq q R_G(\Lambda_n(I_b)) + (1 - q) R_G\left( \Lambda_n \left(\frac{\id - \Phi(K)}{K^2 - 1} \right) \right) \leq \frac{1}{K_n - 1}, 
\end{equation}
where we used Eq. (\ref{ppp}) and
\begin{equation}
R_G(\Lambda_n(I_b)) = R_G(\Lambda_n(\ket{0, 0}\bra{0, 0})) = 0.
\end{equation}
We hence see that indeed $\Lambda_n$ is a $1/(K_n - 1)$-separability-preserving map.

From Corollary \ref{faithful} of chapter \ref{QHT} we find that $LG(\rho) = E_R^{\infty}(\rho) > 0$ for every entangled state. Therefore 
\begin{equation*}
\lim_{n\rightarrow\infty} (K_n - 1)^{-1} \leq \lim_{n\rightarrow\infty} R_G(\rho_n)^{-1} = 0,
\end{equation*}
where the last equality follows from Eq. (\ref{LG5}). Moreover, as 
\begin{equation}
\lim_{n \rightarrow \infty} || \rho^{\otimes n} - \Lambda_n(\Phi(K_n)) ||_1 = \lim_{n \rightarrow \infty} || \rho^{\otimes n} - \rho_n ||_1 = 0,
\end{equation}
it follows that $\{ \Lambda_n \}$ is an admissible sequence of maps 
for $E_C^{ane}(\rho)$ and, thus,
\begin{eqnarray}
E_{C}^{ane}(\rho) &\leq&  \limsup_{n \rightarrow \infty} \frac{1}{n} \log(K_n) \nonumber \\ &=& \limsup_{n \rightarrow \infty} \frac{1}{n} \lfloor \log(1 + R_G(\rho_n)) \rfloor \nonumber \\ &=& LG(\rho).
\end{eqnarray}
\end{proof}

\subsection{The Distillable Entanglement under non-Entangling Operations} \label{distent}

Before we turn to the proof of the main proposition of this 
section, we state and prove an auxiliary lemma which will 
be used later on. It can be considered the analogue for non-entangling 
maps of Theorem 3.3 of Ref. \cite{Rai01}, which deals with 
PPT maps. It is also similar to the representation for $\lambda_n(\pi, K)$ we developed in 
the proof of Theorem \ref{maintheorem} in chapter \ref{QHT}. 

\begin{lemma} \label{singfractione}
For every multipartite state $\rho \in {\cal D}(\mathbb{C}^{d_1}\otimes ... \otimes \mathbb{C}^{d_n})$ the singlet-fraction 
under non-entangling maps,
\begin{equation}
F_{sep}(\rho; K) := \max_{\Lambda \in SEPP} \tr(\Phi(K)\Lambda(\rho)),
\end{equation}
where $\Phi(K)$ is a $K$-dimensional maximally entangled state shared by the first two parties, satisfies
\begin{equation} \label{sfsep}
F_{sep}(\rho; K) = \min_{\sigma \in \text{cone}({\cal S})}  \tr (\rho - \sigma)_+ + \frac{1}{K}\tr(\sigma).
\end{equation}
\end{lemma}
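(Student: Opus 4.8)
The plan is to recognize $F_{sep}(\rho;K)$ as a constrained measurement optimization of exactly the same form as the quantity $\lambda_n(\pi,K)$ computed in the proof of Theorem \ref{maintheorem}, and then to borrow the semidefinite-programming duality already established there. Concretely, I would first prove that
\begin{equation*}
F_{sep}(\rho;K) = \max_{A} \left\{ \tr(A\rho) : 0 \le A \le \id, \; \tr(A\sigma) \le \tfrac{1}{K} \;\; \forall \, \sigma \in {\cal S} \right\},
\end{equation*}
after which the right-hand side of Eq.\ (\ref{sfsep}) follows by dualizing. The whole argument thus splits into an operational reduction followed by a convex-duality step that is essentially a reference back to chapter \ref{QHT}.

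For the ``$\le$'' direction of the reduction, given any $\Lambda \in SEPP$ I would set $A := \Lambda^{\dagger}(\Phi(K))$, so that $\tr(\Phi(K)\Lambda(\rho)) = \tr(A\rho)$. Since $\Lambda$ is trace preserving, $\Lambda^{\dagger}$ is completely positive and unital, whence $0 \le \Lambda^{\dagger}(\Phi(K)) \le \Lambda^{\dagger}(\id) = \id$; and for every separable $\sigma$ we have $\tr(A\sigma) = \tr(\Phi(K)\Lambda(\sigma)) \le 1/K$, using that $\Lambda(\sigma)$ is separable together with the standard fact that the overlap of any separable state with $\Phi(K)$ is at most $1/K$. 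For the ``$\ge$'' direction, given a feasible $A$ I would exhibit the measure-and-prepare map
\begin{equation*}
\Lambda(X) := \tr(AX)\,\Phi(K) + \tr((\id - A)X)\,\frac{\id - \Phi(K)}{K^2 - 1},
\end{equation*}
which is manifestly completely positive and trace preserving. On a separable input $\sigma$ it returns the isotropic state of singlet fraction $q = \tr(A\sigma) \le 1/K$, which is separable precisely in this regime, so $\Lambda$ is non-entangling (the non-trivial output lives on the bipartite $\mathbb{C}^K \otimes \mathbb{C}^K$ system of the first two parties, the remaining parties being trivial). Finally $\tr(\Phi(K)\Lambda(\rho)) = \tr(A\rho)$, because $\Phi(K)$ is a rank-one projector orthogonal to $\id - \Phi(K)$, so the reject branch contributes nothing.

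With the reduction in hand, the maximization is identical to $\lambda_n(\rho,K)$ with the family ${\cal M}_n$ taken to be the separable states ${\cal S}$. I would then reproduce the dual derivation from the proof of Theorem \ref{maintheorem}: a strictly feasible primal point exists, so Slater's condition gives strong duality, and the dual simplifies --- via the identity $\tr(B)_+ = \min\{\tr(Y) : Y \ge 0,\, Y \ge B\}$ --- to $\min_{\mu \in {\cal S},\, b \ge 0} \tr(\rho - b\mu)_+ + b/K$. Writing $\sigma = b\mu \in \text{cone}({\cal S})$ then yields exactly Eq.\ (\ref{sfsep}).

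The main obstacle I anticipate is not any single deep step but making the reduction watertight on two points: verifying the non-entangling property of the constructed map for \emph{all} separable inputs, which I reduce to the separability threshold $F \le 1/K$ of the isotropic family (a known fact I would cite), and confirming that strong duality applies by exhibiting an explicit Slater point. Both are routine given the machinery developed in chapter \ref{QHT}, so the lemma should follow without genuinely new technical input.
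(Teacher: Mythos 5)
Your proof is correct and takes essentially the same route as the paper: both reduce $F_{sep}(\rho;K)$ to the SDP $\max\{\tr(A\rho) : 0 \le A \le \id,\ \tr(A\sigma)\le 1/K \ \forall\, \sigma \in {\cal S}\}$ --- the paper via a $UU^*$-twirling w.l.o.g.\ argument, you via the adjoint map $A=\Lambda^{\dagger}(\Phi(K))$ together with the very same measure-and-prepare channel and the $1/K$ separability threshold for isotropic states --- and then dualize exactly as in the proof of Theorem \ref{maintheorem}, using Slater's condition and the identity $\tr(B)_+ = \min\{\tr(Y): Y\ge 0,\ Y\ge B\}$. Your two-sided reduction merely makes explicit what the paper's symmetrization step leaves implicit (and a Slater point such as $A=\epsilon\,\id$ with $0<\epsilon<1/K$ is indeed immediate), so there is no gap on either side.
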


\begin{proof}
Due to the $UU^*$-symmetry of the maximally entangled state and the fact that the composition of a $SEPP$ operation with the twirling map is again a separability-preserving operation, we can w.l.o.g. perform the maximization over $SEPP$ maps of the form
\begin{equation}
        \Lambda(\rho) = \tr(A \rho) \Phi(K) + \tr((\id - A)\rho) 
        \frac{\id - \Phi(K)}{K^2 - 1}.
\end{equation}
Since $\Lambda$ must be completely positive we have $0 \leq A \leq \id$. 
As $\Lambda(\rho)$ is an isotropic state for every input state 
$\rho$, it is separable iff $\tr(\Lambda(\rho)\Phi(K)) \leq 1/K$ 
\cite{HH99}. Hence, we find that $\Lambda$ is 
separability-preserving iff for every separable state $\sigma$,
\begin{equation}
\tr(A \sigma) \leq \frac{1}{K}.
\end{equation}
The singlet fraction is thus given by 
\begin{equation} \label{singfracuse}
        F_{sep}(\rho; K) = \max_{A} [\tr(A \rho) : 0 \leq A \leq \id, 
        \hspace{0.2 cm} \tr(A \sigma) \leq 1/K, \hspace{0.2 cm} \forall 
        \hspace{0.1 cm} \sigma \in {\cal S}]. 
\end{equation}
The R.H.S. of this equation is a convex optimization problem 
and we can find its dual formulation, in analogy to 
what was done in the proof of Theorem \ref{maintheorem}. Let us form 
the Lagrangian of the problem, 
\begin{equation}
        L(A, X, Y, \sigma) =  - \tr(A \rho) - \tr(X A) - \tr(Y(\id - A)) - 
        \tr((\id/K - A) \sigma),
\end{equation}
where $X, Y \geq 0$ are Lagrange multipliers associated to 
the constraints $0 \leq A \leq \id$, and $\sigma \in \text{cone}({\cal S})$ 
is a Lagrange multiplier (an unnormalized separable state) associated to the constraint $\tr(A \sigma) 
\leq 1/K \hspace{0.2 cm} \forall \hspace{0.1 cm} \sigma 
\in {\cal S}$\footnote{See e.g. section \ref{entwit}}. The dual problem is then given by 
\begin{equation}
        F_{sep}(\rho; K) = \min_{Y, \sigma} [\tr(Y) + \frac{1}{K} 
        \tr(\sigma) : \sigma \in {\cal S}_u, \hspace{0.1 cm} 
        Y \geq 0, \hspace{0.1 cm} Y \geq \rho - \sigma]. 
\end{equation}
Using that $\tr(A)_{+} = \min_{Y \geq A, \hspace{0.01 cm} Y \geq 0} \tr(Y)$, we then find Eq. (\ref{sfsep}).
\end{proof}
\vspace{0.3 cm}

It turns out that for the distillation part we do not need to allow any generation of 
entanglement from the maps. In analogy to Definition \ref{distane}, we can define the distillable entanglement
under non-entangling maps as 
\begin{equation}
E_{D}^{ne}(\rho) := \sup_{\{ k_n \}} \left \{ \liminf_{n \rightarrow \infty} \frac{k_n}{n}  : \lim_{n \rightarrow \infty} \left( \min_{\Lambda \in SEPP} || \Lambda(\rho^{\otimes n}) - \phi_2^{\otimes k_n}||_1 \right) = 0 \right \}.
\end{equation}

Using Lemma \ref{singfractione} and Theorem \ref{maintheorem} proved in chapter \ref{QHT} we can then easily establish the following proposition. 

\begin{proposition} \label{disne}
For every multipartite entangled state $\rho \in {\cal D}(\mathbb{C}^{d_1}\otimes ... \otimes \mathbb{C}^{d_n})$,
\begin{equation}
E_{D}^{ne}(\rho) = E_R^{\infty}(\rho).
\end{equation}
\end{proposition}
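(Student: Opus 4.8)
The plan is to reduce the proposition to the single-shot characterization of the optimal singlet fraction given in Lemma \ref{singfractione}, and then to feed this into the asymptotic threshold established in the proof of Theorem \ref{maintheorem}. The crucial observation I would exploit is that, taking $\{ {\cal M}_n \}$ to be the sets of fully separable states over ${\cal H}^{\otimes n}$ --- which satisfy properties \ref{cond1}-\ref{cond5} and for which $E_{\cal M}^{\infty} = E_R^{\infty}$ --- the expression for $F_{sep}(\rho^{\otimes n}; K)$ in Eq. (\ref{sfsep}) is, after writing $\sigma = b \mu$ with $\mu \in {\cal S}$ and $b \geq 0$, literally identical to the dual formulation $\lambda_n(\rho^{\otimes n}, K) = \min_{\mu, b} \tr(\rho^{\otimes n} - b\mu)_+ + b/K$ obtained in the proof of Theorem \ref{maintheorem}. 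Hence $F_{sep}(\rho^{\otimes n}; K) = \lambda_n(\rho^{\otimes n}, K)$, and Eq. (\ref{refor}) immediately delivers the threshold behaviour $F_{sep}(\rho^{\otimes n}; 2^{yn}) \to 0$ for $y > E_R^{\infty}(\rho)$ and $F_{sep}(\rho^{\otimes n}; 2^{yn}) \to 1$ for $y < E_R^{\infty}(\rho)$.

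For the achievability inequality $E_D^{ne}(\rho) \geq E_R^{\infty}(\rho)$, I would fix any $y < E_R^{\infty}(\rho)$, set $k_n := \lfloor y n \rfloor$ and $K_n := 2^{k_n}$, so that $k_n / n \to y$ and $F_{sep}(\rho^{\otimes n}; K_n) \to 1$. As established in the proof of Lemma \ref{singfractione}, the optimal $SEPP$ map can be taken to output an isotropic state, and a short computation shows that the isotropic state of singlet fraction $F$ lies at trace distance exactly $2(1 - F)$ from $\Phi(K_n)$. Since $\Phi(K_n)$ and $\phi_2^{\otimes k_n}$ differ only by a local unitary --- itself a non-entangling operation --- composing with this unitary yields a sequence of $SEPP$ maps $\Lambda_n$ with $|| \Lambda_n(\rho^{\otimes n}) - \phi_2^{\otimes k_n} ||_1 \to 0$. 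Thus $y$ is an achievable rate, even with genuinely (non-asymptotically) non-entangling maps, and letting $y \uparrow E_R^{\infty}(\rho)$ gives the bound.

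For the converse $E_D^{ne}(\rho) \leq E_R^{\infty}(\rho)$, I would argue by contradiction: suppose some admissible sequence achieves $\liminf_n k_n/n > E_R^{\infty}(\rho)$, and pick $y$ with $E_R^{\infty}(\rho) < y < \liminf_n k_n/n$. For all large $n$ one then has $k_n > y n$, hence $K_n = 2^{k_n} > 2^{yn}$. On one hand, $|| \Lambda_n(\rho^{\otimes n}) - \phi_2^{\otimes k_n} ||_1 \to 0$ forces $F_{sep}(\rho^{\otimes n}; K_n) \geq \tr(\phi_2^{\otimes k_n}\Lambda_n(\rho^{\otimes n})) \to 1$, using that the maximally entangled projector (matched to $\phi_2^{\otimes k_n}$ by a local unitary absorbed into the map) is a valid test in the maximization defining $F_{sep}$. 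On the other hand, the explicit formula (\ref{sfsep}) shows $F_{sep}(\rho^{\otimes n}; K)$ is non-increasing in $K$, so $F_{sep}(\rho^{\otimes n}; K_n) \leq F_{sep}(\rho^{\otimes n}; 2^{yn})$, which tends to $0$ by the threshold above. This contradiction establishes the converse.

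The argument is essentially immediate once the identity $F_{sep} = \lambda_n$ is recognized, so I do not expect a serious analytic obstacle at this stage; the real difficulty is already absorbed into Theorem \ref{maintheorem} and Proposition \ref{maincompact}. The only points requiring care are bookkeeping ones: converting a near-unit singlet fraction into a small trace-norm error via the isotropic-state distance $2(1-F)$, correctly matching the target $\phi_2^{\otimes k_n}$ with $\Phi(K_n)$ up to local unitaries, and handling the $\liminf$ in the definition of $E_D^{ne}$ through the monotonicity of $F_{sep}$ in $K$. I expect no difficulty beyond these.
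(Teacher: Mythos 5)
Your proposal is correct and follows essentially the same route as the paper: Lemma \ref{singfractione} combined with the threshold behaviour of $\min_{\sigma \in {\cal S}}\tr(\rho^{\otimes n} - 2^{yn}\sigma)_+$ from Theorem \ref{maintheorem}, your identification $F_{sep}(\rho^{\otimes n}; K) = \lambda_n(\rho^{\otimes n}, K)$ being exactly the connection the paper itself notes and re-derives via the same dual optimization over $b$. The only difference is that you spell out bookkeeping the paper leaves implicit --- the isotropic-output form of the optimal map, the distance $2(1-F)$ to $\Phi(K_n)$, the local-unitary matching of $\Phi(2^{k_n})$ with $\phi_2^{\otimes k_n}$, and monotonicity of $F_{sep}$ in $K$ --- all of which are correct.
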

\begin{proof}
On the one hand, from Theorem \ref{maintheorem} we have
\begin{equation} \label{maintdis}
\lim_{n \rightarrow \infty} \min_{\sigma_n \in {\cal S}}\tr(\rho^{\otimes n} - 2^{ny}\sigma_n)_+ = 
\begin{cases}
0, & y > E_R^{\infty}(\rho) \\
1, & y < E_R^{\infty}(\rho).
\end{cases}
\end{equation}
On the other, from Lemma \ref{singfractione} we find
\begin{equation} \label{sfne}
F_{sep}(\rho^{\otimes n}; 2^{ny}) := \min_{\sigma \in {\cal S}, b \in \mathbb{R}}  \tr (\rho^{\otimes n} - 2^{n b}\sigma)_+ + 2^{- (y - b)n}.
\end{equation}

Let us consider the asymptotic behavior of $F_{sep}(\rho^{\otimes n}, 2^{n y})$. Take $y = E_{R}^{\infty}(\rho) + \epsilon$, for any $\epsilon > 0$. Then we can choose, for each $n$, $b = 2^{n(E_{R}^{\infty}(\rho) + \frac{\epsilon}{2})}$, giving
\begin{equation*} 
F_{sep}(\rho^{\otimes n}, 2^{ny}) \leq \min_{\sigma \in {\cal S}} \tr(\rho^{\otimes n} - 2^{n(E_{\cal M}^{\infty}(\rho) + \frac{\epsilon}{2})}\sigma)_+ + 2^{-n\frac{\epsilon}{2}}.
\end{equation*}
We then see from Eq. (\ref{maintdis}) that $\lim_{n \rightarrow \infty} F_{sep}(\rho^{\otimes n}, 2^{ny}) = 0$, from which follows that $E_D^{ne}(\rho) \leq E_{R}^{\infty}(\rho) + \epsilon$. As $\epsilon$ is arbitrary, we find $E_D^{ne}(\rho) \leq E_{R}^{\infty}(\rho)$. 

Conversely, let us take $y = E_{\cal M}^{\infty}(\rho) - \epsilon$, for any $\epsilon > 0$. The optimal $b$ for each $n$ has to satisfy $b_n \leq 2^{y n}$, otherwise $F_{sep}(\rho^{\otimes n}, 2^{ny})$ would be larger than one, which is not true. Therefore,
\begin{equation*} 
F_{sep}(\rho^{\otimes n}, 2^{ny}) \geq \min_{\sigma \in {\cal S}} \tr(\rho^{\otimes n} - 2^{n(E_{\cal M}^{\infty}(\rho) - \epsilon)}\sigma)_+,
\end{equation*}
which goes to one again by Eq. (\ref{maintdis}). This then shows that $E_D^{ne}(\rho) \geq E_{R}^{\infty}(\rho) - \epsilon$. Again, as $\epsilon > 0$ is arbitrary, we find  $E_D^{ne}(\rho) \geq E_{R}^{\infty}(\rho)$.
\end{proof}

\vspace{0.3 cm}
The proof of the other half of Theorem \ref{maintheorem5} follows easily from Proposition \ref{disne} and the following Lemma. 

\begin{lemma} \label{almmonrelent}
If $\Lambda \in SEPP(\epsilon, {\cal H})$, then
\begin{equation} \label{1432}
E_R(\Lambda(\rho)) \leq \log(1 + \epsilon) + E_R(\rho).
\end{equation}
\end{lemma}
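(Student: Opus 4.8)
The plan is to mirror the proof of Lemma \ref{epsilonincrease}, but since we are now dealing with the relative entropy of entanglement rather than the global robustness, I would replace the robustness-specific algebra by the two standard properties of the relative entropy that are already used repeatedly in this chapter: its monotonicity under completely positive trace-preserving maps (data processing) and the operator monotonicity of the logarithm \cite{Bat96}.

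First I would fix an optimal separable state $\sigma^*$ for $\rho$, so that $E_R(\rho) = S(\rho || \sigma^*)$. Since $\Lambda$ is $\epsilon$-non-entangling and $\sigma^*$ is separable, we have $R_G(\Lambda(\sigma^*)) \leq \epsilon$; by the definition of the global robustness (see section \ref{robustnesses}) this means there is a state $Z$ such that $\tau := (\Lambda(\sigma^*) + \epsilon Z)/(1 + \epsilon)$ is separable. Discarding the positive term $\epsilon Z$ gives the operator inequality $\Lambda(\sigma^*) \leq (1 + \epsilon)\tau$, equivalently $\tau \geq (1 + \epsilon)^{-1}\Lambda(\sigma^*)$.

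Next I would bound $E_R(\Lambda(\rho))$ by using the separable state $\tau$ as a (generally suboptimal) reference, so that $E_R(\Lambda(\rho)) \leq S(\Lambda(\rho) || \tau)$. Applying the operator monotonicity of $\log$ to $\tau \geq (1+\epsilon)^{-1}\Lambda(\sigma^*)$ yields $\log \tau \geq \log \Lambda(\sigma^*) - \log(1 + \epsilon)\,\id$, whence $-\tr(\Lambda(\rho)\log\tau) \leq -\tr(\Lambda(\rho)\log\Lambda(\sigma^*)) + \log(1+\epsilon)$ after tracing against $\Lambda(\rho) \geq 0$ and using $\tr(\Lambda(\rho)) = 1$. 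Adding $\tr(\Lambda(\rho)\log\Lambda(\rho))$ to both sides gives $S(\Lambda(\rho) || \tau) \leq S(\Lambda(\rho) || \Lambda(\sigma^*)) + \log(1+\epsilon)$. Finally, the data-processing inequality $S(\Lambda(\rho) || \Lambda(\sigma^*)) \leq S(\rho || \sigma^*) = E_R(\rho)$ closes the chain and produces Eq. (\ref{1432}).

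The argument is a short estimate once the robustness bound is unpacked, so I do not anticipate a genuine obstacle; the only points requiring care are the support conditions needed for the logarithmic inequality and the trace expressions to be meaningful. If $S(\rho || \sigma^*)$ is infinite the claimed bound is trivial, and otherwise $\mathrm{supp}(\rho) \subseteq \mathrm{supp}(\sigma^*)$, while $\Lambda(\sigma^*) \leq (1+\epsilon)\tau$ forces $\mathrm{supp}(\Lambda(\sigma^*)) \subseteq \mathrm{supp}(\tau)$; together with the contractivity of support under $\Lambda$ this makes every quantity above well defined, so that the operator-monotonicity step applies on the relevant support.
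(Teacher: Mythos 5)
Your proof is correct and is essentially the paper's own argument: both fix the optimal separable state $\sigma^*$ for $\rho$, use the robustness bound $R_G(\Lambda(\sigma^*)) \leq \epsilon$ to produce the separable state $\tau = (\Lambda(\sigma^*) + \epsilon Z)/(1+\epsilon)$, and then combine the operator monotonicity of the $\log$ with the data-processing inequality $S(\Lambda(\rho)\,||\,\Lambda(\sigma^*)) \leq S(\rho\,||\,\sigma^*)$ — you merely read the chain of inequalities in the opposite direction (bounding $E_R(\Lambda(\rho))$ from above rather than $E_R(\rho)$ from below). Your closing remark on support conditions is a welcome refinement the paper leaves implicit, but it does not change the substance of the argument.
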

\begin{proof}
Let $\sigma$ be an optimal separable state for $\rho$ in the relative entropy of 
entanglement. Then, if $\Omega$ is a $\epsilon$-separability preserving map and $Z$ a state such that $\Omega(\sigma) + \epsilon Z$ is separable,
\begin{eqnarray*}
E_R(\rho) &=& S(\rho || \sigma) \nonumber \\ &\geq& S( \Omega(\rho) || \Omega(\sigma) ) \nonumber \\
&\geq& S( \Omega(\rho) || \Omega(\sigma) + \epsilon Z ) \nonumber \\ 
&=& S( \Omega(\rho) || (\Omega(\sigma) + \epsilon Z)/(1 + \epsilon) ) - \log(1 + \epsilon) \nonumber \\ 
&\geq& E_R(\Omega(\rho)) - \log(1 + \epsilon), 
\end{eqnarray*}
The first inequality follows from the monotonicity of the 
relative entropy under trace preserving CP maps and the second inequality from the
operator monotonicity of the $\log$.
\end{proof}
\vspace{0.3 cm}

Indeed, as any sequence of non-entangling maps is obviously asymptotically non-entangling, we have $E_D^{ane}(\rho) \geq E_D^{ne}(\rho) = E_R^{\infty}(\rho)$, where the last equality follows from Proposition \ref{disne}. To prove the converse inequality $E_D^{ane}(\rho) \leq E_R^{\infty}(\rho)$, we use 
Lemma \ref{almmonrelent} and the monotonicity of the relative entropy under separability-preserving maps. Let $\Lambda_n \in SEPP(\epsilon_n)$ be an optimal sequence of maps for the distillable entanglement under asymptotically non-entangling maps in the sense that
\begin{equation*}
\lim_{n \rightarrow \infty} || \Lambda_n(\rho^{\otimes n}) - \phi_2^{\otimes k_n}||_1 = 0\;\;\;\;
\;\;\;\; \lim_{n\rightarrow\infty} \epsilon_n = 0,
\end{equation*}
and
\begin{equation*}
\liminf_{n \rightarrow \infty} \frac{k_n}{n} = E_{D}^{ane}(\rho).
\end{equation*}
From Lemma \ref{almmonrelent},
\begin{eqnarray*}
       \frac{1}{n} E_R(\Lambda_n(\rho^{\otimes n}))
          &\leq&  \frac{1}{n} E_R(\rho^{\otimes n}) + \frac{1}{n} 
         \log(1 + \epsilon_n).
\end{eqnarray*}
Hence, as $\lim_{n \rightarrow \infty}\epsilon_n = 0$ and 
from the asymptotic continuity of relative entropy of entanglement,
\begin{eqnarray*}
        E_D^{as}(\rho) &=& \liminf_{n \rightarrow \infty}\frac{1}{n}E_R(\Lambda_n(\rho^{\otimes n})) \\ 
        &\leq& \liminf_{n \rightarrow \infty} \frac{1}{n}E_R(\rho^{\otimes n}) 
        + \liminf_{n \rightarrow \infty} \frac{1}{n} \log(1 + \epsilon_n)\nonumber \\ 
        &=& E_R^{\infty}(\rho).
\end{eqnarray*}

\subsection{Proof Corollary \ref{corref}}

Finally, we can now easily establish Corollary \ref{corref}.
\vspace{0.2 cm}

\begin{proof} (Corollary \ref{corref})
Assume there is a sequence of quantum maps $\{ \Lambda_n \}_{n \in \mathbb{N}}$ satisfying the three conditions of the corollary. Then, 
\begin{eqnarray*}
E_{R}^{\infty}(\sigma) &=& \lim_{n \rightarrow \infty} \frac{1}{n}E_R(\Lambda_n(\rho^{\otimes n})) \\
&\leq& \frac{1}{n}E_R(\rho^{\otimes n}) + \frac{\log(1 + \epsilon_n)}{n} \\
&=& E_R^{\infty}(\rho).
\end{eqnarray*}
The first equality follow from the asymptotic continuity of $E_R$ and the following inequality from Lemma \ref{almmonrelent}.

To show the other direction, let us assume that $E_R^{\infty}(\rho) \geq E_R^{\infty}(\sigma)$. As $E_R^{\infty}(\rho) = E_D^{ane}(\rho)$, there is a sequence of maps $\{ \Lambda_n \}_{n \in \mathbb{N}}$, $\Lambda_n : {\cal D}((\mathbb{C}^{d_1}\otimes ... \otimes \mathbb{C}^{d_m})^{\otimes n}) \rightarrow {\cal D}((\mathbb{C}^{2} \otimes \mathbb{C}^{2})^{\otimes k_n})$, such that 
\begin{eqnarray*}
	\Lambda_n \in SEPP(\epsilon_n),  \hspace{1 cm} \lim_{n \rightarrow \infty}\epsilon_n = 0, \nonumber 
\end{eqnarray*}
\begin{eqnarray} \label{tnbo1}	
\lim_{n \rightarrow \infty} || \Lambda_n(\rho^{\otimes n}) - \phi_2^{\otimes k_n} ||_1 = 0 \nonumber
\end{eqnarray}
and
\begin{equation} \label{relentrho}
\lim_{n \rightarrow \infty} \frac{k_n}{n} = E_R^{\infty}(\rho)\footnote{We can always find a sequence for which the limit exists by using the optimal sequence such that $\limsup_{n \rightarrow \infty} \frac{k_n}{n} = E_R^{\infty}(\rho)$ and increasing the value of the $k_n$'s which are not close to the limit value.}
\end{equation}
Moreover, as $E_R^{\infty}(\sigma) = E_C^{ane}(\sigma)$, there is another sequence of maps $\{ \Omega_n \}_{n \in \mathbb{N}}$, $\Omega_n : {\cal D}((\mathbb{C}^{2} \otimes \mathbb{C}^{2})^{\otimes k'_n}) \rightarrow {\cal D}((\mathbb{C}^{d'_1}\otimes ... \otimes \mathbb{C}^{d'_{m'}})^{\otimes n})$, satisfying
\begin{eqnarray*}
	\Omega_n \in SEPP(\epsilon'_n),  \hspace{1 cm} \lim_{n \rightarrow \infty}\epsilon'_n = 0, \nonumber 
\end{eqnarray*}
\begin{eqnarray}	\label{tnbo2}
\lim_{n \rightarrow \infty} || \Omega_n(\phi_2^{\otimes k'_n}) - \sigma^{\otimes n} ||_1 = 0 \nonumber
\end{eqnarray}
and
\begin{equation} \label{relentsigma}
\lim_{n \rightarrow \infty} \frac{k'_n}{n} = E_R^{\infty}(\sigma).
\end{equation}

From Eqs. (\ref{relentrho}) and (\ref{relentsigma} there is a sequence $\delta_{n_0}$ converging to zero when $n_0 \rightarrow \infty$ such that for every $n \geq n_0$,
\begin{equation*}
k_n \geq (E_R^{\infty}(\rho) - \delta_{n_0}/2)n, \hspace{0.3 cm} k'_n \leq (E_R^{\infty}(\sigma) + \delta_{n_0}/2)n.
\end{equation*}
Then, for every $n \geq n_0$, $k_n \geq - \delta_{n_0}n + k'_n$. From Eq. (\ref{relentsigma}) we thus find that for sufficiently large $n \geq n_0$,
\begin{equation*}
k_n =  k'_{n - o(n)} + r_n,
\end{equation*}
with $r_n$ a positive integer. Here $o(n)$ stands for a sublinear term in $n$. 

Let us now consider the sequence of maps $\{ \Omega_n \circ \tr_{1,...,r_n}\circ \Lambda_n \}_{n \in \mathbb{N}}$. From Eqs. (\ref{tnbo1}, \ref{tnbo2}) we find
\begin{eqnarray}
\lim_{n \rightarrow \infty} || \Omega_{n - o(n)} \circ \tr_{1,...,r_n} \circ \Lambda_n(\rho^{\otimes n}) - \sigma^{\otimes n - o(n)} ||_1  &\leq& \lim_{n \rightarrow \infty} || \Lambda_n(\rho^{\otimes n}) - \phi_2^{\otimes k_n} ||_1 \nonumber \\ &+& || \Omega_{n - o(n)}(\phi_2^{\otimes n - o(n)}) - \sigma^{\otimes n - o(n)} ||_1 = 0. \nonumber
\end{eqnarray}
Moreover, from Lemma \ref{epsilonincrease} we see that for every separable state $\sigma$,
\begin{eqnarray*}
LR_G(\Omega_{n - o(n)} \circ \tr_{1,...,r_n} \circ \Lambda_n(\sigma)) &\leq& LR_G(\Lambda_n(\sigma)) + \log(1 + \epsilon'_n) \\ &\leq& \log(1 + \epsilon_n) + \log(1 + \epsilon'_n),
\end{eqnarray*}
where we used $\Omega_{n - o(n)} \circ \tr_{1,...,r_n} \in SEPP(\epsilon'_n)$ and $\Lambda_n \in SEPP(\epsilon_n)$. Hence, $\Omega_{n - o(n)} \circ \tr_{1,...,o(n)} \circ \Lambda_n \in SEPP(\epsilon_n + \epsilon'_n + \epsilon_n \epsilon'_n)$.
\end{proof}

\section{How Much Entanglement Must and Can be Generated?}  \label{mustcan}

We are now in position to understand the choice of the global robustness as the measure to quantify the amount of entanglement generated. The reason we need to allow some entanglement to be generated is that we relate the entanglement cost to the regularized relative entropy of entanglement by using the connection of the latter to the asymptotic global robustness. The amount of entanglement generated is then due to the fact that the optimal mixing state in the global robustness might be entangled. Before we analyse more carefully if we indeed need to allow for some entanglement to be generated, let us analyse if we can quantify it by some other measure, instead of the global robustness.

Suppose we required alternatively only that 
\begin{equation} \label{tnapp}
\lim_{n \rightarrow \infty} \max_{\sigma \in {\cal S}} \min_{\pi \in {\cal S}} || \Lambda_n(\sigma) - \pi ||_1 = 0,
\end{equation}
instead of $\lim_{n \rightarrow \infty}\max_{\sigma \in {\cal S}} R_G(\Lambda_n(\sigma)) = 0$. Then the achievability part in Proposition \ref{disne} would still hold, as for every quantum operation $\Lambda$,
\begin{equation*}
\max_{\sigma \in {\cal S}} R_G(\Lambda(\sigma)) \leq \max_{\sigma \in {\cal S}} \min_{\pi \in {\cal S}} || \Lambda(\sigma) - \pi ||_1.  
\end{equation*}
However this is not sufficient. We still have to make sure that the cost is larger than the distillation function, which should be finite. It is easy to see that Eq. (\ref{tnapp}) ensures that both the distillation and cost functions are zero for separable states. It turns out however that the distillable entanglement is infinity for every entangled state! We hence have a bizarre situation in which even though entanglement cannot be created for free, it can be amplified to the extreme whenever present, no matter in which amount. The key to see this is to consider the analogue of $F_{sep}$, given by Eq. (\ref{sfne}), when we only require that the map satisfies Eq. \ref{tnapp}. Following the proof of Lemma \ref{singfractione} we can easily see that the singlet-fraction under maps $\Lambda$ satisfying 
\begin{equation*} 
 \max_{\sigma \in {\cal S}} \min_{\pi \in {\cal S}} || \Lambda(\sigma) - \pi ||_1 \leq \epsilon
\end{equation*}
is given by
\begin{equation} 
F_{sep}(\rho; K ; \epsilon) = \min_{\sigma \in \text{cone}({\cal S})}  \tr (\rho - \sigma)_+ + \tr(\sigma)(\frac{1}{K} + \epsilon),
\end{equation}
which for $\rho^{\otimes n}$ can be rewritten as
\begin{equation} 
F_{sep}(\rho^{\otimes}; 2^{ny}; \epsilon_n) = \min_{\sigma \in {\cal S}, b \in \mathbb{R}}  \tr (\rho^{\otimes n} - 2^{bn}\sigma)_+ + 2^{-(y - b)n} + 2^{-( (\log(1/\epsilon_n)/n) - b)n}.
\end{equation}
It is clear that the optimal $b$ must be such that $b < \min(b, \log(1/\epsilon_n)/n)$, as otherwise $F_{sep}(\rho^{\otimes}; 2^{ny}; \epsilon)$ would be larger than zero. Therefore, if $y > \log(1/\epsilon_n)/n$, 
\begin{equation} \label{sgtn}
F_{sep}(\rho^{\otimes}; 2^{ny}; \epsilon_n) \geq \min_{\sigma \in {\cal S}}  \tr (\rho^{\otimes n} - \epsilon_n^{-1}\sigma)_+.
\end{equation} 
By Theorem \ref{maintheorem}, $F_{sep}(\rho^{\otimes}; 2^{ny}; \epsilon_n)$ goes to one for every $y$, a long as $\epsilon_n$ goes to zero slower than $2^{- n E_R^{\infty}(\rho)}$, which imply that the associated distillable entanglement is unbounded. Note that the same happens if we use \textit{any} asymptotically continuous measure to bound the amount of entanglement generated.

If instead we require that
\begin{equation}
\max_{\sigma \in {\cal S}} \min_{\pi \in {\cal S}} || \Lambda(\sigma) - \pi ||_1 \leq \epsilon/ \dim({\cal H}),
\end{equation}
or even that 
\begin{equation}
\max_{\sigma \in {\cal S}} \min_{\pi \in {\cal S}} || \Lambda(\sigma) - \pi ||_{\infty} \leq \epsilon/ \dim({\cal H}),
\end{equation}
then we would find that the associated $\epsilon$-singlet-fraction would satisfy
\begin{equation} 
\tilde{F}_{sep}(\rho; K ; \epsilon) = \min_{\sigma \in \text{cone}({\cal S})}  \tr (\rho - \sigma)_+ + \tr(\sigma)\frac{1 + \epsilon}{K}.
\end{equation}
In this case it is easy to see that the distillable entanglement would be bounded and we would recover a sensible situation. It is 
interesting and rather mysterious to the author that although it seems that some entanglement must be generated to have reversibility, only very little can actually be afforded before the theory becomes trivial.

\subsection{Single-copy and Asymptotic Entanglement Cost under non-Entangling Maps}

In this subsection we show that we can exactly determine the single-copy entanglement cost under non-entangling operations. This will bring more insight into the necessity of using asymptotically non-entangling maps for having reversibility.

\begin{definition}
We define the $\epsilon$-single-copy entanglement cost under non-entangling maps as
\begin{equation} \label{sbec}
E_{C,1}^{ne,\epsilon}(\rho) := \inf \{ \log K : \min_{\Lambda \in SEPP} || \Lambda(\Phi(K)) - \rho ||_1 \leq \epsilon \}.
\end{equation}
\end{definition}

Consider the log robustness, defined in section \ref{robustnesses}. In the notation introduced in chapter \ref{QHT}, we can express it as
\begin{equation}
LR(\rho) = \min_{\sigma \in {\cal S}} S_{\max, {\cal S}}(\rho || \sigma),
\end{equation}
with
\begin{equation}
S_{\max, {\cal S}}(\rho || \sigma) :=  \min \{ s : \rho \leq_{\cal S} 2^s \sigma \} .
\end{equation}
Still following the approach of chapter \ref{QHT}, we can define its $\epsilon$ smooth version as
\begin{equation} \label{elogrob}
LR^{\epsilon}(\rho) := \min_{\tilde{\rho} \in B_{\epsilon}(\rho)} LR(\tilde{\rho}).
\end{equation}

\begin{proposition}
For every $0 \leq \epsilon < 1$ and every $\rho \in {\cal D}(\mathbb{C}^{d_1}\otimes ... \otimes \mathbb{C}^{d_n})$,
\begin{equation}
LR^{\epsilon}(\rho) \leq  E_{C,1}^{ne,\epsilon}(\rho) \leq \lfloor LR^{\epsilon}(\rho) \rfloor.
\end{equation}
\end{proposition}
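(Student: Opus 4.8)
The plan is to reduce both inequalities to a single structural fact: a state $\tilde\rho$ can be produced from $\Phi(K)$ by a genuinely (i.e.\ $\epsilon=0$) non-entangling map if, and only if, $LR(\tilde\rho)\le\log K$. Granting this, the definition of $E_{C,1}^{ne,\epsilon}$ becomes $\inf\{\log K : K\in\mathbb{N},\ \exists\,\tilde\rho\in B_{\epsilon}(\rho)\text{ with }LR(\tilde\rho)\le\log K\}$, which is $\log_2$ of the least integer $K$ with $K\ge 2^{LR^{\epsilon}(\rho)}$, and both bounds then follow from elementary rounding estimates.

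For the lower bound $LR^{\epsilon}(\rho)\le E_{C,1}^{ne,\epsilon}(\rho)$ I would first record the monotonicity of $LR$ under $SEPP$, the $\epsilon=0$ separable-robustness analogue of Lemma \ref{epsilonincrease}: if $\Lambda\in SEPP$ and $\rho+R(\rho)\sigma=(1+R(\rho))\tau$ is an optimal robustness decomposition with $\sigma,\tau\in{\cal S}$, then applying $\Lambda$ gives $\Lambda(\rho)+R(\rho)\Lambda(\sigma)=(1+R(\rho))\Lambda(\tau)$ with $\Lambda(\sigma),\Lambda(\tau)\in{\cal S}$, so $R(\Lambda(\rho))\le R(\rho)$ and hence $LR(\Lambda(\rho))\le LR(\rho)$. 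Then for any feasible pair $(K,\Lambda)$ the state $\tilde\rho:=\Lambda(\Phi(K))$ lies in $B_{\epsilon}(\rho)$ and satisfies $LR(\tilde\rho)\le LR(\Phi(K))=\log K$, using $R(\Phi(K))=K-1$; taking the minimum over $B_{\epsilon}(\rho)$ and the infimum over feasible $K$ yields $LR^{\epsilon}(\rho)\le E_{C,1}^{ne,\epsilon}(\rho)$.

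For the upper bound I would pick $\tilde\rho\in B_{\epsilon}(\rho)$ attaining $LR^{\epsilon}(\rho)=LR(\tilde\rho)$ and take $K$ to be the least integer with $K\ge 2^{LR(\tilde\rho)}$, so that $K-1\ge R(\tilde\rho)$ and $\log K\le\ceil{LR^{\epsilon}(\rho)}$. Exactly as in the proof of Proposition \ref{propcost}, I would exhibit a separable $\pi$ with $(\tilde\rho+(K-1)\pi)/K=:\tau\in{\cal S}$, which exists because $K-1\ge R(\tilde\rho)$: writing the optimal decomposition $\tilde\rho+R(\tilde\rho)\sigma=(1+R(\tilde\rho))\tau_0$ and padding with an arbitrary separable $\mu$ gives $\pi=(R(\tilde\rho)\sigma+(K-1-R(\tilde\rho))\mu)/(K-1)\in{\cal S}$. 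I then define the measure-and-prepare channel $\Lambda(A):=\tr(A\Phi(K))\tilde\rho+\tr(A(\id-\Phi(K)))\pi$, which is trace-preserving and completely positive and satisfies $\Lambda(\Phi(K))=\tilde\rho$, so $\|\Lambda(\Phi(K))-\rho\|_1\le\epsilon$; hence $(K,\Lambda)$ is feasible and $E_{C,1}^{ne,\epsilon}(\rho)\le\log K\le\ceil{LR^{\epsilon}(\rho)}$.

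The hard part will be verifying that this $\Lambda$ is non-entangling. Since $\Lambda(\xi)$ depends on a separable input $\xi$ only through $p:=\tr(\Phi(K)\xi)$, and the maximal singlet fraction of a separable state is $\max_{\xi\in{\cal S}}\tr(\Phi(K)\xi)=1/K$ (the isotropic-state separability criterion \cite{HH99}, as already used in Lemma \ref{singfractione}), every separable $\xi$ has $p\le 1/K$; then $\Lambda(\xi)=p\tilde\rho+(1-p)\pi=pK\tau+(1-pK)\pi$ is a convex combination of the separable states $\tau$ and $\pi$, hence separable. Combining the two bounds gives the stated sandwich; in fact the argument pins down $E_{C,1}^{ne,\epsilon}(\rho)=\log_2\ceil{2^{LR^{\epsilon}(\rho)}}$, and since $E_{C,1}^{ne,\epsilon}(\rho)\ge LR^{\epsilon}(\rho)$ the integrality of $\log K$ forces the upper rounding to be $\ceil{LR^{\epsilon}(\rho)}$.
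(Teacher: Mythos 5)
Your proof is correct and takes essentially the same route as the paper: the lower bound via monotonicity of $LR$ under non-entangling maps together with $R(\Phi(K))=K-1$, and the upper bound via the measure-and-prepare formation map $\Lambda(A)=\tr(A\Phi(K))\tilde\rho+\tr(A(\id-\Phi(K)))\pi$. It is worth recording, though, that on two points your write-up repairs the paper's own proof rather than merely reproducing it. First, the paper only imposes $\frac{1}{1+K}\left(\rho_{\epsilon}+K\pi\right)\in{\cal S}$ and never checks $\Lambda\in SEPP$; that condition is actually insufficient, because a separable input can have singlet fraction as large as $1/K$, so the extremal output is $\frac{1}{K}\rho_{\epsilon}+\frac{K-1}{K}\pi$. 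Your condition $(\tilde\rho+(K-1)\pi)/K=\tau\in{\cal S}$ — the same one used in the asymptotic Proposition \ref{propcost}, where $\Lambda_n(\ket{0,0}\bra{0,0})$ plays exactly this role — combined with the decomposition $\Lambda(\xi)=pK\tau+(1-pK)\pi$ for $p=\tr(\Phi(K)\xi)\le 1/K$ \cite{HH99}, is precisely what makes the map separability-preserving, and it closes a gap the paper leaves open. Second, your roundings are the consistent ones: the paper's choice $K:=\lfloor 2^{LR^{\epsilon}(\rho)}\rfloor$ and the printed bound $E_{C,1}^{ne,\epsilon}(\rho)\le\lfloor LR^{\epsilon}(\rho)\rfloor$ must be ceilings, since as printed the sandwich would force $LR^{\epsilon}(\rho)$ to be an integer; your exact formula $E_{C,1}^{ne,\epsilon}(\rho)=\log\lceil 2^{LR^{\epsilon}(\rho)}\rceil$ is a correct sharpening of the intended statement. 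One small slip in your final sentence: it is $K$, not $\log K$, that is an integer, so the exact value $\log\lceil 2^{LR^{\epsilon}(\rho)}\rceil$ is generically strictly below $\lceil LR^{\epsilon}(\rho)\rceil$ — this only strengthens, and does not affect, the bounds you prove.
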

\begin{proof}
Let $\Lambda$ be an optimal non-entangling map in Eq. \ref{sbec}. Then, by Eq. \ref{elogrob} and the monotonicity of $LR$ under non-entangling maps,
\begin{equation}
LR^{\epsilon}(\rho) \leq LR(\Lambda(\Phi(2^{E_{C,1}^{ne,\epsilon}(\rho)}))) \leq LR(\Phi(2^{E_{C,1}^{ne,\epsilon}(\rho)})) = E_{C,1}^{ne,\epsilon}(\rho).
\end{equation}

To show the other inequality, we consider a formation map of the form
\begin{equation}
\Lambda(*) := \tr(\Phi(K)*) \rho_{\epsilon} + \tr((\id - \Phi(K))*)\pi, 
\end{equation}
where $\rho_{\epsilon}$ is an optimal state for $\rho$ in Eq. \ref{elogrob}, $K := \lfloor 2^{LR^{\epsilon}(\rho)}  \rfloor$ and $\pi$ is a separable state such that 
\begin{equation}
\frac{1}{1 + K}\left( \rho_{\epsilon} + K\pi\right) \in {\cal S}.
\end{equation}
As $|| \Lambda(\Phi(K)) - \rho ||_1 \leq \epsilon$, we find that indeed $E_{C,1}^{ne,\epsilon}(\rho) \leq \lfloor LR^{\epsilon}(\rho) \rfloor$.
\end{proof}

\vspace{0.3 cm}

Using this proposition it is straightforward to show that the entanglement cost under non-entangling maps is given by
\begin{equation}
E_C^{ne}(\rho) := \sup_{\{ \epsilon_n \}} \left \{ \lim_{n \rightarrow \infty} \frac{1}{n} LR^{\epsilon_n}(\rho^{\otimes n}) : \lim_{n \rightarrow \infty} \epsilon_n = 0  \right \}.
\end{equation}
From Eq. \ref{globrobrelent} we then see that the question whether we must allow the generation of some entanglement in order to have a reversible theory reduces to the question whether the two robustnesses become the same quantity after smoothening and regularization.

\section{Connection to the Axiomatic Formulation of the Second Law of Thermodynamics} \label{cafslt}

In this section we comment on the similarities and differences of entanglement manipulation under 
asymptotically non-entangling operations and the axiomatic approach of Giles \cite{Gil64} and more 
particularly of Lieb and Yngvason \cite{LY99} for the second law of thermodynamics. 

Let us start by briefly recalling the axioms used by Lieb 
and Yngvason \cite{LY99} in order to derive the second law. 
Their starting point is the definition of a system as a collection of points called state space and denoted 
by $\Gamma$. The individual points of a state space are 
the states of the system. The composition of two state 
spaces $\Gamma_1$ and $\Gamma_2$ is given by their Cartesian 
product. Furthermore, the scaled copies of a given system is defined as follows: if 
$t > 0$ is some fixed number, the state space $\Gamma^{(t)}$ 
consists of points denoted by $t X$ with $X \in \Gamma$. 
Finally, a preorder $\prec$ on the state 
space satisfying the following axioms is assumed:
\begin{enumerate}
        \item $X \prec X$.
        \item $X \prec Y$ and $Y \prec Z$ implies $X \prec Z$.
        \item If $X \prec Y$, then $t X \prec tY$ for all $t > 0$.
        \item $X \prec (t X, (1 - t)X)$ and $(t X, (1 - t)X) \prec X$.
        \item If, for some pair of states, $X$ and $Y$,
        \begin{equation}
        (X, \epsilon Z_0) \prec (Y, \epsilon Z_1)
        \end{equation}
        holds for a sequence of $\epsilon$'s tending to zero and some states $Z_0$, $Z_1$, then $X \prec Y$.
        \item $X \prec X'$ and $Y \prec Y'$ implies $(X, Y) \prec (X', Y')$.
\end{enumerate}
Lieb and Yngvason then show that these axioms, together with the 
comparison hypothesis, which states that
\begin{center}
\textbf{Comparison Hypothesis:} for any two states $X$ 
and $Y$ either $X \prec Y$ or $Y \prec X$,
\end{center}
are sufficient to prove the existence of a single 
valued entropy function completely determining the 
order induced by the relation $\prec$. 

In the context of entanglement transformations, we 
interpret the relation $\rho \prec \sigma$ as the 
possibility of asymptotically transforming $\rho$ 
into $\sigma$ by asymptotically non-entangling
maps. Then, the composite state $(\rho, \sigma)$ is 
nothing but the tensor product $\rho \otimes \sigma$. 
Moreover, $t \rho$ takes the form of $\rho^{\otimes t}$, 
which is a shortcut to express the fact that if 
$\rho^{\otimes t} \prec \sigma$, then asymptotically 
$t$ copies of $\rho$ can be transformed into one of 
$\sigma$. More concretely, we say that 
\begin{equation}
\rho^{\otimes t} \prec \sigma^{\otimes q},
\end{equation}
for positive real numbers $t, q$ if there is a
sequence of integers $n_t, n_q$ and of $SEPP(\epsilon_n)$ maps
$\Lambda_n$ such that
\begin{equation*}
\lim_{n \rightarrow \infty} || \Lambda_{n}(\rho^{\otimes n_t}) - \sigma^{\otimes n_q - o(n)}||_1 = 0,
\end{equation*}
\begin{equation*}
\lim_{n \rightarrow \infty} \epsilon_n = 0, \hspace{0.3 cm} \lim_{n \rightarrow \infty} \frac{n_t}{n} = t, \hspace{0.3 cm} \text{and}\hspace{0.3 cm} \lim_{n \rightarrow \infty} \frac{n_q}{n} = q.
\end{equation*}
With this definition it is straightforward to observe that properties 
1, 3, and 4 hold true for entanglement manipulation under asymptotically non-entangling maps. Property 2 can be shown to hold, in turn, by noticing that, from Lemma \ref{epsilonincrease}, if $\Lambda \in SEPP(\epsilon)$ and $\Omega \in SEPP(\delta)$, then $\Lambda \circ \Omega \in SEPP(\epsilon + \delta + \delta \epsilon)$. Therefore the composition of two asymptotically non-entangling maps is again asymptotically non-entangling. That property 5 is also true is proven in the following lemma.

\begin{lemma}
If for two states $\rho$ and $\sigma$,
\begin{equation} \label{axi1}  
        \rho \otimes \pi_1^{\otimes \epsilon} \prec 
        \sigma \otimes \pi_2^{\otimes \epsilon},
\end{equation} 
holds for a sequence of $\epsilon$'s tending to zero and two states $\pi_0$, $\pi_1$, 
then $\rho \prec \sigma$.
\end{lemma}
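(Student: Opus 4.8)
The plan is to translate the preorder $\prec$ entirely into statements about the regularized relative entropy of entanglement and then read the conclusion off from Corollary \ref{corref}. By that corollary, together with Theorem \ref{maintheorem5}, for any two states $\mu,\nu$ one has $\mu \prec \nu$ if and only if $E_R^{\infty}(\mu) \geq E_R^{\infty}(\nu)$; the only extra bookkeeping is that a scaled copy $\pi^{\otimes \epsilon}$ should contribute $\epsilon\, E_R^{\infty}(\pi)$, which is exactly the linear extension of $E_R^{\infty}$ to scaled systems guaranteed by its weak additivity (property 5). Restricting $\epsilon$ to rationals, which is harmless since the hypothesis is assumed along a sequence $\epsilon \to 0$, the composites $\rho \otimes \pi_1^{\otimes \epsilon}$ and $\sigma \otimes \pi_2^{\otimes \epsilon}$ become, after passing to a common integer power, genuine multipartite states, so Corollary \ref{corref} applies directly.

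First I would record the two one-sided bounds on $E_R^{\infty}$ that make the argument work \emph{without} invoking full additivity. On one side, discarding a subsystem is a non-entangling operation, so Lemma \ref{almmonrelent} with $\epsilon = 0$ gives monotonicity of $E_R$ under partial trace, and hence $E_R^{\infty}(\sigma) \leq E_R^{\infty}(\sigma \otimes \pi_2^{\otimes \epsilon})$. On the other side, $E_R$ is subadditive, since tensoring the optimal separable states of the two factors yields a feasible separable state for the product; regularizing preserves this, so that $E_R^{\infty}(\rho \otimes \pi_1^{\otimes \epsilon}) \leq E_R^{\infty}(\rho) + \epsilon\, E_R^{\infty}(\pi_1)$, where I also used weak additivity to evaluate $E_R^{\infty}(\pi_1^{\otimes \epsilon}) = \epsilon\, E_R^{\infty}(\pi_1)$.

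The heart of the proof is then the chain of inequalities
\begin{equation*}
E_R^{\infty}(\sigma) \leq E_R^{\infty}(\sigma \otimes \pi_2^{\otimes \epsilon}) \leq E_R^{\infty}(\rho \otimes \pi_1^{\otimes \epsilon}) \leq E_R^{\infty}(\rho) + \epsilon\, E_R^{\infty}(\pi_1),
\end{equation*}
in which the middle inequality is precisely the hypothesis $\rho \otimes \pi_1^{\otimes \epsilon} \prec \sigma \otimes \pi_2^{\otimes \epsilon}$ read through Corollary \ref{corref}. Since $E_R^{\infty}(\pi_1) \leq \log \dim({\cal H}_{\pi_1})$ is finite (use $\id/\dim$ as a separable reference state), letting $\epsilon \to 0$ along the prescribed sequence yields $E_R^{\infty}(\sigma) \leq E_R^{\infty}(\rho)$. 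Applying Corollary \ref{corref} once more then gives $\rho \prec \sigma$, as required.

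The main obstacle, and the reason the proof is organized in this way, is that it is tempting to write $E_R^{\infty}(\rho \otimes \pi_1^{\otimes \epsilon}) = E_R^{\infty}(\rho) + \epsilon\, E_R^{\infty}(\pi_1)$ and likewise on the right, but this would require the \emph{full additivity} of $E_R^{\infty}$, which is not known and is in fact tied to the open strong-monotonicity question mentioned in Chapter \ref{entanglement}. The point is that full additivity is never needed: the subadditivity bound happens to point in the favorable direction on the left-hand side, while the partial-trace monotonicity bound points in the favorable direction on the right-hand side, so only these two available one-sided estimates enter. The remaining technical nuisance is the scaled-copy formalism for irrational $\epsilon$, which I would dispose of simply by taking $\epsilon$ rational along the sequence tending to zero, so that every object appearing in the chain is a bona fide state to which Corollary \ref{corref} and the additivity properties of $E_R^{\infty}$ apply verbatim.
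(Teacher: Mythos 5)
Your overall strategy is essentially the paper's own, recast at the regularized level. The chain
\begin{equation*}
E_R^{\infty}(\sigma) \leq E_R^{\infty}(\sigma \otimes \pi_2^{\otimes \epsilon}) \leq E_R^{\infty}(\rho \otimes \pi_1^{\otimes \epsilon}) \leq E_R^{\infty}(\rho) + \epsilon\, E_R^{\infty}(\pi_1)
\end{equation*}
— subadditivity pointing the right way on one side, partial-trace monotonicity on the other, the bound $E_R^{\infty}(\pi_1) \leq \log \dim({\cal H})$, no appeal to full additivity — is exactly the skeleton of the paper's proof. The difference is packaging: the paper never converts the scaled composites into genuine states and never invokes Corollary \ref{corref} for the middle step. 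Instead it runs the same chain on $E_R$ at \emph{finite} $n$, directly against the maps $\Lambda_n \in SEPP(\epsilon_n)$ that the hypothesis supplies: subadditivity of $E_R$, then Lemma \ref{almmonrelent} (almost-monotonicity under $\epsilon$-non-entangling maps), then asymptotic continuity of $E_R$, then monotonicity under partial trace, and only afterwards divides by $n$ and takes $n \to \infty$, $\epsilon \to 0$, reserving Corollary \ref{corref} for the final conclusion $\rho \prec \sigma$.

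The genuine gap in your version is the justification of the middle inequality. Corollary \ref{corref} is a statement about genuine states and transformations $\Lambda_n(\rho^{\otimes n}) \approx \sigma^{\otimes n - o(n)}$, and your claim that restricting to rational $\epsilon$ is ``harmless since the hypothesis is assumed along a sequence $\epsilon \to 0$'' does not stand: the sequence of $\epsilon$'s is \emph{given}, not chosen, and may contain no rationals, while the hypothesis at one value of $\epsilon$ does not transfer to nearby values for free. Moreover, even at rational $\epsilon = p/q$ your integer-power reduction does not apply verbatim: the hypothesis provides maps acting on $\rho^{\otimes n} \otimes \pi_1^{\otimes n_{\epsilon}}$ with only $n_{\epsilon}/n \to \epsilon$, so for $n = qm$ the genuine state $(\rho^{\otimes q} \otimes \pi_1^{\otimes p})^{\otimes m}$ may supply \emph{fewer} copies of $\pi_1$ than $\Lambda_{qm}$ consumes (Corollary \ref{corref} has $o(n)$ slack on the output side only), and if $\pi_1$ is entangled no non-entangling map can manufacture the missing copies. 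The repair is to use discarding, which is non-entangling and points in the favorable direction: for any rational $r = p/q$ with $\epsilon < r \leq 2\epsilon$, one has $n_{\epsilon} \leq rn$ eventually, so precompose $\Lambda_n$ with the partial trace of the excess $rn - n_{\epsilon}$ copies of $\pi_1$ and postcompose with the partial trace of \emph{all} of $\pi_2$ (rendering your left-most link superfluous); by Lemma \ref{epsilonincrease} the composites remain asymptotically non-entangling, and they realize $(\rho^{\otimes q} \otimes \pi_1^{\otimes p})^{\otimes m} \to (\sigma^{\otimes q})^{\otimes m - o(m)}$, whence Corollary \ref{corref} plus subadditivity and weak additivity give $E_R^{\infty}(\rho) + r\, E_R^{\infty}(\pi_1) \geq E_R^{\infty}(\sigma)$, and $r \downarrow 0$ along the given sequence finishes. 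With this patch your argument is correct; the paper's finite-$n$ formulation simply avoids the scaled-state bookkeeping altogether.
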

\begin{proof}
Eq. (\ref{axi1}) means that for every $\epsilon > 0$ there is a sequence of maps $\Lambda_n \in SEPP(\epsilon_n)$ such that
\begin{equation*}
\lim_{n \rightarrow \infty} || \Lambda_{n}(\rho^{\otimes n} \otimes \pi_1^{\otimes n_{\epsilon}}) - \sigma^{\otimes n - o(n)} \otimes \pi_2^{\otimes n'_{\epsilon} - o(n)}||_1 = \lim_{n \rightarrow \infty} \delta_n = 0,
\end{equation*}
\begin{equation*}
\lim_{n \rightarrow \infty} \epsilon_n = 0, \hspace{0.3 cm} \lim_{n \rightarrow \infty} \frac{n_{\epsilon}}{n} = \epsilon, \hspace{0.3 cm} \text{and} \hspace{0.3 cm} \lim_{n \rightarrow \infty} \frac{n'_{\epsilon}}{n} = \epsilon.
\end{equation*}
We have
\begin{eqnarray}
\frac{1}{n}E_R(\rho^{\otimes n}) + \frac{1}{n}E_R(\pi_1^{\otimes n_{\epsilon}}) &\geq& \frac{1}{n} E_R(\rho^{\otimes n} \otimes \pi_1^{\otimes n_{\epsilon}}) \nonumber \\ &\geq& \frac{1}{n}E_R(\Lambda_{n}(\rho^{\otimes n} \otimes \pi_1^{\otimes n_{\epsilon}})) - \frac{\log(1 + \epsilon_n)}{n} \nonumber \\ &\geq& \frac{1}{n}E_R(\sigma^{\otimes n - o(n)} \otimes \pi_2^{\otimes n'_{\epsilon} - o(n)}) - f(\delta_{\epsilon}) - \frac{\log(1 + \epsilon_n)}{n} \nonumber \\
&\geq& \frac{1}{n}E_R(\sigma^{\otimes n - o(n)}) - f(\delta_{\epsilon}) - \frac{\log(1 + \epsilon_n)}{n}, 
\end{eqnarray}
where $f: \mathbb{R} \rightarrow \mathbb{R}$ is such that $\lim_{x \rightarrow 0}f(x) = 0$. The first inequality follows from the subadditivity of $E_R$, the second from Lemma \ref{almmonrelent}, the third from the asymptotic continuity of $E_R$, and the last from the monotonicity of $E_R$ under the partial trace. 

As $E_R(\pi_2) \leq \log(\dim({\cal H}))$, where ${\cal H}$ is the Hilbert space in which $\pi_2$ acts on, we find 
\begin{equation}
\frac{1}{n}E_R(\rho^{\otimes n}) \geq \frac{1}{n}E_R(\sigma^{\otimes n - o(n)}) - f(\delta_{\epsilon}) - \frac{\log(1 + \epsilon_n)}{n} - \frac{n_{\epsilon}}{n}\log(\dim({\cal H})).
\end{equation}
Taking the limit $n \rightarrow \infty$,
\begin{equation}
E_R^{\infty}(\rho) \geq E_R^{\infty}(\sigma)  - \epsilon.
\end{equation}
Taking $\epsilon \rightarrow 0$ we find that $E_R^{\infty}(\rho) \geq E_R^{\infty}(\sigma)$. The Lemma then follows from Corollary \ref{corref}.
\end{proof}

The Comparison Hypothesis, in turn, follows from Corollary \ref{corref}: it expresses the total order 
induced by the regularized relative entropy of 
entanglement. 

We cannot decide if the theory we are considering for entanglement satisfy 
axiom 6. This is fundamentally linked to the 
possibility of having \textit{entanglement catalysis} \cite{JP99} 
under asymptotically non-entangling transformations. As shown in Theorem 2.1 of Ref. \cite{LY99}, 
given that axioms 1-5 are true, axiom 6 is equivalent to
\begin{enumerate}
        \item $(X, Y) \prec (X', Y)$ implies $X \prec X'$,
\end{enumerate} 
which is precisely the non-existence of catalysis. Interestingly, we can link 
such a possibility in the bipartite case to an important open problem in entanglement theory, the 
full additivity of the regularized relative entropy of entanglement. In turn, 
the later was shown in Ref. \cite{BHPV07} to be equivalent to the 
full monotonicity under \textit{LOCC} of $E_R^{\infty}$. 

\begin{lemma}
The regularized relative entropy of entanglement is fully additive for bipartite states, i.e. for every two states 
$\rho \in {\cal D}(\mathbb{C}^{d_1}\otimes \mathbb{C}^{d_2})$ and $\pi \in {\cal D}(\mathbb{C}^{d'_1}\otimes \mathbb{C}^{d'_2})$,
\begin{equation} \label{relentadd}
E_{R}^{\infty}(\rho \otimes \pi) = E_{R}^{\infty}(\rho) + E_{R}^{\infty}(\pi),
\end{equation}
if, and only if, there is no catalysis for entanglement manipulation under asymptotically 
non-entangling maps.
\end{lemma}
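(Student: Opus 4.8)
The plan is to turn both sides of the claimed equivalence into statements about the regularized relative entropy of entanglement, using the total order supplied by Corollary \ref{corref} ($\rho \prec \sigma \iff E_R^\infty(\rho) \geq E_R^\infty(\sigma)$), and then to invoke the abstract fact from Lieb and Yngvason \cite{LY99} that, once axioms 1--5 hold, axiom 6 is equivalent to the cancellation law. A preliminary observation I would record is that $E_R^\infty$ is \emph{automatically subadditive}: since $E_R(\rho^{\otimes n}\otimes \pi^{\otimes n}) \leq E_R(\rho^{\otimes n}) + E_R(\pi^{\otimes n})$, dividing by $n$ and letting $n\to\infty$ gives $E_R^\infty(\rho\otimes\pi) \leq E_R^\infty(\rho) + E_R^\infty(\pi)$. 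Consequently the full additivity (\ref{relentadd}) is equivalent to \emph{superadditivity}, $E_R^\infty(\rho\otimes\pi) \geq E_R^\infty(\rho)+E_R^\infty(\pi)$, and this is the only inequality I must establish.

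First I would dispatch the easy implication, that additivity forbids catalysis. Assuming $E_R^\infty$ is additive, suppose $\rho\otimes\pi \prec \rho'\otimes\pi$. By Corollary \ref{corref} this reads $E_R^\infty(\rho\otimes\pi) \geq E_R^\infty(\rho'\otimes\pi)$; additivity lets me subtract the common term $E_R^\infty(\pi)$ from both sides, leaving $E_R^\infty(\rho)\geq E_R^\infty(\rho')$, which by Corollary \ref{corref} is exactly $\rho\prec\rho'$. Thus the cancellation law $(X,Y)\prec(X',Y)\Rightarrow X\prec X'$ holds, i.e. there is no catalysis.

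For the converse I would show that no catalysis forces superadditivity. The preceding development secures that $\prec$ obeys axioms 1--4, axiom 5 (the last Lemma), and the Comparison Hypothesis (this is precisely the total order of Corollary \ref{corref}), all independently of additivity; hence Theorem 2.1 of \cite{LY99} applies and the absence of catalysis is equivalent to axiom 6, namely $X\prec X'$ and $Y\prec Y' \Rightarrow (X,Y)\prec(X',Y')$. Now reversibility (Theorem \ref{maintheorem5}) guarantees $\rho \prec \phi_2^{\otimes E_R^\infty(\rho)}$ and $\pi \prec \phi_2^{\otimes E_R^\infty(\pi)}$, both being instances of Corollary \ref{corref} where the two sides carry equal $E_R^\infty$. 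Tensoring these with axiom 6 yields $\rho\otimes\pi \prec \phi_2^{\otimes E_R^\infty(\rho)}\otimes\phi_2^{\otimes E_R^\infty(\pi)} = \phi_2^{\otimes (E_R^\infty(\rho)+E_R^\infty(\pi))}$, and a final application of Corollary \ref{corref} gives $E_R^\infty(\rho\otimes\pi)\geq E_R^\infty(\rho)+E_R^\infty(\pi)$, the superadditivity wanted.

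The hard part will be the bookkeeping in the rate (``scaled copy'') formalism used in the converse: I must make precise that $\phi_2^{\otimes t}$ for real $t$ has $E_R^\infty = t$, that $\phi_2^{\otimes a}\otimes\phi_2^{\otimes b}=\phi_2^{\otimes(a+b)}$ at the level of rates, and that axiom 6 — obtained abstractly from \cite{LY99} — genuinely corresponds to concatenating two asymptotic protocols into a single asymptotically non-entangling one. This last point is underwritten by Lemma \ref{epsilonincrease}, which shows the composition of an $\epsilon$- and a $\delta$-non-entangling map is $(\epsilon+\delta+\epsilon\delta)$-non-entangling, so that running two vanishing-error, vanishing-robustness protocols in parallel stays inside $SEPP(\epsilon_n)$ with $\epsilon_n\to 0$. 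A secondary check is that the appeal to Theorem 2.1 of \cite{LY99} is not circular, since that equivalence presupposes only axioms 1--5 and comparison, none of which rely on additivity; the restriction to bipartite states enters only when one further relates this additivity to full LOCC monotonicity of $E_R^\infty$ as in \cite{BHPV07}.
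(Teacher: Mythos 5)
Your easy direction (additivity forbids catalysis) is verbatim the paper's argument, but for the converse you take a genuinely different route. The paper, after obtaining axiom 6 from the absence of catalysis exactly as you do, picks for each of $\rho$ and $\pi$ a \emph{bipartite pure state} with matching regularized relative entropy (possible because $E_R^{\infty}$ of a bipartite pure state is the entropy of entanglement), uses Corollary \ref{corref} to get two-way convertibility $\rho \leftrightarrow \psi$ and $\pi \leftrightarrow \phi$, applies axiom 6 in both directions to conclude $E_R^{\infty}(\rho\otimes\pi) = E_R^{\infty}(\psi\otimes\phi)$, and finishes by invoking additivity of $E_R^{\infty}$ on a pair of pure states. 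You instead combine the (correct, and free) subadditivity of $E_R^{\infty}$ with a one-way argument: Theorem \ref{maintheorem5} gives $\rho \prec \phi_2^{\otimes E_R^{\infty}(\rho)}$ directly from the definition of $E_D^{ane}$, axiom 6 concatenates the two distillation protocols, and the normalization $E_R^{\infty}(\phi_2^{\otimes k}) = k$ yields superadditivity. Both proofs are sound, and the bookkeeping you flag — real exponents and extending Corollary \ref{corref} to scaled copies — is indeed the only delicate point; it is underwritten by the paper's definition of $\prec$ for scaled copies together with Lemma \ref{epsilonincrease}, and you can in fact bypass it entirely by reading $\rho\otimes\pi \prec \phi_2^{\otimes(t+q)}$ as the statement $E_D^{ane}(\rho\otimes\pi) \geq E_R^{\infty}(\rho) + E_R^{\infty}(\pi)$ and then invoking $E_D^{ane} = E_R^{\infty}$, with no appeal to a rate-extended Corollary \ref{corref} at all. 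A noteworthy dividend of your route is that it nowhere uses bipartiteness: you need neither a pure state with matching $E_R^{\infty}$ nor additivity on pairs of pure states, only subadditivity and $E_R^{\infty}(\phi_2^{\otimes k}) = k$ — and it is precisely the unknown additivity of $E_R^{\infty}$ on pairs of \emph{multipartite} pure states that the paper cites as the obstruction to extending the lemma beyond the bipartite case. Your argument therefore appears to settle the multipartite version as well, which is worth stating explicitly (and double-checking) rather than leaving implicit.
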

\begin{proof}
If Eq. (\ref{relentadd}) holds true and $\rho \otimes \pi \prec \sigma \otimes \pi$, then
\begin{equation}
E_{R}^{\infty}(\rho) + E_{R}^{\infty}(\pi) = E_{R}^{\infty}(\rho \otimes \pi) \geq E_{R}^{\infty}(\sigma \otimes \pi) = E_{R}^{\infty}(\sigma) + E_{R}^{\infty}(\pi),
\end{equation}
and thus, as $E_{R}^{\infty}(\rho) \geq E_{R}^{\infty}(\sigma)$, we find from Corollary \ref{corref} that $\rho \prec \sigma$. 

Conversely, assume that there is no catalysis. Then from the discussion above we find that axiom 6 holds true. For every bipartite pure state $\ket{\psi}$, the regularized relative entropy of entanglement is equal to the von Neumann entropy of the reduced density matrix $S(\psi_A)$. It hence follows that for every bipartite state $\rho$, there is a bipartite pure state $\ket{\psi}$ such that $E_R^{\infty}(\rho) = E_R^{\infty}(\psi)$. 

Let $\ket{\psi}$ and $\ket{\phi}$ be such that $E_R^{\infty}(\rho) = E_R^{\infty}(\psi)$ and $E_R^{\infty}(\pi) = E_R^{\infty}(\phi)$. From Corollary \ref{corref} we have $\rho \prec \psi$, $\pi \prec \phi$ and vice versa. Then, by axiom 6 we find that $\rho \otimes \pi \prec \psi \otimes \phi$ and $\psi \otimes \phi \prec \rho \otimes \pi$, from which we find, once more from Corollary \ref{corref}, that $E_R^{\infty}(\rho \otimes \pi) = E_R^{\infty}(\psi \otimes \phi)$. The lemma is a consequence of the additivity of $E_R^{\infty}$ on two pure states (which follows from the fact that for pure states the measure is equal to the entropy of entanglement). 
\end{proof}

It is an open question if we can extend the lemma to the multipartite setting. The difficulty in this case is that we do not have a simple formula for $E_R^{\infty}$ of pure states and hence do not know if the measure is additive for two multipartite pure states.

\part{Quantum Complexity of Many-Body Physics}\label{part2}

\chapter{Quantum Complexity Theory} \label{complexity}

\section{Introduction}

Quantum computation appears to offer an exponential algorithmic speed-up over classical computation. The most notable example, although by no means the only one (see e.g. \cite{Sim97, Hal02, CCD+03, AJL06, AAEL07}), is Shor's polynomial quantum algorithm for factoring \cite{Sho97}, a problem for which no polynomial classical algorithm is believed to exist. A second striking application of quantum computers is the efficient simulation of the dynamics of quantum many-body physics. This possibility, first raised by Feynman in Ref. \cite{Fey82}, has since been refined in several works, e.g. \cite{Llo96, AT03, BACS07}, and, if realized, would represent a major breakthrough in the study of quantum many-body physics. 

The quantum simulation of many-body systems has been analysed in two lines of investigation. In the first one studies the use of experimentally well controlled quantum systems to simulate the dynamics of particular many-body models of interest: the former system is employed as a quantum simulator to the latter. This approach is much less stringent than to build a full working quantum computer, although its applicability is limited to simulating the specific models that can be created in the physical set-up under consideration. In part III of the thesis we discuss these type of quantum simulators in more detail. In the second line, which is the focus of part II of the thesis, one analyses what could be achieved with a quantum computer. Here one is interested in finding quantum algorithms for calculating not only dynamical, but also static properties of many-body systems, as well as in delineating the limitations of quantum computation to this aim. 

The goal of better understanding what can be efficiently computed by quantum means is a motivation for the study of quantum complexity theory \cite{KSV02, Wat08}. This theory analyses the issues related to the amount of quantum resources, such as the number of quantum bits or the number of basic quantum operations, needed to solve computational problems. As our best understanding of the physical structure of the universe is quantum mechanical, quantum complexity theory is actually more fundamental than its classical counterpart in the identification of what can in principle be efficiently computed or proved.  

In the following chapters we study the use of quantum computers for calculating properties of many-body systems. In chapter \ref{partition} we show that a quantum computer is helpful in approximating partition functions and spectral densities of quantum local Hamiltonians. In chapter \ref{QCMA}, in turn, we prove, under certain complexity theoretical assumptions, that not even a quantum computer can efficiently calculate approximations to the ground state energy of one-dimensional Hamiltonians with an inverse polynomial spectral gap\footnote{The spectral gap is the difference of the energy of the first excited state to the ground state energy of the Hamiltonian.}. 

The objective of the present chapter is two-fold. First we review some definitions and results of quantum complexity theory that will be used in the following two chapters. Second we make a connection with part I of the thesis by presenting an example of a very recent trend: the application of ideas from entanglement theory in quantum complexity theory \cite{ABD+08, LTW08}. We apply Yang's monogamy inequality \cite{Yan06} to solve a problem recently posed by Aaronson \textit{et al} concerning the role of unentangled proofs in quantum proof systems \cite{ABD+08}. On the way to establish the result we also solve an open problem of entanglement theory, raised in Refs. \cite{Tuc02, YHW07, HHHH07}: we show that in the calculation of the squashed entanglement the extension of the system cannot always be taken to be classical, i.e. the squashed entanglement is not equal to the classical squashed entanglement (see section \ref{othermasures}). The analysis we carry out also suggests a curious feature of quantum correlations: locally accessible entanglement appears to be much more monogamic than global entanglement. 

The organization of this chapter is the following. In section \ref{BQP} we present the definition of the class of problems solved in polynomial time by a quantum computer (\BQP), while in subsection \ref{PhaseEst} we review an important quantum algorithmic primitive: phase estimation \cite{KSV02, NC00}. In section \ref{DQC1} we overview the one-clean-qubit model of quantum computation \cite{KL98, SJ08}. Section \ref{QMA} is devoted to the discussion of the quantum analogue of \NP\footnote{More precisely of the probabilistic version of \NP, \MA \hspace{0.05 cm} \cite{AB08}.}, \QMA, and to review Kitaev's result that the determination of the ground state energy of local Hamiltonians is \QMA-complete \cite{KSV02}, together with further developments on it \cite{KKR06, OT05, AGIK07, SCV08, BT08}. Finally, in section \ref{QMA2} we define the classes \QMA($k$) and \BellQMA($k$) \cite{KMY03, ABD+08} and show, using tools from entanglement theory, that, for every fixed $k$, \BellQMA($k$) = \QMA. 
  
\section{Bounded Error Quantum Polynomial} \label{BQP}

In order to define the class of problems efficiently solved by a quantum computer with a small probability of error we first introduce some standard terminology in complexity theory \cite{AB08}. We follow closely the presentation of Ref. \cite{Wat08}. 

Complexity classes are most naturally defined in terms of decision problems, which are computational problems that have a binary answer. In the context of quantum complexity theory it is also helpful to consider promise problems, which are decision problems for which the input is assumed to be drawn from some subset of all possible input strings. More specifically, let $\Sigma = \{ 0, 1\}$ be the binary alphabet and $\Sigma^*$ the set of all finite binary strings. A promise problem is a pair $(\A_{\YES}, \A_{\NO}) \subseteq \Sigma^*$, with $\A_{\YES}, \A_{\NO}$ two disjoint sets. The set $\A_{\YES}$ contains the \YES-instances of the problem, while $\A_{\NO}$ is composed of the \NO-instances. Given a $L \in \Sigma^*$ with the promise that it belongs either to $\A_{\YES}$ or to $\A_{\NO}$, we should decide which is the case. 

Classical complexity classes are defined in terms of a Turing machine, which formalizes the notion of uniform computation\footnote{The reader is referred to Ref. \cite{AB08} for a detailed account on Turing machines.}. An important class is the one containing all the promise problems which can be solved in polynomial time by a probabilistic Turing machine with a small probability of error.

\begin{definition}
\BPP: A promise problem $A = (\A_{\YES}, \A_{\NO})$ is in \BPP \hspace{0.05 cm} (Bounded-error, Probabilistic, Polynomial time) if, and only if, there exists a polynomial-time probabilistic Turing machine $M$ that accepts every string $L \in \A_{\YES}$ with probability at least $a(n)$, and accepts every string $L \in \A_{\NO}$ with probability at most $b(n)$, with $a(n) - b(n) \geq 1/\poly(n)$. If $a = 1$ and $b = 0$, then $L \in \P$.
\end{definition}
Note that the error bounds are not fundamental. Indeed, by repeating the computation a sufficient, but polynomial, number of times one can obtain overwhelming statistical evidence of the correct answer and hence reduce the error probabilities exponentially \cite{AB08}.  

A quantum circuit acting on $n$ qubits is a unitary operation formed by the concatenation of quantum gates, unitaries which act non-trivially 
only on a single or two qubits\footnote{A detailed presentation of quantum circuits and quantum computation can be found in Ref. \cite{NC00}.}. The number of quantum gates of a circuit defines its size. By the Solovay-Kitaev theorem \cite{KSV02, NC00}, any quantum circuit can be approximated by another circuit composed of gates from a finite set of gates and with only a poly-logarithmic blow up in the size of the original circuit. Such sets of single and two-qubit unitaries capable of approximately generating any other unitary are called universal sets \cite{NC00}. 

We will model quantum algorithms by families of quantum circuits formed by gates from some universal set. For each input string $L_n$ of $n$ bits, we should be able to find, in polynomial time in a Turing machine, a classical description of a polynomial quantum circuit $Q_{L_n}$\footnote{Such a requirement captures the uniformity requirement in the quantum setting. General families of quantum circuits define the class of quantum computation with a classical advice, \BQP/\poly \hspace{0.05 cm} \cite{Wat08}.}$^{,}$\footnote{In terms of a sequence of single and two qubit gates drawn from some discrete universal set.}, such that a measurement in the computational basis of the first qubit of the state $Q_{L_n}\ket{0}^{\otimes \poly(n)}$ reveals with high probability the answer to the problem. 

\begin{definition} 
\BQP: Let $A = (\A_{\YES}, \A_{\NO})$ be a promise problem and let $a, b : \mathbb{N} \rightarrow [0, 1]$ be functions. Then $A \in \BQP(a, b)$ if, and only if, for every instance $L_n \in A$ of $n$ bits, there exists a polynomial-time generated quantum circuit $Q_{L_n}$ acting on $q(n) = \poly(n)$ qubits such that
\begin{enumerate}
	\item If $L_n \in \A_{\YES}$, then $\tr(Q_{L_n} (\ket{0}\bra{0})^{\otimes q(n)} Q_{L_n}^{\cal y} (\ket{1}\bra{1} \otimes \id^{\otimes q(n) - 1})) \geq a(n)$.
	\item If $L_n \in \A_{\NO}$, then $\tr(Q_{L_n} (\ket{0}\bra{0})^{\otimes q(n)} Q_{L_n}^{\cal y} (\ket{1}\bra{1} \otimes  \id^{\otimes q(n) - 1})) \leq b(n)$.
\end{enumerate}
We define $\BQP = \BQP(2/3, 1/3)$.
\end{definition}
As in the case of \BPP, it is possible reduce the error bounds by repeating the computation several times. As long as $a(n) - b(n) \geq 1/\poly(n)$, we have $\BQP(a, b) = \BQP(1 - 2^{-\poly(n)}, 2^{-\poly(n)})$ \cite{KSV02, Wat08}.

It is widely believed that $\BPP \subset \BQP$, based on several quantum algorithms for problems for which no efficient classical algorithm is known. As it holds that $\BQP \subseteq \PSPACE$ \hspace{0.05 cm} \cite{BV97}\footnote{\PSPACE \hspace{0.05 cm} is the class of problems that can be solved in polynomial space, but possibly in exponential time \cite{AB08}.}, any proof that quantum computation is superior to classical would imply $\P \neq \PSPACE$ and thus have major implications in complexity theory. This should be seen as an indication of the hardness of establishing $\BPP \subset \BQP$ unconditionally. 

The notion of hardness and completeness used in classical complexity classes \cite{AB08} can also be applied to \BQP. We say that a problem $L$ is \BQP-hard if any problem in \BQP \hspace{0.05 cm} can be solved by a probabilistic classical Turing machine in polynomial time given access to an oracle to $L$ at unit cost\footnote{Here we are using the the notion of Cook reductions. In Karp reductions, in turn, a problem $L$ is hard for \BQP \hspace{0.05 cm} if for every problem $M$ in $\BQP$, there is a polynomial time reduction mapping each instance of $M$ to an instance of $L$ \cite{Wat08}.}. A problem is \BQP-complete if it is \BQP-hard and is itself contained in \BQP. There are many known \BQP-complete problems, such as the simulation of the dynamics of local Hamiltonians, including one dimensional translational invariant ones \cite{VC08}, and certain additive approximations\footnote{See chapter \ref{partition} for the definition of additive approximations.} to quadratically signed weight enumerators \cite{KL99}, to the Jones polynomial of the plat closure of braids at any primitive root of unity \cite{FKW02, FLW02, AJL06}, to the Tutte polynomial of planar graphs \cite{AAEL07}, to some mixing properties of sparse graphs \cite{JW06}, and to the contraction of tensor networks \cite{AL08}\footnote{Interestingly, the exact evaluation of all these quantities are known to be $\sharp \P$-hard.}.   

\subsection{Phase Estimation Quantum Algorithm} \label{PhaseEst}

In this subsection we review the phase estimation algorithm \cite{Kit95, CEMM98, KSV02}, a quantum algorithmic primitive that has found many applications, including Shor's factoring algorithm \cite{Sho97}, quantum algorithms for evaluating NAND formulas \cite{Amb07}, and $\BQP$-complete algorithms for certain problems concerning local Hamiltonians \cite{AL99, WZ06} and sparse graphs \cite{JW06}.  

Let $U$ be a unitary acting on $n$ qubits. Suppose we are given access to black boxes for controlled-$U^{2^j}$, with $j \in \{ 1, ..., m \}$, and an eigenvector of $U$, $\ket{\psi}$, satisfying $U \ket{\psi} = e^{2 \pi i \phi }\ket{\psi}$. Our goal is to obtain a $m$-bit precision estimate of the associated eigenvalue $\phi$. The phase estimation quantum algorithm can be used to find such a $m$-bit approximation to $\phi$, with probability larger than $1 - \epsilon$, using, in addition to the state $\ket{\psi}$ and one call to each of the $m$ oracles realizing the controlled-$U^{2^j}$, $m + O(\log(1/\epsilon))$ qubit ancillas initialized in the state $\ket{0}$ and $\poly(n, m, \log(1/\epsilon))$ quantum gates \cite{CEMM98}. The quantum circuit implementing the algorithm is a composition of the quantum Fourier transform with calls to the oracles implementing the controlled-$U^{2^j}$s and can be found in Ref. \cite{CEMM98}, together with a detailed discussion on its execution. 

A possible application of the phase estimation algorithm, which we employ in chapter \ref{partition}, is to estimate an eigenvalue of a $O(\log(n))$-local Hamiltonian\footnote{A Hamiltonian which can be written as a sum of terms in which at most $O(\log(n))$ have a non-trivial interaction.}. Assume we are given an eigenvector $\ket{\psi}$ of a $O(\log(n))-$local Hamiltonian $H$, with $|| H || \leq 1$ ( where $|| . ||$ stands for the operator norm \cite{NC00}) and would like to find an approximation to the associated eigenvalue $\lambda$ up to polynomial accuracy. To do so, we simulate the Hamiltonian for time $t$, realizing an approximation to $U_t = e^{i t H}$. As long as $t = \poly(n)$, we can implement $U_t$ in polynomial time in a quantum computer up to polynomial accuracy\footnote{Measured in the operator norm.}. Using the phase estimation algorithm it is thus possible to compute, with high probability, $\lambda \pm 1/\poly(n)$. See e.g. Ref. \cite{WZ06} for details. 

\section{One-Clean-Qubit Model} \label{DQC1}

The class \BQP \hspace{0.05 cm} is interesting not only because it represents what could be efficiently calculated if a quantum computer is built, but also because it constitutes a new class conjectured to contain hard problems for classical computation, which appears to be incomparable to the standard complexity classes. As a result, we can give strong evidence that some problems are classically intractable by showing that they are hard for quantum computation, something that could be more difficult employing classical complexity theory. In this section we look at another class of problems defined in terms of quantum computation that is also believed to contain classically intractable problems. 

The one-clean-qubit model of quantum computation, introduced by Knill and Laflamme \cite{KL98}, considers a variant of the standard quantum computation model in which all but one qubit are initialized in the maximally mixed state, and measurements can only be performed in the end of the computation. The original motivation for studying this class was in the context of NMR quantum computing, where it is hard to initialize the qubits in a pure state \cite{BCC+07}. 

\begin{definition} 
\DQC: Let $A = (\A_{\YES}, \A_{\NO})$ be a promise problem and let $a, b : \mathbb{N} \rightarrow [0, 1]$ be functions. Then $A \in \DQC$ if, and only if, for every $n$ bit string $L_n \in A$, there exists a polynomial-time generated quantum circuit $Q_{L_n}$ acting on $q(n) = \poly(n)$ qubits such that
\begin{enumerate}
	\item If $L_n \in \A_{\YES}$, then $\tr(Q_{L_n} \left(\ket{0}\bra{0}\otimes \frac{\id^{\otimes q(n) - 1}}{2^{q(n) - 1}}\right) Q_{L_n}^{\cal y} (\ket{1}\bra{1} \otimes \id^{\otimes q(n) - 1})) \geq a(n)$.
	\item If $L_n \in \A_{\NO}$, then $\tr(Q_{L_n} \left(\ket{0}\bra{0} \otimes \frac{\id^{\otimes q(n) - 1}}{2^{q(n) - 1}}\right) Q_{L_n}^{\cal y} (\ket{1}\bra{1} \otimes  \id^{\otimes q(n) - 1})) \leq b(n)$.
\end{enumerate}
\end{definition}

Similarly to \BQP \hspace{0.05 cm} we can reduce the errors exponentially by repeating the computation a polynomial number of times. However, here such amplification of the success probability must be done outside the model, i.e. unlike for \BQP, it is in general not possible to decrease the error probabilities by adding more gates to the circuit or more maximally mixed ancilla qubits. As shown in Ref. \cite{SJ08}, the model remains the same if we allow up to $\log(n)$ input clean qubits or if we initialize the clean qubit in a partially depolarized state \cite{KL99}. 

Since in the definition of the class we allowed for free classical computation, it is clear that $\BPP \subseteq \DQC$. It is also easy to see that $\DQC \subseteq \BQP$. In fact both inclusions are believed to be strict, i.e. \DQC \hspace{0.05 cm} is conjectured to contain classically hard problems and to be strictly weaker than all of quantum computation. 

A reason for conjecturing that $\BPP \subset \DQC$ is the existence of problems in $\DQC$ for which no efficient classical solution is known. These, which turn out to be complete for the class, include certain additive approximations to the trace of quantum circuits \cite{KL98, She06}, to quadratically signed weight enumerators \cite{KL99}, and to the Jones polynomial of the trace closure of braids \cite{AJL06, SJ08}. In chapter \ref{partition} we show that certain additive approximations to partition functions and approximations to the spectral density of local Hamiltonians are \DQC-hard.  

\section{Quantum Merlin-Arthur} \label{QMA}

The class $\NP$ (non-Deterministic Polynomial Time)\footnote{This name originates from an alternative definition of $\NP$ as the class of problems solved in polynomial time by a non-deterministic Turing machine. \cite{AB08}}, formed by the problems for which a $\YES$ answer can be checked in polynomial time, is one of the most fundamental in complexity theory. Given the intuitive fact that to check the correctness of a proof to a statement is in general much easier than to actually find a proof, it is widely believe that $\P \neq \NP$, which is arguably the most important open question in complexity theory \cite{AB08}. Moreover, the theory of $\NP$-completeness has key importance in the whole of computer science first as a tool to attest the hardness of certain problems and second as building block to breakthrough results such as the PCP Theorem \cite{AB08}. 

Given the distinguished role of $\NP$ in classical complexity theory it is interesting to investigate the analogous quantum class. In order to so, first we should introduce a probabilistic version of $\NP$, for which the quantum generalization is most meaningful. 
\begin{definition}
$\MA$: A promise problem $A = (\A_{\YES}, \A_{\NO})$ is in $\MA$ (Merlin-Arthur) if, and only if, for every $L_n \in A$ of $n$ bits, there exists a polynomial-time probabilistic Turing machine $M$ and two functions $a, b$, with $a(n) - b(n) \geq 1/\poly(n)$ such that 
\begin{itemize}
	\item (completeness) If $L_n \in \A_{\YES}$, then there is a $\poly(n)$-size witness which makes $M$ to accept with probability larger than $a(n)$.
	\item (soundness) If $L_n \in \A_{\NO}$, then no witness makes $M$ to accept with probability larger than $b(n)$.
\end{itemize}
\end{definition}
As before, the soundness and completeness errors can be decreased to $2^{- \poly(n)}$ \cite{AB08}. The name Merlin-Arthur follows from the interpretation of the class as a game played by two parties. Merlin, who has infinite computational power tries to convince Arthur, who is limited to perform polynomial time classical computation, that $L \in \A_{\YES}$ by sending a proof of this fact. Case the instance is indeed in $\A_{\YES}$, Merlin should be able to convince Arthur with probability larger than $a(n)$, while if the problem is actually in $\A_{\NO}$, then Arthur should be convinced with probability at most $b(n)$, no matter which witness Merlin sends to him.   

The class $\QMA$ is a generalization of $\MA$, where now Merlin sends a quantum state of polynomially many qubits as a witness to Arthur, who has access to a quantum computer to verify its correctness \cite{Wat00, KSV02}.

\begin{definition} \label{QMAdef}
\QMA: Let $A = (\A_{\YES}, \A_{\NO})$ be a promise problem and let $a, b : \mathbb{N} \rightarrow [0, 1]$ be functions. Then $A \in \QMA(a, b)$ if, and only if, for every $n$ bit string $L_n \in \A$, there exists a polynomial-time generated quantum circuit $Q_{L_n}$ acting on $N = \poly(n)$ qubits and a function $m = \poly(n)$ such that
\begin{enumerate}
	\item (completeness) If $L_n \in A_{\YES}$, there exists a state $\ket{\psi} \in (\mathbb{C}^2)^{\otimes m}$ satisfying 
\begin{equation*}	
\tr(Q_{L_n} \left((\ket{0}\bra{0})^{\otimes N - m}\otimes \ket{\psi}\bra{\psi} \right) Q_{L_n}^{\cal y} (\ket{1}\bra{1} \otimes \id^{\otimes N - 1})) \geq a(n).
\end{equation*}
	\item (soundness) If $L_n \in \A_{\NO}$, for every state $\ket{\psi} \in (\mathbb{C}^2)^{\otimes m}$, 
\begin{equation*}	
\tr(Q_{L_n} \left((\ket{0}\bra{0})^{\otimes N - m}\otimes \ket{\psi}\bra{\psi} \right) Q_{L_n}^{\cal y} (\ket{1}\bra{1} \otimes \id^{\otimes N - 1})) \leq b(n).
\end{equation*}	
\end{enumerate}
We define $\QMA = \QMA(2/3,1/3)$. 
\end{definition}
It is also possible to amplify the soundness and completeness of the protocol exponentially, i.e. $\QMA(a, b) = \QMA(1 - 2^{-\poly(n)}, 2^{-\poly(n)})$ for any $a(n) - b(n) \geq 1/\poly(n)$ \cite{KSV02}. An interesting result by Marriott and Watrous shows that it is possible to realize such an amplification procedure without enlarging the witness size \cite{MW05}. 

There is an interesting intermediate class between $\MA$ and $\QMA$, first considered in Ref. \cite{AN02}, in which the witness Merlin sends to Arthur is classical, but Arthur has a quantum computer to check it.
\begin{definition} \label{QCMAdef}
\QCMA: Let $A = (\A_{\YES}, \A_{\NO})$ be a promise problem and let $a, b : \mathbb{N} \rightarrow [0, 1]$ be functions. Then $A \in \QMA(a, b)$ if, and only if, for every $n$ bit string $L_n \in \A$, there exists a polynomial-time generated quantum circuit $Q_{L_n}$ acting on $N = \poly(n)$ qubits and a function $m = \poly(n)$ such that
\begin{enumerate}
	\item (completeness) If $L_n \in A_{\YES}$, there exists a computational basis state $\ket{\psi} = \ket{a_1, ..., a_m}$, with $a_i \in \{ 0, 1 \}$, satisfying 
\begin{equation*}	
\tr(Q_{L_n} \left((\ket{0}\bra{0})^{\otimes N - m}\otimes \ket{\psi}\bra{\psi} \right) Q_{L_n}^{\cal y} (\ket{1}\bra{1} \otimes \id^{\otimes N - 1})) \geq a(n).
\end{equation*}
	\item (soundness) If $L_n \in \A_{\NO}$, for any computational basis state $\ket{\psi} = \ket{a_1, ..., a_m}$, with $a_i \in \{ 0, 1 \}$,
\begin{equation*}	
\tr(Q_{L_n} \left((\ket{0}\bra{0})^{\otimes N - m}\otimes \ket{\psi}\bra{\psi} \right) Q_{L_n}^{\cal y} (\ket{1}\bra{1} \otimes \id^{\otimes N - 1})) \leq b(n).
\end{equation*}	
\end{enumerate}
We define $\QCMA = \QCMA(2/3,1/3)$. 
\end{definition}

The relation of $\QCMA$ and $\QMA$ was explored in Ref. \cite{AK06}, where a \textit{quantum} oracle separation was presented. As $\QMA$, it is also believed that $\QCMA$ is strictly larger than $\MA$. We will revisit the $\QCMA$ in chapter \ref{QCMA}, in which we consider the computational complexity of calculating ground-states of pol-gapped quantum local Hamiltonians.

\subsection{The Local Hamiltonian Problem} \label{lochamprob}

The theory of $\NP$-completeness was started with the celebrated Cook-Levin Theorem, which shows that $\SAT$ is $\NP$-complete\footnote{In $\SAT$ we are given a set of clauses on $n$ variables and and should determine if there is an assignment that satisfy all the clauses \cite{AB08}.}. In this section we review a seminal result by Kitaev which can be considered the quantum counterpart of Cook-Levin Theorem: the $\QMA$-completeness of the \locHam \hspace{0.05 cm} problem \cite{KSV02}. After Kitaev's result, other $\QMA$-complete problems have been identified, including checking if a given unitary is close to the identity \cite{JWB03}, deciding if a set of density operators approximates the reductions of a given global density operator \cite{Liu06}, and its fermionic version as the $N$-representability problem \cite{LCV07}.

\begin{definition}
\locHam: We are given a $k$-local Hamiltonian\footnote{\normalfont A $k$-local Hamiltonian acting on ${\cal H}^{\otimes n}$ is defined as a sum of terms where each is an Hermitian operator acting on at most $k$ qubits.} on $n$ qubits $H = \sum_{j=1}^r H_j$ with $r = \poly(n)$. Each $H_j$ has a bounded operator norm $|| H_j || \leq \poly(n)$ and its entries are specified by $\poly(n)$ bits. We are also given two constants $a$ and $b$ with $b - a \geq 1/\poly(n)$. In $\YES$ instances, the smallest eigenvalue of $H$ is at most $a$. In $\NO$ instances, it is larger than $b$. We should decide which one is the case.
\end{definition}

Kitaev's original theorem is that 5-\locHam \hspace{0.05 cm} is $\QMA$-complete \cite{KSV02}. After it, it was shown in Ref. \cite{KR03} that 3-\locHam \hspace{0.05 cm} is $\QMA$-complete and then in Ref. \cite{KKR06} that already 2-\locHam \hspace{0.05 cm} is $\QMA$-complete. The latter result is in a sense optimal since 1-\locHam \hspace{0.05 cm} clearly belongs to $\P$. These results, however, do not say anything about the dimensionality of the Hamiltonian. Such a line of investigation was pursued in Ref. \cite{OT05}, where it was shown that 2-\locHam \hspace{0.05 cm} with nearest neighbors interactions of particles arranged in a two dimensional square lattice is $\QMA$-complete. A surprising result of Aharonov, Gottesman, and Kempe \cite{AGIK07} is that 2-\locHam \hspace{0.05 cm} for particles arranged in a line is already $\QMA$-complete\footnote{For their construction one needs to employ 12 state particles, instead of qubits.}, in sharp contrast to the case of classical Hamiltonians, for which a solution can be classically computed in polynomial time. More studies on the complexity of the \locHam \hspace{0.05 cm} problem, with different assumptions on the type of Hamiltonians considered, were reported in Refs. \cite{BDOT06, SV07, Kay07, BT08, Kay08, VC08, SCV08}. We discuss some of them in chapter \ref{QCMA}.

We now outline the main steps to prove that 5-\locHam \hspace{0.05 cm} is $\QMA$-complete, as we employ such a construction in chapters \ref{partition} and \ref{QCMA}. We actually use elements of the proof by Kempe, Kitaev, and Regev \cite{KKR06} that 2-\locHam \hspace{0.05 cm} is $\QMA$-complete, but stick to 5-local Hamiltonians for simplicity. That 5-\locHam \hspace{0.05 cm} belongs to $\QMA$ is simple. A witness is the ground state of the Hamiltonian, whose energy can be calculated to polynomial accuracy by the phase estimation algorithm discussed in section \ref{PhaseEst} \cite{KSV02}. The hardness part, as usual, is the more involved direction. 

Let $A \in \QMA(1 - \delta, \delta)$. We would like to find an encoding of every $L \in A$ into a local Hamiltonian such that its minimum eigenvalue is smaller than $a$ case $L \in \A_{\YES}$ and larger than $b$ if $L \in \A_{\NO}$, with $b - a \geq 1/\poly(n)$. Given a $r$-bit string $L_r \in A$, let $Q = U_T... U_2 U_1$ be the quantum verifier circuit composed of $T = \poly(r)$ single and two qubit unitaries $U_j$ and operating on $N = \poly(r)$ qubits. We assume that the last $N - m$ qubits are initialized in the zero state, while the first $m$ qubits contains the proof, and the output of the circuit is written in the first qubit. 

The encoding Hamiltonian is defined on the space of $n := N + T$ qubits and is divided into two register. The first $N$ qubits encodes the computation, whereas the last $T$ represents the possible values of the clock, responsible for keeping track of the correct temporal order of the circuit. The Hamiltonian reads
\begin{equation} \label{HamiltonianQMA}
H_{\text{out}} + J_{\text{in}} H_{\text{in}} + J_{\text{prop}} H_{\text{prop}} + J_{\text{clock}} H_{\text{clock}},     
\end{equation}   
where
\begin{equation} \label{Hkit1}
H_{\text{out}} = (T + 1)\ket{0}\bra{0}_1 \otimes \ket{1}\bra{1}_T^c, \hspace{1 cm} H_{\text{in}} = \sum_{i=m+1}^N \ket{1}\bra{1}_i \otimes \ket{000}\bra{000}_{1,2,3}^c,
\end{equation}
\begin{equation} \label{Hkit2}
H_{\text{clock}} = \sum_{i=1}^{T-1}\id \otimes \ket{01}\bra{01}_{i, i + 1}^c, \hspace{0.5 cm} H_{\text{prop}} = \sum_{i=1}^T H_{\text{prop}, i},
\end{equation}
with
\begin{eqnarray} \label{Hkit3}
H_{\text{prop}, i} &=&  \id \otimes \ket{100}\bra{100}_{i-1, i, i+1}^c - U_i \otimes \ket{110}\bra{100}_{i-1, i, i+1}^c  \nonumber \\ &-&  U_i^{\cal y} \otimes \ket{100}\bra{110}_{i-1, i, i+1}^c + \id \otimes \ket{110}\bra{110}_{i-1, i, i+1}^c. 
\end{eqnarray}
For $H_{\text{prop}, 1}$ and $H_{\text{prop}, T}$ we modify the above Hamiltonian accordingly and disconsider the clock terms that would correspond to $i = 0, T + 1$ (see e.g. \cite{OT05}). The coefficients $J_{\text{in}}, J_{\text{prop}}$ and $J_{\text{clock}}$ will be chosen to be $\poly(n)$ factors whose exact expression will be determined later on. As noted above, the Hamiltonian is defined in a space consisting of two registers. The first, containing $N$ qubits, stores the computation data, while the second, of $T$ qubits and labeled by a $c$ superscript, represents the possible values of the clock\footnote{We are counting time in the unary representation, e.g. $11110$ denotes 4.}. As each $U_j$ is either a single or a two-qubit gate, the final Hamiltonian is 5-local. 

Each term in Eq. (\ref{HamiltonianQMA}) has a specific purpose. First, $H_{\text{clock}}$ forces the clock state to be of the form $\ket{1}^{\otimes l} \otimes \ket{0}^{T - l}$, for some $0 \leq l \leq T$, by adding an energy penalty to any basis state of the clock that contains the sequence $01$. The role of $H_{\text{in}}$ is to make sure that when the clock is zero, then the ancilla qubits are properly initialized in the $\ket{0}$ state. The term $H_{\text{out}}$ checks that the output qubit indicates acceptance by the verifying circuit $Q$. Finally, $H_{\text{prop}}$ checks that the propagation follows $Q$. 

Let us first consider the case in which $Q$ accepts with probability larger than $1 - \delta$ for some witness $\ket{\xi}$. Define
\begin{equation*}
\ket{\eta_{\xi}} = \frac{1}{\sqrt{T + 1}} \sum_{t=0}^T U_t ... U_1\ket{0, \xi}\otimes \ket{1}^{\otimes t} \otimes \ket{0}^{\otimes T - t}.
\end{equation*}
A simple calculation shows that $\bra{\eta_{\xi}}H_{\text{prop}}\ket{\eta_{\xi}} = \bra{\eta_{\xi}}H_{\text{clock}}\ket{\eta_{\xi}} = \bra{\eta_{\xi}}H_{\text{in}}\ket{\eta_{\xi}} = 0$ and $\bra{\eta_{\xi}}H_{\text{out}}\ket{\eta_{\xi}} \leq \delta$. Hence the ground state energy of Hamiltonian (\ref{HamiltonianQMA}) is smaller than $\delta$.

For the other direction, let us assume that $Q$ accepts with probability at most $\delta$ on every possible witness. To prove that the minimum eigenvalue will larger than $\delta$ by a non-negligible amount, we follow the perturbation theory approach of Ref. \cite{KKR06}, also pursued in Ref. \cite{OT05}. 

Consider a Hamiltonian $\tilde{H} = H + V$, where $H$ is the unperturbed Hamiltonian, which has a large spectral gap $\Delta$, and $V$ is a small perturbation. We assume that the ground state energy of $H$ is zero\footnote{The following discussion can easily be adapted to the situation where the ground state energy is non-zero, as in Refs. \cite{KKR06, OT05}. Here, for simplicity, we consider only the zero ground state energy case.}. Let $\Pi_{-}$, $\Pi_+$ be the projectors onto the groundspace of $H$ and its orthogonal complement, respectively, and define $X_{\pm \mp} = \Pi_{\pm} X \Pi_{\mp}$, for an operator $X$. Consider also the \textsl{self-energy} operator, $\Sigma_-(z)$ \cite{KKR06}, whose series expansion is given by\footnote{See section 6.1 of Ref. \cite{KKR06} for the definition of $\Sigma_-(z)$.} 
\begin{equation} \label{Sigma-}
\Sigma_-(z) = V_{--} +  \sum_{k=0}^{\infty} V_{-+}(G_{++}V_{++})^k G_{++} (V_{++}G_{++})^k V_{+-},
\end{equation}
where $G_{++}(z) = (z \id_{++} - H_{++})^{-1}$ is the resolvent of $H_{++}$ \cite{KKR06}.

\begin{theorem} \label{perturbation}
\cite{KKR06, OT05} Consider a Hamiltonian $\tilde{H} = H + V$, where $H$ has zero ground-state energy and spectral gap of $\Delta$ above its ground-space, and $V$ is such that $||V|| \leq \Delta/2$. Let $\tilde{H}|_{< \Delta/2}$ be the restriction of $\tilde{H}$ to the eigenspace of eigenvalues less than $\Delta/2$. If there is an $\epsilon > 0$ and a Hamiltonian $H_{\eff}$, with spectrum in $[a, b]$ for $a < b < \Delta/2 - \epsilon$, such that
\begin{equation*}
|| \Sigma_-(z) - H_{\eff} || \leq \epsilon
\end{equation*}
for every $z \in [a - \epsilon, b + \epsilon]$, then 
\begin{equation*} 
| \lambda_{j}(\tilde{H}|_{< \Delta/2}) - \lambda_j(H_{\eff})  | \leq \epsilon,
\end{equation*}
for every $j \in \{0, ...,  2^{\dim(\tilde{H}|_{< \Delta/2}))} - 1\}$, where $\lambda_j(X)$ is the $j$-th smallest eigenvalue of $X$. 
\end{theorem}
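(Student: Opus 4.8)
The plan is to follow the resolvent-based \emph{self-energy} method of Kempe, Kitaev and Regev \cite{KKR06}, which reduces the desired spectral comparison to an analysis of the self-consistency equation associated with $\Sigma_-(z)$. First I would set up the Schur-complement framework. Writing $G(z) = (z\id - \tilde H)^{-1}$ and decomposing it into blocks with respect to $\Pi_-$ and $\Pi_+$, and using that $H$ has zero ground-state energy so that $H_{--} = 0$, the self-energy of Eq. (\ref{Sigma-}) is exactly the Schur complement
\begin{equation*}
\Sigma_-(z) = V_{--} + V_{-+}(z \Pi_+ - H_{++} - V_{++})^{-1} V_{+-},
\end{equation*}
and the series in (\ref{Sigma-}) is its Neumann expansion around $G_{++}(z) = (z\Pi_+ - H_{++})^{-1}$. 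The hypothesis $|| V || \leq \Delta/2$, together with $H_{++} \geq \Delta \, \Pi_+$, forces $z\Pi_+ - H_{++} - V_{++} \prec 0$ and $|| V_{++} G_{++}(z) || < 1$ for every $z < \Delta/2$, so the expansion converges uniformly on the compact interval $[a - \epsilon, b + \epsilon]$, which by the assumption $b < \Delta/2 - \epsilon$ lies strictly below $\Delta/2$. The identity I want to extract from this decomposition is
\begin{equation*}
\Pi_- G(z) \Pi_- = \left(z \Pi_- - \Sigma_-(z)\right)^{-1},
\end{equation*}
valid on this range. Since the eigenvalues of $\tilde H$ below $\Delta/2$ are precisely the poles of $\Pi_- G(z) \Pi_-$ there, they are exactly the values $z$ at which $z$ is an eigenvalue of $\Sigma_-(z)$.

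Second, I would establish the monotonicity of $\Sigma_-(z)$ that upgrades this pole condition into a clean, ordering-preserving eigenvalue count. Differentiating the Schur complement gives $\frac{d}{dz}\Sigma_-(z) = -V_{-+} M(z)^2 V_{+-} \preceq 0$, where $M(z) = (z\Pi_+ - H_{++} - V_{++})^{-1}$; using $V_{-+} = V_{+-}^{\cal y}$ (as $V$ is Hermitian) this is manifestly negative semidefinite, so $\Sigma_-(z)$ is operator-monotone decreasing in $z$ on $[a-\epsilon, b+\epsilon]$. Because $z \mapsto z$ is increasing while $z \mapsto \lambda_j(\Sigma_-(z))$ is decreasing, the self-consistency equation $z = \lambda_j(\Sigma_-(z))$ has a unique solution, and the $j$-th such solution is exactly $\lambda_j(\tilde H|_{<\Delta/2})$. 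This is the step where I would invoke the detailed argument of \cite{KKR06, OT05} that the correspondence respects multiplicities and the ordering by $j$.

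Third, I would feed in the hypothesis $|| \Sigma_-(z) - H_{\eff} || \leq \epsilon$ via Weyl's perturbation inequality, which yields $|\lambda_j(\Sigma_-(z)) - \lambda_j(H_{\eff})| \leq \epsilon$ for every $j$ and every $z \in [a-\epsilon, b+\epsilon]$. Since the spectrum of $H_{\eff}$ lies in $[a,b]$, the value $\lambda_j(H_{\eff})$ and its entire $\epsilon$-neighborhood remain inside $[a-\epsilon, b+\epsilon]$ --- precisely the range on which the uniform bound is assumed to hold. Consequently $\lambda_j(\Sigma_-(z))$ stays within $\epsilon$ of the $z$-independent value $\lambda_j(H_{\eff})$ throughout the interval, and therefore the fixed point of $z = \lambda_j(\Sigma_-(z))$ cannot lie more than $\epsilon$ away from $\lambda_j(H_{\eff})$. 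This gives $|\lambda_j(\tilde H|_{<\Delta/2}) - \lambda_j(H_{\eff})| \leq \epsilon$, as required.

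The main obstacle I anticipate is the rigorous justification of the eigenvalue correspondence in the second step: turning the statement that the poles of $\Pi_- G(z)\Pi_-$ occur where $z \in \mathrm{spec}(\Sigma_-(z))$ into a genuine, multiplicity- and order-preserving bijection between $\mathrm{spec}(\tilde H|_{<\Delta/2})$ and the self-consistent solutions. This is where the operator monotonicity of $\Sigma_-(z)$, the uniform convergence of the Neumann series strictly below $\Delta/2$, and a continuity/degree argument must be combined with care. Since the present statement is a direct adaptation of the perturbation theorem already proved in \cite{KKR06} and \cite{OT05}, my plan is to reduce to their analysis at this point rather than reprove it from scratch.
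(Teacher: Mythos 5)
Your proposal is correct and takes essentially the same route as the source this statement rests on: the thesis states Theorem \ref{perturbation} without proof, citing \cite{KKR06, OT05}, and your reconstruction — the Feshbach/Schur-complement identity $\Pi_- G(z)\Pi_- = \left(z\Pi_- - \Sigma_-(z)\right)^{-1}$, convergence of the Neumann expansion and operator monotonicity of $\Sigma_-(z)$ for $z < \Delta/2$, and Weyl's inequality to localize the fixed points of $z = \lambda_j(\Sigma_-(z))$ within $\epsilon$ of $\lambda_j(H_{\eff})$ — is precisely the self-energy argument of Kempe--Kitaev--Regev. Deferring the multiplicity- and order-preserving correspondence between $\mathrm{spec}(\tilde{H}|_{<\Delta/2})$ and the self-consistent solutions to \cite{KKR06, OT05} is legitimate here, since the paper itself invokes exactly that analysis rather than reproving it.
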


Let us apply this theorem to $\tilde{H} = H + V$ with $H = J_{\text{clock}}H_{\text{clock}} + J_{\text{prop}} H_{\text{prop}}$ and $V = J_{\text{in}} H_{\text{in}} + H_{\text{out}}$. A moment of thought reveals that the ground-space of $H$ is spanned by \cite{KKR06}
\begin{equation} \label{etabasis}
\ket{\eta_i} = \frac{1}{\sqrt{T + 1}} \sum_{t=0}^T U_t ... U_1\ket{i}\otimes \ket{1}^{\otimes t} \otimes \ket{0}^{\otimes T - t},
\end{equation}
for $i \in \{ 0, ..., 2^N - 1 \}$. To find a lower bound on the spectral gap of $H$ we set $J_{\text{clock}} = J_{\text{prop}}$ and follow the proof of Lemma 3.11 in Ref. \cite{AvDK+07} to obtain that $\Delta \geq J_{\text{prop}}\Omega(T^{-3})$. Let $\Pi$ be the projector onto the ground-space of $H$. Since $H_{++} = (\id - \Pi)H (\id - \Pi)$ and $\id_{++} = (\id - \Pi)$, we find
\begin{equation*}
||G_{++}(z)|| = ||(z \id_{++} - H_{++})^{-1}|| \leq |(z - \Delta)^{-1}| \leq T^{3}J_{\text{prop}}^{-1}/2,
\end{equation*}
for $z \leq \Delta/2$.
Then, from Eq. (\ref{Sigma-}) and the bound $|| V_{+-} || = ||  (\id - P) V P  || \leq || V ||$,
\begin{equation*}
\Sigma_-(z) = (T + 1)\Pi H_{\text{out}} \Pi + J_{\text{in}} \Pi H_{\text{in}} \Pi + O(T^3 J_{\text{in}}^2 J_{\text{prop}}^{-1}),  
\end{equation*}
Choosing $J_{prop} = \delta^{-2} T^3J_{\text{in}}^2$, we find that $|| \Sigma_-(z) - (T + 1)\Pi H_{\text{out}} \Pi + J_{\text{in}} \Pi H_{\text{in}} \Pi || \leq O(\delta^2)$ for every $z \leq \Delta/2$. Applying Theorem \ref{perturbation} with $H_{\eff} = (T + 1)\Pi H_{\text{out}} \Pi + J_{\text{in}} \Pi H_{\text{in}} \Pi$, $\epsilon = \delta^2$, $a = 0$ and $b = J_{\text{in}} > 1$, it follows that the spectrum of $H_{\eff}$ approximates the spectrum of Hamiltonian (\ref{HamiltonianQMA}) for energies below $\Delta/2$ to accuracy $O(\delta^2)$.  

The ground-space of $\Pi H_{\text{in}} \Pi$ is spanned by states of the form \ref{etabasis}, with $\ket{i} = \ket{0, j}$, where $j$ is a computational basis state on the last $m$ states. We can now apply Theorem \ref{perturbation} again, this time to $H = J_{\text{in}} \Pi H_{\text{in}} \Pi$ and $V = (T + 1)\Pi H_{\text{out}} \Pi$. Note that any eigenvector of $H$ not in its groundspace has energy at least $J_{\text{in}}/ (T + 1)$. Choosing $J_{\text{in}} = \delta^{-2}(T + 1) = \poly(n)$ and performing a simple calculation, we find from Theorem \ref{perturbation} that the minimum eigenvalue of Hamiltonian (\ref{HamiltonianQMA}) is larger than $1 - \delta - O(\delta^2)$, which completes the reduction. For more details see Refs. \cite{KKR06, OT05}. 

A simpler proof that the reduction works can be obtained employing the projection Lemma of Ref. \cite{KKR06}. However, the perturbation theory approach we followed gives us more information: Not only the ground-state energy of the perturbed Hamiltonian is close to the unperturbed one, but actually the whole low-lying energy eigenspectrum is close to the one of the original Hamiltonian. As we shall see in chapter \ref{partition}, this will be important in order to establish $\DQC$-hardness of certain approximations of quantum partition functions and spectral densities. 

\subsubsection{$\QMA$-completeness of the One-Dimensional Local Hamiltonian Problem} \label{QMAcomp1D}

Before we conclude our discussion on the \locHam \hspace{0.05 cm} problem, we give an overview of the result of Ref. \cite{AGIK07} that one-dimensional \locHam \hspace{0.05 cm} for 12-dimensional particles is $\QMA$-complete. The construction and proof of correctness are both rather involved and we do not attempt to provide them in full here. However, as we use this construction in chapter \ref{QCMA}, we present certain aspects of it that are relevant for the future discussion. 

The first innovation in the construction of Ref. \cite{AGIK07} is the manner in which time is dealt with. As in a Turing machine, it is assumed there is a head which runs through the one dimensional chain back and forth, performing operations in the registers on its way. More concretely, each site of the chain contains both control and data registers, forming a 12 level system. There is a set of transition rules which, based on the position in the chain and on the state of the control register of the site in which the head sits on, determines the next step of the computation. The active sites, in which the computation is performed, consists of a block of sites which move along the chain during the computation. The time of the circuit is hence encoded in the position of the active sites in the chain. For a circuit acting on $n$ qubits and consisting of $T$ gates, we consider a chain of size $nT$, where the initial state is stored in the first $n$ sites and the rest is initialized in an \textit{unborn} state. The computation is then performed by a complicated movement of the head, determined by the transition rules, which performs operations on the data and control registers of the active sites and move them along the chain. In the end, the final state of the circuit will be in the data registers of the last $n$ sites of the chain \cite{AGIK07}, while the remaining sites will be in a \textit{dead} state. 

The local Hamiltonian encoding $\QMA$ problems proposed in \cite{AGIK07} is similar to the one of Eq. (\ref{HamiltonianQMA}) and reads
\begin{equation} \label{HamiltonianQMA1D}
\tilde{H}_{\text{out}} + J_{\text{in}} \tilde{H}_{\text{in}} + J_{\text{prop}} \tilde{H}_{\text{prop}} + J_{\text{penalty}} \tilde{H}_{\text{penalty}}.
\end{equation} 
The first three terms have a similar role to the the first three terms in Eq. (\ref{HamiltonianQMA}), although they have a somehow different form (see Ref. \cite{AGIK07} for details). An important modification here is that we now have a penalty term, instead of a clock term, acting on the control registers of the sites, which penalizes forbidden configurations (e.g. configurations containing more than one head).  

A complication in this proposal is that not all illegal configurations can be checked by $\tilde{H}_{\text{penalty}}$. However, this can be compensated by the propagation term $\tilde{H}_{\text{prop}}$, as it can be shown that illegal configurations that are not detected by $\tilde{H}_{\text{penalty}}$ will evolve into configurations that are detected by it at later times. This is formalized in the Clairvoyance Lemma of Ref. \cite{AGIK07}, which we now briefly explain. 

Let us consider the ground-space ${\cal K}$ of $\tilde{H}_{\text{penalty}}$, which is spanned by valid configurations states of the control registers, and its orthogonal complement ${\cal K}^{\bot}$. It turns out that these subspaces are also invariant under $\tilde{H}_{\text{prop}}$, as this does not map illegal configurations into legal ones and vice-versa. The Clairvoyance Lemma \cite{AGIK07} tells us that the minimum energy of $\tilde{H}_{\text{prop}} + \tilde{H}_{\text{penalty}}$ restricted to the subspace ${\cal K}^{\bot}$ is $\Omega(n^{-6}T^{-3})$. Thus only legal configurations are not penalized by their joint action.

The subspace ${\cal K}$ itself can be divided into two subspaces, consisting of the ground-space ${\cal L}$ of $\tilde{H}_{\text{prop}}$ and its orthogonal complement ${\cal L}^{\bot}$ in ${\cal K}$. As $\tilde{H}_{\text{penalty}}$ is zero on ${\cal K}$, the subspace ${\cal L}$ is also the groundspace of the sum $\tilde{H}_{\text{prop}} + \tilde{H}_{\text{penalty}}$. Furthermore, using the same bounds as in the 5-local Hamiltonian case, it can be shown that the minimum eigenvalue of $\tilde{H}_{\text{prop}}$ in ${\cal L}^{\bot}$ is $\Omega(n^{-6}T^{-3})$. We can then follow the approach outlined before, looking at the terms $\tilde{H}_{\text{out}}$ and $\tilde{H}_{\text{in}}$ in the subspace ${\cal L}$, in which the form of the resulting Hamiltonian is very similar to the 5-local case and the proof of correctness of the reduction can be carried over exactly in the same manner.

\section{Unentangled Quantum Proofs: $\QMA$(k) and $\BellQMA$(k)} \label{QMA2}

In this section we consider the following problem: If instead of sending one quantum proof to Arthur, Merlin sends several proofs with the promise they are not entangled, is the class of problems which can be proved by Merlin larger than $\QMA$? This question was first analysed by Kobayashi, Matsumoto, and Yamakami in Ref. \cite{KMY03}, where the following classes were introduced. 

\begin{definition} \label{QMAkdef}
\QMA{\normalfont(k)}: Let $A = (\A_{\YES}, \A_{\NO})$ be a promise problem and let $a, b : \mathbb{N} \rightarrow [0, 1]$ be functions. Then $A \in \QMA(k, a, b)$ if, and only if, for every $n$ bit string $L_n \in \A$, there exists a polynomial-time generated quantum circuit $Q_{L_n}$ acting on $N = \poly(n)$ qubits and a function $m = \poly(n)$ such that
\begin{enumerate}
	\item (completeness) If $L_n \in \A_{\YES}$, then there exists a set of states \{ $\ket{\psi_j} \in (\mathbb{C}^2)^{\otimes m} \}_{j=1}^k$ satisfying 
\begin{equation*}	
\tr(Q_{L_n} \left((\ket{0}\bra{0})^{\otimes N - km}\otimes \ket{\psi_1}\bra{\psi_1} \otimes ... \otimes \ket{\psi_k}\bra{\psi_k} \right) Q_{L_n}^{\cal y} (\ket{1}\bra{1} \otimes \id^{\otimes N - 1})) \geq a(n).
\end{equation*}
	\item (soundness) If $L_n \in \A_{\NO}$, then for every set of states $\{ \ket{\psi_j} \in (\mathbb{C}^2)^{\otimes m} \}_{j=1}^k$, 
\begin{equation*}	
\tr(Q_{L_n} \left((\ket{0}\bra{0})^{\otimes N - km}\otimes \ket{\psi_1}\bra{\psi_1} \otimes ... \otimes \ket{\psi_k}\bra{\psi_k} \right) Q_{L_n}^{\cal y} (\ket{1}\bra{1} \otimes \id^{\otimes N - 1})) \leq b(n).
\end{equation*}	
\end{enumerate}
Moreover, if each of the $k$ witnesses consists of at most $r$ qubits, then we say that $A \in \QMA_r(k, a, b)$. Finally, we define $\QMA(k) = \QMA(k, 2/3,1/3)$. 
\end{definition}

While it is clear that $\QMA(k, a, b) \subseteq \QMA(1, a, b)$, the converse inclusion is an open question and is conjectured not to hold. Indeed, a naive attempt to simulate a protocol for $\QMA(k, a, b)$ with a single Merlin fails in general, as he might cheat by entangling the $k$ witnesses, and no direct method for testing the non-existence of such entanglement is presently known. We hence seem to have a task for which the promise of not having any entanglement is actually helpful, which goes in the opposite way to the usual quantum information paradigm of entanglement as a resource. 

Recently two breakthrough results have shown the inherent significance of the classes $\QMA$(k). First Blier and Tapp proved that the problem  3-$\COLORING$, which is known to be NP-complete \cite{AB08}, belongs to $\QMA_{\log(n)}(2, 1, 24n^{-6})$ \cite{BT07}. This result is remarkable because a classical proof of similar size would imply $\P = \NP$. Even a single quantum proof of $O(\log(n))$ qubits is unlikely, as it would imply $\NP \subseteq \BQP$\footnote{Suppose 3-$\COLORING \in \QMA_{O(\log(n))}(1, 1, 1/\poly(n))$. Then using Marriott and Watrous \cite{MW05} amplification procedure it follows that 3-$\COLORING \in \QMA_{O(\log(n))}(1, 1 - 2^{- \poly(n)}, 2^{-\poly(n)})$, which is in $\BQP$ by Theorem 3.6 of \cite{MW05}.}. A caveat of this result, however, is that cheating Merlins can only be identified with a polynomially small probability. In Ref. \cite{ABD+08} Aaronson \textit{et al} showed that such a drawback can be circumvented by using $\sqrt{n}$ unentangled witnesses instead of two, as they proved that 3$\SAT$ belongs to $\QMA_{\log(n)}(\poly\log(n)\sqrt{n}, 1, 1/\poly(n))$. Again, it is not expected that 3$\SAT$ has sublinear classical proofs, as this would imply a subexponential classical algorithm to it. 

Given these intriguing results on the power of multiple unentangled quantum proofs, a throughout characterization of the classes $\QMA(k)$ seems in order. It turns out that even basic questions concerning $\QMA(k)$, which for $\QMA$ can readily be solved, are still open problems for $k \geq 2$. For example, it is not known whether it is possible to perform error amplification for $\QMA(k)$, $k \geq 2$, nor the relation between the classes $\QMA(k)$ among themselves and with $\QMA$. 

In Ref. \cite{ABD+08} an application of entanglement theory to these open problems was found. It was shown that assuming the superadditivity of the entanglement of formation (see section \ref{costdistillable}), $\QMA(k)$ can be amplified to exponentially small error for every $k \geq 1$, i.e. $\QMA(k, a, b) = \QMA(k, 1- 2^{-\poly(n)}, 2^{-\poly(n)})$ for $b - a \geq 1/\poly(n)$, and that $\QMA(k) = \QMA(2)$ for $k \geq 2$. For the argument all that is actually needed is a superadditive entanglement measure which satisfies properties 1. and 4. of section \ref{entmeasures}\footnote{For the argument of Ref. \cite{ABD+08}, these two properties can be replaced by the bound $E(\rho) \leq \log(D)$, where $D$ is the dimension of the Hilbert space in which $\rho$ acts on, which is a consequence of properties 1. and 4.}, and a stronger form of faithfulness, termed polynomial faithfulness \cite{ABD+08}, which requires that if $E(\rho) \leq \epsilon$, then there is a separable state which is $f(\epsilon)\poly(\log(D))$-close to $\rho$ in trace norm, where $D$ is the dimension of the Hilbert space in which $\rho$ acts on and $f$ is a real function which goes to zero when $\epsilon$ does so. The reason why the conjecture on the superadditivity of the entanglement of formation had to be used is that no entanglement measure satisfying these four properties is presently known. This shows that if the additivity conjecture of quantum information theory is true, then $E_F$ has quite unique properties, which can be seen as yet another reason for the notorious hardness of solving the conjecture.  

In Ref. \cite{ABD+08} an interesting subclass of $\QMA(k)$ was considered. In the class $\BellQMA(k)$, not only Merlin is restricted to send $k$ unentangled proofs, but Arthur is also constrained only to perform separate measurements on each witness and then postprocess classically the obtained outcomes, i.e., Arthur is restricted to realize a Bell experiment on the witnesses. Even though no entangling measurements can be performed, the absence of entanglement among the witnesses could a priori still be helpful. In Ref. \cite{ABD+08} the relation of $\BellQMA(k)$ with $\QMA(k)$ was left as an open problem. In the rest of this section we consider such a question and prove the following theorem. 

\begin{theorem} \label{teoBellQMA}
For every constant $k$, $\BellQMA(k) = \QMA$. 
\end{theorem}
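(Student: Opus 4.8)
The plan is to show that $\BellQMA(k)=\QMA$ for every fixed $k$ by proving $\BellQMA(k)\subseteq\QMA$ (the reverse inclusion being trivial, since a single Merlin can always supply the $k$ witnesses and Arthur can ignore the promise). The central idea is that in a Bell protocol Arthur measures each of the $k$ unentangled witnesses $\ket{\psi_1},\dots,\ket{\psi_k}$ separately and then postprocesses the classical outcomes. Because the measurements on the different registers commute and are applied to an (a priori) unentangled state, the acceptance probability is a multilinear function of the $k$ reduced states. The key point I would exploit is that the monogamy of \emph{locally accessible} entanglement --- captured by Yang's monogamy inequality (\ref{Yangmonog}) of Chapter~\ref{entanglement} --- lets a single Merlin cheat only negligibly: if Merlin sends one large entangled state in place of the $k$ promised product witnesses, the entanglement between the registers cannot increase the acceptance probability of a Bell (separately-measured) protocol by more than a controllable amount.

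Concretely, I would first argue that the acceptance probability of a $\BellQMA(k)$ protocol can be written as an expectation over the locally obtained measurement outcomes, so that the optimal strategy for Merlin --- even allowing entanglement among the $k$ registers --- is governed by the quantities $C^{\leftarrow}$ and $G^{\leftarrow}$ defined in Section~\ref{othermasures}. The crucial estimate is that the locally accessible correlations that Arthur can extract by separate measurements are constrained by a monogamy relation: for a state $\rho_{A_1\cdots A_k}$ purportedly shared among $k$ Merlins, the sum of the pairwise locally accessible informations that Arthur's classical postprocessing can use is bounded via (\ref{Yangmonog}). Since the entanglement of formation is finite (indeed bounded by $\log D$ with $D$ the Hilbert-space dimension of a single witness, which is $2^{\poly(n)}$), and since $k$ is a \emph{constant}, the total advantage any cheating Merlin could gain by entangling the registers scales only polynomially and can be absorbed into the completeness/soundness gap. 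I would then invoke error amplification for $\QMA$ (Marriott--Watrous, via \cite{MW05}) to restore a constant gap.

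The second half of the plan is to remove the non-entanglement promise altogether, thereby simulating the Bell protocol inside ordinary $\QMA$. Here a single Merlin sends one state $\rho$ on the full $k$-register space; Arthur runs the same separate measurements. The monogamy bound above shows that the best Arthur-acceptance achievable by any (possibly entangled) $\rho$ differs from the best achievable by a genuinely product state by at most an inverse-polynomial amount. Hence the optimal single-Merlin acceptance probability is within $1/\poly(n)$ of the $\BellQMA(k)$ value, which after amplification places the problem in $\QMA(1,2/3,1/3)=\QMA$. A useful byproduct I would flag along the way is the resolution of the open question from \cite{Tuc02, YHW07, HHHH07}: because the locally accessible monogamy is strictly stronger than the global one, the squashed entanglement cannot in general be computed with a classical extension, i.e.\ it differs from the classical squashed entanglement.

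The main obstacle I anticipate is making the monogamy-based cheating bound \emph{dimension-independent enough}. Yang's inequality (\ref{Yangmonog}) bounds a sum of $G^{\leftarrow}$ terms by $E_F$, but $E_F$ of a single witness can be as large as $\log D=\poly(n)$, so a naive application gives only a polynomial bound on the extractable cheating correlations rather than a vanishing one. The delicate step is to show that for the \emph{specific} Bell acceptance functional --- which depends only on separately measured, hence locally accessible, data --- the relevant figure of merit is not $E_F$ itself but the locally accessible entanglement $G^{\leftarrow}$, which is far more monogamic. I would therefore spend the bulk of the argument establishing that Arthur's separate-measurement acceptance probability is Lipschitz in the $G^{\leftarrow}$ correlations and that these are forced to be small across at least one bipartite cut whenever $k$ is fixed, so that conditioning on one register leaves the rest nearly product. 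Once that locally-accessible monogamy estimate is in hand, the reduction to $\QMA$ with only an inverse-polynomial loss, followed by amplification, is routine.
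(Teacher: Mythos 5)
You correctly identify Yang's monogamy inequality (\ref{Yangmonog}) as the engine, and you even name the decisive obstacle yourself: for a single pair of registers, $E_F$ can be as large as $\log D = \poly(n)$, so a direct application of monogamy yields nothing. But your proposed resolution fails, and this is a genuine gap rather than a technicality. The claim that for constant $k$ the locally accessible correlations ``are forced to be small across at least one bipartite cut'' is false: already for $k=2$ a cheating Merlin can send a maximally entangled state across the single cut $A_1\!:\!A_2$, for which $G^{\leftarrow}$ is maximal and for which separate (Bell) measurements produce correlations that no product state reproduces. Hence the entangled optimum of the Bell acceptance functional is in general \emph{not} within $1/\poly(n)$ of the product-state optimum for the unmodified protocol, and your one-shot simulation of $\BellQMA(k)$ inside $\QMA$ by simply dropping the promise does not go through. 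Monogamy constrains the correlations of one system with \emph{many} others; with $O(1)$ registers nothing forces $G^{\leftarrow}$ down. Relatedly, your remark that a cheating advantage which ``scales only polynomially'' can be ``absorbed into the completeness/soundness gap'' cannot be right, since that gap is only inverse polynomial.

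What is missing from your outline is an active modification of the protocol that \emph{manufactures} the monogamy. In Proposition \ref{propBellqma} Arthur demands that the first witness be replaced by $N+1 = \poly(n)$ registers of size $r$ each; he symmetrizes the last $N$ by a random permutation, traces out all but one, and only then runs the original Bell verification on the pair $(A, B_1)$, thereby merging two of the $k$ witnesses into one. Now Yang's inequality applies with many parties: $r \geq E_F(\rho_{A:B_1\cdots B_N}) \geq N\, G^{\leftarrow}(\rho_{AB_1})$, so $G^{\leftarrow}(\rho_{AB_1}) \leq r/N = 1/\poly(n)$; via the optimal convex decomposition and Pinsker's inequality, measuring $B_1$ then yields outcomes almost uncorrelated with the state on $A$, bounding the soundness by $b + 1/\poly(n)$. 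Note that this exploits exactly the distinction you gesture at --- smallness of the locally accessible measure $G^{\leftarrow}$ suffices for Bell protocols even when the state is far from separable in trace norm --- but the smallness comes from the replication trick, not from $k$ being constant. The constant $k$ is used elsewhere: Proposition \ref{propBellqma} removes one witness at a time at a polynomial blowup in witness size, and iterating it $k$ times keeps the total blowup polynomial and lands in $\QMA$, with amplification handled by sending polynomially many copies \cite{ABD+08}. Your outline thus inverts the role of the constant: you use it to bound the cheating advantage (which it does not), whereas the paper uses it to bound the recursion depth.
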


To motivate the main idea of the proof of Theorem \ref{teoBellQMA}, let us consider the following strategy to show that $\QMA = \QMA(k)$ (which as will be seen shortly fails). Suppose we had an entanglement measure $E$ which (i) is polynomially faithful, (ii) bounded in the sense that for every $\rho$, $E(\rho) \leq \log(D)$, where $D$ is the dimension of the Hilbert space in which $\rho$ acts on, and (iii) satisfies the monogamy inequality
\begin{equation} \label{mon}
E(\rho_{A:BB'}) \geq E(\rho_{AB}) + E(\rho_{AB'}).
\end{equation}
Then we could use it to show that $\QMA(2, a, b) \subseteq \QMA$, for every $a, b$ with $a - b \geq 1/\poly(n)$\footnote{We would actually be able to prove that $\QMA(k, a, b) \subseteq \QMA$, for every $k \geq 2$, but for simplicity we only discuss the $k=2$ case.}. Indeed, consider a problem in $\QMA(2, a, b)$ for which Merlin sends two witnesses of size $r = \poly(n)$ qubits each to Arthur. We then consider the following protocol in $\QMA$ for the same problem: Arthur expects a single witness of size $(N+1)r$, where $N = \poly(n)$, and divide it into $N+1$ registers of $r$ qubits each. He then symmetrizes the last $N$ registers by applying a random permutation from ${\cal S}_{N}$. Finally, he discards all the registers except the first two and applies the verification procedure of $\QMA(2, a, b)$ to them. Case the solution of the problem is $\YES$, Merlin can send the state $\ket{\psi_1} \otimes \ket{\psi_2}^{\otimes N}$, where $\ket{\psi_1} \otimes \ket{\psi_2}$ is a witness for $\QMA(2, a, b)$, making Arthur to accept with probability larger than $a$, i.e. the protocol for $\QMA$ has completeness $a$. In order to analyse the soundness of the protocol, let $\rho_{1,2,...,N}$ be Arthur's state after the randomization step. Then, from properties (ii) and (iii) of $E$ we find that $r \geq E(\rho_{1:2,...,N}) \geq N E(\rho_{1:2})$. Hence, $E(\rho_{1:2}) \leq r/N = 1/\poly(n)$ and by property (iii), in turn, $\rho_{1:2}$ is $1/\poly(n)$-close from a separable state in trace norm. The verification procedure of $\QMA(2, a, b)$ that is applied to $\rho_{1:2}$ works with a soundness error smaller than $b + 1/\poly(n)$ and, therefore, the problem is in $\QMA(1,a, b + 1/\poly(n))$, which is equal to $\QMA$ since $a - b \geq 1/\poly(n)$. 

The strategy above fails because the three properties required from $E$ are mutually exclusive. Let us consider the following state in $(\mathbb{C}^N)^{\otimes N}$
\begin{equation} \label{psianti}
\ket{\psi_{N}} = \frac{1}{\sqrt{N}} \sum_{\pi \in {\cal S}_N} (-1)^{\text{sgn}(\pi)} \ket{\pi(1)} \otimes ... \otimes \ket{\pi(N)},
\end{equation}
where $\text{sgn}(\pi)$ is the sign of the permutation $\pi$ \cite{Sag01}. The state $\ket{\psi_N}\bra{\psi_N}$ is permutation-symmetric and its two party reductions are all equal to the anti-symmetric Werner state, which is $\Omega(1)$ away from any separable state in the trace norm. For any measure that satisfies (ii) and (iii), however, we have $E(\tr_{\backslash 1,2}(\ket{\psi_N}\bra{\psi_N})) \leq \log(N)/N$. Therefore, $E$ cannot be polynomially faithful. 

This simple observation is the key to solve an open problem concerning the squashed entanglement \cite{CW04}, raised in Refs. \cite{Tuc02, YHW07, HHHH07}: the squashed entanglement is not equal in general to the classical squashed entanglement. Indeed, as the squashed entanglement satisfies properties (ii) and (iii) above (see section \ref{othermasures}), by the discussion of the previous paragraph it cannot satisfy (ii). The classical squashed entanglement, in which the infimum of Eq. (\ref{squasdhedent}) is taken only over classical extensions of $\rho_{AB}$\footnote{A classical extension of $\rho_{AB}$ is a state of the form $\rho_{ABE} = \sum_k p_k \rho_k^{AB} \otimes \ket{k}^E\bra{k}$, for some orthonormal basis $\{ \ket{k} \}$ and such that $\tr_E(\rho_{ABE}) = \rho_{AB}$.}, is readily seen to be given by \cite{HHHH07}   
\begin{equation*} 
E_{sq}^c(\rho) = \frac{1}{2}\min_{\{ p_i, \rho_i \}} \sum_i p_i I(A:B)_{\rho_i},
\end{equation*}
where the minimum is taken over all convex combinations $\{ p_i, \rho_i \}$ of $\rho$ and $I(A:B)$ is the mutual information, given by $I(A:B) = S(A) + S(B) - S(AB)$. From the expression above it follows that $E_{sq}^c$ satisfy (ii). To show that it also satisfy (i), let $\{ p_i, \rho_i \}$ be a optimal ensemble for $\rho$ and note that 
\begin{eqnarray*}
E_{sq}^c(\rho) &=& \frac{1}{2}\sum_i p_i I(A:B)_{\rho_i} \nonumber \\ &=& \frac{1}{2}\sum_i p_i S(\rho_i || \tr_B(\rho_i) \otimes \tr_A(\rho_i)) \nonumber \\ &\geq& \frac{1}{2}\sum_i p_i E_R(\rho_i) \nonumber \\ &\geq& \frac{1}{2} E_R(\rho),
\end{eqnarray*}
where we used that $I(A:B)_{\rho} = S(\rho || \rho_A \otimes \rho_B)$ and the convexity of the relative entropy of entanglement (see section \ref{relentint}). From Pinsker's inequality \cite{Pet86} it is clear that $E_R$ is polynomially faithful and hence so it is $E_{sq}^c$. Thus $E_{sq} \neq E_{sq}^c$, as the two measures have different properties. 

After this short digression on the squashed entanglement let us turn back to the proof of Theorem \ref{teoBellQMA}. From the definition of the classes $\BellQMA(k)$ it follows that we do not need to require the state given to Arthur to be separable or close to separable in trace norm; it suffices that it behaves similarly to a separable state when separate measurements are performed on each witness. Hence we are interested in preventing the existence of entanglement that can be locally accessed. It turns out that, in a sense, such entanglement is much more monogamic than global entanglement. As a first example of this fact, consider the state given by Eq. \ref{psianti}. As already mentioned, it violates the monogamy inequality for any polynomially faithful measure, because the two party reduced density matrices are anti-symmetric Werner states, which are $\Omega(1)$ away from any separable state in trace norm. However, anti-symmetric Werner states become increasingly indistinguishable from a separable state under LOCC measurements for large dimensions and, hence, we could well have a measure satisfying the monogamy inequality and being polynomially faithful under a suitable norm that only looks at local measurements. In Ref. \cite{HLW06} it was shown that being far away from a separable state in trace norm and at the same time almost indistinguishable from a separable state by LOCC is actually a generic feature of mixed states in high dimensions. One might then take such examples as evidences to conjecture that there might be an entanglement measure satisfying properties (i) to (iii) discussed above, if we only require polynomial faithfulness with respect to the \textit{separable norm}, defined as \cite{Win08}
\begin{equation*}
|| X ||_S := \max_{0 \leq M \leq \id, M \in \text{cone}({\cal S})} \tr(M X).
\end{equation*}
All the examples known by the author show that squashed entanglement and the regularized relative entropy of entanglement could indeed be examples of such a measure. Note that if we had a measure with such properties then we could readily show that $\BellQMA(k) = \QMA$\footnote{Essentially following the strategy outlined before for $\QMA(k)$ using properties (i-iii).}. Although we do now know any such measure, it turns out that for this particular application Yang's monogamy inequality, given by Eq. (\ref{Yangmonog}), suffices. Using it we can prove the following Proposition. 

\begin{proposition} \label{propBellqma}
If $L \in \BellQMA_r(k, a, b)$, with $a - b \geq 1/\poly(n)$, then $L \in \BellQMA_{\poly(r)}(k-1, 1 - 2^{-\poly(n)}, 2^{-\poly(n)})$.
\end{proposition}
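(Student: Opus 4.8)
\textbf{Proof proposal for Proposition \ref{propBellqma}.}

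The plan is to reduce the number of Merlins by one, at the cost of enlarging each witness polynomially and amplifying the completeness and soundness gaps. The central tool is Yang's monogamy inequality (\ref{Yangmonog}), which I would use to argue that if Arthur symmetrizes many extra copies of one of the witnesses, then the reduced state on the first few copies behaves — with respect to \emph{locally accessible} measurements — like a product state. First I would take a $\BellQMA_r(k,a,b)$ protocol in which Merlin sends $k$ witnesses $\ket{\psi_1},\ldots,\ket{\psi_k}$, each on $r$ qubits, and Arthur measures each separately with a POVM and classically post-processes the outcomes. The new protocol will have Merlin send only $k-1$ witnesses: the first $k-2$ will be the original witnesses $\ket{\psi_1},\ldots,\ket{\psi_{k-2}}$, but the final witness will be a single large register on $\poly(r)$ qubits intended to encode many copies of $\ket{\psi_{k-1}}\otimes\ket{\psi_k}$ in a symmetrized form. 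Concretely, I would have Arthur divide the last witness into $N=\poly(n)$ blocks of $2r$ qubits each, apply a random permutation from $S_N$ to symmetrize the blocks, and then run the original $\BellQMA_r(k,a,b)$ verification on the first block (interpreting it as the pair $\ket{\psi_{k-1}}\otimes\ket{\psi_k}$) together with the honestly-sent first $k-2$ witnesses.

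The completeness direction is straightforward: in a $\YES$ instance an honest Merlin sends $\ket{\psi_1},\ldots,\ket{\psi_{k-2}}$ together with $(\ket{\psi_{k-1}}\otimes\ket{\psi_k})^{\otimes N}$, which is permutation-invariant under the block permutations and whose first block is exactly the required product state; this makes Arthur accept with the original probability $a$, which can then be amplified. The soundness direction is the heart of the argument. Let $\rho$ be the (possibly entangled across the $N$ blocks) state Arthur holds after symmetrization. Because $\rho$ is permutation-symmetric across the $N$ blocks, I would apply Yang's monogamy inequality (\ref{Yangmonog}), treating the bipartition between one block and all the others, to conclude that $\log(\dim) \geq E_F(\rho_{1:2\cdots N}) \geq \sum_{j=2}^N G^{\leftarrow}(\rho_{1:j}) \geq (N-1)\,G^{\leftarrow}(\rho_{1:2})$, so that $G^{\leftarrow}(\rho_{1:2}) \leq O(r)/N = 1/\poly(n)$. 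Since $G^{\leftarrow}$ is strictly positive on every entangled state and controls locally accessible correlations (it is the convex roof of the Henderson–Vedral one-way accessible information), a small value of $G^{\leftarrow}(\rho_{1:2})$ forces the first block to be nearly undistinguishable from a product state \emph{under the separate measurements Arthur performs}. This is precisely the weaker notion of approximation that $\BellQMA$ requires — we do not need closeness in trace norm, only that separate POVM measurements on the block cannot detect the difference from a separable state.

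The main obstacle I anticipate is making rigorous the step from ``$G^{\leftarrow}(\rho_{1:2})$ is small'' to ``the original separate-measurement verification still rejects $\NO$ instances with the correct probability.'' Here the point is that $G^{\leftarrow}$ bounds the \emph{one-way locally accessible} correlations, and one must check that this is the right quantity to control the behaviour of Arthur's Bell-type strategy — this is exactly the contrast between global entanglement and locally accessible entanglement emphasized in the surrounding discussion. I would handle this by showing that a small $G^{\leftarrow}$ implies that the outcome statistics of any product-POVM applied to $\rho_{1:2}$ are within $1/\poly(n)$ of those obtained from some genuinely separable state, so that Arthur's classical post-processing yields an acceptance probability at most $b + 1/\poly(n)$ in a $\NO$ instance. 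Combined with the gap $a-b\ge 1/\poly(n)$ and the standard error-reduction available for $\BellQMA$ (repeating the separate measurements polynomially many times and taking a majority vote, which keeps the witnesses unentangled), this drives the soundness error down to $2^{-\poly(n)}$ and the completeness up to $1-2^{-\poly(n)}$, yielding $L \in \BellQMA_{\poly(r)}(k-1, 1-2^{-\poly(n)}, 2^{-\poly(n)})$. Iterating the Proposition $k-1$ times then collapses $\BellQMA(k)$ down to $\QMA$, establishing Theorem \ref{teoBellQMA}.
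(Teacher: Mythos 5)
There is a genuine gap, and it lies in the choice of which witnesses you merge. You fuse witnesses $k-1$ and $k$ into blocks of $2r$ qubits and symmetrize over the $N$ blocks; Yang's inequality then bounds $G^{\leftarrow}(\rho_{1:2})$, the correlation between \emph{block 1 and block 2}. But Arthur never uses blocks $1$ and $2$ jointly — he uses only block 1, interpreting its two halves as the $(k-1)$-th and $k$-th witnesses, and those two halves are measured \emph{separately} by the Bell verifier. Nothing in your argument constrains the entanglement \emph{within} a block: a cheating Merlin can send $\ket{\Phi}^{\otimes N}$ where $\ket{\Phi}$ is entangled across the two $r$-qubit halves; this state is already permutation-symmetric over the blocks, so symmetrization does nothing, your monogamy bound is satisfied vacuously (inter-block correlations are zero), and the state Arthur verifies is the entangled $\ket{\Phi}$ itself. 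The original protocol's soundness promise only covers product witnesses, and separate measurements with classical post-processing genuinely can accept an entangled pair with higher probability than any product pair: e.g.\ if each register is measured with the POVM $\{\frac{1}{2}\ketbra{0}{0},\frac{1}{2}\ketbra{1}{1},\frac{1}{2}\ketbra{+}{+},\frac{1}{2}\ketbra{-}{-}\}$ and Arthur accepts unless both registers were measured in the same basis with unequal outcomes, then every product state accepts with probability at most $7/8$ while $\ket{\Phi^+}$ accepts with probability $1$. So your protocol's soundness fails outright; this is not a technicality that the ``$G^{\leftarrow}$ small $\Rightarrow$ near-separable under local measurements'' step (which you correctly flag as the remaining work) can repair, because for your bipartition $G^{\leftarrow}$ is small even in the cheating case.

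The fix — and this is what the paper does — is to merge a witness with $N$ copies of a \emph{single} other witness, not with copies of a pair: the enlarged first register is $A, B_1, \ldots, B_N$ with $|A| = |B_j| = r$, Arthur symmetrizes only the $B$ registers, traces out $B_2, \ldots, B_N$, and runs the original verification with $A$ as witness $1$ and $B_1$ as witness $2$. Then the monogamy cut $A : B_1 \cdots B_N$ in $r \geq E_F(\rho_{A:B_1\cdots B_N}) \geq N\, G^{\leftarrow}(\rho_{AB})$ coincides exactly with the cut Arthur measures across, so the dangerous entanglement is the one being bounded. The remaining step is then carried out by decomposing $\rho_{AB}$ optimally for $G^{\leftarrow}$, discarding the components with $C^{\leftarrow}(\rho_m) \geq \epsilon$ (total weight at most $r/N\epsilon$), and applying Pinsker's inequality to the low-$C^{\leftarrow}$ components to show that, conditioned on the outcome of Arthur's POVM on $B$, the post-measurement state on $A$ is within $\delta$ in trace norm of a fixed outcome-independent state except with probability $\sqrt{2\epsilon}/\delta$ — which is precisely the ``effectively product under separate measurements'' statement you wanted, yielding soundness $b + r/N\epsilon + \sqrt{2\epsilon}/\delta + \delta = b + 1/\poly(n)$ before amplification. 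Your completeness argument and the final amplification via polynomially many copies are fine as stated.
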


Before we turn to its proof, let us show how it implies Theorem \ref{teoBellQMA}. Applying the Proposition above $k$ times recursively, one can show that a problem in $\BellQMA_r(k, a, b)$ also belongs to $\QMA_{\underbrace{\poly(\poly(...\poly(n)))}_{k}}(1, 1 - 2^{-\poly(n)}, 2^{-\poly(n)})$, which is equal to $\QMA$, as $k$ is a constant independent of $n$.

\vspace{0.5 cm}

\begin{proof} (Proposition \ref{propBellqma})
Let $L \in \BellQMA_r(k, a, b)$. We consider that there are $k$ Merlins who send the proofs to $k$ Arthurs, who perform a measurement on the respective proofs and jointly postprocess the outcomes in a classical computer to decide whether to accept or not. If $x \in L$,  Arthurs $1$ to $k$ accept with probability larger than $a$, while if $x \notin L$, they accept with probability at most $b$. Consider the following protocol for $L$, which we show in the sequel to be in $\BellQMA_{\poly(r)}(k-1, a , b + 1/\poly(n))$: Arthur 1 expects from Merlin a witness of size $(N+1)r$, while Arthur's $2$ to $k-1$ expect witnesses of size $r$ qubits each. Arthur 1 divides the first witness into $N+1$ groups of size $r$ and denotes them by $A,B_1,...,B_N$. He then symmetrizes the $B$'s registers by performing a random permutation from ${\cal S}_N$ to them and traces out the registers $B_2$ to $B_N$. Finally he performs the measurement from the verification procedure for $\BellQMA_r(k, a, b)$ to the state on $A$ and $B_1$ as the first two proofs, while the remaining $k-2$ Arthur's also perform the measurements from the protocol for $\BellQMA_r(k, a, b)$.

Before we analyse the soundness and completeness of the protocol, let us first derive some bounds that will prove useful. Let $\rho_{A:B_1,...,B_N}$ be Arthur $1$ state after the randomization. From Eq. (\ref{Yangmonog}),  
\begin{eqnarray*}
r\geq E_F(\rho_{A:B_1,...,B_N}) \geq N G^{\leftarrow}(\rho_{AB}),
\end{eqnarray*}
Letting $\{ q_k, \rho_k \}$ be an optimal decomposition for $\rho_{AB}$ in $G^{\leftarrow}(\rho_{AB})$,
\begin{equation*}
G^{\leftarrow}(\rho_{AB}) = \sum_k q_k C^{\leftarrow}(\rho_k) \leq \frac{r}{N}.  
\end{equation*}
Let ${\cal I}$ be the set of indices $k$ for which $C^{\leftarrow}(\rho_k) \geq \epsilon$. Then,
\begin{equation} \label{equsewill1}
\sum_{k \in {\cal I}} q_k \leq \frac{r}{N \epsilon}. 
\end{equation}

For any $k \notin {\cal I}$, we have $C^{\leftarrow}(\rho_k) \leq \epsilon$. Then, for every POVM $\{ M_k \}$ on the $B$ register (which is the only register of the $B$ group that was not traced out) and every $k \notin {\cal I}$,
\begin{equation*}
S\left( \sum_j p_j \rho_{k, j} \otimes \ket{j}\bra{j}  \left \Vert  \rho_{k}^A \otimes \sum_j p_j \ket{j}\bra{j}\right. \right) = S(\rho_k^A) - \sum_j p_j S(\rho_{k, j}) \leq C^{\leftarrow}(\rho_k) \leq \epsilon,
\end{equation*}
where $p_j = \tr(\id \otimes M_j \rho_k)$, $\rho_{k, j}= \tr_B( \id \otimes M_j \rho_k)/p_j$ and $\rho_k^A = \tr_{B}(\rho_{k})$. Applying Pinsker's inequality to the equation above we get
\begin{equation*} 
\sum_{j} p_j || \rho_k^A - \rho_{k, j} ||_1 \leq \sqrt{2 \epsilon}.
\end{equation*}
Denoting by ${\cal I}_k$ the set of all indices $j$ for which $|| \rho_k^A - \rho_{k, j} ||_1 > \delta$, we find 
\begin{equation} \label{equsewill2}
\sum_{j \in {\cal I}_k} p_j  \leq \frac{\sqrt{2 \epsilon}}{\delta}.
\end{equation}

We now turn to bound the completeness and soundness parameters. If $x \in L$, Merlin 1 can send the proof $(\ket{\psi_1} \otimes \ket{\psi_2})$, while Merlin $2$ to $k-1$ send $\{ \ket{\psi_k} \}_{j=3}^k$, where $\ket{\psi_1} \otimes ... \otimes \ket{\psi_k}$ is a proof for $\BellQMA(k, a, b)$. The Arthurs will then accept with probability at least $a$. 

If $x \notin L$, we can bound the probability that the Arthurs accept as follows. Let $\{ M_k \}$ be the measurement that Arthur $2$ would realize in the verification procedure of $\BellQMA(k)$. Now such a measurement is performed in the register $B$ of Arthur $1$ and, using Eqs. (\ref{equsewill1}, \ref{equsewill2}), we find that the probability of acceptance is bounded as follows
\begin{equation*}
\sum_{k \in {\cal I}} q_k + \sum_{k \notin {\cal I}}\left( \sum_{j \in {\cal I}_k} p_j +  \sum_{j \notin {\cal I}_k} p_j (b + \delta) \right) \leq \frac{r}{N \epsilon} +  \frac{\sqrt{2 \epsilon}}{\delta} + b + \delta, 
\end{equation*}
for every $\delta, \epsilon > 0$. Such quantity can be taken to be as small as $b + 1/\poly(n)$ for suitable $0 \leq \delta, \epsilon \leq 1$ and $N = \poly(n)$. The proposition follows from the fact that $\BellQMA(k)$ can be amplified to exponentially small error by asking Merlin to send a polynomial number of copies of the original witness \cite{ABD+08}. 
\end{proof}

As a final comment, we could also define the classes $\LOCCQMA(k)$ in an analogous manner to $\BellQMA(k)$, but now allowing Arthur to perform general LOCC measurements. It is conceivable that also in this case $\LOCCQMA(k) = \QMA$. To prove this, however, we would need a monogamic entanglement measure which is polynomially faithful with respect to the separable norm, a possibility which we leave as an open problem.

\chapter{Quantum Algorithms for Partition Functions and Spectral Densities} \label{partition}

\section{Introduction}

One of the most prominent uses of a quantum computer is the simulation of the dynamics of quantum many-body Hamiltonians. Many times in the study of many-body physics, however, we are more interested in static properties than in the dynamical ones. Examples include the ground state energy, two-point correlation functions of the ground or thermal states, and the partition function of classical and quantum Hamiltonians. An interesting question is thus whether quantum computation is of any help in the calculation of such properties. For the ground state energy, it turns out that, unless $\BQP = \QMA$ or $\NP \subseteq \BQP$, its estimation to polynomial accuracy is hard even for quantum computation\footnote{Already for one dimensional quantum local Hamiltonians \cite{AGIK07} or two dimensional classical local Hamiltonians \cite{Bar82}.}. The determination of the partition function seems even harder: For classical Hamiltonians its exact evaluation is $\sharp\P$-hard\footnote{$\sharp \P$ is a class of function problems that given a boolean function, asks how many inputs are mapped to a fixed output. It can be considered as a quantitative version of $\NP$, in which instead of asking if a problem has a solution or not, one would like to know how many solutions there are.}, while even certain approximations are $\NP$-hard \cite{JS93}. Of course the complexity of finding approximations to a certain quantity depends crucially on the accuracy required. In this respect we can ask: Is there any quantum algorithm which delivers a non-trivial approximation to static properties of many-body systems? By non-trivial we mean an approximation which is unlikely to be achieved in polynomial time in a classical computer.  

For partition functions of classical Hamiltonians, such a question was posed more than ten years ago as a challenge to quantum algorithms (see e.g. \cite{LB97}) and has been the focus of intensive research recently. In Ref. \cite{AAEL07}, Aharonov, Arad, Eban, and Landau proposed a quantum algorithm to calculate additive approximations to the Tutte polynomial of any planar graph. By an additive approximation we mean an estimate to a quantity $X$ in the range $[X - \Delta/\poly(n), X + \Delta/\poly(n)]$, where the additive window of the approximation $\Delta$ is a parameter that can be easily calculated from the input of the problem. From a well-known relation between the Tutte polynomial and the partition function of the Potts model\footnote{The Pott's model is a generalization of the Ising model to more than two dimensions, see e.g. \cite{Wu82}}, their algorithm can be used to calculate an additive approximation to the the latter at any temperature and any planar configuration of the spins. Additive approximations should however always be considered with care. If the additive window of the approximation is much larger than the quantity being approximated, then no information is gained and the approximation is useless. In Ref. \cite{AAEL07} a convincing argument was given that, at least in some cases, the approximations obtained are indeed meaningful. It was shown that for some choices of parameters of the problem, it is $\BQP$-hard to find the additive approximation in question. Unfortunately, in order to show this result to the partition function of the Pott's model, complex temperatures have to be employed. An open problem left in \cite{AAEL07} is if one can show $\BQP$-hardness for additive approximations of physical instances, corresponding to positive real temperatures. 

Independently of \cite{AAEL07}, Van den Nest, D\"ur, Raussendorf, and Briegel \cite{VDB07,VDB08,VDRB08} presented different quantum algorithms for additive approximations of partition functions of the Pott's and several other classical models. For some complex-parameters regimes, they could also show that the problem solved is $\BQP$-hard. For the case of physical parameters, however, no hardness result could be established either. Also implicitly in the work of Bombin and Martin-Delgado \cite{BMD08} is a quantum algorithm for additive approximations of certain 3-body Ising models. Finally, Arad and Landau \cite{AL08} recently proposed a quantum algorithm for additive approximations of tensor network contractions and, from it, found quantum algorithms that additively approximate partition functions of a host of classical spin chains. 

In this chapter we consider the same problem from a different perspective. We present a simple quantum algorithm for finding additive approximations of \textit{quantum} Hamiltonian partition functions. We then prove that the approximation we obtain is non-trivial by showing a hardness result for physical instances of the problem, corresponding to positive temperatures. We, however, only prove a weaker hardness statement, namely that the approximation obtained for certain $O(\log(n))$-local quantum Hamiltonians is $\DQC$-hard (see section \ref{DQC1} for a definition of the class $\DQC$). For classical Hamiltonians, in turn, we show that such an approximation can be obtained in $\P$. Therefore the use of quantum Hamiltonians appears as a crucial ingredient in our approach. For establishing that the problem is $\DQC$-hard, we consider an unexplored application of Kitaev's construction of encoding a quantum circuit into a local Hamiltonian, and show that it can also be applied to problems in the one-clean-qubit model of quantum computation. Finally, the techniques we introduce can also be used to find $\DQC$-hard quantum algorithms for the spectral density of local Hamiltonians, which complements a recent classical polynomial algorithm to the problem found by Osborne \cite{Osb06} for one dimensional quantum Hamiltonians.


The organization of this chapter is the following. In section \ref{defmainj} we present some definitions and state the main results. In section \ref{quantalgorithm}, in turn, we show how a quantum computer could be used to solve the problems \ref{partprob} and \ref{specprob}. Finally, in section \ref{DQC1hardsection} we prove Theorem \ref{DQC1hard} which shows that the problems solved are $\DQC$-hard.

\section{Definitions and Main Results} \label{defmainj}

Given a Hamiltonian $H$ on $n$ qubits, its partition function at inverse temperature $\beta := 1/ k_B T$, where $k_B$ is Boltzmann constant, is given by
\begin{equation*}
Z(H, \beta) := \tr(e^{-\beta H}) = \sum_{k=1}^{2^n} e^{- \beta \lambda_k},
\end{equation*}
where $\lambda_k = \lambda_k(H)$ is the $k$-th smallest eigenvalue of $H$. We also denote by $\lambda_{\min}(H)$ the minimum eigenvalue of $H$. The eigenvalue density of $H$ is defined as 
\begin{equation*}
\mu_{H}(x) := \frac{1}{2^n} \delta(E_j - x), 
\end{equation*}
where $\delta$ is Dirac's delta function, and the eigenvalue counting function as 
\begin{equation*}
N_{H}(a, b) := \int_{a}^b \mu_{H}(x) dx = \frac{1}{2^n}\sum_{k \hspace{0.01 cm} : \hspace{0.01 cm} a \leq \lambda_k \leq b} 1,
\end{equation*}
for any two real numbers $b \geq a$. It gives the proportion of eigenvalues of $H$ that are in the interval $[a, b]$. 

The first two problems we consider refer to the determination of the eigenvalue counting function and a certain additive approximation of the partition function of local Hamiltonians.  

\begin{definition} \label{partprob}
The problem \partition \hspace{0.08 cm} is defined as follows. We are given a local Hamiltonian acting on $n$ qubits $H = \sum_{j=1}^r H_j$, with $r = \poly(n)$, and three real number $1/\poly(n) \leq \beta \leq \poly(n)$, $\delta > 0$ and $\epsilon > 0$. Each $H_j$ acts on at most $m$ qubits and has bounded operator norm $||H_j|| \leq \poly(n)$. We are also given an lower bound $\lambda$ to the ground state energy of $H$, i.e. $\lambda \leq \lambda_{\min}(H)$. We should find a number $\chi$ such that, with probability larger than $\delta$,
\begin{equation*}
\left \vert \chi - \frac{Z(H, \beta)}{Z(H, 0)e^{-\beta \lambda }} \right \vert \leq \epsilon.
\end{equation*}
\end{definition}

\vspace{0.5 cm}

Note that the approximation is likely to be more useful in the high temperature limit, when $Z(H, \beta) \approx Z(H, 0) = 2^n$.

\begin{definition} \label{specprob}
The problem \spec $\hspace{0.08 cm}$ is defined as follows. We are given a quantum Hamiltonian acting on $n$ qubits $H = \sum_{j=1}^r H_j$, with $r = \poly(n)$ and real numbers $a$ and $b$, with $b - a \geq 1/\poly(n)$, $\delta > 0$ and $\epsilon > 0$. Each $H_j$ acts on at most $m$ qubits and has bounded operator norm $||H_j|| \leq \poly(n)$. We should find a number $\chi$ such that, with probability larger than $\delta$,
\begin{equation*}
\left \vert \chi - N_{H}(a, b) \right \vert \leq \epsilon.
\end{equation*}
\end{definition}

Out first result show that the two problems can be efficiently solved in a quantum computer.

\begin{theorem} \label{quantalgorithm}
There exist quantum algorithms polynomial in $(n, 1/\epsilon, \log(1/\delta), 2^m)$ for \partition \hspace{0.08 cm} and \spec. When restricted to classical Hamiltonians, there are classical polynomial algorithms for both problems.
\end{theorem}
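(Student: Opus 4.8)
The plan is to observe that both target quantities are \emph{averages of bounded random variables over the uniform distribution on the eigenvalues of $H$}, and to estimate these averages by a Monte Carlo procedure in which the role of classical sampling is played by the phase estimation algorithm of Section~\ref{PhaseEst} acting on the maximally mixed state. Since $Z(H,0)=\tr(\id)=2^n$, the partition-function target rewrites as
\begin{equation*}
\frac{Z(H,\beta)}{Z(H,0)e^{-\beta\lambda}} = \frac{1}{2^n}\sum_{k=1}^{2^n} e^{-\beta(\lambda_k-\lambda)},
\end{equation*}
and, because $\lambda\leq\lambda_{\min}(H)\leq\lambda_k$, each summand lies in $[0,1]$; likewise the spectral density is exactly
\begin{equation*}
N_H(a,b) = \frac{1}{2^n}\sum_{k=1}^{2^n} \mathbf{1}\!\left[\lambda_k\in[a,b]\right],
\end{equation*}
the uniform average of a $\{0,1\}$-valued variable. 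Thus in both cases it suffices to draw an eigenvalue $\lambda_k$ uniformly at random, evaluate the appropriate bounded function (namely $e^{-\beta(\lambda_k-\lambda)}$ or the indicator), and average over independent repetitions.

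For the quantum algorithm I would prepare the maximally mixed state $\id^{\otimes n}/2^n$ on the register holding $H$ (for instance by generating $n$ maximally entangled pairs and discarding the purifying halves) and run phase estimation on the evolution $U_t=e^{itH}$. The maximally mixed state is $\frac{1}{2^n}\sum_k\proj{\psi_k}$ in the eigenbasis of $H$, so the phase register collapses onto the eigenvalue $\lambda_k$ with uniform probability $1/2^n$, returning an estimate $\tilde\lambda$. To realize $U_t$ I would use the bound $\|H\|\leq r\max_j\|H_j\|\leq\poly(n)$ to shift and rescale $H$ so that its spectrum fits inside a single period of the phase register, and then implement $e^{itH}$ by a Trotter--Suzuki decomposition: each factor $e^{itH_j}$ is obtained by diagonalizing a $2^m\times 2^m$ matrix at cost $\poly(2^m)$, which is the source of the $2^m$ in the stated running time, while $r=\poly(n)$ such factors are composed for $t=\poly(n)$. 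Choosing the phase estimation precision $\eta=\Theta(\epsilon/\beta)=1/\poly(n)$ and using $|e^{-\beta(\tilde\lambda-\lambda)}-e^{-\beta(\lambda_k-\lambda)}|\leq\beta|\tilde\lambda-\lambda_k|$ together with $\beta\leq\poly(n)$ controls the systematic error of each sample to $O(\epsilon)$. Finally, Hoeffding's inequality guarantees that averaging $O(\epsilon^{-2}\log(1/\delta))$ independent samples of the bounded estimator yields $\chi$ within $\epsilon$ of the target with probability at least $1-\delta$, giving overall complexity polynomial in $(n,1/\epsilon,\log(1/\delta),2^m)$.

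The classical statement follows by specializing the same scheme. When $H$ is diagonal in the computational basis the eigenvalues are the energies $H(x)$ of bit strings $x\in\{0,1\}^n$, and sampling $x$ uniformly is trivial classically. For each sampled $x$ one computes $H(x)=\sum_j H_j(x)$ exactly in time $\poly(n,2^m)$, evaluates $e^{-\beta(H(x)-\lambda)}$ or the indicator $\mathbf{1}[H(x)\in[a,b]]$, and averages over $O(\epsilon^{-2}\log(1/\delta))$ samples, Hoeffding again supplying the error guarantee. Crucially, in the classical case the energies are computed \emph{exactly}, so the boundary and precision subtleties of the quantum case disappear.

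The step I expect to be the main obstacle is the error analysis of the spectral-density estimator near the interval endpoints. Unlike the partition function, whose integrand is Lipschitz and hence tolerant to the $\pm\eta$ phase estimation error, the indicator $\mathbf{1}[\lambda_k\in[a,b]]$ can flip whenever $\lambda_k$ lies within $\eta$ of $a$ or $b$, and the number of such boundary eigenvalues is a priori uncontrolled. I would address this by bounding the expected bias by the boundary mass $N_H(a-\eta,a+\eta)+N_H(b-\eta,b+\eta)$ and suppressing it either by slightly randomizing the endpoints $a,b$ over a window of width $1/\poly(n)$ and averaging, which smears the hard cutoff into a Lipschitz profile, or by reporting the estimate for a marginally widened interval. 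One must also ensure that the small-probability, large-deviation tail of phase estimation does not bias the estimator, which is handled by truncating those events and noting that their $O(1/\poly(n))$ total weight contributes negligibly to a $[0,1]$-valued average.
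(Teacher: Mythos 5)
Your proposal is correct and takes essentially the same route as the paper: uniform sampling of an eigenvalue -- trivially classical for diagonal Hamiltonians, and via phase estimation applied to the maximally mixed state for quantum $O(\log(n))$-local Hamiltonians -- followed by a Chernoff--Hoeffding average of the bounded estimator, with the phase-estimation precision chosen so that the systematic error is $1/\poly(n)$. Your extra discussion of the indicator's sensitivity near the endpoints of $[a,b]$ in \spec\hspace{0.08 cm} is in fact more careful than the paper's proof, which only stipulates $\delta \beta = 1/\poly(n)$ for the partition function and leaves the boundary issue for the spectral density implicit.
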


The next result demonstrates that the problems are likely to be hard for classical computation. 

\begin{theorem} \label{DQC1hard}
\partition \hspace{0.08 cm} and \spec \hspace{0.08 cm} with $\delta > 1/2$, $\epsilon = 1/\poly(n)$, $m = O(\log(n))$, and $\lambda = \sum_{j=1}^r \lambda_{\min}(H_j)$ are $\DQC$-hard. 
\end{theorem}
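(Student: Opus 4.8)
\textbf{Proof proposal for Theorem \ref{DQC1hard}.}

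The plan is to reduce a $\DQC$-complete problem to both \partition \hspace{0.08 cm} and \spec, using the Hamiltonian construction from Kitaev's $\QMA$-completeness proof reviewed in section \ref{lochamprob}, but adapted to the one-clean-qubit setting. The key observation is that in the analysis of Hamiltonian (\ref{HamiltonianQMA}), the perturbation theory approach of Theorem \ref{perturbation} does not merely pin down the ground-state energy — it shows that the \emph{entire} low-lying spectrum of the encoding Hamiltonian mirrors the spectrum of the effective Hamiltonian $H_{\eff}$. This is precisely the extra information one needs to control eigenvalue counting functions and partition functions, rather than just minimum eigenvalues. First I would take an arbitrary promise problem in $\DQC$ and its associated polynomial-time-generated circuit $Q_{L_n} = U_T \cdots U_1$ acting on $q(n)$ qubits, one clean and the rest maximally mixed, and encode it into a local Hamiltonian $H$ exactly as in Eqs. (\ref{HamiltonianQMA})--(\ref{Hkit3}).

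The central technical step is to relate the acceptance probability of the one-clean-qubit computation to a spectral quantity of $H$. Recall the history-state basis $\ket{\eta_i}$ of Eq. (\ref{etabasis}), indexed by the $2^N$ computational basis inputs $\ket{i}$. Because the $\DQC$ input is the maximally mixed state on all but one qubit, the relevant object is not a single history state but the \emph{uniform mixture} $\frac{1}{2^{q(n)-1}}\sum_i \ketbra{\eta_i}{\eta_i}$ over all inputs consistent with the clean qubit being $\ket{0}$. The acceptance probability $\tr(Q_{L_n}(\ketbra{0}{0}\otimes \id/2^{q(n)-1})Q_{L_n}^{\cal y}(\ketbra{1}{1}\otimes \id))$ is then an average of $\bra{\eta_i}H_{\text{out}}\ket{\eta_i}$-type terms, which in turn, via Theorem \ref{perturbation} applied in two stages (first eliminating $H_{\text{clock}},H_{\text{prop}}$, then $H_{\text{in}}$), controls how many eigenvalues of $H$ fall below a threshold $\delta/2$ versus above $1-\delta-O(\delta^2)$. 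The point is that the number of low-lying eigenvalues of the encoded Hamiltonian — i.e. the quantity $2^{q(n)} N_H(a,b)$ for suitable $a,b$ with $b-a \geq 1/\poly(n)$ — directly counts the weighted accepting branches of the clean-qubit computation, so that estimating $N_H(a,b)$ to additive accuracy $1/\poly(n)$ decides the $\DQC$ promise. This establishes $\DQC$-hardness of \spec. I would need $O(\log n)$-local terms rather than strictly $5$-local ones because the clock register must be read out in a way compatible with the mixed-input averaging, which is why the theorem is stated with $m = O(\log(n))$.

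For \partition \hspace{0.08 cm} the strategy is analogous but requires translating the eigenvalue-counting statement into a statement about $Z(H,\beta)/(Z(H,0)e^{-\beta\lambda})$. Since the spectrum of $H$ splits, up to the $O(\delta^2)$ corrections from Theorem \ref{perturbation}, into a well-understood low-lying part (mirroring $H_{\eff}$) and a high-energy part separated by the gap $\Delta \geq J_{\text{prop}}\Omega(T^{-3})$, I would choose $\beta = \poly(n)$ large enough that the high-energy eigenvalues are exponentially suppressed in the ratio, so that the normalized partition function becomes dominated by, and hence reveals, the number of near-ground eigenvalues. With the choice $\lambda = \sum_{j=1}^r \lambda_{\min}(H_j)$ — which is efficiently computable since each $H_j$ is local and serves as the claimed lower bound on $\lambda_{\min}(H)$ — the normalization $e^{-\beta\lambda}$ is fixed by the input, and a $1/\poly(n)$-additive estimate of the ratio again suffices to read off the accepting weight of the computation. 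The formal link is that $Z(H,\beta)/(Z(H,0)e^{-\beta\lambda}) = 2^{-q(n)}\sum_k e^{-\beta(\lambda_k - \lambda)}$, whose dominant contribution comes from eigenvalues within $O(1/\beta)$ of $\lambda_{\min}$, and these are exactly the ones tracked by the effective Hamiltonian.

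The main obstacle I anticipate is the two-fold delicacy in the mixed-input analysis. Unlike the $\QMA$ case, where one optimizes over a single cleverly chosen witness $\ket{\xi}$, here the input is forced to be maximally mixed, so the soundness/completeness gap must survive \emph{averaging} over all $2^{q(n)-1}$ history states simultaneously, and I must verify that the perturbative error bounds of Theorem \ref{perturbation}, which were derived eigenvalue-by-eigenvalue, aggregate correctly when summed against the counting function or the Boltzmann weights without the $O(\delta^2)$ terms accumulating to something larger than the target $1/\poly(n)$ window. A related subtlety is choosing the scaling of $J_{\text{in}}, J_{\text{prop}}, J_{\text{clock}}$ (and hence the norm of $H$, which affects the achievable $\beta$) so that all three windows — the gap $\Delta$, the effective-Hamiltonian spectral splitting, and the perturbative accuracy $\epsilon = \delta^2$ — remain simultaneously inverse-polynomial. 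Finally, I must confirm that for \emph{classical} Hamiltonians the same encoding collapses to something tractable, so that Theorem \ref{quantalgorithm}'s classical claim is consistent with hardness arising genuinely from the quantum ($O(\log n)$-local, non-commuting) structure; this is what forces the hardness to require quantum Hamiltonians and is the conceptual heart of why the result is interesting.
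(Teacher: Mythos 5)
Your high-level strategy — encode the $\DQC$ circuit \`a la Kitaev, use the fact that Theorem \ref{perturbation} controls the \emph{whole} low-lying spectrum, and read off the acceptance probability $\mu_{\YES}$ from aggregate spectral data over the exponentially many history states of the maximally mixed input — is exactly the paper's strategy. But there is a concrete quantitative gap at the very first step: you propose to use the construction of Eqs. (\ref{HamiltonianQMA})--(\ref{Hkit3}) ``exactly,'' i.e.\ with the \emph{unary} clock on $N+T$ qubits. With that choice the low-lying band carrying the answer consists of $\sim 2^{N-1}$ eigenvalues out of $2^{N+T}$, so its weight both in $N_H(a,b)$ (which is normalized by the full dimension) and in $Z(H,\beta)/Z(H,0)$ is $2^{-\Omega(T)}$ — exponentially below the $1/\poly(n)$ additive window the theorem grants, and no choice of couplings rescues this. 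The paper's essential idea, which your proposal misses, is to encode time in \emph{binary}: the clock then occupies only $\log T$ qubits, $H_{\text{clock}}$ disappears entirely, the Hamiltonian becomes $O(\log n)$-local, and the total dimension drops to $T 2^N$, so the low band is an inverse-polynomial fraction $1/(2T)$ of the spectrum. This — not your suggested ``clock read-out compatible with mixed-input averaging'' — is the true origin of $m=O(\log n)$ in the statement; the paper explicitly notes that the $5$-local unary construction would produce a normalization $2^{N+T}$ ``rendering the estimation of $\mu_{\YES}$ impossible.''

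Your temperature choice for \partition\hspace{0.08 cm} is also the wrong way around. You take $\beta=\poly(n)$ large so the ratio ``reveals the number of near-ground eigenvalues,'' but that number is fixed and known ($2^{N-1}$); the answer sits in the \emph{average} of the low-band eigenvalues $\lambda_i \approx 1-p_i$, which spread over an $O(1)$ window. At $\beta = \poly(n)$ the Boltzmann weights vary by factors $e^{-\Omega(\poly(n))}$ across that window, so the ratio only sees eigenvalues within $O(1/\beta)$ of the minimum — essentially the near-perfectly-accepting branches — which does not determine $\mu_{\YES}$ and is not robust to the $1/\poly(n)$ promise gap. The paper instead takes $\beta = 1/\poly(n)$ \emph{small}, with $J_{\text{in}}=\poly(n)$ large enough that $\beta J_{\text{in}}/2(T+1)$ is still polynomially large: the high band is then exponentially suppressed while on the low band $e^{-\beta\lambda_i} = 1-\beta\lambda_i + O(\beta^2\lambda_i^2)$ linearizes, and the normalized partition function yields $1-\mu_{\YES}$ to $1/\poly(n)$ after dividing out $\beta$ (your observation that $\lambda = \sum_j \lambda_{\min}(H_j)$ is efficiently computable is correct; in the construction each term has zero minimum eigenvalue, so $\lambda = 0$). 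A smaller repair of the same kind is needed in your \spec\hspace{0.08 cm} step: a single window $N_H(a,b)$ does not ``directly count weighted accepting branches,'' since the acceptance probabilities are continuous — one must convert counting data over the $O(1)$-wide low band into its average energy, which is only a $1/\poly(n)$-visible quantity once the binary-clock normalization $T2^N$ is in place.
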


An interesting open question is whether one can prove $\DQC$-hardness for strictly local Hamiltonians ($m = O(1)$). Although we believe it should indeed be possible, no construction has been found so far. An even more challenging open problem would be to prove the hardness result for one dimensional Hamiltonians. This would be particularly interesting in the case of \spec \hspace{0.08 cm}, as Osborne \cite{Osb06} recently found a classical algorithm for it, in the case of one dimensional systems, polynomial in $n$ but exponential in $\epsilon^{-1}$. $\DQC$-hardness for one dimensional Hamiltonians would then imply that, unless $\DQC = \P$, there is no classical algorithm polynomial in $n$ and $\epsilon^{-1}$, i.e. the costly error scaling of Osborne's algorithm is unavoidable.


\section{Proof of Theorem \ref{quantalgorithm}}

The algorithms are very simple and are based on the idea that one can uniformly sample an eigenvalue of classical Hamiltonians efficiently and, with the quantum phase estimation algorithm, also of quantum Hamiltonians. Let us first present the classical algorithm and then extend the result to quantum Hamiltonians. 

Given a classical local Hamiltonian $H$ of $n$ spins, for each $k \in \{1, ..., r\}$, with $r = \poly(n)$, we pick $n$ uniform random bits $\eta_k := (i_1^k, ..., i_{n}^k)$ and compute $\chi_k := e^{-\beta H(\eta_k)} / e^{-\beta \lambda}$. We then compute the random variable  
\begin{equation} \label{av1}
\mu_v := \frac{1}{v}\sum_{k=1}^{v} \chi_k.
\end{equation}
It is clear that $\mu_v$ converges to the expectation value of the random process given by
\begin{equation} \label{exp1}
\mu := \sum_{i_1=0}^1... \sum_{i_n=0}^1  \frac{1}{2^n} \frac{e^{-\beta H(i_1, ..., i_n)}}{e^{-\beta \lambda}} = \frac{Z(H, \beta)}{Z(H, 0) e^{-\beta \lambda}}, 
\end{equation}
which is exactly the quantity we are interested in. To bound the probability that Eq. (\ref{av1}) deviates substantially from the expectation value given by Eq. (\ref{exp1}), we use Chernoff-Hoeffding's inequality: Given the sum of $r$ independent random variables $\chi_1, ..., \chi_n$ attaining values in the interval $[0, 1]$ (which is the case since $\lambda$ is a lower bound to the ground state energy), it follows that
\begin{equation*}
Pr\left( \left \vert \frac{\sum_{k=1}^v \chi_k}{v} - \mu  \right \vert \geq \epsilon  \right) \leq e^{- 2 v \epsilon^2},
\end{equation*}
which is exponential small as long as $\epsilon = 1/\poly(n)$. 

For the problem \spec $\hspace{0.1 cm}$, the reasoning is basically the same. For $k \in \{1,..., r \}$ we again choose $n$ uniform random bits $\eta_k := (i_1^k, ..., i_{n}^k)$ and compute 
\begin{equation*}
\chi_k := 
\begin{cases}
1 & H(\eta_k) \in [a, b] \\
0 & \text{otherwise}
\end{cases}
\end{equation*}
By Chernoff-Hoeffding's inequality, once more, the sum 
\begin{equation*}
\frac{1}{v}\sum_{k=1}^{v} \chi_k
\end{equation*}
with $v = \poly(n)$ is $\epsilon$-close to $N_H(a, b)$ with exponentially high probability as long as $\epsilon = 1 / \poly(n)$.

The quantum algorithms for quantum $O(\log(n))$-local Hamiltonians use the same idea as above, the only difference is that we have to find an efficient way to compute a random eigenvalue of a quantum local Hamiltonian. To this aim we make use of the quantum phase estimation algorithm, discussed in section \ref{PhaseEst}. As explained in section \ref{PhaseEst}, given a quantum $O(\log(n))$-local Hamiltonian $H$ and an eigenvector $\ket{u}$ of it, the phase estimation algorithm gives in $\poly(n)$ time an estimate of the corresponding eigenvalue $u$ to accuracy $1/\poly(n)$, with probability exponentially close to one. If instead of inputing $\ket{u}$, we input the maximally mixed state, then we obtain, with exponentially small probability of error, $\lambda_i \pm \delta$, where the indice $i$ is taken at random uniformly from $\{ 1, ..., 2^n \}$ and $\delta = 1/\poly(n)$. The analysis then goes as in the classical case. The only detail we now must take care of is to choose $\delta$ sufficiently small such that $\delta\beta = 1/\poly(n)$, in order to ensure that the errors in the estimated eigenvalues of the Hamiltonian only give an error of at most $1/\poly(n)$ to the quantity of interest.

\section{Proof of Theorem \ref{DQC1hard}} \label{DQC1hardsection}

In this section we prove Theorem  \ref{DQC1hard}. The main idea is to use Kitaev's construction of encoding a quantum circuit into a local Hamiltonian (see section \ref{lochamprob}). The key difference, however, is that while in the original $\QMA$-completeness result the solution of the problem is encoded in the ground state energy of the Hamiltonian, we will encode the solution in the average energy of the low-lying spectrum of the Hamiltonian.

Consider a problem $L \in \DQC$. Then, as discussed in section \ref{DQC1}, for every $r$ bit instance $x \in L$, there is a quantum circuit $U_x = U_T...U_2U_1$ composed of $T = \poly(r)$ quantum gates and operating on $N = \poly(r)$ qubits such that if $x \in L_{\YES}$, 
\begin{equation*}
\mu_{\YES} := \tr \left( \left( U_x\ket{0}\bra{0}\otimes \frac{\id}{2^{N-1}}U_x^{\cal y} \right) \ket{1}\bra{1}\otimes \id \right) \geq a
\end{equation*}
while if $x \in L_{\NO}$, 
\begin{equation} \label{muno}
\mu_{\YES} = \tr \left( \left( U_x\ket{0}\bra{0}\otimes \frac{\id}{2^{q(n)}}U_x^{\cal y} \right) \ket{1}\bra{1}\otimes \id \right) \leq b,
\end{equation} 
where $a - b = \Omega(1/\poly(n))$. It is hence clear that if we can estimate $\mu_{\YES}$ to polynomial accuracy, then we can determine the solution of the problem. 

Let us define the $O(\log(n))$-local Hamiltonian associated to the circuit, whose average energy of its low-lying spectrum gives $\mu_{\YES}$ up to the desired precision. The Hamiltonian, acting on $n = N + \log(T)$ qubits, is actually identical to the one used in Kitaev's original $\QMA$-completeness \cite{KSV02} proof and is given by 
\begin{equation} \label{HamiltonianDQC1}
H_{\DQC} = H_{\text{out}} + J_{\text{in}} H_{\text{in}} + J_{\text{prop}} H_{\text{prop}},     
\end{equation}   
where
\begin{equation*}
H_{\text{out}} = (T + 1)\ket{0}\bra{0}_1 \otimes \ket{T}\bra{T}, \hspace{1 cm} H_{\text{in}} = \ket{1}\bra{1}_1 \otimes \ket{0}\bra{0},
\end{equation*}
and
\begin{equation*}
H_{\text{prop}} = \sum_{i=1}^T H_{\text{prop}, t},
\end{equation*}
with
\begin{eqnarray*}
H_{\text{prop}, t} &=&  \id \otimes \ket{t-1}\bra{t-1} - U_t \otimes \ket{t}\bra{t-1} - U_i^{\cal y} \otimes \ket{t-1}\bra{t} + \id \otimes \ket{t}\bra{t}. 
\end{eqnarray*}
The Hamiltonian is very similar to the one given by Eq. (\ref{HamiltonianQMA}), the only difference being that the time is now encoded in binary form. On the one hand, we do not need a clock term anymore, as any state of the clock register is valid. On the other, we only need $\log(T)$ qubits to store the time. The price for that i an increase in the locality of the Hamiltonians from $5$ body terms to $O(\log(n))$. Note that the number of qubits in the Hamiltonian is only a logarithmic larger than the number of qubits in the original circuit, in contrast to the case where we use the unary representation, in which the size of the Hamiltonian is $N + T$ (see section \ref{lochamprob}). As it will become clear later in the proof, this is the reason why we need to employ $O(\log(n))$-local Hamiltonians. Although Hamiltonian \ref{HamiltonianDQC1} is identical to the one used by Kitaev for proving $\QMA$-completeness of estimating the ground state energy, some of its registers have a different interpretation in the $\DQC$ case: the initial clean qubit takes the role of the initial ancillas in the state $\ket{0}$, while the maximally mixed qubits takes the role of the proof. What makes the difference from the ability to solve any problem in $\QMA$ to \textit{merely} any problem in $\DQC$ is that in the former we must access a single eigenvalue of the Hamiltonian, while in the latter, as we shall see, the solution is encoded in an exponential number of eigenvalues of the Hamiltonian. 

The analysis of the spectral properties is identical to the one carried out in section \ref{lochamprob}. Indeed, we know that the zero eigenspace of $H = J_{\text{prop}} H_{\text{prop}}$ is spanned 
\begin{equation*}
\ket{\eta_i} = \frac{1}{\sqrt{T + 1}} \sum_{t=0}^T U_t ... U_1\ket{i}\otimes \ket{t},
\end{equation*}
for $i \in \{0, ..., 2^{N} - 1\}$
Using the same reasoning of section \ref{lochamprob}, it also follows that $H$ has a spectral gap $\Delta \geq J_{\text{prop}}\Omega(T^{-3})$. By choosing $J_{\text{prop}} = \poly(n)$ sufficiently large, such that $\Delta = \poly(n)$, and following the perturbation theory approach taken in section \ref{lochamprob}, we find that for energies below $\Delta/2$, the spectrum of (\ref{HamiltonianDQC1}) is approximated to $1/\poly(n)$ accuracy by the spectrum of $H_{\eff} = (T + 1)\Pi H_{\text{out}} \Pi + J_{\text{in}} \Pi H_{\text{in}} \Pi$, where $\Pi$ is the projector onto the groundspace of $H$. 

Similarly to the $5$-local case, the ground-space of $\Pi H_{\text{in}} \Pi$ is spanned by states of the form
\begin{equation} \label{etabasis2DQC}
\ket{\eta_i} = \frac{1}{\sqrt{T + 1}} \sum_{t=0}^T U_t ... U_1\ket{0,i}\otimes \ket{t},
\end{equation}
where $i \in \{ 0, ..., 2^{N-1} - 1 \}$. We now apply Theorem \ref{perturbation} to $H = J_{\text{in}} \Pi H_{\text{in}} \Pi$ and $V = (T + 1)\Pi H_{\text{out}} \Pi$. Note that any eigenvector of $H$ not in the groundspace has energy at least $J_{\text{in}}/ (T + 1)$. Choosing $J_{\text{in}} = \poly(n)$ sufficiently large such that $J_{\text{in}}/ (T + 1) = \poly(n)$, it follows that the for energies below $J_{\text{in}}/2(T + 1)$, the spectrum of $H_{\eff}$, and hence also the low-lying spectrum of the original Hamiltonian (\ref{HamiltonianDQC1}), is $1/\poly(n)$ close to the spectrum of $V = (T + 1)\tilde{\Pi} H_{\text{out}}\tilde{\Pi}$, where $\tilde{\Pi}$ is the projection onto the $2^{N-1}$-dimensional space ${\cal S}$ spanned by the vectors given by Eq. (\ref{etabasis2DQC}). 

From Eq. (\ref{etabasis2DQC}), it is clear that all the $2^{N-1}$ eigenvalues of $V$ in the subspace ${\cal S}$ have $O(1)$ energy. Therefore, the average energy of the first $2^{N-1}$ eigenstates of $H$ is $1/\poly(n)$ close to the average energy of the $2^{N-1}$ eigenvalues of $V$ in ${\cal S}$. It is also clear that $V$ can be diagonalized by a set of vectors of the form
\begin{equation*} 
\ket{\tilde{\eta}_i} = \frac{1}{\sqrt{T + 1}} \sum_{t=0}^T U_t ... U_1\ket{0,\psi_i}\otimes \ket{t},
\end{equation*}
where $\{ \ket{\psi_i} \}_{i=1}^{2^{N-1}}$ forms an arbitrary basis for $(\mathbb{C}^2)^{\otimes N - 1}$. The average energy of $V$ is thus given by
\begin{eqnarray*}
\frac{1}{2^{N-1}} \sum_{i=1}^{2^{N-1}} \lambda_i(V) &=& \frac{1}{2^{N-1}}\sum_{i=1}^{2^{N-1}} \bra{\tilde{\eta}_i} V \ket{\tilde{\eta}_i} \\ &=& \frac{1}{2^{N-1}} \sum_{i=1}^{2^{N-1}} \bra{0,\psi_i} U_x^{\cal y} ( \ket{0}\bra{0} \otimes \id^{\otimes N - 1}) U_x \ket{0,\psi_i} \nonumber \\ &=& \frac{1}{2^{N-1}} \tr \left(\left(  U_x \ket{0}\bra{0} \otimes \frac{\id^{\otimes N - 1}}{2^{N-1}} U_x^{\cal y} \right) \ket{0}\bra{0} \otimes \id^{\otimes N - 1} \right) \nonumber \\ &=& 1 - \mu_{\YES}. 
\end{eqnarray*}
Therefore, if we can estimate to polynomial accuracy the average energy of the first $2^{N-1}$ eigenvalues of Hamiltonian (\ref{HamiltonianDQC1}), we can obtain $\mu_{\YES} \pm 1/\poly(n)$ and determine whether $x \in L_{\YES}$ or not.   

In the remainder of the proof we show that the ability to solve both \partition \hspace{0.08 cm} and \spec \hspace{0.08 cm} allow us to compute such average energy. 

Let us start with \spec \hspace{0.08 cm}, for which the reduction is simpler. From the analysis above we find that there is an energy gap in the spectrum of $H_{\DQC}$ from $J_{\text{in}}/2(T + 1)$ to $J_{\text{in}}/(T + 1)$. If we can solve \spec \hspace{0.08 cm}, we can calculate
\begin{equation}
N_{H_{\DQC}}(0, J_{\text{in}}/2(T + 1)) \pm 1/\poly(n) = \frac{1}{2^{N + \log(T)}} \sum_{i=1}^{2^{N-1}} \lambda_i \pm 1/\poly(n).
\end{equation}
As $T = \poly(r)$ and $N = \poly(r)$, the previous equation implies that we can obtain
\begin{equation*}
\frac{1}{2^{N-1}} \sum_{i=1}^{2^{N-1}} \lambda_i \pm 1/\poly(n),
\end{equation*}
which by the discussion of the previous paragraphs is $1/\poly(n)$ close to $1-\mu_{\YES}$. Note that had we used the 5-local construction, we would have the normalization factor of $2^{N + T}$, instead of $T2^{N}$, rendering the estimation of $\mu_{\YES}$ impossible.

For \partition \hspace{0.08 cm}, we first note that each $O(\log(n))$-local term in Eq. (\ref{HamiltonianDQC1}) has a zero minimum eigenvalue. Therefore, $\lambda = \sum_i \lambda_i(H_i) = 0$. If we can solve \partition \hspace{0.08 cm} for $H_{\DQC}$, then we can calculate, to $1/\poly(n)$ accuracy,
\begin{equation*}
\frac{1}{2^{n}} \sum_{i=1}^{2^n} e^{-\beta \lambda_i},
\end{equation*}
for any $1/\poly(n) \leq \beta \leq \poly(n)$. As $n = N + \log(T)$, it follows that we can compute
\begin{equation} \label{twostepsbehind}
\frac{1}{2^{N-1}} \sum_{i=1}^{2^n} e^{-\beta \lambda_i} \pm 1/\poly(n),
\end{equation}
We have seen that the value of the first $2^{N-1}$ eigenvalues of $H_{\DQC}$ is $O(1)$, while any other eigenvalue is larger than $J_{\text{in}}/2(T + 1) = \poly(n)$. Hence, we can choose a $\beta = 1/\poly(n)$ such that $\beta J_{\text{in}}/2(T + 1) = \poly(n)$. Then, from Eq. (\ref{twostepsbehind}) it follows that we can obtain, with precision $1/\poly(n)$,
\begin{equation*}
\frac{1}{2^{N-1}} \sum_{i=1}^{2^{N-1}} (1 - \beta \lambda_i(H_{\DQC}) + O((\beta \lambda_i)^2) + \frac{1}{2^{N-1}} \sum_{i=2^{N-1}+1}^{2^{n}}O(e^{-\poly(n)}).            
\end{equation*}
As
\begin{equation*}
\frac{1}{2^{N-1}} \sum_{i=1}^{2^{N-1}}  O(\beta \lambda_i^2) + \beta \frac{1}{2^{N-1}} \sum_{i=2^{N-1}+1}^{2^{n}}O(e^{-\poly(n)}) \leq 1/\poly(n),
\end{equation*}
we are able to estimate
\begin{equation*}
\frac{1}{2^{N-1}} \sum_{i=1}^{2^{N-1}} \lambda_i \pm 1/\poly(n), 
\end{equation*}
which is equal to $\mu_{\YES} \pm 1/\poly(n)$.

\chapter{The Complexity of Poly-Gapped Quantum Local Hamiltonians} \label{QCMA}

\section{Introduction}

In chapter \ref{complexity} we reviewed Kitaev's theorem on the $\QMA$-completeness of the \locHam \hspace{0.05 cm} problem \cite{KSV02}, together with further developments \cite{KR03, KKR06, OT05} culminating in the $\QMA$-completeness of the problem already for Hamiltonians of particles arranged in a line \cite{AGIK07}. The importance of this result stems not only from the fact that it started the theory of $\QMA$-completeness and remains perhaps as the most prominent $\QMA$-complete problem, but also from the key role in physics of local Hamiltonians and their ground-state energy. Although it has been known already from the work of Barahona \cite{Bar82} that the determination of the ground energy of many-body Hamiltonians, even classical ones, can be a computationally hard problem, to pinpoint the complexity of calculating quantum ground-state energies is both of fundamental interest and of practical importance. 

Indeed, \locHam \hspace{0.05 cm} for one-dimensional classical systems is in $\P$ and whether the same is true for quantum Hamiltonians remained open until the $\QMA$-completeness result of Aharanov, Gottesman, and Kempe \cite{AGIK07}\footnote{Assuming, of course, $\P \neq \QMA$.}. In another example of the usefulness of considering hardness with respect to $\QMA$, Schuch and Verstraete \cite{SV07} proved that determining the ground energy for a system of electrons interacting via the Coulomb interaction and with local magnetic fields is $\QMA$-hard, by showing that \locHam \hspace{0.05 cm} can be reduced to it. A consequence of their result is that the existence of an efficiently computable universal function for density functional theory \cite{DG95}, arguably the most successful approach to the simulation of interacting electrons systems, would imply $\NP = \QMA$ and is therefore very unlikely. One should appreciate that such a conclusion would not have been possible case the problem was shown to be \textit{merely} $\NP$-hard. 

Considering these results, the analysis of the impact of particular properties of many-body Hamiltonians to the complexity of \locHam \hspace{0.05 cm} appears as a very interesting question. A property that seems to have a crucial role in this context is the \textit{spectral gap}, given by the difference of the ground and the first excited energy levels, $\Delta := \lambda_{1}(H) - \lambda_0(H)$. 

The existence of a gap $\Delta = O(1)$ has direct consequences to the properties of local Hamiltonians, including the exponential decay of ground-state correlation functions \cite{HK06}. Moreover, in one dimension, it is conjectured that to determine local properties of gapped Hamiltonians is easy, although no polynomial algorithm is presently known. A groundbreaking result by Hastings shows that ground states of gapped 1-D Hamiltonians have at least an efficient classical description, as a Matrix-Product-State (MPS) of polynomial bond dimension\footnote{A state $\ket{\psi} \in (\mathbb{C}^d)^{\otimes n}$ has a MPS representation with bond dimension $D$ if it can be written as
\begin{equation}
\ket{\psi} = \sum_{i_1,...,i_n=1}^d \tr(A_{i_1}^{[1]}...A_{i_n}^{[n]}) \ket{i_1,...,i_n},
\end{equation}
with $A_i^{[k]}$ $D \times D$ matrices \cite{FNW92, PVWC07}. Note that only $ndD^2$ complex numbers are needed to specify the state.} \cite{Has07}. Since expectation values of local observables of a MPS can be calculated in polynomial time in the number of sites and in its bond dimension (see e.g. \cite{PVWC07}), Hastings result implies that for one dimensional gapped Hamiltonians, \locHam \hspace{0.05 cm} cannot be $\QMA$-complete, unless $\QMA = \NP$. 

An interesting intermediate regime for the spectral gap is the one in which $\Delta = 1/\poly(n)$, where $n$ is the number of sites of the model. Even though poly-gapped Hamiltonians do not have the same distinctive properties as gapped Hamiltonians, such as exponential decaying correlation functions, the existence of this constraint on the gap does seem to have an impact on the complexity of determining low-energy properties of the model.

For instance, in Ref. \cite{SWVC07} Schuch, Wolf, Verstraete, and Cirac considered the class of problems which can be solved in a quantum computer with the help of an oracle which prepares one copy of the ground-state of a local Hamiltonian\footnote{With the promise that it has a unique ground state and that its ground-state energy if zero.}. For general local Hamiltonians, they showed that the class of such problems is precisely $\PP$\footnote{PP is the class of decision problems solvable by a probabilistic Turing machine in polynomial time, with an error probability of less than 1/2 for all instances. Note that such probability of error might be exponential close to 1/2 \cite{AB08}.}, while for poly-gapped Hamiltonians, the class is contained in $\QMA$. Hence, unless $\PP = \QMA$, which is considered unlikely \cite{Vya03}, we find that the promise of a polynomial gap has significant implications in this setting. 

A second example is the determination of the ground-state expectation value of local observables, considering local Hamiltonians that are adiabatically connected to a Hamiltonian with a simple ground state. We say that two Hamiltonians $H$ and $H_0$ are adiabatically connected with a minimum gap $\tilde{\Delta}$ if, for every $s \in [0, 1]$, $\Delta(sH + (1 - s)H_0) \geq \tilde{\Delta}$. Consider a local Hamiltonian $H$ and suppose we want to compute the ground-state expectation value of a local observable, given that $H$ is adiabatically connected to a Hamiltonian $H_0$ whose ground state is a known product state. If we make no assumptions on the minimum gap $\tilde{\Delta}$, then the notion of adiabatic connectivity is vacuous and the problem is again as hard as $\PP$, while it itself belongs to $\PP$. If we require that $\tilde{\Delta} = 1/\poly(n)$, then the complexity of computing such expectation values is precisely $\BQP$, as shown by Aharonov \textit{et al} \cite{AvDK+07}\footnote{This statement is the well-known equivalence of the adiabatic quantum computation model and the circuit model proved in \cite{AvDK+07}. The proof actually uses a variant of Kitaev's encoding of a quantum circuit into a local Hamiltonian discussed in chapter \ref{complexity}.}. Finally, for $\tilde{\Delta} = O(1)$ and Hamiltonians defined on a lattice of bounded dimension, Osborne showed that the expectation value of local observables can be obtained efficiently in a classical computer\footnote{One also needs the promise that the ground-state energy is zero throughout the adiabatic evolution. However, even under this further assumption we still find $\PP$ and $\BQP$ as the complexity of the problem in the unrestricted and $1/\poly(n)$ cases.} \cite{Osb07}.

What are the consequences of a polynomial gap promise to \locHam? Such a question was raised in Refs. \cite{SWVC07, AGIK07} and is the focus of this chapter. For dimensions larger than two, we know that the problem is $\NP$-hard \cite{Bar82}, while the best lower in one dimension was recently presented by Schuch, Verstraete, and Cirac in Ref. \cite{SCV08}. They tackled this question with the further promise that the ground state is described by a MPS of polynomial bond dimension and showed that, in this case, the problem is as hard as $\UNP\cap \UcoNP$. The class $\UNP$ is the subclass of $\NP$ containing the problems for which $\YES$ instances have a unique witness, and $\UcoNP$ is the class of problems for which $\NO$ instances have a unique witness. As the intersection $\UNP\cap \UcoNP$ includes $\FACTORING$, it is believed not be equal to $\P$. Note, however, that it could well be the case that the problem has a quantum algorithm\footnote{At least $\FACTORING$ is not a candidate for placing the class outside $\BQP$!}. In fact, if any poly-gapped Hamiltonian were adiabatically connected, with a known polynomial minimal gap, to a known Hamiltonian with a simple ground-state, the problem would indeed be in $\BQP$\footnote{The $\NP$-hardness of 2-D classical Hamiltonians with a constant gap shows that this cannot be the case for dimensions higher than one, unless $\NP \subseteq \BQP$. For one dimensional Hamiltonians, however, this possibility had not been refuted yet.}. 

In this chapter we present a sharper lower bound on the complexity of this problem and show that such a restriction of \locHam \hspace{0.05 cm} to poly-gapped Hamiltonians is hard for $\QCMA$\footnote{With respect to probabilistic Cook reductions.} (see section \ref{QMA}). As it is unlikely that $\QCMA = \BQP$, we find that the problem appears to be intractable even for quantum computation. Crucial in our approach is the celebrated Valiant-Vazirani Theorem \cite{VV86, AB08}, concerning the hardness of $\NP$ problems with a unique witness. 

The structure of this chapter is the following. In section \ref{defQMAchapter} we define the problem we are interested in, state the main results and outline some of its consequences. In section \ref{quantumVV} we review the Valiant-Vazirani Theorem and prove Theorem  \ref{VVquantum}. Finally, we prove Theorem \ref{gappedlocHam} in section \ref{proofgappedloc}.

\section{Definitions and Main Results} \label{defQMAchapter}

The problem we are interested in can be stated as follows.

\begin{definition}
\gappedlocHam: We are given a $k$-local Hamiltonian on $(\mathbb{C}^d)^{\otimes n}$, $H = \sum_{j=1}^r H_j$, with $r = \poly(n)$ and $d = O(1)$. Each $H_j$ has a bounded operator norm $|| H_j || \leq \poly(n)$ and its entries are specified by $\poly(n)$ bits. We are also given two constants $a$ and $b$ with $b - a \geq 1/\poly(n)$ and and lower bound $\Delta$ on the spectral gap of $H$ such that $\Delta \geq 1/\poly(n)$. In $\YES$ instances, the smallest eigenvalue of $H$ is at most $a$. In $\NO$ instances, it is larger than $b$. We should decide which one is the case.
\end{definition}

In the sequel we prove the following theorem.

\begin{theorem} \label{gappedlocHam}
For $d=k=2$, \gappedlocHam \hspace{0.08 cm} is $\QCMA$-hard under probabilistic Cook reductions. The same is true for $d=12$, $k=2$ and one-dimensional Hamiltonians. Therefore, there is no polynomial quantum algorithm for \gappedlocHam, unless $\QCMA = \BQP$.   
\end{theorem}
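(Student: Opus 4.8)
The plan is to combine the quantum Valiant--Vazirani theorem (Theorem \ref{VVquantum}) with the circuit-to-Hamiltonian encodings reviewed in section \ref{lochamprob}, exploiting the fact that \emph{uniqueness of a classical witness translates into non-degeneracy of the ground state, and hence into a polynomial spectral gap}. Concretely, I would start from an arbitrary promise problem $A \in \QCMA$ and first apply Theorem \ref{VVquantum} to obtain a probabilistic reduction from $A$ to the unique-witness class $\UQCMA$. Since $\QCMA$ witnesses are classical bit strings, the Valiant--Vazirani hashing procedure \cite{VV86}---adding $O(m)$ random affine constraints over $\mathbb{F}_2$ to isolate, with non-negligible probability, a single surviving satisfying assignment---can be implemented by augmenting the quantum verifier with a classical check of these constraints, keeping the problem in $\QCMA$. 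Running over all $m+1$ guesses for the logarithm of the witness-set size produces polynomially many $\UQCMA$ instances such that, if $A$ answers $\YES$, then with good probability at least one instance has a \emph{unique} accepting witness, while if $A$ answers $\NO$ every produced instance has no accepting witness at all. This is exactly the kind of output a probabilistic Cook reduction is allowed to postprocess.

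Given such a $\UQCMA$ instance with verifier circuit $U = U_T \cdots U_1$, I would write down Kitaev's local Hamiltonian of Eqs.\ (\ref{HamiltonianQMA})--(\ref{Hkit3}), reducing to $2$-local terms as in \cite{KKR06}. Because the witness register is read in the computational basis, the history states $\ket{\eta_i}$ of Eq.\ (\ref{etabasis}) associated with distinct classical witnesses $i$ are mutually orthogonal, and the perturbation analysis of Theorem \ref{perturbation} shows that, below the propagation gap $\Delta = \Omega(T^{-3})$, the spectrum of the full Hamiltonian is approximated to accuracy $O(\delta^2)$ by an effective Hamiltonian that is \emph{diagonal} in the witness basis, the energy of witness $i$ being $\bra{\eta_i}H_{\text{out}}\ket{\eta_i}$, i.e.\ its rejection probability. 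A unique accepting witness therefore yields a single low-energy eigenstate separated from all remaining, rejecting witnesses by a gap of order $T(1-2\delta)$; amplifying the $\QCMA$ completeness--soundness gap so that $\delta$ is exponentially small makes the $O(\delta^2)$ perturbative error negligible against this gap.

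The delicate point, which I expect to be the main obstacle, is guaranteeing that the gap promise $\Delta \geq 1/\poly(n)$ holds in \emph{both} the $\YES$ and $\NO$ cases: in a $\NO$ instance all witnesses reject, and their history states form a nearly degenerate low-energy band, so the bare encoding need not be gapped. To repair this I would add a fixed \emph{reference branch} to the verifier: on the distinguished witness string $0^m$ (relabelling witnesses, if necessary, so that the unique accepting one differs from it) the circuit accepts with probability exactly $1/2$, and otherwise runs the original verifier. After rescaling by $(T+1)$, the reference witness contributes a single history state of intermediate energy $\approx 1/2$, which sits strictly below every rejecting witness at energy $\approx 1-\delta$ but above an accepting witness at energy $\leq \delta$. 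In a $\YES$ instance the unique accepting witness remains the non-degenerate ground state, with the reference state as a gapped first excited state, whereas in a $\NO$ instance the reference state itself becomes the non-degenerate ground state, gapped from the rejecting band. In both cases the overall spectral gap is dominated by $\Delta = \Omega(T^{-3}) = 1/\poly(n)$, and the ground energies separate as $a \approx \delta$ versus $b \approx 1/2$ with $b-a \geq 1/\poly(n)$, producing a legitimate \gappedlocHam instance whose answer coincides with that of the $\UQCMA$ instance.

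Finally, for the one-dimensional statement I would replace Kitaev's encoding by the Aharonov--Gottesman--Kempe construction of Eq.\ (\ref{HamiltonianQMA1D}) for $12$-state particles on a line \cite{AGIK07}. Its low-energy structure is the same: the Clairvoyance Lemma confines illegal control configurations to energy $\Omega(n^{-6}T^{-3})$, and within the legal, correctly propagating subspace the effective Hamiltonian is again diagonal in the classical-witness basis, so the orthogonality of the history states, the uniqueness argument, and the reference-branch trick carry over verbatim to give a gap of order $n^{-6}T^{-3} = 1/\poly(n)$. Composing the probabilistic Valiant--Vazirani reduction with either encoding, and taking the logical OR over the polynomially many produced instances, yields the desired probabilistic Cook reduction establishing that \gappedlocHam is $\QCMA$-hard for $d=k=2$ and for one-dimensional systems with $d=12$, $k=2$. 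Since it is widely believed that $\QCMA \neq \BQP$, this rules out an efficient quantum algorithm for the problem.
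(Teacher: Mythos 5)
Your proposal follows the same skeleton as the paper's proof: Theorem \ref{VVquantum} supplies the probabilistic reduction from \QCMA\ to unique-witness instances, and the Kitaev/Kempe--Kitaev--Regev encoding (resp.\ the Aharonov--Gottesman--Kempe encoding of Eq.~(\ref{HamiltonianQMA1D}) for the line) converts witness uniqueness into ground-state non-degeneracy, with soundness amplified to exponentially small error. Where you genuinely diverge is the mechanism that enforces the gap promise in \NO\ instances. The paper keeps the verifier untouched and instead adds an extra term $J_{\text{gap}}H_{\text{gap}}$ to the Hamiltonian, penalizing witness qubits in $\ket{0}$ at clock time zero, so that inside the nearly degenerate rejecting band the history state of $\ket{1}^{\otimes m}$ becomes the unique ground state with gap close to $1$; you instead modify the circuit with a reference branch accepting $0^m$ with probability exactly $1/2$, so its history state becomes the unique gapped ground state of \NO\ instances. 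Both work, and the trade-off is real: your fix is purely circuit-level, so it ports verbatim to the one-dimensional construction without introducing a new term and re-verifying compatibility with the Clairvoyance Lemma (which the paper must do), and it yields a constant energy separation $b-a\approx 1/2$; the paper's fix leaves the verifier canonical at the cost of one more coupling in the $J$-hierarchy of the perturbative analysis via Theorem \ref{perturbation}.

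Two steps need tightening. First, your claim that the effective Hamiltonian is \emph{diagonal} in the witness basis is false as stated: $\bra{\eta_i}H_{\text{out}}\ket{\eta_j}=\bra{0,i}Q^{\dagger}\left(\ket{0}\bra{0}_1\otimes\id\right)Q\ket{0,j}$ has off-diagonal entries bounded only by the geometric mean of the two rejection probabilities, so between two rejecting witnesses, or between your reference state and a rejecting witness, these entries are $O(1)$. The conclusions survive, but not for the reason you give: either make the verifier measure (equivalently, CNOT-copy) the witness register first, as in step~1 of the paper's protocols $P_{l,k,h}$, which renders the matrix exactly diagonal, or bound arbitrary superpositions crudely as the paper does, absorbing a $2^m$ cross-term blow-up into soundness $2^{-p(n)}$ with $p\geq 2m$; your exponential amplification provides exactly the slack for the latter route, but you must carry it out rather than assert diagonality. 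Second, ``relabelling witnesses so that the unique accepting one differs from $0^m$'' is not available to an efficient, witness-blind reduction; fix it deterministically by padding the witness with one extra bit that the verifier requires to equal $1$ and taking $0^{m+1}$ as the reference, so no collision can occur. With these repairs your argument is sound; note also that your final OR over instances is legitimate because in a \NO\ instance of the original problem \emph{every} produced Hamiltonian satisfies the gap promise and is a genuine \NO\ instance of \gappedlocHam, so spurious oracle answers can only occur on \YES\ instances, where they are harmless (the paper instead thresholds the number of \YES\ answers, in the same spirit).
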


Hardness under probabilistic Cook reductions means that with several calls to an oracle for \gappedlocHam \hspace{0.08 cm}, which gives the correct solution to the problem with high probability case the input satisfies the promise and an arbitrary answer case it does not, one can solve any other problem in $\QCMA$ with polynomial quantum computation. As it will be clear later, it is crucial in our approach that we allow calls to the oracle also with inputs which do not satisfy the promise, even though we allow for an arbitrary output in this case.  

We conjecture that in fact the problem is $\QMA$-complete. As it will be clear in section \ref{quantumVV}, the element missing is a quantum version of the Valiant-Vazirani result \cite{VV86}, a possibility first raised in Ref. \cite{AGIK07}. Although it seems plausible that such a version might exist, we leave it as an open problem. 

An interesting question, from the point of view of classical simulability of many-body models, is to determine the minimum bond dimension $D$ needed to approximate by a MPS the ground state of one dimensional local Hamiltonians to $1/\poly(n)$ accuracy. Theorem \ref{gappedlocHam} shows that in general the scaling of $D$ with the spectral gap has to be exponential, unless $\MA = \QCMA$\footnote{Alternatively, Theorem \ref{gappedlocHam} implies that there are one-dimensional poly-gapped Hamiltonians whose ground-state violates an area law \cite{ECP08} for any $\alpha$-R\'enyi entropy with $\alpha < 1$, even beyond a logarithm correction \cite{SWVC08}.}$^{,}$\footnote{We have to use $\MA$ instead of $\NP$ because \gappedlocHam \hspace{0.08 cm} is complete to $\QCMA$ only under probabilistic Cook reductions.}. This should be contrasted to the fact that every $d$-dimensional poly-gapped local Hamiltonian can be approximated by the boundary of a $d+1$-dimensional projected-entangled-pair-state \cite{VC04}, a generalization of MPS to higher dimensions, of constant bond dimension \cite{SWVC07}. 

Theorem \ref{gappedlocHam} has actually stronger consequences for schemes to efficiently store and manipulate ground-states of 1-D local Hamiltonians in a classical computer. Consider any set of states (i) which are described by $\poly(n)$ parameters and (ii) from which one can efficiently compute expectation values of local observables. Matrix-Product-States are an example of such a set and recently several others have been proposed \cite{APD+06, Vid07, HCH+08}. A general set of states satisfying properties (i) and (ii) capable of approximating the ground state of every 1-D local Hamiltonian would place \gappedlocHam \hspace{0.08 cm} in $\NP$. It hence follows from Theorem \ref{gappedlocHam} that no such method can exist, unless $\QCMA = \MA$. As ground-states of 1-D Hamiltonians with a constant gap \textit{can} be efficiently approximated by an MPS \cite{Has07}, we see that it is actually this class of ground-states that appear to be somewhat special, and not ground-states of general 1-D local Hamiltonians. 

Let us consider the following variant of $\QCMA$, which we call Unique-witness Quantum Classical Merlin Arthur. 

\begin{definition} \label{UQMAdef}
\UQCMA: Let $A = (\A_{\YES}, \A_{\NO})$ be a promise problem and let $a, b : \mathbb{N} \rightarrow [0, 1]$ be functions. Then $A \in \UQCMA(a, b)$ if, and only if, for every $n$ bit string $L_n \in A$, there exists a polynomial-time generated quantum circuit $Q_{L_n}$ acting on $N = \poly(n)$ qubits and a function $m = \poly(n)$ such that
\begin{enumerate}
	\item (completeness) If $L_n \in \A_{\YES}$, there exists a computational basis state $\ket{\psi} = \ket{a_1, ..., a_m}$, with $a_i \in \{ 0, 1 \}$, satisfying 
\begin{equation*}	
\tr(Q_{L_n} \left((\ket{0}\bra{0})^{\otimes N - m}\otimes \ket{\psi}\bra{\psi} \right) Q_{L_n}^{\cal y} (\ket{1}\bra{1} \otimes \id^{\otimes N - 1})) \geq a(n),
\end{equation*}
while for any other computational basis state $\ket{\phi} = \ket{b_1, ..., b_m}$,
\begin{equation*}	
\tr(Q_{L_n} \left((\ket{0}\bra{0})^{\otimes N - m}\otimes \ket{\phi}\bra{\phi} \right) Q_{L_n}^{\cal y} (\ket{1}\bra{1} \otimes \id^{\otimes N - 1})) \leq b(n).
\end{equation*}	
	\item (soundness) If $L_n \in \A_{\NO}$, for every computational basis state $\ket{\psi} \in (\mathbb{C}^2)^{\otimes m}$, 
\begin{equation*}	
\tr(Q_{L_n} \left((\ket{0}\bra{0})^{\otimes N - m}\otimes \ket{\psi}\bra{\psi} \right) Q_{L_n}^{\cal y} (\ket{1}\bra{1} \otimes \id^{\otimes N - 1})) \leq b(n).
\end{equation*}	
\end{enumerate}
We define $\UQCMA = \UQCMA(2/3,1/3)$. 
\end{definition}
Note that also for $\UQCMA$ it is possible to amplify the soundness and completeness of the protocol exponentially, i.e. $\UQCMA(a, b) = \UQCMA(1 - 2^{-\poly(n)}, 2^{-\poly(n)})$ for any $a(n) - b(n) \geq 1/\poly(n)$. 

In section \ref{quantumVV} we prove the next theorem, which is the generalization of the Valiant-Varirani Theorem to $\QCMA$.  

\begin{theorem} \label{VVquantum}
$\BQP^{\UQCMA} = \QCMA$.
\end{theorem}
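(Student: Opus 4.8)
The plan is to prove the two inclusions $\QCMA \subseteq \BQP^{\UQCMA}$ and $\BQP^{\UQCMA} \subseteq \QCMA$ separately, with essentially all of the work going into the first, where the Valiant-Vazirani isolation lemma enters.

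For the easy inclusion $\BQP^{\UQCMA} \subseteq \QCMA$, I would first note that $\UQCMA \subseteq \QCMA$ trivially: a $\UQCMA$ $\YES$-instance has a (unique, hence at least one) accepting witness while a $\NO$-instance has none, which is precisely the $\QCMA$ acceptance/rejection structure once the uniqueness promise is forgotten. The remaining point is that $\QCMA$ is closed under the particular kind of oracle access that the forward reduction actually requires, namely non-adaptive queries whose answers are combined by a monotone disjunction (OR). For such a disjunctive use, Merlin supplies a single classical witness for one of the queries that should answer $\YES$, and Arthur runs the corresponding $\QCMA$ verification; soundness holds because a $\NO$-instance query has no accepting witness at all, so no spurious $\YES$ can be certified. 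I would be explicit that this is why the forward reduction is engineered to use only non-adaptive disjunctive queries, thereby sidestepping the well-known subtleties of adaptive promise-oracle access.

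For the main inclusion $\QCMA \subseteq \BQP^{\UQCMA}$ I would invoke the Valiant-Vazirani isolation lemma reviewed in section \ref{quantumVV}. Given a $\QCMA$ instance with ($m$-bit, $m = \poly(n)$) classical verifier $V$ and accepting-witness set $S \subseteq \{0,1\}^m$, the $\BQP$ machine would, for each guessed size $k \in \{0,\dots,m\}$, sample pairwise-independent affine hash constraints over $\mathbb{F}_2$ and build a modified verifier $V_k$ that accepts $w$ iff $V$ accepts $w$ and $w$ satisfies the $k$ constraints. Because the constraints are linear, they are checked cheaply inside the quantum verification circuit, so each $V_k$ is again a legitimate ($\UQCMA$-style) verifier; the machine queries the $\UQCMA$ oracle on every $V_k$ and outputs $\YES$ iff some query returns $\YES$. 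The correctness analysis then splits in the usual way: if the original instance is $\NO$ then $S = \emptyset$, so every $V_k$ has zero witnesses and is a valid $\UQCMA$ $\NO$-instance, whence all answers are $\NO$; if the original is $\YES$ then for $k = \lfloor \log_2 |S| \rfloor$ the isolation lemma guarantees that with probability at least $\tfrac18$ exactly one witness survives the hashing, making that query a valid $\UQCMA$ $\YES$-instance whose answer is reliably $\YES$. Queries that fall outside the promise (several surviving witnesses) may be answered arbitrarily, but since the answers are combined disjunctively they can only help in the $\YES$ case and never arise in the $\NO$ case.

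Finally I would amplify: repeating the whole procedure $\poly(m)$ times independently boosts the constant per-trial success probability to $1 - 2^{-\poly(n)}$, which is exactly a probabilistic Cook reduction implementable by a $\BQP$ machine with a $\UQCMA$ oracle. The main obstacle I expect is not the isolation step itself but the bookkeeping around the promise: I must verify that folding the hash constraints into the verifier preserves the $\UQCMA$ structure (one surviving witness $\leftrightarrow$ $\YES$, no surviving witness $\leftrightarrow$ $\NO$) and that promise-violating queries are genuinely harmless under the monotone combination. This is also precisely the step where a strengthening to $\QMA$ would demand a quantum analogue of Valiant-Vazirani able to isolate an arbitrary (quantum) ground state, which is why only $\QCMA$-hardness, and only under probabilistic Cook reductions, is ultimately obtained in Theorem \ref{gappedlocHam}.
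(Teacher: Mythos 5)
Your proposal for the main inclusion $\QCMA \subseteq \BQP^{\UQCMA}$ has a genuine gap at exactly the spot you flag as ``bookkeeping'': you treat the verifier as having a crisp accepting-witness set $S \subseteq \{0,1\}^m$, but a $\QCMA$ verifier only promises that some witness is accepted with probability $\geq 2/3$ in \YES\ instances and that all strings are accepted with probability $\leq 1/3$ in \NO\ instances; in a \YES\ instance there may be exponentially many ``mediocre'' witnesses accepted with probability strictly between $1/3$ and $2/3$. Concretely, suppose there are $2^{m/2}$ strings accepted with probability $\frac{1}{3} + \frac{1}{\poly(n)}$ and one string accepted with probability $1$. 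However you define $S$, hashing at the relevant size isolates a mediocre string with overwhelming probability (or leaves mediocre survivors alongside the good one), and the resulting query violates the $\UQCMA$ promise --- its surviving witness is accepted with probability above the soundness bound but below completeness, or several strings exceed the soundness bound --- so the oracle may answer arbitrarily. The probability that some query is a \emph{valid} \YES-instance is then exponentially small, not the $\Omega(1)$ your analysis asserts, and your disjunctive combination cannot repair this: the \NO\ case is indeed safe, but completeness in the \YES\ case collapses. Plain Valiant-Vazirani, as you invoke it, does not suffice for $\QCMA$.

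The paper closes this gap with three ingredients absent from your proposal. First, it proves a strengthened isolation lemma (Lemma \ref{hash}) with a distinguished subset $V \subseteq S$, lower-bounding by $|V|/2^{k+1}$ the probability that the hash leaves a unique survivor in $S$ \emph{and} that this survivor lies in $V$. Second, a pigeonhole argument over the $m$ subintervals $r_i$ of $[1/3, 2/3]$ of width $\frac{1}{3m}$ yields a level $l$ at which the witnesses accepted with probability above $\frac{1}{3} + \frac{l}{3m}$ outnumber those in the gray zone $r_l$, while the total above $\frac{1}{3} + \frac{l-1}{3m}$ is bounded by $2^{m-l+1}$; taking $S$ and $V$ to be these two sets, with probability $\geq 1/16$ over $h$ (for the right pair of indices, which the reduction enumerates) exactly one witness survives, it is a strong one, and nothing survives in the gray zone. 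Third, the protocol $P_{l,k,h}$ runs the original verifier $p(n)$ times in parallel and thresholds at $\frac{1}{3} + \frac{l - 1/2}{3m}$, so by Chernoff-Hoeffding the isolated strong witness is accepted with probability $1 - 2^{-p/10^3 m^2}$ while every other string is accepted with probability at most $2^{-p/10^3 m^2}$; only after this gap amplification do the queries honestly satisfy the $\UQCMA$ promise with the required completeness-soundness separation. Your remarks on the converse inclusion are tangential --- the paper proves only $\QCMA \subseteq \BQP^{\UQCMA}$, which is all Theorem \ref{gappedlocHam} uses, and your restriction to non-adaptive disjunctive queries amounts to redefining the left-hand side rather than proving the literal equality --- but the missing level-selection and gap-amplification machinery is the heart of the paper's proof, not bookkeeping.
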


\section{Proof of Theorem \ref{VVquantum}} \label{quantumVV}

The Valiant-Vazirani Theorem says that any problem in $\NP$ can be solved in $\BPP(1 - 2^{-\poly(n)}, 1)$ if we can solve instances of $\SAT$ that are known to contain either zero or only one satisfying assignment \cite{VV86}. A detailed account of this important result, which is also a key ingredient in Toda's Theorem \cite{Tod91}, can be found in several references, including \cite{AB08}. 

The main idea is to construct a randomized polynomial algorithm that on an instance $C(x_1,...,x_n)$ of $\SAT$ with $n$ variables (which, for simplicity, we assume to be given as a circuit $C(x_1,...,x_n)$ with inputs $x_i$, but could equally well be given as a conjunctive normal form), outputs $n + 2$ $\SAT$ instances $C_1, ..., C_{n+2}$, also with $n$ variables each, such that 
\begin{itemize}
	\item If $C$ is unsatisfiable, all $C_i$ are unsatisfiable. 
	\item If $C$ is satisfiable, with probability larger than $\frac{1}{8}$, one of the $C_i$ has exactly one satisfying assignment.   
\end{itemize}
Given an algorithm which solves $\SAT$ with unique solutions we can then solve any problem in $\NP$ with an exponential small probability of error, by using the above reduction \cite{VV86}. To construct the $C_i$, the idea is to add pseudo-random constraints to $C$, derived from a universal hash function. 

\begin{definition}
\cite{AB08} (pairwise independent hash function) A family ${\cal H}$ of functions from $\{0, 1\}^{n}$ to $\{0, 1\}^k$ is pairwise independent if 
\begin{itemize}
	\item for every $x \in \{0, 1\}^n$ and $a \in \{0, 1\}^k$,
	\begin{equation*}
	\Pr_{h \in {\cal H}}(h(x) = a) = \frac{1}{2^k},
	\end{equation*}
	\item for every $x, y \in \{0, 1\}^n$, $x \neq y$, and $a, b \in \{0, 1\}^k$,
	\begin{equation*}
	\Pr_{h \in {\cal H}}(h(x) = a \hspace{0.1 cm}| \hspace{0.1 cm} h(y) = b) = \frac{1}{2^k}.
	\end{equation*}
\end{itemize}
\end{definition}

There are efficient constructions of pairwise independent functions, with circuit complexity polynomial in $n$; an example is the set of affine transformations over integers mod 2 \cite{VV86}. The key observation of Valiant and Vazirani for establishing their result is the following lemma, which we state in a slightly more general form, already preparing for the proof of Theorem \ref{VVquantum}. 

\begin{lemma} \label{hash}
\cite{VV86} Let ${\cal H}$ be a family of pairwise independent hash functions $h : \{0, 1\}^n \rightarrow \{0, 1\}^k $. Let $S \subseteq \{ 0, 1 \}^n$ be such that $|S| \leq 2^{k-1}$ and $V \subseteq S$. Then
\begin{equation}
\Pr_{h \in {\cal H}}(|S \cap h^{-1}(0^k)| = 1 \hspace{0.1 cm} \text{\normalfont and} \hspace{0.1 cm} S \cap h^{-1}(0^k) \in V) \geq \frac{|V|}{2^{k + 1}}.
\end{equation}
\end{lemma}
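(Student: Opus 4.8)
\textbf{Proof plan for Lemma \ref{hash}.}

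The plan is to compute the probability in question by conditioning on which single element of $S$ maps to $0^k$. First I would write the event whose probability we want to bound as a union over the elements $v \in V$ of the disjoint events
\begin{equation*}
A_v := \{ h : h(v) = 0^k \text{ and } h(x) \neq 0^k \text{ for all } x \in S \setminus \{v\} \}.
\end{equation*}
These events are pairwise disjoint (they specify a \emph{unique} preimage of $0^k$ inside $S$), so the probability we want is exactly $\sum_{v \in V} \Pr_{h \in {\cal H}}(A_v)$. The goal is therefore to produce a lower bound on each individual $\Pr(A_v)$ of the form $1/2^{k+1}$, which upon summing over the $|V|$ elements of $V$ yields the claimed bound $|V|/2^{k+1}$.

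Next I would estimate $\Pr(A_v)$ using the two defining properties of pairwise independence together with a union bound. Fix $v \in V$. By the first property, $\Pr(h(v) = 0^k) = 2^{-k}$. Conditioned on $h(v) = 0^k$, I would bound the probability that \emph{some} other element of $S$ also hits $0^k$ by the union bound: for each $x \in S \setminus \{v\}$, the second (pairwise independence) property gives $\Pr(h(x) = 0^k \mid h(v) = 0^k) = 2^{-k}$, so
\begin{equation*}
\Pr\big( \exists\, x \in S \setminus \{v\} : h(x) = 0^k \,\big|\, h(v) = 0^k \big) \leq \frac{|S| - 1}{2^k} \leq \frac{2^{k-1} - 1}{2^k} < \frac{1}{2},
\end{equation*}
where I used the hypothesis $|S| \leq 2^{k-1}$. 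Hence the conditional probability that no other element of $S$ maps to $0^k$ is at least $1/2$, and therefore
\begin{equation*}
\Pr(A_v) = \Pr(h(v) = 0^k)\,\Pr\big(\text{no other }x \in S \text{ hits } 0^k \mid h(v) = 0^k\big) \geq \frac{1}{2^k} \cdot \frac{1}{2} = \frac{1}{2^{k+1}}.
\end{equation*}
Summing over $v \in V$ then gives the result.

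The main obstacle, such as it is, is purely bookkeeping rather than conceptual: I must be careful that the size constraint is applied to $S$ (not $V$), since it is the full set $S$ of potential collisions that must be controlled by the union bound, while $V$ merely indexes the favorable unique-preimage events we are willing to count. The inequality $|S| \leq 2^{k-1}$ is exactly what is needed to push the collision probability below $1/2$, and the disjointness of the $A_v$ is what lets the bound be additive rather than suffering an inclusion--exclusion loss. No finer probabilistic tool is required; the estimate is tight enough for the subsequent application in the Valiant--Vazirani reduction, where one takes $V = S$ and recovers the original statement of the lemma with the factor $1/4$ replaced here by the more flexible $|V|/2^{k+1}$.
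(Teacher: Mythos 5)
Your proof is correct and follows essentially the same route as the paper's: decompose the event over the elements of $V$ into disjoint unique-preimage events, use $\Pr(h(v)=0^k)=2^{-k}$ together with pairwise independence and a union bound over $S\setminus\{v\}$, and invoke $|S|\leq 2^{k-1}$ to keep the conditional collision probability below $1/2$. The only cosmetic difference is that you lower-bound each $\Pr(A_v)$ by $2^{-(k+1)}$ individually, whereas the paper sums first and factors out $\frac{1}{2^k}\bigl(1-\frac{|S|}{2^k}\bigr)$; the two are interchangeable.
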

\begin{proof}
Following \cite{VV86}, with $X := \Pr_{h \in {\cal H}}(|S \cap h^{-1}(0^k)| = 1 \hspace{0.1 cm} \text{\normalfont and} \hspace{0.1 cm} S \cap h^{-1}(0^k) \in V)$ we have
\begin{eqnarray}
X &=& \sum_{x \in V} \Pr_{h \in {\cal H}}(h(x)=0^k \hspace{0.1 cm} \text{\normalfont and} \hspace{0.1 cm} \forall y \in S - \{ x \}, h(y) \neq 0^k) \nonumber \\ &=& 
\sum_{x \in V} \Pr_{h \in {\cal H}}(h(x) = 0^k)\Pr_{h \in {\cal H}}(\forall y \in S - \{ x \}, h(y) \neq 0^k \hspace{0.05 cm} | \hspace{0.05 cm} h(x) = 0^k) \nonumber \\ &=& \sum_{x \in V}\frac{1}{2^k}\left( 1 - \Pr_{h \in {\cal H}}(\exists y \in S - \{ x \}, h(y) = 0^k \hspace{0.05 cm} | \hspace{0.05 cm} h(x) = 0^k)  \right) \nonumber \\ &\geq& \sum_{x \in V}\frac{1}{2^k}\left( 1 - \sum_{y \in S - \{x\}}\Pr_{h \in {\cal H}}(h(y) = 0^k \hspace{0.05 cm} | \hspace{0.05 cm} h(x) = 0^k)  \right) \nonumber \\ &\geq& \sum_{x \in V}\frac{1}{2^k} \left( 1 - \frac{|S|}{2^k} \right) \nonumber \\ &=& \frac{|V|}{2^k}\left(1 - \frac{|S|}{2^k} \right) \geq \frac{|V|}{2^{k + 1}}.
\end{eqnarray}
\end{proof}

The Valiant-Vazirani Theorem can now be easily established as follows. We let $C_i = C \cap M_i$, where $M_i$ is a circuit which computes the hash function $h_i$ from $n$ bits to $i$ bits, chosen at random from the family ${\cal H}$. If $C$ is not satisfiable, it is clear that none of the $C_i$ will be either. If $C$ is satisfiable with $2^{i-2} \leq |S| \leq 2^{i-1}$ different satisfying assignments, then an application of Lemma \ref{hash} with $V = S$ shows that with probability larger than 1/8, $C_i$ will have exactly one satisfying assignment.   

The proof of Theorem \ref{VVquantum} will follow very closely the proof of the Valiant-Vazirani Theorem. One difference is that in $\QCMA$ the witnesses do not have to accept with certainty, so we might have several mediocre witnesses exponentially close to the boundary of acceptance. As we show in the sequel, we can overcome this difficulty by choosing the set $V$ in Lemma \ref{hash} appropriately.

\vspace{1 cm}

\begin{proof} (Theorem \ref{VVquantum}) 

Given a problem $A \in \QCMA$, we show how to solve it with polynomial quantum computation given access to an oracle that solves any problem in $\UQCMA$. Let $L_n \in A$ be a $n$-bit instance of $A$. We know that if $L_n \in A_{\YES}$, then there is a classical state $\ket{\psi} = \ket{a_1, ..., a_m}$, with $a_i \in \{0, 1\}$, of $m = \poly(n)$ bits which makes the verification procedure $Q_{L_n}$ to accept with probability larger than $2/3$, while if $L_n \in A_{\NO}$, no $m$-bit classical state accepts with probability larger than $1/3$. 



To probabilistically transform this problem into one with a unique witness, we consider the following family of protocols $P_{l, k, h}$, indexed by $l, k \in \{1, ..., m \}$ and a hash function $h \in {\cal H}$ from $m$ to $l$ bits:
\begin{enumerate}
	\item Arthur measures in the computational basis Merlin's witness, obtaining the classical witness $\ket{\psi} = \ket{a_1, ..., a_m}$\footnote{Such a measurement is not really needed, as Arthur has the promise that Merlin sends a classical witness. However, for the proof of Theorem \ref{gappedlocHam} it is useful to consider it.}.
	\item He then computes $h(a_1, ..., a_m)$. If the output is $0^{l}$, he accepts and goes to step 3. If not, he rejects. 
	\item He copies the classical witness $\ket{\psi} = \ket{a_1, ..., a_m}$ to $p(n) = \poly(n)$ ancilla registers and executes the original verification procedure, $Q_{L_n}$, $p(n)$ times in parallel, accepting if the number of individual acceptances is larger than $p(n)\left(\frac{1}{3} + \frac{k - 1/2}{3m}  \right)$.    
\end{enumerate}

We claim that 
\begin{itemize}
	\item If $L \in \NO$, all $P_{l, k, h}$ accepts any classical witness with probability smaller than $2^{-\frac{p^2}{10^3 m^2}}$. 
	\item If $L \in \YES$, with probability larger than $\frac{1}{16}$ over the choice of $h \in {\cal H}$, one of the $P_{l, k, h}$ has only one classical witness which accepts with probability larger than $1 - 2^{-\frac{p^2}{10^3 m^2}}$, while any other computational basis state is accepted with probability at most $2^{-\frac{p^2}{10^3 m^2}}$. 
\end{itemize}

It is then clear that given an oracle to problems in $\UQCMA$, we can solve any problem in $\QCMA$ by choosing $\poly(n)$ random instances of $P_{l, k, h}$, solving them with the help of the oracle (which outputs an arbitrary answer case the $P_{l, k, h}$ does not have a unique witness), and accepting if the number of $\YES$ outputs is large enough. 

Let us prove the claim. The case $L \in \NO$ is straightforward, so we focus in the $L \in \YES$ one. We consider as a witness any computational basis state that accepts with probability larger than $1/3$. There might be at most $2^m$ different such witnesses. We divide the interval $[1/3, 2/3]$ into $m$ intervals $r_i$, $i \in \{ 1, ..., m \}$, of equal length $\frac{1}{3m}$, i.e. $r_i := [\frac{1}{3} + \frac{i-1}{3m}, \frac{1}{3} + \frac{i}{3m})$, and set $r_{m+1} = (2/3, 1]$. Let $N_i$ be the number of witnesses with acceptance probability in the range $r_i$.

The key point is to realize that there must be a $l \in \{1, ..., m + 1 \}$ such that 
\begin{equation*}
\sum_{k=l+1}^m N_k > N_l
\end{equation*}
and 
\begin{equation}
\sum_{k=l}^m N_k \leq 2^{m-l+1}.
\end{equation}
We prove it by contradiction. If it were not true, it would follow that 
\begin{equation} \label{auxauxaux}
N_1 \geq \sum_{k=2}^m N_k,
\end{equation}
as there are at most $2^m$ witnesses. By Eq. \ref{auxauxaux}, we must have $\sum_{k=2}^m N_i \leq 2^{m-1}$, which leads, by the assumption, to  
\begin{equation*}
N_2 \geq \sum_{k=3}^m N_k.
\end{equation*}
Carrying out the same analysis as before we eventually find that for every $j$, 
\begin{equation*}
N_j \geq \sum_{k=j+1}^m N_k.
\end{equation*}
As $N_{m+1} \geq 1$ (there is at least a witness which accepts with probability larger than $2/3$), we must have $N_{m-k} \geq 2^{k}$ and hence
\begin{equation*}
\sum_{k=1}^{m+1}N_k \geq 1 + \sum_{k=0}^{m-1}2^k \geq 1 + 2^m, 
\end{equation*}
which is in contradiction with the bound of $2^m$ on the maximum number of witnesses. 

Given such a $l$, we apply Lemma \ref{hash} with $S$ as the set of witnesses which accepts with probability larger than $\frac{1}{3} + \frac{l-1}{3m}$ (which we assume has cardinality in the interval $[2^{k-2}, 2^{k-1}$]), $V$ as the set of witnesses with acceptance probability larger than $\frac{1}{3} + \frac{l}{3m}$, and $n = m$. By the discussion above, we have $|V| > |S|/2 = 2^{k-3}$ and, with probability larger than 
\begin{equation*}
\frac{|V|}{2^{k+1}} \geq \frac{1}{16},
\end{equation*}
there will be only one witness with acceptance probability larger than $\frac{1}{3} + \frac{l}{3m}$ (call it $W_{good}$) and no witness which accepts with probability in the range $r_l$, if we consider the protocol $P_{l, k, h}$ with a random choice of $h \in {\cal H}$ (all the original witnesses with acceptance in the interval $r_i$ will be rejected in step 2).  

Consider a $h$ for which there is a unique witness $W_{\text{good}}$ which accepts with probability larger than $\frac{1}{3} + \frac{l}{3m}$ and no witness with acceptance probability in the range $r_l$ and set $V_i(W_{good})$, $i \in \{1, ..., p(n) \}$, as the outcome of the $i$-th test of $P_{l, k, h}$. We have
\begin{equation*}
\mu_{W_{good}} := \mathbb{E}\left( \frac{1}{p}\sum_{i=1}^p V_i(W_{good}) \right) \geq \frac{1}{3} + \frac{l}{3m},
\end{equation*}
and thus, by the Hoeffding-Chernoff \cite{Dud02} bound, 
\begin{eqnarray*}
\Pr(P_{l, k, h} \hspace{0.2 cm} \text{accepts} \hspace{0.2 cm} W_{good}) &=& \Pr\left( \frac{1}{p}\sum_{i=1}^p V_i(W_{good}) \geq \frac{1}{3} + \frac{l-1/2}{3m} \right) \nonumber \\ &\geq& 1 - \Pr\left( \left| \frac{1}{p}\sum_{i=1}^p V_i - \mu_{W_{good}} \right| \geq \frac{1}{6m} \right) \geq 1 - 2^{-\frac{p}{10^3m^2}},  
\end{eqnarray*}
while for any other witness $W$, 
\begin{equation*}
\mu_{W} := \mathbb{E}\left( \frac{1}{p}\sum_{d=1}^p V_i(W) \right) \leq \frac{1}{3} + \frac{l-1}{3m},
\end{equation*}
and therefore 
\begin{eqnarray*}
\Pr(P_{l, k, h} \hspace{0.2 cm} \text{accepts} \hspace{0.2 cm} W) &=& \Pr\left( \frac{1}{p}\sum_{i=1}^p V_i(W) \geq \frac{1}{3} + \frac{l-1/2}{3m} \right) \\ &\leq& \Pr\left( \left| \frac{1}{p}\sum_{i=1}^p V_i - \mu_{W} \right| \geq \frac{1}{6m} \right) \leq 2^{-\frac{p}{10^3m^2}}.  
\end{eqnarray*}

The argument is completely analogous to $L \in A_{\NO}$, which leads to a soundness of $2^{-\frac{p}{10^3m^2}}$.
\end{proof}

\section{Proof of Theorem \ref{gappedlocHam}} \label{proofgappedloc}

We employ the following Lemma in the proof of Theorem \ref{gappedlocHam}, which appears as Proposition 9 in \cite{ABD+08}.

\begin{lemma} \label{unionboundPOVM}
\cite{ABD+08} Given a $k$-partite state $\rho_{1,...,k}$, suppose there are pure states $\ket{\psi_1}$, ..., $\ket{\psi_k}$ such that $\bra{\psi_i}\rho_i\ket{\psi_i} \geq 1 - \epsilon_i$, for all $i$. Let $\ket{\Psi} := \ket{\psi_1} \otimes ... \otimes \ket{\psi_k}$ and $\epsilon := \epsilon_1 + ... + \epsilon_k$. Then $\bra{\Psi}\rho_{1, ..., k}\ket{\Psi} \geq 1 - \epsilon$.
\end{lemma}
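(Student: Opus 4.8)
The plan is to recast the hypotheses as operator statements about projectors acting on the full $k$-partite space and then invoke a union-bound inequality for commuting projectors. First I would set $P_i := \proj{\psi_i}$, regarded as a projector on the $i$-th tensor factor, and write $\bar P_i$ for the operator $\id \otimes \cdots \otimes P_i \otimes \cdots \otimes \id$ that acts as $P_i$ on factor $i$ and as the identity elsewhere. Since $\bra{\psi_i}\rho_i\ket{\psi_i} = \tr(\bar P_i\, \rho_{1,\dots,k})$, where $\rho_i$ is the reduction of $\rho_{1,\dots,k}$ to system $i$, the hypothesis $\bra{\psi_i}\rho_i\ket{\psi_i}\ge 1-\epsilon_i$ is equivalent to $\tr\big((\id - \bar P_i)\,\rho_{1,\dots,k}\big) \le \epsilon_i$. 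Likewise $\bra{\Psi}\rho_{1,\dots,k}\ket{\Psi} = \tr\big(\bar P_1 \cdots \bar P_k\, \rho_{1,\dots,k}\big)$, because $\proj{\Psi} = \bigotimes_i P_i = \prod_i \bar P_i$.

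The key step is the operator inequality $\id - \prod_{i=1}^k \bar P_i \le \sum_{i=1}^k (\id - \bar P_i)$. Because the $\bar P_i$ act non-trivially on distinct tensor factors, they commute and can be simultaneously diagonalized; on a common eigenvector each $\bar P_i$ has eigenvalue $p_i \in \{0,1\}$, so the inequality reduces to the elementary numerical fact $1 - \prod_i p_i \le \sum_i (1 - p_i)$ for $p_i \in \{0,1\}$, which holds since the right-hand side is at least $1$ whenever some $p_i = 0$ (and both sides vanish when all $p_i = 1$). Taking the trace of this inequality against the positive operator $\rho_{1,\dots,k}$ and using the reformulated hypotheses then gives $\tr\big((\id - \prod_i \bar P_i)\,\rho_{1,\dots,k}\big) \le \sum_i \epsilon_i = \epsilon$.

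Finally I would conclude by complementation: $\bra{\Psi}\rho_{1,\dots,k}\ket{\Psi} = 1 - \tr\big((\id - \prod_i \bar P_i)\,\rho_{1,\dots,k}\big) \ge 1 - \epsilon$, which is exactly the claimed bound. There is essentially no hard part here; the only point requiring a moment of care is the justification of the commuting-projector union bound, and even that is immediate once one notes that the $\bar P_i$ share a common eigenbasis. I would also remark that the argument uses neither purity nor any structural assumption on $\rho_{1,\dots,k}$, so it applies verbatim to an arbitrary mixed state $\rho_{1,\dots,k}$, which is precisely the generality needed in the application of section~\ref{QMA2}.
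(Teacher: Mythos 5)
Your proof is correct: the commuting-projector union bound $\id - \prod_{i=1}^k \bar P_i \le \sum_{i=1}^k (\id - \bar P_i)$ holds because the $\bar P_i$ share a common product eigenbasis with eigenvalues in $\{0,1\}$, and tracing it against the positive operator $\rho_{1,\dots,k}$ gives exactly the claimed bound for arbitrary mixed states. The paper offers no proof of its own for Lemma \ref{unionboundPOVM}, citing Proposition 9 of \cite{ABD+08} instead, and your argument is essentially the same union-bound reasoning used in that reference (there phrased as commuting local measurements each accepting with probability at least $1-\epsilon_i$), so your write-up correctly and completely fills in the omitted proof.
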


\begin{proof} (Theorem \ref{gappedlocHam})

For simplicity we first prove the theorem for 5-local Hamiltonians and then extend it to one dimensional Hamiltonians. 

The idea is to encode each verification procedure $P_{l, k, h}$, defined in the proof of Theorem \ref{VVquantum}, in a local Hamiltonian, using Kitaev's construction explained in section \ref{lochamprob}. Note that we can easily modify the verification procedure $P_{l, k, h}$ to the standard form used in Definition \ref{QCMAdef}, in which a quantum circuit is applied to the witness and ancilla registers and then a single measurement is performed in the end \cite{NC00}. Let us denote this circuit by $Q_{l, k, h} = U_T...U_2U_1$, which consists of $T$ single and two qubit gates and acts on $N$ qubits, where $N - m$ qubits are initialized in the $\ket{0}$ state and the witness register has $m = \poly(n)$ qubits. Define the following 5-local Hamiltonian, acting $N + T$ qubits:
\begin{equation}
H_{l, k, h} := J_{\text{out}}H_{\text{out}} + J_{\text{in}} H_{\text{in}} + J_{\text{prop}} H_{\text{prop}} + J_{\text{clock}} H_{\text{clock}} + J_{\text{gap}} H_{\text{gap}},
\end{equation}
where $J_{\text{out}}, J_{\text{in}}, J_{\text{clock}}, J_{\text{gap}} = \poly(n)$ are functions to be chosen later and the terms $H_{\text{in}}$, $H_{\text{out}}$, $H_{\text{prop}}$, and $H_{\text{clock}}$ are given by Eqs. (\ref{Hkit1}), (\ref{Hkit2}), and (\ref{Hkit3}).  The new term $H_{\text{gap}}$, which will be used to create a polynomial gap in $\NO$ instances, is given by
\begin{equation}
H_{\text{gap}} := \sum_{i=1}^m \ket{0}\bra{0}_1 \otimes \ket{000}\bra{000}_{1,2,3}^c.
\end{equation}


As shown in the previous section, the verification protocols $P_{l, k, h}$ (and thus also $Q_{l, k, h}$) are such that
\begin{itemize}
	\item case $L \in A_{\YES}$, with probability larger than $\frac{1}{16m^2}$ over $l, k$, and $h$, there is a unique classical witness that makes $P_{l, k, h}$ accept with probability larger than $1 - 2^{-p(n)}$, where $p(n) = \poly(n)$\footnote{Which we can choose it as large as we want by increasing only the number of ancilla qubits in the verification procedure and the size of the circuit}, while any other classical state accepts with probability smaller than $2^{-p(n)}$. 
  \item case $L \in A_{\NO}$, no classical witness makes $P_{l, k, h}$ accept with probability larger than $2^{-p(n)}$, for any choice of $l, k$, and $h$.  
\end{itemize}

In the sequel we prove that such a property implies that
\begin{itemize}
	\item case $L \in A_{\YES}$, with probability larger than $\frac{1}{16m^2}$ over $l, k$, and $h$, $H_{l, k, h}$ has a unique ground-state with a poly-gap\footnote{For which an explicitly lower bound $\Delta = 1/\poly(n)$ calculated from the choices of $J_{\text{out}}, J_{\text{in}}, J_{\text{clock}}, J_{\text{gap}}$ and $p$.} above it and its ground-state energy is smaller than $a$. 
  \item case $L \in A_{\NO}$, $H_{l, k, h}$ has a unique ground-state with a poly-gap$^{13}$ above it and its ground-state energy is larger than $b \geq a + 1/\poly(n)$, for every choice of $l, k$, and $h$.  
\end{itemize}
Given an oracle to \gappedlocHam \hspace{0.08 cm} we can then solve any problem in $\QCMA$ with polynomial quantum computation, just as in the proof of Theorem \ref{VVquantum}, replacing the oracle for general problems in $\UQCMA$ by the oracle for \gappedlocHam. 

Let us analyse the spectral properties of the low energy sector of $H_{l,, k, h}$. As in section \ref{lochamprob}, we apply Theorem \ref{perturbation} to $\tilde{H} = H + V$, with $H = J_{\text{clock}}H_{\text{clock}} + J_{\text{prop}} H_{\text{prop}}$ and $V = J_{\text{in}} H_{\text{in}} +  J_{\text{out}}H_{\text{out}} + J_{\text{gap}}H_{\text{gap}}$. As before, the ground-space of $H$ is spanned by
\begin{equation} \label{etabasispolygap}
\ket{\eta_i} = \frac{1}{\sqrt{T + 1}} \sum_{t=0}^T U_t ... U_1\ket{i}\otimes \ket{1}^{\otimes t} \otimes \ket{0}^{\otimes T - t},
\end{equation}
for $i \in \{ 0, ..., 2^N - 1 \}$. The spectral gap of $H$ ($\Delta(H)$), in turn, is lower bounded by $J_{\text{prop}}\Omega(T^{-3})$. Letting $\Pi$ be the projector onto the ground-space of $H,$ we find that, for energies below $\Delta(H)/2$, the spectrum of $H_{l, k, h}$ is approximated to $1/\poly(n)$ accuracy, for sufficiently large $J_{\text{prop}} = \poly(n) \gg J_{\text{in}}, J_{\text{out}}, J_{\text{gap}}$, by the spectrum of 
\begin{equation}
J_{\text{out}}\Pi H_{\text{out}} \Pi + J_{\text{in}} \Pi H_{\text{in}} \Pi + J_{\text{gap}} \Pi H_{\text{gap}} \Pi
\end{equation}
in the subspace spanned by the states \ref{etabasispolygap}. Still following the discussion of section \ref{lochamprob} (see also section \ref{DQC1hardsection}), another application of Theorem \ref{perturbation}, this time with $H = J_{\text{in}} H_{\text{in}}$ and $V = J_{\text{out}}H_{\text{out}} + J_{\text{gap}}H_{\text{gap}}$, shows that for sufficiently large $J_{\text{in}} = \poly(n) \gg  J_{\text{out}}, J_{\text{gap}}$, for energies below $J_{\text{in}}/(2(T + 1))$, the energy spectrum of $H_{l, k, h}$ is approximated to $1/\poly(n)$ accuracy by the spectrum of
\begin{equation}
H_{\text{eff}} := J_{\text{out}}\tilde{\Pi} H_{\text{out}} \tilde{\Pi} + J_{\text{gap}} \tilde{\Pi} H_{\text{gap}} \tilde{\Pi},
\end{equation}
where, as in section \ref{DQC1hardsection}, $\tilde{\Pi}$ is the projector onto the subspace spanned by 
\begin{equation} \label{etabasispolygap0}
\ket{\eta_i} = \frac{1}{\sqrt{T + 1}} \sum_{t=0}^T U_t ... U_1\ket{i,0}\otimes \ket{1}^{\otimes t} \otimes \ket{0}^{\otimes T - t},
\end{equation}
for $i \in \{ 0, ..., 2^{m} - 1 \}$. 

Let us first analyse the gap and the ground-state energy of a $\YES$ instance. We assume that $l, k$ and $h$ are such that there is a unique classical witness that makes $P_{l, k, h}$ accept with probability larger than $1 - 2^{-p(n)}$, while any other classical state is accepted with probability smaller than $2^{-p(n)}$. As shown before, this happens with probability larger than $\frac{1}{16m^2}$ over a random choice of $l, k$ and $h$. Let $\ket{\psi} = \ket{a_1,..., a_m}$ be such a witness. Then, with 
\begin{equation} \label{etapsi}
\ket{\eta_{\psi}} := \frac{1}{\sqrt{T + 1}} \sum_{t=0}^T U_t ... U_1\ket{\psi,0}\otimes \ket{1}^{\otimes t} \otimes \ket{0}^{\otimes T - t},
\end{equation}
$\bra{\eta_{\psi}}H_{\text{eff}}\ket{\eta_{\psi}} \leq J_{\text{out}} 2^{-p(n)} + m J_{\text{gap}}$ and the ground state energy of $H_{\text{eff}}$ and $H$ must be smaller than $2mJ_{\text{gap}}$ (for a sufficiently large choice of $p(n)$).

As every other classical state $\ket{b_1,..., b_m}$ different from $\ket{\psi}$ is accepted with probability at most $2^{-p(n)}$, a simple calculation\footnote{Indeed, if ${\cal N} = \text{span}(\{ \ket{\phi_k} \}_{k=1}^{2^m-1})$ and $\ket{\phi} = \sum_{k}^{2^m-1}c_k \ket{\phi_k}$,
\begin{eqnarray*}
\Pr(\phi \hspace{0.1 cm} \text{accepts}) &=& \bra{\phi}Q_{l, k, h}^{\cal y}\ket{1}\bra{1}Q_{l, k, h}\ket{\phi} \\ &=& \left| \sum_{k, k'} c_k c_{k'}^* \bra{\phi_{k'}}Q_{l, k, h}^{\cal y}\ket{1}\bra{1}Q_{l, k, h}\ket{\phi_k} \right| \\ &\leq& \sum_{k, k'} |c_k| |c_{k'}^*| |\bra{\phi_{k'}}Q_{l, k, h}^{\cal y}\ket{1}|\bra{1}Q_{l, k, h}\ket{\phi_k}| \\ &\leq& 2^{-p(n)}\left(\sum_{k=1}^{2^m-1}|c_k| \right) \leq 2^{-p(n) + m}.  
\end{eqnarray*}} shows that any state in their span ${\cal N}$ is accepted with probability at most $2^{-p(n) + m}$. Thus $\bra{\eta_{\phi}}H_{\text{eff}}\ket{\eta_{\phi}} \geq J_{\text{out}}/2$ (choosing $p(n) \geq 2m$) for any state $\ket{\phi}$ in ${\cal N}$. Choosing $J_{\text{out}} = \poly(n) \gg m J_{\text{gap}}$ large enough, the ground state of $H_{\text{eff}}$ must be $1/\poly(n)$ close to $\ket{\eta_{\psi}}$ (otherwise the ground-state energy would be larger than $2mJ_{\text{gap}}$), and hence we have a $\poly(n)$ gap (whose exact functional form depends on the choice of $J_{\text{out}}, J_{\text{gap}}$). 

Case $L \in \NO$, every classical state is accepted with probability at most $2^{-p(n)}$ and from the previous discussion an arbitrary state in $(\mathbb{C}^2)^{\otimes m}$ is accepted with probability smaller than $2^{-p(n) + m}$. This implies that the ground state energy of $H_{\text{eff}}$ is larger than $(1 - 2^{-p(n) + m})J_{\text{out}}$, and we find that there is a polynomial large separation in the ground-state energy of $H_{l, k, h}$ for $\YES$ and $\NO$ instances. In the remainder of the proof we show that $H_{l, k, h}$ has a poly-gap (for any choice of $l, k, h$). The idea will be to use the extra term $H_{\text{gap}}$, which penalizes every qubit of the witness register which is not initialized in the state $\ket{1}$. 

If we choose $\ket{\psi} = \ket{1}^{\otimes m}$ as a witness, the associated state $\ket{\eta_{\psi}}$ (see Eq. \ref{etapsi}) has energy smaller or equal to $J_{\text{out}}$.  The witness (possibly quantum) associated to the ground-state of $H_{\text{eff}}$, $\ket{\eta_{\psi_g}}$, has then to satisfy 
\begin{equation}
2^{-p(n) + m} \geq \frac{J_{\text{gap}}}{T + 1}\bra{\psi_g}H_{\text{gap}}\ket{\psi_g} = \frac{J_{\text{gap}}}{T + 1} \sum_{k=1}^m \tr(\ket{0}_k\bra{0} \psi_g^k),
\end{equation}
where $\psi_g^k := \tr_{\backslash k}(\ket{\psi_g}\bra{\psi_g})$, as otherwise $\ket{\psi}$ would have a smaller energy. Choosing $J_{\text{gap}} = m^2(T + 1)$, we find from Lemma \ref{unionboundPOVM} that 
\begin{equation} \label{approxgorundstate}
|\bra{1}^{\otimes m}\ket{\psi_g}| \geq 1 - m^{-1}2^{-p(n) + m}. 
\end{equation}
Suppose we had a state $\ket{\phi}$ orthogonal to $\ket{\psi_g}$ such that $\ket{\eta_{\phi}}$ had energy below $J_{\text{out}} + 1$. Then
\begin{equation}
1 \geq \frac{J_{\text{gap}}}{T + 1}\bra{\phi_g}H_{\text{gap}}\ket{\phi_g} = m^2 \sum_{k=1}^m \tr(\ket{0}_k\bra{0} \phi_g^k),
\end{equation}
and applying Lemma \ref{unionboundPOVM}, $|\bra{1}^{\otimes m}\ket{\phi}| \geq 1 - 1/m$, which contradicts Eq. (\ref{approxgorundstate}), as $\braket{\phi}{\psi_g} = 0$. Therefore, $H_{\eff}$ has a gap of at least $1$, which implies that $H$ has a gap larger than $1 - 1/\poly(n)$.

The construction in the case of one dimensional local Hamiltonians is completely analogous. In $\YES$ instances the existence of a poly-gap comes from the uniqueness of the witness, so the same analysis carried out here to 5-local Hamiltonians is valid for the 1-D construction of \cite{AGIK07}. For $\NO$ instances, we add a term similar to $H_{\text{gap}}$ to the Hamiltonian give by Eq. (\ref{HamiltonianQMA1D}), but acting on the data qubits of the first $m$ sites of the chain and without a clock register. As such a term acts trivially in the control registers, the Clairvoyance Lemma (see section \ref{QMAcomp1D} and Ref. \cite{AGIK07}) can be applied in the same fashion and, once restricted to the space of legal configurations, the analysis is completely analogous to the one we just did.
\end{proof}

\part{Quantum Simulation of Many-Body Physics in Quantum Optical Systems} \label{part3}

\chapter{Quantum Simulations in Arrays of Coupled Cavities} \label{AMO}

\section{Introduction}

Quantum optics and atomic physics are concerned with the interaction of light and matter in the quantum regime, typically on the scale of a few atoms and light quanta \cite{KA83, CDG89, MW95, Lou00}. As atoms do not naturally interact strongly with each other and with light, the physics of such systems can usually be very well understood by considering them on an individual level and treating the interactions perturbatively. This is somehow a much simpler situation than the one encountered in the condensed matter context. There, strong interactions among the basic constituents, such as nuclei and electrons, lead to the appearance of completely new physics when one considers a mesoscopic or macroscopic number of interacting individual particles. Thus, even though the building interactions are usually known, it is challenging to fully describe the properties of such systems. 

The higher level of isolation of quantum optical systems have permitted over the past decades the development of experimental techniques for their manipulation with an unprecedent level of control. Highlights include the invention of trapping and cooling techniques \cite{MS99}, which have found applications e.g. in spectroscopy and precision measurements \cite{AD02} and which culminated in the realization of a Bose-Einstein condensate (BEC) of neutral atoms \cite{Leg01, PS01}; the manipulation of the state of individual atomic and photonic systems at the quantum level \cite{LBM+03,RBH01} and its use in the study of the foundations of quantum theory \cite{RBH01,Zei99}; and more recently the realization of primitives for the processing of quantum information and for the implementation of quantum computation in several different quantum atomic and optical set-ups \cite{MM02}.  

Such a progress raised the possibility of engineering strong interactions in these systems and of realizing such strong correlated models in systems with large number of constituents. This is of interested from a fundamental point of view, as in this way we can form new phases of matter for atoms, photons, or even molecules that do not exist outside the laboratory. Moreover, due to the high level of experimental control over them, it is possible to study many-body physics in a much cleaner manner than possible employing condensed-matter systems, in which the strong interaction with the environment and the very short space- and time-scales involved inhibit a precise control over their static and dynamical properties. As mentioned in chapter \ref{complexity}, the use of a quantum mechanical system to simulate the physics of another is an idea that dates back to Feynman \cite{Fey82} and has attracted a lot of interest recently in quantum information science \cite{MM02}. 

A very successful example in this direction is the study of cold trapped atoms in optical lattices \cite{BDZ08}. By loading a BEC of neutral atoms in a periodic optical lattice, formed by the off-resonant dipole interaction of atoms with overlapping standing waves of counterpropagating lasers, the otherwise weak interactions of the atoms can be increased to such an extend that a strongly correlated many-body model is formed \cite{JBC+98}. This idea led to several important experiments, such as the realization of the Bose-Hubbard model \cite{GME+02} and of a Tonks-Girardeau gas \cite{BWM+04} and is current an active area of theoretical and experimental research \cite{LSA+07,BDZ08}. 

Also of note is the early study of strongly correlated many-body models in Josephson junctions arrays \cite{MSS01}. Although a Josephson junction is a mesoscopic solid state device, it behaves in many respects just like the type of systems studied in quantum optics, as the relevant physics can be understood by looking at a few quantum levels. In this context, arrays of interacting Josephson junctions have been used to reproduce properties of bosonic particles. For example, the Mott insulator phase \cite{ZFE+92} and the transition superconductor-to-insulator \cite{OM96} have been observed in Josephson junctions arrays.    

Finally, theoretical proposals for using trapped ions for realizing many-body models have been put forwarded \cite{PC04a,PC04b,RPTS08} and a first benchmark experiment in this direction has been recently reported \cite{FSG+08}. 

In this chapter and in the next three we analyse the use of the atom-light interaction in coupled microcavity arrays to create strongly correlated many-body models. Such a possibility has attracted a lot of interest recently, see e.g.  \cite{HBP06,ASB07,GTCH06,HP07,HBP07a,BHP07,HBP07b,RF07,NUT+08,PAK07,IOK08,BKKY07,MCT+08,HP08,LG07,KA08,CAB08,RFS08,OIK08, HBP08}. Recent experimental progress in the fabrication of microcavity arrays and the realization of the quantum regime in the interaction of atomic-like structures and quantized electromagnetic modes inside those \cite{AKS+03,SKPV03,KSMV04,ADW+06,BLS+06,ASV07,DPA+08,AAS+03,BHA+05,SNAA05,MSG+07,SPS07} opened up the possibility of using them as quantum simulators of many-body physics. The first motivation for this study is the desire to put photons or combined photonic-atomic excitations in new states of matter, which do not naturally appear in nature. The second is that this set-up offer advantages over other proposals for the realization of strongly interacting many-body models in quantum optical systems. Due to the small separations between neighboring sites, it is experimentally very challenging to access individual sites in Josephson junctions arrays, optical lattices, and to a lesser degree in ion traps\footnote{The problem here is related to scalability. Although for a small number of ions it is possible to address individual sites, this becomes harder when more ions are confined in the trap, as the average spacing gets smaller.}. Neighboring sites in coupled cavity arrays, on the other hand, are usually separated by dozens of micrometers and can therefore be accurately accessed by optical frequencies. This allows e.g. the measurement of local properties as well as the exploration of inhomogeneous systems.  

The organization of this chapter is the following. In section \ref{cQED} we review the basics of cavity quantum electrodynamics (cQED) and overview some particular realizations of cavity QED that will show useful for our aims: toroidal microcavities coupled to neutral ions \cite{AKS+03,SKPV03,KSMV04,ADW+06,BLS+06,ASV07,DPA+08}, in subsection \ref{todoid}, photonic-band-gap defect nanocavities coupled to quantum dots \cite{AAS+03,BHA+05,SNAA05}, in subsection \ref{phddot}, and microwave stripline cavities coupled to Cooper pair boxes \cite{MSG+07,SPS07}, in subsection \ref{Cooperpair}. In section \ref{cca}, in turn, we develop the basic description for an array of interacting microcavities, which will be employed in the next three chapters. 


\section{Cavity QED \label{cQED}}

An optical cavity or optical resonator is a system, usually formed by two or more mirrors, which allows for the formation of standing waves of light of particular frequencies and shape, called resonant modes or cavity modes. All the other modes are suppressed by destructive interference effects. Among the many applications of optical resonators, its use in light amplification by stimulated emission of radiation (laser) is of particular note \cite{Shi86}. 

The interaction of electromagnetic fields with atoms, which is in general rather complex, is substantially simplified inside a cavity, partially due to the few modes of radiation that must be taken into account. In fact the standard textbook example of an electric dipole interacting with a monochromatic electromagnetic field can be experimentally realized in the quantum regime in optical and microwave cavities \cite{MD02}. Cavity quantum electrodynamics (cQED) studies the physics of such systems. A detailed account of cQED can be found in \cite{KA83, WM94}. In the next paragraphs we briefly recall the basic interactions in cQED, mainly in order to set up the notation for the forthcoming discussion.  

Let us consider the simplest situation in which the cavity has a single resonating mode and interacts with a single neutral atom. The physics of this system is well described by the Jaynes-Cummings Hamiltonian \cite{JC63, SK93}, 
\begin{equation} \label{jynescummungsmodel}
H = \hbar \omega_C a^{\cal y}a + \hbar \omega_0 \ket{e}\bra{e} + \hbar g(a^{\cal y}\ket{g}\bra{e} + a\ket{e}\bra{g}),
\end{equation}
where $\omega_C, \omega_0$ are the frequencies of the resonant mode of the cavity and of the atomic transition, respectively, $g$ is Jaynes-Cummings coupling between the cavity mode and the two level system, $a^{\cal y}$ is the creation operator of a photon in the resonant cavity mode, and $\ket{g}, \ket{e}$ are the ground and excited states of the two level system. The vacuum Rabi frequency $g$ is given by 
\begin{equation}
g = \mu f(r_0)\left(\frac{\omega_C}{2  \epsilon_0 V_{\text{mode}}}\right)^{1/2},
\end{equation}
where $V_{\text{mode}}$ is the mode volume of the cavity, $f(r_0)$ is the mode-function at the position $r_0$ of the two level system, $\mu$ is the electric dipole transition moment, and $\epsilon_0$ is the free-space permittivity. 

Two important simplifications must be taken in order to derive the Jaynes-Cummings model (\ref{jynescummungsmodel}). First we consider the dipole approximation, in which only two electronic levels of the atom are considered and their interaction with the light mode is taken to be the one of an electric dipole \cite{KA83, WM94}. This approximation is valid as long as the light intensity is not too high and the wavelength of the light mode is much larger than the atomic dimension, which is usually the case in cavity QED. The second simplification that we used is the rotating wave approximation, where terms that oscillate with a frequency of $\omega_C + \omega_0$ are neglected. Again this is a good approximation if the vacuum Rabi frequency $g$ is much smaller than the resonance frequency $\omega_C$, which is the case for optical and microwave cavities \cite{KA83, WM94}.     

The atomic transition $g-e$ might also be driven by an monochromatic external laser field. The Hamiltonian for this process reads
\begin{equation}
H_L = \Omega(e^{- \delta i t}\ket{e}\bra{g} + e^{\delta i t}\ket{g}\bra{e}),
\end{equation}
where $\Omega$ and $\delta$ are the Rabi frequency and the frequency of the laser. 

On top of the coherent interaction, there are two main loss processes that affect the dynamics of the system: spontaneous emission from level $\ket{e}$ to level $\ket{g}$ at a rate $\gamma$ and leaking out of photons of the cavity mode at rate $\kappa$. The cavity decay rate $\kappa$ is connected to the quality factor of the cavity $Q$ by 
\begin{equation}
Q = \frac{\omega_C}{2\kappa}.
\end{equation}
Although other processes, such as thermal motion of the atom or dephasing due to background electromagnetic fields, also contribute to the losses of the system, their effect are usually much smaller and can in many application be disregarded.

The strong coupling regime of cavity QED is reached when the cooperativity factor, given by the the vacuum Rabi frequency to the square over the product of the spontaneous emission and cavity decay rates, is much larger than one
\begin{equation}
\xi := g^2/ \gamma \kappa \gg 1. 
\end{equation}
In this regime the coherent part of the evolution dominates over the decoherence processes and quantum dynamics of the joint atom cavity mode system can be observed. Such a regime has been first achieved in seminal experiments for microwave \cite{RBH01} and optical \cite{TRK92} frequencies, both using a single atom inside a Fabry-P\'erot cavity formed between two miniature spherical mirrors. In these works, the strong coupling regime has been employed e.g. to create conditional quantum dynamics and entanglement between the photonic and atomic degrees of freedom \cite{RBH01,THL+95}, to perform quantum non-demolition measurements \cite{RBH01}, and to realize the photon blockade effect \cite{BBM05}, which we revisit in section \ref{BHpol}.

In the next chapters we will be interested in the situation in which several cavities operating in the strong coupling regime are coupled to each other. It turns out that the Fabry-P\'erot architecture is not very suitable for the coupling of separate cavities. Nonetheless, several new cavity QED set-ups have recently emerged in which (i) the strong coupling regime has been achieved and (ii) the construction of arrays of coupled cavities has been realized, or at least seem reasonable to be so. These new systems, which we now briefly describe in the next paragraphs, are the strongest candidates for realizing the proposals we analyse in chapters \ref{BH}, \ref{kerrnonlinearity}, and \ref{heisenberg}.

\subsection{Photonic-band-gap-defect Nanocavities Coupled to Quantum Dots} \label{phddot}

The first example consists of quantum dots coupled to photonic-band-gap defect nanocavities. Photonic crystals are structures with periodic dielectric properties which affects the motion of photons, in a similar manner as the peridiocity of a semiconductor affects the motion of electrons \cite{JJ02}. Such structures can have band gaps in frequency space in which no photon of particular frequencies can propagate in the material. A nanocavity can be created in a photonic-band-gap material by producing a localized defect in the structure of the crystal, in such a way that light of a particular frequency cannot propagate outside the defect area. Large arrays of such nanocavities have been produced \cite{AV04} and photon hopping in the microwave and optical domains has been observed \cite{BTO00a,BTO00b}. Quantum dots, semiconductors whose excitations are confined in all three directions, behave in many respects like atoms and can be effectively addressed by lasers \cite{JA98}. Quantum dots can be created inside photonic crystal nanocavities \cite{BHA+05} and made to interact with the cavity mode to form a standard cavity QED system. Such a system has already been put in the strong coupling regime \cite{HBW+07}. Due to the extremely small volume of the nanocavity, the Jaynes-Cummings coupling coefficient can be extremely large \cite{AAS+03,SNAA05}. These cavities however have relatively small quality factors in comparison to other proposals \cite{SKV+05}, due mainly to fact that they are presently fabricated in two dimensions only. 

\subsection{Toroid Microcavities Coupled to Neutral Atoms} \label{todoid}

A second promising cavity QED set-up is composed of neutral atoms interacting with a toroid microcavity. Silica based toroid shaped microcavities can be produced on a chip with very large $Q$ factors \cite{AKS+03} and with resonant frequency in the optical range. They have moreover already been produced in arrays \cite{BLS+06,ASV07}. The coupling of a toroidal microcavity to a taper optical fiber has been realized with very high efficiency \cite{SKPV03} and could be used to provide the coupling between different cavities. Strong coupling of such cavities to neutral atoms has been reported in Ref. \cite{ADW+06}. Theoretical predictions \cite{SKV+05} show that very large cooperativity factors in the range $10^2-10^3$ are expected to be achieved in such systems.

\subsection{Stripline Superconducting Resonator Coupled to Cooper-pair Boxes}  \label{Cooperpair}

A third example is the cavity QED system formed by a Cooper-pair box coupled to a superconducting transmission line resonator. A Cooper-pair box is formed by two superconducting islands separated by a Josephson junction. Although it is a macroscopic device, it behaves in many ways like an atom as the relevant structure can be described by a two level quantum system. A superconducting transmission line cavity, on the other hand, is a quasi-one-dimensional coplanar waveguide resonator, formed on a chip with a resonance frequency in the microwave range. The strong-coupling regime in such a system has been reported in Ref. \cite{WSB+04} and recently two Cooper-pair boxes have been strongly coupled to the cavity mode \cite{MSG+07,SPS07}. The largest cooperativity factor to date has been achieved in this set-up \cite{MSG+07}. Although coupling between different cavities has not been realized, up to ten Cooper-pair boxes interacting in the strong coupling regime with the resonant cavity mode has been predicted to be achievable using the architecture of Ref. \cite{MSG+07}.

\section{Coupled Cavity Arrays} \label{cca}
%
%
In this section we present a description for the coupling of cavities in an array. Photon hopping can happen between neighboring cavities due to the overlap of the spacial profile of the cavity modes. In order to model such a process, we follow Refs. \cite{HBP06,SM98,YXLS99,BTO00a} and consider the array of cavities by a periodic dielectric constant, $\epsilon (\vec{r}) = \epsilon (\vec{r} + R\vec{n})$, where $\vec{r}$ is a given three dimensional vector, $R$ is a constant and $\vec{n}$ labels tuples of integers. In the Coulomb gauge the electromagnetic field is represented by a vector potential $\vec{A}$ satisfying $\nabla \cdot (\epsilon(\vec{r}) \, \vec{A}) = 0$. We can expand $\vec{A}$ in Wannier functions, $\omega_{\vec{R}}$, each localized in a single cavity at location $\vec{R} = R \vec{n}$. We describe this single cavity by the dielectric function $\epsilon_{\vec{R}}(\vec{r})$. Then, from Maxwell's Equations we can write the following eigenvalue Equation
\begin{equation} \label{Wanniereval}
\frac{\epsilon_{\vec{R}}(\vec{r}) \, \omega_C^2}{c^2} \vec{w}_{\vec{R}} \, -
\, \nabla \times \left( \nabla \times \vec{w}_{\vec{R}} \right) = 0 \, ,
\end{equation}
where the eigenvalue $\omega_C^2$ is the square the resonance frequency of the cavity, which is independent of $\vec{R}$ due to the periodicity. 
Assuming that the Wannier functions decay strongly enough outside the cavity, only Wannier modes of nearest neighbor cavities have a nonvanishing overlap.

Introducing the creation and annihilation operators of the Wannier modes, $a_{\vec{R}}^{\dagger}$ and $a_{\vec{R}}$, the Hamiltonian of the field can be written as
\begin{eqnarray} \label{arrayham2}
\mathcal{H} &=& \omega_C \sum_{\vec{R}}
\left(  a_{\vec{R}}^{\dagger} a_{\vec{R}} + \frac{1}{2} \right) + 2 \omega_C \alpha \sum_{<\vec{R}, \vec{R}'>}
\left( a_{\vec{R}}^{\dagger} a_{\vec{R}'} + \text{h.c.} \right) \, .
\end{eqnarray}
Here $\sum_{< \vec{R}, \vec{R}' >}$ is the sum of all pairs of cavities which are nearest neighbors
of each other. Since $\alpha \ll 1$, we neglected rotating terms which contain products of two
creation or two annihilation operators of Wannier modes in deriving (\ref{arrayham2}). The coupling parameter $\alpha$ 
is given by \cite{YXLS99,BTO00a},
\begin{equation}
\alpha = \int d^3r \, \left(\epsilon_{\vec{R}}(\vec{r}) \, - \, \epsilon(\vec{r}) \right) \,
\vec{w}_{\vec{R}}^{\star} \vec{w}_{\vec{R}'} \, ; \quad 
|\vec{R} - \vec{R}'| = R \, ,
\end{equation}
and has been obtained numerically for specific models \cite{MV97}.

Let us comment on the accuracy of this model in realistic cavity QED implementations. For cavity arrays in 
photonic crystals it provides an excellent approximation, as shown in the experiments reported in Refs. \cite{BTO00a,BTO00b}, in which the experimentally 
determined coupling coefficient $\alpha$ was found to be in good agreement with the theoretical prediction. 

For toroidal microcavities, 
coupling from the resonant mode to the fundamental mode of an optical fiber and back again with negligible noise has been 
achieved \cite{SKPV03}. A challenge that would have to be overcome for realizing Hamiltonian \ref{arrayham2} would be to prevent that significant photon population is localized in the fiber instead of in the cavities themselves. We point out that even if the fiber contains 
appreciable excitations the proposals we discuss in the sequel could still be realized, as one could considers the 
fiber mode as an extra site for the model (perhaps with different local terms as the other sites).  

In deriving Hamiltonian (\ref{arrayham2}) we have assumed that all the cavities have the same resonant frequency. In practice of course there will always be some detuning $\Delta$ between the resonant frequencies of two neighboring cavities, and this leads to a decrease in their coupling. Nonetheless, as long as $\alpha \gg \Delta$, model (\ref{arrayham2}) still provides a very good approximation. 

Although realizing Hamiltonian \ref{arrayham2} might be interesting for quantum information propagation \cite{HRP06,PHE04}, we do not need a quantum simulator to understand its properties. Indeed, the model is harmonic and can be very easily solved in terms of Bloch waves, which allows for a simple understanding of all its basic properties. The situation changes dramatically if we add an on-site interaction term. The interplay of tunneling and interaction leads to interesting many-body physics, as we show in the next chapter, where we analyse the creation of the Bose-Hubbard model in an array of coupled microcavities.

\chapter{The Bose-Hubbard Model in Coupled Cavity Arrays} \label{BH}

\section{Introduction}

The Bose-Hubbard model describes the physics of interacting bosons in a lattice. This model was first discussed by Fisher \textit{et al} in Refs. \cite{FWG+89, FGG90} as the bosonic counterpart of the Hubbard model \cite{Hub63}, largely used in solid state physics for explaining the transition between conducting and insulating systems, and is described by the following Hamiltonian
\begin{eqnarray} \label{bosehubbard}
H_{BH} = U \sum_{k} b^{\cal y}_k b_k(b^{\cal y}_k b_k - 1) - J \sum_{<k, k'>} (b_{k}^{\cal y}b_{k'} + h.c.) + \mu \sum_{k} b^{\cal y}_k b_k,
\end{eqnarray}
where $b_k^{\cal y}$ creates a boson at the $k$-th site. There are three different processes in the Hamiltonian: Bosons hop to neighboring sites at a rate $J$, two or more bosons in the same site repulse each other with a strength $U$ (or attract each other if $U < 0$), and new bosons are added to the systems with a rate determined by the chemical potential $\mu$.   


The Bose-Hubbard model has two different phases at zero temperature and a quantum phase transition when the ration $U/J$ crosses the critical value \cite{FWG+89, FGG90}. When $J \gg U$, the tunneling term dominates and the lowest energy state of the system is a condensate of delocalised bosons: the system is in the superfluid phase. If we start to increase the on-site interaction, it will be energetically favorable for the bosons to localise more and more in the sites, in order to minimise the repulsion. Above a specific value of $U/J$, the system ceases to be a superfluid and becomes a Mott insulator, with a well defined number of particles in each site\footnote{In the case of incommensurate filling factors, there will be some superfluid population even in the Mott insulator phase.}. The superfluid-Mott-insulator phase transition is of a quantum character and is driven by quantum fluctuations, as it occurs even at zero temperature \cite{Sac99}. An indicator of such transition is the variance of the number of bosons in a given site, $\langle (b^{\cal y}_k b_k - \langle b^{\cal y}_k b_k \rangle )^2 \rangle$. In the superfluid phase, the number of bosons in each site fluctuates and thus the variance has a non-zero value. In the Mott insulator phase, in turn, the variance is close to zero, as the bosons tend to be localised in each site\footnote{The interested reader is referred to \cite{LSA+07} for an in-depth study of the model.}. 

There has been a renascence in the study of the Bose-Hubbard model since it was shown to describe the physics of ultra-cold neutral atoms in an optical lattice \cite{JBC+98, GME+02, LSA+07,BDZ08}. As mentioned in the previous chapter, such possibility of realizing strongly interacting quantum many-body systems in a well controlled manner is appealing both from the point of view of condensed matter physics and quantum information science.  

In this chapter we propose to realize the Bose-Hubbard Hamiltonian in arrays of coupled cavities, using combined atom-photon excitations, polaritons, and even photons as the bosonic particles of the model. This approach is interesting for two reasons. First it could allow us to put polaritons and photons in phases which do not occur naturally. Second, as discussed in chapter \ref{AMO}, in coupled cavity arrays one has full accessibility to individual sites, something that has not been achieved in optical lattices to date. Such individual access is important for measuring local properties of the system and for engineering inhomogeneous models.  

The chapter is organized as follows. In section \ref{BHpol} we show that the dynamics of polaritons in coupled cavity arrays filled in with an ensemble of 4 level atoms is described by the Bose-Hubbard model and in subsection \ref{secd} we analyse the robustness to noise and imperfections of the set-up. In section \ref{twocompBH}, in turn, we show that for a particular choice of parameters the configuration of section \ref{BHpol} can be used to create a two component Bose-Hubbard model. In section \ref{photonicBH} we discuss the feasibility of realizing the Bose-Hubbard model for the cavity photons themselves. Finally, in section \ref{sec:measure} we discuss ways to measuring the states of the polaritons.  

\section{Polaritonic Bose-Hubbard Model} \label{BHpol}

As we saw in chapter \ref{AMO}, the coupling between neighboring cavities naturally leads to the tunneling of photons in arrays of coupled-cavities. In order to realize the Bose-Hubbard model we thus need to a find a way to generate the repulsion interaction. Indeed, up to harmonic terms which can easily be compensated for, the on-site repulsion interaction has the form of a self Kerr non-linearity $(b_k^{\cal y})^2 b_k^2$ \cite{Blo65}. Although such an interaction naturally appears in some mediums as a result of a non-zero third order electric susceptibility \cite{Blo65}, its effect on the single quantum level is negligible, which is a reason for the difficulty of realizing nonlinear optics for individual quantum systems. However, using the enhanced light-matter interaction in cavity QED it is possible to engineer much stronger nonlinearities. Intuitively the strong interaction of the light mode with atoms inside the cavity, under particular circumstances, mediates strong nonlinear interaction among the photons of the cavity mode. The strength of the nonlinearity can be increased even further if instead of considering photons as the bosonic particles of the model, we consider polaritons, joint photonic-atomic excitations. In this section analyse the feasibility of realizing the Bose-Hubbard Hamiltonian using such polaritons. 
\begin{figure}
	\centering
	\includegraphics[width=0.8\textwidth]{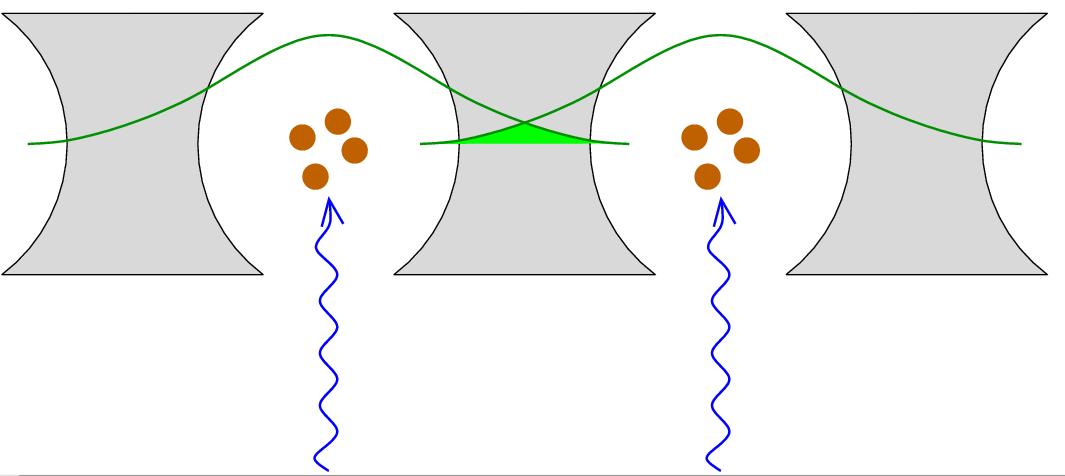}
	\caption{An array of cavities as described by our model. Photon hopping occurs due to the overlap (shaded green) of the light modes (green lines) of adjacent cavities. Atoms in each cavity (brown), which are driven by external lasers (blue) give
	rise to an on site potential.}
	\label{crystal}
\end{figure}

\subsection{Derivation of the Model}

In order to derive the polaritonic Bose-Hubbard model, we consider an array of cavities that are filled with atoms of a particular 4 level structure, which
are driven with an external laser, see figure \ref{crystal}. Thereby the laser drives the atoms in the same manner as in Electromagnetically Induced Transparency (EIT) \cite{FIM05}: The transitions between levels 2 and 3 are coupled to the laser field and the transitions between levels 2-4 and 1-3 couple via dipole moments to the cavity resonance mode, as shown in fig. \ref{level}. We assume furthermore that levels 1 and 2 are metastable and thus ignore spontaneous emission from them.
\begin{figure}
	\centering
	\includegraphics[width=.7\textwidth]{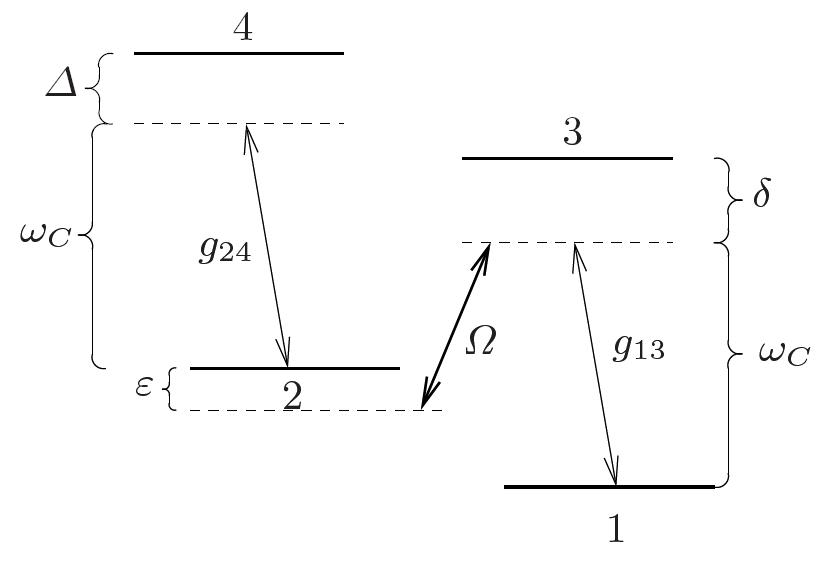}
	\caption{The level structure and the possible transitions of one atom, $\omega_C$ is the frequency of the cavity mode,
	$\Omega$ is the Rabi frequency of the driving by the laser, $g_{13}$ and $g_{24}$
	are the parameters of the respective dipole couplings and $\delta$, $\Delta$ and $\varepsilon$
	are detunings.}
	\label{level}
\end{figure}

It has been shown by Imamoglu and co-workers, that this atom cavity system can show a very large nonlinearity \cite{SI96,ISWD97,WI99,GAG99}, and a similar nonlinearity has recently been observed experimentally \cite{BBM05}, where an instance of the photon blockade effect was realised. In the photon blockade effect, due to the strong interaction of the cavity mode with atoms, a single photon can modify the resonance frequency of the cavity mode in such a way that a second photon can not enter the cavity before the first leaks out. Such a phenomenon has been predicted first with the four level structure of Refs. \cite{SI96,ISWD97}, and then shown experimentally using a two level structure \cite{BBM05}. Recently this idea has been used in Refs. \cite{ASB07,GTCH06}, where it has been proposed that in an array of optical cavities with two level atoms operating in the photon blockade regime, strongly correlated systems, having similarities with the Bose-Hubbard model, can be created. However, the use of two levels has the fundamental drawback that losses due to spontaneous emission cannot be avoided, which renders the feasibility of the schemes \cite{ASB07,GTCH06} dependent on extremely large cooperativity numbers, yet to be achieved in experiments. 

Considering the level structure of figure \ref{level}, in a rotating frame with respect to
\begin{equation*}
H_0 = \omega_C \left( a^{\dagger} a + \frac{1}{2} \right) + \sum_{j=1}^N \left( \omega_C \sigma_{22}^j + \omega_C \sigma_{33}^j + 2 \omega_C \sigma_{44}^j \right),
\end{equation*}
the Hamiltonian of the atoms in the cavity reads,
\begin{eqnarray} \label{H_manyatom}
H^I & = &
\sum_{j=1}^N \left(\varepsilon \sigma_{22}^j + \delta \sigma_{33}^j + (\Delta + \varepsilon)
\sigma_{44}^j \right) \\
& + & \, \sum_{j=1}^N \left( \Omega \, \sigma_{23}^j \, + \,
g_{13} \, \sigma_{13}^j \, a^{\dagger} \, + \,
g_{24} \, \sigma_{24}^j \, a^{\dagger} \, + \, \text{h.c.} \right)  \, , \nonumber
\end{eqnarray}
where $\sigma_{Al}^j = \ket{k_j}\bra{l_j}$ projects level $l$ of atom $j$ to level $k$ of the same atom,
$\omega_C$ is the frequency of the cavity mode,
$\Omega$ is the Rabi frequency of the driving by the laser and $g_{13}$ and $g_{24}$
are the parameters of the dipole coupling of the cavity mode to the respective atomic
transitions.

In a cavity array, $N$ atoms in each cavity couple to the cavity mode via the interaction $H^I$ and photons tunnel between neighboring cavities as described in
Equation (\ref{arrayham2}). Hence the full Hamiltonian describing this system reads
\begin{eqnarray} \label{polariton_raw}
H = \sum_{\vec{R}} H^I_{\vec{R}} + \omega_C \sum_{\vec{R}}
\left(  a_{\vec{R}}^{\dagger} a_{\vec{R}} + \frac{1}{2} \right)
+ 2 \omega_C \alpha \sum_{<\vec{R}, \vec{R}'>}
\left( a_{\vec{R}}^{\dagger} a_{\vec{R}'} + \text{h.c.} \right)
\end{eqnarray}
and can emulate a Bose-Hubbard model for polaritons as we will see.

Assuming that all atoms interact in the same way with the cavity mode\footnote{We can relax this assumption and only require that the atoms do not move very fast. In such a case the Dicke-type operators have to be modified taking into account the phase as a function of the position of the atoms.}, we can restrict ourselves to Dicke type dressed states, in which the atomic excitations are delocalised among all the atoms. In the case where $g_{24} = 0$ and $\varepsilon = 0$ , level 4 of the atoms decouples from the dressed-states excitation manifold \cite{WI99}.
If we furthermore assume that the number of atoms is large, $N \gg 1$, Hamiltonian (\ref{H_manyatom}) can be diagonalised in the subspace spanned by symmetric Dicke dressed states.
Let us therefore define the following creation (and annihilation) operators:
\begin{eqnarray} \label{polariton_operators}
p_0^{\dagger} & = & \frac{1}{B} \, \left(g S_{12}^{\dagger} - \Omega a^{\dagger} \right), \\
p_+^{\dagger} & = & \sqrt{\frac{2}{A (A + \delta)}} \, \left(\Omega S_{12}^{\dagger} + g a^{\dagger} +
\frac{A + \delta}{2} S_{13}^{\dagger} \right) \nonumber \\
p_-^{\dagger} & = & \sqrt{\frac{2}{A (A - \delta)}} \, \left(\Omega S_{12}^{\dagger} + g a^{\dagger} -
\frac{A - \delta}{2} S_{13}^{\dagger} \right) \nonumber \, ,
\end{eqnarray}
where $g = \sqrt{N} g_{13}$, $B = \sqrt{g^2 + \Omega^2}$, $A = \sqrt{4 B^2 + \delta^2}$,
$S_{12}^{\dagger} = \frac{1}{\sqrt{N}} \sum_{j=1}^N \sigma_{21}^j$
and $S_{13}^{\dagger} = \frac{1}{\sqrt{N}} \sum_{j=1}^N \sigma_{31}^j$.

The operators $p_0^{\dagger}$, $p_+^{\dagger}$ and $p_-^{\dagger}$ describe polaritons,
quasi particles formed by combinations of atom and photon excitations.
By looking at their matrix representation, we can see that in the limit of large atom numbers, $N \gg 1$ and in the symmetric subspace, they satisfy bosonic commutation relations,
\begin{equation} \label{polariton_comm}
\left[ p_j, p_l \right] = 0 \: \: \text{and} \: \:
\left[ p_j, p_l^{\dagger} \right] = \delta_{jl} \: \: \text{for} \: \: j,l = 0,+,-.
\end{equation}
$p_0^{\dagger}$, $p_+^{\dagger}$ and $p_-^{\dagger}$ thus describe independent bosonic
particles. In terms of these polaritons, the Hamiltonian (\ref{H_manyatom}) for $g_{24} = 0$ and
$\varepsilon = 0$ reads,
\begin{equation} \label{H0polariton}
\left[H^I\right]_{g_{24} = 0, \varepsilon = 0} = 
\mu_0 \, p_0^{\dagger} p_0 + \mu_+ \, p_+^{\dagger} p_+ + \mu_- \, p_-^{\dagger} p_- \, ,
\end{equation}
where the frequencies are given by $\mu_0 = 0$, $\mu_+ = (\delta - A)/2$ and
$\mu_- = (\delta + A)/2$. 

The polaritons $p_0$ only contain atomic contributions in the two metastable states $1$ and $2$ but not in level $3$, which shows spontaneous emission. Hence, these polaritons do not live in radiating atomic levels and are thus called dark state polaritons \cite{FIM05}.

\subsubsection{Polariton-polariton Interactions}

We now show that the dynamics of $p_0$ can be described by the self Kerr nonlinearity interaction.
To this aim we fist write the full Hamiltonian $H^I$, given by Eq. (\ref{H_manyatom}), in the polariton basis,
expressing the operators $\sum_{j=1}^N \sigma_{22}^j$ and
$a^{\dagger} \, \sum_{j=1}^N \sigma_{24}^j$
in terms of $p_0^{\dagger}$, $p_+^{\dagger}$ and $p_-^{\dagger}$. 
The coupling of the polaritons to the level 4 of the atoms via the dipole moment $g_{24}$ reads,
\begin{equation} \label{coupletolevel4}
g_{24} \, \left( \sum_{j=1}^N \sigma_{42}^j \, a \, + \text{h.c.} \right) \approx
- g_{24} \, \frac{g \Omega}{B^2} \, \left( S_{14}^{\dagger} \, p_0^2 \, + \text{h.c.} \right) \, ,
\end{equation}
where $S_{14}^{\dagger} = \frac{1}{\sqrt{N}} \sum_{j=1}^N \sigma_{41}^j$. 
In deriving (\ref{coupletolevel4}), we made use of the rotating wave approximation: In a
frame rotating with respect to (\ref{H0polariton}), the polaritonic creation operators rotate with the frequencies $\mu_0$, $\mu_+$ and $\mu_-$.
Furthermore, the operator $S_{14}^{\dagger}$ rotates at the frequency $2 \mu_0$, (see Eq. (\ref{H_manyatom})). Hence, provided that 
\begin{equation} \label{rotatingwappr}
|g_{24}| \, , \, |\varepsilon| \, , \, |\Delta| \, \ll \, |\mu_+| \, , \, |\mu_-|
\end{equation}
all terms that rotate at
frequencies $2 \mu_0 - (\mu_+ \pm \mu_-)$ or $\mu_0 \pm \mu_+$ or $\mu_0 \pm \mu_-$
can be neglected, which eliminates all interactions that would couple $p_0^{\dagger}$ and $S_{14}^{\dagger}$
to the remaining polariton species.

For $|g_{24 }g \Omega / B^2| \ll |\Delta|$, the coupling to level 4 can be treated in a perturbatively.
This results in an energy shift of $2 U$ with
\begin{equation} \label{osint}
U = - \frac{g_{24}^2}{\Delta} \,
\frac{N g_{13}^2 \, \Omega^2}{\left(N g_{13}^2 \, + \, \Omega^2 \right)^2}
\end{equation}
and in an occupation probability of the state of one $S_{14}^{\dagger}$ excitation of
$- 2 U / \Delta$, 
which will determine an effective decay rate for the polariton $p_0^{\dagger}$ via
spontaneous emission from level 4. Note that $U > 0$ for $\Delta < 0$ and vice versa.
In a similar way, the two photon detuning $\varepsilon$ leads to and energy shift of
$\varepsilon \, g^2 \, B^{-2}$ for the polariton $p_0^{\dagger}$, which plays the role of a chemical
potential in the effective Hamiltonian.

Hence, provided (\ref{rotatingwappr}) holds, the Hamiltonian for the dark state polariton $p_0^{\dagger}$
can be written as
\begin{equation} \label{Heffect}
H_{\text{eff}} = U \, \left(p_0^{\dagger}\right)^2 \left(p_0\right)^2 \, + \, \varepsilon \, \frac{g^2}{B^2} \, p_0^{\dagger} p_0 \, ,
\end{equation}
in the rotating frame.

\subsubsection{Polariton Tunnelling}

Let us now look at an array of interacting cavities, each under the conditions discussed above. 
The on-site interaction and the chemical potential of the Hamiltonian
(\ref{arrayham2}) have already been incorporated in the polariton analysis for one individual cavity.
The tunneling term transforms into the polariton picture via (\ref{polariton_operators}).
To distinguish between the dark state polaritons in different
cavities, we introduce the notation $p_{\vec{R}}^{\dagger}$ to label the polariton $p_0^{\dagger}$
in the cavity at position $\vec{R}$. We get,
\begin{eqnarray} 
a_{\vec{R}}^{\dagger} a_{\vec{R}'}
& \approx & \frac{\Omega^2}{B^2} \, p_{\vec{R}}^{\dagger} \, p_{\vec{R}'} \\
& + & \text{"terms for other polariton species"} \nonumber \, ,
\end{eqnarray}
where we have applied a rotating wave approximation.
Contributions of different polaritons decouple due to the separation of their
frequencies $\mu_0$, $\mu_+$ and $\mu_-$.
As a consequence the Hamiltonian for the polaritons $p_{\vec{R}}^{\dagger}$ takes on the form
(\ref{bosehubbard}), with 
\begin{equation}
J = \frac{2 \omega_C \Omega^2}{N g_{13}^2 + \Omega^2} \alpha \, ,
\end{equation}
$U$ as given by equation (\ref{osint}) and $\mu = \epsilon g^2 / B^2$. 

\subsection{Spontaneous Emission and Cavity Decay} \label{secd}

Let us now analyse the loss sources in the system. Levels 1 and 2 are metastable and therefore have negligible decay rates on the relevant time scales. The decay mechanism for the dark state polariton originates then mainly from the leaking out of photons from the cavity and from the small, but still non-zero, population of level 4, due to the coupling $\sum_{j=1}^N (\sigma_{42}^ja + h.c.)$, which leads to spontaneous emission. The resulting effective decay rate for the dark state polariton reads
\begin{equation} \label{gamma0}
\Gamma_0 = \frac{\Omega^2}{B^2}\kappa + \Theta(n - 2)\frac{g_{24}^2g^2\Omega^2}{\Delta^2 B^4}\gamma_4,
\end{equation}
where $\kappa$ is the cavity decay rate, $\gamma_4$ the spontaneous emission rate from level 4, $n$ is the average number of photons, and $\Theta$ the Heaveside step function. Assuming that $g_{13} = g_{24}$, the maximal achievable rate $U/ \Gamma_0$ can be readily seen to be $g_{13}/\sqrt{4 \kappa \Theta(n - 2)\gamma_4}$. Therefore, in the strong coupling regime we have $U/ \Gamma_0 \gg 1$, which ensures that the dynamics of the Bose-Hubbard model can be observed under realistic conditions. 

\subsection{The Phase Transition} 

Of great interest is of course whether the quantum phase transition from a Mott insulator to a superfluid state could be observed with the present polariton approach and realisable technology. To study this phase transition, we consider a system with on average one polariton per
cavity. Here, the Mott insulator state is characterised by the fact that the local state in each cavity is a single polariton Fock state with vanishing fluctuations of the polariton number. The superfluid phase in contrast shows polariton number fluctuations.

Figure \ref{vardiff4} shows numerical simulations of the full dynamics of
an array of three cavities as described by equation (\ref{polariton_raw}), including spontaneous emission and cavity decay, and studies the number of polaritons in one cavity/site, 
$n_l = \langle p_l^{\dagger} p_l \rangle$ for site $l$, and the number fluctuations,
$F_l = \langle (p_l^{\dagger} p_l )^2 \rangle - \langle p_l^{\dagger} p_l \rangle^2$.
A comparison between the full model (\ref{polariton_raw}) and the effective model (\ref{bosehubbard}) is done by considering the differences
in the occupation numbers, $\delta n_l = \left[n_l\right]_{\text{cavities}} - \left[n_l\right]_{\text{BH}}$,
and number fluctuations, $\delta F_l = \left[F_l\right]_{\text{cavities}} - \left[F_l\right]_{\text{BH}}$, at each timestep.
Initially there is exactly one polariton in each cavity. We take parameters for toroidal microcavities from \cite{SKV+05}, $g_{24} = g_{13} = 2.5 \times 10^9 s{-1}$, $\gamma_4 = \gamma_3 = 1.6 \times 10^7 s^{-1}$ and $\kappa = 0.4 \times 10^5 s^{-1}$ \cite{HBP06}.
The results are shown in figure \ref{vardiff4}. As the system is driven from a Mott insulator state to a superfluid state by ramping up the driving laser $\Omega$,
the particle number fluctuations increase significantly. 
The numerics show an excellent agreement of the full dynamics as described by equation (\ref{polariton_raw}) and the dynamics of the corresponding Bose-Hubbard model (\ref{bosehubbard}) and thus confirm the possibility of observing the Mott-insulator-to-superfluid transition in such a system.
The oscillatory behavior is related to the fact that the initial state of the system is, due to the
nonzero tunneling rate $J$, not its ground state. This initial state was chosen since it's preparation in an experiment is expected to be easier compared to other states and figure \ref{vardiff4} thus shows the dynamics that is expected to be observed in an experiment.
\begin{figure*}
\centering
\includegraphics[width=1\textwidth]{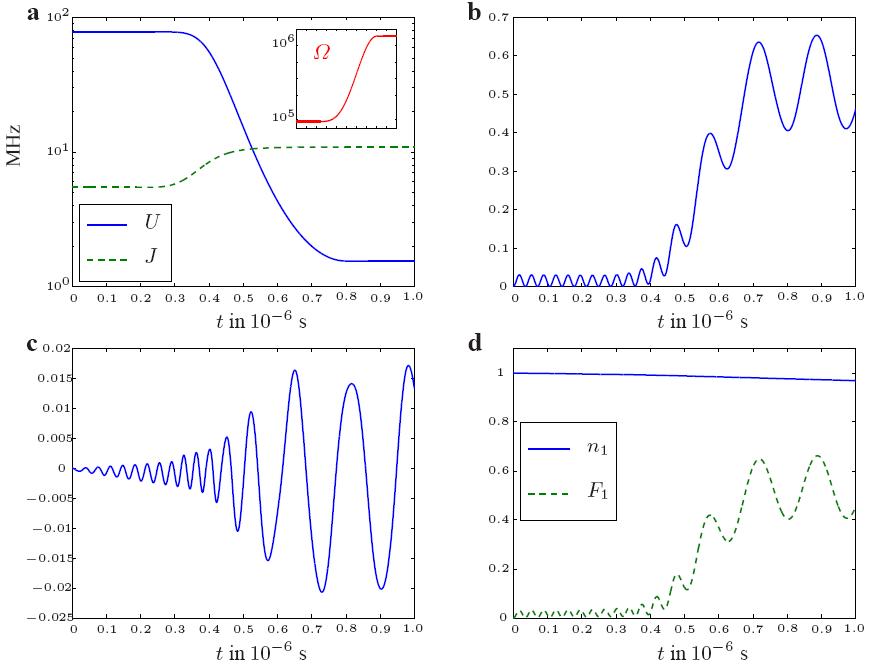}
\caption{\label{vardiff4} The Mott insulator to superfluid transition for 3 polaritons in 3 cavities compared to 3 particles in a 3 site Bose-Hubbard model.
{\bf a:} Log-plot of $U$ and $J$ and linear plot of the time dependent $\Omega$ (inset).
{\bf b:} Number fluctuations for polaritons in cavity 1, $F_1$, for a single quantum jump trajectory.
{\bf c:} Difference between number fluctuations for polaritons in a cavity and number fluctuations in a pure BH model, $\delta F_1$, for a single quantum jump trajectory.
{\bf d:} Expectation value and fluctuations for the number of polaritons in one cavity/site according to
the effective model (\ref{bosehubbard}) with damping (\ref{gamma0}).
$g_{24} = g_{13} = 2.5 \times 10^9 s{-1}$, $\gamma_4 = \gamma_3 = 1.6 \times 10^7 s^{-1}$, $\kappa = 0.4 \times 10^5 s^{-1}$, $N = 1000$,
$\Delta = -2.0 \times 10^{10} \text{s}^{-1}$ and $2 \omega_C \alpha = 1.1 \times 10^{7} \text{s}^{-1}$.
The Rabi frequency of the driving laser is increased from initially $\Omega = 7.9 \times 10^{10} \text{s}^{-1}$ to finally $\Omega = 1.1 \times 10^{12} \text{s}^{-1}$.
Deviations from the pure BH model are about 2\%.}
\end{figure*}

The phase diagram for this model and its dependence on the atom number $N$ have been analysed in \cite{RF07}, where also a glassy phase has been predicted for a system with disorder.

An polariton model which generates an effective Lieb-Liniger Hamiltonian and is
capable of reproducing the Tonks Girardeau regime was proposed in \cite{CGM+07}.
The scheme employs atoms of the same level structure as in figure \ref{level} that
couple to light modes of a tapered optical fiber.

\section{Two Component Bose-Hubbard Model} \label{twocompBH}

In this section we discuss an extension of the proposal just presented for creating a two component polaritonic Bose-Hubbard model. This model describes the dynamics of two independent bosonic particles in each site and reads 
\begin{eqnarray} \label{bosehubbardtwocomp}
H_{\text{eff}} & = &
\sum_{k,j = b,c} \mu_j \, n^{(j)}_{k} \, - \, \sum_{< k, k' >; j,l = b,c} 
J_{j,l} \left( j_{k}^{\dagger} \, l_{k'}\, + \, \text{h.c.} \right) \, \nonumber \\
& + & \sum_{k,j = b,c} U_j \, n^{(j)}_{k} \left(n^{(j)}_{k} - 1\right) \, 
+ \sum_{k} U_{b,c} \, n^{(b)}_{k} n^{(c)}_{k} \, ,
\end{eqnarray}
where $b_{k}^{\dagger}$($c_{k}^{\dagger}$) create polaritons of the type $b$($c$) in the cavity at site $k$,  $n^{(b)}_{k} = b_{k}^{\dagger} b_{k}$ and
$n^{(c)}_{k} = c_{k}^{\dagger} c_{k}$. $\mu_b$ and $\mu_c$ are the polariton energies, $U_b$, $U_c$ and $U_{b,c}$ their on-site interactions and $J_{b,b}$, $J_{c,c}$ and $J_{b,c}$ their tunneling rates. This model exhibits interesting quantum many-body phenomena, which are partially also known for a Luttinger liquid \cite{Lut63}, such as spin density separation \cite{Hal81} and phase separation \cite{CH03}. 

We can realize Hamiltonian \ref{bosehubbardtwocomp} in the same set-up considered above, by considering the interaction in a particular dispersive regime. Suppose we are in the regime in which $\delta \gg \Omega, g$. Then, from Eq. \ref{polariton_operators} we find 
\begin{equation} \label{polariton_operators_2}
\begin{array}{rlrl}
p_0^{\dagger} = & \frac{1}{B} \, \left(g S_{12}^{\dagger} - \Omega a^{\dagger} \right) & \quad \mu_0 = & 0 \nonumber \\
p_-^{\dagger} \approx & \frac{1}{B} \, \left(\Omega S_{12}^{\dagger} + g a^{\dagger} \right) - \frac{B}{\delta} S_{13}^{\dagger} & \quad \mu_- = & - \frac{B^2}{\delta} \nonumber \\
p_+^{\dagger} \approx & S_{13}^{\dagger} + \frac{1}{\delta} \, \left(\Omega S_{12}^{\dagger} + g a^{\dagger} \right) & \quad \mu_+ = & \delta + \frac{B^2}{\delta}
\end{array}\,, 
\end{equation}
in leading order in $\delta^{-1}$. Note that now, not only the polariton $p_0$, but also $p_-^{\dagger}$ do not experience loss from spontaneous emission to leading order in $\delta^{-1}$. We can therefore define two dark state polaritons species
\begin{equation} \label{b_c_def}
b^{\dagger}  = \frac{1}{B} \, \left(g S_{12}^{\dagger} - \Omega a^{\dagger} \right) \, ; \quad
c^{\dagger} = \frac{1}{B} \, \left(\Omega S_{12}^{\dagger} + g a^{\dagger} \right) \, .
\end{equation}
From an analysis similar to the one outlined before for the one component case, it can be shown that the dynamics of these two polaritons species is given by the two component Bose-Hubbard model.
The parameters of the model (\ref{bosehubbardtwocomp}) are given by,
$J_{bb} = \alpha \frac{g^2}{B^2}$,
$J_{cc} = \alpha \frac{\Omega^2}{B^2}$,
$J_{bc} = \alpha \frac{g \Omega}{B^2}$,
$U_b = - \frac{g_{24}^2 g^2 \Omega^2}{B^4 \Delta}$,
$U_c = - \frac{g_{24}^2 g^2 \Omega^2}{B^4 (\Delta + 2 B^2/\delta)}$ and
$U_{bc} = - \frac{g_{24}^2 (g^2 - \Omega^2)^2}{B^4 (\Delta + B^2/\delta)}$.
For further details the reader is referred to Ref. \cite{HBP07b}.

An example how the parameters of the effective Hamiltonian (\ref{bosehubbardtwocomp})
vary as a function of the intensity of the driving laser $\Omega$. is given in figure
\ref{range_HBP07c}, where the parameters of the atom cavity system are chosen to be
$g_{24} = g_{13}$, $N = 1000$, $\Delta = - g_{13} / 20$, $\delta = 2000 \sqrt{N} g_{13}$ and $\alpha = g_{13} / 10$. figure \ref{range_HBP07c} shows the interactions $U_b$, $U_c$ and $U_{bc}$, the tunneling rates $J_{bb}$, $J_{cc}$ and $J_{bc}$ and $|\mu_c - \mu_b|$ as a function of $\Omega / g_{13}$.
\begin{figure}
\centering
\includegraphics[width=1\linewidth]{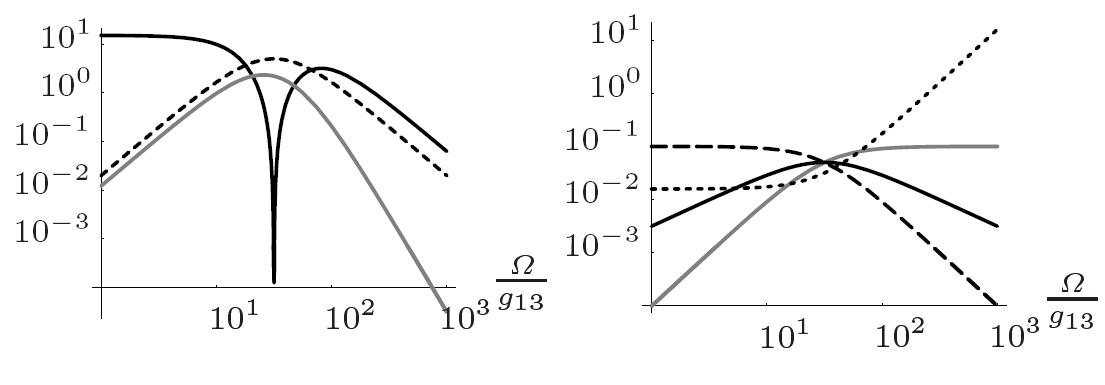}
\caption{Left: The polariton interactions $U_b$ (dashed line), $U_c$ (gray line) and $U_{bc}$
(solid line) in units of $g_{13}$ as a function of $\Omega / g_{13}$. Right: The tunneling rates $|J_{bb}|$ (dashed line), $|J_{cc}|$ (gray line) and $|J_{bc}|$ (solid line) together with $|\mu_c - \mu_b|$ (dotted line) in units of $g_{13}$ as a function of $\Omega / g_{13}$. The parameters of the system are $g_{24} = g_{13}$, $N = 1000$, $\Delta = - g_{13} / 20$, $\delta = 2000 \sqrt{N} g_{13}$ and $\alpha = g_{13} / 10$.}
\label{range_HBP07c}
\end{figure}
For $g \approx \Omega$ one has $|U_{bc}| \ll |U_b|, |U_c|$ and $J_{bb} \approx J_{cc} \approx J_{bc}$.
Whenever $|\mu_c - \mu_b| < |J_{bc}|$, $b^{\dagger}$ polaritons get converted into $c^{\dagger}$ polaritons
and vice versa via the tunneling $J_{bc}$.
With the present choice of $\alpha$ and $\delta$, this happens for
$0.16 g < \Omega <  1.6 g$.

\section{Photonic Bose-Hubbard Model} \label{photonicBH}

In the last section we saw that, as soon as the strong coupling regime is achieved, it is possible to engineer a polaritonic Bose-Hubbard model in coupled cavity arrays. The prospect of realizing even higher cooperativity factors in cavity QED, in the regime $10^2-10^3$ \cite{SKV+05}, opens up the possibility of engineering the Bose-Hubbard model also for the cavity mode photons. Such a setting is actually conceptually simpler than the polaritonic and has already been outlined before. Given that the photonic tunneling term is a consequence of the interaction of neighboring cavities, to realize a photonic Bose-Hubbard model we must find a way to create large photon Kerr nonlinearities using the interaction of the atoms with the cavity mode. 

To create a pure photonic Bose-Hubbard model would be very interesting as in this way new regimes for light, which do not occur naturally, could be engineered. For example, in the photonic Mott insulator exactly one photon exists in each cavity,
provided the whole structure contains on average one photon per cavity. Moreover, the photons are localized in the cavity they are in and are not able to hop between different cavities. In such a situation photons behave as strongly correlated particles that are each "frozen" to their lattice site, a system that would correspond to a crystal formed by light.

The first proposal for generating strong Kerr nonlinearities is the EIT based setting analysed in the last section. As shown by Hartmann and Plenio in Ref. \cite{HP07}, in the regime characterized by $g \ll \Omega$ and $g_{24}g \ll |\Delta \Omega|$, the dark sate polariton $p_0$ approximates well minus the photon annihilation operator, $- a$, and the non-harmonic part of Hamiltonian \ref{Heffect} becomes  
\begin{equation}
-  \frac{g_{24}g}{|\Delta \Omega|}  \frac{g}{\Omega} g_{24} (a^{\cal y})^2 a^2. 
\end{equation}
From the decay rate $\Gamma_0$ for the dark state polariton we can infer the decay rate for the photonic interaction, $\Gamma = \kappa + \Theta(n - 2)g_{24}^2g^2\Delta^{-2}\Omega^{-2}\gamma_4$. As $g_{24}g \ll |\Delta \Omega|$, we then see that cavity decay is the main source of loss. 

Taking, for example, the case of toroidal microcavities, for which the achievable parameters are predicted to be $g_{24} = 2.5 \times 10^9 s^{-1}$, $\gamma_4 = 1.6 \times 10^7 s^{-1}$, and $\kappa = 0.4 \times 10^{5} s^{-1}$ \cite{SKV+05}, and setting the parameters such that $g/\Omega = g_{24}g / |\Delta \Omega| = 0.1$, we obtain $U/\Gamma = 625$, which would be sufficient to observe a Mott insulator of photons. 

A second proposal to realize large Kerr nonlinearities will be discussed in depth in chapter \ref{kerrnonlinearity}. 

\section{Measurements \label{sec:measure}}

In this section we discuss ways how information about the state of the polaritonic
Bose Hubbard models can be obtained in experiments. The possibilities to access the system in measurements are somewhat complementary to those for cold atoms in optical lattices \cite{BDZ08}. Whereas time of flight images give access to global quantities
such as phase coherence in optical lattices, coupled cavity arrays allow to access
the local particle statistics in measurements. Unlike optical lattices they would thus allow to obtain direct experimental evidence for the exact integer particle number
in each lattice site in a Mott Insulator regime. We first discuss the measurement scheme for the one-component model and then turn to its extension for the two-component model.

\subsection{One-component Model}

For the model (\ref{bosehubbard}), the number of polaritons in one individual cavity can be measured via state selective resonance fluorescence.
To that end, the polaritons are made purely atomic excitations by adiabatically switching off the driving
laser. The process is adiabatic if $g B^{-2} \frac{d}{dt} \Omega \ll |\mu_+| , |\mu_-|$ \cite{FIM05}, which means it can be done fast enough to prevent polaritons from hopping between
cavities during the switching. Hence, in each cavity, the final number of atomic excitations in level 2 is equal to the initial number of dark state polaritons, while the other polariton species will not be transferred to the atomic level 2 (c.f. (\ref{polariton_operators})). In this state cavity loss is eliminated and the atomic excitations are very long lived. They can then be measured with high efficiency and precision using state-selective resonance fluorescence \cite{Ima02,JK02}.

%

\subsection{Two-component Model}

The number statistics for both polariton species for the model (\ref{bosehubbardtwocomp}), $b^{\dagger}$ and $c^{\dagger}$, in one cavity can again be
measured using state selective resonance fluorescence. In the two-component case the STIRAP can however not be applied as in the single component case because the energies $\mu_b$ and $\mu_c$ are similar and the passage would thus need to be extremely slow to be adiabatic.

For two components, one can do the measurements as follows. First the external driving laser $\Omega$ is switched off. Then the roles of atomic levels 1 and 2 are interchanged in each atom via a Raman transition by applying a $\pi / 2$-pulse. To this end the transitions $1 \leftrightarrow 3$ and $2 \leftrightarrow 3$ are driven with two lasers (both have the same Rabi frequency $\Lambda$) in two-photon resonance for a time
$T = \pi \delta_{\Lambda} / |\Lambda|^2$ ($\delta_{\Lambda}$ is the detuning from atomic level 3). The configuration is shown in figure \ref{flippass}a. This pulse results in the mapping
$\ket{1_j} \leftrightarrow \ket{2_j}$ for all atoms $j$.

Next another laser, $\Theta$, that drives the transition
$1 \leftrightarrow 4$ is switched on, see figure \ref{flippass}b. Together with the coupling $g_{24}$, this configuration can be described in terms of three polaritons, $q_0^{\dagger}$, $q_+^{\dagger}$ and $q_-^{\dagger}$, in an analogous way to
$p_0^{\dagger}$, $p_+^{\dagger}$ and $p_-^{\dagger}$, where now the roles of the atomic
levels 1 and 2 and the levels 3 and 4 are interchanged.
Hence, if one chooses $\Theta = \Omega$ the $\pi / 2$-pulse maps the $b^{\dagger}$ onto the dark state polaritons of the new configuration, $q_0^{\dagger}$, whereas
for $\Theta = - \Omega$ it maps the $c^{\dagger}$ onto $q_0^{\dagger}$. The driving laser is then adiabatically switched off, $\Theta \rightarrow 0$, and the corresponding STIRAP process maps the $q_0^{\dagger}$ completely onto atomic excitations of level 1. This process can be fast since the detuning $\Delta$ is significantly smaller than $\delta$ and hence the energies of all polariton species $q_0^{\dagger}$, $q_+^{\dagger}$ and $q_-^{\dagger}$ are well separated. Another $\pi / 2$-pulse finally maps the excitations of level 1 onto excitations of level 2,
which can be measured by state selective resonance fluorescence in the same way as for the one-component model.

The whole sequence of $\pi / 2$-pulse, STIRAP process and another $\pi / 2$-pulse can be done much faster than the timescale set by the dynamics of the Hamiltonian (\ref{bosehubbardtwocomp}) and $b^{\dagger}$ or $c^{\dagger}$ can be mapped onto atomic excitations in a time in which they are not able to move between sites. 
The procedure thus allows to measure the instantaneous local particle statistics of each species separately.
\begin{figure}
	\centering
	\includegraphics[width=1\linewidth]{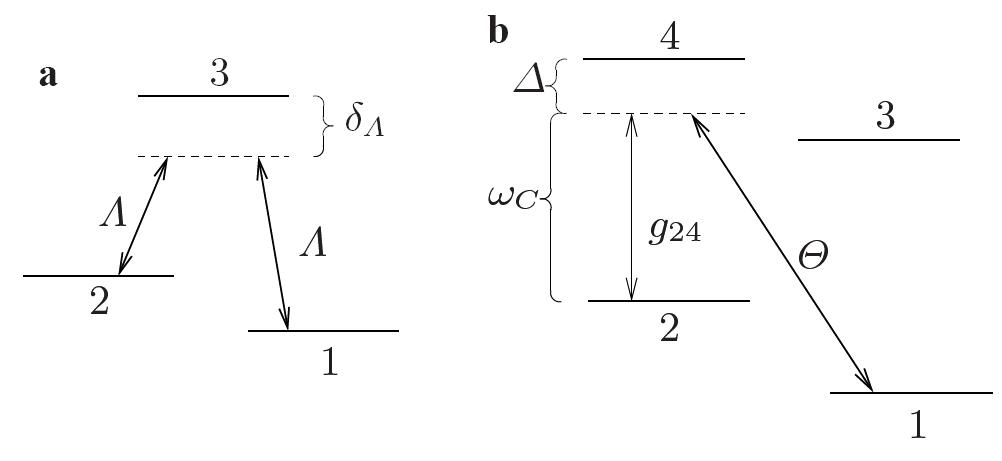}
	\caption{{\bf a}: Configuration of the $\pi / 2$-pulse. Two driving lasers in two-photon transition with identical Rabi frequencies $\Lambda$ couple to the atomic transitions $1 \leftrightarrow 3$ and $2 \leftrightarrow 3$. {\bf b}: Configuration for the STIRAP process. A driving laser couples to the $1 \leftrightarrow 4$ transition with Rabi frequency $\Theta$. The cavity mode couples to transitions $2 \leftrightarrow 4$ and $1 \leftrightarrow 3$, where the coupling to $1 \leftrightarrow 3$ is ineffective and not shown.}
	\label{flippass}
\end{figure}

\chapter{Stark-shift-induced Kerr non-Linearities} \label{kerrnonlinearity}

\section{Introduction}

Quantum properties of light, such as photon anti-bunching 
\cite{WM94} and photonic entanglement \cite{RBH01}, can 
only be produced by nonlinear interactions between photons. 
Strong nonlinearities are also important for quantum information 
processing, with applications ranging from quantum nondemolition 
measurements \cite{IHY85} to quantum memories for light \cite{FL02} 
and optical quantum computing architectures \cite{THL+95}. 
However, photon-photon interactions are usually extremely weak, 
and several orders of magnitude smaller than those needed in 
the above applications. As discussed in chapters \ref{AMO} and \ref{BH}, a possible route towards larger 
nonlinearities is the use of coherent interaction between 
light and matter in high finesse QED cavities. The goal here is to produce large nonlinearities with negligible 
losses, something which cannot be accomplished by merely tuning 
atom-light interactions close to resonance. To the best of the author's 
knowledge, the setting generating the largest nonlinear interactions 
in this context is the EIT-based proposal by Imamoglu and co-workers \cite{SI96,ISWD97,WI99,GAG99, HP07}, discussed in chapter \ref{BH}\footnote{As shown in 
Ref. \cite{HP07} and pointed out in chapter \ref{BH}, the EIT setting can produce nonlinearities 
with a strength of $\alpha \beta g$, where $g$ is the Rabi 
frequency of the atoms-cavity interaction and $\alpha, \beta$ 
are two parameters which must be much smaller than one.}. A distinctive feature of this scheme is that the ratio nonlinearity strength over losses can be kept constant when several atoms interact with the same cavity mode. We note that our figure of merit when comparing one to several atoms schemes is that the total population in the excited state is kept the same. Therefore we say that more atoms produce a larger nonlinearity in comparison when, while maintaining the same decoherence rate due to spontaneous emission, the strength of the effective non-linearity increases with the number of atoms coupled to the cavity mode.

In this chapter we propose a new method for producing Kerr nonlinearities 
in cavity QED which is (i) experimentally less demanding, requiring one atomic level and one 
coupling to the cavity mode less than in the EIT setting, (ii) 
virtually absorption free, and (iii) produces nonlinearities comparable 
or even superior to the state-of-art EIT scheme \cite{SI96,ISWD97,WI99,GAG99, HP07}. By 
applying suitable laser pulses at the beginning and end of the evolution 
of the proposed set-up, we obtain nonlinear interactions whose (iv) 
strength increases with increasing number of atoms interacting with 
the cavity mode, leading to effective nonlinear interactions at least 
two orders of magnitude larger than previously considered possible. This brings closer to 
reality a number of proposals for quantum computation and communication \cite{MNBS05, NM04, DGCZ00} based 
on photonic nonlinearities as well as the observation of strongly correlated photons in coupled cavity arrays, as discussed 
in section \ref{photonicBH}.

The organization of the chapter is the following. In section \ref{derkerrnon} we derive the 
Kerr non-linear effective Hamiltonian from the basic light-mater interaction for 
one atom inside the cavity, while in section \ref{manykerr} we consider the many-atoms
case. In section \ref{errorkerr}, in turn, we analyse the main source of errors in 
the proposed setting. In section \ref{crosskerr} we show how to generate cross Kerr 
nonlinearities. Finally in section \ref{expkerr} we discuss the feasibility of realizing 
the proposal in state-of-the-art cavity QED set-ups.

\section{Derivation of the Model} \label{derkerrnon}

In our approach, the relevant atomic level structure, depicted in fig. \ref{figkerr} (d), 
is a $\Lambda$ system with two metastable states and an excited state. The 
cavity mode couples dispersively only to the $0-2$ transition and the levels 
$0$ and $1$ are coupled via a far-detuned Raman transition. Finally, the 
detuning associated with the lasers and with the cavity mode are assumed 
to be very different from each other. 

\begin{figure} \label{figkerr}
\begin{center}
\includegraphics[scale=0.4]{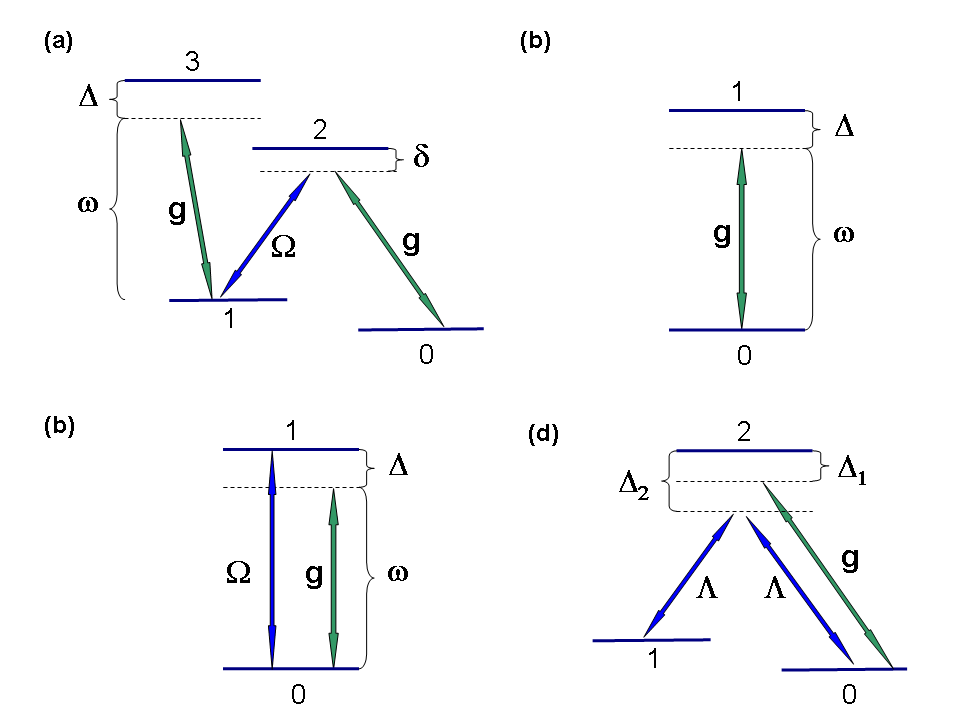}
\caption{(a) EIT scheme considered in Refs. \cite{SI96,ISWD97,WI99,GAG99, HP07}, 
(b) dispersive regime of the Jaynes-Cummings interaction, (c) large but 
noisy nonlinearity scheme, and (d) scheme proposed in this paper. Coupling 
to the cavity mode is shown in green and to classical lasers in blue.}
\end{center}
\end{figure}

In this section we show how the effective photonic non-linearity can be obtained 
from the set-up proposed above. In order to gain intuitive understanding why 
our model works, we first present two simple schemes to engineer non-linearities and 
discuss their drawbacks. We then turn to the new set-up we propose and show how 
it overcomes the limitations of the previous ones. 

\subsection{Dispersive Regime non-Linearity}

To understand intuitively how the scheme works, let us start with the 
simplest system producing a nonlinearity: $N$ two level atoms interacting 
dispersively with a cavity mode $a$ (see fig. \ref{figkerr} (b)). The system can be described by the 
Jaynes-Cummings Hamiltonian, which in the interaction picture with respect
to  $H_0 = \omega a^{\cal y}a + \omega_0 \sum_{k}\ket{1_k}\bra{1_k}$ reads
\begin{equation}
H = g(e^{-i \Delta t}a S_{10} + e^{i \Delta t}a S_{01})
\end{equation}
where $S_{10} := \sum_k \ket{1_k}\bra{0_k}$, $g$ is the Rabi frequency of 
the Jaynes-Cummings interaction, and $\Delta$ the detuning of the cavity mode 
frequency to the atomic transition frequency. If
$\sqrt{N} g/ \Delta \ll 1$, the system is in the so-called dispersive regime and 
we can adiabatically eliminate the upper level, as its population is negligible. Including fourth order terms in the 
perturbation theory and neglecting non-energy preserving terms, we can approximate the Hamiltonian above as 
\begin{equation}
H_{eff}^{(1)} = \frac{N g^2}{\Delta} a^{\cal y}a 
(S_{00} - S_{11}) + \frac{N g^4}{\Delta^3} a^{\cal y}a a^{\cal y}a 
(S_{00} - S_{11}), 
\end{equation} 
where $S_{aa} := \sum_{k=1}^N \ket{a_k}\bra{a_k}$. Hence, if all atoms are prepared in their ground states, we 
obtain a self Kerr nonlinearity $\chi=Ng^4/\Delta^3$. 
This scheme has two major drawbacks. First, the obtained 
nonlinearities are at least one order of magnitude smaller than in 
the EIT setting. Second, because of $\sqrt{N} g/ \Delta \ll 1$, 
characterizing the dispersive regime, 
the nonlinearity decreases as $g/\sqrt{N}$.

\subsection{Extra Driving Laser Field}

A simple solution to both these problems is to add a classical laser 
in resonance to the transition $\ket{0} \rightarrow \ket{1}$ (see 
fig. \ref{figkerr} (c)). Under the conditions $\sqrt{N} g/ \Delta \ll 1$ and 
$\sqrt{N} g^2 / \Delta \ll \Omega$, the dynamics of the system is 
well described by 
\begin{equation}
H_{eff}^{(2)} = \frac{N g^2}{\Delta} a^{\cal y}a S_{3} 
+ \frac{\sqrt{N} g}{\Delta}  \frac{ \sqrt{N}g^2 }{\Delta \Omega} 
g a^{\cal y}a a^{\cal y}a S_{3},
\end{equation}
where $S_{3} := \sum_{k=1}^N \ket{+_k}\bra{+_k} - \ket{-_k}\bra{-_k}$, with 
$\ket{\pm} = (\ket{0} \pm \ket{1})/\sqrt{2}$. Hence, preparing 
the atomic states in the superposition $\ket{-}$, we find a nonlinearity 
as large as in the EIT setting which does not decrease with the number 
of atoms. Unfortunately, this improvement comes at the price of a 
large increase in the losses due to spontaneous emission
which will destroy the atomic superpositions, quickly decreasing 
the effective nonlinearity.  

\subsection{Stark-shift non-Linearity}

In the proposed scheme (see fig. \ref{figkerr} (d)), we have a situation 
similar to the one discussed above. Indeed, the effective 
Hamiltonian will be identical to $H_{eff}^{(2)}$. However, now the 
states $\ket{\pm}$ are metastable, which makes the system almost 
decoherence free. The full Hamiltonian of the system, in the interaction picture with respect to $H_0 = \omega a^{\cal y}a + \omega_1 \sum_{k}\ket{1_k}\bra{1_k}$
reads:
\begin{eqnarray} \label{hfull}
H &=& g \left( e^{- i \Delta_1 t} a S_{20} + e^{ i \Delta_1 t} a^{\cal y} S_{02} \right)  + \sqrt{2} \Lambda \left( e^{- i \Delta_2 t} S_{2+} +  e^{- i \Delta_2 t} S_{+2}\right), 
\end{eqnarray}
where $S_{20} := \sum_k \ket{2_k}\bra{0_k}$ and  $S_{2+} := \sum_k \ket{2_k}\bra{+_k}$. We assume for the moment that all the atoms interact in the same manner with the cavity mode. As shown in fig. \ref{figkerr} (d), ($g$, $\Delta_1$) and $(\Lambda$, $\Delta_2$) are the Rabi frequencies and detunings of the cavity-atom and laser-atom interactions, respectively. To justify the use of the single mode 
paradigm we require that $\Delta_1, \Delta_2 \gg \kappa$, where 
$\kappa$ is the cavity decay rate. We are interested in the dispersive regime, characterized by:
\begin{equation} \label{cond33}
\frac{\sqrt{N} g}{\Delta_1} \ll 1, \hspace{0.1 cm} \frac{ \sqrt{N} \Lambda}{\Delta_2} \ll 1.
\end{equation}
Moreover, we assume that
\begin{equation}
\sqrt{N}g, \sqrt{N} \Lambda \ll |\Delta_2 - \Delta_1|,
\end{equation}
so that we can treat the processes driven by the cavity-atom and laser-atom interactions independently. Under this conditions the dynamics of the system
will 
be described by the one of a self-Kerr photonic non-linearity. In the sequel
we 
outline the main steps taken to derive the effective Hamiltonian for the
model.

Under the above conditions the excited state will hardly be populated, and we can adiabatically eliminate it, finding an effective Hamiltonian for the two metastable states and the cavity mode. In turn, these will experience a.c. Stark shifts due to the interaction with the upper level. The effective Hamiltonian, dropping out terms proportional to the identity, is given by 
\begin{equation}
H_1 = \frac{g^2}{\Delta_1}a^{\cal y}a S_{00} + \frac{\Theta}{2} (S_{10} + S_{01}),
\end{equation}
with $\Theta := 2\Lambda^2/\Delta_2$. If we now go to a second interaction picture with respect to $H_0 = \frac{g^2}{2\Delta_1}a^{\cal y}a$ we find
\begin{equation} \label{h3}
H_1^{int} = \frac{g^2}{2\Delta_1} a^{\cal y}a \left( S_{+-} + S_{-+}\right) + \frac{\Theta}{2} S_3 ,
\end{equation}
where $S_{+-} := \sum_k \ket{+_k}\bra{-_k}$, $S_{-+} = (S_{+-})^{\cal y}$, and, as before, $S_3 := \sum_{k=1}^N \ket{+_k}\bra{+_k} - \ket{-_k}\bra{-_k}$. It follows that in the $\{ \ket{\pm} \}$ basis the system can be viewed as an ensemble of two level atoms driven by a laser with a photon-number-dependent Rabi frequency. If we consider the dispersive regime of this system, i.e.
\begin{equation} \label{cond11}
\frac{\sqrt{N}g^2}{2\Delta_1 \Theta} \ll 1,
\end{equation}
the atoms prepared in the $\ket{-}$ state will experience a Stark shift proportional to $(a^{\cal y}a)^2$, which gives rise to the desired Kerr nonlinearity. The effective Hamiltonian will be given by $H_{eff} = \frac{g^4}{4\Delta_1^2\Theta} a^{\cal y}a a^{\cal y}a S_3$. Therefore, if we prepare all the atomic states in the $\ket{-}$ state, one obtains an essentially absorption free Kerr nonlinearity given by
\begin{equation} \label{h2}
H_{kerr} = \underbrace{\frac{\sqrt{N}g^2}{2\Delta_1 \Theta}}_{\ll 1} \underbrace{\frac{\sqrt{N} g}{2 \Delta_1}}_{\ll 1} g a^{\cal y}a a^{\cal y}a.
\end{equation}
In the regime we consider, characterized by Eqs. (\ref{cond33}, \ref{cond11}), the first two terms $\frac{\sqrt{N}g^2}{2\Delta_1 \Theta}$ and $\frac{\sqrt{N} g}{2 \Delta_1}$ must be much smaller than one. We can also see that the strength of the effective Kerr nonlinearity does not depend on the number of atoms, as we can tune $\Delta_1$ and $\Theta$ independently. As such, the maximum achievable nonlinearity is limited by $g$, which depends on the properties of the cavity and of the atoms employed. 

We note that under the conditions we impose, each atom interacts independently 
with the cavity mode. Therefore, if each atom has a different Rabi 
frequency $g_i$, which is the case when e.g. an atomic cloud is released 
in an optical cavity, then the resulting nonlinear coupling term will 
be identical to the one given by Eq. (\ref{h2}), but with $g^4$ replaced 
by $\sum_i g_i^4$. Moreover, for the same reason the scheme is robust 
if some of the atoms are not in the $\ket{-}$ state, which is clear 
from the form of $H_{eff}$.

\section{Many Atoms Regime} \label{manykerr}

We now proceed to show that, in fact, our set-up can be modified to give nonlinear interactions which increase with $N$. The joint atomic operators $S_{+-}, S_{-+},$ and $S_3$ satisfy $su(2)$ commutation relations: 
\begin{equation}
[S_{+-}, S_{-+}] = S_3, \hspace{0.9 cm} [S_3, S_{\pm \mp}] = \pm S_{\pm \mp}. 
\end{equation}
Defining the canonical transformation 
\begin{equation*}
U = \exp(\mu a^{\cal y }a(S_{+-} - S_{-+})),
\end{equation*}
with $\mu := g^2/(\Delta_1 \Theta) \ll 1$, we can use the Hausdorff expansion ($\exp(x A)B \exp(-x A) = B + x[A, B] + x^2[A, [A, B]]/2 + ...$) to obtain from Eq. (\ref{h3})
\begin{eqnarray} \label{h5}
H_{rot} &:=& U^{\cal y}(H_1^{int})U \approx \frac{\Theta}{2} S_3 -  \left(\frac{g^2}{2\Delta_1}\right)^2 \frac{1}{\Theta} (a^{\cal y}a)^2S_3 \nonumber \\ &+& \left(\frac{g^2}{2\Delta_1}\right)^3 \frac{1}{\Theta^2} (a^{\cal y}a)^3(S_{+-} + S_{-+}). 
\end{eqnarray} 
Suppose we had a way of generating $H_{rot}$. If we prepared all the atoms in the $\ket{-}$ state, the effective photonic Hamiltonian would be given by the second term in Eq. (\ref{h5}) as long as
\begin{equation} \label{cond22}
\left(\frac{g^2}{2\Delta_1}\right)^3 \frac{\sqrt{N}}{\Theta^3} \ll 1.
\end{equation}
The condition above is given by the ratio of the coefficient of the normalized
atomic operator $S_3/\sqrt{N}$, given by $\frac{\sqrt{N}\Theta}{2}$, and
the coefficient of the term $(a^{\cal y}a)^3(S_{+-} + S_{-+}$, equal to $\left(\frac{g^2}{2\Delta_1}\right)^2 \Theta^{-1}$. The former gives the energy spacing between the first
and second collective atomic excitation in the rotated basis, whereas the
later is responsible for Rabi oscillations between these. Condition (\ref{cond22})
then ensures that basically no transition from $\ket{-}$ to $\ket{+}$ happens. Note that it is much less stringent than the one given by Eq. (\ref{cond11}). In particular, setting $\Theta$ such that $(g^2/2\Delta_1)/\Theta = N^{-1/4}$, Eq. (\ref{cond22}) is satisfied for large $N$, and we obtain a nonlinearity of 
\begin{equation}
(\sqrt{N}g/\Delta_1) N^{1/4} g a^{\cal y}a a^{\cal y}a,
\end{equation}
while maintaining the same level of error due to spontaneous emission, dephasing, and cavity decay rate. For instance, with $N = 10^4$ and $(\sqrt{N}g/\Delta_1) = 0.1$ , we obtain a nonlinearity equal to the Rabi frequency $g$, which is at least two orders of magnitude larger than possible in the single atom case.  

\begin{figure} \label{kerrlaserfig}
\begin{center}
\includegraphics[scale=0.6]{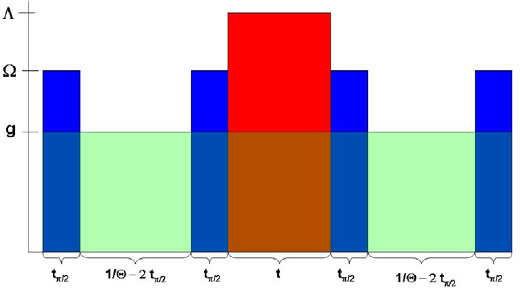}
\caption{Sequence of operations generating $V(t)$. Blue and red areas correspond to the time the lasers with Rabi frequencies $\Omega$ and $\Lambda$ are on, respectively. The green area illustrates that the Jaynes-Cummings interaction between the cavity mode and the atoms is on at all times. The scales do not reproduce reality in general. For example, $t$ will be much larger than $1/\Theta$ in most cases.}
\end{center}
\end{figure}
It is indeed possible to realize the unitary operator $V(t) = \exp(i t H_{rot})$ for every $t$. We have that $V(t) := U^{\cal y}\exp(i t H_1^{int})U$, hence it suffices to show how to create the unitary $U$. Consider the Hamiltonian given by Eq. (\ref{hfull}) when the classical lasers are switched off, i.e. $H = g(e^{-i \Delta_1 t}aS_{20} + e^{i \Delta_1 t}a^{\cal y}S_{02})$. Suppose we apply to the atoms a fast laser pulse, described by the unitary operator $M$ that generates the transformation
\begin{eqnarray}
\ket{0_k} \rightarrow (\ket{0_k} + i\ket{1_k})/\sqrt{2} \\
 \ket{1_k} \rightarrow (\ket{0_k} - i\ket{1_k})/\sqrt{2}, \nonumber
\end{eqnarray}
let Hamiltonian $H$ run for a time $t$ and apply the inverse transformation $M^{\cal y}$. Then, the total evolution operator will be given by
\begin{eqnarray} \label{eeeer}
&&M^{\cal y} \exp(i g \int_0^t (e^{-i \Delta_1 t'}aS_{20} + e^{i \Delta_1 t'}a^{\cal y}S_{02})dt') M  \nonumber \\ &=& \exp\left(i g \int_0^t \sqrt{2}g(e^{-i \Delta_1 t'}a(S_{20} + i S_{21}) + h.c.)dt'\right).
\end{eqnarray}
As $\sqrt{N}g/\Delta_1 \ll 1$, we can once more adiabatically eliminate level 2 and approximate the unitary evolution above by $\exp( t (g^2/\Delta_1)a^{\cal y}a( S_+ - S_-))$, which is $U$ when $t = 1/\Theta$. The sequence of operations executing $V(t)$ is shown in fig. \ref{kerrlaserfig}. 

\section{Error Analysis} \label{errorkerr}

\begin{figure} \label{errorkerr1}
\begin{center}
\includegraphics[scale=0.45]{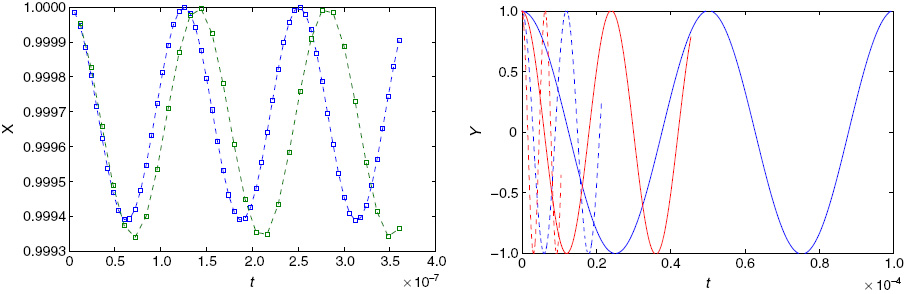}
\caption{(a) $X = |\bra{n} \otimes \bra{-}V(t)\ket{-}\otimes \ket{n}|$ versus $t$ for $(N, n) = (1, 2)$ (blue) and $(2, 2)$ (green), with $g = 10^8 s^(-1)$, $5 \Theta = gN^{1/3}$, $\Omega = 100g$, and $\Delta_1 = 10 \sqrt{N} g$; (b) $Y = Re(\bra{n} \otimes \bra{-}V(t)\ket{-}\otimes \ket{n})$ versus $t$ for $(N, n) = (1, 1)$ (blue solid line), $(1, 2)$ (blue dashed line), $(2, 1)$ (red solid line), and $(2, 2)$ (red dashed line), with $g = 10^8 s^{-1}$, $\Delta_1 = 10g$, $\Theta = g$, and $\Omega = 100g$.}
\end{center}
\end{figure}

In order to check the accuracy of our approach, we simulated it numerically for one and two atoms. We assume that the unitary evolution given by Eq. (\ref{eeeer}) if created by a $\pi/2$ pulse of the Hamiltonian $H_{laser} = \Omega (\ket{0}\bra{1} + \ket{1}\bra{0})$. We choose $\Omega = 100g$, which gives a $t_{\pi/2} = \pi/(2\Omega)$. The other parameters are $g = 10^8 s^{-1}$, $5 \Theta = gN^{1/3}$, and $\Delta_1 = 10 \sqrt{N} g$. With these values, the quality of the approximation, which is determined by the value of the L.H.S. of Eq. (\ref{cond22})  and  $\sqrt{N}g/\Delta_1$, should be constant. A good figure of merit in this respect is given by $X = |\bra{n} \otimes \bra{-}V(t)\ket{-}\otimes \ket{n}|$, where $V(t)$ is the total unitary operator formed by concatenating the steps explained above (see fig. \ref{kerrlaserfig} for a graphic description of the procedure), $\ket{n}$ is the $n$-th Fock state of the photons, and $\ket{-} := \ket{-_1}\otimes ... \otimes \ket{-_N}$. In the case of an ideal Kerr non-linearity $X$ should be equal to
one at all instants of time. As can be seen in fig. \ref{errorkerr1} (a) we get a good agreement
with the ideal case both for 1 and 2 atoms. In particular the quality of
the approximation does not deteriorate with the number of atoms, as expected
from our analytical calculations. In fig. \ref{errorkerr1} (b) in turn we plotted $Re(\bra{n} \otimes \bra{-}V(t)\ket{-}\otimes \ket{n})$ for $(N, n) = (1, 1), (1, 2), (2, 1), (2, 2)$, with the parameters $g = 10^8 s^{-1}$, $\Delta_1 = 10g$, $\Theta = g$, and $\Omega = 100g$. We found a very good agreement with the dynamics of a pure photonic nonlinearity of strength $\kappa = N g^4 / 4\Delta_1^2 \Theta$, for which $Re(\bra{n} \otimes \bra{-}V(t)\ket{-}\otimes \ket{n}) = \cos(\kappa n^2 t)$.   

\begin{figure} \label{noiekerr}
\begin{center}
\includegraphics[scale=0.4]{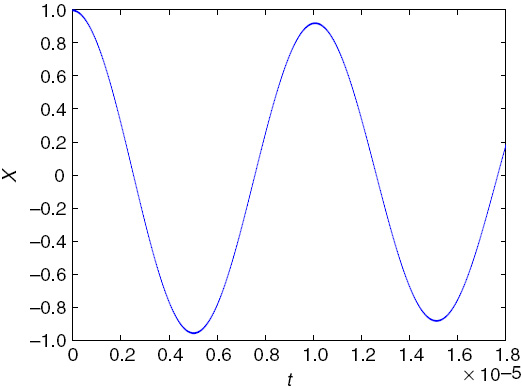}
\caption{$X = Re(\bra{3} \otimes \bra{-}V'(t)\ket{-}\otimes \ket{3})$ versus $t$ for $g = 10^9 s^{-1}$, $\Lambda = 2g$ $\Delta_1 = 10\Lambda$, $\Delta_2 = 5 \Lambda$, $\kappa = 10^6 s^{-1}$, $\gamma = 10^{6} s^{-1}$, and $\theta = 10^{3} s^{-1}$. The damping in the oscillations show the effect of losses
due to spontaneous emission from the excited level, cavity decay rate, and
dephasing from all three levels. With the realistic parameters considered
one achieves a good approximation to the ideal case.}
\end{center}
\end{figure}

The main source of decoherence in the system under analysis are spontaneous emission from the upper level and cavity decay due to the finite quality factor of the cavity used. Dephasing can usually be disregarded as its effects are much smaller. The upper level is hardly populated, so spontaneous emission from it should not play an important role. Indeed, the effective spontaneous emission rate can be estimated by the product of the population of the upper level $\ket{2}$, given by $\frac{g^2}{\Delta_1^2}$, with the spontaneous emission rate. As we work in the regime $g/\Delta_1 \ll 1$, the effective spontaneous emission rate can be shown to be several orders of magnitude smaller than the effective non-linearity (see section 6 for further discussion). The main source of decoherence is therefore cavity decay. First, as mentioned before, in order to justify the use of one single mode describing the cavity field, one need to ensure that $\Delta_1, \Delta_2 \gg \kappa$, where $\kappa$ is the cavity decay rate. The derivation can then be carried through independently of the exact value of $\kappa$. Of course in the end the dynamics will have a contribution from the effective non-linear term and from the leaking out of photons from the cavity. In order to have a good approximation for a pure photonic non-linearity the strength of the effective interaction should therefore be much larger than $\kappa$. As shown in section \ref{expkerr} this condition can indeed be achieved in several cavity QED set-ups. 

We have numerically checked the effect of decoherence by simulating the full evolution in the case of one atom, considering the effect of dephasing, spontaneous emission from the upper level, and cavity decay. The parameters used are $g = 10^9 s^{-1}$, $\Lambda = 2g$ $\Delta_1 = 10\Lambda$, and $\Delta_2 = 5 \Lambda$. Moreover, we consider $\kappa = 10^6 s^{-1}$, $\gamma = 10^{6} s^{-1}$, and $\theta = 10^{3} s^{-1}$ for the cavity decay rate, spontaneous emission rate, and dephasing rate, respectively. As discussed in section \ref{expkerr}, this parameters are within reach in several cavity QEP set-ups. The dynamics of $Re(\bra{3} \otimes \bra{-}V'(t)\ket{-}\otimes \ket{3})$ is plotted in fig. \ref{noiekerr} where now $V'(t)$ is the non-unitary operator describing the full Hamiltonian and the decoherence sources It should be contrasted with the
evolution due to an ideal Kerr non-linearity, given by $\cos(\kappa 9 t)$. 

\section{Cross-Kerr non-Linearities} \label{crosskerr}

\begin{figure} \label{crosskerrfig}
\begin{center}
\includegraphics[scale=0.5]{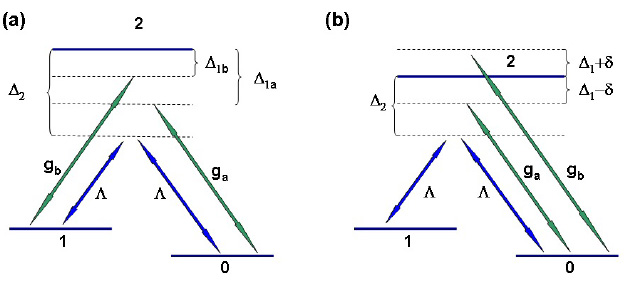}
\caption{Level structures for generating cross-Kerr nonlinearities. Coupling
to the cavity mode is shown in green and to classical lasers in blue.}
\end{center}
\end{figure}

Up to now we have discussed the case of self-Kerr nonlinearities. In a cross-Kerr nonlinearity, one optical field induces a Stark shift in another field proportional to the intensity of the former. In terms of two optical modes $a$ and $b$, the Hamiltonian is given by $H_{cross} = \nu a^{\cal y}a b^{\cal y}b$. There are two possible generalizations of the scheme introduced here to produce large cross-Kerr nonlinearities. 

A shown in fig. \ref{crosskerrfig} (a), we may chose setting as in Eq. (\ref{hfull}), but now, we have a second cavity mode $b$ coupled to the transition $\ket{1} \rightarrow \ket{2}$ with Rabi frequency $g_b$ and detuning $\Delta_{1_b}$. As shown in Ref. \cite{DK03}, this could be achieved using two degenerate cavity modes with orthogonal polarization. Carrying over the same analysis we did for the one mode case, we can find in this case 
\begin{equation} 
H_{eff} = \frac{1}{\Theta}\left( \frac{g_a^2}{2 \Delta_{1_a}}a^{\cal y}a - \frac{g_b^2}{2 \Delta_{1_b}}b^{\cal y}b    \right)^2 S_3.
\end{equation}
Hence, in addition to a self-Kerr nonlinearity for each mode, we find a cross-Kerr nonlinearity between them. 

In the second case, we consider the level structure shown in fig. \ref{crosskerrfig} (b). We have two modes $a$ and $b$ coupled to the transition $\ket{0} \rightarrow \ket{2}$ with detunings $\Delta_1 \pm \delta$, respectively. As discussed in Ref. \cite{ADW+06}, this set-up could be realized in a toroidal microcavity, where $a$ and $b$ are the normal modes of the clock and counter-clockwise propagating modes, and $\delta$ is the rate of tunneling between those two. In this case, the effective Hamiltonian can be found to be
\begin{equation} 
H_{eff} = \frac{1}{\Theta}\left( \frac{g_a^2}{2 (\Delta_1 - \delta)}a^{\cal y}a + \frac{g_b^2}{2 (\Delta_1 + \delta)}b^{\cal y}b \right)^2 S_3
\end{equation}

\section{Experimental Realization} \label{expkerr}

Our scheme can be applied to a variety of cavity QED settings. This includes, as mentioned in section \ref{cQED}, single or an ensemble of atoms trapped in fiber-based cavities \cite{CSD+07} 
or optical microcavities \cite{BBM05, BDR+07, ADW+06, TGD+05, HWS07}, where for the 
single atom case ratios of $\chi/\kappa \approx 625$ and $\chi/\gamma \approx 245$ for the 
nonlinearity strength ($\chi$) over cavity decay rate $\kappa$ and effective spontaneous 
emission $\gamma$, respectively, have been predicted to be feasible \cite{SKV+05}, and quantum dots embedded in photonic band gap structures, where ratios of $\chi/\kappa \approx 40$ and $\chi/\gamma \approx 12$ could be achieved \cite{BHA+05, AAS+03}. Cooper-pair boxes coupled to a 
r.f. transmission line \cite{WSB+04} could also be a suitable set-up in 
the longer run, when the problem related to the absence of two metastable 
levels is overcome. Even considering the decay between levels $\ket{1}$ and $\ket{0}$ 
in such systems, promising ratios of $\chi/\kappa \approx 15$ and $\chi/\gamma \approx 0.5$ could be realized using the values of the recent experiment \cite{SHS+07}. 

In implementations based on atoms one can safely neglect the spontaneous emission from the two metastable levels, 
as their lifetime is several orders of magnitude larger than the time the 
photon stays in the cavity. In settings based on quantum dots, one can 
also find metastable configurations, whose lifetimes, although much 
smaller than in the atomic case, would still be large in comparison to the 
timescales involved in the experiment. In such cases, the effective spontaneous emission rate is given by the product of the population of the upper level $\ket{2}$, given by $\frac{g^2}{\Delta_1^2}$, with the spontaneous emission rate. In the absence of two metastable levels, as it is the present case of rf cavities interacting with Cooper pair-boxes, $\gamma$ is just the spontaneous emission rate from $\ket{1}$ to the ground state $\ket{0}$.

As an example of the applicability of our scheme to engineer nonlinearities which 
grow with the number of atoms, we consider the recent experiment \cite{BDR+07} 
(see also \cite{CSD+07}) in which a Bose-Einstein condensate of $4 \times 10^5$ $^{87}$Rb atoms was strongly 
coupled to a single cavity mode of an ultra-hight finesse optical cavity, 
with cavity decay rate $\kappa$, spontaneous emission rate $\gamma'$ and Rabi frequency $g$ of $8$ MHz, $18$ MHz, and $70$ MHz, 
respectively. We could apply our scheme to this set-up using for instance two 
Zeeman sublevels as the two metastable levels. The estimated nonlinearity 
$\chi$ could then be as large as $g$ itself, which would allow the realization 
of ratios $\chi/\kappa \approx 11$ and $\chi/\gamma \approx 39$ for non-linearity strength ($\chi$) over 
cavity decay rate $\kappa$ and spontaneous emission rate $\gamma$.

\chapter{Spin Hamiltonians in Coupled Cavity Arrays} \label{heisenberg}

\section{Introduction}

Interacting spin systems have a key role in quantum statistical mechanics, condensed-matter physics, and more recently in quantum information science. On the one hand they have a simple microscopic description, which allows the use of techniques from statistical mechanics in their analysis. On the other hand, they have a very rich structure and can hence be employed to model and learn about phenomena of real condensed-matter systems. In quantum information science, spin chains can be harnessed to propagate and manipulate quantum information, as well as to implement quantum computation.

A particular interesting spin Hamiltonian is the Heisenberg anisotropic or XYZ model, given by
\begin{equation} \label{XYZ}
H = \sum_{<j, j'>} J_x \sigma_j^x \sigma_{j'}^x + J_y \sigma_j^y \sigma_{j'}^y + J_z \sigma_j^z \sigma_{j'}^z + \sum_j B \sigma_j^z,
\end{equation}
where the first sum runs over nearest neighbors, $\sigma_j^{x/y/z}$ are the Pauli matrices at site $j$, $J_{x/y/z}$ the coupling coefficients and $B$ the magnetic field strength. This model contains a rich phase diagram, can be used to study quantum magnetism \cite{Aue98} and is a limiting case of the fermionic Hubbard model, which is believed to contains the main features of high $T_c$ superconductors \cite{And87, MU03}. From a quantum information perspective, such a model can be used to create cluster states, highly entangled multipartite quantum states which are universal for quantum computation by single-site measurements \cite{RB01}. 

In the previous chapters we have seen that under appropriate conditions interesting bosonic models can be created and probed in coupled cavity arrays. There the atoms were used for detection and most importantly to boost the interaction of photons in the same cavity, either actively, as in the polaritonic case, or in an undirect manner, as in the generation of large Kerr nonlinearities. In this chapter we focus on a complementary regime, in which the tunneling of photons is used to mediate interactions for atoms in neighboring cavities. We study the realization of effective spin lattices, more concretely of Hamiltonian \ref{XYZ}, with individual atoms in micro-cavities that are coupled to each other via the exchange of virtual photons.

This chapter is organized as follows. In section \ref{derhei} we derive the effective Heisenberg anisotropic Hamiltonian in a coupled cavity array, first focusing on XX and YY interactions in subsection \ref{XXYY} and then on the ZZ interaction in subsection \ref{ZZhei}. In section \ref{exphei} we analyse the main losses mechanisms and find the conditions for a successful experimental implementation of the proposal. Finally, in section \ref{clusterstate} we outline a possible application of generating Hamiltonian \ref{XYZ} in coupled cavity arrays: the creation of cluster states for one-way quantum computing \cite{RB01}.  

\section{Derivation of the Model} \label{derhei}

On general lines, the set-up we employ works as follows. The two spin polarizations $\ket{\uparrow}$ and $\ket{\downarrow}$ of the Hamiltonian will be represented by two
longlived atomic levels of the $\Lambda$ level-structure outlined in fig. \ref{xy_levels} and \ref{zz_levels}).
Together with external lasers, the cavity mode that couples to the atom inside each cavity can induce Raman transitions between these two 
longlived levels. Due to a large detuning between laser and cavity mode, these transitions can only create virtual photons in 
the cavity mode which mediate an interaction with another atom in a neighboring cavity. With appropriately chosen 
detunings, both the excited atomic levels and photon states have vanishing occupation and can be eliminated from the 
description. As a result, the dynamics is confined to only two states per atom, the long-lived levels, and can be 
described by a spin-1/2 Hamiltonian. The small occupation of photon states and excited atomic levels also strongly suppresses spontaneous emission and
cavity decay. As discussed later, this proposal can be realized as soon as the strong coupling regime of cavity QED is achieved. 

We note that similar ideas have been recently proposed to realize spin systems of larger spins \cite{KA08,CAB08}. 

In the next subsections we show how to engineer effective $\sigma^x \sigma^x$, 
$\sigma^y \sigma^y$ and $\sigma^z \sigma^z$ interactions 
as well as the effective magnetic field $B \sigma_z$ and
then explain how to generate the full anisotropic Heisenberg 
model. 

\subsection{XX and YY Interactions} \label{XXYY}

\begin{figure}
\centering
\includegraphics[width=.7\linewidth]{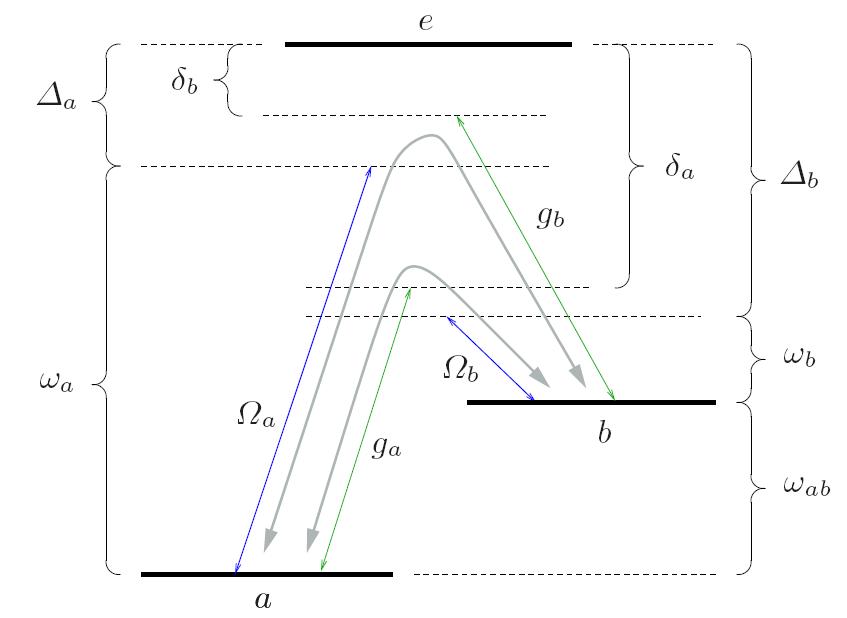}
\caption{\label{xy_levels} Level structure, driving lasers and relevant couplings to the cavity mode to generate effective $\sigma^x \sigma^x$- and $\sigma^y \sigma^y$-couplings for one atom. The cavity mode couples with strengths $g_a$ and $g_b$ to transitions $\ket{a} \leftrightarrow \ket{e}$ and $\ket{b} \leftrightarrow \ket{e}$ respectively. One laser with frequency $\omega_a$ couples to transition $\ket{a} \leftrightarrow \ket{e}$ with Rabi frequency $\Omega_a$ and another laser with frequency $\omega_b$ to $\ket{b} \leftrightarrow \ket{e}$ with $\Omega_b$. The dominant 2-photon processes are indicated in faint gray arrows.}
\end{figure}
We consider an array of coupled cavities with one 3-level atom in each cavity (see figure \ref{xy_levels}).
Two long lived levels, $\ket{a}$ and $\ket{b}$, represent the 
two spin states. The cavity mode couples to the transitions 
$\ket{a} \leftrightarrow \ket{e}$ and $\ket{b} \leftrightarrow \ket{e}$, 
where $\ket{e}$ is the excited state of the atom. Furthermore, 
two driving lasers couple to the transitions 
$\ket{a} \leftrightarrow \ket{e}$ respectively $\ket{b} \leftrightarrow \ket{e}$.
For the sake of simplicity we consider here a one-dimensional array.
The generalization to higher dimensions is straight forward. 
The Hamiltonian of the atoms reads
\begin{equation}
H_A = \sum_{j=1}^N \omega_e \ket{e_j} \bra{e_j} + \omega_{ab} \ket{b_j} \bra{b_j}, 
\end{equation}
where the index $j$ counts the cavities, $\omega_e$ is the 
energy of the excited level and $\omega_{ab}$ the energy of 
level $\ket{b}$. The energy of level $\ket{a}$ is set to zero. 
The Hamiltonian that describes the photons in the cavity modes is 
given by Eq. (\ref{arrayham2}). For simplicity of notation we set
$J_C = 2 \omega_C \alpha$.  For convenience we assume periodic boundary 
conditions, where $H_C$ can be diagonalized via the Fourier transform
\begin{equation*}
a_k = \frac{1}{\sqrt{N}} \sum_{j=1}^N e^{i k j} a_j, \hspace{0.3 cm} k = \frac{2 \pi l}{N}, \hspace{0.1 cm} - \frac{N}{2} \le l \le \frac{N}{2}
\end{equation*}
for $N$ odd to give $H_C = \sum_{k} \omega_k a_k^{\dagger} a_k$ with $\omega_k = \omega_C + 2 J_C \cos(k)$. Finally the interaction between the atoms and the photons 
as well as the driving by the lasers are described by
\begin{equation*}
H_{AC} = \sum_{j=1}^N \left[ \left(\frac{\Omega_a}{2} e^{-i \omega_a t} + g_a a_j \right) \ket{e_j} \bra{a_j} + \left(\frac{\Omega_b}{2} e^{-i \omega_b t} + g_b a_j \right) \ket{e_j} \bra{b_j} + \text{h.c.} \right].
\end{equation*}
Here $g_a$ and $g_b$ are the couplings of the respective transitions to the cavity mode, $\Omega_a$ is the Rabi frequency of one laser with frequency $\omega_a$ and  $\Omega_b$ the Rabi frequency of a second laser with frequency $\omega_b$. The complete Hamiltonian is then given by $H = H_A + H_C + H_{AC}$.

We now switch to an interaction picture with respect to $H_0 = H_A + H_C - \delta_1 \, \sum_{j=1}^N \ket{b_j} \bra{b_j}$, where $\delta_1 = \omega_{ab} - (\omega_a - \omega_b)/2$, and adiabatically eliminate the excited atom levels $\ket{e_j}$ and the photons \cite{BPM07}. We consider terms up to 2nd order in the effective Hamiltonian and drop fast oscillating terms. For this approach the detunings $\Delta_a \equiv \omega_e - \omega_a$, $\Delta_b \equiv \omega_e - \omega_b - (\omega_{ab} - \delta_1)$, $\delta_a^k \equiv \omega_e - \omega_k$ and $\delta_b^k \equiv \omega_e - \omega_k - (\omega_{ab} - \delta_1)$ have to be large compared to the couplings $\Omega_a, \Omega_b, g_a$ and $g_b$. Furthermore, the parameters must be such that the dominant Raman transitions between levels $a$ and $b$ are those that involve one laser photon and one cavity photon each (c.f. figure \ref{xy_levels}). To avoid excitations of real photons via these transitions, we furthermore require 
$\left| \Delta_a - \delta_b^k \right|, \left| \Delta_b - \delta_a^k \right| \gg \left| \frac{\Omega_a g_b}{2 \Delta_a} \right|, \left| \frac{\Omega_b g_a}{2 \Delta_b} \right|$.
 
Hence whenever the atom emits or absorbs a virtual photon 
into or from the cavity mode, it does a transition from 
level $\ket{a}$ to $\ket{b}$ or vice versa. If one atom emits 
a virtual photon in such a process that is absorbed by a 
neighboring atom, which then also does a transition between 
$\ket{a}$ to $\ket{b}$, an effective spin-spin interaction 
has happened. Dropping irrelevant constants, the resulting 
effective Hamiltonian reads
\begin{equation}
H_{\text{xy}} = \sum_{j=1}^N B \sigma_j^z
+ ( J_1 \sigma_j^+ \sigma_{j+1}^- + J_2 \sigma_j^- \sigma_{j+1}^- + \text{h.c.} ),
\end{equation}
where $\sigma_j^z = \ket{b_j} \bra{b_j} - \ket{a_j}\bra{a_j}$, $\sigma_j^+ = \ket{b_j}\bra{a_j}$ and $B$, 
$J_1$ and $J_2$ are simple functions of the various Rabi frequencies and detunings\footnote{To second order they are 
given by 
\begin{equation*}
B = \frac{\delta_1}{2} - \frac{1}{2} \left[\frac{|\Omega_b|^2}{4 \Delta_b^2} 
\left( \Delta_b - \frac{|\Omega_b|^2}{4 \Delta_b} - \frac{|\Omega_a|^2}{4 (\Delta_a - \Delta_b)} - \gamma_b g_b^2 - \gamma_1 g_a^2 + \gamma_1^2 \frac{g_a^4}{\Delta_b} \right) -
(a \leftrightarrow b) \right], 
\end{equation*}
\begin{equation*}
J_1 = \frac{\gamma_2}{4} \left( \frac{|\Omega_a|^2 g_b^2}{\Delta_a^2} + \frac{|\Omega_b|^2 g_a^2}{\Delta_b^2} \right)
\end{equation*}
and
\begin{equation*}
J_2 = \frac{\gamma_2}{2} \frac{\Omega_a^{\star} \Omega_b g_a g_b}{\Delta_a \Delta_b},
\end{equation*}
where
$\gamma_{a,b} = \frac{1}{N} \sum_{k} \frac{1}{\omega_{a,b} -\omega_k}$,
$\gamma_1 = \frac{1}{N} \sum_{k} \frac{1}{(\omega_a + \omega_b)/2 - \omega_k}$ and
$\gamma_2 = \frac{1}{N} \sum_{k} \frac{\exp (i k)}{(\omega_a + \omega_b)/2 - \omega_k}$.
}. If $J_2^{\star} = J_2$, this Hamiltonian reduces to the XY model,
\begin{equation} \label{HXYb}
H_{\text{xy}} = \sum_{j=1}^N B \sigma_j^z + J_x \sigma_j^x 
\sigma_{j+1}^x + J_y \sigma_j^y \sigma_{j+1}^y \, ,
\end{equation}
with $J_x = (J_1 + J_2)/2$ and $J_y = (J_1 - J_2)/2$.

For $\Omega_a = \pm (\Delta_a g_a / \Delta_b g_b) \Omega_b$ 
with $\Omega_a$ and $\Omega_b$ real, the interaction is either 
purely $\sigma^x \sigma^x$ ($+$) or purely $\sigma^y \sigma^y$ 
($-$) and the Hamiltonian (\ref{HXYb}) becomes the Ising model 
in a transverse field, whereas the isotropic XY model 
($J_x = J_y$, i.e. $J_2 = 0$) is obtained for either 
$\Omega_a \rightarrow 0$ or $\Omega_b \rightarrow 0$. The 
effective magnetic field $B$ in turn can, independently of 
$J_x$ and $J_y$, be tuned to assume any value between 
$|B| \gg |J_x|, |J_y|$ and $|B| \ll |J_x|, |J_y|$ by varying 
$\delta_1$. Thus we will be able to drive the system through
a quantum phase transition. Now we proceed to effective ZZ interactions.

\subsection{ZZ Interactions} \label{ZZhei}

To obtain an effective $\sigma^z \sigma^z$ interaction, we again use the same atomic level configuration but now only one laser with frequency $\omega$ mediates atom-atom coupling via virtual photons. A second laser with frequency $\nu$ is used to tune the effective magnetic field via a Stark shift. The atoms together with their couplings to cavity mode and lasers are shown in figure \ref{zz_levels}.
\begin{figure}
\centering
\includegraphics[width=.7\linewidth]{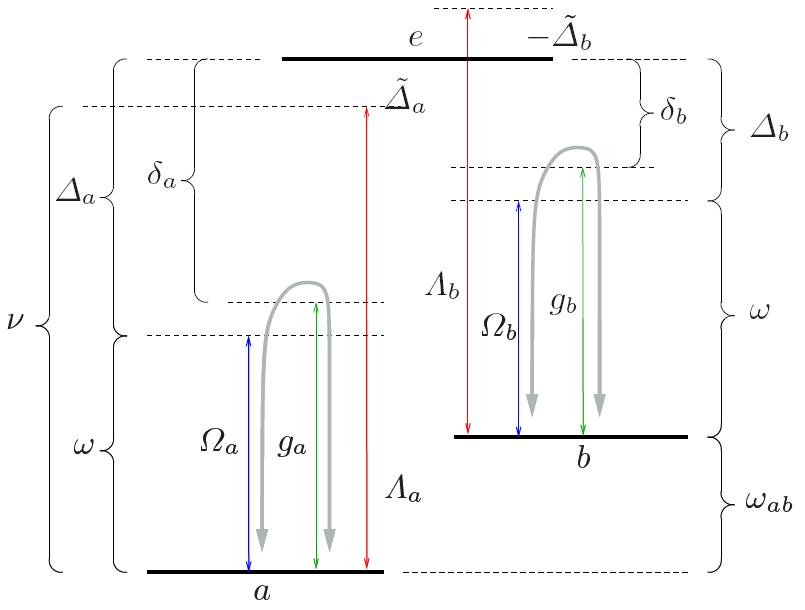}
\caption{\label{zz_levels} Level structure, driving lasers and relevant couplings to the cavity mode to generate effective $\sigma^z \sigma^z$-couplings for one atom. The cavity mode couples with strengths $g_a$
and $g_b$ to transitions $\ket{a} \leftrightarrow \ket{e}$ and $\ket{b} \leftrightarrow \ket{e}$ respectively. Two lasers with frequencies $\omega$ and $\nu$ couple with Rabi frequencies $\Omega_a$ respectively $\Lambda_a$ to transition $\ket{a} \leftrightarrow \ket{e}$ and $\Omega_b$ respectively $\Lambda_b$ to $\ket{b} \leftrightarrow \ket{e}$. The dominant 2-photon processes are indicated in faint gray arrows.}
\end{figure}
Again, we consider the one-dimensional case as an example. The generalization to higher dimensions is straightforward.
The Hamiltonians $H_A$ of the atoms and $H_C$ of the cavity modes thus have the same form as above, whereas $H_{AC}$ now reads:
\begin{align}
H_{AC} = & \sum_{j=1}^N \left[ \left(\frac{\Omega_a}{2} \exp^{-i \omega t} +
\frac{\Lambda_a}{2} \exp^{-i \nu t} + g_a a_j \right) \ket{e_j}\bra{a_j} \right. \\
+ & \left. \left(\frac{\Omega_b}{2} \exp^{-i \omega t} +
\frac{\Lambda_b}{2} \exp^{-i \nu t} + g_b a_j \right) \ket{e_j}\bra{b_j} + \text{h.c.} \right] \nonumber .
\end{align}
Here, $\Omega_a$ and $\Omega_b$ are the Rabi frequencies of the driving laser with frequency $\omega$ on transitions $\ket{a} \rightarrow \ket{e}$ and $\ket{b} \rightarrow \ket{e}$, whereas $\Lambda_a$ and $\Lambda_b$ are the Rabi frequencies of the driving laser with frequency $\nu$ on transitions $\ket{a} \rightarrow \ket{e}$ and $\ket{b} \rightarrow \ket{e}$.

We switch to an interaction picture with respect to $H_0 = H_A + H_C$ and adiabatically eliminate the excited atom levels $\ket{e_j}$ and the photons \cite{BPM07}. Again, the detunings $\Delta_a \equiv \omega_e - \omega$, $\Delta_b \equiv \omega_e - \omega - \omega_{ab}$, $\tilde{\Delta}_a \equiv \omega_e - \nu$, $\tilde{\Delta}_b \equiv \omega_e - \nu - \omega_{ab}$, $\delta_a^k \equiv \omega_e - \omega_k$ and $\delta_b^k \equiv \omega_e - \omega_k - \omega_{ab}$ have to be large compared to the couplings $\Omega_a, \Omega_b, \Lambda_a, \Lambda_b, g_a$ and $g_b$, whereas now Raman transitions between levels $a$ and $b$ should be suppressed. Hence parameters must be such that the dominant 2-photon processes are those that involve one laser photon and one cavity photon each but where the atom does no transition between levels $a$ and $b$ (c.f. figure \ref{zz_levels}). To avoid excitations of real photons in these processes, we thus require 
$\left| \Delta_a - \delta_a^k \right|, \left| \Delta_b - \delta_b^k \right| \gg \left| \frac{\Omega_a g_a}{2 \Delta_a} \right|, \left| \frac{\Omega_b g_b}{2 \Delta_b} \right|$.

Whenever two atoms exchange a virtual photon in this scheme, both experience a Stark shift that depends on the state of the partner atom. This conditional Stark shift plays the role of an effective $\sigma^z \sigma^z$-interaction.
Dropping irrelevant constants, the effective Hamiltonian reads:
\begin{equation} \label{HZZ}
H_{\text{zz}} = \sum_{j=1}^N \left( \tilde{B} \sigma_j^z + J_z \sigma_j^z \sigma_{j+1}^z \right) \, .
\end{equation}
$\tilde{B}$ and $J_z$ are determined by the detunings and Rabi frequencies employed\footnote{To second order they are give by
\begin{eqnarray*}
\tilde{B} = - \frac{1}{2} \left[ \right. && \frac{|\Lambda_b|^2}{16 \tilde{\Delta}_b^2} \left(4 \tilde{\Delta}_b
- \frac{|\Lambda_a|^2}{\tilde{\Delta}_a - \tilde{\Delta}_b} - 
\frac{|\Lambda_b|^2}{\tilde{\Delta}_b} - \sum_{j=a,b} \left(\frac{|\Omega_j|^2}{\Delta_j - 
\tilde{\Delta}_b} + 4 \tilde{\gamma}_{jb} g_j^2 \right) \right) \\ &+& \left. \frac{|\Omega_b|^2}{16 \Delta_b^2}
\left(4 \Delta_b - \frac{|\Omega_a|^2}{\Delta_a - \Delta_b} - \frac{|\Omega_b|^2}{\Delta_b}
- \sum_{j=a,b} \left(\frac{|\Lambda_j|^2}{\tilde{\Delta}_j - \Delta_b} + 4 \gamma_{jb} g_j^2 \right) + 4 \gamma_{bb}^2 \frac{g_b^4}{\Delta_b} \right)
- \left( a \leftrightarrow b \right) \right] 
\end{eqnarray*}
and
\begin{equation*}
J_z = \gamma_2 \left| \frac{\Omega_b^{\star} g_b}{4 \Delta_b} - \frac{\Omega_a^{\star} g_a}{4 \Delta_a} \right|^2
\end{equation*}
with
$\gamma_1 = \frac{1}{N} \sum_{k} \frac{1}{\omega - \omega_k}$,
$\gamma_2 = \frac{1}{N} \sum_{k} \frac{\exp (i k)}{\omega - \omega_k}$,
$\gamma_{aa} = \gamma_{bb} = \frac{1}{N} \sum_{k} \frac{1}{\omega - \omega_k}$,
$\left. \begin{array}{c}
\gamma_{ab}\\
\gamma_{ba}
\end{array} \right\}= \frac{1}{N} \sum_{k} \frac{1}{\omega \pm \omega_{ab} - \omega_k}$,
$\left. \begin{array}{c}
\tilde{\gamma}_{ab}\\
\tilde{\gamma}_{ba}
\end{array} \right\}= \frac{1}{N} \sum_{k} \frac{1}{\nu \pm \omega_{ab} - \omega_k}$,
$\tilde{\gamma}_{aa} = \tilde{\gamma}_{bb} = \frac{1}{N} \sum_{k} 
\frac{1}{\nu - \omega_k}$.
}. Here again, the interaction $J_z$ and the field $\tilde{B}$ can be tuned independently, either by varying
$\Omega_a$ and $\Omega_b$ for $J_z$ or by varying $\Lambda_a$ and $\Lambda_b$ for $\tilde{B}$. In particular,
$|\Lambda_a|^2$ and $|\Lambda_b|^2$ can for all values of $\Omega_a$ and $\Omega_b$ be chosen such that either $J_z \ll \tilde{B}$ or $J_z \gg \tilde{B}$. 

\subsection{The Complete Effective Model} 

Making use of the Suzuki-Trotter formula, the two 
Hamiltonians (\ref{HXYb}) and (\ref{HZZ}) can now be combined 
to one effective Hamiltonian. To this end, the lasers that 
generate the Hamiltonian (\ref{HXYb}) are turned on for a 
short time interval $dt$ ($||H_{\text{xy}}|| \cdot dt \ll 1$) 
followed by another time interval $dt$ ($||H_{\text{zz}}|| \cdot dt \ll 1$) with the lasers that generate the Hamiltonian (\ref{HZZ}) 
turned on. This sequence is repeated until the total time 
range to be simulated is covered. 

The effective Hamiltonian 
simulated by this procedure is $H_{\text{spin}} = H_{\text{xy}} + H_{\text{zz}}$, which is precisely the Heisenberg anisotropic 
model of Eq. (\ref{XYZ}) with $B_{\text{tot}} = B + \tilde{B}$. The time interval 
$dt$ should thereby be chosen such that 
$\Omega^{-1}, g^{-1} \ll  dt_1 , dt_2 \ll J_x^{-1} , J_y^{-1} , J_z^{-1} , B^{-1}$ and $\tilde{B}^{-1}$, so that the Trotter sequence 
concatenates the effective Hamiltonians $H_{XY}$ and $H_{ZZ}$. 
The procedure can be generalized to higher order Trotter formulae
or by turning on the sets of lasers for time intervals of different 
length.

The validity of all above approximations is shown in figure \ref{run2},
where numerical simulations of the dynamics generated by the full Hamiltonian $H$ 
are compared it to the dynamics generated by the effective model (\ref{HXYb}).

The present example considers two atoms in two cavities, initially 
in the state $\frac{1}{\sqrt{2}} (\ket{a_1} + \ket{b_1}) \otimes \ket{a_2}$, and calculates
the occupation probability $p(a_1)$ of the state 
$\ket{a_1}$ which corresponds to the probability of spin 1 to 
point down, $p( \downarrow_1 )$. Figure \ref{run2}{\bf a} shows 
$p(a_1)$ and $p( \downarrow_1 )$ for an effective Hamiltonian 
(\ref{XYZ}) with $B_{\text{tot}} = 0.135$MHz, $J_x = 0.065$MHz, 
$J_y = 0.007$MHz and $J_z = 0.004$MHz and hence 
$|B_{\text{tot}}| > |J_x|$, whereas figure \ref{run2}{\bf b} 
shows $p(a_1)$ and $p( \downarrow_1 )$ for an effective Hamiltonian 
(\ref{XYZ}) with the same $J_x$, $J_y$ and $J_z$ but $B_{\text{tot}} = -0.025$MHz
and hence  $|B_{\text{tot}}| < |J_x|$ \cite{HRP06}.
\begin{figure}
\centering
\includegraphics[width=.8\linewidth]{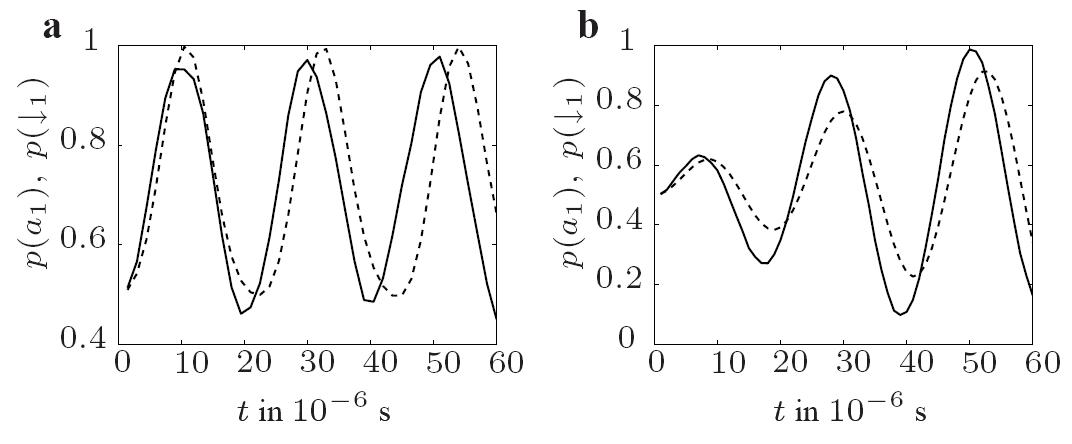}
\caption{The occupation probability $p(a_1)$ of state $\ket{a_1}$ (solid line) and the
probability $p( \downarrow_1 )$ of spin 1 to point down (dashed line) for the parameters
$\omega_e = 10^6$GHz, $\omega_{ab} = 30$GHz, 
$\Delta_a = 30$GHz, $\Delta_b = 60$GHz, $\omega_C = \omega_e - \Delta_b + 2$GHz, $\tilde{\Delta}_a = 15$GHz, $\Omega_a = \Omega_b = 2$GHz, $\Lambda_a = \Lambda_b = 0.71$GHz, $g_a = g_b = 1$GHz, $J_C = 0.2$GHz and
$\delta_1 = -0.0165$GHz (plot {\bf a}) respectively $\delta_1 = -0.0168$GHz (plot {\bf b}). Both, the occupation of the excited atomic states $\langle \ket{e_j} \bra{e_j} \rangle$ and the photon number $\langle a^{\dagger} a \rangle$ are always smaller than 0.03.}
\label{run2}
\end{figure}
Discrepancies between numerical results for the full and 
the effective model are due to higher order terms for the 
parameters $B$, $\tilde{B}$, $J_x$, $J_y$ and $J_z$,
which lead to relative corrections of up to 10\% in the
considered cases. Despite this lack
of accuracy of the second order approximations, the effective model 
is indeed a spin-1/2 Hamiltonian as occupations of excited 
atomic and photon states are negligible.

\section{Cluster State Generation} \label{clusterstate}

Arrays of coupled cavities can be used for the generation of cluster states \cite{RB01},
which form, together with the local addressability, a platform for one-way quantum computation.
One way to generate these states is via the effective Hamiltonian (\ref{HZZ}).
To this end, all atoms are initialized in the states $(\ket{a_j} + \ket{b_j})/\sqrt{2}$,
which can be done via a STIRAP process \cite{FIM05}, and then evolved 
under the Hamiltonian (\ref{HZZ}) for $t = \pi / 4 J_z$.

Figure \ref{cluster} shows the von Neumann entropy of the 
reduced density matrix of one effective spin and the purity of the reduced density matrix of the effective 
spin chain $P_{\text{s}}$ for a full three cavity model. Since 
$E_{\text{vN}} = S \approx log_2 2$ for $t \approx 50 \mu$s while 
the state of the effective spin model remains highly pure 
($P_{\text{s}} = tr[\rho^2] > 0.995$) the degree of entanglement
will be very close to maximal. Thus the levels  $\ket{a_j}$ and $\ket{b_j}$
have been driven into a state which is, up to local unitary rotations, 
very close to a three-qubit cluster state.
\begin{figure}
\centering
\includegraphics[width=.8\linewidth]{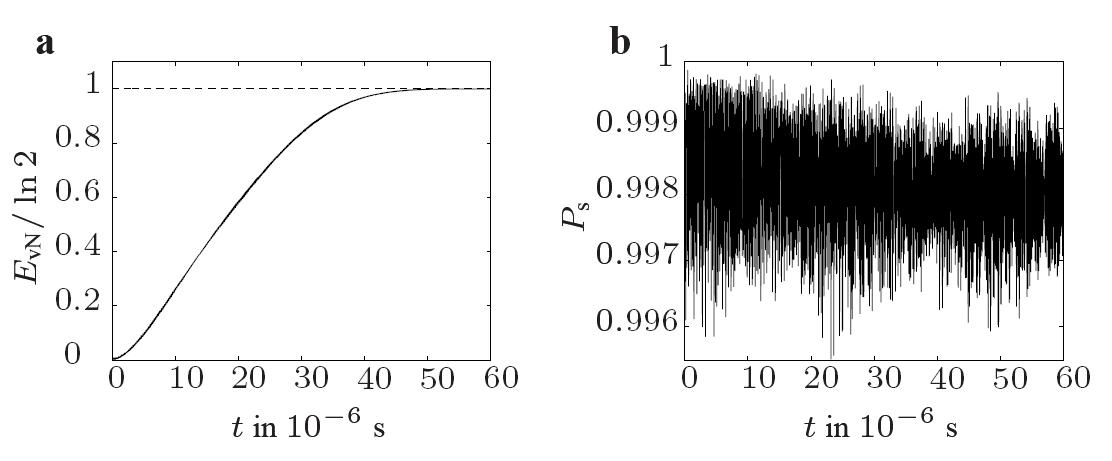}
\caption{{\bf a}: The von Neumann entropy $E_{\text{vN}}$ of the reduced density matrix of 1 effective spin in multiples of $\ln 2$ and {\bf b}: the purity of the reduced state of the effective spin model for 3 cavities where $J_z = 0.021$MHz. The plots assume that no spontaneous emission took place.
For a spontaneous emission rate of $0.1$MHz ($g = 1$GHz), the probability for a decay event in the total time range is 1.5\%. Hence, cluster state generation fails with probability $0.005 \times n$ for $n \ll 1/0.005 = 200$ cavities, irrespectively of the lattice dimension.}
\label{cluster}
\end{figure}

\section{Experimental Implementation} \label{exphei}

For an experimental implementation, the parameters of the 
effective Hamiltonian, $J_x$, $J_y$, $J_z$, $B$ and $\tilde{B}$ 
have to be much larger than rates for decay mechanisms via 
the photons or the excited states $\ket{e_j}$. 
With the definitions $\Omega = \text{max}(\Omega_a, \Omega_b)$, $g = \text{max}(g_a, g_b)$, $\Delta = \text{min}(\Delta_a, \Delta_b)$, the occupation of the excited levels $\ket{e_j}$ and the photon number $n_p$ can be estimated to be $\langle \ket{e_j}\bra{e_j} \rangle \approx |\Omega / 2 \Delta |^2$ and $n_p \approx |(\Omega g / 2 \Delta) \gamma_1 |^2$, whereas the couplings $J_x$, $J_y$ and $J_z$ are approximately $|(\Omega g / 2 \Delta)|^2 \gamma_2$. 

Spontaneous emission from the levels $\ket{e_j}$ at a rate $\Gamma_E$ and cavity decay of photons at a rate $\Gamma_C$ thus lead to effective decay rates 
\begin{equation*}
\Gamma_1 = |\Omega / 2 \Delta |^2 \Gamma_E \hspace{0.4 cm} \text{and} \hspace{0.4 cm} \Gamma_2 = |(\Omega g / 2 \Delta) \gamma_1 |^2 \Gamma_C.
\end{equation*}
Hence, we require 
\begin{equation*}
\Gamma_1 \ll |(\Omega g / 2 \Delta)|^2 \gamma_2  \hspace{0.4 cm} \text{and} \hspace{0.4 cm} \Gamma_2 \ll |(\Omega g / 2 \Delta)|^2 \gamma_2
\end{equation*}
 which implies $\Gamma_E \ll J_C \, g^2 / \delta^2$ and $\Gamma_C \ll J_C$
($J_C < \delta/2$), where, $\delta =  |(\omega_a + \omega_b)/2 - \omega_C|$ for the XX and YY interactions and $\delta =  |\omega - \omega_C|$ for the ZZ interactions and we have approximated $|\gamma_1| \approx \delta^{-1}$ and $|\gamma_2| \approx J_C \delta^{-2}$.
Since photons should be more likely to tunnel to the next cavity than decay into free space,
$\Gamma_C \ll J_C$ should hold in most cases. For $\Gamma_E \ll J_C g^2 / \delta^2$,
to hold, cavities with a high ratio $g / \Gamma_E$ are favorable. Since $\delta > 2 J_C$, the two requirements together imply that the cavities should have a high cooperativity factor. Promising cavity QED systems satisfying these requirements have been discussed in sections \ref{cQED} and \ref{expkerr}.

%



\appendix

\chapter{Notation} \label{not}

\newcommand{\tabstart}[1]{\noindent \begin{tabular}{p{2.95cm}p{8.7cm}}
    \multicolumn{2}{l}{{\bf #1}} \\ \hline \\[-2.5ex] } 

\newcommand{\tabstop}{\\ \hline \end{tabular}}

\newcommand{\tabinter}{\vspace{2ex}}

\tabstart{General}
  $\log$ & binary logarithm  \\
  $\ln$ & natural logarithm \\
  $e$ & Euler's constant \\
  $\lim$ & limit \\
  $\min$ & minimum \\
  $\max$ & maximum \\
  $\sup$ & supremum \\ 
  $\inf$ & infimum  \\
  $\limsup$ & supremum limit \\ 
  $\liminf$ & infimum limit  \\
  $\poly(n)$ & a polynomial in $n$ (determined from the context) \\
  $\Pr(X \geq \epsilon)$& probability that $X$ is larger or equal to $\epsilon$ \\
  $\mathbb{E}(X)$ & expectation value of random variable $X$ \\
  $\nabla \cdot$ & divergence \\
  $\nabla \times$ & rotational \\
\tabstop

\tabinter

\tabstart{Norms and Distance Measures}

$|| A ||_1$ & trace norm of $A$ \\
$|| A ||$ & operator norm of $A$ \\
$|| f ||_L$ & Lipschitz seminorm of the function $f$\\
$\beta(\mu, \nu)$ & L\'evy-Prohorov metric of the measures $\mu$ and $\nu$ \\
$F(A, B)$ & fidelity of $A$ and $B$ \\
$|| A ||_S$ & separable trace norm of $A$ \\
\tabstop

\tabinter

\tabstart{Sets}

$\mathbb{N}$ & set of natural numbers \\
$\mathbb{R}$ & set of real numbers \\
$\mathbb{C}$ & set of complex numbers \\
${\cal S}_n$ & symmetric group over $\{1, ..., n \}$ \\
$\overline{{\cal M}}$ & closure of ${\cal M}$ \\
$\text{span}({\cal M})$ & linear span of ${\cal M}$ \\
$\text{cone}({\cal M})$ & cone of ${\cal M}$ : $\{ \lambda x : \lambda \in \mathbb{R}_+, x \in {\cal M} \}$ \\
$\text{co}({\cal M})$ & convex hull of ${\cal M}$ \\
${\cal M}^*$ & dual cone of ${\cal M}$ \\
$\text{Sym}({\cal H}^{\otimes k})$ &  symmetric subspace of ${\cal H}^{\otimes k}$ \\
${\cal B}({\cal H})$ & set of bounded operators over ${\cal H}$ \\
${\cal D}({\cal H})$ & state space on ${\cal H}$ \\
${\cal S}({\cal H})$ & set of separable states over ${\cal H}$ \\
$\text{cone}({\cal S})$ & set of unnormalized separable states \\
${\cal W}({\cal H})$ & set of entanglement witnesses over ${\cal H}$ \\
${\cal S}_k({\cal H}^{\otimes k})$ & set of permutation-symmetric states over ${\cal H}^{\otimes k}$  \\
$\ket{\theta}^{[\otimes, n, r]}$ & set of almost power states in $\ket{\theta}$ \\
$B_{\epsilon}(\rho)$ & $\epsilon$ ball around $\rho$ : $\{ \tilde{\rho} : || \rho - \tilde{\rho} ||_1 \leq \epsilon \}$

\tabstop

\tabinter

\tabstart{Abbreviations and Acronyms}  
  
  cQED & Cavity Quantum Electrodynamics \\ 
  CP & Completely Positive \\
  i.i.d. & independent and identically distributed \\
  LOCC & Local Operations and Classical Communication \\
  MPS & Matrix-Product-State \\
  NPPT & Non Positive Partial Transpose \\
  PPT & Positive Partial Transpose \\
  POVM & Positive Operator Valued Measurement \\
  SEPP & Non-Entangling (Separability Preserving) Maps \\
  SEPP$(\epsilon)$ & $\epsilon$-Non-Entangling Maps \\
  SLOCC & Stochastic Local Operations and Classical Communication \\
  EIT & Electromagnetic-Induced-Transparency \\  
  QED & Quantum Electrodynamics \\
  STIRAP & Stimulated Raman adiabatic passage \\  
  TPCP & Trace Preserving Completely Positive \\ 

\tabstop

\tabinter

\tabstart{Bachmann-Landau notations}  
 
$g(n) = O(f(n))$ & $\exists k > 0, n_0 : \forall n > n_0, \hspace{0.1 cm} g(n) \leq k f(n)$ \\ 
$g(n) = \Omega(f(n))$ & $\exists k > 0, n_0 : \forall n > n_0, \hspace{0.1 cm} g(n) \geq k f(n)$ \\ 
$g(n) = o(f(n))$ & $\forall k > 0, \exists n_0 : \forall n > n_0, \hspace{0.1 cm} g(n) \leq k f(n)$ \\ 
$g(n) = \Theta(f(n))$ & $g(n) = O(f(n))$ and $f(n) = \Omega(g(n))$ 
 
\tabstop

\tabinter

\tabstart{Operators and Operations}

$\id$ & identity \\
$\Phi(K)$ & $K \times K$ maximally entangled state \\
$\phi_2$ & $2 \times 2$ maximally entangled state \\
$\tr_{k}$ & partial trace over the $k$-th Hilbert space \\
$\tr_{\backslash k}$ & partial trace over all except the $k$-th Hilbert space \\
$\hat{S}_k$ & symmetrization operator over ${\cal H}^{\otimes k}$ \\
$\dim({\cal V})$ & dimension of the vector space ${\cal V}$ \\
$\tr(X)$ & trace of the hermitian operator $X$ \\
$\rank(X)$ & rank of the hermitian operator $X$ \\
$\lambda_{\max}(X)$ & maximum eigenvalue of $X$ \\
$\lambda_{\min}(X)$ & minimum eigenvalue of $X$ \\
$X^T$ & transpose of the hermitian operator $X$ \\
$X^{\cal y}$ & conjugate transpose of Hermitian operator $X$ \\
$X^{\Gamma_A}$ & partial transpose of the bipartite hermitian operator $X$ with respect to $A$ \\
$X^{\Gamma_B}$ & partial transpose of the bipartite hermitian operator $X$ with respect to $B$ \\

\tabstop

\tabinter

\tabstart{Quantum Optics and Atomic Physics}

$\hbar$ & reduced Plank constant \\
$c$ & speed of light \\
$\epsilon_0$ & free space permittivity \\ 
$V_{\text{mode}}$ & mode volume of the cavity \\
$Q$ & quality factor \\
$\kappa$ & cavity decay rate \\
$\gamma$ & spontaneous emission rate \\
$\xi$ & cooperativity parameter \\

\tabstop



\tabinter

\tabstart{Various Quantities}

$h(x)$ & binary Shannon entropy of $(x, 1 - x)$ \\
$S(\rho)$ & von Neumann entropy of $\rho$ \\
$S(\rho | \sigma)$ & relative entropy of $\rho$ and $\sigma$ \\
$S(A|B)_{\rho}$ & conditional entropy of $\rho_{AB}$ \\
$I_c(A\rangle B)_{\rho}$ & coherent information of $\rho_{AB}$ \\
$F_K(\rho)$ & singlet-fraction of $\rho$ to a $K \times K$ maximally entangled state \\
$E_D(\rho)$ & distillable entanglement of $\rho$ \\
$E_C(\rho)$ & entanglement cost of $\rho$ \\
$E_F(\rho)$ & entanglement of formation of $\rho$ \\
$E(\psi)$ & entropy of entanglement of $\ket{\psi}$ \\
$E_{R}(\rho)$ & relative entropy of entanglement of $\rho$ \\
$E_R^{\infty}(\rho)$ & regularized relative entropy of entanglement of $\rho$ \\
$R(\rho)$ & robustness of entanglement of $\rho$ \\
$R_G(\rho)$ & global robustness of entanglement of $\rho$ \\
$LR(\rho)$ & log-robustness of entanglement of $\rho$ \\
$LR_G(\rho)$ & log-global robustness of entanglement of $\rho$ \\
$E_{sq}(\rho)$ & squashed entanglement of $\rho$ \\
$C^{\leftarrow}(\rho)$ & Henderson-Vedral measure of classical correlations of $\rho_{AB}$ \\
$G^{\leftarrow}(\rho)$ & mixed convex-roof of $C^{\leftarrow}(\rho)$ \\
$E_{\cal M}(\rho)$ & ${\cal M}$-relative entropy of $\rho$ \\
$LR_{\cal M}(\rho)$ & log ${\cal M}$-robustness of $\rho$ \\
$S_{\max}(\rho || \sigma)$ & max-relative entropy of $\rho$ and $\sigma$ \\
$LR_{\cal M}^{\epsilon}(\rho)$ & smooth log ${\cal M}$-robustness of $\rho$ \\
$E_D^{ane}(\rho)$ & distillable entanglement of $\rho$ under asymptotically non-entangling operations \\
$E_D^{ne}(\rho)$ &  distillable entanglement of $\rho$ under non-entangling operations \\
$E_C^{ane}(\rho)$ & entanglement cost of $\rho$ under asymptotically non-entangling maps \\
$E_C^{ane}(\rho)$ & entanglement cost of $\rho$ under non-entangling maps \\
$F_{sep}(\rho ; K)$ & deterministic singlet-fraction of $\rho$ with a $K \times K$ maximally entangled states under non-entangling maps \\
$\Sigma_-(z)$ & self-energy operator \\
$Z(H, \beta)$ & partition function of $H$ at inverse temperature $\beta$ \\
$N_{H}(a,b)$ & eigenvalue counting function of $H$ in the interval $[a, b]$ \\
$\Delta(H)$ & spectral gap of $H$ \\
$\Theta$ & Headvise step function \\
\tabstop

\end{document}